\frenchspacing \linespread{1.0}
\newcommand{\ctg}{\mathop{\mathrm{ctg}}\nolimits}
\newcommand{\cth}{\mathop{\mathrm{cth}}\nolimits}
\newcommand{\End}{\mathop{\mathrm{End}}\nolimits}
\newcommand{\Hom}{\mathop{\mathrm{Hom}}\nolimits}
\newcommand{\Aut}{\mathop{\mathrm{Aut}}\nolimits}
\newcommand{\tr}{\mathop{\mathrm{tr}}\nolimits}
\newcommand{\id}{\mathop{\mathrm{id}}\nolimits}
\newcommand{\Span}{\mathop{\mathrm{Span}}\nolimits}
\newcommand{\res}{\mathop{\mathrm{res}}\limits}
\renewcommand{\Im}{\mathop{\mathrm{Im}}\nolimits}
\renewcommand{\Re}{\mathop{\mathrm{Re}}\nolimits}
\numberwithin{equation}{section}
\theoremstyle{plain}
\newtheorem{Th}{Théorème}[chapter]
\newtheorem{theor}[Th]{Theorem}
\newtheorem{de}{Définition}
\newtheorem{prop}[Th]{Proposition}
\newtheorem{lem}[Th]{Lemma}
\newtheorem{rem}{Remarque}
\newtheorem{Rem}[rem]{Remark}
\newcommand*{\hm}[1]{#1\nobreak\discretionary{}%
            {\hbox{$\mathsurround=0pt #1$}}{}}
\newcommand{\slt}{\mathfrak{sl}_2}
\newcommand{\glt}{\mathfrak{gl}_n}
\newcommand{\gln}{\mathfrak{gl}_n}
\newcommand{\la}{\langle}
\newcommand{\ra}{\rangle}
\newcommand{\La}{\big\langle}
\newcommand{\Ra}{\big\rangle}
\newcommand{\e}{e_\lambda}
\newcommand{\f}{f_\lambda}
\newcommand{\Cyl}{\mathrm{Cyl}}
\newcommand{\A}{{\mathcal A}}
\newcommand{\R}{{\mathcal R}}
\newcommand{\rr}{{\mathfrak r}}
\newcommand{\HH}{{\mathcal H}}
\newcommand{\Sym}{\mathop{\mathrm{Sym}}\nolimits}
\newcommand{\lfK}{{\cal K}}
\newcommand{\T}{\mathbb T}
\newcommand{\F}{{\cal F}}
\renewcommand{\L}{{\cal L}}
\newcommand{\cEF}{\mathfrak{e}_\tau(\widehat{\mathfrak{sl}}_2)}
\newcommand{\utau}{\mathfrak{u}_\tau(\widehat{\mathfrak{sl}}_2)}
\renewcommand{\L}{L}
\newcommand{\LL}{{\cal L}}
\newcommand{\G}{G_{\lambda}^}
\newcommand{\Gg}{{\cal G}_{\lambda}^}
\newcommand{\ph}{\varphi^}
\newcommand{\Ph}{\Phi^}
\newcommand{\bs}[1]{\backslash\{#1\}}
\newcommand{\bU}{\textbf{U}}
\newcommand{\vpint}{\mathop{\hspace{-1pt}{-}\hspace{-12pt}\int}}
\newcommand{\vpints}{\mathop{\hspace{-1pt}{-}\hspace{-10pt}\int}}
\def\fK{\mathfrak{K}}
\def\CC{\mathbb{C}}
\def\frg{\mathfrak{g}}
\def\hg{\hat{\mathfrak{g}}}
\def\r#1{(\ref{#1})}
\def\hx{\hat{x}}
\def\nn{\nonumber}
\newcounter{bbcount}[subsection]
\newcommand{\bb}[1]{\addtocounter{bbcount}{1}{\bf {\thesection}.\arabic{subsection}.\arabic{bbcount}.}
{\bfseries #1}}
\def\sigmap{\mathcal{P}}
\title{ Groupes quantiques associés aux courbes rationnelles et elliptiques et leurs applications}
\author{Alexey Viktorovich SILANTYEV}
\date{17 décembre 2008}
\begin{document}

\selectlanguage{french}

%

\thispagestyle{empty}
\begin{tabular}{lr}
 UNIVERSITÉ D'ANGERS \hspace{7cm} & Année 2008 \\
 UFR SCIENCES                       &  N$^o$ d'ordre {\bf 911} \\
 \end{tabular}
 \vspace{9mm}
\begin{center}
 \bf \Huge Groupes quantiques associés

aux courbes rationnelles et

elliptiques et leurs applications

\end{center}

 \vspace{9mm}
 \centerline{ THÈSE DE DOCTORAT}

 \vspace{0,3cm}
 \centerline{Spécialité : Mathématiques }

 \vspace{0,5cm}
 \centerline{\bf ÉCOLE DOCTORALE D'ANGERS}

 \vspace{1cm}
 \centerline{ Présentée et soutenue publiquement} 

 \centerline{ }

 \centerline{ à l'Université d'Angers}

 \centerline{ par}

 \centerline{\sc  Alexey SILANTYEV}
\vspace{3mm}
 \centerline{ devant le Jury ci-dessous}

 \vspace{7mm}
 \begin{tabular}{ll}
 \underline{Rapporteurs} :               & \\
 {\sc Jean AVAN }   &  Directeur de Recherche du CNRS, Cergy-Pontoise \\
 {\sc Junichi SHIRAISHI}    & Professeur à l'Université de Tokyo, (Japon) \\
 \underline{Examinateurs} :              & \\
 {\sc Eric DELABAERE} & Professeur à l'Université d'Angers \\
 {\sc Nikolai KITANINE} & MdC HDR à l'Université de Cergy-Pontoise \\
 {\sc Eric RAGOUCY} & Directeur de Recherche du CNRS, Annecy \\ 
 \underline{Directeurs de thèse} :     & \\
 {\sc Vladimir ROUBTSOV} & Professeur à l'Université d'Angers \\
 {\sc Stanislav PAKULIAK} & Directeur de Recherche à JINR, Dubna, (Russie)
 \end{tabular}

 \vspace{1cm}
 \centerline{LAREMA, U.M.R 6093 associée au CNRS}
 \centerline{2 Bd Lavoisier, 49045 Angers cedex 01, France}
 
 \hspace{12cm} {\bf ED 503}

%
%
%
%
%
%
%
%
%
%
%
%
%
%
%
%

%
%
%
%
%
%
%
%
%
%
%
%
%
%
%
%
%
%


\maketitle

\chapter*{Remerciements}

Cette thèse a été principalement financée par le projet de cotutelle Franco-Russe. Je remercie tout les gestionnaires de ce projet pour toute leur disponibilité et leurs encouragements. Je remercie principalement le directeur de ce projet, Jean Michel Maillet. \\

Je tiens à témoigner ma profonde reconnaissance à Stanislav Pakuliak. Merci d'avoir accepté de diriger ce travail. Ma gratitude va ensuite à Vladimir Roubtsov. Il a été pour moi un formidable superviseur. Ses suggestions, ses encouragements, sa disponibilité et ses qualités sur le plan humain m'ont été d'une grande utilité. C'est une chance immense de l'avoir eu comme Directeur de Thèse. \\

Jean Avan et Junichi Shiraishi m'ont fait l'honneur d'être les rapporteurs de cette thèse. Je leur remercie très sincèrement. \\

Eric Delabaere, Nikolai Kitanine et Eric Ragoucy ont accepté d'être examinateurs. Je leur suis très reconnaissant. \\

Mes remerciements vont également à tous les membres du Laboratoire de Recherches en Mathématiques (LAREMA) de l'Université d'Angers ainsi qu'à tous mes amis aussi bien en Russie qu'en France et particulièrement ceux d'Angers: Andrey, Roman, Vladimir, Suzanne, Rémi,  Paulo, Alexandre, Fabien, Jean-Marc,\ldots \\

Mon merci particulier à Serge Pelap, Joseph Dongho et Jean Avan qui m'ont aidé à corriger et à améliorer le texte. \\

Je voudrais aussi remercier tous mes collaborateurs de la Chaire de Théorie de Relativité et Gravité de l'Université de Kazan, le Laboratoire de Physique Mathématique de l'Institut de Physique Théorique et Expérimentale (ITEP), le Laboratoire de Physique Théorique de l'Institut de Recherches Nucléaires International (JINR), le Département de Mathématique de l'Université de Glasgow. J'exprime ma reconnaissance principale à Asya Aminova, à Sergey Kharchev et Mikhail Feigin. \\

Un grand Merci à mes parents, ma soeurs, mon épouse et toute ma famille pour leur soutien et leurs encouragements.

\tableofcontents

%

\chapter*{Introduction}

Cette thèse est consacrée aux groupes quantiques certains avec leurs applications aux systèmes intégrables et aux modèles statistiques sur des réseaux. Ces groupes quantiques peuvent être décrits en utilisant une solution de l'Équation de Yang-Baxter -- une matrice $R$ dépendant de paramètres appelés les paramètres spectraux. Les relations de commutation de ces groupes quantiques sont les relations $RLL$ avec l'opérateur de Lax dont éléments engendrent tout un groupe quantique donné. Ils peuvent être décrits en termes des courants.

Les groupes quantiques sont apparus à début des années quatre-vingt dans les travails de l'école de Leningrad consacrés à la méthode de dispersion inverse quantique~\cite{KulResh,Skl82,FadTakh}. Cette dernière est une généralisation de la méthode de dispersion inverse (classique) au cas des systèmes intégrables quantiques. Ce fait explique le nom "groupes quantiques". La méthode de dispersion inverse quantique est réduite à la construction d'un opérateur de Lax satisfaisant à la relation $RLL$ avec une matrice $R$ -- cette relation entraîne l'intégrabilité du système donné. La groupe quantique sont un résultat de l'abstraction algébrique de la relation $RLL$ pour une matrice $R$ donnée.

Les premières matrices $R$ sont introduites à début des années soixante-dix par Baxter en théorie des modèles statistiques sur des réseaux~\cite{Bax1,Bax2,Baxter}. Ceux ont été des matrices des poids de Boltzmann pour la sommet d'un réseau. La représentation de ces poids de Boltzmann à la forme matricielle permet d'écrire la fonction de partition du modèle comme une trace d'un produis ces matrices agissant dans des espaces tensorielles certains. La satisfaction de l'Équation de Yang-Baxter pour ces matrices entraîne la diagonalisation simultanée des facteurs dans ce produit correspondant aux lignes horizontales (ou verticales). Dans ce cas la fonction de partition est calculée explicitement.

Il y a trois types principaux des matrice $R$ qui jouent le rôle important pour les systèmes intégrables et pour les modèles statistiques sur des réseaux : les matrices $R$ rationnelles, trigonométriques et elliptiques. Elles dépendent des paramètres spectraux via des fonctions rationnelles, trigonométriques et elliptiques respectivement. Les groupes quantiques correspondants peuvent être associés avec les courbes complexe (surfaces de Riemann) du type correspondant. Les groupes quantiques liés aux matrices $R$ qui ne dépendent pas des paramètres spectraux ne sont pas considérés ici.

Dans~\cite{Bax2} Baxter a introduit aussi un modèle statistiques dont matrice de poids de Boltzmann ne satisfait pas à l'équation de Yang-Baxter. Elle satisfait à une équation plus générale -- Équation de Yang-Baxter Dynamique~\cite{F1,ABRR97}. Les solutions de cette équation ont été appelées matrices $R$ dynamiques et elles définissent aussi les objets algébriques appelés groupes quantiques dynamiques~\cite{F2,BBB}.

Au point de vue de l'algèbre les groupes quantiques a une structure des bigèbres. Celles sont des algèbres associatives, où hormis la multiplication et la unité les opérations duales (la comultiplication et la counité) sont définies. D'habitude on considère les groupes quantiques qui permit d'une plus opération -- un antipode. Elles sont appelées algèbre de Hopf. Les groupes quantiques dynamiques peuvent être décrits comme des objets plus générales -- les quasi-bigèbres et les quasi-algèbres de Hopf~\cite{D90}.

A la fin des années quatre-vingt Drinfeld a introduit la nouvelle méthode pour décrire les groupes quantiques dans le cas rationnel et trigonométrique~\cite{D88}. Pour cela il a introduit les courants -- des séries formelle dont coefficients (dans la décomposition en puissances du paramètre spectral) engendrent l'algèbre considérée. Quelques courants sont représentés comme une différence d'autres courants -- les demi-courants. Ces derniers peuvent être identifiés avec les coordonnées de Gauss pour la décomposition des opérateurs de Lax~\cite{DF}. Les groupes quantiques liées avec une courbe d'un genre arbitraire ont été construits en termes des courants dans les travails de Enriquez et Roubtsov~\cite{ER1,ER2,ER3}. 

La thèse est basée sur quatre articles~\cite{S1,S21,S22,S3} joints comme des appendices. Dans~\cite{S1} nous construisons la fonction de transition pour la chaîne de Toda périodique en termes de la méthode de dispersion inverse quantique. Cette fonction réalise une transition à telles variables que les fonctions propres de ce système sont factorisées en produit de fonctions d'une variable. Ces variables sont appelées séparées. Les premières idées de la Séparation de Variables pour chaîne de Toda sont proposées par Gutzwiller~\cite{Gutzwiller} et sont développées par Sklyanin en termes de méthode de dispersion inverse quantique~\cite{Sklyanin}. L'idée principal est l'utilisation des fonctions propres de la chaîne de Toda ouverte (multipliée à un facteur) comme une fonction de transition pour la chaîne périodique.

Dans~\cite{Sklyanin} Sklyanin a aussi proposé chercher les fonctions propres de la chaîne ouverte de $N$ particules comme une transformation intégrale des fonctions propres de la chaîne ouverte de $N-1$ particules sur les valeurs propres. La réalisation de cette idée a donné la représentation intégrale de ces fonctions appelée représentation de Mellin-Barns~\cite{Kharchev_O}. Quelques autres méthodes donnent une plus représentation intégrale pour les fonctions propres de la chaîne ouverte appelée représentation de Gauss-Givental~\cite{Givental,Kharchev_GG}. Elle permit aussi réécrire ces fonctions comme une transformation intégrale des fonctions propres de $N-1$ particules mais sur les arguments de fonctions -- les coordonnées du système. La sens de cette transformation et, par conséquence, de la représentation de Gauss-Givental dans la théorie de groupes quantiques (liée à la méthode de dispersion inverse quantique) a été inconnu.

Le travail~\cite{S1} a été inspiré par l'article~\cite{Derkachov} consacré à la Séparation de Variables pour le modèle $XXX$ en termes de la méthode de dispersion inverse quantique. Pour ce cas les auteurs construisent récursivement une fonction de transition comme une transformation intégrale sur coordonnées. Le point principal de cette construction est une triangularisation de l'opérateur de Lax du modèle. En modifiant la méthode de triangularisation pour le cas de la chaîne de Toda nous obtenons une représentation de Gauss-Givental. En suivant~\cite{Derkachov} nous utilisons la représentation graphique de transformations intégrales pour simplifier les calculs. Afin d'obtenir un tableau complet de Séparation de Variable pour la chaîne périodique en termes de la méthode de dispersion inverse quantique nous calculons aussi une mesure d'intégration sur variables séparées et des coefficients dans l'équation de Baxter.

Le travail~\cite{S21} est consacré à la comparaison de deux groupes quantiques dynamique définis par la même matrice $R$ dynamique elliptique -- matrice $R$ de Felder~\cite{F2} -- dans le niveau classique. Le premier groupe quantique a été obtenu dans~\cite{EF} en utilisant l'approche de Enriquez et Roubtsov~\cite{ER1,ER2,ER3}. Le second est introduit dans le travail~\cite{K98} comme une généralisation de le groupe quantique connu $U_q(\widehat{\slt})$ au cas elliptique. Les deux groupes quantiques peuvent être présentés par les relations $RLL$ dynamiques avec la même matrice $R$ et c'est pourquoi leur différence n'était pas apparente.

Le premier travail sur des groupes quantiques différentes définies par la même matrice $R$ a été consacré au cas rationnel~\cite{KLP99}. Les auteurs expliquent la différence entre les algèbres $\widehat{DY(\slt)}$ et $A_\hbar(\widehat{\slt})$ en mentionnant que les demi-courants de ces algèbres possèdent des propriétés analytiques différentes (comme des fonctions de la paramètre spectral). Dans le cas elliptique on a d'abord mentionné que les algèbres possèdent des extensions centrales différentes~\cite{JKOS2}. Après, on a montré que les demi-courants de ces algèbres possèdent des propriétés analytiques différentes~\cite{EPR}. Le dernier fait signifie que ces algèbres elliptiques sont différentes même sans les extensions centrales.

Dans le travail~\cite{S21} nous décrivons la différence de ces groupes quantiques elliptiques en détails. Pour clarifier les résultats nous les considérons au niveau classique, id est nous considérons les quasi-bigèbres de Lie correspondants. Nous proposons les schéma de la comparaison suivante. Nous mentions que il existe deux choix essentiellement différents d'un contour dans la courbe elliptique et il existe deux couvertures différentes de la courbe elliptique correspondants aux ces contours. Elles définissent des espaces de fonctions différents dont éléments sont définis sur ces couvertures. Nous construisons deux quasi-bigèbres de Lie en termes de demi-courants agissant sur ces espaces comme des distributions. Nous obtenons des crochets de Lie et un cocrochet en voyant que ces quasi-bigèbres de Lie coïncider avec les limites classiques des quasi-algèbres de Hopf considérées dans~\cite{EF} et \cite{K98}. Pour compacité des formules nous réunissons les demi-courants à deux matrices dépendent linéairement de ces demi-courants -- les opérateurs de Lax classiques. Nous clarifions la dérivation des formules différentes pour les extensions centrales. Cette différence découle du fait que ces extensions sont définies par les mêmes formules mais avec les contours d'intégration différents. Nous montrons aussi que les propriétés analytiques différentes des demi-courants entraînent des propriétés différentes algébriques. 

Nous continuons la comparaison des limites classiques des groupes quantiques dans le travail~\cite{S22}. Ici nous considérons leurs dégénérescences rationnelles et trigonométriques différentes. La première quasi-bigèbre de Lie possède un dégénérescence rationnelle et un dégénérescence trigonométrique. Dans le second cas il existe un dégénérescence rationnelle et un dégénérescence trigonométrique avec les relations de commutations ressemblantes. Les différences entre des algèbres de Hopf correspondant à ces dégénérescence rationnelles ont été constatée dans~\cite{KLP99}. Dans le second cas il existe aussi une famille des dégénérescences trigonométriques. Ce fait est expliqué par l'asymétrie du contour correspondant par rapport à l'échange des périodes elliptiques. La différence des dégénérescences correspondantes et la présence de la famille additionnelle de dégénérescences trigonométriques prouvent de nouveau la différence des algèbres considérées et aident comprendre plus profondément cette différence. Enfin nous généralisons la méthode de moyennisation de la matrice $r$ classique au cas elliptique dynamique et à la famille des bigèbres de Lie trigonométriques obtenue.

Le dernier article~\cite{S3} est consacré au calcul des projections considérées dans~\cite{EF} et à l'application de la théorie des projections aux modèles statistiques elliptiques sur des réseaux. Les projections des (quasi-)algèbres de Hopf sont apparues dans~\cite{ER2,ER3} comme une méthode de construction des groupes quantiques au genre arbitraire. Dans~\cite{EF} cette méthode a été appliquée au cas elliptique dynamique. Puis, en analysant les expressions de la matrice $R$ universelle pour le cas trigonométrique, Khoroshkin et Pakuliak a proposé une méthode du calcul des projections des courants en termes des demi-courants~\cite{KhP}. Ils ont mentionné que le noyau intégral dans l'expression de ces projections coïncide avec la fonction de partition pour le modèle 6-vertex avec des conditions frontières certaines. Nous généralisons cette remarque au cas elliptique. Le modèle statistique correspondant à ce cas est le modèle Solid-On-Solid (SOS). Nous considérons l'algèbre construite dans~\cite{EF} (dont la limite classique nous avons analysé dans~\cite{S21,S22}) et les projections décrites là. En généralisant la méthode du calcul des projections proposée dans~\cite{KhP} nous obtenons une expression pour les projections du produit des courants et la représentons comme une intégral du même produit. Ensuite, nous extrayons le noyau intégral en calculant un produit scalaire de cette intégral avec le produit de courants duales. Enfin nous montrons que l'expression obtenue satisfait les conditions analytiques définissant uniquement la fonction de partition pour le modèle SOS avec les conditions au bord de type parois de domaines.

La thèse est organisée comme suit. Dans le Chapitre~\ref{sec1} les conceptions générales sont introduites. D'abord on introduit ici les notions basiques des systèmes intégrables classiques et quantiques (sec.~\ref{sec11}). Ensuite, on considère les relations $RLL$ linéaires et quadratiques et leur signification pour les systèmes intégrables (sec.~\ref{sec12}, \ref{sec13}). On présente quelque exemples des systèmes intégrables décrivant par les relations $RLL$ quadratiques qui sont liés à la thèse (sec.~\ref{sec14}, \ref{sec15}). Puis, on définit les modèles statistiques sur des réseaux liées aux relations $RLL$ quadratiques et explique leur relation avec des systèmes intégrables (sec.~\ref{sec16}). Dans la section~\ref{sec17} on définit les notions principales de la théorie des algèbres de Hopf. Ensuite, on introduit les groupes quantiques en termes de relations $RLL$ quadratiques et définit la structures de bigèbre et d'algèbre de Hopf sur ces groupes quantiques (sec.~\ref{sec18}). Puis, on considère la construction du double de Drinfeld et l'applique aux groupes quantiques introduit à la section~\ref{sec18} en obtenant les algèbres de Hopf quasi-triangulaires décrits par les relations $RLL$ (sec.~\ref{sec19}). Enfin, on introduit les relations $RLL$ dynamiques et les quasi-algèbres de Hopf quasi-triangulaires correspondantes (sec.~\ref{sec110}).

Dans le Chapitre~\ref{sec2} on décrit la séparation de variables pour la chaîne de Toda périodique et présente les résultats de l'article~\cite{S1}. On discute les conceptions générales de la méthode de séparation de variables (sec.~\ref{sec21}). Ensuite, on les applique à la chaîne de Toda périodique (sec.~\ref{sec22}). Puis, on décrit les résultats principaux (sec.~\ref{sec23}). La section~\ref{sec24} contient des résultats complétant le tableau de la séparation de variable pour la chaîne de Toda périodique et des remarques additionnelles.

Dans le Chapitre~\ref{sec3} on explique les résultats principaux des articles~\cite{S21,S22}. D'abord on donne la notion générale des courants en termes des distributions (sec.~\ref{sec31}). Ensuite, on discute le rôle des courants pour la quantification des groupes quantiques (sec.~\ref{sec32}). Puis, on explique la relation de (quasi-)algèbres de Hopf décrites par les courants avec les courbes complexes (sec.~\ref{sec33}). Dans les sections~\ref{sec34} et \ref{sec35} on décrit les résultats principaux des travails~\cite{S21} et \cite{S22} respectivement.

Le Chapitre~\ref{sec4} est basé sur le travail~\cite{S3}. Il se débute de la section~\ref{sec41}, où on introduit la notion des projections pour les algèbres de Hopf. Ensuite, on considère le modèle 6-vertex et sa relation avec les projections d'une algèbre trigonométrique correspondante (sec.~\ref{sec42}). Puis, on généralise schéma de cette relation au cas elliptique. On décrit l'algèbre associée avec la courbe elliptique au sens de section~\ref{sec33} et introduit les projections "dynamiques" pour cette algèbre (sec.~\ref{sec43} et \ref{sec44}). Enfin, on décrit la fonction de partition pour un modèle statistique correspondant -- le modèle SOS (sec.~\ref{sec45}). Hormis les résultats contenus dans l'article~\cite{S3}, dont matériel est focalisé à l'application aux modèles statistiques, on formule dans cette Chapitre quelques fait importants pour la méthode de projection "dynamique" décrite dans~\cite{EF}. En plus, dans la section~\ref{sec42} on présente la relation des projections avec la fonction de partition du modèle 6-vertex en termes algébriques sans utilisant les propriétés analytiques de la fonction de partitions.

\chapter[Groupes quantiques, systèmes intégrables et modèles statistiques]{Groupes quantiques en théorie des systèmes inté\-grables et des modèles sur des réseaux}
\label{sec1}

Les groupes quantiques sont apparus pour la première fois dans la théorie des systèmes intégrables et la théorie des modèles statistiques exactement solubles sur des réseaux. L'outil principal de la théorie des systèmes intégrables est ce qu'on appelle un opérateur de Lax $L$. Sa propriété la plus importante provient du fait que les relations de commutation pour ses éléments matriciels peuvent être écrites comme une relation $RLL$, avec une matrice $R$ donnée. Un groupe quantique est l'algèbre "la plus universelle" définie par une relation $RLL$ avec une matrice $R$ donnée. C'est-à-dire, chaque algèbre décrite par un opérateur de Lax satisfaisant à ces relations $RLL$ est une représentation de cette algèbre. La structure de relation $RLL$ permet d'introduire un produit tensoriel entre les représentations de cette algèbre. Ce qui permet ainsi de la munir d'une structure de la bigèbre. De plus, si notre opérateur de Lax est inversible, cette bigèbre est une algèbre de Hopf.

En théorie des modèles statistiques sur des réseaux, les matrices $R$ sont apparues comme les matrices de poids de Boltzmann pour lesquelles le modèle correspondant est exactement résoluble. L'opérateur de Lax du modèle, construit par le coproduit du groupe quantique correspondant, est un outil principal de cette théorie. Par conséquent, à tout modèle statistique sur un réseau, on peut associer un système intégrable décrit par l'opérateur de Lax correspondant à ce modèle statistique.

Nous expliquerons le rôle des relations $RLL$ dans la théorie des systèmes intégrables et les relations entre ces systèmes et les modèles statistiques sur des réseaux et puis, nous considérons les groupes quantiques. Nous verrons aussi comment le coproduit participe à la description des systèmes intégrables et des modèles statistiques sur des réseaux.

\section{Les systèmes intégrables}
\label{sec11}

Nous rappelons tout d'abord la notion de système intégrable classique. Soit $A$ une algèbre associative commutative sur $\mathbb C$. On dit que une application linéaire $\{\cdot,\cdot\}\colon A\otimes A\to A$ définit une structure de Poisson sur $A$ si elle satisfait aux propriétés suivantes :
\begin{enumerate}
 \item la propriété d'antisymétrie: $\{a,b\}=-\{b,a\}$;
 \item l'identité de Jacobi: $\{\{a,b\},c\}+\{\{b,c\},a\}+\{\{c,a\},b\}=0$;
 \item la règle de Leibniz: $\{ab,c\}=a\{b,c\}+\{a,c\}b$;
\end{enumerate}
pour tout $a,b,c\in A$. L'algèbre $A$ munie d'une structure de Poisson est appelée une algèbre de Poisson. L'application $\{\cdot,\cdot\}$ est appelée crochets de Poisson de cette algèbre de Poisson.

On dit que des éléments $a_1,\ldots,a_m$ de $\mathbb C$ sont fonctionnellement indépendants si $F(a_1,\ldots,a_m)=0$ entraîne $F=const$,  $F\colon \mathbb C^m\to\mathbb C$ est une fonction à plusieurs variables telles que la substitution $F(b_1,\ldots,b_m)$ est définie pour tous $b_1,\ldots,b_m\in A$. Supposons que le nombre maximal d'éléments fonctionnellement indépendants est égal à $2N$.

\begin{de}
Un ensemble d'éléments fonctionnellement indépendants $I_1,\ldots,I_N\in A$ est appelé un système intégrable classique si ces éléments sont en involution :
\begin{align}
 \{I_i,I_j\}=0.
\end{align}
\end{de}

Soit ${\cal I}$ une sous-algèbre engendrée par les éléments $I_1,\ldots,I_N$. Grâce à la règle de Leibniz, la sous-algèbre ${\cal I}$ est involutive : $\{a,b\}=0$, $\forall a,b\in{\cal I}$. On dit encore, par abus de langage, que ${\cal I}$ est un système intégrable et les éléments $I_1,\ldots,I_N$ sont appelés intégrales de mouvement du système $I\in{\cal I}$.

 Un problème principal de la mécanique classique est celui de trouver des intégrales de mouvement pour un élément donné $I\in A$ -- des éléments satisfaisant la condition $\{I,I_i\}=0$ -- qui complètent $I$ jusqu'à un système intégrable ${\cal I}$. L'élément $I$ correspond à l'énergie du système mécanique correspondant et il est appelé l'hamiltonien du système.

On passe aux systèmes quantiques. Soit $\A$ une algèbre associative non-commutative sur $\mathbb C$ telle que le nombre maximal d'éléments fonctionnellement indépendants est égal à $2N$.

\begin{de}
Un ensemble d'éléments fonctionnellement indépendants $Q_1$, \ldots,$Q_N\in\A$ est appelé un système intégrable quantique si ces éléments commutent :
\begin{align}
 [Q_i,Q_j]=0.
\end{align}
\end{de}

La sous-algèbre ${\cal Q}$ engendrée par les éléments $Q_1,\ldots,Q_N$ est commutative et est appelée un système intégrable dont les intégrales de mouvement sont $Q_1,\ldots,Q_N$. Les systèmes intégrables quantiques sont liés aux systèmes intégrables classiques. En effet, considérons $\A_\hbar$ une famille d'algèbres associatives à un paramètre telle que les $\A_\hbar$ sont toutes isomorphes comme espaces vectoriels, les algèbres $\A_0$ et $\A_1$ sont isomorphes à $A$ et $\A$ respectivement comme algèbres et $\varphi^{-1}_\hbar\Big(\big[\varphi_\hbar(a),\varphi_\hbar(b)\big]\Big)\hm=\hbar\{a,b\}+o(\hbar)$, $\forall a,b\in A$, où nous avons fixé les isomorphismes $\varphi_\hbar\colon A\to\A_\hbar$. L'algèbre $\A_\hbar$ avec $\hbar\ne0$, en particulier, l'algèbre $\A=\A_1$ est appelée une quantification de l'algèbre de Poisson $A$. Un système intégrable quantique $Q_1,\ldots,Q_N$ est appelé une quantification d'un système intégrable classique $I_1,\ldots,I_N$ si on peut choisir les isomorphismes $\varphi_\hbar\colon\A_\hbar\to A$ tels que $\varphi_1(Q_i)=I_i$. Notons que si une algèbre $\A$ est une quantification d'une algèbre de Poisson $A$ et $\{Q_i\}\subset\A$ est un système intégrable quantique, alors $\{I_i=\varphi_1(Q_i)\}\subset A$ est un système intégrable classique. Le problème inverse -- trouver une quantification $\{Q_i\}$ du système intégrable classique $\{I_i\}$ donné -- est un problème important de la théorie des systèmes intégrables.

\section{Relations $RLL$ linéaires}
\label{sec12}

La méthode la plus répandue pour décrire les systèmes intégrables est le formalisme des opérateurs de Lax. Nous considérerons deux types d'opérateurs de Lax. Les opérateurs de Lax du premier type sont des matrices satisfaisant aux relations $RLL$ linéaires, alors que les opérateurs de Lax de second type satisfont aux relations $RLL$ quadratiques. Nous décrivons tout d'abord le premier cas.

On rappelle d'abord des notations matricielles utilisées dans le formalisme des opérateurs de Lax. Soit $X$ une matrice sur une algèbre associative $\A$ : $X=\sum\limits_{ij=1}^n E_{ij}\otimes X_{ij}\in\End\mathbb C^n\otimes\A$, où $X_{ij}\in\A$ et $E_{ij}$ est une matrice avec $1$ à $(i,j)$-ème place et $0$ aux autres, et soit $Y=\sum\limits_{ijkl=1}^n E_{ik}\otimes E_{jl}\in\A\otimes\End\mathbb C^n\otimes\End\mathbb C^n$, où $Y_{ij,kl}\in\A$. Alors, $X^{(a)}$ et $Y^{(bc)}$, où $b\ne c$, se désignent les éléments suivants de l'espace $\End\mathbb C^n\otimes\ldots\otimes\End\mathbb C^n\otimes\A$ :
\begin{align}
 X^{(a)}&=\sum_{ij=1}^n 1\otimes\ldots\otimes1\otimes E_{ij}\otimes1\otimes\ldots\otimes1\otimes X_{ij}, \label{Xa_not} \\
 Y^{(bc)}&=\sum_{ijkl=1}^n 1\otimes\ldots\otimes1\otimes E_{ik}\otimes1\otimes\ldots\otimes1\otimes E_{jl}\otimes\ldots\otimes1\otimes Y_{ij,kl}, \label{Ybc_not}
\end{align}
où la matrice $E_{ij}$ s'installe au $a$-ème espace $\End\mathbb C^n$, la matrice $E_{ik}$ -- au $b$-ème espace $\End\mathbb C^n$ et la matrice $E_{jl}$ -- au $c$-ème espace $\End\mathbb C^n$. Parfois les éléments $X^{(a)}$ et $Y^{(bc)}$ sont désignés par $X_{a}$ et $Y_{bc}$. Les notations~\eqref{Xa_not}, \eqref{Ybc_not} restent aptes pour $X\in\End\mathbb C^n$ et $Y\in\End\mathbb C^n\otimes\End\mathbb C^n$. Formellement, c'est un cas $\A=\mathbb C$. Nous pouvons encore définir de manière analogue un élément $Z^{(a_1,\ldots,a_m)}$ pour $Z\in\big(\End\mathbb C^n\big)^{\otimes m}\otimes\A$. Quelquefois, ces notations sont utilisées dans le cas plus général -- pour $Z\in{\cal B}^{\otimes m}\otimes\A$, où ${\cal B}$ est une algèbre associative unitaire arbitraire.

Soit $A$ une algèbre de Poisson. Une matrice $\L(u)\in\End\mathbb C^n\otimes A$ dépendant d'une variable $u$ est appelée un opérateur de Lax classique si elle satisfait à la relation $RLL$ linéaire classique
\begin{align}
 \{\L^{(1)}(u),\L^{(2)}(v)\}=[r^{(12)}(u,v),\L^{(1)}(u)+\L^{(2)}(v)], \label{rLL}
\end{align}
où $r(u,v)\in\End\mathbb C^n\otimes\End\mathbb C^n$ est appelée matrice $r$ classique. La partie gauche de~\eqref{rLL} est interprétée comme
\begin{align}
 \{\L^{(1)}(u),\L^{(2)}(v)\}=\sum_{ijkl=1}^n E_{ij}\otimes E_{kl}\otimes\{\L_{ij}(u),\L_{kl}(v)\}, \label{rLLgauche}
\end{align}
où $\L_{ij}(u)$ sont éléments de la matrice $\L(u)=E_{ij}\otimes \L_{ij}(u)$. La variable $u$ est appelée un paramètre spectral. L'antisymétrie des crochets de Poisson $\{\cdot,\cdot\}$ présume que la matrice $r$ doit satisfaire la condition
\begin{align}
 r^{(12)}(u,v)+r^{(21)}(v,u)=C(u,v), \label{rrC}
\end{align}
où $C(u,v)\in\End\mathbb C^n\otimes\End\mathbb C^n$ commute avec les éléments du type $x\otimes1+1\otimes x$, $\forall x\in\End\mathbb C^n$. La condition
\begin{multline}
 [r^{(12)}(u_1,u_2),r^{(13)}(u_1,u_3)]+[r^{(12)}(u_1,u_2),r^{(23)}(u_2,u_3)]+ \\
 +[r^{(13)}(u_1,u_3),r^{(23)}(u_2,u_3)]=0, \label{EYBC}
\end{multline}
est appelée Équation de Yang-Baxter Classique. Elle garantit l'identité de Jacobi pour les crochets de Poisson.

Il existe toujours une algèbre de Poisson $A$ pour une matrice $r$ donnée, telle qu'on peut trouver une matrice $\L(u)\in\End\mathbb C^n\otimes A$ satisfaisant~\eqref{rLL}. Par exemple, on peut définir l'algèbre $A$ comme l'algèbre associative commutative engendrée par les coefficients $\L^k_{ij}$ de la décomposition $\L_{ij}(u)=\sum\limits_k\L^k_{ij}\phi_k(u)$, où $\phi_k(u)$ sont des fonctions linéairement indépendantes~\footnote{On peut considérer aussi une décomposition continuelle $\L_{ij}(u)=\int\limits_{x\in\mathfrak X}\L_{ij}(x)\phi(u;x)d\mu(x)$, où $\mu$ est une mesure sur $\mathfrak X$.}, et munir $A$ des crochets de Poisson, qui sont définis par la formule~\eqref{rLL} et par la règle de Leibniz. Si les conditions~\eqref{rrC} et \eqref{EYBC} sont satisfaites, les crochets sont antisymétriques et satisfont à l'identité de Jacobi, et par conséquent définissent une structure de Poisson sur $A$.

La relation $RLL$ conduit à l'involutivité de traces de puissances de l'opérateur de Lax :
\begin{align}
 \{I_i(u),I_j(v)\}&=0, & I_k(u)&=\frac1k\tr \L(u)^k. \label{hI_htrL}
\end{align}
Ainsi, $I_1(u),\ldots,I_n(u)$ représentent un système intégrable. Les méthodes de démonstra\-tion de complétude des systèmes intégrables occupent une place spéciale dans la théorie des systèmes intégrables et nous ne les concernerons pas ici.

Un exemple d'un tel système est le modèle de Gaudin rationnel. Il est défini par l'opérateur de Lax rationnel
\begin{align} \label{LGaudinR}
 \L(u)=\sum_{s=1}^m\frac1{u-u_s}\sum_{i,j=1}^n E_{ji}\otimes e^{(s)}_{ij},
\end{align}
où $e^{(s)}_{ij}$ sont des générateurs de $A$ et $u_1,\ldots,u_m$ sont des nombres complexes distincts. La structure de Poisson sur $A$ est définie par la relation $RLL$ linéaire~\eqref{rLL} avec la matrice $r$ rationnelle
\begin{align} \label{r_rat}
 r(u,v)=\frac1{u-v}\sum_{i,j=1}^n E_{ij}\otimes E_{ji}.
\end{align}
Autrement dit :
\begin{align}
 \{e^{(s)}_{ij},e^{(r)}_{kl}\}=\delta^{sr}\big(\delta_{kj}e^{(s)}_{il}-\delta_{il}e^{(s)}_{kj}\big).
\end{align}
Pour $n=2$, par exemple, nous avons
\begin{align}
 I_1(u)&=\tr \L(u)=\sum_{s=1}^m\frac1{u-u_s}\sum_{i=1}e^{(s)}_{ii}\in \sum_{s=1}^m\frac1{u-u_s}Z(A), \\
 I_2(u)&=\frac12\tr \L(u)^2=\sum_{s=1}^m\frac1{u-u_s}\sum_{i=1}e^{(s)}_{ii}\in\sum_{s=1}^m\sum_{ij=1}^n\frac1{u-u_s}I_{2,s}+ \frac12\sum_{s=1}^m\frac1{(u-u_s)^2}Z(A),
\end{align}
où $Z(A)=\{a\in A\mid \{a,b\}=0, \forall b\in A\}$ est le centre de Poisson et
\begin{align} \label{I_Gaudin}
 I_{2,s}=\sum_{r\ne s}\frac{1}{u_s-u_r}\sum_{ij=1}^n e^{(s)}_{ij}e^{(r)}_{ji}.
\end{align}
Donc, $I_{2,1}$, \ldots, $I_{2,m}$ sont des intégrales de mouvement pour le modèle de Gaudin rationnel. Pour obtenir une algèbre de Poisson avec le nombre maximal d'éléments fonctionnellement indépendants nécessaire, il faut quotienter l'algèbre $A$ par l'idéal engendré par $x-\chi(x)$, pour tous $x\in Z(A)$, où $\chi\colon Z(A)\to\mathbb C$ est un caractère du centre fixé de Poisson $Z(A)$.

Dans le cas quantique, l'opérateur de Lax $\L(u)\in\End\mathbb C^n\otimes\A$ satisfait la relation
\begin{align}
 [\L^{(1)}(u),\L^{(2)}(v)]=[r^{(12)}(u,v),\L^{(1)}(u)+\L^{(2)}(v)], \label{rLLQ}
\end{align}
avec la même matrice $r$. La relation~\eqref{rLLQ} est appelée une relation $RLL$ linéaire quantique. L'algèbre $\A$ correspondant à la relation~\eqref{rLLQ} quantifie l'algèbre de Poisson $A$ construit par~\eqref{rLL}. La famille $\A_\hbar$ correspondante est définie par la relation
\begin{align} \label{rLLQhbar}
 [\L^{(1)}(u),\L^{(2)}(v)]=\hbar[r^{(12)}(u,v),\L^{(1)}(u)+\L^{(2)}(v)].
\end{align}

En général, les traces de puissances de l'opérateur de Lax ne commutent pas. Le problème de quantification du système~\eqref{hI_htrL} dépend de la spécification de l'opérateur de Lax et de la matrice $r$. Pour le modèle de Gaudin, cas $n=2,$ les traces $Q_1(u)=\tr \L(u)$, $Q_2(u)=\frac12\tr \L(u)^2$ commutent et, par conséquent, quantifient le système~\eqref{I_Gaudin}. Pour le cas $n>2,$ la quantification du système de Gaudin a été obtenue récemment en utilisant la notion de la courbe spectrale quantique~\cite{ChT}.

La relation $RLL$ linéaire a une propriété importante. Si les matrices $\L_1(u)\in\End\mathbb C^n\otimes\A_1$ et $\L_2(u)\in\End\mathbb C^n\otimes\A_2$ satisfont la relation $RLL$ avec la même matrice $r$, leur somme $\L_1(u)+\L_2(u)\in\End\mathbb C^n\otimes\A_1\otimes\A_2$ satisfait aussi la relation $RLL$ avec la même matrice $r$, où nous avons désigné par $L_1(u)$ et $L_2(u)$ dans cette somme les images de ces opérateurs de Lax par plongements canoniques $\A_1\to\A_1\otimes\A_2$ et $\A_2\to\A_1\otimes\A_2$. Par exemple, l'opérateur de Lax~\eqref{LGaudinR} est la somme des opérateurs de Lax
\begin{align} \label{LGaudinRs}
 \L_s(u)&=\frac1{u-u_s}\sum_{i,j=1}^n E_{ji}\otimes e^{(s)}_{ij} & & \in \End C^n\otimes\A_s,
\end{align}
et l'algèbre correspondante est représentée comme $\A=\A_1\otimes\ldots\otimes\A_m$, où $\A_s$ est une sous-algèbre engendrée par $e^{(s)}_{ij}$, $i,j=1,\ldots,n$, car les générateurs $e^{(s)}_{ij}$ commutent avec $e^{(r)}_{kl}$ pour $s\ne r$.

\section{Relations $RLL$ quadratiques}
\label{sec13}

Le second type des opérateurs de Lax classiques correspond aux matrices $L(u)\in\End\mathbb C^n\otimes A$ satisfaisant la relation $RLL$ quadratique classique
\begin{align}
 \{L^{(1)}(u),L^{(2)}(v)\}=[r^{(12)}_{quad}(u,v),L^{(1)}(u)L^{(2)}(v)], \label{rLLq}
\end{align}
où $r_{quad}(u,v)$ est une matrice $r$ satisfaisant l'équation~\eqref{EYBC} et la condition
\begin{align}
 r_{quad}^{(12)}(u,v)+r_{quad}^{(21)}(v,u)=\tilde C(u,v), \label{rrC_tilde}
\end{align}
où $\tilde C(u,v)\in\End\mathbb C^n\otimes\End\mathbb C^n$ commute avec des éléments du type $x\otimes x$, $\forall x\in\End\mathbb C^n$. L'équation~\eqref{EYBC} et la condition~\eqref{rrC_tilde} assurent l'antisymétrie des crochets et l'identité de Jacobi.

Les relations $RLL$ linéaires et quadratiques sont liées. En effet, supposons que l'opéra\-teur de Lax $L(u)$ et la matrice $r$ $r_{quad}(u,v)$ dépendent d'un paramètre $\hbar_D$ comme
\begin{align} \label{LlLqrlrq}
 L(u)&=1+\hbar_D \L(u)+o(\hbar_D), & r_{quad}(u,v)&=\hbar_D r(u,v)+o(\hbar_D),
\end{align}
où nous supposons aussi $\tilde C(u,v)=\hbar_D C(u,v)+o(\hbar_D)$. Alors, en substituant~\eqref{LlLqrlrq} à la relation $RLL$ quadratique~\eqref{rLLq} on obtient la relation $RLL$ linéaire~\eqref{rLL}. Le passage des relations $RLL$ linéaires aux relations $RLL$ quadratiques correspond à la quantification de Drinfeld dans la théorie des groupes quantiques. Comme la relation $RLL$ linéaire, la relation $RLL$ quadratique entraîne l'involutivité des traces $I_k(u)=\frac1k\tr L(u)^k$.

Une quantification des relations $RLL$ quadratiques est plus compliquée que celle des relations $RLL$ linéaires. En effet, en remplaçant les crochets de Poisson par le commutateur dans la partie gauche de~\eqref{rLLq}, on obtient une algèbre avec les générateurs $L_{ij}(u)$, qui ne satisfont pas à la condition de Poincaré–Birkhoff–Witt (PBW). Par conséquent, cette algèbre ne peut pas être une quantification de l'algèbre de Poisson $A$ définie par~\eqref{rLLq}. Une relation $RLL$ quadratique quantique pour l'opérateur de Lax quantique correspondant à $L(u)\in\End\mathbb C^n\otimes\A_\hbar$ est écrite comme
\begin{align} \label{RLL}
 R^{(12)}(u,v)L^{(1)}(u)L^{(2)}(v)=L^{(2)}(v)L^{(1)}(u)R^{(12)}(u,v),
\end{align}
où $R(u,v)\in\End\mathbb C^n\otimes\End\mathbb C^n$ satisfait l'Équation de Yang-Baxter (Quantique)
\begin{align} \label{EYB}
  R^{(12)}(u_1,u_2)R^{(13)}(u_1,u_3)R^{(23)}(u_2,u_3)=R^{(23)}(u_2,u_3)R^{(13)}(u_1,u_3)R^{(12)}(u_1,u_2)
\end{align}
et est appelé une matrice $R$ quantique. Nous supposerons que la $R$-matrice $R(u,v)$ est inversible pour $u$ et $v$ génériques. Il entraîne que $R(u,v)=R_0(u,v)+O(\hbar)$, où $R_0(u,v)$ est une matrice inversible ne dépendant pas de $\hbar$ et, par conséquence, elle peut être choisie telle que $R(u,v)=1+O(\hbar)$. On désigne par $r_{quad}(u,v)$ un premier coefficient de décomposition à $\hbar$ :
\begin{align}
 R(u,v)=1+\hbar r_{quad}(u,v)+o(\hbar). \label{R_r_quad}
\end{align}
En substituant~\eqref{R_r_quad} à \eqref{EYB}, on obtient l'Équation de Yang-Baxter Classique~\eqref{EYBC}. En plus, en substituant~\eqref{R_r_quad} à \eqref{RLL}, nous voyons que l'algèbre définie par~\eqref{RLL} quantifie l'algèbre commutative $A$ munie des crochets de Poisson~\eqref{rLLq}.

En prémultipliant~\eqref{RLL} par $R(u,v)^{-1}$ et en prenant la trace, on obtient que la fonction $t(u)=\tr L(u)$ commute avec elle même :
\begin{align}
 [t(u),t(v)]=0.
\end{align}
Dans cas $n>2,$ les intégrales de mouvement engendrées par $t(u)$ ne sont pas suffisants pour intégrabilité, et alors le problème de quantification du système classique défini par la relation $RLL$ quadratique est résolu particulièrement pour chaque opérateur de Lax. Dans quelques cas, par exemple, on peut utiliser la méthode de fusion.

La relation $RLL$ quantique~\eqref{RLL} quantifie la relation $RLL$ quantique linéaire~\eqref{rLLQ} au sens de Drinfeld. En substituant
\begin{align}
 L(u)&=1+\hbar\hbar_D \L(u,v)+o(\hbar_D),  & R(u,v)&=1+\hbar\hbar_D  r(u,v)+o(\hbar_D). \label{R_r_L_L}
\end{align}
à \eqref{RLL}, on obtient~\eqref{rLLQ}. La version quantique de la matrice~\eqref{r_rat} est
\begin{align} \label{R_rat}
 R(u,v)=1+\frac{\hbar\hbar_D}{u-v}\sum_{i,j=1}^n E_{ij}\otimes E_{ji}.
\end{align}
La matrice~\eqref{R_rat} est appelée une matrice $R$ rationnelle. Par ailleurs, l'équation~\eqref{EYB} a des solutions trigonométriques et elliptiques.

Nous pouvons construire un opérateur de Lax $L(u)\in\End\mathbb C^n\otimes\A_1\otimes\A_2$ par des opérateurs de Lax $L_1(u)\in\End\mathbb C^n\otimes\A_1$ et $L_2(u)\in\End\mathbb C^n\otimes\A_2$ donnés, comme dans le cas de la relation $RLL$ linéaire. En effet, si $L_1(u)$ et $L_2(u)$ satisfont à~\eqref{RLL}, le produit $L(u)=L_2(u)L_1(u)\in\End\mathbb C^n\otimes\A_1\otimes\A_2$ satisfait aussi~\eqref{RLL} avec la même matrice $R$. Par exemple, grâce à~\eqref{EYB}, la matrice $R(u,v)$ avec $v$ fixée est un opérateur de Lax pour l'algèbre $\End\mathbb C^n$ et, par conséquent, le produit des matrices $R$
\begin{align} \label{LRRR}
 L(u)=R^{(0N)}(u,v_N)\cdots R^{(02)}(u,v_2)R^{(01)}(u,v_1)
\end{align}
est un opérateur de Lax pour une algèbre $\big(\End\mathbb C^n\big)^{\otimes N}$. Le cas où la matrice $R$ de~\eqref{LRRR} est la matrice $R$ rationnelle~\eqref{R_rat}, le système donné par l'opérateur de Lax~\eqref{LRRR} est une quantification du système de Gaudin rationnel au sens de Drinfeld.

\section{Chaîne de Toda}
\label{sec14}

Dans~\cite{Toda}, le physicien Morikazu Toda a proposé un modèle non-linéaire intégrable du solide. Ce modèle ainsi que des modèles similaires sont appelés des chaînes de Toda. Nous considérerons les chaînes finies périodiques et ouvertes. La chaîne périodique de Toda de $N$ particules est un système à une dimension avec l'hamiltonien
\begin{align} \label{hamilTodaP}
 H_{\text{pér}}=\frac12\sum_{s=1}^N p^2_s+\sum_{s=1}^{N}e^{x_s-x_{s+1}},
\end{align}
où $x_s$ est la coordonnée de la $s$-ème particule et $p_s$ est son impulsion, et où nous avons posé $x_{N+1}=x_1$. Outre la chaîne périodique, on distingue une chaîne ouverte de Toda. Cette dernière est donnée par l'hamiltonien
\begin{align} \label{hamilTodaO}
 H_{\text{ouv}}=\frac12\sum_{s=1}^N p^2_s+\sum_{s=1}^{N-1}e^{x_s-x_{s+1}}.
\end{align}
On peut interpréter les chaînes périodiques et ouvertes comme le même système avec les conditions différentes à la frontière. Pour la chaîne périodique, ce sont des conditions périodiques $x_{N+1}=x_1,$ ou plus généralement $x_{s+N}=x_s$. La chaîne ouverte correspond aux conditions ouvertes $x_0\to-\infty$, $x_{N+1}\to+\infty$. Ce qui explique les noms des chaînes.

L'algèbre de Poisson $A$ pour ces systèmes est la même. C'est une algèbre de fonctions infiniment différentiables de $2N$ variables $p_s$ et $x_s$ avec les crochets de Poisson standards
\begin{align}
 \{f(p,x),g(p,x)\}=\sum_{s=1}^N\Big(\frac{\partial f(p,x)}{\partial p_s}\frac{\partial g(p,x)}{\partial x_s}-\frac{\partial f(p,x)}{\partial x_s}\frac{\partial g(p,x)}{\partial p_s}\Big),
\end{align}
où $p=(p_1,\ldots,p_N)$ et $x=(x_1,\ldots,x_N)$. Néanmoins, puisque les hamiltoniens~\eqref{hamilTodaP} et \eqref{hamilTodaO} ne sont pas en involution, ils définissent des systèmes différents. Comme nous le verrons, chacun de ces systèmes possède un ensemble nécessaire d'intégrales de mouvement qui engendrent une sous-algèbre involutive contenant l'hamiltonien correspondant.

L'hamiltonien du système de Toda périodique/ouverte quantique est défini par l'expression~\eqref{hamilTodaP}/\eqref{hamilTodaO}, où $p_s=-i\hbar \frac{\partial}{\partial x_s}$ et $x_s$ est un opérateur de multiplication par $x_s$. Dans le cas quantique, le rôle de l'algèbre $\A_{\hbar/i}$ est joué par l'algèbre engendrée par les opérateurs $p_s$ et $x_s$. Chaque terme des expressions de ces hamiltoniens ainsi que des intégrales de mouvement correspondantes ne contient pas de produits d'éléments non-involutifs et, par conséquent, les chaînes de Toda considérées sont quantifiées directement. C'est pourquoi nous ne nous arrêtons pas au cas classique.

On considère les matrices suivantes
\begin{equation} \label{Ls}
 L_s(u)=
  \begin{pmatrix}
   u-p_s & e^{-x_s} \\
   -e^{x_s} & 0
  \end{pmatrix}.
\end{equation}
Elles satisfont la relation $RLL$ quadratique quantique~\eqref{RLL} avec la matrice $R$ rationnelle~\eqref{R_rat}, où $\hbar\hbar_D=i\hbar$, $n=2$. Elles sont appelées les opérateurs de Lax d'une particule de la chaîne de Toda. En utilisant la construction de la fin de la sous-section précédente, nous obtenons l'opérateur de Lax suivant
\begin{equation} \label{TNLL}
 T_{[N]}(u)=L_N(u)\cdots L_1(u)=
  \begin{pmatrix}
   A_N(u) & B_N(u) \\
   C_N(u) & D_N(u)
  \end{pmatrix},
\end{equation}
qui est appelé la matrice de monodromie de la chaîne de Toda aux $N$-particules.
Comme nous le savons déjà, la trace de l'opérateur de Lax, en particulier, la fonction
\begin{align}
 \hat t(u)=\tr T_{[N]}(u)=A_N(u)+D_N(u)
\end{align}
commute avec elle même :
\begin{align}
 [\hat t(u),\hat t(v)]=0. \label{ttcomm}
\end{align}
Cela signifie que les coefficients $Q_k$ de la décomposition
\begin{align}
 \hat t(u)=\sum_{k=0}^N (-1)^k u^{N-k} Q_k \label{hattuQ}
\end{align}
commutent:
\begin{align}
 [Q_k,Q_j]=0, \label{commQQ}
\end{align}
En plus, l'hamiltonien de la chaîne de Toda périodique est exprimé par les coefficient : $H_{\text{pér}}=\frac12Q_1^2-Q_2$. Par conséquent, $Q_1$, \ldots ,$Q_N$ sont des intégrales de mouvement pour la chaîne de Toda périodique et $\hat t(u)$ est leur fonction génératrice. Chaque $Q_k$ est un polynôme en $p_1$, \ldots, $p_N$ de degré $k$ tel que $Q_k=\sum\limits_{i_1<\ldots<i_k}p_{i_1}\cdots p_{i_k}+$ un polynôme de degré $k-1$, et c'est pourquoi ils sont fonctionnellement indépendants.  En particulier, $Q_0=1$ et $Q_1=\sum\limits_{k=1}^N p_k$ -- opérateur de l'impulsion totale, Puisque le nombre des intégrales de mouvement fonctionnellement indépendantes est égal à $N,$ la chaîne de Toda est un système intégrable.

La relation $RLL$~\eqref{RLL} pour la matrice de monodromie~\eqref{TNLL} avec la matrice $R$ rationnelle~\eqref{R_rat} présume la relation
\begin{align}
 [A_N(u),A_N(v)]=0, \label{AAcomm}
\end{align}
dont on peut trouver, de manière analogique, que $A_N(u)$ est une fonction génératrice pour la chaîne de Toda ouverte :
\begin{gather}
 A_N(u)=\sum_{k=0}^N (-1)^k u^{N-k} H_k, \\
 H_0=1, \quad H_1=Q_1=\sum_{k=1}^N p_k, \quad H_{\text{ouv}}=\frac12 H_1^2-H_2, \\
 [H_k,H_j]=0, \label{commHH}
\end{gather}
et qu'elle est aussi un système intégrable.

\section{Modèles $XXX$ et $XXZ$}
\label{sec15}

Dans la sous-section précédente, nous avons décrit la chaîne de Toda en utilisant la relation $RLL$ quadratique pour un opérateur de Lax dégénéré (non-inversible) avec une matrice $R$ rationnelle. Le modèle $XXX$ est un système qu'on peut décrire en terme d'une relation $RLL$ quadratique et une matrice $R$ rationnelle, mais avec un opérateur de Lax non-dégénéré.

Le système associé a l'opérateur de Lax~\eqref{LRRR} avec la matrice $R$ rationnelle~\eqref{R_rat} (où $n=2$, $\hbar\hbar_D=1$) est appelé un modèle $XXX$ à spin $\frac12$. L'opérateur de Lax~\eqref{LRRR} joue le rôle de la matrice de  monodromie pour ce système. Si les paramètres $v_s$ sont égaux, le modèle est dit homogène, autrement il est  hétérogène. La matrice de monodromie d'un modèle $XXX$ du spin quelconque est définie par le produit des opérateurs de Lax d'une particule suivant
\begin{equation} \label{LXXXs}
 L_s(u)=
  \begin{pmatrix}
   u-v_s+h^{(s)}/2 & f^{(s)} \\
   e^{(s)} & u-v_s-h^{(s)}/2
  \end{pmatrix},
\end{equation}
où $h^{(s)}$, $e^{(s)}$ et $f^{(s)}$ sont des opérateurs dans $V_1\otimes\ldots\otimes V_N$ qui représentent les éléments de $\slt$ correspondants dans les $\slt$-modules $V_s$ :
\begin{align} \label{xxhef_sr}
 [h^{(s)},e^{(r)}]&=2\delta^{sr}e^{(s)}, &[h^{(s)},f^{(r)}]&=-2\delta^{sr}f^{(s)}, &[e^{(s)},f^{(r)}]&=\delta^{sr}h^{(s)}, \\
 [h^{(s)},h^{(r)}]&=0, & [e^{(s)},e^{(r)}]&=0, & [f^{(s)},f^{(r)}]&=0.
\end{align}
Les relations $RLL$ pour les opérateurs de Lax~\eqref{LXXXs} sont équivalentes aux relations de commutation~\eqref{xxhef_sr} pour $s=r$. La fonction génératrice des intégrales de mouvement du modèle $XXX$ est la trace de la matrice de monodromie. Dans le cas homogène ($v_1=\ldots=v_N$) elle est décomposée comme
\begin{align} \label{trLLXXX}
 \tr\big(L_N(u)\cdots L_1(u)\big)=u^N+u^{N-2}H_{XXX}+\ldots,
\end{align}
où
\begin{align} \label{HXXX}
 H_{XXX}=\sum_{1\le s<r\le N}\Big(\frac12h^{(s)}h^{(r)}+e^{(s)}f^{(r)}+f^{(s)}e^{(r)}\Big).
\end{align}
L'opérateur~\eqref{HXXX} est l'hamiltonien du modèle $XXX$. En termes d'opérateurs $S^{(s)}_x=\frac12(e^{(s)}+f^{(s)})$, $S^{(s)}_y=\frac1{2i}(e^{(s)}-f^{(s)})$, $S^{(s)}_z=\frac12h^{(s)},$ il est écrit comme
\begin{align} \label{HIxIxIx}
 H_{XXX}=\sum_{1\le s<r\le N}\Big(J_xS_x^{(s)}S_x^{(r)}+J_xS_y^{(s)}S_y^{(r)}+J_xS_z^{(s)}S_z^{(r)}\Big),
\end{align}
où $J_x=2$. Ce qui explique le nom du système. Puisque la trace~\eqref{trLLXXX} commute avec elle même, les autres coefficients dans la composition~\eqref{trLLXXX} nous donnent le reste des intégrales de mouvement. Le modèle $XXX$ du spin $\frac12$ correspond au cas où les $V_s$ sont des $\slt$-modules du spin $\frac12$ :
\begin{align}
 h^{(s)}&=H^{(s)}, &e^{(s)}&=E^{(s)}, &f^{(s)}&=F^{(s)},
\end{align}
avec
\begin{align}
 H&=
 \begin{pmatrix}
  1 & 0 \\
  0  & -1
 \end{pmatrix}, &
 E&=
 \begin{pmatrix}
  0 & 1 \\
  0  & 0
 \end{pmatrix}, &
 F&=
 \begin{pmatrix}
  0 & 0 \\
  1  & 0
 \end{pmatrix}.
\end{align}

Le modèle $XXZ$ homogène est défini par l'hamiltonien suivant :
\begin{align} \label{HIxIxIz}
 H_{XXZ}=\sum_{1\le s<r\le N}\Big(J_xS_x^{(s)}S_x^{(r)}+J_xS_y^{(s)}S_y^{(r)}+J_zS_z^{(s)}S_z^{(r)}\Big).
\end{align}
La matrice de monodromie pour le modèle $XXZ$ du spin $\frac12$ est donnée par l'expression~\eqref{LRRR} avec la matrice $R$ trigonométrique
\begin{multline} \label{R_trig}
 R(z,w)=(qz-q^{-1}w)\sum_{i=1}^2 E_{ii}\otimes E_{ii}+(z-w)\big(E_{11}\otimes E_{22}+E_{22}\otimes E_{11}\big)+ \\
 +(q-q^{-1})w E_{12}\otimes E_{21}+(q-q^{-1})z E_{21}\otimes E_{12},
\end{multline}
où nous utilisons les variables multiplicatives $z=e^u$ et $w=e^v$ au lieu des variables additives $u$ et $v$. Le paramètre $q$ définit l'anisotropie du modèle : $J_x/J_z=q+q^{-1}+2$. En substituant les paramètres $w_1$, \ldots, $w_N$ tous différents dans la formule~\eqref{LRRR}, nous obtenons un modèle $XXZ$ hétérogène.

Hormis les modèles $XXX$ et $XXZ$ on considère leur généralisation elliptique -- modèle $XYZ$. Le hamiltonien de ce modèle contient trois constante différentes $J_x$, $J_y$ et $J_z$. La matrice $R$ correspondante consiste en fonctions elliptiques et est appelée {\itshape matrice de Baxter-Belavin}. Pour la première fois, cette matrice $R$ pour le cas $\slt$ est écrit par Baxter dans une article~\cite{Bax1} comme matrice des poids de Bolzmann d'un modèle 8-vertex (voir section~\ref{sec18}). Après, Belavin a généralisé cette matrice aux cas des algèbres de Lie de rangs plus hauts.

\section{Modèles statistiques sur des réseaux}
\label{sec16}

Le modèle statistique est un modèle d'un système qui peut se trouver dans tel ou tel état avec une probabilité donnée. Les observables d'un système sont représentées comme des fonctions de l'état. Un problème posé dans la théorie des modèles statistiques est celui du calcul des valeurs moyennes des ces observables, leurs fluctuations et leurs corrélations. La loi la plus répandue de la distribution de la probabilité pour les systèmes physiques statistiques est la loi de Boltzmann. Elle postule que les probabilités d'états sont proportionnelles à $e^{-E/(kT)}$, où $E$ est l'énergie de l'état correspondant, $k$ est la constante de Boltzmann et $T$ est la température absolue du système -- une quantité constante pour un système en équilibre. La quantité $e^{-E/(kT)}$ est appelée le poids de Boltzmann de l'état. La probabilité d'un état $\omega$ avec énergie $E_\omega$ est égal à $\frac1{Z}e^{-E_\omega/(kT)}$, où le facteur de normalisation 
\begin{align}
 Z=\sum_{\omega}e^{-E_\omega/(kT)} \label{Z_gen_def}
\end{align}
est appelé une fonction de partition (ou la somme statistique) du modèle. La somme en~\eqref{Z_gen_def} est prise sur tous les états possibles. En prenant la dérivée logarithmique de~\eqref{Z_gen_def} par $-\frac1{kT}$ on obtient la valeur moyenne de l'énergie comme une fonction de la température -- la fonction thermodynamique principale. C'est pourquoi le problème principal de la théorie des modèles statistiques revient à calculer la fonction de partition.

On considère un réseau carré sur un plan qui possède $N^2$ sommets. Supposons que chaque sommet peut prendre des états appelés des configurations du sommet. On numérote les lignes verticales et horizontales par des nombres de $1$ à $N$ et on considère le sommet se trouvant au croisement de la $i$-ème ligne verticale avec la $j$-ème ligne horizontale. Désignons l'ensemble de ses configurations par $\Omega_{ij}$ et associons à chaque configuration $\omega_{ij}\in\Omega_{ij}$ un nombre $W_{ij}(\omega_{ij})$ appelé le poids de Boltzmann du sommet. Les états de tout le système sont appelés configurations du réseau et forment un ensemble $\Omega\subset\prod\limits_{ij=1}^N\Omega_{ij}$. En générale, il ne coïncide pas à tout $\prod\limits_{ij=1}^N\Omega_{ij}$, puisqu'en général, les configurations des sommets ne sont pas indépendantes. Nous postulons qu'un poids de Boltzmann de tout le réseau correspondant à une configuration $\omega=(\omega_{ij})\in\Omega$ est égal à
\begin{align}
 W(\omega)=\prod_{ij=1}^N W_{ij}(\omega_{ij}).
\end{align}
En terme de l'énergie cela signifie que l'énergie du système peut être représentée comme $E\hm=\sum\limits_{ij=1}^N E_{ij}$, où $E_{ij}$ est une partie de l'énergie ne dépendant que de la configuration du $(i,j)$-ème sommet.

\begin{figure}[h]
\begin{center}
\begin{picture}(40,40)
\put(23,28){$\alpha$}
\put(0,14){$\delta$}
\put(33,14){$\beta$}
\put(23,-10){$\gamma$}
\put(40,10){\line(-1,0){40}}
\put(40,10){\line(-1,0){3}}
\put(20,-10){\line(0,1){40}}
\put(20,-10){\line(0,1){3}}
\end{picture}
\end{center}
\caption{\footnotesize  Le poids de Boltzmann $R^{\alpha\beta}_{\gamma\delta}(i,j)$.}
\label{fig2}
\end{figure}
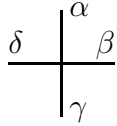

Supposons que chaque arête -- une partie d'une ligne verticale ou horizontale entre deux sommet voisin -- peut être à deux états et que la configuration de chaque sommet est définie par les états des arêtes contigus. De tels modèles sont appelés des modèles des sommets. Pour des raisons de simplicité, nous estimerons que les $16$ configurations de chaque sommet sont possibles, en supposant, s'il le faut, que les valeurs des poids de Boltzmann pour les configurations impossibles égalent zéro. Ce qui nous permet représenter les poids de Boltzmann d'un sommet comme des éléments d'une matrice : $W_{ij}(\omega_{ij})=R^{\alpha\beta}_{\gamma\delta}(i,j)$, où $\omega_{ij}$ est une configuration du $(i,j)$-ème sommet correspondante aux états des arêtes $\alpha$, $\beta$, $\gamma$ et $\delta$, où $\alpha,\beta,\gamma,\delta\hm=1,2$ sont des valeurs des états des arêtes (voir FIG~\ref{fig2}). La matrice $R(i,j)\in\End\mathbb C^2\otimes\End\mathbb C^2$ est appelée une matrice des poids de Boltzmann de $(i,j)$-ème sommet. La fonction de partition
\begin{align}
 Z=\sum_{\omega}W(\omega)=\sum_{\omega}\prod_{ij=1}^N W_{ij}(\omega_{ij})
\end{align}
pour un modèle des sommets peut être présentée comme une somme des éléments matricielles de la matrice
\begin{align}
 \mathbb R=\prod_{i=1}^N\prod_{j=1}^N R^{(i,N+j)}(i,j)\in\big(\End\mathbb C^2\big)^{2N},
\end{align}
où l'ordre de multiplication est défini par l'ordre de numération des lignes. Les indices de cette matrice correspondent aux états des arêtes frontières. Ainsi, la fonction de partition est la somme des éléments matriciels avec des indices satisfaisant les conditions à la frontière du modèle :
\begin{align}
 Z=\sum\mathbb R^{\alpha_1,\ldots,\alpha_N;\beta_1,\ldots,\beta_N}_{\gamma_1,\ldots,\gamma_N;\delta_1,\ldots,\delta_N}. \label{ZsumRabgd}
\end{align}

En particulier, pour les conditions périodiques à la frontière, on a $\alpha_i=\gamma_i$, $\beta_j=\delta_j$, et par conséquent la fonction de partition d'un modèle des sommets est une trace sur tous les $2N$ espaces
\begin{align}
 Z=\tr_{1,\ldots,2N}\mathbb R. \label{Z_trR}
\end{align}
On peut récrire l'expression~\eqref{Z_trR} comme
\begin{align}
 Z=\tr\prod_{i=1}^N \hat t(i), \label{Z_trT}
\end{align}
où la matrice $\hat t(i)\in\End\mathbb C^2$ est appelée une matrice de transfert de la $i$-ème ligne horizontale et est égale à
\begin{align}
 \hat t(i)=\tr_{1,\ldots,N}\prod_{j=1}^N R^{(0,j)}(i,j).
\end{align}
Le problème de calcul de la fonction de partition~\eqref{Z_trT} se réduit au problème spectral pour les matrices $\hat t(i)$. En effet, en diagonalisant simultanément ces matrices nous avons
\begin{align}
 Z=\lambda_{1,1}\lambda_{2,1}\cdots\lambda_{N,1}+\lambda_{1,2}\lambda_{2,2}\cdots\lambda_{N,2},
\end{align}
où $\lambda_{i,1}$ et $\lambda_{i,2}$ sont valeurs propres des matrices $\hat t(i)$.

Baxter a monté que ce problème est soluble si les matrices de poids de Boltzmann sont paramétrées par $R(i,j)=R(u_i,v_j)$ et $R(u,v)$ satisfait l'équation~\eqref{EYB}. Ce qui signifie que nous avons une correspondance entre les modèles des sommets exactement solubles et les systèmes intégrables avec la matrice de monodromie~\eqref{LRRR}. Les matrices de transfert d'un modèle des sommets sont exprimées par la fonction génératrice des intégrales de mouvement $\hat t(u)$ du système intégrable correspondant comme
\begin{align}
 \hat t(i)=\hat t(u_i).
\end{align}
Par conséquent, le problème consistant à trouver la fonction de partition est réduit au problème spectral pour $\hat t(u)$, qui est équivalent au problème spectral pour les intégrales de mouvement du système.

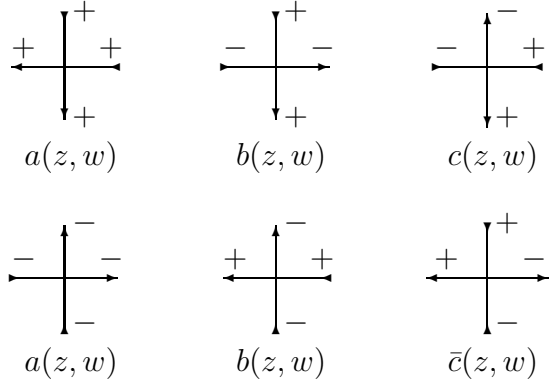
\begin{figure}[t]
\begin{center}
\begin{picture}(180,130)
\put(05,63){$a(z,w)$}
\put(05,-15){$a(z,w)$}
\put(23,118){$+$}
\put(0,104){$+$}
\put(33,104){$+$}
\put(23,78){$+$}
\put(40,100){\vector(-1,0){40}}
\put(40,100){\vector(-1,0){3}}
\put(20,120){\vector(0,-1){40}}
\put(20,120){\vector(0,-1){3}}

\put(85,63){$b(z,w)$}
\put(85,-15){$b(z,w)$}
\put(103,118){$+$}
\put(113,104){$-$}
\put(80,104){$-$}
\put(103,78){$+$}
\put(80,100){\vector(1,0){40}}
\put(80,100){\vector(1,0){3}}
\put(100,120){\vector(0,-1){40}}
\put(100,120){\vector(0,-1){3}}

\put(165,63){$c(z,w)$}
\put(165,-15){$\bar c(z,w)$}
\put(183,118){$-$}
\put(193,104){$+$}
\put(160,104){$-$}
\put(183,78){$+$}
\put(160,100){\line(1,0){40}}
\put(200,100){\vector(-1,0){3}}
\put(160,100){\vector(1,0){3}}
\put(180,80){\vector(0,1){40}}
\put(180,80){\vector(0,-1){3}}

\put(23,38){$-$}
\put(0,24){$-$}
\put(33,24){$-$}
\put(23,00){$-$}
\put(0,20){\vector(1,0){40}}
\put(0,20){\vector(1,0){3}}
\put(20,00){\vector(0,1){40}}
\put(20,00){\vector(0,1){3}}

\put(103,38){$-$}
\put(113,24){$+$}
\put(80,24){$+$}
\put(103,00){$-$}
\put(120,20){\vector(-1,0){40}}
\put(120,20){\vector(-1,0){3}}
\put(100,00){\vector(0,1){40}}
\put(100,00){\vector(0,1){3}}

\put(183,38){$+$}
\put(193,24){$-$}
\put(160,24){$+$}
\put(183,00){$-$}
\put(200,20){\line(-1,0){40}}
\put(200,20){\vector(1,0){3}}
\put(160,20){\vector(-1,0){3}}
\put(180,00){\line(0,1){40}}
\put(180,00){\vector(0,1){3}}
\put(180,40){\vector(0,-1){3}}
\end{picture}
\end{center}
\caption{\footnotesize La représentation graphique des poids de Boltzmann pour le modèle 6-vertex.}
\label{fig1}
\end{figure}

L'exemple du modèle des sommets est un modèle 6-vertex. C'est un modèle avec une matrice des poids de Boltzmann $R(i,j)=R(u_i,v_j)$, où $R(u,v)$ est une matrice $R$ trigonométrique~\eqref{R_trig}. Le paramètre $q$ définit l'anisotropie du modèle 6-vertex. Chaque sommet a six configurations avec une probabilité non-zéro, id est six configurations possibles en réalité. Ces configurations sont représentées en Fig.~\ref{fig1} avec les poids de Boltzmann
\begin{align}
a(z,w)&=qz-q^{-1}w, & b(z,w)&=z-w,  \label{Bw} \\
c(z,w)&=(q-q^{-1})z, & \bar c(z,w)&=(q-q^{-1})w. \label{Bwc}
\end{align}
Pour chaque configuration possible deux flèches entrent à chaque sommet et deux flèches quittent chaque sommet. Le système intégrable correspondant est le modèle $XXZ$ de spin $\frac12$. Le modèle 6-vertex est appelé homogène si $u_1=\ldots=u_N$ et $v_1=\ldots=v_N$, id est si la matrice des poids est même pour tous les sommets. Il correspond au modèle $XXZ$ homogène. Autrement, il est appelé hétérogène et correspond au modèle $XXZ$ hétérogène.

Le modèle $XYZ$ correspond à un autre modèle des sommets. Il généralise le modèle 6-vertex et il est appelé le modèle 8-vertex. Chaque configuration d'un sommet possible pour ce modèle a un nombre pair de flèches entrantes et un nombre pair de flèches sortantes. Ce sont six configurations présentées à Fig.~\ref{fig1} (mais avec d'autres poids de Boltzmann) et plus deux configurations, pour lesquelles toutes les flèches entrent au sommet considéré et toute les flèches quittent ce sommet. La limite trigonométrique du modèle 8-vertex nous donne le modèle 6-vertex. 

L'autre type des modèles sur des réseaux est les modèles des faces. Dans les articles initiaux ces modèles sont décrits comme suit. Chaque configuration du réseau est définie par des nombres associés à chaque sommet et appelés altitudes. A chaque face on associe un poids de Boltzmann dépendant des hauteurs autour de cette face. Exigeons que les différences des hauteurs voisins égalent à $\;\pm1$. Alors, nous avons six configurations pour chaque face. Le poids de Boltzmann pour tout le réseau est égal au produit des poids de toutes les faces du réseau. 

Cependant, on peut regarder autrement les modèles des faces. On échange les faces et les sommets, id est on considère le réseau nouvel dont sommets se trouvent aux centres des faces du réseau dernier. Alors, les hauteurs sont associées aux faces et les poids -- aux sommets. Ainsi, nous arrivons à des modèles décrits comme plus haut. En plus, si on dit qu'une arête est un état correspondant à la différence entre les hauteurs des faces contiguës, nous obtenons un modèle ressemblant à un modèle des sommets, mais les poids de chaque sommet pour ces modèles dépendent, en plus, d'une des hauteurs placée autour de ce sommet. Par exemple, on peut considérer qu'il dépend de la hauteur de la face gauche-haut relativement au sommet considéré. La matrice de poids pour ce sommet satisfait l'Équation de Yang-Baxter Dynamique et elle est appelée matrice $R$ dynamique. La valeur de la hauteur, dont cette matrice $R$ dépend, est appelée le paramètre dynamique.

\section{Bigèbres et algèbres de Hopf}
\label{sec17}

Comme nous l'avons déjà écrit les groupes quantiques sont des algèbres décrites par les relations $RLL$. Les relations $RLL$ quantiques quadratiques décrivent des bigèbres et des algèbres de Hopf, les relation $RLL$ quantiques linéaire -- des bigèbre de Lie, id est des "groupes classiques". Les relations $RLL$ classiques définissent les algèbres de Poisson obtenues comme des dégénérations de Poisson correspondantes.

On commence avec les définitions de bigèbres et d'algèbre de Hopf. Sois $\HH$ une algèbre associative unitaire sur un corps commutatif $\mathbb K$. On dit qu'on a une structure de {\itshape bigèbre} dans $\HH$ si elle est munie des homomorphismes $\Delta\colon\HH\to\HH\otimes\HH$ et $\varepsilon\colon\HH\to\mathbb K$ tels que les diagrammes
\begin{align} \label{diag_coass}
 \xymatrix{
  \HH\ar@{->}[r]^{\Delta}\ar@{->}[d]_{\Delta} & \HH\otimes\HH\ar@{->}[d]^{\id\otimes\Delta} \\
  \HH\otimes\HH\ar@{->}[r]^{\Delta\otimes\id} & \HH\otimes\HH\otimes\HH
 } \qquad \qquad
 \xymatrix{
  \HH\otimes\HH\ar@{->}[d]_{\varepsilon\otimes\id} & \HH\ar@{->}[r]^{\Delta}\ar@{->}[l]_{\Delta}\ar@{->}[dl]\ar@{->}[dr] & \HH\otimes\HH\ar@{->}[d]^{\id\otimes\varepsilon} \\
  \mathbb K\otimes\HH &  & \HH\otimes\mathbb K
 }
\end{align}
sont commutatifs, où $\HH\to\mathbb K\otimes\HH$ et $\HH\to\HH\otimes\mathbb K$ sont les isomorphismes canoniques. Les applications $\Delta$ et $\varepsilon$ sont appelées une comultiplication (ou un coproduit) et une counité de la bigèbre $\HH$. Une bigèbre $\HH$ est appelée commutative si $\HH$ est commutative comme une algèbre; elle est appelée {\itshape cocommutative} si $\Delta=\Delta^{op}\stackrel{df}{=}\hat{\cal P}\circ\Delta$, où une application $\hat{\cal P}\colon\HH\otimes\HH\to\HH\otimes\HH$ est définie comme $\hat{\cal P}(a\otimes b)=b\otimes a$ pour tous $a,b\in\HH$.

Le sens de la comultiplication en théorie des représentations est le suivant. Si on a deux représentations de la bigèbre $\HH$ : $\pi_1\colon\HH\to\End V_1$ et $\pi_2\colon\HH\to\End V_2$, on peut construire une représentation $\pi\colon\HH\to\End(V_1\otimes V_2)$ dans un produit tensoriel $V_1\otimes V_2$ comme la composition $\pi=(\pi_1\otimes\pi_2)\circ\Delta$. La counité donne la représentation triviale. La représentation $\varepsilon\colon\HH\to\mathbb K$ est une unité par rapport au "produit tensoriel" des représentations de $\HH$ :
\begin{align}
 (\pi\otimes\varepsilon)\circ\Delta=(\varepsilon\otimes\pi)\circ\Delta=\pi.
\end{align}

\begin{de}
Une bigèbre $\HH$ est appelée une algèbre de Hopf s'il est donné une application linéaire $S\colon\HH\to\HH$ telle que 
\begin{align}
 \mu\circ(S\otimes\id)\circ\Delta=\mu\circ(\id\otimes S)\circ\Delta= 1\cdot\varepsilon,
\end{align}
où $\mu\colon\HH\otimes\HH\to\HH$ est une multiplication dans $\HH$, soit $\mu(a\otimes b)=ab$.
\end{de}

En fait, l'application $S\colon\HH\to\HH$ est un antihomomorphisme inversible et il est appelé un antipode de l'algèbre de Hopf $\HH$. Il permet de construire la représentation duale à la représentation donnée. 

L'exemple de la bigèbre commutative est une algèbre de fonctions sur un monoïde $G$. La multiplication pour cette algèbre est une multiplication par points. La structure de bigèbre est construite par la structure du monoïde $G$, à savoir, la comultiplication $\Delta$ d'une fonction $f\colon G\to \mathbb K$ est une fonction de deux variables $\Delta(f)\colon G\times G\to\mathbb K$ définie comme $\Delta(f)(x,y)=f(xy)$,  $\forall x,y\in G$. La counité est définie comme $\varepsilon(f)=f(e)$, où $e$ est l'unité du monoïde $G$. En plus, si $G$ est un groupe, une application définie comme $S(f)(x)=f(x^{-1})$ est un antipode, et, par conséquent, nous avons une algèbre de Hopf commutative.

L'exemple de l'algèbre de Hopf cocommutative est une algèbre enveloppante universelle $U(\mathfrak g)$ d'une algèbre de Lie $\mathfrak g$. La structure d'algèbre de Hopf est définie comme
\begin{align}
 \Delta(x)&=x\otimes1+1\otimes x, & \varepsilon(x)&=0, & S(x)=-x, \label{DeltaLie}
\end{align}
où $x\in\mathfrak g$.

\begin{de}
 Une algèbre de Hopf $\HH$ est appelée quasi-triangulaire, s'il existe un élément inversible $\R$ appartenant à l'espace $H\otimes H$ ou à une quelconque complétion, tel que
\begin{align}
 \Delta^{op}(a)&=\R\Delta(a)\R^{-1}, \label{cmop_RcmR} \\
 (\id\otimes\Delta)\R&=\R^{(13)}\R^{(12)}, \label{cmR_R13R12} \\
 (\Delta\otimes\id)\R&=\R^{(13)}\R^{(23)}. \label{cmR_R13R23}
\end{align}
L'élément $\R$ est appelé une matrice $R$ universelle. Si, en plus, $\R^{-1}=\R^{(21)}$, l'algèbre de Hopf $\HH$ est appelée triangulaire.
\end{de}

La condition~\eqref{cmop_RcmR} est plus faible que la condition de la cocommutativité. Le cas cocommutatif correspond à la matrice $R$ universelle $\R=1$. Toutes algèbres de Hopf, aux quelles nous nous intéressons, peuvent être présentées comme des algèbres de Hopf quasi-triangulaires ou comme leurs sous-algèbres. Les formules~\eqref{cmop_RcmR} et \eqref{cmR_R13R12} (ou \eqref{cmR_R13R23}) entraînent le fait suivant.

\begin{prop}
 Si $\R$ est une matrice $R$ universelle pour une algèbre de Hopf quasi-tri\-an\-gu\-laire, elle satisfait:
\begin{gather}
 \R^{(12)}\R^{(13)}\R^{(23)}=\R^{(23)}\R^{(13)}\R^{(12)}, \label{EYBU} \\
 (\varepsilon\otimes\id)\R=(\id\otimes\varepsilon)\R=1, \\
 (S\otimes\id)\R=(\id\otimes S^{-1})\R=\R^{-1}, \\
 (S\otimes S)\R=\R.
\end{gather}
En particulier, la relation~\eqref{EYBU} est appelée Équation de Yang-Baxter Universelle.
\end{prop}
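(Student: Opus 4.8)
The plan is to deduce the four identities from the three defining relations \eqref{cmop_RcmR}, \eqref{cmR_R13R12}, \eqref{cmR_R13R23}, together with the counit and antipode axioms and the assumed invertibility of $\R$. I write $\R=\sum_i a_i\otimes b_i$ and use repeatedly that $\id\otimes\Delta$, $\Delta\otimes\id$, $\varepsilon\otimes\id$, $\id\otimes\varepsilon$ are algebra homomorphisms, hence commute with inversion.

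For the Yang--Baxter identity \eqref{EYBU} I start from \eqref{cmR_R13R23}, $(\Delta\otimes\id)\R=\R^{(13)}\R^{(23)}$, and rewrite its left-hand side through \eqref{cmop_RcmR} in the form $\Delta(a)=\R^{-1}\Delta^{op}(a)\R$. As the conjugating copy of $\R$ here occupies the first two factors, this gives $(\Delta\otimes\id)\R=(\R^{(12)})^{-1}\,(\Delta^{op}\otimes\id)\R\,\R^{(12)}$. Applying the flip $\hat{\cal P}$ on the first two legs to \eqref{cmR_R13R23} identifies $(\Delta^{op}\otimes\id)\R=\R^{(23)}\R^{(13)}$, so that $\R^{(13)}\R^{(23)}=(\R^{(12)})^{-1}\R^{(23)}\R^{(13)}\R^{(12)}$; multiplying on the left by $\R^{(12)}$ yields exactly \eqref{EYBU}.

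For the counit relations I apply $\id\otimes\varepsilon\otimes\id$ to \eqref{cmR_R13R12}. By the counit axiom the left-hand side collapses to $\R$, while the right-hand side becomes $\R\cdot(u\otimes1)$ with $u:=(\id\otimes\varepsilon)\R$; invertibility of $\R$ then forces $u=1$. The mirror computation applied to \eqref{cmR_R13R23} gives $(\varepsilon\otimes\id)\R=1$. Next, using these together with the antipode axiom $\mu\circ(S\otimes\id)\circ\Delta=1\cdot\varepsilon$, I apply to \eqref{cmR_R13R23} the operation sending $S$ onto the first leg and then multiplying the first two legs: the left-hand side becomes $1\otimes1$ and the right-hand side assembles into $[(S\otimes\id)\R]\cdot\R$, whence $(S\otimes\id)\R=\R^{-1}$.

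To obtain $(\id\otimes S^{-1})\R=\R^{-1}$ I pass to $\R^{-1}$, which satisfies $(\id\otimes\Delta)\R^{-1}=(\R^{-1})^{(12)}(\R^{-1})^{(13)}$ and $(\id\otimes\varepsilon)\R^{-1}=1$. Applying the operation that sends $S$ onto the third leg and multiplies the last two legs, the left-hand side collapses to $1\otimes1$ by the axiom $\mu\circ(\id\otimes S)\circ\Delta=1\cdot\varepsilon$, and the right-hand side assembles into $\R^{-1}\cdot[(\id\otimes S)\R^{-1}]$; hence $(\id\otimes S)\R^{-1}=\R$, and applying $\id\otimes S^{-1}$ gives $(\id\otimes S^{-1})\R=\R^{-1}$. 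Finally $(S\otimes S)\R=(\id\otimes S)(S\otimes\id)\R=(\id\otimes S)\R^{-1}=\R$ closes the list. These are all routine Sweedler-type manipulations; the one genuinely delicate point, which I expect to be the main obstacle, is arranging the orders of the tensor legs so that the terms reassemble as $[(\cdot\otimes\cdot)\R]\cdot\R$ or $\R\cdot[(\cdot\otimes\cdot)\R]$ and not in the reversed multiplicative order --- it is precisely this that dictates whether to start from \eqref{cmR_R13R12} or \eqref{cmR_R13R23} and onto which leg the antipode must be applied.
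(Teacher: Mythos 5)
Your proof is correct in every step. The paper actually states this proposition without any proof, presenting it as a classical fact (it is Drinfeld's original result on quasi-triangular Hopf algebras) that follows from \eqref{cmop_RcmR} and \eqref{cmR_R13R12}--\eqref{cmR_R13R23}; your argument is precisely the standard derivation that fills this gap: conjugation by $\R^{(12)}$ plus the flipped coproduct identity for the Yang--Baxter equation, the counit axiom with invertibility of $\R$ for the counit relations, the antipode axiom applied legwise for $(S\otimes\id)\R=\R^{-1}$, and the passage to $\R^{-1}$ (using that $\id\otimes\Delta$ and $\id\otimes\varepsilon$ are algebra morphisms) for the remaining two identities. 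Your attention to the multiplicative order in which the Sweedler legs reassemble, and to the fact that the second antipode identity is most cleanly obtained from $\R^{-1}$ rather than from $\R$ directly, is exactly the delicate point of this classical computation.
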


Dans la plupart des cas, les algèbres de Hopf non-cocommutatives, surtout les algèbres de Hopf quasi-triangulaires, sont apparues comme des déformations des algèbres cocommutatives -- des algèbres enveloppantes universelles. Nous ne considérerons que des déformations à un paramètre appelées des quantifications de Drinfeld. Pour ces dernières nous avons besoin d'une notion des bigèbres de Lie.

\begin{de}
Une algèbre de Lie $\mathfrak g$ (sur $\mathbb K$) est appelée bigèbre de Lie si elle est munie d'une application linéaire $\delta\colon\mathfrak g\to\mathfrak g\otimes\mathfrak g$ tel que
\begin{gather}
 \delta^{op}=-\delta, \label{cocr_asym} \\
 \delta([x,y])=[\delta(x),\Delta(y)]+[\Delta(x),\delta(y)], \label{uncocycle} \\
 Alt\circ(\delta\otimes\id)\circ\delta=0, \label{cocom_Lie}
\end{gather}
où $Alt$ est un alternateur de trois éléments :
\begin{align}
Alt(x_1\otimes x_2\otimes x_2)=\sum_{\sigma\in S_3}(-1)^\sigma (x_{\sigma(1)}\otimes x_{\sigma(2)}\otimes x_{\sigma(3)}),
\end{align}
les éléments $\Delta(x)$ et $\Delta(y)$ sont définis par la formule~\eqref{DeltaLie}.  L'application $\delta$ est appelée cocrochet de la bigèbre de Lie $\mathfrak g$.
\end{de}

La condition~\eqref{cocr_asym} et \eqref{uncocycle} signifient une antisymétrie du cocrochet et une compatibilité entre les crochets et le cocrochet, la condition~\eqref{cocom_Lie} est un analogue de l'identité de Jacobi pour le cocrochet.

La condition~\eqref{uncocycle} peut être reformulée en termes des complexes. Considérons un complexe gradué
\begin{align} \label{complexeC}
 0\to C_0\stackrel{\partial}{\to} C_1\stackrel{\partial}{\to}C_2\stackrel{\partial}{\to}C_3\stackrel{\partial}{\to}\ldots,
\end{align}
où $C_0=V=\Hom_{\mathbb K}(\{0\},V)$, $C_k=\Hom_{\mathbb K}(\mathfrak g^{\wedge k},V)$, $V$ est un $\mathfrak g$-module et $\partial\colon C_k\to C_{k+1}$ est une différentielle définie comme
\begin{multline} \label{complexeCdiff}
 (\partial f)(x_0\wedge\ldots\wedge x_k)=\sum_{i=0}^k(-1)^i x_i\cdot f(x_1\wedge\ldots\wedge\check x_i\wedge\ldots\wedge x_k)+ \\
 +\sum_{i<j}(-1)^{i+j} f([x_i,x_j]\wedge x_1\wedge\ldots\wedge\check x_i\wedge\ldots\wedge\check x_j\wedge\ldots\wedge x_k).
\end{multline}
Si $V$ est le $\mathfrak g$-module $\mathfrak g\otimes\mathfrak g$ ou une complétion de ce module, alors $\delta\in C_1$ et la condition~\eqref{uncocycle} est une condition de cocyclicité. C'est pourquoi les applications $\delta$ satisfaisant~\eqref{uncocycle}, en particulier les cocrochets, sont appelées des {\itshape un-cocycles}.

\begin{de}
 Une algèbre de Hopf $\HH$ est appelée une algèbre enveloppante universelle quantique de bigèbre $(\mathfrak g,\delta)$ ou quantification de Drinfeld de la bigèbre $(\mathfrak g,\delta)$ s'il existe une famille d'algèbres de Hopf $\HH_\hbar$ et le nombre $\hbar_0\in\mathbb C$ tels que
 \begin{enumerate}
  \item $\HH_\hbar$ sont isomorphes en sens des espaces vectoriels;
	\item $\HH_0$ et $\HH_{\hbar_0}$ sont isomorphes à $U(\mathfrak g)$ et à $\HH$ respectivement comme des algèbres de Hopf;
	\item on peut choisir des isomorphismes $\varphi_{\hbar}\colon U(\mathfrak g)\to\HH_\hbar$ tels que
\begin{align} \label{Delta_delta}
 (\varphi^{-1}_\hbar\otimes\varphi^{-1}_\hbar) \Big(\Delta_\hbar\big(\varphi_\hbar(x)\big)-\Delta^{op}_\hbar\big(\varphi_\hbar(x)\big)\Big)=\hbar\delta(x)+o(\hbar),
\end{align}
 \end{enumerate} 
$\forall x\in\mathfrak g$, où $\Delta_\hbar\colon\HH_\hbar\to\HH_\hbar\otimes\HH_\hbar$ est une comultiplication dans $\HH_\hbar$.
\end{de}

La formule~\eqref{Delta_delta} et les propriétés de la comultiplication~\eqref{diag_coass} entraînent les conditions~\eqref{cocr_asym}, \eqref{uncocycle} et \eqref{cocom_Lie}.

Revenons aux algèbres quasi-triangulaires. Si $\HH_\hbar$ est une famille d'algèbres de Hopf quasi-triangulaires quantifiant une bigèbre de Lie $(\mathfrak g,\delta)$ et $\R_\hbar$ sont leurs matrices $R$ universelles, on a
\begin{align}
 \R_\hbar=1+\hbar(\varphi_\hbar\otimes\varphi_\hbar)\rr+o(\hbar),
\end{align}
où $\rr$ est un élément de l'espace $\mathfrak g\otimes\mathfrak g$ ou sa complétion tel que
\begin{align} \label{delta_r_x}
 \delta(x)=[\Delta(x),\rr].
\end{align}
En termes de la différentielle~\eqref{complexeCdiff} la relation~\eqref{delta_r_x} est écrite comme
\begin{align}
 \delta=\partial(\rr). \label{delta_par_rr}
\end{align}

La relation~\eqref{Delta_delta} entraîne la contition~\eqref{cocr_asym}, qui entraine le fait que ${\cal C}=\rr^{(12)}+\rr^{(21)}$ est un invariant d'action de l'algèbre de Lie $\mathfrak g$, id est $[\Delta(x),{\cal C}]=0$ pour tous $x\in\mathfrak g$. En plus, l'équation~\eqref{EYBU} entraîne l'Équation de Yang-Baxter Universelle Classique
\begin{align}
 [\rr^{(12)},\rr^{(13)}]+[\rr^{(12)},\rr^{(23)}]+[\rr^{(13)},\rr^{(23)}]=0. \label{EYBUC}
\end{align}

\begin{de}
 Une bigèbre de Lie $(\mathfrak g,\delta)$ est appelée quasi-triangulaire si le un-cocyle $\delta$ est cobord : $\delta=\partial(\rr)$. Si, en plus, $\rr^{(12)}+\rr^{(21)}=0$ elle est appelé triangulaire.
\end{de}

Afin que $(\mathfrak g,\delta)$ soit une bigèbre quasi-triangulaire, il est nécessaire et suffisant qu'il existe $\rr$ tel que $\rr^{(12)}+\rr^{(21)}$ est un invariant de l'action de l'algèbre de Lie $\mathfrak g$ et les équations~\eqref{delta_par_rr} et \eqref{EYBUC} sont satisfaites. En particulier, l'équation~\eqref{EYBUC} entraîne~\eqref{cocom_Lie}. Une algèbre enveloppante universelle quantique de bigèbre quasi-triangulaire (triangulaire) est une algèbre de Hopf quasi-triangulaire (triangulaire).

\section{Groupes quantiques et relation $RLL$}
\label{sec18}

La définition exacte de groupes quantiques dépend de la source. D'habitude on définit les groupes quantiques comme toutes les algèbres de Hopf ou comme tel ou tel leur classe. Quelquefois notion de groupe quantique est utilisée pour des bigèbres ou des quasi-bigèbres, en particulier pour les quasi-algèbre de Hopf, auxquelles on réfère les groupes quantiques dynamiques~\cite{F2} (voir la section~\ref{sec110}).

La notion des groupes quantiques est apparue comme une algèbration de la notion de l'opérateur de Lax. Soit $R(u,v)$ une $R$-matrice $n^2\times n^2$ satisfaisant à~\eqref{EYB} et soit $\hat\phi^{\alpha}_{ij}(u)$ un système de fonctions, où $\alpha=1,\ldots,\hat N_{ij}$, $\hat N_{ij}\in\mathbb Z_{>0}\cup\{\infty\}$. Soit $\hat L^{\alpha}_{ij}$ un système d'opérateurs agissant dans un espace $V$. On suppose que l'opérateur de Lax suivant satisfait~\eqref{RLL}:
\begin{align} \label{LK__}
 \hat L(u)=K(u)+\sum_{ij=1}^n E_{ij}\otimes\sum_{\alpha=1}^{\hat N_{ij}}\hat L^{\alpha}_{ij}\hat\phi^{\alpha}_{ij}(u),
\end{align}
où $K(u)$ est une matrice sur $\mathbb C$. Si nous avons un autre opérateur de Lax
\begin{align}
 \tilde L(u)=\tilde K(u)+\sum_{ij=1}^n E_{ij}\otimes\sum_{\alpha=1}^{\tilde N_{ij}}\tilde L^{\alpha}_{ij}\tilde \phi^{\alpha}_{ij}(u),
\end{align}
où $\tilde L^{\alpha}_{ij}$ agissent dans $\tilde V$ et $\tilde\phi^{\alpha}_{ij}(u)$ est un autre système de fonctions, alors $\tilde L(u)\hat L(u)=\tilde L_2(u)\hat L_1(u)$ est un opérateur de Lax pour l'espace $V\otimes\tilde V$, où les indices $1$, $2$ soulignent que les opérateurs $\hat L^{\alpha}_{ij}$ présents dans l'expression $\tilde L_2(u)L_1(u)$ agissent dans $V$ et $\tilde L^{\alpha}_{ij}$ -- dans $\tilde V$. Cet opérateur de Lax est de la forme~\eqref{LK__} avec une matrice constante $\tilde K(u)K(u)$, un système de fonctions et un système d'opérateurs correspondants.

Les relations $RLL$ permettent de construire une algèbre $\hat\A$ correspondant à l'opérateur de Lax~\eqref{LK__}. Définissons l'algèbre $\hat\A$ comme une algèbre engendrée par symboles $L^{\alpha}_{ij}$ avec des relations de commutation obtenues de la relation $RLL$~\eqref{RLL} pour l'opérateur de Lax $L(u)\in\End\mathbb C^n\otimes\hat\A$ défini par
\begin{align} \label{LK_}
 L(u)=K(u)+\sum_{ij=1}^n E_{ij}\otimes\sum_{\alpha=1}^{\hat N_{ij}}L^{\alpha}_{ij}\hat\phi^{\alpha}_{ij}(u),
\end{align}
Alors, l'application $\pi\colon L^{\alpha}_{ij}\mapsto\hat L^{\alpha}_{ij}$ donne une représentation de cette algèbre $\hat\A$ dans l'espace $V$. On peut écrire la définition de $\pi$ comme $\pi(L(u))=\hat L(u)$. L'opérateur de Lax $L_2(u)L_1(u)\in\End\mathbb C^n\otimes\hat\A\otimes\hat\A$ satisfait aussi à la relation $RLL$~\eqref{RLL}, et, ainsi, il apparaît une question naturelle : peut-on définir la comultiplication comme
\begin{align}
 \Delta L(u)=L_2(u)L_1(u). \label{Delta_L}
\end{align}
La formule~\eqref{Delta_L} signifie
\begin{align}
 \Delta L_{ij}(u)=\sum_{k=1}^n L_{kj}(u)\otimes L_{ik}(u), \label{Delta_L_el}
\end{align}
où $L_{ij}(u)$ sont des éléments matriciels de la matrice $L(u)$. Cette définition a un sens si et seulement si l'opérateur de Lax~\eqref{Delta_L} est représenté à la même forme~\eqref{LK_} (avec les mêmes $K(u)$, $\hat N_{ij}$ et $\hat\phi^{\alpha}_{ij}(u)$, mais des autres $L^{\alpha}_{ij}$). Mais pourtant, en générale, l'opérateur de Lax~\eqref{Delta_L} n'est pas représenté à la forme~\eqref{LK_}. Supposons qu'il existe une matrice $B(u)$, inversible sur $\mathbb C$, telle que $[R(u,v),B(u)\otimes B(u)]=0$ et $K(u)^2=B(u)K(u)$. Alors, on peut remplacer $L(u)$ à $B(u)^{-1}L(u)$. Donc, on obtient un opérateur de Lax de la forme~\eqref{LK_}, où la matrice $K(u)$ est un idempotent. Si $K(u)$ est inversible, on peut choisir $B(u)=K(u)$. En suite nous supposerons toujours que $K(u)^2=K(u)$. En particulier, pour $K(u)$ inversible cela signifie $K(u)=1$. Alors, l'opérateur de Lax~\eqref{Delta_L} est représenté à la forme
\begin{align} \label{LLK_}
 L_2(u)L_1(u)=K(u)+\sum_{ij=1}^n E_{ij}\otimes\sum_{\alpha=1}^{\check N_{ij}}\check L^{\alpha}_{ij}\check\phi^{\alpha}_{ij}(u),
\end{align}
où $\check L^{\alpha}_{ij}\in\hat\A\otimes\hat\A$ et $\check\phi^{\alpha}_{ij}(u)$ est un autre système de fonctions. Si pour $k,l,\beta$ certains $\check L^{\beta}_{kl}\ne0$ et la fonction $\check\phi^{\beta}_{kl}(u)$ est linéairement indépendante des fonctions $\hat\phi^{\alpha}_{ij}(u)$, alors, \eqref{Delta_L} nous donne une inégalité $\Delta(0)\ne0$, qui est contraire à la linéarité de la comultiplication. Pour que la formule~\eqref{Delta_L} définisse une comultiplication il faut construire une algèbre plus universelle que l'algèbre $\hat\A$.

On dit que le système des fonctions $\hat\phi^{\alpha}_{ij}(u)$ est complet si un opérateur de Lax~\eqref{LK_} construit par ce système est tel que le produit $L_2(u)L_1(u)$ est représenté à la même forme, id est $\check N_{ij}=\hat N_{ij}$ et $\check\phi^{\alpha}_{ij}(u)=\hat\phi^{\alpha}_{ij}(u)$. Dans le cas $K(u)=1$ la condition de complétude du système $\hat\phi^{\alpha}_{ij}(u)$ peut être écrite
\begin{align}
 \hat\phi^{\alpha}_{ik}(u)\hat\phi^{\beta}_{kj}(u)=\sum_{\gamma=1}^{\hat N_{ij}}c^{\alpha\beta}_{\gamma}(i,j;k)\hat\phi^{\gamma}_{ij}(u),
\end{align}
où $c^{\alpha\beta}_{\gamma}(i,j;k)\in\mathbb C$ sont des coefficients quelconques.

Soit un opérateur de Lax donné de la forme~\eqref{LK__}, où $K(u)^2=K(u)$. Considérons l'algèbre des fonctions $\mathfrak O$ engendrée par les fonctions $\hat\phi^{\alpha}_{ij}(u)$ et les éléments matriciels $K_{ij}(u)$. Élargissons le système de fonctions $\hat\phi^{\alpha}_{ij}(u)$ avec des fonctions de $\mathfrak O$ jusqu'à un système complet $\phi^{\alpha}_{ij}(u)$, $i,j=1,\ldots,n$, $\alpha=1,\ldots,N_{ij}$. Puis, on construit une algèbre $\A$ engendrée par $L^{\alpha}_{ij}$, où $i,j=1,\ldots,n$, $\alpha=1,\ldots,N_{ij}$, avec les relations de commutation~\eqref{RLL}, où
\begin{align} \label{LK}
 L(u)=K(u)+\sum_{ij=1}^n E_{ij}\otimes\sum_{\alpha=1}^{N_{ij}}L^{\alpha}_{ij}\phi^{\alpha}_{ij}(u).
\end{align}
C'est une algèbre plus universelle que $\hat\A$ et l'opérateur de Lax~\eqref{LK__} définit aussi une représentation de $\A$ par $L(u)\mapsto\hat L(u)$. Cette algèbre est munie de la comultiplication $\Delta\colon\A\to\A\otimes\A$ définie par~\eqref{Delta_L}. Si $K(u)=1$ elle est munie encore de la counité
\begin{align}
 \varepsilon\big(L(u)\big)=1, \label{vareps_L}
\end{align}
et, par conséquent, $\A$ est une bigèbre. Si, en plus, l'opérateur de Lax $L(u)$ est anti-inversible, id est il existe une matrice $L(u)^{ai}$ avec des éléments matriciels $L_{ij}(u)^{ai}\in\A$ tel que
\begin{align}
 L_{kj}(u)^{ai}L_{ik}(u)=L_{kj}(u)L_{ik}(u)^{ai}=\delta_{ij},
\end{align}
alors $\A$ est une algèbre de Hopf avec l'antipode
\begin{align}
 S\big(L(u)\big)=L(u)^{ai}. \label{antip_L}
\end{align}

D'habitude, $N_{ij}=\infty$ pour système de fonctions complet et, par conséquent, la somme~\eqref{LK} doit se comprendre comme une limite et $\A$ doit être complétée par une topologie correspondante. Ainsi, l'opérateur de Lax~\eqref{LK} est un élément de quelque complétion de $\End\mathbb C^n\otimes{\mathfrak O}\otimes\A$. Quelquefois il est raisonnable considérer l'opérateur de Lax comme un élément d'une complétion de $\End(\mathbb C^n\otimes{\mathfrak O})\otimes\A$.

Nous avons obtenu une algèbre de Hopf $\A$ engendrée par $L^\alpha_{ij}$. Les produits ordonnés des générateurs $L^\alpha_{ij}$ donnent une base de PBW, si une $R$-matrice $R(u,v)$ satisfait l'équation de Yang-Baxter~\eqref{EYB}. L'algèbre $\A$ n'est pas quasi-triangulaire, mais elle peut être complétée jusqu'à une algèbre quasi-triangulaire en utilisant la construction de double de Drinfeld.


\section{Double de Drinfeld et représentation d'évaluation}
\label{sec19}

Soit $\A$ une bigèbre sur $\mathbb K$ avec une multiplication $\mu\colon\A\otimes\A\to\A$, une comultiplication $\Delta\colon\A\to\A\otimes\A$ et une counité $\varepsilon\colon\A\to\mathbb K$. On introduit une application $\eta\colon\mathbb K\to\A$ par la règle $\eta(\alpha)=\alpha\cdot1$, où $1$ est l'unité de $\A$. Par abus de langage, cette application $\eta$ est aussi appelée unité de $\A$. Considérons un espace dual $\A^*$ et munissons-le de la multiplication $\Delta^*\colon\A^*\otimes\A^*\to\A^*$, la comultiplication $\mu^*\colon\A^*\to\A^*\otimes\A^*$, l'unité $\varepsilon^*\colon\mathbb K\to\A^*$ et la counité $\mu^*\colon\A^*\to\mathbb K$, où
\begin{align}
 \la\Delta^*(x\otimes y),a\ra_{\A^*\times\A}&=\la x\otimes y,\Delta(a)\ra_{\A^*\otimes\A^*\times\A\otimes\A}, \\
 \la\mu^*(x),a\otimes b\ra_{\A^*\otimes\A^*\times\A\otimes\A}&=\la x,\mu(a\otimes b)\ra_{\A^*\times\A}, \\
 \la\varepsilon^*(\alpha),a\ra_{\A^*\times\A}&=\alpha\cdot\varepsilon(a), \\
 \eta^*(x)&=\la x,\eta(1)\ra_{\A^*\times\A},
\end{align}
$\forall x,y\in\A^*$, $\forall a,b\in\A$, $\alpha\in\mathbb K$. La bigèbre obtenue est appelée une bigèbre duale à la bigèbre $\A$ et elle est désignée par $\A^*$. Si $\A$ est une algèbre de Hopf avec un antipode $S\colon\A\to\A$, alors $\A^*$ possède l'antipode $S^*\colon\A^*\to\A^*$, où
\begin{align}
 \la S^*(x),a\ra=\la x,S(a)\ra,
\end{align}
$\forall x\in\A^*$, $\forall a\in\A$.

Soit $\A^{cop}$ une bigèbre $\A$ avec la comultiplication $\Delta$ remplacée par $\Delta^{op}$. Si $\A$ est une algèbre de Hopf, alors $\A^{cop}$ est aussi une algèbre de Hopf avec antipode $S^{-1}$.

Considérons l'espace vectoriel $\HH=\A^*\otimes\A^{cop}$, où $\A$ est une algèbre de Hopf. On décrit une structure d'algèbre de Hopf sur $\HH$ :
\begin{align}
 \mu_{\HH}\big((x\otimes a)\otimes(y\otimes b)\big)&\equiv(x\otimes a)\cdot(y\otimes b)= \notag \\
  =x\cdot\big\la(\mu^*\otimes\id)&\circ\mu^*(y)\stackrel{\otimes}{,}(S^{-1}\otimes\id\otimes\id)\circ(\Delta^{op}\otimes\id)\circ\Delta^{op}(a)\big\ra_{1,3}\cdot b, \label{DdDmu} \\
 \Delta_{\HH}(x\otimes a)&=\mu^*(x)\otimes\Delta^{op}(a), \\
 \eta_{\HH}(\alpha)&=\varepsilon^*(\alpha)\otimes\eta(1), \\
 \varepsilon_{\HH}(x\otimes a)&=\eta^*(x)\varepsilon(a), \label{DdDve}
\end{align}
où $\la x\otimes y\otimes z\stackrel{\otimes}{,}a\otimes b\otimes c\ra_{1,3}\hm=\la x,a\ra_{\A^*\times\A}\cdot\la z,c\ra_{\A^*\times\A}\cdot yb$, $\forall x,y,z\in\A^*$, $\forall a,b,c\in\A$. La bigèbre $\HH$ avec la structure~\eqref{DdDmu} -- \eqref{DdDve} construite par la structure de $\A$ est une algèbre de Hopf. Elle est appelée {\itshape double de Drinfeld} d'algèbre de Hopf $\A^{cop}$ et désignée par ${\cal D}(\A^{cop})$. Les algèbres $\A^*$ et $\A^{cop}$ sont plongées dans ${\cal D}(\A^{cop})$ par $x\mapsto x\otimes1$ et $a\mapsto1\otimes a$ respectivement. En particulier, $xa=x\otimes a$. Ainsi, on peut considérer $\A^*$ et $\A^{cop}$ comme des sous-algèbres de Hopf de Double ${\cal D}(\A^{cop})$.

\begin{prop}
 Le double de Drinfeld ${\cal D}(\A^{cop})$ est une algèbre de Hopf quasi-triangu\-laire avec la matrice $R$ universelle
\begin{align}
 \R=\sum_k e^k\otimes e_k, \label{Ru_ee}
\end{align}
où $\{e_k\}$ est une base de $\A$ et $\{e^k\}$ est une base duale de $\A^*$. En générale, la somme dans la formule~\eqref{Ru_ee} est comprise comme une série formelle.
\end{prop}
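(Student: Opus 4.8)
\emph{Plan.} The element $\R=\sum_k e^k\otimes e_k$ of \eqref{Ru_ee} is the canonical copairing of the dual bases $\{e_k\}\subset\A$ and $\{e^k\}\subset\A^*$; it lies in $\A^*\otimes\A^{cop}\subset\HH\otimes\HH$ via the embeddings $x\mapsto x\otimes1$ and $a\mapsto1\otimes a$. (When $\dim\A=\infty$ one reads $\R$ off in a suitable completion, where $\la\cdot,\cdot\ra_{\A^*\times\A}$ stays non-degenerate and $\{e^k\}$ is a topological dual basis, so the sums below converge as formal series.) I would establish the three axioms \eqref{cmop_RcmR}--\eqref{cmR_R13R23} in the order: first the two coproduct identities \eqref{cmR_R13R12}, \eqref{cmR_R13R23}; then invertibility of $\R$; and finally the quasi-cocommutativity \eqref{cmop_RcmR}, which is the substantial step.

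The identities \eqref{cmR_R13R12} and \eqref{cmR_R13R23} are pure dual-basis bookkeeping once one records how $\Delta_\HH$ acts on each leg of $\R$. By \eqref{DdDmu}--\eqref{DdDve}, on the $\A^*$-factor $\Delta_\HH$ restricts to $\mu^*$ (the comultiplication of $\A^*$, dual to the product of $\A$), and on the $\A^{cop}$-factor it restricts to $\Delta^{op}$. Expressing both products through their structure constants and using that these are mutually dual (the comultiplication of $\A$ is dual to the product of $\A^*$, and the product of $\A$ is dual to the comultiplication $\mu^*$ of $\A^*$), one gets $(\Delta_\HH\otimes\id)\R=\sum_{ij}e^i\otimes e^j\otimes e_ie_j=\R^{(13)}\R^{(23)}$ and, symmetrically, $(\id\otimes\Delta_\HH)\R=\sum_{ij}e^ie^j\otimes e_j\otimes e_i=\R^{(13)}\R^{(12)}$. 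Neither the antipode nor the cross-multiplication is needed here.

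For invertibility I would exhibit $\R^{-1}=\sum_l S^*(e^l)\otimes e_l$. Since $\A^*$ and $\A^{cop}$ are subalgebras of $\HH$, multiplying in $\HH\otimes\HH$ multiplies the first legs in $\A^*$ and the second in $\A^{cop}$, with no cross term: $\R\cdot\big(\sum_l S^*(e^l)\otimes e_l\big)=\sum_m\big(\sum_{kl}\la e^m,e_ke_l\ra\,e^k\,S^*(e^l)\big)\otimes e_m$. The inner sum equals $\sum(e^m)_{(1)}\,S^*\big((e^m)_{(2)}\big)$ for the comultiplication $\mu^*$, which the antipode axiom of $\A^*$ collapses to $\eta^*(e^m)\,\varepsilon=\la e^m,\eta(1)\ra\,\varepsilon$; summing gives $\varepsilon\otimes\sum_m\la e^m,\eta(1)\ra e_m=\varepsilon\otimes1=1_{\HH\otimes\HH}$. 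The opposite product is the mirror computation, so $\R$ is invertible.

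The heart is \eqref{cmop_RcmR}, $\Delta^{op}_\HH(h)\R=\R\Delta_\HH(h)$ for all $h$. As $\Delta_\HH$ and $\Delta^{op}_\HH$ are algebra homomorphisms, the set of $h$ satisfying this is a subalgebra of $\HH$, so it suffices to check $h=x\in\A^*$ and $h=a\in\A^{cop}$ separately. This is where the full multiplication rule \eqref{DdDmu} of the double enters: already for $h=x$, the product $\R\,\Delta_\HH(x)=\R\,\mu^*(x)$ pairs the second leg $e_k\in\A^{cop}$ of $\R$ against the second leg $x_{(2)}\in\A^*$ of $\mu^*(x)$, so one must expand the cross-product $e_k\cdot x_{(2)}$ through \eqref{DdDmu}, which brings in $S^{-1}$ together with the iterated coproducts $(\Delta^{op}\otimes\id)\circ\Delta^{op}$ and $(\mu^*\otimes\id)\circ\mu^*$. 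The plan is to substitute \eqref{DdDmu}, resum over the dual bases using non-degeneracy of $\la\cdot,\cdot\ra$ and the counit and antipode axioms of $\A$ and $\A^*$, and verify that both sides collapse to the same element; the case $h=a\in\A^{cop}$ is the symmetric computation with product and coproduct interchanged. I expect this resummation---keeping the $S^{-1}$'s and the three-fold coproducts straight while contracting the canonical element---to be the main obstacle; conceptually it only reflects that the multiplication \eqref{DdDmu} of the double was defined precisely so that conjugation by $\R$ implements the flip $\Delta_\HH\mapsto\Delta^{op}_\HH$. With \eqref{cmop_RcmR}--\eqref{cmR_R13R23} in hand, ${\cal D}(\A^{cop})$ is quasi-triangular by definition.
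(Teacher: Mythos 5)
The paper states this proposition without proof (it is Drinfeld's classical theorem), so your attempt must stand on its own. Its first two parts do: the dual-basis computations giving $(\Delta_{\HH}\otimes\id)\R=\R^{(13)}\R^{(23)}$ and $(\id\otimes\Delta_{\HH})\R=\R^{(13)}\R^{(12)}$ are exactly right (they use only that $\Delta_{\HH}$ restricts to $\mu^*$ on $\A^*$ and to $\Delta^{op}$ on $\A^{cop}$, plus the duality between product and coproduct), and your candidate inverse $\R^{-1}=\sum_l S^*(e^l)\otimes e_l$, contracted via the antipode axiom of $\A^*$, is also correct — it agrees with the formula $\R^{-1}=\sum_k S^*(e^k)\otimes e_k=\sum_k e^k\otimes S(e_k)$ that the paper itself invokes just after \eqref{RpRm_def}.

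The genuine gap is the third part: the axiom \eqref{cmop_RcmR}, $\Delta^{op}_{\HH}(h)\,\R=\R\,\Delta_{\HH}(h)$, is never verified. You correctly reduce it to generators $h=x\in\A^*$ and $h=a\in\A^{cop}$ (the set of solutions is a subalgebra and $\HH=\A^*\cdot\A^{cop}$ since $xa=x\otimes a$), and you correctly locate where the work lies — the cross products such as $e_k\cdot x_{(2)}$ must be straightened through \eqref{DdDmu} — but then you only announce the plan ("substitute, resum, verify both sides collapse") and concede that the resummation is "the main obstacle". That obstacle \emph{is} the theorem: on generators, \eqref{cmop_RcmR} is essentially equivalent to the cross-multiplication relations encoded in \eqref{DdDmu}, whereas the two coproduct identities and invertibility, which you did check, are formal consequences of the pairing and would hold for any bialgebra structure of this shape. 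Until one actually expands $\R\,\mu^*(x)=\sum_k e^k x_{(1)}\otimes e_k x_{(2)}$ with $\mu^*(x)=\sum x_{(1)}\otimes x_{(2)}$, rewrites each $e_k x_{(2)}$ in normal order via \eqref{DdDmu}, contracts the dual bases against the pairing (using the counit and antipode axioms), and matches the result with $\Delta^{op}_{\HH}(x)\R=\sum_k x_{(2)}e^k\otimes x_{(1)}e_k$ (where $x_{(1)}e_k=x_{(1)}\otimes e_k$ is already in normal order), together with the mirror computation for $a\in\A^{cop}$, quasi-triangularity has not been established, since \eqref{cmop_RcmR} is its defining and substantive condition.
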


Soit ${\mathfrak K}_u$ une algèbre associative commutative réalisée comme une algèbre de fonctions de la variable $u$ ou sa complétion. Supposons que l'algèbre ${\mathfrak K}_u$ possède un produit scalaire non-dégénéré invariant $\la\cdot,\cdot\ra_u$. Soit $\A$ une algèbre de Hopf sur $\mathbb C$ engendrée par des éléments $x_i$ tels que les produits ordonnés de ces éléments donnent une base de PBW. Un homomorphisme $\Pi^+_u\colon\A\to\End(\mathbb C^n)\otimes{\mathfrak K}_u\subset\End(\mathbb C^n\otimes{\mathfrak K}_u)$ est appelé {\itshape représentation d'évaluation} si $\Pi^+_u(x_i)$ sont linéairement indépendants et $\Pi^+_u(x_ix_j)=0$. On fixe une base de PBW dans $\A$ et on désigne par $x^i\in\A^*$ des éléments duaux de $x_i$. On définit un homomorphisme $\Pi^-_u\colon\A^*\to\End(\mathbb C^n)\otimes{\mathfrak K}_u$ par la formule
\begin{align}
 \la\Pi^-_u(x^i),\Pi^+_u(x_j)\ra_{\End(\mathbb C^n)\otimes{\mathfrak K}_u}=\delta^i_j,
\end{align}
où $\la A\otimes s(u),B\otimes t(u)\ra_{\End(\mathbb C^n)\otimes{\mathfrak K}_u}=\tr(AB)\la s(u),t(u)\ra_u$. Nous supposerons que $\Pi^-_u$ est une représentation d'évaluation de l'algèbre $\A^*$. En particulier, $\Pi^-_u(x^i)$ sont linéairement indépendants et $\Pi^-_u(x^ix^j)=0$. 

Considérons le double de Drinfeld $\HH={\cal D}(\A^{cop})$ avec la matrice $R$ universelle~\eqref{Ru_ee} et introduisons les notations
\begin{align}
 L_+(u)&=(\Pi^-_u\otimes\id)\R, & L_-(u)&=(\Pi^+_u\otimes\id)(\R^{-1})^{(21)},\label{LpmXx} \\
 R_+(u,v)&=(\Pi^-_u\otimes\Pi^+_v)\R, & R_-(u,v)&=(\Pi^+_u\otimes\Pi^-_v)(\R^{-1})^{(21)}. \label{RpRm_def}
\end{align}
En désignant $X_i(u)=\Pi^+_u(x_i)$, $X^i(u)=\Pi^-_u(x^i)$ et en prenant en considération $\R^{-1}\hm=\sum\limits_k S^*(e^k)\otimes e_k\hm=\sum\limits_k e^k\otimes S(e_k)$, où $S$ est l'antipode de $\A$, nous obtenons
\begin{align}
  L_+(u)&=\sum_i X^i(u)\otimes x_i, & L_-(u)&=\sum_{i,j}S^i_j X_i(u)\otimes x^j, \label{LXx} \\
  R_+(u,v)&=\sum_i X^i(u)\otimes X_i(v), & R_-(u,v)&=\sum_{i,j}S^i_j X_i(u)\otimes X^j(v), \label{RXX} 
\end{align}
où les coefficients $S^i_j$ de l'antipode de $\A$ :
\begin{align}
 S^*(x^i)&=\sum\limits_j S^i_j x^j, & S(x_i)&=\sum\limits_j S^j_i x_j.
\end{align}
Les sommes par $i$ et $j$ peuvent être comprises comme des séries convergeant sous quelque topologie.

L'Équation de Yang-Baxter universelle~\eqref{EYBU} entraîne l'Équation de Yang-Baxter pour les matrices~\eqref{RpRm_def} et les relations $RLL$ suivantes
\begin{align}
 R_+(u,v)L^{(1)}_+(u)L^{(2)}_+(v)=L^{(2)}_+(v)L^{(1)}_+(u)R_+(u,v), \label{RpLpLp} \\
 R_+(u,v)L^{(1)}_-(u)L^{(2)}_-(v)=L^{(2)}_-(v)L^{(1)}_-(u)R_+(u,v), \label{RpLmLm} \\
 R_+(u,v)L^{(1)}_+(u)L^{(2)}_-(v)=L^{(2)}_-(v)L^{(1)}_+(u)R_+(u,v). \label{RpLpLm} 
\end{align}
En agissant par $\Pi_u\otimes\id\otimes\id$ et par $\id\otimes\id\otimes\Pi_u$ à \eqref{cmR_R13R12} et \eqref{cmR_R13R23} on obtient
\begin{align}
 \Delta_{\HH} L_+(u)&=L^{(02)}_+(u)L^{(01)}_+(u), &
 \Delta_{\HH} L_-(u)&=L^{(02)}_-(u)L^{(01)}_-(u), \label{DeltaHLpm}
\end{align}
où 
\begin{align}
 L^{(01)}_+(u)&=\sum_i X^i(u)\otimes x_i\otimes1, & 
 L^{(02)}_+(u)&=\sum_i X^i(u)\otimes1\otimes x_i, \\
 L^{(01)}_-(u)&=\sum_{i,j}S^i_j X_i(u)\otimes x^j\otimes1, &
 L^{(02)}_-(u)&=\sum_{i,j}S^i_j X_i(u)\otimes1\otimes x^j.
\end{align}
Les formules~\eqref{LpmXx} signifient que les matrices $L_\pm(u)$ suffisent pour décrire l'algèbre de Hopf $\HH$. En effet, $\HH$ peut être définie comme l'algèbre engendrée par $L^{\alpha}_{\pm;ij}$ avec les relations de commutation~\eqref{RpLpLp}--\eqref{RpLpLm}, la comultiplication, la counité et l'antipode définies sur les générateurs par les formules~\eqref{DeltaHLpm} et
\begin{align}
 \varepsilon_{\HH}\big(L_\pm(u)\big)&=1, & S_{\HH}\big(L_\pm(u)\big)&=L_\pm(u)^{ai},
\end{align}
où
\begin{align}
 L_\pm(u)=1+\sum_{i,j=1}^n E_{ij}\otimes\sum_{\alpha=1}^{N_{\pm;ij}}L^{\alpha}_{\pm;ij}\phi^{\alpha}_{\pm;ij}(u). \label{Lpm_dec}
\end{align}
La matrice $L_+(u)$ et $L_-(u)$ sont appelées des opérateurs de Lax positif et négatif. Elles décrivent les sous-algèbres de Hopf $\A^{cop}$ et $\A^*$ respectivement. 

\begin{rem}
 Si les représentations d'évaluations ne satisfaisaient pas aux conditions $\Pi^+_u(x_ix_j)=0$ et $\Pi^-_u(x^ix^j)=0$, les opérateurs $L_+(u)$ et $L_-(u)$ contiendraient les produits $x_ix_j$ et $x^ix^j$. Dans ce cas ne peut pas définie comme une algèbre engendrée par $L^{\alpha}_{\pm;ij}$, parce que $L^{\alpha}_{\pm;ij}$ ne sont pas algébriquement indépendant.
\end{rem}

\begin{rem}
Les relations~\eqref{RpLpLp}, \eqref{RpLmLm}, \eqref{RpLpLm} sont équivalentes aux relations
\begin{align}
 R_-(u,v)L^{(1)}_+(u)L^{(2)}_+(v)=L^{(2)}_+(v)L^{(1)}_+(u)R_-(u,v), \label{RmLpLp} \\
 R_-(u,v)L^{(1)}_-(u)L^{(2)}_-(v)=L^{(2)}_-(v)L^{(1)}_-(u)R_-(u,v), \label{RmLmLm} \\
 R_-(u,v)L^{(1)}_-(u)L^{(2)}_+(v)=L^{(2)}_+(v)L^{(1)}_-(u)R_-(u,v). \label{RmLpLm} 
\end{align}
\end{rem}

Soit $\A^{cop}$ une algèbre de Hopf considérée à la section~\ref{sec18} (et désignée par $\A$), id est une algèbre de Hopf générée par $L^{\alpha}_{ij}$ avec les relations de commutation~\eqref{RLL} et la structure d'algèbre de Hopf définie par~\eqref{Delta_L_el}, \eqref{vareps_L} et \eqref{antip_L}. Le double de Drinfeld de cette algèbre est une algèbre de Hopf $\HH={\cal D}(\A^{cop})$ qui peut être décrite par les opérateurs de Lax~\eqref{Lpm_dec}, où $L^{\alpha}_{+;ij}=L^{\alpha}_{ij}$, $N_{+;ij}=N_{ij}$, $\phi^{\alpha}_{+;ij}(u)=\phi^{\alpha}_{ij}(u)$, $L_+(u)=L(u)$. On peut choisir la $R$-matrice $R(u,v)$ pour l'algèbre $\A^{cop}$ telle que $R(u,v)=\rho_+(u,v)R_+(u,v)$ et/ou $R(u,v)=\rho_-(u,v)R_-(u,v)$, où $\rho_\pm(u,v)$ sont des fonctions complexes. Les matrices $R^+(u,v)$ et $R^-(u,v)$ sont des matrices $R$ avec des normalisations naturelles.

Ainsi, chaque algèbre de Hopf étant décrite par les relations $RLL$ quadratiques peut être plongée dans une algèbre de Hopf quasi-triangulaire -- le double de Drinfeld -- qui est décrite par une paire d'opérateurs de Lax. En prenant la matrice $R$ rationnelle, par exemple, nous obtenons l'Yangien~\label{Yangien} $\A=Y(\gln)$ connue de travails pionniers. Son double de Drinfeld ${\cal D}Y(\gln)$ est introduit et examiné dans les travails de Khoroshkin et Tolstoy~\cite{Kh}. La matrice~\eqref{R_trig} correspond à $U_q({\mathfrak b}_+)$ -- l'algèbre enveloppante universelle quantique de l'algèbre de Borel de l'algèbre Affine $\widehat\slt$. Son double de Drinfeld est aussi une algèbre connue $U_q(\widehat\slt)$.

\section{Relations $RLL$ dynamiques}
\label{sec110}

Comme nous l'avons mentionné, les modèles des faces sont liés à l'Équation de Yang-Baxter Dynamique et ses solutions -- des matrices $R$ dynamiques. Dans~\cite{Bax2} Baxter a décrit un modèle des faces associé avec le modèle 8-vertex. Nous l'appellerons modèle SOS (Solid-On-Solid) ou modèle SOS elliptique. C'est une autre généralisation du modèle 6-vertex. La limite trigonométrique du ce modèle nous donne un modèle des faces avec des poids de Boltzmann trigonométriques. Ce dernier est appelé un modèle SOS trigonométrique et sa limite par rapport au paramètre dynamique équivalent au modèle 6-vertex. La matrice des poids de Boltzmann pour ce modèle a été généralisée par Felder au cas $\gln$~\cite{F2} et cette matrice est appelée matrice $R$ de Felder.

L'Équation de Yang-Baxter Dynamique est apparue en même temps dans plusieurs domaines. Pour la première fois elle a été découverte par J.\,L.\,Gervais et A.\,Neveu en étudiant la théorie de Liouville~\cite{GN}. Après  G.\,Felder a écrit cette équation développant son approche à la quantification de l'équation de Knizhnik-Zamolodchikov-Bernard~\cite{F1}. Finalement, le rôle de cette équation pour la quantification des modèles de Calogero-Moser~\cite{ABB} a été décelé. L'adjectif "dynamique" a été proposé après l'article~\cite{BBB}, où le sens algébrique de l'Équation de Yang-Baxter Dynamique a été expliqué.

Soit ${\mathfrak h}$ une sous-algèbre de Cartan d'une algèbre de Lie semi-simple et soit $\{h_k\}$ sa base, où $k=1,\ldots,r$ et $r=\dim{\mathfrak h}$. Soient $H_k\in\End\mathbb C^n$ des opérateurs représentant des éléments de Cartan $h_k$ dans $\End\mathbb C^n$. 
Nous utiliserons la notation
\begin{align}
 F(\lambda_k+P_k)=\sum_{i_1,\ldots,i_r=0}^{\infty}\frac{1}{i_1!\cdots i_r!} \frac{\partial^{i_1+\ldots+i_r}F(\lambda_1,\ldots,\lambda_r)}{\partial\lambda_1^{i_1}\cdots\partial\lambda_r^{i_r}}P_1^{i_1}\cdots P_r^{i_r}
\end{align}
pour une fonction $F\colon\mathbb C^n\to{\mathfrak A}$ et $P_k\in{\mathfrak A}$, où $k=1,\ldots,r$ et $\mathfrak A$ est une algèbre associative unitaire. La matrice $R(u,v;\lambda_k)\in\End\mathbb C^n\otimes\End\mathbb C^n$ dépendant de paramètres spectraux $u$ et $v$ et de paramètres dynamiques $\lambda_k$ est appelée matrice $R$ dynamique si elle satisfait Équation de Yang-Baxter Dynamique
\begin{multline}
 R^{(12)}(u_1,u_2;\lambda_k)R^{(13)}(u_1,u_3;\lambda_k+\hbar H^{(2)}_k)R^{(23)}(u_2,u_3;\lambda_k)= \\
 =R^{(23)}(u_2,u_3;\lambda_k+\hbar H^{(1)}_k)R^{(13)}(u_1,u_3;\lambda_k)R^{(12)}(u_1,u_2;\lambda_k+\hbar H^{(3)}_k). \label{DYBE}
\end{multline}
Soit $L(u;\lambda_k)$ une matrice sur une algèbre non-commutative dépendant d'un paramètre spectral $u$ et des paramètres dynamiques $\lambda_k$. La matrice $L(u;\lambda_k)$ est appelée un opérateur de Lax dynamique si elle satisfait
\begin{align}
 R^{(12)}(u-v;\lambda_k)L^{(1)}(u;\lambda_k+\hbar H^{(2)}_k)L^{(2)}(v;\lambda_k)=& \notag \\
 =L^{(2)}(v;\lambda_k+\hbar H^{(1)}_k)&L^{(1)}(u;\lambda_k)R^{(12)}(u-v;\lambda_k+\hbar h_k). \label{DRLL} 
\end{align}
La relation~\eqref{DRLL} est appelée une relation $RLL$ dynamique.

\begin{prop} \label{lem_LL}
 Si $L_1(u;\lambda_k)\in\End(\mathbb C^n)\otimes\mathfrak R_1$ et $L_2(u;\lambda_k)\in\End(\mathbb C^n)\otimes\mathfrak R_2$ sont deux opérateurs de Lax dynamiques alors leur produit matriciel
\begin{align} \label{cmdL}
 L_2(u;\lambda_k)L_1(u;\lambda_k+\hbar h_{k;2})\in\End(\mathbb C^n)\otimes\mathfrak R_1\otimes\mathfrak R_2
\end{align}
est aussi un opérateur de Lax dynamique satisfaisant~\eqref{DRLL} avec la même matrice $R$ dynamique, un élément de Cartan $h_k=h_{k;1}+h_{k;2}$ et les mêmes opérateurs $H_k$, où $h_{k;i}\in\mathfrak R_i$ sont les éléments de Cartan correspondant aux opérateurs de Lax $L_i(u;\lambda_k)$. Ainsi, si $L_1(u;\lambda_k)$, \ldots, $L_m(u;\lambda_k)$ sont des opérateurs de Lax dynamiques avec les éléments de Cartan $h_{k;1}$, \ldots, $h_{k;2}$, alors la matrice
\begin{align}
 \mathop{\overleftarrow\prod}\limits_{m\ge j\ge1}L_j\big(u;\lambda_k+\hbar\sum_{l=j+1}^{m} h_{k;l}\big)
 \end{align}
est un opérateur de Lax avec les éléments de Cartan $h_k=\sum\limits_{i=1}^m h_{k;i}$.
\end{prop}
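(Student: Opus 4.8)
The plan is to verify the dynamical relation \eqref{DRLL} for the product \eqref{cmdL} by a direct computation that nests the two given relations, the telescoping of the dynamical argument of the $R$-matrix being the mechanism that makes everything close. Writing $L(u;\lambda_k)=L_2(u;\lambda_k)L_1(u;\lambda_k+\hbar h_{k;2})$ and suppressing the index $k$ (writing $h_i$ for $h_{k;i}$ and $H$ for $H_k$), I would first expand each of $L^{(1)}(u;\lambda+\hbar H^{(2)})$ and $L^{(2)}(v;\lambda)$ into its product of an $L_2$-factor and an $L_1$-factor, so that the left-hand side of \eqref{DRLL} becomes
\begin{equation*}
 R^{(12)}(\lambda)\,L_2^{(1)}(u;\lambda+\hbar H^{(2)})\,L_1^{(1)}(u;\lambda+\hbar H^{(2)}+\hbar h_2)\,L_2^{(2)}(v;\lambda)\,L_1^{(2)}(v;\lambda+\hbar h_2).
\end{equation*}

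Next I would record the two instances of \eqref{DRLL} I intend to use: the relation for $L_2$ at dynamical argument $\lambda$,
\begin{equation*}
 R^{(12)}(\lambda)L_2^{(1)}(u;\lambda+\hbar H^{(2)})L_2^{(2)}(v;\lambda)=L_2^{(2)}(v;\lambda+\hbar H^{(1)})L_2^{(1)}(u;\lambda)R^{(12)}(\lambda+\hbar h_2),
\end{equation*}
and the relation for $L_1$ with $\lambda$ replaced by $\lambda+\hbar h_2$ (a legitimate substitution, since $h_2\in\mathfrak{R}_2$ commutes with every matrix and with all of $\mathfrak{R}_1$ occurring in the $L_1$-relation),
\begin{multline*}
 R^{(12)}(\lambda+\hbar h_2)L_1^{(1)}(u;\lambda+\hbar h_2+\hbar H^{(2)})L_1^{(2)}(v;\lambda+\hbar h_2)= \\
 =L_1^{(2)}(v;\lambda+\hbar h_2+\hbar H^{(1)})L_1^{(1)}(u;\lambda+\hbar h_2)R^{(12)}(\lambda+\hbar h_2+\hbar h_1).
\end{multline*}
The essential point is that the $R$-matrix produced on the right of the first identity, $R^{(12)}(\lambda+\hbar h_2)$, is exactly the one consumed on the left of the second; this is precisely why the product \eqref{cmdL} is defined with the shift $\lambda+\hbar h_2$ in the argument of $L_1$, and it is what turns the two single shifts into the total shift $\hbar(h_1+h_2)=\hbar h$ in the final $R$-matrix. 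The computation then proceeds by (i) commuting the two middle factors $L_1^{(1)}(u;\lambda+\hbar H^{(2)}+\hbar h_2)$ and $L_2^{(2)}(v;\lambda)$ past each other to make the $L_2$-pair adjacent, (ii) applying the $L_2$-relation, (iii) applying the shifted $L_1$-relation to the emerging $R^{(12)}(\lambda+\hbar h_2)$ together with the $L_1$-pair, and (iv) commuting one $L_2$-factor past one $L_1$-factor again to reassemble the right-hand side of \eqref{DRLL} in the form $L^{(2)}(v;\lambda+\hbar H^{(1)})L^{(1)}(u;\lambda)R^{(12)}(\lambda+\hbar h)$.

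The main obstacle is the two commutation steps (i) and (iv), since there one must move a Lax factor living in one auxiliary space past a factor of the other space whose dynamical argument carries the Cartan operator of the first space. These are not the trivial "different tensor slots" commutations: in step (i) the coefficient $H^{(2)}$ inside the argument of $L_1^{(1)}$ fails to commute with the matrix part $E_{kl}^{(2)}$ of $L_2^{(2)}$, and the coefficient $h_2\in\mathfrak{R}_2$ in that same argument fails to commute with the entries $(L_2)_{kl}\in\mathfrak{R}_2$. The resolution is the weight-homogeneity (zero-weight) property required of a dynamical Lax operator: each entry $(L_2)_{kl}$ carries an $h_2$-weight opposite to the matrix weight of $E_{kl}$. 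Pushing $E_{kl}^{(2)}$ to the left shifts the $H^{(2)}$ in the argument by the matrix weight, while pushing $(L_2)_{kl}$ to the left shifts the $h_2$ in the argument by the opposite amount; the two corrections cancel exactly, so the two factors do commute with the argument of $L_1^{(1)}$ left unchanged. Step (iv) is identical with $H^{(1)}$ and $h_2$ in the roles of $H^{(2)}$ and $h_2$. Once these two commutations are justified, steps (i)--(iv) assemble into the claimed identity, with Cartan element $h=h_1+h_2$ and the same $H$. The statement for $m$ factors then follows by an immediate induction, applying the two-factor result to $L_m\cdots L_1$ grouped as $L_m\cdot(L_{m-1}\cdots L_1)$ and reading off the cumulative shifts $\hbar\sum_{l=j+1}^{m}h_{k;l}$.
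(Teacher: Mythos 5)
Your proof is correct, and there is in fact nothing in the thesis to compare it against: Proposition~\ref{lem_LL} is stated there without proof, the text passing immediately to the discussion of comultiplication. Your direct verification is the standard argument: expand both $L^{(1)}$ and $L^{(2)}$ into their $L_2\cdot L_1$ factors, apply the $L_2$-instance of \eqref{DRLL} at $\lambda_k$, then the $L_1$-instance at the shifted argument $\lambda_k+\hbar h_{k;2}$ (a legitimate substitution, as you note, since $h_{k;2}$ commutes with everything occurring in the $L_1$-relation), and let the $R^{(12)}(u-v;\lambda_k+\hbar h_{k;2})$ produced by the first relation be consumed by the second, so that the shifts telescope to $\hbar(h_{k;1}+h_{k;2})$. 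The two interchanges (i) and (iv), which are the only nontrivial steps, are exactly as you describe: they require the zero-weight property $[H^{(a)}_k+h_{k;i},\,L_i^{(a)}(u;\lambda_k)]=0$, and your weight-cancellation argument for them is sound.

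Two remarks. First, the paper's definition of a dynamical Lax operator formally demands only the relation \eqref{DRLL}; the zero-weight property is what gives meaning to the phrase \emph{les éléments de Cartan correspondant aux opérateurs de Lax} in the statement (compare the relation $[H+h,L^{\pm}(u)]=0$ for the elliptic $L$-operators in Appendix~\ref{SA21}), and the proposition is false without it, so you are right to treat it as part of the hypotheses — but you should say explicitly that you are doing so. Second, for the induction from two factors to $m$ factors you need the two-factor product itself to be a dynamical Lax operator \emph{with} the zero-weight property relative to $h_k=h_{k;1}+h_{k;2}$, since otherwise the two-factor result cannot be re-applied to $L_m\cdot(L_{m-1}\cdots L_1)$. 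This is immediate from the same commutation facts ($h_{k;1}$ commutes with the $L_2$-factor, $h_{k;2}$ commutes with $L_1^{(1)}(u;\lambda_k+\hbar h_{k;2})$, and each factor has zero total weight), but it is a line that must be written; with it, the bookkeeping of the cumulative shifts $\hbar\sum_{l=j+1}^{m}h_{k;l}$ closes the argument.
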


La question naturelle apparaît : peut-t-on construire une algèbre de Hopf ou une bigèbre par un opérateur de Lax dynamique et la comultiplication~\eqref{cmdL} par le schéma décrit dans les sections~\ref{sec18} et \ref{sec19}? La comultiplication $\bar\Delta L(u;\lambda_k)=$\eqref{cmdL} a l'air coassociative, mais ce n'est pas un homomorphisme, parce qu'elle dépend des paramètres dynamiques: $\bar\Delta=\bar\Delta(\lambda_k)$, et ainsi
\begin{align}
 \bar\Delta(\lambda_k)L^{(1)}(u;\lambda_k+\hbar H^{(2)}_k)\ne\bar\Delta(\lambda_k+\hbar H^{(2)}_k)L^{(2)}(u;\lambda_k+\hbar H^{(2)}_k),
\end{align}
et par conséquent l'application de $\bar\Delta$ à~\eqref{DRLL} ne donne pas une identité. En plus, on ne peut pas définir une algèbre par la relation~\eqref{DRLL} comme à la section~\ref{sec18}. En effet, l'opérateur de Lax dynamique est de la forme
\begin{align} \label{LKD}
 L(u;\lambda_k)=1+\sum_{ij=1}^n E_{ij}\otimes\sum_{\alpha=1}^{N_{ij}}L^{\alpha}_{ij}(\lambda_k)\phi^{\alpha}_{ij}(u;\lambda_k).
\end{align}
Pour construire une algèbre qui contient les éléments de cet opérateur il faut fixer les paramètres dynamiques $\lambda_k$, générer une algèbre $\A_{\lambda_k}$ par coefficient $L^\alpha_{ij}(\lambda_k)$ et imposer la relation $RLL$ dynamique~\eqref{DRLL}. Mais cette relation définissant une multiplication contient toutes les dérivées $\dfrac{\partial^{i_1+\ldots+i_r}L^\alpha_{ij}(\lambda_k)}{\partial\lambda_1^{i_1}\cdots\partial\lambda_r^{i_r}}$ outre les générateurs $L^\alpha_{ij}(\lambda_k)$. Pour rendre cette définition correcte il est nécessaire de compléter l'ensemble de générateurs par ces dérivées dans les valeurs de $\lambda_k$ fixées. Mais pourtant, il y a deux obstacles. Premièrement, l'opérateur de Lax~\eqref{LKD} ne contient pas les dérivées de $L^\alpha_{ij}(\lambda_k)$ et par conséquent il ne peut par décrire toute l'algèbre construite. Deuxièmement, ces dérivées ne peuvent pas être fonctionnellement indépendantes et les contraintes nécessaires dépendent d'une spécification de la dépendance des coefficients $L^\alpha_{ij}(\lambda_k)$ de $\lambda_k$.

Néanmoins, pour les matrices $R$ dynamiques connues on construit des algèbres $\HH$ qui sont décrites par des paires d'opérateurs de Lax dynamiques. Ces algèbres sont analogues au double de Drinfeld regardé à la section~\ref{sec19}. Elles sont appelées groupes quantiques dynamiques et elles possèdent des comultiplications $\Delta$ liées à la formule~\eqref{cmdL}. Cependant, ces comultiplications ne satisfont pas à la condition de coassociativité $(\Delta\otimes\id)\circ\Delta=(\id\otimes\Delta)\circ\Delta$ (le premier diagramme~\eqref{diag_coass}). Néanmoins, elle satisfait à l'égalité
\begin{align} \label{quasicoass}
 (\id\otimes\Delta)\circ\Delta(a)= \Phi\cdot(\Delta\otimes\id)\circ\Delta(a)\cdot\Phi^{-1},
\end{align}
$\forall a\in\HH$, pour quelque élément inversible $\Phi\in\HH\otimes\HH\otimes\HH$. Par conséquence, les opérateurs de Lax dynamiques décrivent une classe d'algèbres plus générale que l'algèbre de Hopf. La condition de coassociativité pour ces algèbres est remplacée par l'équation~\eqref{quasicoass} appelée une condition de quasi-coassociativité. Ces algèbres ont été introduites dans un travail de Drinfeld~\cite{D90}.

\begin{de}
 L'algèbre $\HH$ sur $\mathbb K$ munie des homomorphismes $\Delta\colon\HH\to\HH\otimes\HH$ et $\varepsilon\colon\HH\to\mathbb K$ et d'un élément inversible $\Phi\in\HH\otimes\HH\otimes\HH$ est appelée quasi-bigèbre s'ils satisfont la condition de quasi-coassociativité~\eqref{quasicoass} et les conditions
\begin{align}
 (\id\otimes\id\otimes\Delta)(\Phi)\cdot(\Delta\otimes\id\otimes\id)(\Phi)&= (1\otimes\Phi)\cdot(\id\otimes\Delta\otimes\id)(\Phi)\cdot(\Phi\otimes1), \\
 (\varepsilon\otimes\id)\circ\Delta&=(\id\otimes\varepsilon)\circ\Delta=\id, \\
 (\id\otimes\varepsilon\otimes\id)(\Phi)&=1.
\end{align}
L'élément $\Phi$ est appelé un coassociateur pour la comultiplication $\Delta$. Cette quasi-bigèbre est appelée quasi-algèbre de Hopf, s'il existe, en plus, une application linéaire $S\colon\HH\to\HH$ et des éléments $\alpha,\beta\in\HH$ satisfaisant les conditions
\begin{align}
 \mu\circ(S\otimes\hat\alpha)\circ\Delta(a)=&\varepsilon(a)\cdot\alpha, \\
 \mu\circ(\check\beta\otimes S)\circ\Delta(a)=&\varepsilon(a)\cdot\beta, \\
 (\mu\otimes\id)\circ\mu\circ(\check\beta\otimes S\otimes\hat\alpha)&(\Phi)=1, \\
 (\mu\otimes\id)\circ\mu\circ(S\otimes\hat\alpha\check\beta\otimes S)&(\Phi^{-1})=1,
\end{align}
$\forall a\in\HH$, où $\mu\colon\HH\otimes\HH\to\HH$ est une multiplication de $\HH$ et $\hat\alpha\colon\HH\to\HH$ et $\check\beta\colon\HH\to\HH$ sont des opérateurs de multiplication par $\alpha$ à gauche et par $\beta$ à droite. La quasi-algèbre de Hopf $\HH$ appelée quasi-triangulaire s'il existe un élément inversible $\R\in\HH\otimes\HH$ tel que
\begin{align}
 \Delta^{op}(a)&=\R\Delta(a)\R^{-1}, \label{cmop_RcmRPhi} \\
 (\id\otimes\Delta)\R&=(\Phi^{-1})^{(231)}\R^{(13)}\Phi^{(213)}\R^{(12)}(\Phi^{-1})^{(123)}, \label{cmR_R13R12Phi} \\
 (\Delta\otimes\id)\R&=\Phi^{(312)}\R^{(13)}(\Phi^{-1})^{(132)}\R^{(23)}\Phi^{(123)}. \label{cmR_R13R23Phi}
\end{align}
\end{de}

En substituant~\eqref{cmR_R13R23Phi} à l'égalité
\begin{align}
 \R^{(12)}(\Delta\otimes\id)\R=(\Delta^{op}\otimes\id)\R\cdot\R^{(12)}
\end{align}
découlant de~\eqref{cmop_RcmRPhi}, nous obtenons l'équation
\begin{align}
 \R^{(12)}\Phi^{(312)}\R^{(13)}(\Phi^{-1})^{(132)}\R^{(23)}\Phi^{(123)}= \Phi^{(321)}\R^{(23)}(\Phi^{-1})^{(231)}\R^{(13)}\Phi^{(213)}\R^{(12)},
\end{align}
qui généralise l'Équation de Yang-Baxter Universelle pour les quasi-algèbres de Hopf.

Si $\Phi=1$ et $\alpha=\beta=1$ alors l'algèbre $\HH$ est une algèbre de Hopf. Parmi les exemples plus simples non-triviaux de quasi-algèbres de Hopf (quasi-triangulaires) il faut mentionner les groupes quantiques dynamiques qui peuvent être obtenus par {\itshape twists dynamiques} d'algèbre de Hopf (quasi-triangulaire). Nous considérons d'abord le twist général suivant~\cite{D90}.

Soit $\HH$ une quasi-bigèbre avec une comultiplication $\Delta$, counité $\varepsilon$ et coassociateur $\Phi$ et soient
\begin{align}
 \tilde\Delta(a)&=\F\cdot\Delta(a)\cdot \F^{-1}, \\
 \tilde\Phi&=\F^{(23)}\cdot(\id\otimes\Delta)\F\cdot\Phi\cdot(\Delta\otimes\id)\F^{-1} \cdot(\F^{-1})^{(12)},
\end{align}
$\forall a\in\HH$, où $\F\in\A\otimes\A$ est un élément inversible tel que
\begin{align}
 (\varepsilon\otimes\id)\F=&1, & (\id\otimes\varepsilon)\F=&1. \label{cuF}
\end{align}
Alors $\HH$ munie par $\tilde\Delta$, $\varepsilon$ et $\tilde\Phi$ est aussi une quasi-bigèbre. L'élément $\F$ est appelé un twist. On dit que la quasi-bigèbre $\tilde\HH=(\HH,\tilde\Delta,\varepsilon,\tilde\Phi)$ est obtenue de la bigèbre $\HH=(\HH,\Delta,\varepsilon,\Phi)$ par twist $\F$. Si $\HH=(\HH,\Delta,\varepsilon,\Phi,S,\alpha,\beta)$ est une algèbre de Hopf alors la bigèbre $\tilde\HH$ avec le même antipode $\tilde S=S$ et les éléments
\begin{align}
 \tilde\alpha&=(S\otimes\hat\alpha)\F^{-1}, & \tilde\beta&=(\check\beta\otimes S)\F
\end{align}
est aussi une algèbre de Hopf. Si, en plus, $\HH$ est une quasi-bigèbre de Hopf quasi-triangulaire alors $\tilde\HH$ est aussi quasi-triangulaire avec une matrice universelle
\begin{align}
 \tilde\R=\F^{(21)}\R \F^{-1}. \label{twistdeR}
\end{align}

Soit $\HH$ une algèbre de Hopf quasi-triangulaire, id est $\Phi=1$ et $\alpha=\beta=1$, et soit $\F\in\HH\otimes\HH$ un élément inversible satisfaisant~\eqref{cuF}. Alors l'algèbre $\tilde\HH$ obtenue de $\HH$ par twist $\F$ est muni du coassociateur
\begin{align} \label{tPhiF}
 \tilde\Phi=\F^{(23)}\cdot(\id\otimes\Delta)\F\cdot(\Delta\otimes\id)\F^{-1}\cdot(\F^{-1})^{(12)}.
\end{align}
En particulier, si $\F$ satisfait
\begin{align}
 \F^{(12)}\cdot(\Delta\otimes\id)\F=\F^{(23)}\cdot(\id\otimes\Delta)\F, \label{cocyc}
\end{align}
alors l'algèbre $\tilde\HH$ est aussi une algèbre de Hopf. La condition~\eqref{cocyc} est appelée une équation d'un cocycle ou une équation de Drinfeld. Une généralisation de cette équation au cas dynamique a été considérée dans~\cite{BBB}. Elle permet d'obtenir une Équation de Yang-Baxter Dynamique en termes des quasi-algèbres de Hopf.

Soit $\F(\lambda_k)=\F(\lambda_1,\ldots,\lambda_r)$ un élément inversible de $\HH\otimes\HH$ dépendant de $r$ paramètres dynamiques et satisfaisant à la condition~\eqref{cuF}. Soient $h_k\in\HH$ des éléments commutant deux par deux -- les éléments de Cartan correspondants. Si $\F$ satisfait à la relation
\begin{align} \label{shcocyc}
 \F^{(12)}(\lambda_k+\hbar h^{(3)}_k)\cdot(\Delta\otimes\id)\F(\lambda_k)=
 \F^{(23)}(\lambda_k)\cdot(\id\otimes\Delta)\F(\lambda_k),
\end{align}
appelée une équation d'un cocycle déplacé, alors la comultiplication et le coassociateur correspondants prennent les formes
\begin{align}
 \Delta(\lambda_k)(a)&=\F(\lambda_k)\cdot\Delta(a)\cdot \F(\lambda_k)^{-1}, \label{twistcmD} \\
 \Phi(\lambda_k)&=\F^{(12)}(\lambda_k+\hbar h^{(3)}_k)\cdot\F^{(12)}(\lambda_k)^{-1}, \label{tPhiFshcocyc} \\
 \R(\lambda_k)&=\F^{(21)}(\lambda_k)\cdot\R\cdot\F(\lambda_k)^{-1}, \label{twistdeRD}
\end{align}
$\forall a\in\HH$. (Nous pouvons omettre un tilde en indiquant explicitement la dépendance des paramètres dynamiques $\lambda_k$, parce que la comultiplication $\Delta$, le coassociateur $\Phi=1$ et la matrice $R$ universelle $\R$ de l'algèbre de Hopf initiale $\HH$ ne dépendent pas de $\lambda_k$). En utilisant la formule
\begin{align}
 \R^{(12)}(\lambda_k+\hbar h^{(3)}_k)&= \Phi^{(213)}(\lambda_k)\cdot\R(\lambda_k)\cdot\Phi^{(123)}(\lambda_k)^{-1}, \label{RPhiD}
\end{align}
découlant de~\eqref{twistdeRD}, on obtient une Équation de Yang-Baxter Universelle Dynamique
\begin{multline}
 \R^{(12)}(\lambda_k)\R^{(13)}(\lambda_k+\hbar h^{(2)}_k)\R^{(23)}(\lambda_k)= \\
 =\R^{(23)}(\lambda_k+\hbar h^{(1)}_k)\R^{(13)}(\lambda_k)\R^{(12)}(\lambda_k+\hbar h^{(3)}_k). \label{DYBEU}
\end{multline}
Appliquant une représentation d'évaluation correspondante nous arrivons à la relation $RLL$ dynamique~\eqref{DRLL} et à l'Équation de Yang-Baxter Dynamique~\eqref{DYBE}.

Enfin, notons que les formules~\eqref{quasicoass}, \eqref{twistcmD} et \eqref{tPhiFshcocyc} entraîne la formule
\begin{align} \label{coassD}
 \big(\id\otimes\Delta(\lambda_k)\big)\circ\Delta(\lambda_k)= \big(\Delta(\lambda_k+\hbar h^{(3)})\otimes\id\big)\circ\Delta(\lambda_k),
\end{align}
qui fonde la règle de coassociativité dynamique pour la comultiplication
\begin{align} \label{cmdLv}
 \Delta(\lambda_k)L(u;\lambda_k)=\F(\lambda_k) \F(\lambda_k+\hbar H_k)^{-1} L^{(02)}(u;\lambda_k)L^{(01)}(u;\lambda_k+\hbar h^{(2)}_{k}).
\end{align}

\selectlanguage{english}

\chapter{Transition function for the Toda chain}
\label{sec2}

In this chapter we present the main results of the work~\cite{S1}(see Appendix~\ref{SA1}). This work was inspired by the article~\cite{Derkachov} devoted to the relation of Separation of Variables (SoV) for $XXX$-model and the Baxter $Q$-operator for this model. The main idea of SoV is to transform the wave functions of the model such that the eigenfunctions of quantum integrals of motion become the products of functions of one variable~\cite{Sklyanin}. It was shown in~\cite{Derkachov} that in the $XXX$-model case this transformation can be constructed as consecutive application of operators called $\Lambda$-operators and the kernels of these operators are degenerated kernels of the Baxter $Q$-operators. Moreover, those functions of one variables are eigenvalues of the Baxter operators.

The first ideas of SoV for the periodic Toda chain were proposed by Gutzwiller~\cite{Gutzwiller}. It turned out that the transition function for the $(N+1)$-particle periodic Toda chain is constructed as an eigenfunction of the $N$-particle open Toda chain with a factor depending on the coordinate of $(N+1)$-st particle. This idea was successfully applied by him to the few particle case. The transition function for the $N$-particle periodic Toda chain was first obtained by Sklyanin using $R$-matrix formalism~\cite{Sklyanin}. The most resent description of the SoV method for the periodic Toda chain in terms of the Lax operators and $R$-matrices was done in~\cite{Kharchev_P}. The separated variables of the periodic Toda chain parametrize the eigenvalues of the integrals of motion of the open Toda chain. Sklyanin proposed to search the eigenfunctions of these integrals of motion as integrals transformation of eigenfunctions of the smaller chain over these variables, what was realized most completely in~\cite{Kharchev_O} resulting the producing an integral representation for the finite open Toda chain eigenfunctions, which was called a Mellin-Barns representation.

We apply methods of the paper~\cite{Derkachov} to obtain the eigenfunctions of the open Toda chain as a product of $\Lambda$-operators. We developed the method of a triangulation of the Lax matrices appeared in~\cite{Pasquier} and used in~\cite{Derkachov} for $XXX$-model. The triangulation is implemented by a gauge transformation parametrized by variables $y_0,\ldots,y_N$. In the periodic case one has to impose the condition $y_0=y_N$~\cite{Pasquier} and the method produces Baxter $Q$-operators for the periodic Toda chain model. Following~\cite{Derkachov} we impose an open boundary conditions: $y_0\to -\infty$, $y_N\to +\infty$ to construct the $\Lambda$-operator. More exactly, the kernel of the $\Lambda$-operator corresponding to the $N$-particle open Toda chain (and, consequently, to the $(N+1)$-particle periodic Toda chain) can be obtained by the limit $y_0\to -\infty$, $y_N\to +\infty$ of the kernel of the Baxter $Q$-operator corresponding to the $(N+1)$-particle periodic Toda chain. Thus the $\Lambda$-operator and the Baxter $Q$-operator for the periodic Toda chain correspond to the different choices of the boundary conditions in the method of triangulation of the Lax matrices. In particular, this explain the likeness of the properties of these operators.

The difference in the boundary conditions used by triangulation method leads in turn to the fact that the $\Lambda$-operator transforms a function of $N-1$ variables to a function of $N$ variables, as the Baxter $Q$-operator transoms a function of $N$ variables to a function of the same number of variables. The properties of the product of these operators imply that the action of a $\Lambda$-operator on an $(N-1)$-particle eigenfunction of open chain gives an $N$-particle eigenfunction. The product of $\Lambda$-operators acting on a constant (a function of zero number of variables) are the eigenfunctions for the open Toda chain with eigenvalues parametrized by the spectral parameters of these $\Lambda$-operators. This also gives the transition function for the $(N+1)$-particle periodic Toda chain. This form of eigenfunctions of the open Toda chain leads to its integral representation that appeared first in~\cite{Givental} employing a different approach. Recently this representation was interpreted from a group-theoretical point of view using the Gauss decomposition of $GL(N,\mathbb R)$~\cite{Kharchev_GG}, where this integral representation was called Gauss-Givental representation.

Availability of two kind of the integral representation for the open Toda chain eigenfunctions is explained in~\cite{Babelon} as follows. As we shall see the eigenvalues are parametrized by $N$ variables $\gamma_j$. The corresponding eigenfunctions $\psi_{\gamma_1,\ldots,\gamma_N}(x_1,\ldots,x_N)$ can be regarded as well as a function of $x_k$ satisfying the differential equations and in other hand as a function of $\gamma_j$ satisfying difference equations in $\gamma_j$ , i.e. as a wave function of some dual model. The duality of the same kind appears in the Representation Theory. The infinite-dimensional Gelfand-Zetlin representation of Lie algebra $\mathfrak{gl}(N)$ by shift operators in $\gamma_j$ allows to obtain the Mellin-Barns integral representation~\cite{Kharchev_GZ}, while the Gauss representation of the same Lie algebra by differential operators in $x_n$ leads to the Gauss-Givental representation~\cite{Kharchev_GG}.


\section{The separation of variables}
\label{sec21}

First of all we introduce a general notion of the SoV method for the quantum integrable systems following~\cite{Skl95}. Let $\{Q_1,\ldots,Q_N\}\subset\A$ be a quantum integrable systems. Suppose that the algebra $\A$ can be represented as an algebra of operators acting on a space $W$ consisting of functions of variables $y_1$, \ldots, $y_N$. So that the operators $Q_k$ transform a function of $y_1$, \ldots, $y_N$ to another function of these variables. The variables $y_1$, \ldots, $y_N$ are called separated if there exist $N$ relations of the form
\begin{align}
 F_j(y_j,Y_j,Q_1,\ldots,Q_N)&=0, & j&= 1,\ldots,N, \label{Phij} 
\end{align}
where $y_j$ are operators of the multiplication by the corresponding variables, $Y_j=-i\hbar\dfrac{\partial}{\partial y_j}$ are canonically dual operators and $F_j$ are functions of $N+2$ variables such that the determinant 
\begin{align}
\det_{j=1,\ldots,N \atop k=3,\ldots,N+2}\Big(\frac{\partial F_j(z_1,\ldots,z_{N+2})}{\partial z_k}\Big)
\end{align}
is not identically zero. We assume that the operators in~\eqref{Phij} are ordered exactly as they are enlisted.

Let $\Phi_E(y)$, where $y=(y_1,\ldots,y_N)$ and $E=(E_1,\ldots,E_N)$, be a common eigenfunction of the integrals of motion $Q_k$ with the eigenvalues $E_k$:
\begin{align}
 Q_k \Phi_E(y) = E_k \Phi_E(y). \label{QjPsi} 
\end{align}
The condition~\eqref{Phij} implies that the system of equations~\eqref{QjPsi} are equivalent to the system
\begin{align}
 F_j(y_j,Y_j,E_1,\ldots,E_N)\Phi_E(y)=0, \label{PhijE} 
\end{align}
which leads to the factorization of the eigenfunction $\Psi(y)$ into the product of functions of one variable:
\begin{align}
\Phi_E(y)=\prod_{j=1}^N \phi_j(y_j;E), \label{Psipsi} 
\end{align}
where each function $\phi_j(y_j;E)$ satisfies the equation
\begin{align}
 \Phi_j(y_j,Y_j,E_1,\ldots,E_N)\phi_j(y_j;E)=0. \label{PhijEpsi} 
\end{align}
Thus the spectral problem~\eqref{QjPsi} can reduced be reduced to the easier problem~\eqref{PhijEpsi}. 

Suppose that an integrable system is initially defined by operators acting on a space $V$ consisting of functions of variables $x_1$, \ldots, $x_N$ and these variables are not separated. Then in order to solve the spectral problem by SoV method we need to define the separated variables via the integral transformation
\begin{equation} \label{phiTV}
 \phi(y)=\int\mu_V(x)dx\, T(x;y) \psi(x),
\end{equation}
where $\mu_V(x)$ is a measure which define the scalar product of the state space $V$ and $T(x;y)=T(x_1,\ldots,x_N;y_1,\ldots,y_N)$ is a function (distribution, in general) defining the inclusion $\hat T\colon V\to W$ such that the reduced map $\hat T^{*}\colon V\to\hat T(V)$ is a unitary operator.
This function describes a transition from the initial set of variables $x=(x_1,\ldots,x_N)$ in which the model is given to the new set variables $y$ that factorizes the eigenfunctions. Its complex conjugation $U_y(x)=\overline{T(x;y)}$ is called {\itshape transition function}. Having found the solutions of the spectral problem~\eqref{QjPsi} (belonging to $\hat T(V)$) we need to present it in terms of the variables $x_1$,\ldots,$x_N$ applying the inverse integral transformation
\begin{equation}
 \psi(x)=\int\mu_W(y)dy\, U_y(x) \phi(y),
\end{equation}
where $\mu_W(y)$ is a measure defining the scalar product in $W$.

To write the relations~\eqref{Phij} we need to find an operators $Q_k$ in terms of the separated variables. The linear operator $\tilde A\colon W\to W$ correspond to a linear operator $\hat A\colon V\to V$ in separated variables if the diagram
\begin{align}
 \xymatrix{
  V\ar@{->}[r]^{\hat A}\ar@{->}[d]_{\hat T} & V\ar@{->}[d]^{\hat T} \\
  W\ar@{->}[r]^{\tilde A} & W
 }
 \end{align}
is commutative. That is it acts on the functions $\phi(y)\in\hat T(V)\subset W$ (the functions presenting as~\eqref{phiTV}) by the formula
\begin{align} \label{tildeAphi}
 (\tilde A\phi)(y)=\int\mu_V(x)dx\int\mu_W(y')dy'\,
  \overline{U_y(x)}(\hat A U_{y'})(x)\phi(y').
\end{align}
Note that if $\hat V\ne W$ the operator $\tilde A$ is not unique because one can extend the formula~\eqref{tildeAphi} to all the space $W$ in different ways.

\section{Transition to the separated variables of the periodic Toda chain}
\label{sec22}

In the section~\eqref{sec14} we define the periodic and open Toda chains. Consider the $(N+1)$-particle periodic case and define a transition to new variables: $\varepsilon, \gamma_1, \ldots, \gamma_{N}$.  This transition realizes with help of a transition function $U_{\varepsilon,\gamma}(\vec x)$, where $\gamma=(\gamma_1,\ldots, \gamma_N)$, $\vec x=(x_1,\ldots,x_{N+1})$ as the integral transformation
\begin{equation} \label{phiUpsi}
 \phi(\varepsilon,\gamma)=\int\limits_{\mathbb R^{N+1}} d\vec x\, \overline{U_{\varepsilon,\gamma}(\vec x)} \psi(\vec x),
\end{equation}
where $d\vec x=dx_1\cdots dx_{N+1}$ is a standard Lebesgue measure. As we shall see, a special choice of the function $U_{\varepsilon,\gamma}(\vec x)$ leads to the Separation of Variables.

Let us define the transition function $U_{\varepsilon,\gamma}(\vec x)$ to be a solution of the equations~\cite{Sklyanin}
\begin{align}
 C_{N+1}(u)\, U_{\varepsilon,\gamma}(\vec x)&=-e^{x_{N+1}}\prod_{k=1}^{N}(u-\gamma_k)\, U_{\varepsilon,\gamma}(\vec x), \label{CU}\\
 \sum_{k=1}^{N+1}p_k\; U_{\varepsilon,\gamma}(\vec x)&=\varepsilon U_{\varepsilon,\gamma}(\vec x), \label{H1U}
\end{align}
where $p_k=-i\hbar\dfrac{\partial}{\partial x_k}$. The relation $T_{[N+1]}(u)=L_{N+1}(u)T_{[N]}(u)$, which follows directly from~\eqref{TNLL}, implies the following recurrent relations
\begin{align}
 A_{N+1}(u)&=(u-p_{N+1})A_N(u)+e^{-x_{N+1}}C_N(u), \label{Arec} \\
 C_{N+1}(u)&=-e^{x_{N+1}}A_N(u), \label{Crec} \\
 A_{N+1}(u)&=(u-p_{N+1})A_N(u)-e^{x_N-x_{N+1}}A_{N-1}(u).  \label{AArec}
\end{align}
Taking into account~\eqref{Crec} one can see that the function $U_{\varepsilon,\gamma}(\vec x)$ is a common eigenfunction of the operators $A_{N}(u)$ and the operator of total momentum of $(N+1)$-particle chain $Q_1=\sum\limits_{k=1}^{N+1}p_k$. (This operators commute with each other for all values of spectral parameter $u$.)

We also require an additional condition
\begin{align} \label{addcU}
 U_{\varepsilon,\sigma(\gamma)}(\vec x)&=U_{\varepsilon,\gamma}(\vec x), & &\text{for all } \sigma\in S_{N},
\end{align}
where $\sigma(\gamma) =(\gamma_{\sigma(1)},\ldots, \gamma_{\sigma(N)})$ and $S_{N}$ is a permutation group. The sense of this condition is following. Since the equations~\eqref{CU}, \eqref{H1U} is invariant under permutations of the set of variables $\gamma$, their solution is not unique. The condition~\eqref{addcU} fix this solution and leads to the symmetry of the functions of the form~\eqref{phiUpsi}. It means in turn that the space $W$, on which the operators $\gamma_k$ act is a little more than the image of the state space $V$ under the transformation~\eqref{phiUpsi}. Indeed, this image consists of the functions~\eqref{phiUpsi}, which are symmetric under permutations of $\gamma$, while the operators of the multiplication by $\gamma_k$ gives a non-symmetric function, in general.

The conditions defining the transition function imply that $U_{\varepsilon,\gamma}(\vec x)$
satisfies the orthogonality condition
\begin{equation} \label{orthog}
 \int\limits_{\mathbb R^{N+1}} d\vec x \overline{U_{\varepsilon,\gamma}(\vec x)}U_{\varepsilon',\gamma'}(\vec x)=\mu^{-1}(\gamma)\delta(\varepsilon-\varepsilon')\delta_{SYM}(\gamma,\gamma')
\end{equation}
and the completeness condition
\begin{equation} \label{compl}
 \int\limits_{\mathbb R} d\varepsilon \int\limits_{\mathbb R^N} d\gamma\,\mu(\gamma)\overline{U_{\varepsilon,\gamma}(\vec x)}U_{\varepsilon,\gamma}(\vec x')=\delta(\vec x-\vec x'),
\end{equation}
where $\delta_{SYM}(\gamma,\gamma')=\dfrac1{N!}\sum\limits_{\sigma\in S_N}\delta\big(\gamma-\sigma(\gamma')\big)$
is a symmetrized delta-function and $\mu(\gamma)$ is some integration measure. In~\cite{S1} we check the condition~\eqref{orthog} proving that $\mu(\gamma)$ is a Sklyanin measure~\cite{Sklyanin}. The condition~\eqref{compl} follows from the general theory of rigged Hilbert space~\cite{Gelfand4} and it is equivalent to the fact that each function $\psi(x)$ from the state space $V$ can be presented as the integral
\begin{equation} \label{psiUphi}
 \psi(\vec x)=\int\limits_{\mathbb R} d\varepsilon \int\limits_{\mathbb R^N} d\gamma\,\mu(\gamma) U_{\varepsilon,\gamma}(\vec x)\phi(\varepsilon,\gamma),
\end{equation}
for some function $\phi(\varepsilon,\gamma)$ from $W$. The formula~\eqref{psiUphi} define an inverse transformation to the transformation~\eqref{phiUpsi}.

Consider the spectral problem for the $(N+1)$-particle periodic Toda chain
\begin{align} \label{spprpTc}
 Q_j \Psi_E(\vec x)=E_j\Psi_E(\vec x),
\end{align}
where $E=(E_1,\ldots,E_{N+1})$: by virtue of~\eqref{commQQ} the integrals of motion $Q_k$ have common eigenfunctions corresponding to the eigenvalues $E=(E_1,\ldots,E_{N+1})$ if each $E_k$ belongs to the spectrum of $Q_k$. In terms of the generating function~\eqref{hattuQ} they can be defined by the equation
\begin{equation} \label{Psi_def}
 \hat t_{N+1}(u)\Psi_E(\vec x)= t_{N+1}(u;E)\Psi_E(\vec x),
\end{equation}
where
\begin{equation}
 t_{N+1}(u;E)=\sum_{k=0}^{N+1} (-1)^k u^{N+1-k} E_k,
\end{equation}
and $E_0=1$. The eigenfunctions in the new variables are
\begin{equation} \label{Phi_def}
 \Phi_E(\varepsilon,\gamma)=\int\limits_{\mathbb R^{N+1}} d\vec x \overline{U_{\varepsilon,\gamma}(\vec x)} \Psi_E(\vec x).
\end{equation}

The choice of the transition function as a solution of the equations~\eqref{CU}, \eqref{H1U} leads to a system of equations of the form~\eqref{PhijE} for the eigenfunctions in the new variables $\Phi_E(\gamma)$, and consequently to the separation of variables~\eqref{Psipsi}. Now we shall briefly depict the derivation of this system of equations.

Rewriting by entries the $RLL$-relation~\eqref{RLL} with the matrix of monodromy~\eqref{TNLL} as a Lax operator and rational $R$-matrix~\eqref{R_rat} one obtains (in particular)
\begin{align}
 A_N(u)A_N(v)&=A_N(v)A_N(u), \label{RTTAA}\\
 (u-v)C_N(u)A_N(v)+i\hbar A_N(u)C_N(v)&=(u-v+i\hbar)A_N(v)C_N(u), \label{RTTAC}\\
 (u-v+i\hbar)D_N(u)C_N(v)&=(u-v)C_N(v)D_N(u)+i\hbar D_N(v)C_N(u). \label{RTTDC}
\end{align}
This relations can be used to derive the formulae for the actions of the operators $A_{N+1}(u)$ and $D_{N+1}(u)$ on the transition function as follows.  

Applying the relation~\eqref{RTTAA} for $N+1$ to $U_{\varepsilon,\gamma}(\vec x)$ and comparing the coefficients at $v^N$ in both sides one derives that the function $A_{N+1}(u)U_{\varepsilon,\gamma}(\vec x)$ satisfies the same equation~\eqref{H1U} with the same eigenvalue $\varepsilon$. Then the recurrent relations~\eqref{Arec}, \eqref{Crec} imply the relation
\begin{align}
 [A_{N+1}(u),e^{x_{N+1}}]=-[p_{N+1},e^{x_{N+1}}]A_N(u)=i\hbar e^{x_{N+1}}A_N(u)=-i\hbar C_{N+1}(u),
\end{align}
which leads in turn to
\begin{align}
 [A_{N+1}(\gamma_k),e^{x_{N+1}}]U_{\varepsilon,\gamma}(\vec x)=i\hbar e^{x_{N+1}}C_{N+1}(\gamma_k)U_{\varepsilon,\gamma}(\vec x)=0. \label{Aex0}
\end{align}
Substituting $v=\gamma_k$ to the relation~\eqref{RTTAC} for $N+1$ applied to $U_{\varepsilon,\gamma}(\vec x)$ and taking into account~\eqref{Aex0} the relation we conclude that the function $A_{N+1}(\gamma_k)U_{\varepsilon,\gamma}(\vec x)$ satisfies the equation~\eqref{CU} with the parameters $\gamma_1,\ldots,\gamma_k-i\hbar,\ldots,\gamma_N$. By virtue of the uniqueness of the solutions up to the factor it means that this function is proportional to the transition function
with these parameters:
\begin{equation} \label{AU}
 A_{N+1}(\gamma_k)U_{\varepsilon,\gamma}(\vec x)=a_k(\varepsilon,\gamma)U_{\varepsilon,\gamma-i\hbar e_k}(\vec x),
\end{equation}
where $\gamma-i\hbar e_k=(\gamma_1,\ldots,\gamma_k-i\hbar,\ldots,\gamma_N)$. Analogously, the relation
\begin{equation}
 [\hat t_{N+1}(u),D_{N+1}(v)]+[D_{N+1}(u),A_{N+1}(v)]=0,
\end{equation}
which follows from the relations~\eqref{ttcomm} and \eqref{AAcomm}, the relation~\eqref{RTTDC} for $N+1$ and the relation $[D_{N+1}(u),e^{x_{N+1}}]\hm=0$ imply
\begin{equation} \label{DU}
 D_{N+1}(\gamma_k)U_{\varepsilon,\gamma}(\vec x)=d_k(\varepsilon,\gamma)U_{\varepsilon,\gamma+i\hbar e_k}(\vec x).
\end{equation}

Taking into account the fact that the action of the operator $\hat t_N(u)$ on the transition function has the form
\begin{equation}
 \hat t_{N+1}(u)U_{\varepsilon,\gamma}(\vec x)=u^{N+1} U_{\varepsilon,\gamma}(\vec x)-u^N \varepsilon\, U_{\varepsilon,\gamma}(\vec x)+\ldots
\end{equation}
we can interpolate the formulae~\eqref{AU} and~\eqref{DU}:
\begin{multline} \label{tU}
 \hat t_{N+1}(u)U_{\varepsilon,\gamma}(\vec x)=\big(u-\varepsilon+\sum_{k=1}^N\gamma_k\big) \prod_{j=1}^N(u-\gamma_j)U_{\varepsilon,\gamma}(\vec x)+\\
  +\sum_{k=1}^N\prod_{j\ne k}\frac{u-\gamma_j}{\gamma_k-\gamma_j}\,\big(a_k(\varepsilon,\gamma)U_{\varepsilon,\gamma-i\hbar e_k}(\vec x)+d_k(\varepsilon,\gamma)U_{\varepsilon,\gamma+i\hbar e_k}(\vec x)\big).
\end{multline}

Let us suppose that the following functions
\begin{equation} \label{Delt}
  \begin{split}
    \Delta^+_k(\varepsilon,\gamma)&=a_k(\varepsilon,\gamma+i\hbar e_k)\prod_{j\ne k}\frac{\gamma_k-\gamma_j}{\gamma_k-\gamma_j+i\hbar}\frac{\mu(\gamma+i\hbar e_k)}{\mu(\gamma)},\\
    \Delta^-_k(\varepsilon,\gamma)&=d_k(\varepsilon,\gamma-i\hbar e_k)\prod_{j\ne k}\frac{\gamma_k-\gamma_j}{\gamma_k-\gamma_j-i\hbar}\frac{\mu(\gamma-i\hbar e_k)}{\mu(\gamma)}.
 \end{split}
\end{equation}
depend only on the variable $\gamma_k$:
\begin{align} \label{Delta_indep}
 \Delta^{\pm}_k(\varepsilon,\gamma)\hm=\Delta^{\pm}(\gamma_k),
\end{align}
(the functions $\Delta^\pm(u)$ do not depend on $k$). In the section~\ref{sec24} we shall see that the functions~\eqref{Delta_indep} are constants.

The formula~\eqref{tU} allows us to represent the generating functions $\hat t_{N+1}(u)$ in the separated variables. The operator $\tilde t_{N+1}(u)$ corresponding to $\hat t_{N+1}(u)$ in the separated variables acts on the functions $\phi(\varepsilon,\gamma)$ of the form~\eqref{phiUpsi} as follows (see the formula~\eqref{tildeAphi}):
\begin{multline} \label{ttildephi}
  (\tilde t_{N+1}(u)\phi)(\varepsilon',\gamma')=\int\limits_{\mathbb R^{N+1}} d\vec x \int\limits_{\mathbb R} d\varepsilon \int\limits_{\mathbb R^N} d\gamma\,\mu(\gamma) \overline{U_{\varepsilon',\gamma'}(\vec x)}\big(\hat t_{N+1}(u)U_{\varepsilon,\gamma}\big)(\vec x)\phi(\varepsilon,\gamma).
\end{multline}
Substituting the formula~\eqref{tU} in~\eqref{ttildephi}, shifting the variable $\gamma_k$ by $\pm i\hbar$ in the terms containing $a_k$ and $d_k$ and integrating over $\vec x$ one yields
\begin{multline} \label{ttildephi1}
  (\tilde t_{N+1}(u)\phi)(\varepsilon',\gamma')=\int\limits_{\mathbb R} d\varepsilon \int\limits_{\mathbb R^N} d\gamma\,\delta(\varepsilon-\varepsilon')\delta_{SYM}(\gamma,\gamma')\Big[\big(u-\varepsilon+\sum_{k=1}^N\gamma_k\big)\prod_{j=1}^N(u-\gamma_j)\phi(\varepsilon,\gamma)+ \\
   +\sum_{k=1}^N\prod_{j\ne k}\frac{u-\gamma_j}{\gamma_k-\gamma_j}\,\big(\Delta^+(\gamma_k)\phi(\varepsilon,\gamma+i\hbar e_k)+\Delta^-(\gamma_k)\phi(\varepsilon,\gamma-i\hbar e_k)\big)\Big].
\end{multline}
Note that the function $\phi(\varepsilon,\gamma)$ and, consequentely, the expression in the square brackets are symmetric with respect to $\gamma$. So that one obtain
\begin{multline} \label{ttildephi2}
  (\tilde t_{N+1}(u)\phi)(\varepsilon,\gamma)=\big(u-\varepsilon+\sum_{k=1}^N\gamma_k\big)\prod_{j=1}^N(u-\gamma_j)\phi(\varepsilon,\gamma)+ \\
   +\sum_{k=1}^N\prod_{j\ne k}\frac{u-\gamma_j}{\gamma_k-\gamma_j}\,\big(\Delta^+(\gamma_k)\phi(\varepsilon,\gamma+i\hbar e_k)+\Delta^-(\gamma_k)\phi(\varepsilon,\gamma-i\hbar e_k)\big).
\end{multline}
The operator $\tilde t_{N+1}(u)$ can be extended to all the space $W$ by the formula~\eqref{ttildephi2} for $\phi(\varepsilon,\gamma)\in W$.

The operators $\tilde Q_k\colon W\to W$ corresponding to $Q_k\colon V\to V$ can be defined as coefficients in powers of $u$:
\begin{align}
 \tilde t_{N+1}(u)=\sum_{k=0}^{N+1} (-1)^k u^{N+1-k} \tilde Q_k.
\end{align}
In particular,
\begin{align} \label{Q1eps}
 \tilde Q_1=\varepsilon.
\end{align}
Then the spectral problem~\eqref{spprpTc} in the separated variables takes the form
\begin{align} \label{spprpTcsv}
 \tilde Q_j \Phi_E(\varepsilon,\gamma)=E_j\Phi_E(\varepsilon,\gamma),
\end{align}
where the functions $\Phi_E(\vec x)$ are defined by the formula~\eqref{Phi_def}, or in terms of generating function:
\begin{equation} \label{ttildePhi}
 \tilde t_{N+1}(u)\Phi_E(\varepsilon,\gamma)= t_{N+1}(u;E)\Phi_E(\varepsilon,\gamma).
\end{equation}
Let us denote by $\tilde t_{N+1}(\gamma_k)$ the right substitution
\begin{align}
 \tilde t_{N+1}(\gamma_k)=\sum_{j=0}^{N+1} (-1)^j \gamma_k^{N+1-j} \tilde Q_j,
\end{align}
(the operators $y_k$ do not commute with $\tilde Q_j$). By virtue of the formula~\eqref{ttildephi2} we have
\begin{align} \label{ttildephigk}
  \tilde t_{N+1}(\gamma_k)=\Delta^+(\gamma_k) e^{i\hbar\frac{\partial}{\partial\gamma_k}}
	+\Delta^-(\gamma_k)e^{-i\hbar\frac{\partial}{\partial\gamma_k}}.
\end{align}
The equations~\eqref{Q1eps} and~\eqref{ttildephigk} play the roles of the relation~\eqref{Phij}. So that the problem~\eqref{spprpTcsv} is equivalent to the system of equations
\begin{align} \label{Baxter_md}
  \varepsilon\,\Phi_E(\varepsilon,\gamma)&=E_1\Phi_E(\varepsilon,\gamma), \\
	t_{N+1}(\gamma_k;E)\Phi_E(\varepsilon,\gamma)&=\Delta^+_k(\gamma)\Phi_E(\varepsilon,\gamma+i\hbar e_k)+\Delta^-_k(\gamma)\Phi_E(\varepsilon,\gamma-i\hbar e_k).
\end{align}
It means that the solution of~\eqref{Baxter_md} can be represented as follows
\begin{equation} \label{Phi_SoV}
 \Phi_E(\varepsilon,\gamma)=\delta(\varepsilon-E_1)\prod_{k=1}^N c(\gamma_k;E),
\end{equation}
where the function $c(u;E)$ satisfies the {\itshape Baxter equation}
\begin{equation} \label{Baxter_od}
  t_{N+1}(u;E)c(u;E)=\Delta^+(u)c(u+i\hbar;E)+\Delta^-(u)c(u-i\hbar;E).
\end{equation}

Thus, in order to find the eigenfunctions of the periodic Toda chain we need to find the transition function, the measure $\mu(\gamma)$, the coefficients $\Delta^\pm(u)$ and to solve the Baxter equation~\eqref{Baxter_od}. If we have a transition function $U_{E_1,\gamma}(\vec x)$, a solution~$c(u;E)$ and a measure~$\mu(\gamma)$, then we can present explicitly the eigenfunctions of the periodic Toda chain by the formula
\begin{equation} \label{Psi_Phi}
 \Psi_E(\vec x)=\int\limits_{\mathbb R^N} d\gamma\,\mu(\gamma) U_{E_1,\gamma}(\vec x) \prod_{k=1}^N c(\gamma_k;E).
\end{equation}
In the section~\ref{sec23} we consider a method to find the transition function. Substitutiong it to the formula~\eqref{orthog} one can find the measure $\mu(\gamma)$~\cite{S1}. In the section~\ref{sec24} we use the expression for the measure to obtain the coefficients $\Delta^\pm(\gamma_k)$. We do not concern the problem to solve the Baxter equetion~\eqref{Baxter_od}. It described in~\cite{Kharchev_P} in details.

\section{Transition function and $\Lambda$-operators}
\label{sec23}

Now we explain main ideas of~\cite{S1} how to apply the $\Lambda$-operators method of~\cite{Derkachov} to the derivation of the transition function defined in the previous section by the equations~\eqref{CU}, \eqref{H1U} and the condition~\eqref{addcU}. The equation~\eqref{CU} is equivalent to the equation
\begin{align}
 A_N(u)\, U_{\varepsilon,\gamma}(\vec x)&=\prod_{k=1}^{N}(u-\gamma_k)\, U_{\varepsilon,\gamma}(\vec x). \label{ANU}
\end{align}
The space of the solutions of this equations is invariant under multiplying by a function of $x_N$. The equation~\eqref{H1U} fix the $x_N$-depending factor up to a constant. Let us consider the equation
\begin{align}
 A_N(u)\, \psi_{\gamma}(x)&=\prod_{k=1}^{N}(u-\gamma_k)\, \psi_{\gamma}(x), \label{ANpsi}
\end{align}
for a function $\psi_\gamma(x)$ of $N$ variables, where $x=(x_1,\ldots,x_N)$. If $\psi_\gamma(x)$ is a solution of the equation~\eqref{ANpsi} satisfying the condition
\begin{align} \label{addcpsi}
 \psi_{\sigma(\gamma)}(x)&=\psi_\gamma(x), & &\text{for all } \sigma\in S_{N},
\end{align}
then the function
\begin{equation} \label{UxNpsi}
 U_{\varepsilon,\gamma}(x_1,\ldots,x_{N+1})=
  e^{\frac{i}{\hbar}\big(\varepsilon-\sum\limits_{k=1}^N\gamma_k\big)x_{N+1}}
   \psi_\gamma(x_1,\ldots,x_N)
\end{equation}
solves the equations~\eqref{CU}, \eqref{H1U} and satisfies the condition~\eqref{addcU}. Thus the derivation of the transition function for the periodic Toda chain is reduced to the problem~\eqref{ANpsi}, \eqref{addcpsi}, which is, in fact, the spectral problem for the open Toda chain
\begin{align}
 H_k\psi_\gamma(x)=E_k\psi_\gamma(x),
\end{align}
where $E_k=\sum\limits_{j_1<\ldots<j_k}\gamma_{j_1}\cdots\gamma_{j_k}$. The eigenfunction of the open Toda chain $\psi_\gamma(x)$ should satisfy the condition~\eqref{addcpsi} because the eigenvalues of its integral of motion $H_k$ depend symmetrically on the parameters $\gamma_1,\ldots,\gamma_N$.

The main idea to solve the equation~\eqref{ANpsi} is following. Let us first note that the equation~\eqref{ANpsi} is equivalent to the system of $N$ equations
\begin{align} \label{AU0}
 A_N(\gamma_k)\psi_\gamma(x)&=0, & &k=1,\ldots,N.
\end{align}
Consider the first equation
\begin{equation} \label{Aq1U0}
 A_N(\gamma_1)\psi_\gamma(x)=0
\end{equation}
and notice that any solution to this equation satisfying the condition~\eqref{addcpsi} is a solution of the whole system~\eqref{AU0}.

We apply the Lax operator triangulation method proposed in~\cite{Pasquier} to solve the equation~\eqref{Aq1U0}. Consider the following gauge transformation of the Lax operators
\begin{align}
 L_k(u)\to\widetilde L_k(u)&=M_k L_k(u)M_{k-1}^{-1}, & &k=1,\ldots,N,
\end{align}
by the invertible matrices
\begin{align}
 M_k&=\begin{pmatrix}
   1 & 0 \\
   ie^{y_k} & 1
  \end{pmatrix} & &k=0,\ldots,N.
\end{align}
This transformation leads to the corresponding transformation of the monodromy matrix
\begin{equation} \label{T_til}
 T_{[N]}(u)\to\widetilde T_{[N]}(u)\equiv
  \begin{pmatrix}
   \widetilde A_N(u) & \widetilde B_N(u) \\
   \widetilde C_N(u) & \widetilde D_N(u)
  \end{pmatrix}
  =\widetilde L_N(u)\cdots\widetilde L_1(u)=M_N T_{[N]}(u) M_0^{-1}.
\end{equation}
Let us consider the equation
\begin{equation} \label{C_tilW}
 \widetilde C_N(u) W_u(x;y_0,\ldots,y_N)=0.
\end{equation}
Its solution can be present as the product
\begin{equation}
 W_u(x;y_0,\ldots,y_N)=\prod\limits_{k=1}^N w_k(u;x_k;y_{k-1},y_k),
\end{equation}
where each function $w_k(u;x_k;y_{k-1},y_k)$ is a solution of the equation
\begin{equation} \label{Ltilde21wu0}
 \widetilde L_k(u)_{21}w_k(u;x_k;y_{k-1},y_k)=0,
\end{equation}
where $\widetilde L_k(u)_{21}$ is the corresponding entry of the matrix $\widetilde L_k(u)$. The equation~\eqref{Ltilde21wu0} is ODE of the first order with the unique solution (up to a factor) 
\begin{equation}
 w_k(u;x_k;y_{k-1},y_k)=\exp\Big(\frac{i}{\hbar}u(x_k-y_{k-1})-\frac{1}{\hbar}e^{y_{k-1}-x_k}-\frac{1}{\hbar}e^{x_k-y_k}\Big).
\end{equation}
This means that the equation~\eqref{C_tilW} has a solution
\begin{equation} \label{W_u_xy}
 W_u(x;y_0,\ldots,y_N)=\exp\sum\limits_{k=1}^N\Big(\frac{i}{\hbar}u(x_k-y_{k-1})-\frac{1}{\hbar}e^{y_{k-1}-x_k}-\frac{1}{\hbar}e^{x_k-y_k}\Big).
\end{equation}

In the limit $y_0\to -\infty$, $y_N\to +\infty$ the formula~\eqref{T_til} gives us the equality
\begin{equation} \label{ANlim}
A_N(u)\hm=-i\lim\limits_{\substack{y_0\to -\infty \\ y_N\to +\infty}} e^{-y_N}\widetilde C_N(u).
\end{equation}
Therefore, multiplying the equation~\eqref{C_tilW} by
$-ie^{-y_N}e^{\frac{i}{\hbar}u(y_0+y_N)}$, taking the same limit
as in~\eqref{ANlim} and setting $u\hm=\gamma_1$ we arrive to the equation~\eqref{Aq1U0} with the solution
$\psi_{\gamma}(x)\hm=\Lambda_{\gamma_1}(x;y)$, where $y=(y_1,\ldots,y_{N-1})$ and
\begin{equation} \label{Lamb}
 \begin{split}
 \Lambda_u(x_1,\ldots,x_N;y_1,\ldots,y_{N-1})
     &=\lim\limits_{\substack{y_0\to -\infty \\ y_N\to +\infty}}
        e^{\frac{i}{\hbar}u(y_0+y_N)}W_{u}(x;y_0,\ldots,y_N)=\\
     =\exp\Big(\frac{i}{\hbar}u(\sum\limits_{k=1}^{N}x_k-&\sum\limits_{k=1}^{N-1}y_k)
        -\frac{1}{\hbar}\sum\limits_{k=1}^{N-1}(e^{y_k-x_{k+1}}+e^{x_k-y_k})\Big).
 \end{split}
\end{equation}
Let $\Lambda_N(u)$ be an operator with the kernel
$\Lambda_u(x_1,\ldots,x_N;y_1,\ldots,y_{N-1})$, i.e.
\begin{equation} \label{Lambf}
 (\Lambda_N(u)\cdot s)(x)
   =\int\limits_{\mathbb R^{N-1}}dy\,\Lambda_u(x_1,\ldots,x_N;y_1,\ldots,y_{N-1})s(y).
\end{equation}
This operator acts from a space of functions of $N-1$ variables to a space of functions of $N$ variables and called $\Lambda$-operator. This notation is related with the graphical representation of this operator (see~\cite{Derkachov,S1}). The expression~\eqref{Lambf} for an arbitrary function of $N-1$ variables $s(y)$ and $u=\gamma_1$ is a more general solution of the equation~\eqref{Aq1U0}. So, we have to select the function $s(y)$ parametrised by $\gamma$ such that the function $(\Lambda_N(\lambda_1)\cdot s)(x)$ satisfy the condition~\eqref{addcpsi}.

In~\cite{S1} we prove the following property of $\Lambda$-operators:
\begin{align} \label{Lamoppr}
 \Lambda_N(u)\Lambda_{N-1}(v)=\Lambda_N(v)\Lambda_{N-1}(u).
\end{align}
This property is suggested in~\cite{Derkachov} as a key point of the $\Lambda$-operator method for the $XXX$-model. It leads to 

\begin{theor} \label{Th_Uq}
The following solution of the equation~\eqref{Aq1U0}
\begin{equation} \label{Uq}
 \psi_\gamma(x)=(\Lambda_N(\gamma_1)\cdots\Lambda_2(\gamma_{N-1})\Lambda_1(\gamma_N)\cdot1)(x_1,\ldots x_N)
\end{equation}
is invariant under the permutation of the variables $\gamma_1,\ldots,\gamma_N$, i.e. it satisfies the condition~\eqref{addcpsi}. (Here $1$ is a function of zero number of variables.) So that the function~\eqref{Uq} is a solution to the equation~\eqref{ANpsi} satisfying~\eqref{addcpsi}.
\end{theor}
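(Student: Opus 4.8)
The plan is to reduce the full $S_N$-invariance to invariance under adjacent transpositions and to feed this into the elementary exchange relation~\eqref{Lamoppr}. Note first that $\psi_\gamma$ as written in~\eqref{Uq} already solves the single equation~\eqref{Aq1U0}: the leftmost factor $\Lambda_N(\gamma_1)$ has kernel $\Lambda_{\gamma_1}(x;y)$, which by its very construction is annihilated by $A_N(\gamma_1)$ in the variable $x$, so applying $A_N(\gamma_1)$ to the whole integral $\Lambda_N(\gamma_1)\cdot\big(\Lambda_{N-1}(\gamma_2)\cdots\Lambda_1(\gamma_N)\cdot1\big)$ passes the operator through the outer integration and kills the kernel. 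Hence the only substantive claim is the symmetry~\eqref{addcpsi}; once that is secured, passage to the whole system~\eqref{AU0}, and therefore to~\eqref{ANpsi}, will be automatic.

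First I would record the exchange relation in the index-shifted form $\Lambda_k(u)\Lambda_{k-1}(v)=\Lambda_k(v)\Lambda_{k-1}(u)$ for every $k=2,\ldots,N$, not merely for $k=N$ as stated in~\eqref{Lamoppr}; the derivation of~\eqref{Lamoppr} in~\cite{S1} is insensitive to the actual value of the index, so this costs nothing but must be made explicit. Now observe that in the ordered product~\eqref{Uq} the $j$-th factor from the left is $\Lambda_{N-j+1}(\gamma_j)$. Applying the exchange relation to the adjacent pair at positions $j$ and $j+1$, holding the operators on its left and the function produced on its right fixed, replaces $\Lambda_{N-j+1}(\gamma_j)\Lambda_{N-j}(\gamma_{j+1})$ by $\Lambda_{N-j+1}(\gamma_{j+1})\Lambda_{N-j}(\gamma_j)$. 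The operator indices stay pinned to their positions, while the spectral parameters $\gamma_j$ and $\gamma_{j+1}$ are interchanged; this is precisely the action of the elementary transposition $s_j=(j,j+1)$ on the tuple $\gamma$.

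Since $s_1,\ldots,s_{N-1}$ generate $S_N$, repeated use of these local moves realises an arbitrary $\sigma\in S_N$ and shows $\psi_{\sigma(\gamma)}(x)=\psi_\gamma(x)$, which is exactly~\eqref{addcpsi}. (The innermost object, the constant $1$, is a function of no variables and is trivially unaffected.) With the symmetry in hand I would finish as follows: substituting $\sigma(\gamma)$ for $\gamma$ in the identity $A_N(\gamma_1)\psi_\gamma=0$, for a permutation $\sigma$ with $\sigma(1)=k$, gives $A_N(\gamma_k)\psi_{\sigma(\gamma)}=0$, and since $\psi_{\sigma(\gamma)}=\psi_\gamma$ this yields $A_N(\gamma_k)\psi_\gamma=0$ for every $k=1,\ldots,N$. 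That is the full system~\eqref{AU0}, which is equivalent to~\eqref{ANpsi}.

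The main obstacle is not the combinatorics—generation of $S_N$ by adjacent transpositions is routine—but the analytic legitimacy of the local rewriting. The operators $\Lambda_k(u)$ are integral operators with distributional kernels of the form~\eqref{Lamb}, so inserting the identity $\Lambda_k(u)\Lambda_{k-1}(v)=\Lambda_k(v)\Lambda_{k-1}(u)$ at an interior position of a long composition requires that the iterated integral transforms converge and that the relevant Fubini-type interchanges are valid, uniformly in the remaining variables. I therefore expect the real work to lie in checking that~\eqref{Lamoppr} holds in the index-independent form with kernels understood as tempered distributions, so that it may be applied inside the composition~\eqref{Uq} without disturbing the factors on either side.
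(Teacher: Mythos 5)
Your proposal is correct and follows essentially the same route as the paper: the paper's proof also reduces the claim to invariance under elementary transpositions of $\gamma_{n-1},\gamma_n$ and verifies these by the exchange relation~\eqref{Lamoppr}, which is indeed established in~\cite{S1} for an arbitrary adjacent pair of operator indices (the kernel identity is proved there for every position in the chain, via the Macdonald-function computation and the diagram technique), not just for the leftmost pair. Your closing concern about analytic legitimacy is reasonable but mild in this setting, since the kernels~\eqref{Lamb} are genuine rapidly decaying functions (the terms $-e^{y_k-x_{k+1}}-e^{x_k-y_k}$ in the exponent ensure absolute convergence of the iterated integrals), so the Fubini interchanges used when applying~\eqref{Lamoppr} inside the composition~\eqref{Uq} are justified.
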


Substituting the explicit expression~\eqref{Lamb} for the kernel of the first operator $\Lambda_N(\gamma_N)$ (we can exchange $\gamma_1$ and $\gamma_N$) one obtains the recurrent formula for the function~\eqref{Uq}
\begin{equation} \label{psi_rec}
 \begin{split}
 \psi_{\gamma_1,\ldots,\gamma_N}(x_1,\ldots,x_N)=\int\limits_{\mathbb R^{N-1}}dy_1\ldots dy_{N-1}\,
  \psi_{\gamma_1,\ldots,\gamma_{N-1}}(y_1,\ldots,y_{N-1}) \times \\
 \times\exp\Big(\frac{i}{\hbar}\gamma_N(\sum\limits_{k=1}^N x_k-\sum\limits_{k=1}^{N-1}y_k)
        -\frac{1}{\hbar}\sum\limits_{k=1}^{N-1}(e^{y_k-x_{k+1}}+e^{x_k-y_k})\Big).
 \end{split}
\end{equation}
The consecutive applications of this formula allow to derive the following integral representation for the eigenfunctions of open Toda chain
\begin{multline} \label{psi_GG}
 \psi_{\gamma}(z_{N1},\ldots,z_{NN})=\int\limits_{\mathbb R^{\frac{N(N-1)}2}}\prod_{k=1}^{N-1}\prod_{j=1}^{k}dz_{kj}\,
  \exp\bigg(\frac{i}{\hbar}\Big(\gamma_N\sum\limits_{j=1}^N z_{Nj}
    +\sum_{k=1}^{N-1}(\gamma_k-\gamma_{k+1})\sum_{j=1}^k z_{kj}\Big)-\\
    -\frac{1}{\hbar}\sum_{k=1}^N\sum_{j=1}^{k-1}\big(e^{z_{kj}-z_{k-1,j}}
    +e^{z_{k-1,j}-z_{k,j+1}}\big)\bigg).
\end{multline}
The corresponding representation for the Toda chain transition function is obtained via the multiplication by the factor
\begin{align}
 e^{\frac{i}{\hbar}\big(E_1-\sum\limits_{k=1}^N\gamma_k\big)x_{N+1}}.
\end{align}
This representation is called Gauss-Givental representation~\cite{Kharchev_GG}, \cite{Givental}.

\section{Properties of the Baxter $Q$-operators and $\Lambda$-op\-era\-tors}
\label{sec24}

The Baxter $Q$-operators, for which the Lax operator triangulation method was first introduced, defined also by the kernel $W_u(x;y_0,\ldots,y_N)$ but with the periodic boundary conditions $y_0=y_N$. The $Q$-operator for the $N$-particle periodic Toda chain acts from a space of functions of $N$ variables to the same space by the formula 
\begin{equation} \label{Lambf_Q}
 (\hat Q_N(u)\cdot s)(x)
   =\int\limits_{\mathbb R^N}dy_1\cdots dy_N\,W_u(x_1,\ldots,x_N;y_N,y_1,\ldots,y_N)s(y_1,\ldots,y_N).
\end{equation}
These operators satisfy the following relations
\begin{align}
 \hat Q_N(u)\hat Q_N(v)&=\hat Q_N(v)\hat Q_N(u), \label{Qoppra} \\
 \hat Q_N(u)\hat t_N(v)&=\hat t_N(v)\hat Q_N(u), \label{Qopprb} \\
\hat t_N(u)\hat Q_N(u)&=i^N \hat Q_N(u+i\hbar)(x)+i^{-N} \hat Q_N(u-i\hbar). \label{Qopprc}
\end{align}

The property~\eqref{Lamoppr} of the $\Lambda$-operators is analogous to the commutativity of the Baxter $Q$-operators~\eqref{Qoppra}. The formula~\eqref{Qopprb} is associated to the fact that the functions~\eqref{Lambf} satisfy the equation~\eqref{Aq1U0}. The first and second terms of the right hand side of the Baxter equation~\eqref{Qopprc} are equal to $\tilde D_N(u)\hat Q_N(u)$ and $\tilde A_N(u)\hat Q_N(u)$ respectively, where $y_0=y_N$. The corresponding equations for the $\Lambda$-operators have the form~\cite{S1}
\begin{align}
C_N(u)\Lambda_N(u)&=i^{-N-1}\Lambda_N(u-i\hbar), &
B_N(u)\Lambda_N(u)&=i^{N-1}\Lambda_N(u+i\hbar). \label{LpropCB}
\end{align}

The properties~\eqref{LpropCB} allow us to determine the coefficients~\eqref{Delt}. Indeed, taking into account that the function~\eqref{Uq} satisfies the condition~\eqref{addcpsi} we obtain
\begin{align}
C_N(\gamma_k)\psi_{\gamma}(x)&=i^{-N-1}\psi_{\gamma-i\hbar e_k}(x), &
B_N(\gamma_k)\psi_{\gamma}(x)&=i^{N-1}\psi_{\gamma+i\hbar e_k}(x). \label{psipropCB}
\end{align}
Substituting~\eqref{UxNpsi} to the left hand sides of the formulae~\eqref{AU}, \eqref{DU} and taking into account the relations~\eqref{Arec}, \eqref{AU0} and $D_{N+1}(u)=-e^{x_N}B_N(u)$ we obtain the formulae~\eqref{AU}, \eqref{DU} with the coefficients
\begin{align}
 a_k(\varepsilon,\gamma)&=i^{-N-1}, &  d_k(\varepsilon,\gamma)&=i^{N+1}. \label{ak_dk}
\end{align}
Then, substituting the formulae~\eqref{ak_dk} and the Sklyanin measure
\begin{align}
  \mu(\gamma)=\frac{(2\pi\hbar)^{-N}}{N!}\prod_{k<m}
  \left(\Gamma\Bigl(\frac{\gamma_m-\gamma_k}{i\hbar}\Bigr)
        \Gamma\Bigl(\frac{\gamma_k-\gamma_m}{i\hbar}\Bigr)\right)^{-1}
\end{align}
to the definitions~\eqref{Delt} we obtain
\begin{align}
 \Delta^+_k(\varepsilon,\gamma)&=\Delta^+(\gamma_k)=i^{N+1}, &  \Delta^-_k(\varepsilon,\gamma)&=\Delta^-(\gamma_k)=i^{-N-1}. \label{Delta_ak_dk}
\end{align}

The properties~\eqref{Qopprb} implies that the functions
\begin{equation} \label{LambPhi_Q}
 (\hat Q_{N+1}(u)\cdot \Psi_E)(\vec x)
   =\int\limits_{\mathbb R^{N+1}}d\vec y\,W_u(x_1,\ldots,x_{N+1};y_{N+1},y_1,\ldots,y_{N+1})\Psi_E(\vec y),
\end{equation}
(where $\vec y=(y_1,\ldots,y_{N+1})$ and $d\vec y=\prod\limits_{k=1}^{N+1} dy_k$), also satisfy the equation~\eqref{Psi_def}. Let us represent these functions in terms of the Theorem~\ref{Th_Uq}. Substituting~\eqref{Psi_Phi} and taking into account that the integration kernel~\eqref{W_u_xy} with $y_0=y_{N+1}$ has the form
\begin{multline}
 W_u(x_1,\ldots,x_{N+1};y_{N+1},y_1,\ldots,y_{N+1})= \\
 =\exp\Big(-\frac{i}\hbar u y_{N+1}-\frac1\hbar e^{y_{N+1}-x_1}-\frac1\hbar e^{x_{N+1}-y_{N+1}}\Big)\Lambda_u(x_1,\ldots,x_{N+1};y_1,\ldots,y_N)
\end{multline}
we derive
\begin{equation} \label{LambPhi_Q}
 (\hat Q_{N+1}(u)\cdot \Psi_E)(\vec x)
   = \int\limits_{\mathbb R^N} d\gamma\,\mu(\gamma)\prod_{k=1}^N c(\gamma_k;E) \psi_{(u,\gamma_1,\ldots,\gamma_N)}(\vec x)\psi_{(0,E_1-\sum_{k=1}^N\gamma_k-u)}(x_{N+1},x_1).
\end{equation}
Thus using Theorem~\ref{Th_Uq} we obtain another expression of a solution of the spectral problem~\eqref{Psi_def}.

\selectlanguage{french}

\chapter[Algèbres des courants associées à la courbe elliptique]{Algèbres des courants et groupes quantiques dynamique associées à la courbe elliptique}
\label{sec3}

Ici nous présentons les résultats des articles~\cite{S21,S22} (voir Appendices~\ref{SA21}, \ref{SA22}) sur les groupes quantiques dynamiques elliptiques décrits par les courants. Tous d'abord nous considérons les courants et leurs propriétés en détail.

Pour explicitement décrire des groupes quantiques plusieurs langages sont utilisées. Dans la section~\ref{sec18} nous avons considéré le langage des opérateurs de Lax. Puis, dans la section~\ref{sec19}, nous avons montré que les paires d'opérateurs de Lax décrivent les algèbres de Hopf quasi-triangulaires. Les exemples les plus simples d'algèbre de Hopf quasi-triangulaire sont des groupes quantiques sans paramètre spectral engendrées par un nombre fini de générateurs. Ils peuvent être décrits par des paires d'opérateurs de Lax~\eqref{Lpm_dec} qui ne dépendent pas de $u$, c'est le cas de $N_{\pm;ij}=1$, $\phi^{\alpha}_{\pm;ij}(u)=1$. On peut aussi considérer ces algèbres comme des quantifications des algèbres enveloppantes universelles des algèbres de Lie semi-simples et il est très utile d'écrire ce type d'algèbres par une base de Cartan-Weyl quantique. La quantification de la base de Cartan-Weyl est facilement généralisée au cas d'algèbres de Lie affines. Pour ces dernières et pour les Yangiens mentionnés à la fin de la section~\ref{sec19} (page~\pageref{Yangien}) des réalisations d'un type nouveau ont été construites~\cite{D88}. Dans ce travail Drinfeld a introduit une notion de courants jouant maintenant un rôle important dans la théorie des groupes quantiques et surtout pour des groupes quantiques (dynamiques) associées à la courbe elliptique.

\section{Courants en termes des distributions}
\label{sec31}

Les courants apparus dans~\cite{D88} pour la description des algèbres affines quantiques et Yangiens ont été définis comme éléments de l'espace $\HH[[z,z^{-1}]]$, où $\HH$ est une algèbre correspondante. Dans un cas plus général ils peuvent être compris comme des distributions avec valeurs dans $\HH$. Ce concept est convenable même pour des courants qui sont décomposés aux intégrales au lieu des séries formelles~\cite{KLPST,KLP98}.

On commence par quelques définitions. Soit $\mathfrak{K}$ une algèbre de fonctions sur une variété complexe $\Sigma$ (pas toutes les fonction sur $\Sigma$) avec la multiplication par points. Supposons l'algèbre $\mathfrak{K}$ munie d'une topologie. Soit $\la\cdot,\cdot\ra\colon\mathfrak{K}\times\mathfrak{K}\to\mathbb C$ une forme bilinéaire non-dégénérée, invariante et continue. Nous appellerons une algèbre $\mathfrak{K}$ munie de la forme $\la\cdot,\cdot\ra$ {\em une algèbre de fonction test}. La non-dégénération de cette forme permet de continûment plonger l'algèbre $\mathfrak{K}$ dans l'espace $\mathfrak{K}'$ -- l'espace des fonctionnelles linéaires continues sur $\mathfrak{K}$. Nous appellerons les éléments de $\mathfrak{K}$ des fonctions test, les éléments de $\mathfrak{K}'$ des distributions. Nous désignons les fonctions test avec leur arguments: $s(u)\in\mathfrak{K}$ ou sans leur arguments $s\in\mathfrak{K}$. Si $a\in\mathfrak{K}'$ est une fonctionnelle nous le désignons par $a(u)$. L'action de $a\in\mathfrak{K}'$ sur $s\in\mathfrak{K}$ est désignée par $\la a,s\ra$, ou par $\la a(u),s(u)\ra_u$. Ces notations sont aussi valable dans le cas $a\in\mathfrak{K}$. Nous munissons l'espace $\mathfrak{K}'$ de la topologie (la convergence) faible: on a $a(u)=\lim\limits_{i\to\infty}a_i(u)$ pour $a,a_i\in\mathfrak{K}'$ si et seulement si $\la a(u),s(u)\ra_u=\lim\limits_{i\to\infty}\la a_i(u),s(u)\ra_u$ pour tous $s(u)\in\mathfrak{K}$. La multiplication $\cdot\colon\mathfrak{K}\otimes\mathfrak{K}'\to\mathfrak{K}'$ est uniquement définie comme une extension de la multiplication $\cdot\colon\mathfrak{K}\otimes\mathfrak{K}\to\mathfrak{K}$. On peut aussi uniquement (mais pas toujours) définir un produit de deux distributions étendant un produit d'une fonction test avec une distribution: si $a(u)\in\mathfrak{K}'$ est représentée comme $a(u)=\lim\limits_{i\to\infty} a_i(u)$, où $a_i(u)\in\mathfrak{K}$ (nous supposons que toutes les distributions peuvent être représentées comme ça), et $b(u)\in\mathfrak{K}'$ alors
\begin{align}
 a(u)b(u)=b(u)a(u)=\lim\limits_{i\to\infty} a_i(u)b(u). \label{ab_lim}
\end{align}
On dit que le produit de $a(u)$ et $b(u)$ existe si la limite~\eqref{ab_lim} existe.
Dans le cas où la fonction $1$, égale à l'unité, appartient à $\mathfrak{K}$ nous utiliserons aussi la notation $\la a(u)\ra_u=\la a(u),1\ra_u$. Autrement, cette expression est définie comme $\la a(u)\ra_u=\lim\limits_{i\to\infty}\la a(u),s_i(u)\ra_u$, où $\lim\limits_{i\to\infty}s_i(u)=1$, si cette limite existe. Puisque la forme $\la\cdot,\cdot\ra$ est invariante nous avons $\la a(u),s(u)\ra_u=\la s(u)\cdot a(u)\ra_u$. 

Un produit tensoriel $\mathfrak{K}\otimes\mathfrak{K}$ peut être regardé comme un espace de fonction test sur $\Sigma\times\Sigma$
\begin{align}
 (s\otimes t)(u,v)&=s(u)t(v), \\
 \la s_1(u)t_1(v),s_2(u)t_2(v)\ra_{u,v}&=\la s_1(u),s_2(u)\ra_u\,\la t_1(v),t_2(v)\ra_v.
\end{align}
Un élément $a\in(\mathfrak{K}\otimes\mathfrak{K})'$ est appelé distribution de deux variables et désigné par $a(u,v)$. Pour ces distributions nous pouvons définir des actions partielles par une de ses variables: $\la a(u,v),s(u,v)\ra_u$ et $\la a(u,v),s(u,v)\ra_v$. L'action partielle d'une distribution $a(u,v)\in(\mathfrak{K}\otimes\mathfrak{K})'$ sur une fonction test $s(u)t(v)$, où $s,t\in\mathfrak{K}$, par $u$ est une distribution de variable $v$ désignée par $\la a(u,v),s(u,v)\ra_u$ et agissant sur une fonction test $\tilde t(v)\in\mathfrak{K}$ par la formule
\begin{align}
 \big\la\,\la a(u,v),s(u)t(v)\ra_u\, ,\tilde t(v)\big\ra_v=\la a(u,v),s(u)t(v)\tilde t(v)\ra_{u,v}.
\end{align}
L'action partielle sur un quelconque $s(u,v)\in\mathfrak{K}\otimes\mathfrak{K}$ est définie par linéarité. On peut définir aussi l'action partielle par $v$ à la même façon. En particulier, nous avons
\begin{align}
 \la a(u)b(v),s(u)t(v)\ra_u&=\la a(u),s(u)\ra_u\,t(v)b(v), \\
 \la a(u)b(v),s(u)t(v)\ra_v&=s(u)a(u)\,\la b(v),t(v)\ra_v,
\end{align}
où $a,b\in\mathfrak{K}'$ et $s,t\in\mathfrak{K}$.
Ces notations sont utilisées aussi pour les distributions de plusieurs variables.

Avant de donner une définition du courant, nous considérerons un espace important. C'est un sous-espace de $(\mathfrak{K}\otimes\mathfrak{K})'$ consistant des distributions $a(u,v)\in(\mathfrak{K}\otimes\mathfrak{K})'$ telles que
\begin{align}
 \la a(u,v),s(u)\ra_u&\in\mathfrak{K}, & \la a(u,v),s(v)\ra_v&\in\mathfrak{K},
\end{align}
$\forall s\in\mathfrak{K}$. Désignons cet espace par $\mathfrak{K}_2$. Il possède une propriété importante: pour tous les éléments $a\in\mathfrak{K}_2$, $b\in\mathfrak{K}'$ et $c\in(\mathfrak{K}\otimes\mathfrak{K})'$ les produits $a(u,v)b(u)$, $a(u,v)b(v)$, $a(u,v)c(v,w)$ et $a(v,w)c(u,v)$ existent toujours. L'espace $\mathfrak{K}_2$ joue un rôle important pour la construction des courants. Notons que ses éléments peuvent être identifiés avec des éléments de $\End\mathfrak{K}$ par l'action partielle: une distribution $a(u,v)\in\mathfrak{K}_2$ correspond à l'opérateur agissant sur $s\in\mathfrak{K}$ comme $\la a(u,v),s(u)\ra_u$. La composition $\hat b\circ\hat a$ des opérateurs $\hat b,\hat a\in\End\mathfrak{K}$ représentés par $b(u,v),a(u,v)\in\mathfrak{K}_2$ correspond à la "convolution" $\la a(u,v)b(v,w)\ra_v\in\mathfrak{K}_2$ définie comme
\begin{align}
 \big\la\la a(u,v)b(v,w)\ra_v,s(u)t(w)\big\ra_{u,w}=\big\la b(v,w),\la a(u,v),s(u)\ra_u\,t(w)\big\ra_{v,w}.
\end{align}
Par analogie nous définissons les distributions $\la a(u,v)b(u)\ra_u$ et $\la a(u,v)b(v)\ra_v$ agissant sur $s\in\mathfrak{K}$ par
\begin{align}
 \big\la \la a(u,v)b(u)\ra_u,s(v)\big\ra_v=\big\la b(u),\la a(u,v),s(v)\ra_v\big\ra_u, \\
 \big\la \la a(u,v)b(v)\ra_v,s(u)\big\ra_u=\big\la b(v),\la a(u,v),s(u)\ra_u\big\ra_v,
\end{align}
où $a\in\mathfrak{K}_2$, $b\in\mathfrak{K}'$.

L'exemple principal d'un élément de $\mathfrak{K}_2$ est une distribution $\delta(u,v)$ appelée la fonction delta. Elle agit comme
\begin{align}
 \la\delta(u,v),s(u,v)\ra_{u,v}=\la s(u,u)\ra_u
\end{align}
et peut être identifiée avec $\id_{\mathfrak{K}}$, puisque ses actions partielles sont égales à
\begin{align}
 \la\delta(u,v),s(u)\ra_u&=s(v), & \la\delta(u,v),s(v)\ra_v&=s(u).
\end{align}
Soient $\{\epsilon^i(u)\}$ et $\{\epsilon_i(u)\}$ des bases duales de l'espace $\mathfrak{K}$: $\la \epsilon^i(u),\epsilon_j(u)\ra=\delta^i_j$. Alors
\begin{align}
 \delta(u,z)=\sum_i\epsilon^i(u)\epsilon_i(z),
\end{align}
où la somme par $i$ converge par topologie faible de l'espace des distributions $(\mathfrak{K}\otimes\mathfrak{K})'$. La fonction delta possède la propriété suivante:
\begin{align}
 a(u)\delta(u,v)=a(v)\delta(u,v),
\end{align}
où les deux produits existent pour toutes distributions $a\in\mathfrak{K}'$.

Soit $\mathfrak{g}$ un espace vectoriel sur $\mathbb C$ de dimension infinie et soit $\hat{x}\colon\mathfrak{K}\to\mathfrak{g}$ un opérateur linéaire continu. Un courant est présenté d'habitude par l'expression
\begin{align}\label{def_xu}
 x(u)=\sum_i \epsilon^i(u)\hat{x}[\epsilon_i],
\end{align}
où $\hat{x}[\epsilon_i]$ signifie une action de $\hat{x}$ sur $\epsilon_i$. Puisque il ne dépend pas de choix des bases duales de $\mathfrak{K}$, un unique courant correspond à l'opérateur $\hat{x}$. Nous comprenons le courant~\eqref{def_xu} comme une distribution avec valeurs dans l'espace $\mathfrak{g}$, id est $x(u)\in\mathfrak{K}'\otimes\mathfrak{g}\cong\Hom(\mathfrak{K},\mathfrak{g})$. Il agit sur une fonction test $s(u)$ par
\begin{align}\label{x_act}
 \la x(u),s(u)\ra_u=\hat{x}[s].
\end{align}
Nous considérerons cette formule comme une définition universelle du courant $x(u)$ correspondant à l'opérateur $\hat x$.

Considérons un nombre fini d'opérateurs $\hat x_k\colon\mathfrak{K}\to\mathfrak{g}$, où $\mathfrak{g}$ est un espace vectoriel de dimension infini. Supposons que $\hat x_k[\epsilon_i]$ est une base de l'espace $\mathfrak{g}$ et soient $x_k(u)$ des courants correspondant à ces opérateurs. Imposons aux courants $x_k(u)$ les relations de commutation suivantes
\begin{align} \label{cu-com-rel}
 [x_k(u),x_l(v)]=\sum_m C^m_{kl}x_m(u)\delta(u,v)\,
\end{align}
où $C^m_{kl}$ sont les constantes de structure d'une algèbre de Lie de dimension finie $\mathfrak a$. Les relations~\eqref{cu-com-rel} munissent l'espace $\mathfrak{g}$ d'une structure d'algèbre de Lie. L'algèbre de Lie $\mathfrak{g}$ peut être regardée comme une algèbre de Lie $\mathfrak{a}\otimes\mathfrak{K}$ avec les crochets
\begin{align}
 [x\otimes s(u),y\otimes t(u)]=[x,y]_{\mathfrak a}\otimes s(u)t(u),
\end{align}
où $x,y\in\mathfrak a$ et $s,t\in\mathfrak{K}$. Dans le cas $\mathfrak{K}=\mathbb C[u^{-1},u]$ cette algèbre de Lie est appelée {\it une algèbre des lacets}. A chaque élément $x\in\mathfrak{a}$ on peut associer un courant
\begin{align}
 x(u)=\sum_i \epsilon^i(u)\, x\otimes\epsilon_i(v)=x\otimes\delta(u-v)
\end{align}
correspondant à l'opérateur
\begin{align}
 \hat x[s]=x\otimes s(v).
\end{align}
Si l'algèbre de Lie $\mathfrak{a}$ possède une forme bilinéaire invariante $(\cdot,\cdot)$, alors l'algèbre de Lie $\mathfrak{g}$ possède aussi une forme bilinéaire invariante 
\begin{align} \label{scpr_xsyt}
 \la x\otimes s,y\otimes t\ra_{\mathfrak{g}}=(x,y)\la s(u),t(u)\ra_u.
\end{align}
En termes de courants elle peut être définie par la formule
\begin{align} \label{scpr_xuyv}
 \la x(u),y(v)\ra_{\mathfrak{g}}=(x,y)\delta(u,v),
\end{align}
où $x(u)$ et $y(v)$ sont des courants associés aux éléments $x\in\mathfrak{a}$ et $y\in\mathfrak{a}$.

Par analogie, on peut considérer des courants pour une algèbre associative $\A$. Ce sont des courants $x_k(u)\in\mathfrak{K}'\otimes\A$ correspondant aux opérateurs $\hat x_k\colon\mathfrak{K}\to\A$ tels que les éléments $\hat x_k[\epsilon_i]\hm=\la x_k(u),\epsilon_i(u)\ra_u$ engendrent toute l'algèbre $\A$. L'algèbre $\A$ peut être présentée comme une quotient de l'algèbre engendrée librement par ces éléments par les relation de commutation entre eux. Le but principale d'utilisation des courants consiste à la représentation de ces relations de commutation sous forme compacte. Par exemple, les relations de commutation quadratiques (qui d'habitude décrivent les groupes quantiques) en termes des courants sont de la forme
\begin{align}
 \sum_{k,l}a_{k,l}(u,v)x_k(u)x_l(v)+\sum_k b_k(u,v)x_k(u)+\sum_k c_k(u,v)x_k(v)+d(u,v)=0, \label{rdcqetdc}
\end{align}
où $a_{k,l},b_k,c_k,d\in\mathfrak{K}_2$. Les algèbres décrites par des courants sont appelées algèbres des courants. En particulier, des courants pour une algèbre de Lie $x_k(u)\in\mathfrak{K}'\otimes\mathfrak{g}$ peuvent être considérée comme des courants pour son algèbre enveloppante universelle $U(\mathfrak{g})$ -- une algèbre associative. De telles algèbres sont appelées {\it algèbres de courants classiques}.

\section{Demi-courants et quantification des algèbres de courants}
\label{sec32}

Les courants jouent un rôle important pour la quantification des algèbres de courants. Considérons une algèbre de courants $\A_0=U(\mathfrak{g})$, où $\mathfrak{g}$ est une algèbre de Lie. Les quantification de cette algèbre considérées dans la théorie des groupes quantiques peuvent être associées avec des séparations des courants en deux ensembles $\{x^+_k(u)\}$ et $\{x^-_k(u)\}$. Des séparations différentes correspondent à des quantifications différentes. Elles sont liées à des structures différentes de bigèbre pour l'algèbre de Lie donnée. 
En particulier, si une structure de bigèbre est définie par une triplet de Manin $(\mathfrak{g},\mathfrak{g}^+,\mathfrak{g}^-)$, où $\mathfrak{g}^+$ et $\mathfrak{g}^-$ sont des sous-algèbres de $\mathfrak{g}$ telles que $\mathfrak{g}=\mathfrak{g}^+\oplus\mathfrak{g}^-$, alors $U(\mathfrak{g}^+)$ et $U(\mathfrak{g}^-)$ sont les algèbres de courants décrites par les courants $\{x^+_k(u)\}$ et $\{x^-_k(u)\}$ respectivement.

En pratique on doit souvent diviser un courant $x(u)$ en deux parties:
\begin{align} \label{xpm}
 x(u)=x^+(u)-x^-(u),
\end{align}
où $x^+(u),x^-(u)\in\mathfrak{K}'\otimes\A$ sont des distributions appelées demi-courants du courant $x(u)$. Le demi-courant $x^+(u)$ est appelé positif et $x^-(u)$ -- négatif. Pour contraste le courant initial $x(u)$ est appelé courant total. Les demi-courants sont aussi des courants pour la même algèbre $\A$. On demande que les demi-courants auraient des bonnes propriétés analytiques, qui sont spécifiques pour chaque cas. La division d'un courant total en demi-courants est réalisée par des distributions $G^+(u,v),G^-(u,v)\in\mathfrak{K}_2$ satisfaisant à l'équation
\begin{align}
 \delta(u,v)=G^+(u,v)-G^-(u,v). \label{deltaGG}
\end{align}
Elle sont appelées {\itshape distributions de Green}. Ces distribution sont aussi supposées avoir des propriétés analytiques données et l'équation~\eqref{deltaGG} est appelée un problème de Riemann pour la fonction delta. Soient des distributions de Green décomposées comme
\begin{align} \label{Gdecalbet}
 G^+(u,v)&=\sum_i\alpha^+_i(u)\beta^+_i(v), & G^-(u,v)=\sum_i\alpha^-_i(u)\beta^-_i(v),
\end{align}
où $\alpha^\pm_i,\beta^\pm_i\in\mathfrak{K}$. Alors les demi-courants correspondants sont de forme
\begin{align}
 x^+(u)&=\sum_i \alpha^+_i(u)\hat{x}[\beta^+_i], & x^-(u)&=\sum_i \alpha^-_i(u)\hat{x}[\beta^-_i].
\end{align}
Cette définitions de demi-courants ne dépend pas du choix des décompositions~\eqref{Gdecalbet} et ils sont exprimées à partir du courant total par
\begin{align}\label{hc_tc}
 x^+(u)&=\la G^+(u,v)x(v)\ra_v, & x^-(u)&=\la G^-(u,v)x(v)\ra_v.
\end{align}
Les demi-courants sont les courants correspondant aux opérateurs
\begin{align}
 \hat{x}^+&=\,\hat{x}\circ P^+, & \hat{x}^-&=-\,\hat{x}\circ P^-,
\end{align}
où les opérateurs $P^+$ et $-P^-$ correspondent aux distributions de Green : 
\begin{align}
 P^+[s](v)&=\la G^+(u,v),s(u)\ra_u, & P^-[s](v)&=-\la G^-(u,v),s(u)\ra_u,
\end{align}
$\forall s\in\mathfrak{K}$.

Considérons l'exemple le plus connu. Soit $\mathfrak{K}=\mathbb C[[u^{-1},u]]$ un espace de fonction test sur $\Sigma=\mathbb C\backslash\{0\}$ avec la forme
\begin{align}
 \la s(u),t(u)\ra_u=\oint\limits_{|u|=1}\frac{du}{(2\pi i)u} s(u)t(u),
\end{align}
pour $s,t\in\mathfrak{K}$. Soit $\mathfrak{g}=\slt\otimes\mathfrak{K}$ une algèbre des lacets. Soit $\{h,e,f\}\subset\slt$ une base de $\slt$ avec les relations de commutation
\begin{align}
 [e,f]&=h, & [h,e]&=2e, & [h,f]&=-2f.
\end{align}
Les courants totaux $h(u)$, $e(u)$ et $f(u)$ sont définis par
\begin{align} \label{tcUqslt}
 x(u)=\sum_{i\in\mathbb Z}\epsilon^i(u)x_i=\sum_{i\in\mathbb Z}u^{-i} x_i,
\end{align}
où $x\in\{h,e,f\}$, $x_i=x\otimes\epsilon_i$, $\epsilon^i(u)=u^{-i}$ et $\epsilon_i(u)=u^i$. Ils satisfont aux relations de commutation
\begin{align}
 [e(u),f(v)]&=h(u)\delta(u/v),\\
 [h(u),e(v)]&=2e(u)\delta(u/v), \\
 [h(u),f(v)]&=-2f(u)\delta(u/v), 
\end{align}
où $\delta(u/v)$ est une fonction delta pour $\mathfrak{K}=\mathbb C[[u^{-1},u]]$:
\begin{align}
 \delta(u/v)=\delta(u,v)=\sum_{i\in\mathbb Z} u^{-i}v^i
\end{align}
Le courant $e(u)$ est divisé en demi-courants par les distributions de Green
\begin{align}
 G_{(e)}^+(u,v)=G^+(u/v)&=\sum_{i\ge0} u^{-i}v^i, &  G_{(e)}^-(u,v)=G^-(u/v)&=-\sum_{i<0} u^{-i}v^i \label{GDtrig}
\end{align}
agissant comme
\begin{align}
 \la G^+(u/v),s(u)\ra_u&=\oint\limits_{|u|>|v|}\frac{du}{2\pi i}\frac{s(u)}{u-v} , &  \la G^-(u/v),s(u)\ra_u&=\oint\limits_{|u|<|v|}\frac{du}{2\pi i}\frac{s(u)}{u-v},
\end{align}
sur $s\in\mathfrak{K}$. Pour le courant de Cartan $h(u)$ on introduit les distributions de Green 
\begin{align}
 G_{(h)}^+(u,v)&=G^+(u/v)-\frac12, &  G_{(h)}^-(u,v)&=G^-(u/v)-\frac12
\end{align}
et pour le courant $f(u)$ --
\begin{align}
 G_{(f)}^+(u,v)&=G^+(u/v)-1, &  G_{(f)}^-(u,v)&=G^-(u/v)-1.
\end{align}
Cela nous donne les demi-courants suivants 
\begin{align}
 h^+(u)&=\frac12h_0+\sum_{i>0}u^{-i}h_i, &
 h^-(u)&=-\frac12h_0-\sum_{i<0}u^{-i}h_i, \label{dchUqslt} \\
 e^+(u)&=\sum_{i\ge0}u^{-i}e_i, &
 e^-(u)&=-\sum_{i<0}u^{-i}e_i, \label{dceUqslt} \\
 f^+(u)&=\sum_{i>0}u^{-i}f_i, &
 f^-(u)&=-\sum_{i\le0}u^{-i}f_i, \label{dcfUqslt}
\end{align}
où $h_i=h\otimes\epsilon_i$, $e_i=e\otimes\epsilon_i$, $f_i=f\otimes\epsilon_i$. Puisque les distributions de Green satisfont une condition
\begin{align}
 G_{(x)}^+(u,v)-G_{(x)}^-(u,v)=G^+(u/v)-G^-(u/v)=\delta(u/v),
\end{align}
où $x\in\{h,e,f\}$, nous avons
\begin{align}
 h(u)&=h^+(u)-h^-(u), & e(u)&=e^+(u)-e^-(u), & f(u)&=f^+(u)-f^-(u).
\end{align}
L'extension centrale et cocentrale d'algèbre de Lie $\slt\otimes\mathfrak{K}$ est l'algèbre affine $\widehat\slt$. Pour simplicité nous considérons toutes les formules au niveau zéro. Elles correspondent au cas sans charges centrale et cocentrale.

L'algèbre $U_q(\widehat\slt)$ mentionnée à la fin de la section~\eqref{sec19} est une quantification de l'algèbre $U(\widehat\slt)$, où $q=e^{\hbar/2}$ est un paramètre multiplicatif de quantification. Pour la première fois elle a été construite comme une quantification de $U(\widehat\slt)$ en termes de la base de Cartan-Weyl en utilisant l'analogie avec l'algèbre de Lie semi-simple. Puis, Drinfeld a décrit ces algèbres quantifiées en termes des courants pour toutes les algèbres de Lie semi-simples $\mathfrak{a}$~\cite{D88}. A niveau zéro c'est une algèbre engendrée par $h_i$, $e_i$ et $f_i$, $i\in\mathbb Z$. Les courants totaux et les demi-courants pour cette algèbre sont définis par~\eqref{tcUqslt} et \eqref{dchUqslt} -- \eqref{dcfUqslt} et satisfont les relations de commutation
\begin{gather}
[\psi^\pm(u),\psi^\pm(v)]=0, \qquad [\psi^+(u),\psi^-(v)]=0, \\
(q^{-1}u-qv)\psi^\pm(u)e(v)=(qu-q^{-1}v)e(v)\psi^\pm(u), \label{cr_UqKe} \\
(qu-q^{-1}v)\psi^\pm(u)f(v)=(q^{-1}u-qv)f(v)\psi^\pm(u), \label{cr_UqKf}\\
(q^{-1}u-qv)e(u)e(v)=(qu-q^{-1}v)e(v)e(u), \label{cr_Uqee} \\
(qu-q^{-1}v)f(u)f(v)=(q^{-1}u-qv)f(v)f(u), \label{cr_Uqff} \\
[e(u),f(v)]=\frac1{q-q^{-1}}\delta(u/v)\Big(\psi^+(u)-\psi^-(v)\Big), \label{cr_Uqef}
\end{gather}
où $\psi^+(u)=q^{2h^+(u)}$ et $\psi^-(u)=q^{2h^-(u)}$ sont les courants de Cartan multiplicatifs, qui sont utilisés dans le cas quantique. La comultiplication $\Delta$ de $U(\widehat\slt)$, qui a été introduite en termes de la base de Cartan-Weyl, ne peut pas représentée en termes des courants $\psi^+(u)$, $\psi^-(u)$, $e(u)$ et $f(u)$, mais Drinfeld a proposé la comultiplication simple pour ces courants:
\begin{align}
 &\Delta^{(D)} \psi^+(u)=\psi^+(u)\otimes \psi^+(u), &
 &\Delta^{(D)} \psi^-(u)=\psi^-(u)\otimes \psi^-(u), \label{cpKdef} \\
 &\Delta^{(D)} f(u)=f(u)\otimes \psi^+(u)+1\otimes f(u), & &\Delta^{(D)} e(u)=e(u)\otimes1+\psi^-(u)\otimes e(u). \label{cpefdef}
\end{align}
Nous désignons par $U^{(D)}_q(\widehat\slt)$ l'algèbre $U_q(\widehat\slt)$ munie de cette comultiplication. C'est une algèbre de Hopf avec une counité $\varepsilon$ et un antipode $S$ donnés par les formules
\begin{align}
 &\varepsilon(\psi^+(u))=1, & &\varepsilon(\psi^-(u))=1, \label{cuUqDK} \\ 
 &\varepsilon(f(u))=0, & &\varepsilon(e(u))=0, \label{cuUqDef} \\
 &S^{(D)}(\psi^+(u))=\psi^+(u)^{-1}, &
 &S^{(D)}(\psi^-(u))=\psi^-(u)^{-1}, \label{SUqDK} \\
 &S^{(D)}(f(u))=-f(u)\psi^+(u)^{-1}, & &S^{(D)}(e(u))=-\psi^-(u)^{-1}e(u). \label{SUqDef}
\end{align}
L'algèbre de Hopf $U^{(D)}_q(\widehat\slt)$ est une autre quantification de $U(\widehat\slt)$. Elle correspond à la séparation des courants en deux ensembles :
\begin{align}
 &\{h^+(u),f(u)\}, & &\{h^-(u),e(u)\}. \label{KpfKme}
\end{align}
En effet, les formules~\eqref{cpKdef}, \eqref{cpefdef}, \eqref{SUqDK} et \eqref{SUqDef} sont fermées par rapport aux ensembles~\eqref{KpfKme}. La quantification standard $U_q(\widehat\slt)$ correspond à la séparation suivant
\begin{align}
 &\{h^+(u),e^+(u),f^+(u)\}, & &\{h^-(u),e^-(u),f^-(u)\}, \label{dcpdcm}
\end{align}
id est en ensembles des demi-courants positifs et négatifs. La comultiplication $\Delta$ d'un courant positif (négatif) peut être exprimée par les courants positifs (négatifs), quoique les formules correspondantes soient plus compliquées que~\eqref{cpKdef}, \eqref{cpefdef}.

Les deux algèbres de Hopf $U^{(D)}_q(\widehat\slt)$ et $U_q(\widehat\slt)$ peuvent être représentées comme des produits de leurs sous-algèbre de Hopf. Soient $U_F$ et $U_E$ les sous-algèbres de $U^{(D)}_q(\widehat\slt)$ engendrées par $h_i$, $i\ge0$, $f_i$, $i\in\mathbb Z$, et $h_i$, $i\le0$, $e_i$, $i\in\mathbb Z$. Alors nous avons la dualité de Hopf $U_E=(U_F^{cop})^*$ par rapport au couple~\eqref{scpr_xuyv}. L'algèbre de Hopf $U^{(D)}_q(\widehat\slt)$ est naturellement isomorphe au double de Drinfeld ${\cal D}(U_F)$ quotienté par l'égalité $h_0\otimes1=1\otimes h_0$. Par analogie, nous avons la dualité $U^-=\big((U^+)^{cop})^*$ par rapport au même couple, où $U^-$ et $U^+$ sont les sous-algèbres de Hopf de l'algèbre $U_q(\widehat\slt)$ engendrées par $h_i$, $i\ge0$, $e_i$, $i\ge0$,$f_i$, $i>0$, et  $h_i$, $i\le0$, $e_i$, $i<0$,$f_i$, $i\ge0$. L'algèbre de Hopf $U_q(\widehat\slt)$ est naturellement isomorphe au double de Drinfeld ${\cal D}(U^+)$ quotienté par $h_0\otimes1=1\otimes h_0$. Ainsi les séparations des courants différentes pour l'algèbre $U(\widehat\slt)$ correspondent aux factorisations différentes de la même algèbre associative $U=U_q(\widehat\slt)$ quantifiant cette algèbre. Les ensembles~\eqref{KpfKme} correspondent à la factorisation $U=U_E\cdot U_F$  et les ensembles \eqref{dcpdcm} -- à $U=U^-\cdot U^+$.

Pour finir la section nous expliquons la relation des courants avec les opérateurs de Lax. Soit $U(R)$ une algèbre de Hopf décrite par une paire des opérateurs de Lax $L_+(u)$, $L_-(u)$ et une matrice $R$ trigonométrique~\eqref{R_trig} (voir la section~\ref{sec19}). Les opérateurs de Lax ont des décompositions de Gauss uniques 
\begin{align}
 L_\pm(u)=\begin{pmatrix}
					 1 & F^\pm(u) \\
					 0 & 1
          \end{pmatrix}
					\begin{pmatrix}
					 k^\pm(q^{-1}u) & 0 \\
					 0 & k^\pm(qu)^{-1}
					\end{pmatrix}
					\begin{pmatrix}
					 1 & 0 \\
					 E^\pm(u) & 1
          \end{pmatrix}.
\end{align}
Il est démontré dans~\cite{DF} (pour le cas $\gln$) que l'algèbre de Hopf $U(R)$ est isomorphe à l'algèbre de Hopf $U_q(\widehat\slt)$ et cet isomorphisme est donné par les formules
\begin{gather}
 F^\pm(u)=(q-q^{-1})f^\pm(u), \qquad\qquad\qquad  E^\pm(u)=(q-q^{-1})e^\pm(u), \\
 k^\pm(u)=q^{\pm h_0/2}\exp\Big(\pm2\log q\sum\limits_{i>0}\frac{h_{\pm i}u^{\mp i}}{q^i+q^{-i}}\Big)=\exp\Big(\frac{2\log q}{q^{u\partial_u}+q^{-u\partial_u}}h^\pm(u)\Big). \label{kpmUqdef}
\end{gather}
Les courants multiplicatifs de Cartan $\psi^\pm(u)$ sont liés avec~\eqref{kpmUqdef} comme
\begin{align}
 \psi^\pm(u)=k^\pm(q^{-1}u)k^\pm(qu).
\end{align}
Puis, Ding et Khoroshkin~\cite{DKh} ont montré que l'algèbre $U^{(D)}_q(\widehat\slt)$ peut être représenté en termes des opérateurs de Lax
\begin{align}
 L_F(u)=\begin{pmatrix}
					 1 & (q-q^{-1})f(u) \\
					 0 & 1
          \end{pmatrix}
					\begin{pmatrix}
					 k^\pm(q^{-1}u) & 0 \\
					 0 & k^\pm(qu)^{-1}
					\end{pmatrix}, \\
 L_E(u)=\begin{pmatrix}
					 k^\pm(q^{-1}u) & 0 \\
					 0 & k^\pm(qu)^{-1}
					\end{pmatrix}
					\begin{pmatrix}
					 1 & 0 \\
					 -(q-q^{-1})e(u) & 1
          \end{pmatrix},
\end{align}
où les courants $k^\pm(u)$ sont aussi définis par les formules~\eqref{kpmUqdef}. En particulier, les formules~\eqref{Delta_L_el}, \eqref{vareps_L}, \eqref{antip_L} nous donnent la comultiplication~\eqref{cpKdef}, \eqref{cpefdef}, la co\-uni\-té~\eqref{cuUqDK}, \eqref{cuUqDef} et l'antipode~\eqref{SUqDK}, \eqref{SUqDef}.

\section{Algèbres associées aux surfaces de Riemann}
\label{sec33}

Les quantifications d'une algèbre des lacets regardée à la section~\ref{sec32} correspond au cas trigonométrique, puisque le noyau des distributions de Green~\eqref{GDtrig} $\dfrac{u}{u-v}$ est une fonction trigonométrique des paramètres spectraux additives $u_{\text{add}}=\frac{1}{2\pi i}\log u$ et $v_{\text{add}}=\frac{1}{2\pi i}\log v$. Ces distributions de Green sont associées à la courbe trigonométrique -- le cylindre infini. En choisissant autres distributions de Green on peut obtenir une algèbre associée à une autre surface de Riemann. Soit $\mathfrak{K}$ une algèbre $\mathbb C[u^{-1},u]$ avec la forme
\begin{align} \label{spYang}
 \la s(u),t(u)\ra_u=\oint\limits_{|u|=1}\frac{du}{2\pi i} s(u)t(u),
\end{align}
où $s,t\in\mathfrak{K}$. Alors $u$ est une variable additive. L'algèbre quantique des courants définie par les distributions de Green avec le noyau $\frac{1}{u-v}$ est une quantification rationnelle de l'algèbre des lacets $\mathfrak{g}=\slt\otimes\mathfrak{K}$. Elle est isomorphe au double de Yangien ${\cal D}Y(\slt)$.


Les algèbres de Hopf peuvent être associées seulement aux courbes rationnelle, trigonométrique et elliptique. Elles correspondent aux solutions de l'Équation de Yang-Baxter Classique classifiées par Belavin et Drinfeld~\cite{BD}. En introduisant les quasi-algèbres de Hopf Drinfeld a proposé une problème de quantification des algèbres des courants classiques associées aux surfaces de Riemann du genre arbitraire au sens des quasi-algèbres de Hopf~\cite{D90}. Ces algèbres classiques peuvent être définies comme suit. Soit $X$ une surface de Riemann, soit $\omega$ une forme différentielle méromorphe sur $X$ et soit $\{p_1,\ldots,p_n\}$ un ensemble fini de points de $X$ contenant tous les pôles de la forme $\omega$. Soit $\lfK_p$ un corps locale dans le point $p$. C'est une complétion de l'espace de fonctions localement définies, qui sont holomorphe dans $U\backslash\{p\}$ pour un domaine $U\subset X$ quelconque contenant $p$: $p\in U$. La séquence $s_j\in\lfK_p$ converge au zéro si les ordres des pôles de $s_j$ dans $p$ sont bornés et tous les coefficients $s_{j,k}$ de la décomposition de Laurent
\begin{align}
 s_j(u_p)=\sum_{k=-k_0}^{\infty}s_{j,k}u_p^k
\end{align}
dans le point $p$ converge au zéro: $\lim\limits_{j\to\infty}s_{j,k}=0$, où $u_p$ est une variable complexe locale sur $X$ disparaît au point $p$. Ainsi $\lfK_p$ est une algèbre commutative sur $\mathbb C$, qui peut être représentée par $\mathbb C[u_p^{-1}][[u_p]]$. Soit $\mathfrak{K}=\lfK=\bigoplus\limits_{i=1}^n\lfK_{p_i}$ une algèbre commutative sur $\mathbb C$ munie par la forme
\begin{align} \label{spRes}
 \la (s_1,\ldots,s_n),(t_1,\ldots,s_n)\ra=\sum_{i=1}^n\res_{p_i}(\omega s_i t_i ),
\end{align}
où $s_i,t_i\in\lfK_{p_i}$. Définissons une algèbre de Lie correspondante comme $\mathfrak{g}=\slt\otimes\lfK$. Soit $\mathfrak{O}$ une algèbre de fonctions méromorphes sur $X$ et holomorphe sur $X\backslash\{p_1,\ldots,p_n\}$. Elle est une sous-algèbre de $\lfK$ avec la plongée $\mathfrak{O}\to\lfK$ défini par $s\to(s,\ldots,s)$. Considérons un sous-algèbre de Lie $\mathfrak{p}=(h\otimes\mathfrak{O})\oplus(f\otimes\lfK)\subset\mathfrak{g}$ et le problème suivant: construire une quantification $\HH$ de l'algèbre $U(\mathfrak{g})$ (en sens des quasi-algèbres de Hopf) telle que l'espace $\A\subset\HH$ défini comme image de $U(\mathfrak{p})$ par l'isomorphisme $U(\mathfrak{g})\cong\HH$ est une sous-quasi-algèbre de Hopf de la quasi-algèbre de Hopf $\HH$. Si cette quantification existe, on dit que l'algèbre $\HH$ est associée à les données $(\slt,X,\omega,\{p_1,\ldots,p_n\})$. Cette problème a été résolue pour le cas général par Enriquez et Roubtsov en termes des courants~\cite{ER1}. En particulier, une série formelle appelée noyau de Green a été introduite. La solution de la même problème pour $\mathfrak{p}=\slt\otimes\mathfrak{O}$ est présentée dans les travails postérieurs~\cite{ER2,ER3}. La notion de projection, à qui le chapitre prochain, a été introduit dans ces articles.

Dans les travails~\cite{ER2,ER3} des exemple des algèbre associées aux surface de Riemann du genre supérieur et, en particulier, une algèbre nouvelle associée à la courbe elliptique a été construite. Dans le travail~\cite{EF} Enriquez et Felder ont montré que après application d'un twist cette algèbre peut être représentée en termes des opérateur de Lax considérés par Felder dans~\cite{F2}. Ces opérateurs de Lax satisfont les relation $RLL$ avec la matrice $R$ de Felder. Au même temps une autre algèbre décrite en termes des opérateur de Lax avec la même matrice $R$ de Felder a apparue~\cite{ABRR97,JKOS1}. Dans~\cite{K98} il a été désignée par $U_{q,p}(\widehat\slt)$, puisque sa limite trigonométrique est isomorphe à l'algèbre de Hopf $U_q(\widehat\slt)$. Pour l'algèbre de Hopf regardée dans~\cite{EF} nous utilisons la notation $E_{\tau,\hbar}(\slt)$.

Au niveau du zéro les relations de commutation pour les deux algèbres coïncident en termes des courant et ainsi que en terme des opérateurs de Lax, mais les algèbres ne sont pas isomorphes. En particulier, les courants et les opérateurs de Lax possèdent des propriétés analytiques différentes. Les opérateurs de Lax présentés dans~\cite{F2} conviennent également aux deux algèbres. L'objet introduit là n'est pas complètement défini (comme une algèbre) et l'auteur l'appelle "algebra" (étant guillemets). La même situation est discutée dans~\cite{KLP99}. Deux algèbres rationnelles: le double du Yangien ${\cal D}Y(\slt)$ et l'algèbre $A_\hbar(\widehat\slt)$ (obtenue comme une limite rationnelle d'une autre algèbre elliptique $A_{q,p}(\widehat{\slt})$ correspondant à la matrice $R$ de Baxter-Belavin) possèdent les mêmes relations de commutation en termes des courants et en termes des opérateurs de Lax pour tous les niveaux de la charge centrale. Mais leurs courants et leurs opérateurs de Lax possèdent aussi des propriétés analytiques différentes et par conséquent ils sont différement décomposés par modes $\hat x[\epsilon_i]$ (au sens de~\eqref{def_xu}). Les courants de ${\cal D}Y(\slt)$ sont décomposés en séries formelles lorsque les courants de $A_\hbar(\widehat\slt)$ --  en intégrales formelles. Les relations de commutation en termes des modes sont différentes.

\section{Algèbres $E_{\tau,\hbar}(\slt)$ et $U_{q,p}(\widehat\slt)$}
\label{sec34}

Pour le cas elliptique la langue des modes est trop compliquée et elle n'est pas utilisée. C'est pourquoi les algèbres $E_{\tau,\hbar}(\slt)$ et $U_{q,p}(\widehat\slt)$ peuvent être confondues. Notre but est de décrire la différence entre ces algèbres en termes des courants. La première remarque sur cette différence a été écrite dans~\cite{JKOS2}. Les auteurs ont noté que les relations de commutation pour les extensions centrales des algèbres $E_{\tau,\hbar}(\slt)$ et $U_{q,p}(\widehat\slt)$ sont différentes. Ils ont expliqué ce fait par des contours d'intégration différents entrant dans l'expression des demi-courants via les courants totaux. Le contour d'intégration pour l'algèbre $E_{\tau,\hbar}(\slt)$ est un petit cercle autour du point $u=0$, lorsque le contour d'intégration pour l'algèbre $U_{q,p}(\widehat\slt)$ est un cercle autour du point $z=0$, où $z=q^{2u}$ est une variable multiplicative. Au niveau classique ces contours définissent le 2-cocycle qui définit l'extension centrale. Ensuite, dans l'article~\cite{EPR} on développe les idées de~\cite{KLP99} pour montrer la différence entre ces algèbres même à niveau zéro, quand les relations de commutation (et ainsi les formules pour comultiplication) coïncident. L'argument principal proposé là vient des décompositions différentes des demi-courants en séries formelles liées aux décompositions différentes des noyaux de Green en séries formelles.

Soit $\theta(u)$ une fonction thêta définie par les équations~\footnote{La fonction $\theta(u)$ est lié avec la fonction
\begin{align}
 &\vartheta_1(u)=\sum_{k\in\mathbb Z}e^{\pi i\tau(k+\frac12)^2+2\pi i(k+\frac12)(u+\frac12)}
& &\text{par la relation}
 &\theta(u)=\frac{\vartheta_1(u)}{\vartheta'_1(0)}.
\end{align}
}
\begin{align}
 \theta(u+1)&=-\theta(u), &  \theta(u+\tau)&=-e^{-2\pi iu-\pi i\tau}\theta(u), &   \theta'(0)&=1.
 \label{D_theta}
\end{align}
Les noyaux de Green de l'algèbre $E_{\tau,\hbar}(\slt)$ sont des fonctions méromorphes quasi-doublement périodiques décomposées en séries de Taylor autour de zéro:
\begin{align}
 \frac{\theta'(u-v)}{\theta(u-v)}&=\sum_{k=0}^\infty\frac{(-1)^k}{k!} \Big(\frac{\theta'(u)}{\theta(u)}\Big)^{(k)}v^k, \label{GkEk} \\
 \frac{\theta(u-v+\lambda)}{\theta(u-v)\theta(\lambda)}&=\sum_{k=0}^\infty\frac{(-1)^k}{k!} \Big(\frac{\theta(u+\lambda)}{\theta(u)\theta(\lambda)}\Big)^{(k)}v^k,\label{GkEef}
\end{align}
Les noyaux de Green de l'algèbre $U_{q,p}(\widehat\slt)$ sont les mêmes fonctions décomposées aux séries de Fourier:
\begin{align}
 \frac{\theta'(u-v)}{\theta(u-v)}&=\pi i+2\pi i\sum_{n\ne0}\frac{e^{-2\pi in(u-v)}}{1-e^{2\pi in\tau}}, \label{GkUk} \\
 \frac{\theta(u-v+\lambda)}{\theta(u-v)\theta(\lambda)}&=2\pi i\sum_{n\in\mathbb Z}\frac{e^{-2\pi in(u-v)}}{1-e^{2\pi i(n\tau-\lambda)}}. \label{GkUef} 
\end{align}

Dans~\cite{S21}, \cite{S22} (voir Appendices~\ref{SA21}, \ref{SA22}) nous combinons les idées des articles~\cite{JKOS2} et \cite{EPR} pour analyser la différence des algèbres $E_{\tau,\hbar}(\slt)$ et $U_{q,p}(\widehat\slt)$ en détail. Nous regardons les séries formelles~\eqref{GkEk} -- \eqref{GkUef} comme des distributions agissant sur les espaces correspondants de fonctions test. C'est pourquoi nous les appelons {\it distributions} de Green. L'espace de fonction test $\mathfrak{K}$ pour l'algèbre $E_{\tau,\hbar}(\slt)$ est un corps local à l'origine $\lfK_0\hm=\mathbb C[u^{-1}][[u]]$ muni de la forme
\begin{align} \label{spE}
 \la s(u),t(u)\ra_u=\oint\limits_{C_0}\frac{du}{2\pi i} s(u)t(u),
\end{align}
où $s,t\in\lfK_0$ et $C_0$ est un petit contour autour de l'origine. C'est un cas particulier de la forme~\eqref{spRes} pour la courbe elliptique  $X={\cal E}=\mathbb C/\Gamma$, où $(\Gamma=\mathbb Z+\tau\mathbb Z)$, avec un module $\tau$, $\Im\tau>0$, l'ensemble de points $\{x_1,\ldots,x_n\}=\{0\}\subset{\cal E}$ et la forme différentielle $\omega=du$. Les distributions de Green correspondantes sont
\begin{align}
 G^+_{(h)}(u,v)&=G(u,v),  & G^+_{(e)}(u,v)&=G^+_\lambda(u,v), & G^+_{(f)}(u,v)&=G^+_{-\lambda}(u,v), \label{DdGEp}\\
 G^-_{(h)}(u,v)&=-G(v,u), & G^-_{(e)}(u,v)&=G^-_\lambda(u,v), & G^-_{(f)}(u,v)&=G^-_{-\lambda}(u,v), \label{DdGEm}
\end{align}
où $G^+_\lambda(u,v)$, $G^-_\lambda(u,v)$ et $G(u,v)$ sont des éléments de $(\lfK_0\otimes\lfK_0)'$ avec les actions partielles par $u$ définies comme
\begin{align}
\la G_\lambda^+(u,v),s(u)\ra_u&=
  \oint\limits_{|u|>|v|}\frac{du}{2\pi i}\frac{\theta(u-v+\lambda)}{\theta(u-v)
  \theta(\lambda)}s(u)\,,
  \label{Glpmact} \\
 \la G^-_\lambda(u,v),s(u)\ra_u&=
  \oint\limits_{|u|<|v|}\frac{du}{2\pi i}\frac{\theta(u-v+\lambda)}{\theta(u-v)\theta(\lambda)}s(u)\,,
  \label{Glmact} \\
 \la G(u,v),s(u)\ra_u&=
  \oint\limits_{|u|>|v|}\frac{du}{2\pi i}\frac{\theta'(u-v)}{\theta(u-v)}s(u). \label{Gact}
\end{align}
Ces formules définissent aussi les actions par $v$:
\begin{align}
\la G_\lambda^+(u,v),s(v)\ra_v&=
  \oint\limits_{|u|>|v|}\frac{dv}{2\pi i}\frac{\theta(u-v+\lambda)}{\theta(u-v)
  \theta(\lambda)}s(v)\,,
  \label{Glpmactv} \\
 \la G^-_\lambda(u,v),s(v)\ra_v&=
  \oint\limits_{|u|<|v|}\frac{dv}{2\pi i}\frac{\theta(u-v+\lambda)}{\theta(u-v)\theta(\lambda)}s(v)\,,
  \label{Glmactv} \\
 \la G(u,v),s(v)\ra_v&=
  \oint\limits_{|u|>|v|}\frac{dv}{2\pi i}\frac{\theta'(u-v)}{\theta(u-v)}s(v). \label{Gactv}
\end{align}
Les parties gauches de~\eqref{Glpmact} -- \eqref{Gact} et \eqref{Glpmactv} -- \eqref{Gactv} sont des fonctions test de $v$ et de $u$ respectivement. Grâce à l'identité (en particulier)
\begin{align}
 G^+_\lambda(u,v)=-G^-_{-\lambda}(v,u)
\end{align}
les distributions de Green~\eqref{DdGEp}, \eqref{DdGEm} appartiennent à l'espace $\mathfrak{K}_2=(\lfK_0)_2$. Les distributions ${\cal G}(u-v)$, et ${\cal G}^+_\lambda(u-v)$ peuvent être représentées comme des séries~\eqref{GkEk}, \eqref{GkEef}.

L'algèbre de fonctions pour $U_{q,p}(\widehat\slt)$ est une algèbre $K$. Elle consiste en fonctions périodiques $s(u)$ sur $\mathbb C$: $s(u)=s(u+1)$, telles que $|s(u)|\le C e^{p|\Im u|}$, où $C,p>0$ dépendent de fonctions $s(u)$. Id est c'est un espace des séries de Fourier finies: $K=\mathbb[e^{-2\pi i u},e^{2\pi i u}]$. Grâce à périodicité, c'est un espace de fonctions sur le cylindre $\Cyl=\mathbb C/\mathbb Z$: $K=K(\Cyl)$. L'algèbre $K$ est munie de la forme
\begin{align}
 \la s(u),t(u)\ra=\int\limits_{-\frac12+\alpha}^{\frac12+\alpha}\frac{du}{2\pi i}s(u)t(u), \label{spJ}
\end{align}
où $s,t\in K$, qui ne dépend pas de $\alpha\in\mathbb C$ et est représentée comme une intégrale autour du point $z=0$, où $z=e^{2\pi i u}$ est une variable multiplicative. Le rôle du contour d'intégration est joué par le cycle du cylindre $\Cyl$. La convergence dans $K$ est donnée comme suit: la séquence d'éléments $s_i\in K$ tend vers zéro si on peut trouver des constantes $C,p>0$ telles que $|s_i(u)|\le C e^{p|\Im u|}$ et pour toutes les valeurs $u\in\mathbb C$ la séquence $s_i(u)\to0$. En particulier, si $s_n\to0$ alors les fonctions $s_i(u)$ tendent uniformément vers zéro. Par conséquent, la forme~\eqref{spJ} est continue par rapport aux deux arguments et définit le plongement continu $K\to K'$. Notons que un opérateur ${\cal T}_t\colon s(u)\mapsto s(u+t)$ est aussi continu. Par conséquent, pour chaque distribution $a(u)\in K'$ on peut définir la distribution $a(u-t)$ agissant comme
\begin{align}
 \la a(u-t),s(u)\ra_u=\la a(u),s(u+t)\ra_u.
\end{align}
Puisque $\la a(u),s(u+v)\ra_u\in K$ pour tous $a(u)\in K'$ et $s(u)\in K$ nous pouvons considérer $a(u-v)$ comme un élément de $K_2$. Les distributions de Green pour ce cas sont
\begin{align}
 {\cal G}^+_{(h)}(u,v)&={\cal G}(u-v),  & {\cal G}^+_{(e)}(u,v)&={\cal G}^+_\lambda(u-v), & {\cal G}^+_{(f)}(u,v)&={\cal G}^+_{-\lambda}(u-v), \label{DdGEpJ}\\
 {\cal G}^-_{(h)}(u,v)&=-{\cal G}(v,u), & {\cal G}^-_{(e)}(u,v)&={\cal G}^-_\lambda(u-v), & {\cal G}^-_{(f)}(u,v)&={\cal G}^-_{-\lambda}(u-v), \label{DdGEmJ}
\end{align}
où ${\cal G}^+_\lambda(u)$, ${\cal G}^-_\lambda(u)$ et ${\cal G}(u)$ sont des éléments de $K'$ agissant sur $s(u)\in K$ par les formules
\begin{align}
 \la{\cal G}_\lambda^+(u),s(u)\ra_u&=\int\limits_{-\Im\tau<\Im u<0}
 \frac{du}{2\pi i} \frac{\theta(u+\lambda)}{\theta(u)\theta(\lambda)}s(u), \label{GlpmdefJ} \\
 \la{\cal G}_\lambda^-(u),s(u)\ra_u&=\int\limits_{0<\Im(u)<\Im\tau}\frac{du}{2\pi i}
 \frac{\theta(u+\lambda)}{\theta(u)\theta(\lambda)}s(u), \label{GlmdefJ} \\
 \la{\cal G}(u),s(u)\ra_u
   &=\int\limits_{-\Im\tau<\Im(u)<0}\frac{du}{2\pi i}\frac{\theta'(u)}{\theta(u)}s(u), \label{GdefJ}
\end{align}
où les contours d'intégration sont des segments horizontaux $[\alpha-\frac12,\alpha+\frac12]$ dont les points satisfont les conditions correspondantes. Les décompositions des distributions de Green ${\cal G}(u-v)$ et ${\cal G}^+_\lambda(u-v)$ par des bases duales sont les parties droites de~\eqref{GkUk} et \eqref{GkUef}.

Puisque les demi-courants ont la structure~\eqref{hc_tc} avec les mêmes distributions de Green dans les cas classique et quantique, nous considérons toutes les formules au niveau classique: $\hbar\to0$ ($q\to1$). Désignons les algèbres de Lie correspondant par $\mathfrak{e}_\tau(\slt)$ et $\mathfrak{u}_\tau(\widehat\slt)$. Considérons d'abord les relations de commutation entre demi-courants $e^+_\lambda(u)$ et $f^+_\lambda(u)$, par exemple. Pour l'algèbre de Lie $\mathfrak{e}_\tau(\slt)$ elles sont écrites comme:
\begin{align}
 [e_\lambda^+(u),f_\lambda^+(v)]&=G_\lambda^+(u,v)\big(-h^+(u)+h^+(v)\big)+h\frac{\partial}{\partial\lambda}G_\lambda^+(u,v), \label{epmfpm}
\end{align}
lorsque pour $\mathfrak{u}_\tau(\widehat\slt)$ --
\begin{align}
 [e_\lambda^+(u),f_\lambda^+(v)]&={\cal G}_\lambda^+(u-v)\big(-h^+(u)+h^+(v)\big)+h\frac{\partial}{\partial\lambda}{\cal G}_\lambda^+(u-v), \label{epmfpmJ}
\end{align}
où $h=\la h(u),1\ra_u$. Les relations coïncident quand on écrit les noyaux correspondants au lieu des distributions. Mais en termes des distributions elles sont différentes.

Les demi-courants de l'algèbre de Lie $\mathfrak{u}_\tau(\widehat\slt)$ possèdent les propriétés importantes
\begin{align}
 h^+(u-\tau)&=2\pi i h+h^-(u), & e^+_\lambda(u-\tau)&=e^{2\pi i\lambda}e^-_\lambda(u), & f^+_\lambda(u-\tau)&=e^{-2\pi i\lambda}f^-_\lambda(u)
\end{align}
suivantes de les relations correspondantes pour les distributions de Green
\begin{align}
{\cal G}_\lambda^+(u-v-\tau)&=e^{2\pi i\lambda}{\cal G}_\lambda^-(u-v),  &{\cal G}(u-v-\tau)&=2\pi i-{\cal G}(v-u). \label{shift_tau}
\end{align}
Cela nous permet d'exprimer les demi-courants négatifs par les demi-courants positifs et vice versa, ce qui signifie qu'on peut effectivement décrire l'algèbre de Lie $\mathfrak{u}_\tau(\widehat\slt)$ (et algèbre $U_{q,p}(\widehat\slt)$ dans le cas quantique) par les demi-courants positifs seuls ou par les demi-courants négatifs seuls.

En même temps on ne peut pas exprimer les demi-courants positifs (négatifs) de l'algèbre $\mathfrak{e}_\tau(\slt)$ par d'autres. Ce fait est lié d'abord au fait que les contours d'intégration pour les distributions de Green positives et négatives ne sont pas liés par une translation parallèle. Puis les expressions de type $h^+(u-\tau)$ en ce cas ne sont pas définies, parce que l'opérateur $T_t\colon s(u)\mapsto s(u+t)$ n'est pas continu. En effet, considérons, par exemple, les sommes $s_N(u)=\sum_{n=0}^N(\frac{u}{\alpha})^n$. Pour chaque $v$ on peut trouver un nombre $\alpha$ tel que la séquence $s_N(u+v)$ diverge, lorsque $N\to\infty$.

Mais la cause principale de l'impossibilité de relation entre les demi-courants positifs et négatifs est liée à la particularité suivante de l'algèbre de Lie $\mathfrak{e}_\tau(\slt)$. Considérons les opérateurs  $P^+$, $P^-$ et $P_\lambda^+$, $P_\lambda^-$ correspondant aux distributions de Green~\eqref{DdGEp},~\eqref{DdGEm}. Ils agissent sur $s\in\lfK_0$ comme
\begin{align}
 P_\lambda^+[s](v)&=\la G_\lambda^+(u,v),s(u)\ra_u, & P_\lambda^-[s](v)&=-\la G_\lambda^-(u,v),s(u)\ra_u, \\
 P^+[s](v)&=\la G(u,v),s(u)\ra_u, & P^-[s](v)&=\la G(v,u),s(u)\ra_u.
\end{align}
Les opérateurs $P^+$, $P^-$ et $P_\lambda^+$, $P_\lambda^-$ sont idempotents orthogonaux:
\begin{align}
 P^++P^-&=\id, & P^\pm\circ P^\pm&=P^\pm, & P^-\circ P^+&=P^+\circ P^-=0, \\
 P_\lambda^++P_\lambda^-&=\id, & P_\lambda^\pm\circ P_\lambda^\pm&=P_\lambda^\pm, & P_\lambda^-\circ P_\lambda^+&=P_\lambda^+\circ P_\lambda^-=0.
\end{align}
Ils définissent une décomposition de $\lfK_0$ en deux sous-espaces: $\lfK_0={\cal O}\oplus\Lambda_0$ et $\lfK_0={\cal O}\oplus\Lambda_\lambda$, où ${\cal O}=\mathbb C[[u]]$, $\Lambda_0$ est un sous-espace de $\lfK_0$ engendré par les fonctions $\epsilon^{k;0}(u)=\frac{1}{k!}\big(\frac{\theta'(u)}{\theta(u)}\big)^{(k)}$, $k\ge0$, et $\Lambda_\lambda$ pour $\lambda\not\in\Gamma$ est un sous-espace de $\lfK_0$ engendré par les fonctions $\epsilon^{k;\lambda}(u)=\frac{1}{k!}\Big(\frac{\theta(u+\lambda)}{\theta(u)\theta(\lambda)}\Big)^{(k)}$, $k\ge0$.
Cela signifie que la décomposition des courants totaux en demi-courants entraîne la représentation de l'algèbre de Lie $\mathfrak{g}=\mathfrak{e}_\tau(\slt)$ en une somme directe. Ainsi les ensembles~\eqref{KpfKme} correspond à la décomposition
\begin{align}
 \mathfrak{g}=\mathfrak{g}_e\oplus\mathfrak{g}_f,
\end{align}
où $\mathfrak{g}_e=(h\otimes\Lambda_0)\oplus(e\otimes\lfK_0)$ et $\mathfrak{g}_f=(h\otimes{\cal O})\oplus(f\otimes\lfK_0)$ sont les sous-algèbres de Lie engendrées par les courants du premier et du second ensemble respectivement.
Cela nous donne la factorisation
\begin{align}
 E^{(D)}_{\tau,\hbar}(\slt)=\A_E\cdot\A_F,
\end{align}
où $E^{(D)}_{\tau,\hbar}(\slt)$ est une algèbre $E_{\tau,\hbar}(\slt)$ muni d'une comultiplication analogue à la comultiplication de Drinfeld~\eqref{cpKdef}, \eqref{cpefdef} de l'algèbre $U_{q}(\widehat\slt)$, et $\A_E$ et $\A_F$ sont des sous-algèbres de Hopf engendrées par $\la h(u),\epsilon^{k;0}(u)\ra_u$, $k\ge0$, $\la e(u),s(u)\ra_u$, et $\la h(u),\epsilon_{k;0}(u)\ra_u$, $k\ge0$, $\la f(u),s(u)\ra_u$, respectivement, où $s\in\lfK_0$ et $\epsilon_{k;0}=(-u)^k$ pour $k\ge0$.
Les ensembles~\eqref{dcpdcm} correspond à la décomposition
\begin{align}
 \mathfrak{g}=\mathfrak{g}^-_\lambda\oplus\mathfrak{g}^+,
\end{align}
où $\mathfrak{g}^-_\lambda=(h\otimes\Lambda_0)\oplus(e\otimes\Lambda_\lambda) \oplus(f\otimes\Lambda_{-\lambda})$ et $\mathfrak{g}^+=\slt\otimes{\cal O}$ sont sous-espaces de $\mathfrak{g}$ engendrés par les courants correspondants ($\mathfrak{g}^+$ est une sous-algèbre de Lie). Nous avons aussi la factorisation
\begin{align}
 E^{(std)}_{\tau,\hbar}(\slt)=\A^-\cdot\A^+,
\end{align}
où $E^{(sdt)}_{\tau,\hbar}(\slt)$ est une algèbre $E_{\tau,\hbar}(\slt)$ muni par une comultiplication analogique à la comultiplication standard d'algèbre $U_{q}(\widehat\slt)$, $\A^+$ est une sous-algèbre de Hopf de $E_{\tau,\hbar}(\slt)$ engendrée par $\la x(u),\epsilon_{k;0}(u)\ra_u$, où $x\in\{h,e,f\}$ et $\A^-$ est un sous-espace complété algébrique\-ment $\A^+$, qui est défini au Chapitre~\ref{sec4}. 

Ainsi nous voyons que les demi-courants~\eqref{KpfKme} et \eqref{dcpdcm} sont associés aux factorisations d'algèbre de Hopf $E_{\tau,\hbar}(\slt)$ à des sous-espaces décrits par ces courants. Cela est aussi valable pour toutes les algèbres de courants, qui sont obtenues par la construction de Enriquez et Roubtsov~\cite{ER1,ER2,ER3}.

Dans le cas de l'algèbre $U_{q,p}(\widehat\slt)$ la situation est essentiellement différente. Les ensembles des courants~\eqref{KpfKme} et \eqref{dcpdcm} ne sont pas associés avec des sous-espaces de l'algèbre de Lie $\mathfrak{u}_\tau(\widehat\slt)$ et de l'algèbre $U_{q,p}(\widehat\slt)$. Au contraire, on peut décrire toute l'algèbre $U^{(std)}_{q,p}(\widehat\slt)$ par chaque ensemble $\{h^+(u),e^+_\lambda(u),f^+_\lambda(u)\}$ ou $\{h^-(u),e^-_\lambda(u),f^-_\lambda(u)\}$.
L'algèbre $U^{(D)}_{q,p}(\widehat\slt)$ peut être complètement décrite par l'ensemble $\{h^+(u),e(u),f(u)\}$ ou $\{h^-(u),e(u),f(u)\}$. C'est aussi lié à des propriétés algébriques des distributions de Green. En effet, soient ${\cal P}^+$, ${\cal P}^-$ et ${\cal P}_\lambda^+$, ${\cal P}_\lambda^-$ les opérateurs correspondant aux distributions de Green~\eqref{DdGEpJ},~\eqref{DdGEmJ}. Ils agissent sur $s\in K$ comme
\begin{align}
 {\cal P}_\lambda^+[s](v)&=\la {\cal G}_\lambda^+(u,v),s(u)\ra_u, & {\cal P}_\lambda^-[s](v)&=-\la {\cal G}_\lambda^-(u,v),s(u)\ra_u, \\
 {\cal P}^+[s](v)&=\la {\cal G}(u,v),s(u)\ra_u, & {\cal P}^-[s](v)&=\la {\cal G}(v,u),s(u)\ra_u.
\end{align}
Les images de tous les opérateurs ${\cal P}^+$, ${\cal P}^-$ et ${\cal P}_\lambda^+$, ${\cal P}_\lambda^-$ coïncident avec toute l'algèbre de fonction test $K$. Cela signifie, par exemple, que l'algèbre $U_{q,p}(\widehat\slt)$ est engendrée par $\la h^+(u),s(u)\ra$, $\la e^+_\lambda(u),s(u)\ra$ et $\la f^+_\lambda(u),s(u)\ra$, $s\in K$. Ces opérateurs satisfont ${\cal P}_\lambda^++{\cal P}_\lambda^-=\id$, mais ils ne sont pas idempotents orthogonaux. En particulier, on a d'obstacle suivant
\begin{align}
 {\cal P}_\lambda^-\circ {\cal P}_\lambda^+&={\cal P}_\lambda^+\circ {\cal P}_\lambda^-=\frac{1}{2\pi i}\frac{\partial}{\partial \lambda}{\cal P}_\lambda^+=-\frac{1}{2\pi i}\frac{\partial}{\partial \lambda}{\cal P}_\lambda^-.
\end{align}

\section{Dégénérescences d'algèbres elliptiques}
\label{sec35}

Considérons des dégénérescences rationnelles et trigonométriques des algèbres $E_{\tau,\hbar}(\slt)$ et $U_{q,p}(\widehat\slt)$. Les différences entre ces algèbres entraînent à des différences entre leurs dégénérescences. Ces dégénérescences sont aussi des algèbres des courants qui peuvent être définies par une algèbre de fonctions test et les distributions de Green correspondantes. C'est pourquoi on peut les rechercher au niveau classique~\cite{S22}. Nous résumons ici les résultats.

Une dégénérescence rationnelle est une limite $\omega\to+\infty$, $\omega'\to+i\infty$ d'une algèbre associée à la courbe elliptique $\mathbb C/(\omega\mathbb Z+\omega'\mathbb Z)$, où $\dfrac{\omega'}{\omega}=\tau$. Pour la réaliser il faut d'abord replacer $u\to\dfrac u\omega$, $v\to\dfrac v\omega$ et $\lambda\to\mu+\dfrac\lambda\omega$, où $\mu$ est un nombre complexe arbitraire. Nous désignons cette limite par \textbf{(a)} et nous montrons qu'elle ne dépend pas de $\mu$. On peut considérer aussi deux types de limites trigonométriques. La première limite, désignée par \textbf{(b)}, est définie comme $\tau\to+i\infty$ (id est $\omega$ fixé et $\omega'\to+i\infty$). Elle n'exige pas des renormalisations des paramètres spectral et dynamique. La second limite $\omega\to+\infty$ (lorsque $\omega'$ est fixé) exige les renormalisations des paramètres spectral et dynamique et elle est désignée par \textbf{(c)}. Dans ce cas la limite dépend du paramètre $\mu$.

Pour les deux algèbres les noyaux d'intégration des distributions de Green ont les mêmes dégénérescences~\footnote{La renormalisation de paramètre spectral $u\to\frac{u}{\omega}$ entraîne la substitution $du\to\frac{du}{\omega}$ dans les formes~\eqref{spE} et \eqref{spJ} ce qui entraîne dans son lieu à la renormalisation de noyaux de distributions}:
\begin{align}
 \lim_{\omega\to+\infty \atop \omega'\to+i\infty} \frac1{\omega}\frac{\theta'(\frac{u-v}{\omega})}{\theta(\frac{u-v}{\omega})} &=\frac{1}{u-v}, &
  &\lim_{\omega\to+\infty \atop \omega'\to+i\infty} \frac1{\omega}\frac{\theta(\frac{u-v+\lambda}{\omega})}{\theta(\frac{u-v}{\omega})\theta(\frac{\lambda}{\omega})}=\frac{1}{u-v}+\frac1{\lambda}, \label{degkera} \\
 \lim_{\tau\to+i\infty} \frac{\theta'(u-v)}{\theta(u-v)} &=\pi\ctg\pi(u-v), &
 &\lim_{\tau\to+i\infty} \frac{\theta'(u-v+\lambda)}{\theta(u-v)\theta(\lambda)} =\pi\ctg\pi(u-v)+\pi\ctg\pi\lambda, \label{degkerb} \\
 \lim_{\omega\to+\infty \atop \omega'=const} \frac1{\omega}\frac{\theta'(\frac{u-v}{\omega})}{\theta(\frac{u-v}{\omega})}&=\pi\eta\cth\pi\eta(u-v), &
  &\lim_{\omega\to+\infty \atop \omega'=const} \frac1{\omega}\frac{\theta(\mu+\frac{u-v+\lambda}{\omega})}{\theta(\frac{u-v}{\omega})\theta(\mu+\frac{\lambda}{\omega})}=2\pi\eta\frac{e^{-2\pi\eta\mu(u-v)}}{1-e^{-2\pi\eta(u-v)}}, \label{degkerc} \\
  &\lim_{\omega\to+\infty \atop \omega'=const} \frac1{\omega}\frac{\theta(\frac{u-v+\lambda}{\omega})}{\theta(\frac{u-v}{\omega})\theta(\frac{\lambda}{\omega})} &&=\pi\eta\cth\pi\eta(u-v)+\pi\eta\cth\pi\eta\lambda, \label{degkerc0} 
\end{align}
où $\eta=\dfrac{i}{\omega'}$, $\Re\eta>0$ et $\mu$ est un nombre arbitraire satisfaisant à l'inégalité
\begin{align}
 \frac{\Im\eta\Im\mu}{\Re\eta}<\Re\mu<\frac{\Im\eta\Im\mu}{\Re\eta}+1. \label{zone_mu}
\end{align}
Puisque les algèbres $E_{\tau,\hbar}(\slt)$ et $U_{q,p}(\widehat\slt)$ sont dynamiques avec paramètre dynamique $\lambda$, leurs dégénérescences peuvent donner aussi la dynamique par $\lambda$. Dans les cas \textbf{(a)} et \textbf{(b)} c'est la dynamique "triviale", qui est définie par des termes dynamiques séparés: $\frac1\lambda$ et $\pi\ctg\pi\lambda$ respectivement. Cette dynamique disparaît à la limite "dédynamisant" correspondante: $\lambda\to\infty$ et $\lambda\to-i\infty$. Cette dynamique "triviale" mérite l'attention séparée, mais pour nos buts il suffis de considérer les dégénérescences aux limites "dédynamisantes" correspondantes. Dans le cas \textbf{(c)} la limite $\lambda\to+\infty$ dans la formule~\eqref{degkerc0} correspond à la continuation analytique de la formule~\eqref{degkerc} au point $\mu=0$ (voir~\eqref{degkerc0}). Pour une valeur d'un paramètre $\mu$ satisfaisant~\eqref{zone_mu} la dynamique par $\lambda$ disparaît à la limite \textbf{(c)}. Nous verrons sur des formules explicites que le paramètre $\mu$ n'est pas un paramètre dynamique.

L'algèbre de fonctions test pour les dégénérescences de l'algèbre $E_{\tau,\hbar}(\slt)$ est la même algèbre $\lfK_0$ (avec la même forme). La dégénérescence rationnelle (limite~\textbf{(a)}) est défini par les distribution de Green avec les noyaux~\eqref{degkera} et les contours d'intégrations $|u|>|v|$ et $|u|<|v|$. Cela nous donne le double de Yangien ${\cal D}Y(\slt)$. Les distributions de Green de la dégénérescence trigonométrique (limite~\textbf{(b)}) correspondent aux noyaux~\eqref{degkerb} avec les mêmes contours. Cette algèbre n'est pas apparue dans la littérature, et quelquefois on l'appelle $q$-Yangien ou double de $q$-Yangien ${\cal D}Y_q(\slt)$. D'habitude son analogue plus connu $U_q(\widehat\slt)$ est utilisée. La limite~\textbf{(c)} de l'algèbre $E_{\tau,\hbar}(\slt)$ est équivalente à la limite~\textbf{(b)} parce que on peut échanger les rôles de périodes elliptiques $\omega$ et $\omega'$.

Les rôles des algèbres de fonctions test pour les dégénérescences de l'algèbre $U_{q,p}(\widehat\slt)$ sont joués par l'algèbre $K$ et $Z$. Cette dernière est une algèbre de fonctions entières $s(u)$ satisfaisant aux inégalités $|u^n s(u)|<C_n e^{p|\Im u|}$, $n\in\mathbb Z_+$, avec des nombres $C_n,p>0$ dépendent de $s(u)$~\cite{GSh1}. Elle est muni de la forme 
\begin{align}
 \la s(u),t(u)\ra_u=\int\limits_{-\infty}^{+\infty}\frac{du}{2\pi i}s(u)t(u).
\end{align}
L'espace des distributions $K'$ peut être regardé comme le sous-espace de l'espace $Z'$ consistant en des distributions périodiques. La dégénérescence rationnelle de $U_{q,p}(\widehat\slt)$ (limite~\textbf{(a)}) est une algèbre $A_\hbar(\slt)$. Les distributions de Green correspondantes agissent sur $Z$. Elles correspondent aux mêmes noyaux~\eqref{degkera}, et aux autres contours d'intégration. Ce sont les contours horizontaux au-dessous et au-dessus de l'axe réel pour les distributions de Green positives et négatives respectivement. Ainsi la différence entre les algèbres rationnelles ${\cal D}Y(\slt)$ et $A_\hbar(\slt)$ décrites en~\cite{KLP99} en détail est héritée de la différence entre les algèbres elliptiques $E_{\tau,\hbar}(\slt)$ et $U_{q,p}(\widehat\slt)$. Comme c'est déjà écrit, la dégénérescence rationnelle de $U_{q,p}(\widehat\slt)$ (limite~\textbf{(b)}) est une algèbre $U_q(\widehat\slt)$. Les distributions de Green agissent sur $K$, ont les noyaux~\eqref{degkerb} et les mêmes contours d'intégration comme dans le cas d'algèbre $U_{q,p}(\widehat\slt)$.

L'algèbre $U_{q,p}(\widehat\slt)$ a une autre dégénérescence trigonométrique -- la limite~\textbf{(c)}. Elle n'est pas équivalente à la limite~\textbf{(b)} parce que les contours d'intégration dans ce cas ne sont pas symétriques par rapport à l'échange $\omega\leftrightarrow\omega'$. Elle entraîne une série d'algèbre $A_{\hbar,\eta}(\slt;\mu)$ paramétrisée par $\eta$ et $\mu$. Le paramètre $\eta$ est hérité du module elliptique $\tau$ lorsque $\mu$ est un paramètre de dégénérescence apparu comme la partie finie d'un paramètre dynamique $\lambda$. Néanmoins, il vaut souligner que $\mu$ n'est pas un paramètre dynamique et l'algèbre $A_{\hbar,\eta}(\slt;\mu)$ n'est pas une algèbre dynamique au niveau du zéro. Comme une partie fini du paramètre $\lambda$ le paramètre $\mu$ permet la transformation $\mu\to\mu+1$ et, par conséquent, on peut se restreindre à la zone~\eqref{zone_mu}. En même temps la formule~\eqref{degkerc} n'a pas de période 1 par rapport de $\mu$. C'est une spécificité de cette dégénérescence. En fait, la formule~\eqref{degkerc} doit être considérée comme sa continuation analytique.

\selectlanguage{english}

\chapter[Elliptic projections and SOS model with DWBC]{Elliptic projections and SOS model with Domain Wall Boundary Conditions}
\label{sec4}

In this chapter we consider projections for the elliptic quasi-Hopf algebra $E_{\tau,\hbar}(\slt)$ introduced in~\cite{EF} and regarded in Chapter~\ref{sec3}. We apply these projections to the SOS model. These results published in~\cite{S3} (see Appendix~\ref{SA3}). We also present some algebraic properties of the elliptic projections. 

The projections of currents first appeared in the works of Enriquez and Rubtsov~\cite{ER2}, \cite{ER3} in a purely algebraic framework. This was a method to construct current algebras for higher genus corresponding to the sets of currents of type~\eqref{dcpdcm}. Further, Khoroshkin and Pakuliak applied this method for the quantum affine algebras to factorize the Universal $R$-matrix~\cite{DKhP} and to obtain the universal Bethe ansatz~\cite{KhP}, \cite{EKhP}. It was observed that the projections for the algebra $U_q(\widehat{\slt})$ can be presented as an integral transform and the kernel of this transform is proportional to the partition function of the finite 6-vertex model with Domain Wall Boundary Conditions (DWBC)~\cite{KhP}. We conjecture that the elliptic projections described in~\cite{EF} can help to derive the partition function for some elliptic model and the first candidate for the role of such model is an elliptic SOS model mentioned in the section~\ref{sec110} with the corresponding boundary conditions.

In~\cite{Kor} Korepin derived recurrent relations for the partition function of the finite 6-vertex model with DWBC. Further Izergin used these relations to find the expression for the partition function in a determinant form~\cite{I87}. The integral kernel of projections calculated by Khoroshkin and Pakuliak satisfies the same recursive relations and it gives another formula for the partition function.

Unfortunately, the Izergin's determinant formula can not be generalized to the elliptic case, but it is natural to expect that the theory of projections gives an expression for the partition function for the SOS model. In one hand we generalize Korepin's recurrent relations and in other hand we generalize the method of calculation of projections proposed by Khoroshkin and Pakuliak to the elliptic case. By analogy with the trigonometric (6-vertex model) case we present the result of the calculation as an integral transform. We check that the kernel extracted from this transform and multiplied by a corresponding factor satisfies the obtained recurrent relations, which uniquely define the partition function for SOS model with DWBC. 

Recently H.\,Rosengren~\cite{Ros08} has independently shown that this partition function can be written as a sum of $2^n$ determinants (where $n$ is a range of the lattice) which generalizes the Izergin's determinant formula. His approach relates to a dynamical generalization of Alternating-Sign Matrices.

\section{Projections for the Hopf algebras}
\label{sec41}

Here we introduce the notion of the projections for the Hopf algebras and discuss some their properties. Let us consider a Hopf algebra $\A$ and its two subalgebras~\footnote{In general, these are not Hopf subalgebras.} $\A^-$ and $\A^+$ satisfying the following conditions:
\begin{itemize}
\item[(i)] The algebra $\A$ admits a factorisation $\A=\A^-\cdot\A^+$, such that the corresponding restriction of the multiplication map
\begin{align} \label{hsac1}
\mu\colon\A^-\otimes \A^+\to \A 
\end{align}
is an isomorphism of linear spaces;
\item[(ii)] the subalgebra $\A^-$ is a left coideal:
\begin{align} \label{hsac2}
\Delta(\A^-)\subset \A\otimes \A^-; 
\end{align}
\item[(iii)] the subalgebra $\A^+$ is a right coideal:
\begin{align} \label{hsac3}
 \Delta(\A^+)\subset \A^+\otimes \A. 
\end{align}
\end{itemize}

Let us define the linear operators $P^-\colon\A\to\A$ and $P^+\colon\A\to\A$ by the formulae
\begin{align} \label{defprojHA}
 P^-(ab)&=a\varepsilon(b), &
 P^+(ab)&=\varepsilon(a)b, & a&\in \A^-,\; b\in \A^+.
\end{align}
Due to the condition~(i) these formulae define these operators on the whole algebra $\A$. The operators $P^-$ and $P^+$ are called {\itshape projections}: they are idempotents:
\begin{align}
 P^-\circ P^-&=P^-, & P^+\circ P^+&=P^+.
\end{align}

One also say that $P^-$ and $P^+$ are projections onto the subalgebras $\A^-$ and $\A^+$ parallel to the $\A^+$ and $\A^-$, what is presented in the equalities
\begin{align}
 P^-(a^-)&=a^-, & P^+(a^+)&=a^+, \label{projpropPaa} \\
 P^-(a^+)&=\varepsilon(a^+), & P^+(a^-)&=\varepsilon(a^-),
\end{align}
where $a^-\in\A^-$ and $a^+\in\A^+$. The definition~\eqref{defprojHA} implies more general properties:
\begin{align}
 P^-(a^-b)&=a^-P^-(b), & P^+(ba^+)&=P^+(b)a^+, \label{projpopPP} \\
 P^-(ba^+)&=P^-(b)\varepsilon(a^+), & P^+(a^-b)&=\varepsilon(a^-)P^+(b), \label{projpopPPe}
\end{align}
for all $a^-\in\A^-$, $a^+\in\A^+$ and $b\in\A$. They can be used to prove the following proposition.

\begin{prop} \label{prop_proj}
The projections $P^-$ and $P^+$ onto subalgebras $\A^-$ and $\A^+$ subjected to the conditions~\eqref{hsac1}, \eqref{hsac2}, \eqref{hsac3} satisfy the operator equality
\begin{align}
\mu\circ(P^-\otimes P^+)\circ\Delta=\id_{\A}.
\label{addition}
\end{align}
\end{prop}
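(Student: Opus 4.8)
The plan is to verify the operator identity~\eqref{addition} by evaluating both sides on a general element of $\A$. Since the factorisation condition~\eqref{hsac1} states that $\mu\colon\A^-\otimes\A^+\to\A$ is a linear isomorphism, every element of $\A$ is a linear combination of products $ab$ with $a\in\A^-$ and $b\in\A^+$. By linearity it therefore suffices to check $\mu\circ(P^-\otimes P^+)\circ\Delta(ab)=ab$ for such $a,b$.

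First I would expand the coproduct using that $\Delta$ is an algebra homomorphism: $\Delta(ab)=\Delta(a)\Delta(b)$. Writing $\Delta(a)=\sum a_{(1)}\otimes a_{(2)}$ and $\Delta(b)=\sum b_{(1)}\otimes b_{(2)}$ in Sweedler notation, the coideal hypotheses~\eqref{hsac2} and~\eqref{hsac3} provide the crucial refinement that $a_{(2)}\in\A^-$ in every term of $\Delta(a)$ and $b_{(1)}\in\A^+$ in every term of $\Delta(b)$. Hence $\Delta(ab)=\sum a_{(1)}b_{(1)}\otimes a_{(2)}b_{(2)}$, where in the first tensor leg the right-hand factor lies in $\A^+$ and in the second tensor leg the left-hand factor lies in $\A^-$.

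The key step is then to apply $P^-\otimes P^+$ and invoke the mixed relations~\eqref{projpopPPe}. On the first leg, $P^-(a_{(1)}b_{(1)})=P^-(a_{(1)})\varepsilon(b_{(1)})$ because $b_{(1)}\in\A^+$; on the second leg, $P^+(a_{(2)}b_{(2)})=\varepsilon(a_{(2)})P^+(b_{(2)})$ because $a_{(2)}\in\A^-$. After composing with $\mu$ and regrouping the scalars, I would factor the result as the product of $\sum P^-(a_{(1)})\varepsilon(a_{(2)})$ and $\sum\varepsilon(b_{(1)})P^+(b_{(2)})$. The counit axioms $\sum a_{(1)}\varepsilon(a_{(2)})=a$ and $\sum\varepsilon(b_{(1)})b_{(2)}=b$, combined with the linearity of the projections, collapse these to $P^-(a)$ and $P^+(b)$. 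Finally, using~\eqref{projpropPaa}, namely $P^-(a)=a$ for $a\in\A^-$ and $P^+(b)=b$ for $b\in\A^+$, yields $\mu\circ(P^-\otimes P^+)\circ\Delta(ab)=ab$, as desired.

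The only delicate point is the bookkeeping of which Sweedler components lie in $\A^-$ or in $\A^+$: one must apply the coideal properties at exactly the right place so that the mixed projection rules~\eqref{projpopPPe} become available, rather than the defining rules~\eqref{defprojHA} which require both factors to lie in the prescribed subalgebras. Once this placement is secured, the rest is a routine use of the counit axiom and of the normalisation of the projections on $\A^\pm$, so I expect no serious obstacle beyond this careful tracking.
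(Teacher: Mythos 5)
Your proof is correct and follows essentially the same route as the paper's: reduce to products $ab$ with $a\in\A^-$, $b\in\A^+$ via the factorisation~\eqref{hsac1}, use the coideal conditions~\eqref{hsac2}, \eqref{hsac3} to locate the Sweedler components, apply the mixed projection rules~\eqref{projpopPPe}, and finish with the counit axiom and~\eqref{projpropPaa}. The only difference is notational (Sweedler notation versus the paper's indexed sums), so there is nothing substantive to add.
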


\noindent{\bf Proof.} Due to the condition~\eqref{hsac1} it is sufficient to check the equality on the products $ab$ of the elements $a\in\A^-$ and $b\in\A^+$. Let the coproducts of these elements look as
\begin{align}
 \Delta(a)&=\sum_i a'_i\otimes a''_i, &
 \Delta(b)&=\sum_i b'_i\otimes b''_i.
\end{align}
Then due to the conditions~\eqref{hsac2} and \eqref{hsac3} we have $a''_i\in\A^-$, $b'_i\in\A^+$. Further, taking into account the properties~\eqref{projpopPP}, \eqref{projpopPPe} and the definition of the counity we obtain
\begin{multline}
 \mu\circ(P^-\otimes P^+)\circ\Delta(ab)=\sum_{i,j}P^-(a'_ib'_j)P^+(a''_ib''_j)= \\ =\sum_{i,j}P^-(a'_i)\varepsilon(b'_j)\varepsilon(a''_i)P^+(b''_j)=P^-(a)P^+(b).
\end{multline}
The properties~\eqref{projpropPaa} means in turn that the last expression is equal to $ab$. \qed

The condition~\eqref{hsac1} means that each element $a\in\A$ can be decomposed as
\begin{align} \label{decproja}
 &a=\sum_i a^-_i a^+_i, & a^-_i\in\A^-,\; a^+_i\in\A^+.
\end{align}
The proposition~\ref{prop_proj} allows us to write the explicit formula for this decomposition for each element $a\in\A$:
\begin{align}
 a&=\sum_i P^-(a'_i)P^+(a''_i), &
 &\text{where} & &\Delta(a)=\sum_i a'_i\otimes a''_i,
\end{align}
because of $P^\pm(\A)=\A^\pm$. Inversely, as soon as we have the decomposition~\eqref{decproja} for some element $a\in\A$ we can calculate its projections substituting this decomposition to the definition~\eqref{defprojHA}. The main idea for calculation of the projections proposed by Khoroshkin and Pakuliak~\cite{Kh} is to find the terms of the decomposition~\eqref{decproja} not annihilated by the corresponding projection. This idea was realized to the full in calculations of the projections for the quantum affine algebras~\cite{KhPT}, \cite{KP-GLN}.

In the classical case the projections is defined by the decomposition of a Lie algebra as a vector space: $\mathfrak{g}=\mathfrak{g}^-\oplus\mathfrak{g}^+$, where $\mathfrak{g}^-$ and $\mathfrak{g}^+$ are Lie subalgebras. This gives the factorisation $U(\mathfrak{g})=U(\mathfrak{g}^-)\cdot U(\mathfrak{g}^+)$, where $U(\mathfrak{g}^-)$ and $U(\mathfrak{g}^+)$ are (cocommutative) Hopf subalgebras of $U(\mathfrak{g})$ (each Hopf subalgebra is a left and right coideal) satisfying~\eqref{hsac1}. It allows us to introduce the corresponding projections. So, the notion of the projections for the Hopf algebras generalizes the notion of the orthogonal projections for the vector spaces applied to the Lie algebras $\mathfrak{g}$ decomposed as $\mathfrak{g}=\mathfrak{g}^-\oplus\mathfrak{g}^+$.

Let us discuss a quantum example. Consider the Hopf algebra $U_q(\widehat\slt)$ -- the quantum universal enveloping algebra of the affine Lie algebra $\widehat\slt$. Let the Hopf subalgebra $U_F^{cop}$ defined in the section~\ref{sec32} play the role of the Hopf algebra $\A$. This is the algebra $U_F$ described by the currents $\psi^+(u)$ and $f(u)$ and endowed with the comultiplication $\Delta^{(D),op}$, which is opposite to the Drinfeld comultiplication:
\begin{align} \label{op1co}
 &\Delta^{(D),op}\; \psi^+(u)=\psi^+(u)\otimes\psi^+(u), \\
 &\Delta^{(D),op}\; f(u)=f(u)\otimes 1+ \psi^+(u)\otimes f(u).
\end{align}
Consider the subalgebras $U^-_f=U_F\cap U^-$ and $U^+_F=U_F\cap U^+$ of the algebra $U_F$, where $U^-$ and $U^+$ are Hopf algebras of $U_q(\widehat\slt)$ also defined in the section~\eqref{sec32}. The algebra $U^-_f$ described by the current $f^-(u)$ while $U^+_F$ -- by the currents $\psi^+(u)$ and $f^+(u)$. The construction of the Drinfeld double implies that the multiplication gives the isomorphism between the spaces $U=U_q(\widehat\slt)$ and $U^-\otimes U^+$, what in turn proves the condition~\eqref{hsac1} for the algebra $\A=U_F$ and the subalgebras $\A^-=U^-_f$ and $\A^+=U^+_F$. The last two subalgebras are not Hopf subalgebras, but one can check that they are left coideal and right coideal respectively. That is the conditions~\eqref{hsac2} and \eqref{hsac3} for this subalgebras are also fulfilled.

Let $P^-$ and $P^+$ be projections onto $U^-_f$ and $U^+_F$ parallel to $U^+_F$ and $U^-_f$ defined by the formulae~\eqref{defprojHA}. They act on the currents as
\begin{align}
 &P^-\big(\psi^+(u)\big)=1, & &P^+\big(\psi^+(u)\big)=\psi^+(u), \\
 &P^-\big(f(u)\big)=-f^-(u), & &P^+\big(f(u)\big)=f^+(u), \\
 &P^-\big(f^+(u)\big)=0,     & &P^+\big(f^+(u)\big)=f^+(u), \\
 &P^-\big(f^-(u)\big)=f^-(u), & &P^+\big(f^-(u)\big)=0.
\end{align}
In order to find the projections of an arbitrary element of $U_F$ it is sufficient to obtain the projections of products of currents
\begin{align}
 &P^+\big(f(u_1)\cdots f(u_n)\big), & &P^-\big(f(u_1)\cdots f(u_n)\big) \label{PPCs}
\end{align}
for arbitrary $n\in\mathbb Z_{\ge0}$, since each element of $U_F$ can be represent as a sum of elements of the form
\begin{align}
 \big\la f(u_1)\cdots f(u_n),s_1(u_1)\cdots s_n(u_n)\big\ra_{u_1,\ldots,u_n}\cdot t^+
\end{align}
where $n\in\mathbb Z_{\ge0}$, $s_1(u),\ldots,s_n(u)\in\mathbb C[u^{-1},u]$, $t^+\in H^+$ and $n\in\mathbb Z_{\ge0}$. Here $H^+$ is the subalgebra generated by $\hat h^+[s]$, $s(u)\in\mathbb C[u^{-1},u]$. It turns out that the objects~\eqref{PPCs} play important role in the theory of the integrable systems as well as in the Conformal Field Theory (CFT). In the $U_q(\hat{\mathfrak a})$ case it has the form
\begin{align}
 &P^\pm\big(f_1(u_{11})\cdots f_1(u_{1n_1})\cdots\cdots f_r(u_{r1})\cdots f_r(u_{rn_r})\big), &  \label{PPCsr}
\end{align}
where $f_i(u)$ is a current corresponding to the $i$-th simple root of a semi-simple Lie algebra $\mathfrak a$. In the theory of the integrable systems the object~\eqref{PPCsr} is called universal algebraic (off-shell) Bethe ansatz, while from the point of view of the CFT it is called universal weight function. The importance of this object is explained by the `coproduct property' (see~\cite{EKhP}).

In~\cite{KP-GLN} the calculation of the projections~\eqref{PPCsr} is reduced to the calculation of the projection of the product of the currents corresponding to the same root, id est to the calculations of the projections~\eqref{PPCs}. The last ones are calculated in~\cite{KhP} in terms of the corresponding half-currents $f^+(u)$ and $f^-(u)$ using interpolation formulae and, then, represented as integrals of the products of the total currents using the formulae~\eqref{hc_tc}. For the positive projection this looks as the integral transformation
\begin{align} \label{PpK}
P^+\big(f(u_n)\cdots f(u_1)\big)=
\oint\limits_{|u_i|>|v_j|}\frac{dv_1\cdots dv_n}{(2\pi iv_1)\cdots(2\pi iv_n)}K(u;v)
f(v_n)\cdots f(v_1).
\end{align}
with the integral kernel
\begin{align} \label{PpK_ker}
 K(u;v)=\prod_{m=1}^n v_m\;\prod_{k>j}\frac{u_k-u_j}{qu_k-q^{-1}u_j}\prod_{k=1}^n \frac{\displaystyle \prod_{i=k+1}^n(qu_i-q^{-1}v_k)}
{\displaystyle \prod_{i=k}^n(u_i-v_k)},
\end{align}
$u=(u_1,\ldots,u_n)$ and $v=(v_1,\ldots,v_n)$. By virtue of the commutation relation~\eqref{cr_Uqff} the integral kernel $K(u;v)$ can be replaced by the integral kernel
\begin{align}
 \frac{qv_{i+1}-q^{-1}v_i}{q^{-1}v_{i+1}-qv_i}K(u;v_1,\ldots,v_{i+1},v_i,\ldots,v_n).
\end{align}
So, the integral kernel in the formula~\eqref{PpK} can chosen in different ways.

One can check that the formula
\begin{align} \label{piqudef}
 \pi^q_u(\sigma_{i,i+1})g(u)=\frac{q^{-1}u_{i+1}-qu_i}{qu_{i+1}-q^{-1}u_i}g(u_1,\ldots,u_{i+1},u_i,\ldots,u_n)
\end{align}
define an action of the permutation group $S_n$ on the functions of $u_1,\ldots,u_n$, where $g(u)\in V[[u^{-1},u]]$, $V$ is a vector space over $\mathbb C$ and $\sigma_{i,i+1}$ is an elementary transposition. Therefore the integral kernel $K(u;v)$ in the formula~\eqref{PpK} can be also replaced by
\begin{align}
 \pi^{q^{-1}}_v(\sigma)K(u;v)=\prod_{i<j \atop \sigma(i)>\sigma(j)}\frac{qv_{\sigma(i)}-q^{-1}v_{\sigma(j)}}{q^{-1}v_{\sigma(i)}-qv_{\sigma(j)}}K(u;v^\sigma),
\end{align}
where $v^\sigma=(v_{\sigma(1)},\ldots,v_{\sigma(n)})$, or by $\frac1{n!}{\cal K}(u;v)$, where
\begin{align} \label{Kqsym}
 {\cal K}(u,v)=\sum_{\sigma\in S_n}\pi^{q^{-1}}_v(\sigma)K(u;v)=\sum_{\sigma\in S_n}\prod_{i<j \atop \sigma(i)>\sigma(j)}\frac{qv_{\sigma(i)}-q^{-1}v_{\sigma(j)}}{q^{-1}v_{\sigma(i)}-qv_{\sigma(j)}}K(u;v^\sigma).
\end{align}

We shall say that a function $g(u)$ is {\it $q$-symmetric with respect to the variables $u$} if it is invariant under the action~\eqref{piqudef}:
\begin{align}
 \pi^q_u(\sigma)g(u)&=g(u), & &\forall\sigma\in S_n.
\end{align}
The function~\eqref{Kqsym} is $q$-symmetric with respect to $u$ and $q^{-1}$-symmetric with respect to the variables $v$. Let us also note that if the function $g(u)$ is $q$-symmetric then the function
\begin{align}
 \prod_{k>j}\frac{qu_k-q^{-1}u_j}{u_k-u_j}\;g(u)
\end{align}
is symmetric.

\section[Partition function of the finite 6-vertex model with DWBC]{Partition function of the finite 6-vertex model with Domain Wall Boundary Conditions}
\label{sec42}

\begin{figure}[h]
\begin{center}
\begin{picture}(150,90)
\put(50,70){\line(1,0){80}}
\put(35,70){$n$}
\put(50,50){\line(1,0){80}}
\put(35,50){$\ldots$}
\put(50,30){\line(1,0){80}}
\put(35,30){$2$}
\put(50,10){\line(1,0){80}}
\put(35,10){$1$}
\put(60,0){\line(0,1){80}}
\put(58,-10){$n$}
\put(80,0){\line(0,1){80}}
\put(78,-10){$\ldots$}
\put(100,0){\line(0,1){80}}
\put(98,-10){$2$}
\put(120,0){\line(0,1){80}}
\put(118,-10){$1$}
\end{picture}%
\end{center}
\caption{\footnotesize  The square $n\times n$ lattice.}
\label{fig3}
\end{figure}
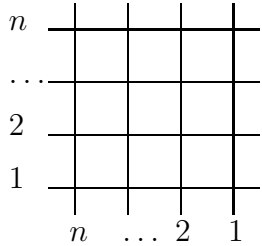

Let us consider a 6-vertex model defined in the section~\eqref{sec16} on the square lattice of the size $n\times n$, where the columns and rows are enumerated from $1$ to $n$ from the right to the left and upward respectively (see~Fig.~\ref{fig3}). Let us briefly recall it. Six possible configurations are shown in the Fig.~\ref{fig1} and the weights of vertex is defined by~\eqref{Bw}, \eqref{Bwc}. The sign $+$ is associated to the upward arrows and to the arrows directed to the left, while the sign $-$ -- to the downward arrows, to the arrows directed to the right like it shown in the fig.~\ref{fig1}. The Boltzmann weights~\eqref{Bw}, \eqref{Bwc} are gathered to the matrix
\begin{equation} \label{Bw1}
R(z,w)=\left(\begin{array}{cccc}a(z,w)&0&0&0\\ 0&b(z,w)& \bar c(z,w)&0\\
0&c(z,w)&b(z,w)&0\\0&0&0&a(z,w)
\end{array}\right).
\end{equation}
acting in the space $\mathbb C^2\otimes\mathbb C^2$ with the basis $e_{++}$, $e_{+-}$, $e_{-+}$, $e_{--}$. This is the trigonometric $R$-matrix~\eqref{R_trig}. The entry $R(z,w)^{\alpha\beta}_{\gamma\delta}$, $\alpha,\beta,\gamma,\delta=\pm$, coincides with the Boltzmann weight corresponding to the fig.~\ref{fig2} (the Boltzmann weight of the unavailable configuration vanishes). We consider an inhomogeneous model: the weight matrix for the $(i,j)$-th vertex is $R(u_i,v_j)$, where the variables $u_i$ associated to the $i$-th column and the variable $v_j$ to the $j$-th row (see fig.~\ref{fig4}).

\begin{figure}[b]
\begin{center}
\begin{picture}(160,95)
\put(50,70){\line(1,0){80}}
\put(50,70){\vector(1,0){3}}
\put(130,70){\vector(-1,0){3}}
\put(15,70){$v_4$} \put(35,70){$+$} \put(135,70){$-$}
\put(50,50){\line(1,0){80}}
\put(50,50){\vector(1,0){3}}
\put(130,50){\vector(-1,0){3}}
\put(15,50){$v_3$} \put(35,50){$+$} \put(135,50){$-$}
\put(50,30){\line(1,0){80}}
\put(50,30){\vector(1,0){3}}
\put(130,30){\vector(-1,0){3}}
\put(15,30){$v_2$} \put(35,30){$+$} \put(135,30){$-$}
\put(50,10){\line(1,0){80}}
\put(50,10){\vector(1,0){3}}
\put(130,10){\vector(-1,0){3}}
\put(15,10){$v_1$} \put(35,10){$+$} \put(135,10){$-$}

\put(60,0){\vector(0,1){80}}
\put(60,0){\vector(0,-1){3}}
\put(58,-10){$-$} \put(58,-20){$u_4$} \put(58,83){$+$}
\put(80,0){\vector(0,1){80}}
\put(80,0){\vector(0,-1){3}}
\put(78,-10){$-$} \put(78,-20){$u_3$} \put(78,83){$+$}
\put(100,0){\vector(0,1){80}}
\put(100,0){\vector(0,-1){3}}
\put(98,-10){$-$} \put(98,-20){$u_2$} \put(98,83){$+$}
\put(120,0){\vector(0,1){80}}
\put(120,0){\vector(0,-1){3}}
\put(118,-10){$-$} \put(118,-20){$u_1$} \put(118,83){$+$}
\end{picture}
\end{center}
\caption{\footnotesize Inhomogeneous lattice with Domain Wall Boundary Conditions.}
\label{fig4}
\end{figure}
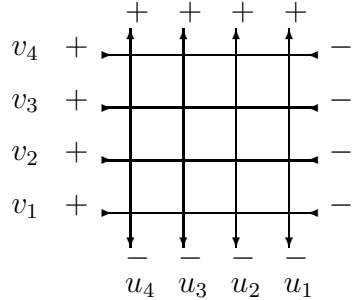

We choose so-called Domain Wall Boundary Conditions (DWBC) that fix the boundary arrows (signs) like it shown in the fig.~\ref{fig3}: the arrows are entering on the left and right boundaries and leaving on the lower and upper ones. In terms of the section~\ref{sec16} it means that the partition function with DWBC is defined by the formula~\eqref{ZsumRabgd} with $\alpha_1=\ldots=\alpha_n=+$, $\beta_1=\ldots=\beta_n=-$, $\gamma_1=\ldots=\gamma_n=-$ and $\delta_1=\ldots=\delta_n=+$. That is this partition function, denoted by $Z^{(n)}(u;v)$, has the form
\begin{align} \label{ZRRpmmp}
 Z^{(n)}(u;v)=\mathbb R(u;v)^{+,\ldots,+;-,\ldots,-}_{-,\ldots,-;+,\ldots,+},
\end{align}
where $\mathbb R(u;v)$ is the matrix acting in $V_1\otimes\ldots\otimes V_n\otimes V_{1'}\otimes\ldots\otimes V_{n'}$, $V_i\cong V_{j'}\cong\mathbb C^2$, which defined as
\begin{align} \label{Z6vDWBCdef}
 \mathbb R(u;v)=\mathop{\overrightarrow\prod}\limits_{1\le i\le n}\; \mathop{\overleftarrow\prod}\limits_{n\ge j\ge 1} R^{(i,j')}(u_i,v_j).
\end{align}

V.\,Korepin~\cite{Kor} prove the following properties of the function~\eqref{Z6vDWBCdef}:
\begin{itemize}
\item[({\it i})] $Z^{(n)}(u;v)$ is a polynomial of order $n-1$ in each variable $u_i$;
\item[({\it ii})] $Z^{(n)}(u;v)$ is symmetric with respect to the set of the variables $u$: $Z^{(n)}(u;v)=Z^{(n)}(u^\sigma;v)$, $\forall\sigma\in S_n$;
\item[({\it iii})] $Z^{(n)}(u;v)$ is a polynomial of order $n$ in each variable $v_j$;
\item[({\it iv})] $Z^{(n)}(u;v)$ is symmetric with respect to the set of the variables $v$: $Z^{(n)}(u;v)=Z^{(n)}(u;v^\sigma)$, $\forall\sigma\in S_n$;
\item[({\it v})] $Z^{(n)}(u;v)$ satisfies the recurrent relation
\begin{multline} \label{condvrecrel}
 Z^{(n)}(u_1,\ldots,u_{n-1},u_n=q^{-2}v_n;v_1,\ldots,v_n)= \\
=(1-q^{-2})v_n\,\prod_{k=1}^{n-1} (u_k-v_n)(q^{-1}v_n-qv_k)\; Z^{(n-1)}(u_1,\ldots,u_{n-1};v_1,\ldots,v_{n-1}),
\end{multline}
where $Z^{(n-1)}(u_1,\ldots,u_{n-1};v_1,\ldots,v_{n-1})$ is a partition function for the $(n-1)\times(n-1)$ lattice with DWBC;
\item[({\it vi})] the partition function for $1\times1$ lattice with DWBC is
\begin{align}
 Z^{(1)}(u_1;v_1)=(q-q^{-1})v_1.
\end{align}
\end{itemize}

Note that a polynomial of degree $n-1$ can be given by its values in $n$ different points. For a fixed polymonial $Z^{(n-1)}(u_1,\ldots,u_{n-1};v_1,\ldots,v_{n-1})$ the conditions~({\it iv}) and ({\it v}) fix the values of $Z^{(n)}(u_1,\ldots,u_n;v_1,\ldots,v_n)$ considered as a polynomial of $u_n$ in the points $u_n=q^{-2}v_j$, $j=1,\ldots,n$. Taking into account the `initial' condition~({\it vi}) we conclude that there are unique functions $Z^{(n)}(u_1,\ldots,u_n;v_1,\ldots,v_n)$ satisfying the conditions~({\it i}), ({\it iv}) -- ({\it vi}). It means that if some functions satisfy these conditions then they are partition functions for the 6-vertex model with DWBC. In the paper~\cite{I87}, A.\,G.\,Izergin use this idea obtaining the determinant representation of the partitions function with DWBC 
\begin{multline} \label{stat-s}
Z^{(n)}(u;v)= (q-q^{-1})^n \prod_{m=1}^n v_m \times \\
  \times\frac{\displaystyle\prod_{i,j=1}^n(u_i-v_j)(qu_i-q^{-1}v_j)}
{\displaystyle\prod_{n\ge i>j\ge1}(u_i-u_j)(v_j-v_i)}
\det\left|\left|\frac{1}{(u_i-v_j)(qu_i-q^{-1}v_j)}\right|\right|_{i,j=1,\ldots,n}\, .
\end{multline}
One can check that the functions~\eqref{stat-s} satisfy all the conditions~({\it i}) -- ({\it vi}).

In the other hand it was observed that the integral kernel~\eqref{Kqsym} is proportional to the partition function $Z^{(n)}(u;v)$~\cite{KhP}. Indeed, multiplying ${\cal K}(u;v)$ by $\prod_{k>j}\frac{qu_k-q^{-1}u_j}{u_k-u_j}$ and $\prod_{k>j}\frac{q^{-1}v_k-qv_j}{v_k-v_j}$ we obtain the function symmetric function with respect to the both sets of variables: $u$ and $v$. Multiplying result by $\prod_{i,j=1}^n(u_i-v_i)$ we annihilate the poles of this functions. The precise factor can be deduced by the `initial' condition~({\it iv}). One can check that the function
\begin{multline} \label{stat-s_pr}
 Z^{(n)}(u;v)=(q-q^{-1})^n \prod_{i,j=1}^n(u_i-v_i)\prod_{n\ge k>j\ge1}\frac{(qu_k-q^{-1}u_j)(q^{-1}v_k-qv_j)}{(u_k-u_j)(v_k-v_j)}\;{\cal K}(u,v)=\\
 =(q-q^{-1})^n \prod_{m=1}^n v_m \prod_{n\ge k>j\ge1}\frac{q^{-1}v_k-qv_j}{v_k-v_j}\times \\
 \times \sum_{\sigma\in S_n}\prod_{1\le i<j\le n \atop \sigma(i)>\sigma(j)}
  \frac{qv_{\sigma(i)}-q^{-1}v_{\sigma(j)}}{q^{-1}v_{\sigma(i)}-qv_{\sigma(j)}}
  \prod_{n\ge i>k\ge1}(qu_i-q^{-1}v_{\sigma_k})\prod_{1\le i<k\le n} (u_i-v_{\sigma_k}) 
\end{multline}
satisfy the conditions~({\it i}), ({\it iv}) -- ({\it vi}). Thus the theory of the projections for the Hopf algebra  $U_q(\widehat\slt)$ gives another expression for the partition function. Further we explain the fact that the purely mathematical object turned out to be a partition function for 6-vertex model with DWBC giving an alternative proof of this fact using only algebraic approach~\footnote{This proof has not been published before}.

The main relation of the 6-vertex model with the algebra $U_q(\widehat\slt)$ is the $R$-matrix~\eqref{Bw1}: in the section~\eqref{sec32} we mentioned that this algebra can be described by $RLL$ relations with this $R$-matrix. Let us also recall that the subalgebra $U_F$ generated by the currents $f(z)$ and $K^+(z)$ with the coproduct $\Delta^{op}$ is dual to the subalgebra $U_E$ generated by the currents $e(z)$ and $K^-(z)$ with the coproduct $\Delta$ with respect to the pairing $\la\cdot,\cdot\ra\colon U_F\otimes U_E\to\mathbb C$ defined by
\begin{align}
 \La f(u),e(v)\Ra&=(q-q^{-1})^{-1}\delta(u/v), &
 \La K^+(u),K^-(v)\Ra&=\frac{q^{-1}u-qv}{qu-q^{-1}v}. \label{HPdef_FE}
\end{align}
Since the whole algebra $U=U_q(\widehat\slt)$ is constructed as a Drinfeld double of $U_F^{cop}$ this pairing can be thought as a map $\la\cdot\ra\colon U\to\mathbb C$:
\begin{align}
 \la a\cdot x\ra&=\la a,x\ra, & &\forall a\in U_F,\; x\in U_E.
\end{align}

Consider the Hopf subalgebra $(U^+)^{cop}$ described by the Lax operator $L_+(u)$ with the opposite comultiplication
\begin{align}
 \Delta^{op} L_+(u)^i_k=\sum_{j\in\{+,-\}} L_+(u)^i_j\otimes L_+(u)^j_k, \label{Delta_RLLp}
\end{align}
where $\Delta$ is a standard comultiplication, and the Hopf subalgebra $U^-$ described by the Lax operator $L_-(u)$ with the standard comultiplication
\begin{align}
 \Delta L_-(u)^i_k=\sum_{j\in\{+,-\}} L_-(u)^j_k\otimes L_-(u)^i_j. \label{Delta_RLLm}
\end{align}
The subalgebras $(U^+)^{cop}$ and $U^-$ are dual to each other with respect to the same paring
\begin{align}
 \la a,b\ra&=\la a\cdot b\ra, & &\text{where} & &a\in U^+, b\in U^-.
\end{align}
The formulae~\eqref{LXx}, \eqref{RXX}, the duality of the generators $\la x_i\cdot x^j\ra=\delta^j_i$, and the relation $R^{(21)}_-(v,u)=R_+(u,v)^{-1}$ imply the equality
\begin{align}
 \La L_+^{(1)}(u)\cdot L_-^{(2)}(v)\Ra=R_+(u,v)^{-1}. \label{HPLLRpi}
\end{align}
Taking into account the duality of the multiplication~\eqref{Delta_RLLp} with the multiplication in $U^-$ one yields
\begin{align}
 \La L_+^{(1)}(u)\cdot L_-^{(2)}(v)^{-1}\Ra=R_+(u,v). \label{HPLLRp}
\end{align}
The matrix $R_+(u,v)$ is proportional to the matrix of Boltzmann weights $R(u,v)$. The straightforward calculations lead to the following formula for the pairing in terms of this matrix:
\begin{align}
 \La L_+^{(1)}(u)\cdot L_-^{(2)}(v)^{-1}\Ra=\frac{\la k^+_2(u),k^-_1(v)^{-1}\ra}{u-v} R(u,v), \label{HPLL}
\end{align}
where $k^\pm_1(u)=k^\pm(q^{-1}u)$, $k^\pm_2(u)=k^\pm(qu)^{-1}$ and $k^\pm(u)$ is defined by~\eqref{kpmUqdef}.
Using the duality between $(U^+)^{cop}$ and $U^-$ and the formulae for the coproducts~\eqref{Delta_RLLp} and \eqref{Delta_RLLm} one can generalize the formula~\eqref{HPLL} up to~%
\footnote{There is an alternative derivation of this formula. The formula~\eqref{Ru_ee} for $\A=(U^+)^{cop}$, $\A^*=U^-$ and $\la e_i\cdot e^j\ra=\delta_i^j$ imply
\begin{align}
 \la \R^{(13)}\R^{(32)}\ra_3=\R^{(12)}, \label{R133212}
\end{align}
where the paring $\la\cdot\ra_3$ is taken over the $3^{\text{d}}$ tensor space. Let us consider the following representations (instead of the evaluation representations):
\begin{align}
\Pi^-_{u_1,\ldots,u_n} &=(\Pi^-_{u_1}\otimes\ldots\otimes\Pi^-_{u_n})\circ\Delta^n, &
\Pi^+_{v_1,\ldots,v_n} &=(\Pi^+_{v_1}\otimes\ldots\otimes\Pi^+_{v_n})
\circ\Delta^n,
\end{align}
where $\Delta^n$ is defined by the formulae
\begin{align}
\Delta^n&=\big(\Delta\otimes(\id)^{\otimes(n-1)}\big)\circ\Delta^{n-1}, & \Delta^1&=\Delta.
\end{align}
Applying the map $(\Pi^-_{u_1,\ldots,u_n}\otimes\Pi^+_{v_1,\ldots,v_n})$ to the formula~\eqref{R133212} we obtain
\begin{align}
 \La L_+^{(1)}(u_1)\cdots L_+^{(n)}(u_n),L_-^{(n')}(v_n)^{-1}\cdots L_-^{(1')}(v_1)^{-1}\Ra=\mathop{\overrightarrow\prod}\limits_{1\le i\le n}\; \mathop{\overleftarrow\prod}\limits_{n\ge j\ge 1} R^{(i,j')}_+(u_i,v_j). \label{HPLLLp}
\end{align}
Substituting the expression~\eqref{HPLL} for $R_+(u,v)$ one yields~\eqref{HPLLL}.
}
\begin{multline}
 \La L_+^{(1)}(u_1)\cdots L_+^{(n)}(u_n),L_-^{(n')}(v_n)^{-1}\cdots L_-^{(1')}(v_1)^{-1}\Ra= \\
    =\prod_{i,j=1}^n\frac{\la k^+_2(u_i),k^-_1(v_j)^{-1}\ra}{u_i-v_j}\mathop{\overrightarrow\prod}\limits_{1\le i\le n}\; \mathop{\overleftarrow\prod}\limits_{n\ge j\ge 1} R^{(i,j')}(u_i,v_j). \label{HPLLL}
\end{multline}
This is a crusial formula explaining the relation between the statistical models on the lattices and the quantum groups. 

Let $\bar P^-\colon U_E\to U_E$ is a linear operator defined by the formula
\begin{align}
 \bar P^-(ab)&=\varepsilon(a)b, & &\forall a\in U^+_e,\; b\in U^-_E,
\end{align}
where $U^+_e$ is subalgebra generated by $\hat e^+[s]$, $s(u)\in\mathbb C[u^{-1},u]$,and $U^-_E$ is a subalgebra generated by $\hat e^-[s]$ ans $\hat h^-[s]$, $s(u)\in\mathbb C[u^{-1},u]$. The operator $\bar P^-$ is a projection dual to the projection $P^+\colon U_F\to U_F$ with respect to the pairing~\eqref{HPdef_FE}:
\begin{align} \label{projduality}
 \la a,\bar P^-(x)\ra&=\la P^+(a),x\ra, & &\forall a\in U_F,\; x\in U_E.
\end{align}

Let us introduce the following notations for the following entries of the matrices $L_+(u)$ and $L_-(v)^{-1}$:
\begin{align}
B^+(u)&=L_+(u)^+_-=(q-q^{-1})f^+(u)k^+_2(u),  \\
\tilde C^-(v)&=\big(L_-(v)^{-1}\big)^-_+=-(q-q^{-1})e^-(v)k^-_1(v)^{-1}.
\end{align}
Taking the matrix entry $\Big(\cdot\Big)^{+\ldots+;-\ldots-}_{-\ldots-;+\ldots+}$ in the both hand sides of~\eqref{HPLLL} and taking into account the formulae~\eqref{ZRRpmmp} and \eqref{Z6vDWBCdef} we obtain
\begin{multline}
 \La B^+(u_1)\cdots B^+(u_n),\tilde C^-(v_n)\cdots \tilde C^-(v_1)\Ra= \\
    =\prod_{i,j=1}^n\frac{\la k^+_2(u_i),k^-_1(v_j)^{-1}\ra}{u_i-v_j}Z^{(n)}(u;v), \label{HPBC}
\end{multline}
The left hand side of~\eqref{HPBC} can be expressed in terms of the projection of total currents:
\begin{align}
 B^+(u_1)\cdots B^+(u_n)&= B^+(u_n)\cdots B^+(u_1)= \label{BB_CC} \\
  &=(q-q^{-1})^n\prod_{n\ge i>j\ge 1}\frac{qu_i-q^{-1}u_j}{u_i-u_j}P^+\big(f(u_n)\cdots f(u_1)\big)\prod_{i=1}^n k^+_2(u_i), \notag \\
 \tilde C^-(v_n)\cdots \tilde C^-(v_1)
  &=(q-q^{-1})^n\prod_{n\ge i>j\ge 1}\frac{q^{-1}v_i-qv_j}{v_i-v_j}\bar P^-\big(e(v_n)\cdots e(v_1)\big)\prod_{j=1}^n k^-_2(v_j)^{-1}, \notag
\end{align}
Substituting the formulae~\eqref{BB_CC} to~\eqref{HPBC} and taking into account the duality~\eqref{projduality} and the formulae
\begin{align*}
 \La f(u_n)\cdots f(u_1) a^+_K,e(v_n)\cdots e(v_1)a^-_K\Ra&=
 \La f(u_n)\cdots f(u_1),e(v_n)\cdots e(v_1)\Ra \la a^+_K,a^-_K\ra, \\
 \La\prod_{i=1}^n k^+_2(u_i),\prod_{j=1}^n k^-_1(v_j)\Ra&=\prod_{i,j=1}^n\La k^+_2(u_i),k^-_1(v_j)\Ra,
\end{align*}
(where $a^\pm_K\in H^\pm$ and $H^-$ is the subalgebra generated by $\hat h^-[s]$, $s(u)\in\mathbb C[u^{-1},u]$), we derive the following formula for the partition function
\begin{multline}
 Z^{(n)}(u;v)=(q-q^{-1})^{2n}\prod_{i,j=1}^n(u_i-v_j) \times \\
  \times\prod_{n\ge i>j\ge 1}\frac{(qu_i-q^{-1}u_j)(q^{-1}v_i-qv_j)}{(u_i-u_j)(v_i-v_j)}
   \La P^+\big(f(u_n)\cdots f(u_1)\big),e(v_n)\cdots e(v_1)\Ra. \label{Z6ffee}
\end{multline}
Substituting the integral formula for the projection~\eqref{PpK} to~\eqref{Z6ffee} and calculating the Hopf pairing by the formula
\begin{multline}
 \La f(t_n)\cdots f(t_1),e(v_n)\cdots e(v_1)\Ra=(q-q^{-1})^{-n}\sum_{\sigma\in S_n}\pi^{q^{-1}}_v(\sigma)\prod_{m=1}^n\delta(t_m/v_m)= \\
  =(q-q^{-1})^{-n}\sum_{\sigma\in S_n}\prod_{\substack{l<l' \\ \sigma(l)>\sigma(l')}}
   \frac{qv_{\sigma(l)}-q^{-1}v_{\sigma(l')}}{q^{-1}v_{\sigma(l)}-qv_{\sigma(l')}}\prod_{m=1}^n\delta(t_m/v_{\sigma(m)}),
    \label{la_ff_ee_ra_FE_}
\end{multline}
taking into account the formulae~\eqref{Kqsym} and \eqref{PpK_ker}
we arrive to the formula~\eqref{stat-s_pr}.

\section[Algebra $E_{\tau,\hbar}(\slt)$ with `Drinfeld' comultiplication]{Algebra $E_{\tau,\hbar}(\slt)$ with `Drinfeld' comultiplication, Hopf paring of currents}
\label{sec43}

In this section we consider the currents of the algebra $E_{\tau,\hbar}(\slt)$ and calculate the Hopf paring between its two dual subalgebras in terms of currents. For these purposes we introduce an analogue of the $q$-symmetrization for the elliptic case and calculate the coproduct of the product of currents.

Let $\A$ be an algebra generated by $\hat h^+[s]$, $\hat h^-[s]$, $\hat e[s]$, $\hat f[s]$, $s\in\lfK_0$, where
\begin{align}
 \hat h^+,\hat h^-,\hat e,\hat f\colon\lfK_0\to\A
\end{align}
are the operators corresponding to the currents $h^+(u)$, $h^-(u)$, $e(u)$, $f(u)$ (defined in the sections~\ref{sec32}, \ref{sec34}) subjected to the commutation relations~\cite{EF}:
\begin{gather}
[K^\pm(u),K^\pm(v)]=0, \qquad [K^+(u),K^-(v)]=0, \\
K^\pm(u)e(v)K^\pm(u)^{-1}=\frac{\theta(u-v+\hbar)}{\theta(u-v-\hbar)}e(v), \label{cr_EFR0Ke} \\
K^\pm(u)f(v)K^\pm(u)^{-1}=\frac{\theta(u-v-\hbar)}{\theta(u-v+\hbar)}f(v), \label{cr_EFR0Kf}\\
\theta(u-v-\hbar)e(u)e(v)=\theta(u-v+\hbar)e(v)e(u), \label{cr_EFR0ee} \\
\theta(u-v+\hbar)f(u)f(v)=\theta(u-v-\hbar)f(v)f(u), \label{cr_EFR0ff} \\
[e(u),f(v)]=\frac1\hbar\delta(u-v)\Big(K^+(u)-K^-(v)\Big), \label{cr_EFR0ef}
\end{gather}
where
$K^+(u)=\exp\Big(\dfrac{e^{\hbar\partial_u}-e^{-\hbar\partial_u}}{2\partial_u}
h^+(u)\Big)$, $K^-(u)=\exp\big(\hbar h^-(u)\big)$. Here and further the meromorphic functions of $u$ and $v$ is understood as decomposed into formal series by the corresponding way. The algebra $\A$ is the algebra~$E_{\tau,\hbar}(\slt)$ on the zero level of central charge, which was mentioned in the section~\ref{sec33}. This algebra is equipped with the following comultiplication and counity:
\begin{align}
 \Delta K^\pm(u)&=K^\pm(u)\otimes K^\pm(u), \label{cpKdef_} \\
 \Delta e(u)&=e(u)\otimes1+K^-(u)\otimes e(u), \label{cpedef} \\
 \Delta f(u)&=f(u)\otimes K^+(u)+1\otimes f(u), \label{cpfdef} \\
 \varepsilon(K^\pm(u))&=1, \qquad \varepsilon(e(u))=0, \qquad \varepsilon(f(u))=0.
\end{align}
This comultiplication is an analogue of the Drinfeld comultiplication~\eqref{cpKdef}, \eqref{cpefdef} of the algebra $U_q(\widehat\slt)$.

Let $\A_F$ and $\A_E$ be subalgebras of $\A$ generated by the generators $\hat h^+[s]$, $\hat f[s]$, and $\hat h^-[s]$, $\hat e[s]$, $s\in\lfK_0$, respectively. The subalgebra $\A_F$ is described by the currents $K^+(u)$, $f(u)$, and the subalgebra $\A_E$ --  by $K^-(u)$, $e(u)$. We also introduce the notations $H^+$ and $H^-$ for the subalgebras of $\A$ generated by $\hat h^+[s]$ and $\hat h^-[s]$ respectively. They are described by the currents $K^+(u)$ and $K^-(u)$. As was stated in~\cite{EF} the bialgebras $(\A_F)^{cop}$ and $\A_E$ are dual to each other with respect to the pairing
\begin{align}
 \La f(u),e(v)\Ra_{FE}&=\hbar^{-1}\delta(u,v), &
 \La K^+(u),K^-(v)\Ra_{FE}&=\frac{\theta(u-v-\hbar)}{\theta(u-v+\hbar)}.
\end{align}
The pairing between the other elements is defined by the formulae
\begin{align}
 \La f(u),K^-(v)\Ra_{FE}&=\La K^+(u),e(v)\Ra_{FE}=0 \label{HPfKmeKp}
\end{align}
and the property of duality.

The algebras $\A_F$ and $\A_E$ is $\mathbb Z$-graduated with respect to the number of generators $\hat f[s]$ and $\hat e[s]$ respectively:
\begin{align}
 \A_F&=\bigoplus\limits_{n\ge0}\A^n_F, & \A_E&=\bigoplus\limits_{n\ge0}\A^n_E.
\end{align}
The spaces $\A^n_F$ and $\A^n_E$ are spanned by the elements $\hat f[s_n]\cdots \hat f[s_1]t^+$ and $\hat e[s_n]\cdots \hat e[s_1]t^-$ respectively, where $s_i\in\lfK_0$, $t^+\in H^+$, $t^-\in H^-$. These gradings are dual with respect to the paring $\La\cdot,\cdot\Ra_{FE}$ in the following sense:

\begin{prop} \label{propHP}
If $k\ne m$ then $\La X_k,Y_m\Ra_{FE}=0$ for all $X_k\in\A^k_F$, $Y_m\in\A^m_E$. The paring between the elements of $\A^n_F$ and $\A^n_E$ is then non-degenerated and it can be written in terms of the currents as follows:
\begin{gather}
 \La f(u_n)\cdots f(u_1)t^+,e(v_n)\cdots e(v_1)t^-\Ra_{FE}= \notag \\
  =\hbar^{-n}\sum_{\sigma\in S_n}\prod_{\substack{l<l' \\ \sigma(l)>\sigma(l')}}
   \frac{\theta(v_{\sigma(l)}-v_{\sigma(l')}+\hbar)}{\theta(v_{\sigma(l)}-v_{\sigma(l')}-\hbar)}\prod_{m=1}^n\delta(u_m,v_{\sigma(m)})
   \La t^+,t^-\Ra_{FE}. \label{la_ff_ee_ra}
\end{gather}
\end{prop}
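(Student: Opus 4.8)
The plan is to prove both assertions by exploiting the Hopf pairing axioms relating the multiplication of $(\A_F)^{cop}$ to the comultiplication of $\A_E$ and conversely, namely $\La X_1X_2,Y\Ra_{FE}=\La X_1\otimes X_2,\Delta(Y)\Ra$ and $\La X,Y_1Y_2\Ra_{FE}=\La\Delta^{op}(X),Y_1\otimes Y_2\Ra$, together with the explicit coproducts \eqref{cpKdef_}, \eqref{cpedef}, \eqref{cpfdef} and the elementary pairings $\La f(u),e(v)\Ra_{FE}=\hbar^{-1}\delta(u,v)$, $\La K^+(u),K^-(v)\Ra_{FE}=\frac{\theta(u-v-\hbar)}{\theta(u-v+\hbar)}$ and \eqref{HPfKmeKp}.

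First I would settle the orthogonality of distinct gradings. The coproducts \eqref{cpedef}, \eqref{cpKdef_} show that $\Delta$ preserves the number of $e$-factors, i.e. $\Delta(\A_E^m)\subset\bigoplus_{k+l=m}\A_E^k\otimes\A_E^l$, and likewise $\Delta$ on $\A_F$ preserves the number of $f$-factors. Since the only nonvanishing elementary pairings are $\La f,e\Ra_{FE}$ (degree $1$ against degree $1$) and $\La K^+,K^-\Ra_{FE}$ (degree $0$ against degree $0$), while $\La f,K^-\Ra_{FE}$ and $\La K^+,e\Ra_{FE}$ vanish by \eqref{HPfKmeKp}, an induction on the total degree using the product–coproduct axiom forces $\La \A_F^k,\A_E^m\Ra_{FE}=0$ whenever $k\ne m$. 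This is the first assertion.

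The explicit formula \eqref{la_ff_ee_ra} I would obtain by induction on $n$, peeling off the leftmost current $f(u_n)$. Applying $\La f(u_n)\cdot[f(u_{n-1})\cdots f(u_1)t^+],\,e(v_n)\cdots e(v_1)t^-\Ra_{FE}=\La f(u_n)\otimes[\cdots],\Delta(e(v_n)\cdots e(v_1)t^-)\Ra$ and expanding $\Delta$ multiplicatively, the first tensor slot must carry exactly one $e$ for the pairing with the degree-one element $f(u_n)$ to survive; selecting $e(v_m)$ there, with the coproduct copies $K^-(v_j)$, $j>m$, standing to its left, leaves $e(v_n)\cdots\widehat{e(v_m)}\cdots e(v_1)t^-$ in the second slot. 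Pairing $f(u_n)$ against the first slot through $\Delta^{op}$ contracts the $f$-part with $e(v_m)$, giving $\hbar^{-1}\delta(u_n,v_m)$, and contracts the $K^+(u_n)$-tail with the $K^-(v_j)$, $j>m$, producing $\prod_{j>m}\frac{\theta(u_n-v_j-\hbar)}{\theta(u_n-v_j+\hbar)}$, the factors to the right of $e(v_m)$ contributing $1$ through the counit. This yields the recursion $\La\cdots\Ra_{FE}=\sum_{m=1}^n \hbar^{-1}\delta(u_n,v_m)\prod_{j>m}\frac{\theta(u_n-v_j-\hbar)}{\theta(u_n-v_j+\hbar)}\,\La f(u_{n-1})\cdots f(u_1)t^+,\,e(v_n)\cdots\widehat{e(v_m)}\cdots e(v_1)t^-\Ra_{FE}$. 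Setting $u_n=v_m$ on the support of the delta and using that $\theta$ is odd, which follows from \eqref{D_theta} and its relation to $\vartheta_1$, rewrites the cross factor as $\prod_{j>m}\frac{\theta(v_j-v_m+\hbar)}{\theta(v_j-v_m-\hbar)}$; these are exactly the inversions of the full permutation $\sigma$ that involve the value $\sigma(n)=m$, so the inductive hypothesis assembles all contributions into the stated sum over $S_n$ of inversion products. The same theta factors arise equivalently by reordering the $e$-currents through \eqref{cr_EFR0ee}, which is the Wick-type picture behind the formula.

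The main obstacle will be the careful bookkeeping of the Cartan factor $\La t^+,t^-\Ra_{FE}$: because $\Delta f(u)$ carries a $K^+(u)$-tail \eqref{cpfdef}, the coproducts of the $f$-currents and of $t^-$ do not manifestly decouple, and one must verify, using $\La K^+,e\Ra_{FE}=\La f,K^-\Ra_{FE}=0$ and the group-like property \eqref{cpKdef_} of $K^\pm$, that every $K^+$ produced in the $f$-expansion is contracted against a $K^-$ originating from an $e$-coproduct rather than from $t^-$, so that $t^+$ is left to pair only with $t^-$. Once the formula is established, non-degeneracy in each matched degree is immediate: since the total pairing $\La\cdot,\cdot\Ra_{FE}$ is non-degenerate by the duality of $(\A_F)^{cop}$ and $\A_E$ established in \cite{EF}, and distinct gradings are orthogonal by the first assertion, any $X\in\A_F^n$ annihilating all of $\A_E^n$ would annihilate all of $\A_E$, hence vanish, and symmetrically on the other side.
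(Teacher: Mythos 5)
Your proof is correct, and it runs the induction in the mirror direction to the paper's. The paper peels off the leftmost $e$-current, using $\la x,ab\ra_{FE}=\la\Delta^{op}(x),a\otimes b\ra_{FE}$ with $a=e(v_m)$, and therefore needs the explicit $\hbar$-symmetrized coproduct formula \eqref{Deltaopff} for $\Delta^{op}\big(f(u_n)\cdots f(u_1)\big)$, whose derivation rests on the combinatorial lemma on the decomposition $S_n=S^{(s)}_nS_{s,n}S_{s}$; the sum over which $u_i$ carries the delta-function is produced there by the symmetrizer $\Sym^\hbar_u$. You instead peel off the leftmost $f$-current and push the coproduct onto the $e$-side, expanding $\Delta\big(e(v_n)\cdots e(v_1)t^-\big)$ multiplicatively and keeping only the terms with a single $e$ in the first tensor slot, so the sum over the value $\sigma(n)=m$ appears directly and the permutation sum with its inversion factors is assembled recursively, the oddness of $\theta$ converting $\prod_{j>m}\theta(u_n-v_j-\hbar)/\theta(u_n-v_j+\hbar)$ into the stated inversion product on the support of $\delta(u_n,v_m)$. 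Your Cartan bookkeeping is also sound: in the iterated $\Delta^{op}$ of $f(u_n)$ every tensor factor to the right of $f$ equals $1$, so the $K^-(v_j)$ with $j<m$ and the $H^-$-part of $\Delta t^-$ are hit only by counits, and $t^+$ survives to pair against $t^-$ exactly once. What each route buys: the paper's detour through \eqref{Deltaopff} establishes a closed formula for the coproduct of a product of currents that is of independent use, while your route proves \eqref{la_ff_ee_ra} with less machinery and no symmetrization formalism; moreover you make explicit two points the paper leaves implicit, namely the grading orthogonality as a separate degree-counting statement and the non-degeneracy on each graded piece (orthogonality of gradings combined with non-degeneracy of the total pairing from \cite{EF}). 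One detail to tighten: as literally stated, your induction on total degree does not close at the pairings $\la\A^k_F,\A^1_E\ra_{FE}$ with $k\ge2$ (expanding on the $E$-side there does not lower the degree), so at that step you must peel on the $F$-side instead --- this is precisely the paper's direct computation \eqref{eq4328} --- and symmetrically for $\la\A^1_F,\A^m_E\ra_{FE}$ with $m\ge2$.
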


Before proving this proposition we rewrite the formula~\eqref{la_ff_ee_ra} in terms of an elliptic $\hbar$-symmetrization analogous to the $q$-symmetrization introduced in the section~\ref{sec41} and also calculate the coproduct of the product of currents in these terms.

Define first the elliptic $\hbar$-action of the permutation group $S_n$:
\begin{align}
 \pi^{\hbar}_u(\sigma)g(u)=\prod_{\substack{l<l' \\ \sigma(l)>\sigma(l')}}
   \frac{\theta(u_{\sigma(l)}-u_{\sigma(l')}-\hbar)}{\theta(u_{\sigma(l)}-u_{\sigma(l')}+\hbar)}g(u^\sigma), \label{Elh_def}
\end{align}
The distribution $g(u)$ is called $\hbar$-symmetric if it is invariant under the elliptic $\hbar$-action:
\begin{align}
 \pi^{\hbar}_u(\sigma)g(u)&=g(u),  & &\forall\sigma\in S_n. \label{a_is_h-sym}
\end{align}
Let us remark that it is sufficient to check the condition~\eqref{a_is_h-sym} for the generators
$\sigma_{i,i+1}$, $i=1,\ldots,n-1$. The main example of $\hbar$-symmetric distribution is a product of total currents $f(u_n)\cdots f(u_1)$. Define the elliptic $\hbar$-symmetrizator as
\begin{align}
 \Sym^\hbar_u g(u)=\sum_{\sigma\in S_n}\pi^{\hbar}_u(\sigma)g(u)
 =\sum_{\sigma\in S_n}\prod_{\substack{l<l' \\ \sigma(l)>\sigma(l')}}
   \frac{\theta(u_{\sigma(l)}-u_{\sigma(l')}-\hbar)}{\theta(u_{\sigma(l)}-u_{\sigma(l')}+\hbar)}g(u^\sigma). \label{ElSymh_def}
\end{align}
The $\hbar$-symmetrized distribution $\Sym^\hbar_u a(u)$ is $\hbar$-symmetric:
\begin{align}
 \pi^\hbar_u(\sigma)\Sym^\hbar_{u_n,\ldots,u_0} a(u)=\Sym^\hbar_{u_n,\ldots,u_0} a(z).
\end{align}
If distribution $g(u)$ is $\hbar$-symmetric then $\Sym^\hbar_u g(u)=n! g(u)$. In particular, $\Sym^\hbar_u\Sym^\hbar_u=n!\Sym^\hbar_u$.
In these terms the formula~\eqref{la_ff_ee_ra} reads
\begin{multline} \label{la_ff_ee_ra_}
 \La f(u_n)\cdots f(u_1)t^+,e(v_n)\cdots e(v_1)t^+\Ra_{FE} =\hbar^{-n}\Sym^\hbar_u(\sigma)\prod_{m=1}^n\delta(u_m,v_m)\La t^+,t^-\Ra_{FE}= \\
 =\hbar^{-n}\Sym^{-\hbar}_v(\sigma)\prod_{m=1}^n\delta(u_m,v_m)\La t^+,t^-\Ra_{FE}.
\end{multline}

\begin{lem} Let $S_{s,n}$ and $S_{s}$ are subgroups of $S_n$ consisting of permutations of elements $\{s+1,\ldots,n\}$ and $\{1,\ldots,s\}$ respectively.
Consider one more subset
\begin{align}
S^{(s)}_n=\{\sigma\in S_n\mid \sigma(s+1)<\ldots<\sigma(n),\sigma(1)<\ldots<\sigma(s)\}.
\end{align}
Then $S_n=S^{(s)}_nS_{s,n}S_{s}=S^{(s)}_nS_{s}S_{s,n}$, moreover, each element $\sigma\in S_n$ has unique decomposition
\begin{align}
&\sigma=\sigma_3\sigma_2\sigma_1=\sigma_3\sigma_1\sigma_2, & &\sigma_3\in S^{(s)}_n, & &\sigma_2\in S_{s,n}, & &\sigma_1\in S_{s}. \label{sigma_dec}
\end{align}
\end{lem}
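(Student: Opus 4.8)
The plan is to recognize $S^{(s)}_n$ as a complete and irredundant set of representatives for the left cosets of the subgroup $H=S_sS_{s,n}$ in $S_n$. First I would observe that $S_s$ and $S_{s,n}$ permute the disjoint blocks $\{1,\ldots,s\}$ and $\{s+1,\ldots,n\}$, so any $\sigma_1\in S_s$ and $\sigma_2\in S_{s,n}$ have disjoint supports and therefore commute: $\sigma_1\sigma_2=\sigma_2\sigma_1$. This single remark already reduces the two asserted orderings in~\eqref{sigma_dec} to one another, and shows that $H=S_sS_{s,n}=S_{s,n}S_s$ is an internal direct product subgroup, isomorphic to $S_s\times S_{n-s}$, in which the factorization $h=\sigma_1\sigma_2$ of any $h\in H$ is unique (one recovers $\sigma_1$ and $\sigma_2$ by restricting $h$ to the two blocks, each of which $h$ preserves setwise). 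Hence it suffices to prove that every $\sigma\in S_n$ factors uniquely as $\sigma=\sigma_3 h$ with $\sigma_3\in S^{(s)}_n$ and $h\in H$.

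Next I would set up the key bookkeeping device: the map sending $\sigma_3\in S^{(s)}_n$ to the $s$-element subset $A=\sigma_3(\{1,\ldots,s\})\subset\{1,\ldots,n\}$ is a bijection onto the collection of $s$-subsets. Indeed the defining inequalities $\sigma_3(1)<\cdots<\sigma_3(s)$ and $\sigma_3(s+1)<\cdots<\sigma_3(n)$ mean exactly that $\sigma_3$ lists the elements of $A$ in increasing order on $\{1,\ldots,s\}$ and the elements of the complement in increasing order on $\{s+1,\ldots,n\}$; thus $\sigma_3$ is determined by $A$, and conversely every $A$ arises. In particular $|S^{(s)}_n|=\binom{n}{s}$, which together with $|H|=s!\,(n-s)!$ gives the consistency check $|S^{(s)}_n|\,|H|=n!=|S_n|$.

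For existence, given $\sigma$ I would put $A=\sigma(\{1,\ldots,s\})$, let $\sigma_3\in S^{(s)}_n$ be the shuffle attached to $A$, and set $h=\sigma_3^{-1}\sigma$. Then $h(\{1,\ldots,s\})=\sigma_3^{-1}(A)=\{1,\ldots,s\}$ and likewise $h$ fixes $\{s+1,\ldots,n\}$ setwise, so $h\in H$ and $\sigma=\sigma_3 h$, which I then split as $h=\sigma_2\sigma_1=\sigma_1\sigma_2$. For uniqueness, if $\sigma_3 h=\sigma_3' h'$ with $\sigma_3,\sigma_3'\in S^{(s)}_n$ and $h,h'\in H$, then $\sigma_3^{-1}\sigma_3'=h(h')^{-1}\in H$ preserves $\{1,\ldots,s\}$ setwise, whence $\sigma_3(\{1,\ldots,s\})=\sigma_3'(\{1,\ldots,s\})$; by the bijection of the previous step the two shuffles share the same subset $A$ and are therefore equal, forcing $h=h'$ and then $\sigma_1=\sigma_1'$, $\sigma_2=\sigma_2'$ by restriction to the blocks.

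The argument is almost entirely formal; the only point that needs genuine care is the claim that an element of $H$ preserves each block $\{1,\ldots,s\}$ and $\{s+1,\ldots,n\}$ setwise, so that the coset of $\sigma$ determines the subset $A$ and hence the representative $\sigma_3$ uniquely. This is where the precise definition of $S_s$ and $S_{s,n}$ as permutations of the respective blocks is used, and it is the hinge on which both existence and uniqueness turn; everything else is a matter of unwinding the increasing-order conditions defining $S^{(s)}_n$.
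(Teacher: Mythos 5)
Your proof is correct, and it is organized around a different mechanism than the paper's, even though the two ultimately rest on the same combinatorial fact. You identify $H=S_sS_{s,n}$ as the Young subgroup of all permutations preserving the blocks $\{1,\ldots,s\}$ and $\{s+1,\ldots,n\}$ setwise, and you show that $S^{(s)}_n$ is a transversal for the left cosets of $H$ via the bijection $\sigma_3\mapsto\sigma_3(\{1,\ldots,s\})$ between shuffles and $s$-element subsets; existence and uniqueness then both fall out of this bijection (uniqueness in one line: equal cosets give equal subsets give equal shuffles), and you get the cardinality check $\binom{n}{s}\,s!\,(n-s)!=n!$ for free. The paper instead argues entirely by hand: for existence it sorts the sequence $\sigma^{-1}(1),\ldots,\sigma^{-1}(n)$ according to which block each value falls in — which produces exactly your shuffle attached to $A=\sigma(\{1,\ldots,s\})$, so the two constructions coincide in substance — and for uniqueness it compares monotone sequences, deducing first $\sigma_2=\bar\sigma_2$, then $\sigma_1=\bar\sigma_1$, and only at the end $\sigma_3=\bar\sigma_3$; your argument runs in the opposite order, pinning down $\sigma_3$ first. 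Your route buys conceptual clarity (the coset-transversal picture, the counting sanity check, and the explicit remark that $\sigma_1$ and $\sigma_2$ commute, which makes the equivalence of the two orderings in the decomposition immediate), while the paper's is self-contained and never invokes coset or subgroup language. Both proofs hinge on exactly the point you flag at the end: an element lies in $H$ precisely when it preserves both blocks setwise.
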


\noindent{\bfseries Proof.} Let us consider the sequence of numbers
$\sigma^{-1}(1),\ldots,\sigma^{-1}(n)$. Write out consecutively its elements belonging to the segment $[s+1;n]$ to the first line and its elements belonging to the segment $[1;s]$ to the second line:
\begin{align*}
 &s\le\sigma^{-1}(i_{s+1}),\ldots,\sigma^{-1}(i_n)\le n; & &i_{s+1}<\ldots<i_n; \\
 &1\le\sigma^{-1}(i_1),\ldots,\sigma^{-1}(i_{s})\le s; & &i_1<\ldots<i_{s}.
\end{align*}
Setting $\sigma^{-1}_1(k)=\sigma^{-1}(i_k)$ for $k=s+1,\ldots,n$, $\sigma^{-1}_2(k)=\sigma^{-1}(i_k)$ for $k=1,\ldots,s$ (that is $\sigma^{-1}_2\sigma^{-1}_1(k)=\sigma^{-1}(i_k)$ for $k=1,\ldots,n$) and $\sigma_3(k)=i_k$ for $k=1,\ldots,n$ one obtains the decomposition~\eqref{sigma_dec}.

Let $\sigma\in S_n$ have two decompositions $\sigma=\sigma_3\sigma_2\sigma_1=\bar\sigma_3\bar\sigma_2\bar\sigma_1$ of type~\eqref{sigma_dec}, then
$\bar\sigma_3=\sigma_3\sigma_4$, where $\sigma_4=(\sigma_2\bar\sigma^{-1}_2)(\sigma_1\bar\sigma^{-1}_1)$. Consider the sequence of numbers 
\begin{align}
 \sigma_3(\sigma_4(s+1)),\ldots,\sigma_3(\sigma_4(n)). \label{seqSigma34}
\end{align}
Since $\sigma_3\sigma_4=\bar\sigma_3\in S^{(s)}_n$ this sequence is increasing. Since the numbers $\sigma_4(k)=\sigma_2\bar\sigma^{-1}_2(k)$ belong to $[s+1;n]$ for $k=s+1,\ldots,n$, the sequence~\eqref{seqSigma34} can be obtained by a permutation of the sequence $\sigma_3(s+1),\ldots,\sigma_3(n)$, but the last one is also increasing because $\sigma_3\in S^{(s)}_n$ and, consequently it coincides with~\eqref{seqSigma34}. This implies $\sigma_2=\bar\sigma_2$. Analogously, considering the sequences $\sigma_3(\sigma_4(1)),\ldots,\sigma_3(\sigma_4(s))$ and $\sigma_3(1),\ldots,\sigma_3(s)$ one concludes $\sigma_1=\bar\sigma_1$, and, therefore $\sigma_3=\bar\sigma_3$. \qed

The meaning of this lemma in terms of the elliptic $\hbar$-action is reduced to the formula
\begin{gather}
 \Sym^\hbar_u=\Sym^\hbar_{(u_1,\ldots,u_n)}=\sum_{\sigma\in S^{(s)}_n}\pi^\hbar_u(\sigma)
  \Sym^\hbar_{(u_{s+1},\ldots,u_n)}\Sym^\hbar_{(u_1,\ldots,u_{s})}. \label{mlad}
\end{gather}
It can be used to calculate the coproduct of the product of the total currents.

Applying the formulae~\eqref{cpfdef}, $\Delta^{op}(xy)=\Delta^{op}(x)\Delta^{op}(y)$ and moving all $K^+(u)$ to the right using the relation~\eqref{cr_EFR0Kf} one yields
\begin{multline}
 \Delta^{op}\big(f(u_n)\cdots f(u_1)\big)
  =(K^+(u_n)\otimes f(u_n)+f(u_n)\otimes1)\cdots(K^+(u_1)\otimes f(u_1)+f(u_1)\otimes1)= \\
  =\sum_{s=0}^n\sum_{\sigma\in S^{(s)}_n}
   \prod_{\substack{s+1\le k'\le n \\ 1\le k\le s \\ \sigma(k)>\sigma(k')}}
      \frac{\theta(u_{\sigma(k)}-u_{\sigma(k')}-\hbar)}
           {\theta^+(u_{\sigma(k)}-u_{\sigma(k')}+\hbar)} \times \\
  \times f(u_{\sigma(n)})\cdots f(u_{\sigma(s)}) K^+(u_{\sigma(s)})\cdots K^+(u_{\sigma(1)})
    \otimes f(u_{\sigma(s)})\cdots f(u_{\sigma(1)})= \\
  =\sum_{s=0}^n\sum_{\sigma\in S^{(s)}_n}\pi^{\hbar}_u(\sigma)
   f(u_n)\cdots f(u_{s+1}) K^+(u_{s})\cdots K^+(u_1)
    \otimes f(u_{s})\cdots f(u_1). \label{copr_ff}
\end{multline}
Taking into account the $\hbar$-symmetry of the product of the total currents with respect to the subgroups $S_{s,n}$ and $S_{s}$ and using~\eqref{mlad} one can rewrite it in the form
\begin{multline} \label{Deltaopff}
 \Delta^{op}\big(f(u_n)\cdots f(u_1)\big)= \\
 =\Sym^\hbar_u\sum_{s=0}^{n}\frac1{s!(n-s)!}
   f(u_n)\cdots f(u_{s+1}) K^+(u_{s})\cdots K^+(u_1)\otimes f(u_{s})\cdots f(u_1).
\end{multline}

\noindent{\bf Proof of Proposition~\ref{propHP}.}
For $k>1$ and $t^+\in H^+$ one has the formula
\begin{align}
 \la f(u_{k-1})\cdots f(u_1)t^+,1\ra_{FE}=\varepsilon\big(f(u_{k-1})\cdots f(u_1)t^+\big)=0.
\end{align}
Using it, the duality $\la xy,a\ra_{FE}=\la x\otimes y,\Delta(a)\ra_{FE}$ and the formulae~\eqref{cpedef}, \eqref{HPfKmeKp} one yields
\begin{multline}
\La f(u_{k})f(u_{k-1})\cdots f(u_1)t^+,e(v)\Ra_{FE}= \\
 =\La f(u_{k})\otimes f(u_{k-1})\cdots f(u_1)t^+,e(v)\otimes1+K^-(v_1)\otimes e(v)\Ra_{FE}=0. \label{eq4328}
\end{multline}
Thus we have $\la\A^k_F,e(v)\ra_{FE}=0$, for $k>1$.

Let us prove the proposition~\ref{propHP} by induction. The formulae~\eqref{HPfKmeKp} implies that $\la\A^1_F,\A^0_E\ra_{FE}=\la\A^0_F,\A^1_E\ra_{FE}=0$. One can also check the formula~\eqref{la_ff_ee_ra} for $n=1$: using the formulae of the duality $\la x,ab\ra_{FE}=\la \Delta^{op}(x),a\otimes b\ra_{FE}$, $\la xy,a\ra_{FE}=\la x\otimes y,\Delta(a)\ra_{FE}$, $\la x,1\ra_{FE}=\epsilon(x)$ and the formula $(\id\otimes\varepsilon)\circ\Delta=\id$ one obtains
\begin{multline}
 \la f(u)t^+,e(v)t^-\ra_{FE}=\sum_i\La (K^+(u)t''_i\otimes f(u)t'_i+f(u)t''_i\otimes t'_i),e(v)\otimes t^-\Ra_{FE}= \\
=\sum_i\la f(u)t''_i,e(v)\ra_{FE}\la t'_i,t^-\ra_{FE}=\sum_i\La f(u)\otimes t''_i,\big(e(v)\otimes1+K^-(v)\otimes e(v)\big)\Ra_{FE}\la t'_i,t^-\ra_{FE}= \\
=\la f(u),e(v)\ra_{FE}\sum_i\varepsilon(t''_i)\la t'_i,t^-\ra_{FE}
=\hbar^{-1}\delta(u,v)\la t^+,t^-\ra_{FE},
\end{multline}
where $\Delta t^+=\sum\limits_i t'_i\otimes t''_i$. Suppose that if $k,m<n$ and $k\ne m$ then $\la\A^k_F,\A^m_E\ra_{EF}=0$ and that the paring between $\A^{n-1}_F$ and $\A^{n-1}_E$ is determined by formula
\begin{multline}
 \La f(u_{n-1})\cdots f(u_1)t^+,e(v_{n-1})\cdots e(v_1)t^-\Ra_{FE} = \\
 =\hbar^{-n+1}\Sym^\hbar_{(u_1,\ldots,u_{n-1})}\prod_{m=1}^{n-1}\delta(u_m-v_m)
 \la t^+,t^-\ra_{FE}, \label{eq4329}
\end{multline}
where $t^+\in H^+$, $t^-\in H^-$. Let $k,m\le n$. Using the duality $\la x,ab\ra_{FE}=\la \Delta^{op}(x),a\otimes b\ra_{FE}$ and applying the formula~\eqref{Deltaopff} in the following paring we see taking into account~\eqref{eq4328} that the only term $s=k-1$ does not vanish:
\begin{multline}
 \La f(u_k)\cdots f(u_1)t^+,e(v_m)\cdots e(v_1)t^-\Ra_{FE}= \\
 =\La\Delta^{op}\big(f(u_k)\cdots f(u_1)t^+\big),e(v_m)\otimes e(v_{m-1})\cdots e(v_1)t^-\Ra_{FE}= \label{eq4330} \\
 =\Sym^\hbar_{(u_1,\ldots,u_k)}\frac1{(k-1)!}\sum_i
 \La f(u_k)K^+(u_{k-1})\cdots K^+(u_1)t''_i,e(v_m)\Ra_{EF}\times \\
 \times \La f(u_{k-1})\cdots f(u_1)t'_i,e(v_{m-1})\cdots e(v_1)t^-\Ra_{FE},
\end{multline}
By induction we have $\la\A^{k-1}_F,\A^{m-1}_E\ra_{FE}=0$ for $k\ne m$ and hence the equation~\eqref{eq4330} implies $\la\A^{k}_F,\A^{m}_E\ra_{FE}=0$ for $m,k\le n$ such that $k\ne m$. Substituting $k=m=n$ to the formula~\eqref{eq4330}, using the formula $$\Sym^\hbar_{(u_1,\ldots,u_n)}\Sym^\hbar_{(u_1,\ldots,u_{n-1})}=(n-1)!\Sym^\hbar_{(u_1,\ldots,u_n)}$$ and~\eqref{eq4329} one derives
\begin{multline}
 \La f(u_k)\cdots f(u_1)t^+,e(v_m)\cdots e(v_1)t^-\Ra_{FE}= \\
 =\hbar^{-n}\Sym^\hbar_{(u_1,\ldots,u_n)}\frac1{(n-1)!}\delta(u_n-v_n) \Sym^\hbar_{(u_1,\ldots,u_{n-1})}\prod_{m=1}^{n-1}\delta(u_m-v_m)\sum_i\varepsilon(t''_i)\la t'_i,t^-\ra= \\
 =\hbar^{-n}\Sym^\hbar_u\prod_{m=1}^n\delta(u_m-v_m)\la t^+,t^-\ra.
\end{multline}
Thus we have proved the formula~\eqref{la_ff_ee_ra}, which is used in the section~\ref{sec45} to obtain the exprassion for the partition function of the SOS model with DWBC. \qed

\section{Elliptic projections}
\label{sec44}

Before introducing the projection for $E_{\tau,\hbar}(\slt)$ we define the analogues of the left and right coideals $\A^-$ and $\A^+$. We generalize the property~\eqref{addition} for this dynamical case proving the analogue of the coideal conditions~\eqref{hsac2}, \eqref{hsac3}.

Let $\hat e^+$, $\hat e^-$, $\hat f^+$, $\hat f^-$ be the operators corresponding to the currents
$e^+(u)$, $e^-(u)$, $f^+(u)$, $f^-(u)$ introducing in the section~\ref{sec34} by the formulae~\eqref{hc_tc} with the Green distributions~\eqref{DdGEp}, \eqref{DdGEm}:
\begin{align}
 \hat f^-_\lambda[s]&=\la f^-_\lambda(u),s(u)\ra_u, & \hat f^+_\lambda[s]&=\la f^+_\lambda(u),s(u)\ra_u, \\
  \hat e^-_\lambda[s]&=\la e^-_\lambda(u),s(u)\ra_u, & \hat e^+_\lambda[s]&=\la e^+_\lambda(u),s(u)\ra_u,
\end{align}
where $s\in\lfK_0$. Let $\A^{-,(n)}_{f,\lambda}$ be a subspace of $\A^{(n)}_F$ spanned by the elements of the form $\hat f^-_\lambda[s_1]\cdots\hat f^-_{\lambda-2(n-1)\hbar}[s_n]$, where $s_k\in\lfK_0$. Similarly we denote by $\A^{+,(n)}_{F,\lambda}$ the subspace of $\A^{(n)}_F$ spanned by the elements of the form $\hat f^+_{\lambda+2(n-1)\hbar}[s_1]\cdots\hat f^+_\lambda[s_n]t^+$, where $s_k\in\lfK_0$, $t^+\in H^+$; in fact, the subspace $\A^{+,(n)}_{F,\lambda}$ is spanned by $\hat f[s_1]\cdots\hat f[s_n]t^+$, where $s_k\in{\cal O}=\mathbb C[[z]]$, $t^+\in H^+$; so it does not depend on $\lambda$: $\A^{+,(n)}_{F,\lambda}=\A^{+,(n)}_F$.
The intersection of each two these subspaces is $\{0\}$. Considering the direct sums of these subspaces over all integer $n\ge0$ we obtain the following graduated spaces:
\begin{align}
 \A^{-}_{f,\lambda}&=\bigoplus_{n\in\mathbb Z_+}\A^{-,(n)}_{f,\lambda}
  =\Span\{\hat f^-_\lambda[s_1]\cdots\hat f^-_{\lambda-2(n-1)\hbar}[s_n]\mid n\in\mathbb Z_+, s_k\in\lfK_0\}, \\
 \A^{+}_F&=\bigoplus_{n\in\mathbb Z_+}\A^{+,(n)}_F
  =\Span\{\hat f[s_1]\cdots\hat f[s_n]t^+\mid n\in\mathbb Z_+, s_k\in{\cal O}, t^+\in H^+\}. 
\end{align}
Actually, $\A^{+}_F$ is a graduated subalgebra generated by $\hat f[s]$, $\hat h[s]$, $s\in{\cal O}$.
The subspases $\A^{+,(n)}_F$, $\A^{-,(n)}_{f,\lambda}$ and $\A^{(n)}_F$ we have introduced can be decomposed as follows
\begin{align}
 \A^{-,(n)}_{f,\lambda}&=\A^{-,(m)}_{f,\lambda}\A^{-,(n-m)}_{f,\lambda-2m\hbar}, &
 \A^{+,(n)}_F&=\A^{+,(m)}_F\A^{+,(n-m)}_F, &
 \A^{(n)}_F&=\A^{(m)}_F\A^{(n-m)}_F, \label{An_Am_Anm}
\end{align}
where we use the notation: for $X$ and $Y$ are subspaces of $\A$ we set $XY=\Span\{xy\mid x\in X,y\in Y\}$.

The elements of the spaces $\A^{(n)}_F$ and $\A^{(n)}_E$ can be also decomposed into the products of negative and positive parts. The existence of this decomposition are provided by the following statement~\cite{EF}:
\begin{prop} \label{prop_Apm}
 The multiplication map $\mu\colon\A_F\otimes\A_F\to\A_F$ establishes an isomorphism of linear spaces $\bigoplus\limits_{m=0}^n\A^{-,(m)}_{f,\lambda}\otimes\A^{+,(n-m)}_F\cong\A^{(n)}_F$ and therefore an isomorphism $\A^{-}_{f,\lambda}\otimes\A^{+}_F\cong\A_F$.
\end{prop}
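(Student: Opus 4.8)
The plan is to prove the graded refinement $\bigoplus_{m=0}^n\A^{-,(m)}_{f,\lambda}\otimes\A^{+,(n-m)}_F\cong\A^{(n)}_F$ by induction on $n$; the full isomorphism $\A^{-}_{f,\lambda}\otimes\A^{+}_F\cong\A_F$ then follows by taking the direct sum over all $n\ge0$ and invoking the grading $\A_F=\bigoplus_n\A^{(n)}_F$. For each fixed $n$ both sides are spanned by products of $n$ half-currents times an element of $H^+$, so it suffices to show that the restriction of the multiplication map $\mu$ to these graded pieces is surjective and injective.

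First I would treat surjectivity, i.e. the inclusion $\A^{(n)}_F\subseteq\A^{-}_{f,\lambda}\cdot\A^{+}_F$. Starting from a spanning element $\hat f[s_n]\cdots\hat f[s_1]t^+$ of $\A^{(n)}_F$, I substitute the decomposition $f(u)=f^+_\lambda(u)-f^-_\lambda(u)$ of \eqref{xpm} (with the $f$-Green distributions \eqref{DdGEp}, \eqref{DdGEm}) for a current and proceed inductively using the factorisation \eqref{An_Am_Anm}. The term carrying the negative half-current simply extends the negative part on the left, so the essential point is to commute each positive half-current to the right past the negative factors already present. This is carried out with the exchange relations for half-currents deduced from the quadratic relation \eqref{cr_EFR0ff} together with the Cartan relation \eqref{cr_EFR0Kf}; I would check that every such exchange reproduces precisely the dynamical shift pattern $\lambda,\lambda-2\hbar,\ldots,\lambda-2(n-1)\hbar$ on the negative factors and $\lambda+2(n-1)\hbar,\ldots,\lambda$ on the positive ones, so that all reordered terms remain inside $\A^{-}_{f,\lambda}\cdot\A^{+}_F$. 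This bookkeeping of the dynamical parameter is the step I expect to be the main obstacle, since the shifts must be threaded consistently through the entire induction rather than appearing in a single isolated exchange.

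For injectivity I would pass to the associated graded with respect to the filtration by the number of negative half-current factors, where the reordering corrections drop out and $\mu$ degenerates into the symbol map sending an ordered negative-then-positive monomial in the modes to its product. The crucial input here is that the test-function space splits as a genuine direct sum $\lfK_0=\mathcal O\oplus\Lambda_{-\lambda}$, with $P^+_\lambda$ and $P^-_\lambda$ the complementary orthogonal idempotents of Section~\ref{sec34}: the positive modes $\hat f[s]$ with $s\in\mathcal O$ and the (dynamically shifted) negative modes are therefore linearly independent, and a PBW-type argument for the quantum-symmetric algebra defined by \eqref{cr_EFR0ff} shows that the ordered monomials are linearly independent. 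Alternatively, injectivity can be read directly off the non-degeneracy of the Hopf pairing of Proposition~\ref{propHP}: the analogous triangular decomposition of the dual algebra $\A_E$ into its positive and negative parts pairs the two factorisations against one another, and a nonzero element in the kernel of $\mu$ would be annihilated by all of $\A_E$, contradicting non-degeneracy. Combining surjectivity and injectivity on each graded piece yields the asserted isomorphism.
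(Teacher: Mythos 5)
First, a point of reference: the paper itself gives no proof of Proposition~\ref{prop_Apm} --- it is quoted from~\cite{EF} --- so your proposal has to stand on its own. Its surjectivity half does stand, in outline: splitting $f(u)=f^+_\lambda(u)-f^-_\lambda(u)$ and reordering with the half-current exchange relations is exactly the mechanism the paper uses later to compute the projections (cf.\ \eqref{Pfn0_prod_f}--\eqref{Pfn0_prod_f_int}), and the dynamical shifts do thread through in the pattern you describe; the bookkeeping you flagged as ``the main obstacle'' is in fact the manageable part of the argument (modulo convergence of the infinite sums produced by reordering, which needs the completions of~\cite{EF}).

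The genuine gap is injectivity, and neither of your two routes works as stated. (i) The filtration by the number of negative half-current factors is not an algebra filtration. The exchange relation underlying the reordering (in the notation of~\cite{S3}, $f(v)f^-_{\lambda}(u)=\frac{\theta(v-u-\hbar)}{\theta(v-u+\hbar)}f^-_{\lambda+2\hbar}(u)f(v)+\frac{\theta(v-u+\lambda+\hbar)}{\theta(v-u+\hbar)}F_{\lambda}(v)$, with $F_{\lambda}(v)\propto f^+_{\lambda+2\hbar}(v)f^+_{\lambda}(v)-f^-_{\lambda+2\hbar}(v)f^-_{\lambda}(v)$) shows that moving one positive factor past one negative factor creates corrections with \emph{zero} negative factors and corrections with \emph{two} negative factors. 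Hence neither the increasing nor the decreasing filtration by the number of negatives is preserved, there is no associated graded in which ``the reordering corrections drop out,'' and your symbol map is undefined. The linear independence of ordered monomials that you then cite as ``a PBW-type argument'' cannot be used as an input here: that statement is essentially equivalent to the proposition being proved, so this route is circular. (ii) The fallback via Proposition~\ref{propHP} is logically broken as written: an element of $\ker\mu$ lies in $\bigoplus_{m=0}^n\A^{-,(m)}_{f,\lambda}\otimes\A^{+,(n-m)}_F$, not in $\A^{(n)}_F$; its image under $\mu$ is $0$, which pairs to zero with all of $\A^{(n)}_E$, and no contradiction with non-degeneracy arises. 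To repair it you would have to transport the pairing to the tensor product through the coproduct, $\la a^-a^+,y\ra=\la a^-\otimes a^+,\Delta(y)\ra$, invoke the coideal properties (Lemma~\ref{lem_coideals}) to establish a triangularity/factorization statement for pairings of ordered monomials $a^-a^+$ against an ordered dual family in $\A_E$, and conclude that the resulting Gram matrix is invertible. That factorization statement is precisely the missing content --- it is what~\cite{EF} actually supplies --- and without it the injectivity half of the proposition remains open in your proposal.
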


Define the negative projections as a linear map $P^-_\lambda\colon\A_F\to\A_F$ acting on the elements $a\in\A_F$ which has the decomposition $a=a^-_\lambda a^+$ with $a^-_\lambda\in\A^-_{f,\lambda}$, $a^+\in\A^+_F$ by the rules
\begin{align}
 P^-_\lambda(a^-a^+)=a^-\varepsilon(a^+). \label{Pm_def}
\end{align}
The positive projections is defined as a linear map $P^+_\lambda\colon\A_F\to\A_F$ acting on the elements
$a^{(n)}\in\A^{(n)}_F$ which has decomposition $a^{(n)}=a^{-,(m)}_{\lambda+2(n-1)\hbar}a^{+,(n-m)}$ with $a^{-,(m)}_{\lambda+2(n-1)\hbar}\in\A^{-,(m)}_{f,\lambda+2(n-1)\hbar}$, $a^{+,(n-m)}\in\A^{+,(n-m)}_F$ by the rules
\begin{align}
 P^+_\lambda(a^{-,(m)}_{\lambda+2(n-1)\hbar}a^{+,(n-m)})=\varepsilon(a^{-,(m)}_{\lambda+2(n-1)\hbar})a^{+,(n-m)}. \label{Pp_def}
\end{align}
By virtue of the proposition~\ref{prop_Apm} these rules define the projections on whole $\A_F$. These projections are idempotents. Being restricted on the space of generators $\A^{(1)}_F$ they define a decomposition of this space into direct sum $\A^{(1)}_F=\A^{-,(1)}_{f,\lambda}\oplus\A^{+,(1)}_F$, which can be present in terms of currents as follows
\begin{gather}
 P^\pm_\lambda(f(z))=\pm f^\pm_\lambda(z)=\pm\la G^\pm_{-\lambda}(z-w)f(w)\ra_w 
. \label{Ppmf}
\end{gather}
Thus the projections relate total current with the half-currents. Actually the projections on all the homogeneous components can be calculated using formulae~\eqref{Ppmf} and
 
\begin{align}
 P^-_\lambda(at^+)&=P^-_\lambda(a)\varepsilon(t^+), & P^+_\lambda(at^+)&=P^+_\lambda(a)t^+\\
 P^-_\lambda(a b^+)&=P^-_\lambda(a)\epsilon(b^+), &
 P^+_\lambda(c^{-,(k)}_{\lambda+2(n-1)\hbar} a^{(n-k)}) &=\epsilon(c^{-,(k)}_{\lambda+2(n-1)\hbar})P^+_\lambda(a^{(n-k)}), \label{Ppm_epsilon} \\
 P^-_\lambda(b^{-,(k)}_\lambda a)&=b^{-,(k)}_\lambda P^-_{\lambda-2k\hbar}(a), &
 P^+_\lambda(a c^{+,(k)})&=P^+_{\lambda+2k\hbar}(a) c^{+,(k)}, \label{Ppm_bc}
\end{align}
where $a\in\A_F$, $t^+\in H^+$, $b^+\in\A^+_F$, $b^{-,(k)}_\lambda\in\A^{-,(k)}_{f,\lambda}$, $c^{-,(k)}_{\lambda+2(n-1)\hbar}\in\A^{-,(k)}_{f,\lambda+2(n-1)\hbar}$, $a^{(n-k)}\in\A^{(n-k)}_F$, $c^{+,(k)}\in\A^{+,(k)}_F$. These equalities are direct consequences of the definitions~\eqref{Pm_def}, \eqref{Pp_def}, proposition~\ref{prop_Apm} and formulae~\eqref{An_Am_Anm}.

\begin{prop} \label{prop_copr_f}
For all $a\in\A^{(n+1)}_F$ the following formula is valid
\begin{align}
 \mu\circ(P^-_{\lambda+n\hbar}\otimes P^+_{\lambda-n\hbar})\circ\Delta^{op}(a)=a. \label{copr_f}
\end{align}
\end{prop}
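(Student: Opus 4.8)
The plan is to establish \eqref{copr_f} as the dynamical counterpart of Proposition~\ref{prop_proj}, following the same scheme (decompose, use coideal-type conditions, reassemble with the counit) but paying attention to the opposite coproduct $\Delta^{op}$ and, above all, to the dynamical shifts of $\lambda$. First I would reduce the claim to the case of a pure product of total currents $a=f(u_{n+1})\cdots f(u_1)$. Indeed $\A^{(n+1)}_F$ is spanned by elements $b\,t^+$ with $b=f(u_{n+1})\cdots f(u_1)$ and $t^+\in H^+$; since $K^+(u)$ is group-like by \eqref{cpKdef_}, one has $\Delta^{op}(t^+)\in H^+\otimes H^+$, and writing $\Delta^{op}(b)=\sum b'\otimes b''$ the projection rules \eqref{Ppm_epsilon} give $P^-_{\lambda+n\hbar}(b't')=P^-_{\lambda+n\hbar}(b')\varepsilon(t')$ and $P^+_{\lambda-n\hbar}(b''t'')=P^+_{\lambda-n\hbar}(b'')t''$ for the factors $t'\otimes t''$ of $\Delta^{op}(t^+)$. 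The counit identity $(\varepsilon\otimes\id)\circ\Delta^{op}=\id$ on $H^+$ then collapses the $t^+$ factor and leaves exactly $\mu\circ(P^-_{\lambda+n\hbar}\otimes P^+_{\lambda-n\hbar})\circ\Delta^{op}(b)\cdot t^+$, so it suffices to treat $b$.

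Next I would prove the two coideal-type conditions that replace \eqref{hsac2}, \eqref{hsac3}. The positive one is immediate: applying $\Delta^{op}$ to the generators $f$ (tested on ${\cal O}$) and $K^+$ via \eqref{cpfdef}, \eqref{cpKdef_} shows $\Delta^{op}(\A^+_F)\subseteq\A^+_F\otimes\A^+_F$, so $\A^+_F$ is a sub-bialgebra and in particular a ``positive left coideal'' for $\Delta^{op}$; no shift intervenes here because $\A^+_F$ is $\lambda$-independent. The negative condition is the delicate point: starting from $f^-_\lambda(u)=\La G^-_{(f)}(u,v)f(v)\Ra_v$ with $G^-_{(f)}=G^-_{-\lambda}$ (see \eqref{hc_tc}, \eqref{DdGEm}) and using \eqref{cpfdef} one computes $\Delta^{op}\big(f^-_\lambda(u)\big)=f^-_\lambda(u)\otimes1+\La G^-_{-\lambda}(u,v)\,K^+(v)\otimes f(v)\Ra_v$. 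I would then show, by moving $K^+(v)$ with the commutation relation \eqref{cr_EFR0Kf} and invoking the contour (analytic) properties of the negative Green distributions \eqref{DdGEm}, that the integration against $G^-_{-\lambda}$ forces the second tensor factor to be a negative half-current again, but with the dynamical parameter shifted. Iterating over the generators of $\A^{-,(m)}_{f,\mu}$ and keeping track of the per-current shift recorded in \eqref{An_Am_Anm}, this yields a shifted left-coideal inclusion $\Delta^{op}\big(\A^{-,(m)}_{f,\mu}\big)\subseteq\sum_{k}\A^{(m-k)}_F\otimes\A^{-,(k)}_{f,\mu'}$ with $\mu'$ the correctly shifted value.

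With both conditions in hand I would decompose $a=\sum a^{-,(m)}_{\mu}\,a^{+,(n+1-m)}$ by Proposition~\ref{prop_Apm}, apply $\Delta^{op}$, and repeat the computation of Proposition~\ref{prop_proj}: the positive coideal puts positive factors into the left slot, annihilated by $P^-$ through $\varepsilon$ by the first line of \eqref{Ppm_epsilon}; the negative coideal puts negative factors into the right slot, annihilated by $P^+$ through $\varepsilon$ by the second line of \eqref{Ppm_epsilon}; and the surviving term reassembles, via \eqref{Ppmf}, \eqref{Ppm_bc} and the counit axioms, to $P^-_{\lambda+n\hbar}(a^-)\,P^+_{\lambda-n\hbar}(a^+)=a^-a^+=a$. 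The main obstacle is precisely the bookkeeping of the dynamical shifts: unlike the Hopf case of Proposition~\ref{prop_proj}, a total current splits into half-currents whose negative part carries a genuine $\lambda$-dependence, so the negative inclusion only closes for a shifted parameter, and one cannot simply invoke fixed subalgebras. The specific shifts $\lambda+n\hbar$ for $P^-$ and $\lambda-n\hbar$ for $P^+$ are chosen so that, term by term after the coideal splitting, the base parameters of the two projections agree (recall that $P^+_\lambda$ is defined in \eqref{Pp_def} at the shifted value $\lambda+2(n-1)\hbar$); verifying that these shifts match through every application of \eqref{cr_EFR0Kf} is the real work.
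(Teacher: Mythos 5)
Your proposal follows the same architecture as the paper's proof (decomposition via Proposition~\ref{prop_Apm}, dynamical analogues of the coideal conditions \eqref{hsac2}, \eqref{hsac3}, then reassembly with the counit and shift bookkeeping), and your final assembly — positive factors in the left slot killed by $\varepsilon$ inside $P^-$, negative factors in the right slot killed by $\varepsilon$ inside $P^+$, shifts matching because of \eqref{An_Am_Anm} — is the paper's. The genuine gap is the negative coideal condition, which is exactly the hard content of the paper's Lemma~\ref{lem_coideals}, and your sketch of it would fail. In $\Delta^{op}f^-_\lambda(u)=f^-_\lambda(u)\otimes1+\La G^-_{-\lambda}(u,v)\,K^+(v)\otimes f(v)\Ra_v$ the factors $K^+(v)$ and $f(v)$ sit in \emph{different} tensor slots, so the commutation relation \eqref{cr_EFR0Kf} you invoke never enters; and after substituting $f(v)=f^+_{\lambda+\hbar h^{(1)}}(v)-f^-_{\lambda+\hbar h^{(1)}}(v)$ what must be proved is that the term $\La G^-_{-\lambda}(u,v)\,K^+(v)\otimes f^+_{\lambda+\hbar h^{(1)}}(v)\Ra_v$ vanishes. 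Contour properties of the Green distributions alone cannot give this: the dynamical shift $\hbar h^{(1)}$ destroys the orthogonality $\la G^-_{-\lambda}(z,v)G^+_{-\lambda}(v,w)\ra_v=0$, i.e. the convolution of a negative kernel at parameter $\lambda$ with a positive kernel at the \emph{shifted} parameter $\lambda+\hbar h^{(1)}$ is not zero. The vanishing holds only because of the precise theta-structure of the Cartan current, $K^+(v)=\big(\tfrac{\theta(v+\hbar)}{\theta(v-\hbar)}\big)^{h/2}\big(t^+_0+\sum_{i\ge1}t^+_i\epsilon^{i;0}(v)\big)$: the prefactor is itself a Green transform $\la G^+_{\hbar h}(v,w),s(w)\ra_w$ whose parameter compensates the shift, and then the addition (degenerate Fay) identity \eqref{GG_deg}, together with $\la G(u,v),\epsilon^{i;0}(u)\ra_u=0$, $\la 1,\epsilon^{i;0}(u)\ra_u=0$ for $i\ge1$ and $\la G^-_{\lambda}(v,u)G^-_{-\lambda}(z,u)\ra_u=0$, kills the term (this is the computation \eqref{cpfdefm_op}--\eqref{lem_pr_ml} in the paper). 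Without this, the inclusion \eqref{Am_coideal} is unproven and the whole argument collapses; note it is also the only place where the shift $\lambda+\hbar h^{(1)}$, which drives your $\lambda\pm n\hbar$ bookkeeping, is actually produced.

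A secondary error: your positive condition claims $\Delta^{op}(\A^+_F)\subseteq\A^+_F\otimes\A^+_F$, i.e. that $\A^+_F$ is a sub-bialgebra. This is false. For $s\in{\cal O}$ one has $\Delta^{op}\hat f[s]=\hat f[s]\otimes1+\la K^+(v)\otimes f(v),s(v)\ra_v$, and expanding $f(v)$ over dual bases of $\lfK_0$ puts into the second slot smeared currents $\hat f[\epsilon_i]$ for \emph{all} basis elements, including the singular ones $\epsilon_{i;0}$ with poles at the origin, weighted by nonzero $H^+$-valued coefficients coming from the $\epsilon^{i;0}$ in $K^+$; these do not lie in $\A^+_F$. (In the trigonometric analogue the paper says explicitly that $U^-_f$ and $U^+_F$ are not Hopf subalgebras.) Only the right-coideal property $\Delta^{op}(\A^+_F)\subset\A^+_F\otimes\A_F$, the paper's \eqref{Ap_coideal}, is true — fortunately that weaker statement is all your assembly uses, and it does follow easily from \eqref{cpKdef_} and \eqref{cpfdef}. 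Finally, the shift bookkeeping in the reassembly is handled in the paper not by \eqref{cr_EFR0Kf} but by the grading relation $(h+2m\hbar)a^{(m)}=a^{(m)}h$ applied to the $(\hbar h)^l/l!$ factors of the dynamical coideal expansion, together with the observation that $\varepsilon(b^{+,(k_j)}_j)=0$ unless $k_j=0$.
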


This proposition generalizes Proposition~\ref{prop_proj} for the elliptic projections~\footnote{The Proposition~\ref{prop_copr_f} has not been published before.}. To prove it we need the following lemma, which plays the role of the conditions~\eqref{hsac2} and \eqref{hsac3} for the dynamical case.

\begin{lem} \label{lem_coideals}
 The coproducts of any elements $a^{-,(n)}_\lambda\in\A^{-,(n)}_{f,\lambda}$ and $a^{+,(n)}\in\A^{+,(n)}_F$ have the forms
\begin{align}
 \Delta^{op}(a^{-,(n)}_\lambda)&=\sum_i b^{(m_i)}_i\otimes c^{-,(n-m_i)}_i(\lambda+\hbar h^{(1)})
  \equiv\sum_i\sum_{l\ge0} \frac{(\hbar h)^l}{l!}b^{(m_i)}_i\otimes\frac{\partial^l}{\partial\lambda^l}c^{-,(n-m_i)}_i(\lambda),  \label{Am_coideal} \\
 \Delta^{op}(a^{+,(n)})&=\sum_j b^{+,(k_j)}_j\otimes c^{(n-k_j)}_j, \label{Ap_coideal}
\end{align}
where $b^{(m_i)}_i\in\A^{(m_i)}_F$, $c^{-,(n-m_i)}_i(\lambda)\in\A^{-,(n-m_i)}_{f,\lambda}$, $b^{+,(k_j)}_j\in\A^+_F$, $c^{(n-k_j)}_j\in\A^{(n-k_j)}_F$.
In other words we have the relations
\begin{align}
 \Delta^{op}\A^{-,(n)}_{f,\lambda}&\subset\sum_{m=0}^n\A^{(m)}_F\otimes\A^{-,(n-m)}_{f,\lambda+\hbar h^{(1)}}, \label{Delta_Amn} \\
 \Delta^{op}\A^{+,(n)}_{f,\lambda}&\subset\sum_{k=0}^n\A^{+,(k)}_F\otimes\A^{(n-k)}_F. \label{Delta_Apn}
\end{align}
Therefore the subalgebra $\A^+_F$ is right coideal and subspace $\A^-_{f,\lambda}$ could be called a dynamical left coideal:
\begin{align}
 &\Delta^{op}\A^-_{f,\lambda}\subset\A_F\otimes\A^-_{f,\lambda+\hbar h^{(1)}}, &
 &\Delta^{op}\A^+_F\subset\A^+_F\otimes\A_F.
\end{align}
\end{lem}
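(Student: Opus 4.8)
The plan is to establish the two inclusions~\eqref{Delta_Amn} and~\eqref{Delta_Apn} separately, in both cases using that $\Delta^{op}$ is an algebra homomorphism together with the grading relations~\eqref{An_Am_Anm} and the integral representation~\eqref{Ppmf} of the half-currents. I would begin with the right coideal property~\eqref{Ap_coideal},~\eqref{Delta_Apn}, which is the more formal one, since $\A^+_F$ is a genuine subalgebra generated by $\hat h^+[s]$ and $\hat f[s]$ with $s\in{\cal O}$. First I would check the inclusion on these generators: from $\Delta^{op}K^+(u)=K^+(u)\otimes K^+(u)$ and $\Delta^{op}f(u)=K^+(u)\otimes f(u)+f(u)\otimes1$ (which follow from~\eqref{cpKdef_},~\eqref{cpfdef}) one reads off that $\Delta^{op}\hat h^+[s]\in\A^{+,(0)}_F\otimes\A^{(0)}_F$ and, for $s\in{\cal O}$, that $\Delta^{op}\hat f[s]\in\big(\A^{+,(0)}_F\otimes\A^{(1)}_F\big)\oplus\big(\A^{+,(1)}_F\otimes\A^{(0)}_F\big)$, the splitting being exactly the $k=0$ and $k=1$ terms of~\eqref{Delta_Apn}. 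Since the target $\sum_k\A^{+,(k)}_F\otimes\A^{(n-k)}_F$ is closed under multiplication — because $\A^{+,(k)}_F\A^{+,(k')}_F\subset\A^{+,(k+k')}_F$ and $\A^{(m)}_F\A^{(m')}_F\subset\A^{(m+m')}_F$ by~\eqref{An_Am_Anm} — the inclusion extends to all of $\A^+_F$ by multiplicativity, the gradings adding up correctly; this settles~\eqref{Ap_coideal}.

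For the dynamical left coideal~\eqref{Am_coideal},~\eqref{Delta_Amn} I would argue by induction on the degree $n$, the base case $n=1$ being the heart of the matter. Using $\Delta^{op}f(w)=K^+(w)\otimes f(w)+f(w)\otimes1$ and commuting the continuous homomorphism $\Delta^{op}$ through the convolution~\eqref{Ppmf} defining $f^-_\lambda(u)=\la G^-_{-\lambda}(u,w)f(w)\ra_w$ with the Green distribution~\eqref{DdGEm}, one gets
\begin{align}
 \Delta^{op}f^-_\lambda(u)=f^-_\lambda(u)\otimes1+\big\la G^-_{-\lambda}(u,w)\,\big(K^+(w)\otimes f(w)\big)\big\ra_w.
\end{align}
The first summand already sits in $\A^{(1)}_F\otimes\A^{-,(0)}_{f,\lambda}$, the $p=1$ term. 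The main obstacle is the second summand: I must show it lies in $\A^{(0)}_F\otimes\A^{-,(1)}_{f,\lambda+\hbar h^{(1)}}=H^+\otimes\A^{-,(1)}_{f,\lambda+\hbar h^{(1)}}$, i.e. that the Cartan current $K^+(w)$ standing in the first tensor factor forces the second factor both to remain a \emph{negative} half-current and to acquire the dynamical shift $\lambda\mapsto\lambda+\hbar h^{(1)}$. Here the analytic input enters: the second factor stays in $\A^-$ precisely because, as a partial-action operator, $G^-_{-\lambda}$ projects $\lfK_0$ onto $\Lambda_{-\lambda}$ in the decomposition $\lfK_0={\cal O}\oplus\Lambda_{-\lambda}$ of Section~\ref{sec34}, so only negative modes survive; and the shift is produced by expanding the $w$-dependence of $K^+(w)=\exp\big(\frac{e^{\hbar\partial_w}-e^{-\hbar\partial_w}}{2\partial_w}h^+(w)\big)$ under the convolution, using the quasi-periodicity~\eqref{D_theta} of the kernel of $G^-_{-\lambda}$, as a Taylor series $\sum_{l\ge0}\frac{(\hbar h)^l}{l!}(\,\cdot\,)\otimes\partial_\lambda^l f^-_\lambda(u)$ with $h=\la h(u),1\ra_u$ inserted in the first factor. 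Matching this with~\eqref{Am_coideal} is the delicate, essentially computational, point.

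For the inductive step I would use the factorisation $\A^{-,(n)}_{f,\lambda}=\A^{-,(1)}_{f,\lambda}\,\A^{-,(n-1)}_{f,\lambda-2\hbar}$ from~\eqref{An_Am_Anm} to write a general element as $a^{-,(n)}_\lambda=\hat f^-_\lambda[s]\,a'^{-,(n-1)}_{\lambda-2\hbar}$, apply $\Delta^{op}$ multiplicatively, and feed the base case into the first factor and the induction hypothesis into the second, obtaining
\begin{align}
 \Delta^{op}(a^{-,(n)}_\lambda)\in\sum_{p,q}\Big(\A^{(p)}_F\otimes\A^{-,(1-p)}_{f,\lambda+\hbar h^{(1)}}\Big)\cdot\Big(\A^{(q)}_F\otimes\A^{-,(n-1-q)}_{f,\lambda-2\hbar+\hbar h^{(1)}}\Big),
\end{align}
which I would reorganise into $\sum_{m}\A^{(m)}_F\otimes\A^{-,(n-m)}_{f,\lambda+\hbar h^{(1)}}$. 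The bookkeeping to watch is the interplay of the two dynamical shifts: the $-2\hbar$ coming from the grading splitting of $\A^{-}_{f,\lambda}$ and the $\hbar h^{(1)}$ coming from each application of the base case must combine so that inside $\A^{-,(n-m)}_{f,\lambda+\hbar h^{(1)}}$ the successive negative half-currents carry the parameters $\lambda+\hbar h^{(1)},\lambda+\hbar h^{(1)}-2\hbar,\ldots$ exactly as prescribed by the definition of $\A^{-,(n-m)}_{f,\cdot}$; the second tensor factors multiply within the $\A^-$ spaces precisely by~\eqref{An_Am_Anm}. Once both inclusions~\eqref{Delta_Amn},~\eqref{Delta_Apn} are in hand, the displayed coideal statements for $\A^-_{f,\lambda}$ and $\A^+_F$ follow by summing over $n$, which completes the proof of the lemma.
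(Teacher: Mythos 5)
Your overall architecture is the same as the paper's: the right-coideal property \eqref{Delta_Apn} is verified on generators and extended by multiplicativity (this half of your argument is correct — checking on $\hat h[s]$, $\hat f[s]$, $s\in{\cal O}$, is a legitimate variant of the paper's check on the half-currents $K^+(u)$, $f^+_\lambda(u)$), and the dynamical left-coideal property \eqref{Delta_Amn} is reduced by induction, via \eqref{An_Am_Anm}, to the case $n=1$. The gap is in that base case. After writing
\begin{align}
 \Delta^{op}f^-_\lambda(u)=f^-_\lambda(u)\otimes1+\big\la G^-_{-\lambda}(u,w)\,K^+(w)\otimes f(w)\big\ra_w ,
\end{align}
you claim the second summand lies in $H^+\otimes\A^{-,(1)}_{f,\lambda+\hbar h^{(1)}}$ ``because $G^-_{-\lambda}$ projects $\lfK_0$ onto $\Lambda_{-\lambda}$, so only negative modes survive.'' This reasoning fails. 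Substituting the decomposition actually needed, $f(w)=f^+_{\lambda+\hbar h^{(1)}}(w)-f^-_{\lambda+\hbar h^{(1)}}(w)$ with the operator-shifted parameter, the $f^-$ term does land in the target, but the leftover term
\begin{align}
 \big\la G^-_{-\lambda}(u,w)\,K^+(w)\otimes f^+_{\lambda+\hbar h^{(1)}}(w)\big\ra_w
\end{align}
cannot be discarded by orthogonality of projections: $P^-_\lambda\circ P^+_{\lambda'}=0$ holds only at equal parameters $\lambda'=\lambda$, whereas here $\lambda'=\lambda+\hbar h^{(1)}$; and since $K^+(w)$ shares the integration variable $w$, one is not even composing two Green kernels to which such an identity could be applied.

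The vanishing of this term is precisely the nontrivial content of the lemma, and it requires a computation your sketch does not supply. The paper inserts the structure formula $K^+(w)=\big(\tfrac{\theta(w+\hbar)}{\theta(w-\hbar)}\big)^{h/2}\big(t^+_0+\sum_{i\ge1}t^+_i\epsilon^{i;0}(w)\big)$ of \cite{EF}, represents the prefactor as $\la G^+_{\hbar h}(w,v),s(v)\ra_v$, reduces the resulting product of three Green kernels by the Fay-type addition identity \eqref{GG_deg}, and then annihilates every term using $\la G(u,v),\epsilon^{i;0}(u)\ra_u=0$, $\la1,\epsilon^{i;0}(u)\ra_u=0$ for $i\ge1$, and $\la G^-_{\lambda}(v,u)G^-_{-\lambda}(z,u)\ra_u=0$. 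It is this same mechanism, not a Taylor expansion of $K^+(w)$ under quasi-periodicity as you suggest, that produces the dynamical shift $\lambda\mapsto\lambda+\hbar h^{(1)}$ in the surviving terms. Without this computation (or an equivalent one) the base case is not established, and consequently neither is the induction — whose $-2\hbar$ versus $\hbar h^{(1)}$ bookkeeping, requiring the commutation rule $(h+2m\hbar)a^{(m)}=a^{(m)}h$ for $a^{(m)}\in\A^{(m)}_F$, you correctly flag but also leave undone.
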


\noindent{\bfseries Proof of the lemma~\ref{lem_coideals}.} (This proof is based on the proof of {\itshape properties of the coproducts} described in~\cite{EF}.) First we prove the formulae~\eqref{Am_coideal} on the half current $f^-_\lambda(u)$ thereby proving the case $n=1$. Substituting~\eqref{hc_tc} to the following coproduct, using the formula~\eqref{cpfdef} and substituting the decomposition
$f(u)=f^+_{\lambda+\hbar h^{(1)}}(u)-f^-_{\lambda+\hbar h^{(1)}}(u)$ we obtain
\begin{multline}
 \Delta^{op}f^-_\lambda(z)=f^-_\lambda(z)\otimes1
  -\La G^-_{-\lambda}(z,u) K^+(u)\otimes f^-_{\lambda+\hbar h^{(1)}}(u)\Ra_u+ \\
  +\La G^-_{-\lambda}(z,u) K^+(u)\otimes f^+_{\lambda+\hbar h^{(1)}}(u)\Ra_u. \label{cpfdefm_op}
\end{multline}
The first and second terms in the right hand side of~\eqref{cpfdefm_op} (their actions on $\lfK_0$) belong to $\A_F\otimes\A^-_{f,\lambda+\hbar h^{(1)}}$. Let us show that the third term vanishes by virtue of the following formula~\cite{EF}
\begin{align}
 K^+(u)=\Big(\frac{\theta(u+\hbar)}{\theta(u-\hbar)}\Big)^{h/2}\Big(t^+_0+\sum_{i\ge1}t^+_i\epsilon^{i;0}(u)\Big),
\end{align}
where $t^+_i\in H^+$ for $i\ge0$ and $\epsilon^{i;0}(u)=\frac1{i!}\big(\frac{\theta'(u)}{\theta(u)}\big)^{(i)}$ -- were defined in the section~\eqref{sec34}. The function $\big(\frac{\theta(u+\hbar)}{\theta(u-\hbar)}\big)^{h/2}$ belongs to $L_{\hbar h}$ and therefore it can be represent in the form $\la G^+_{\hbar h}(u,w),s(w)\ra_w$, for some $s\in\lfK_0$. Taking into account this fact and substituting $f^+_{\lambda+\hbar h^{(1)}}(u)\hm=\la G^+_{-\lambda-\hbar h^{(1)}}(u,v)f(v)\ra_v$ we can represent the third term in the right hand side of~\eqref{cpfdefm_op} in the form
\begin{align}
  \La G^+_{\hbar h^{(1)}}(u,w)G^+_{-\lambda-\hbar h^{(1)}}(u,v)G^-_{-\lambda}(z,u),
    s(w)\Big(t^+_0+\sum_{i\ge1}t^+_i\epsilon^{i;0}(u)\Big)\otimes f(v)\Ra_{u,v,w}. \label{cpfdefm_op3}
\end{align}
The integral kernel of the expression~\eqref{cpfdefm_op3} can be rewritten as follows
\begin{multline}
  G^+_{\hbar h^{(1)}}(u,w)  G^+_{-\lambda-\hbar h^{(1)}}(u,v)G^-_{-\lambda}(z,u)=
  G^-_{-\hbar h^{(1)}}(w,u) G^-_{\lambda+\hbar h^{(1)}}(v,u)G^-_{-\lambda}(z,u)= \\
  =G^-_{-\hbar h^{(1)}}(w,v) G^-_{\lambda}(v,u) G^-_{-\lambda}(z,u)
  +G^+_{\lambda+\hbar h^{(1)}}(v,w)G^-_\lambda(w,u) G^-_{-\lambda}(z,u)= \\
  =-G^-_{-\hbar h^{(1)}}(w,v) \Big(G^-_{-\lambda}(z,v)G(u,v)
  +G^+_{\lambda}(v,z)G(u,z)
  +\frac{\partial}{\partial\lambda}G^-_{-\lambda}(z,v)\Big) + \\
  -G^+_{\lambda+\hbar h^{(1)}}(v,w)\Big(G^-_{-\lambda}(z,w)G(u,w)
  +G^+_{\lambda}(w,z)G(u,z)
  +\frac{\partial}{\partial\lambda}G^-_{-\lambda}(z,w)\Big), \label{lem_pr_ml}
\end{multline}
where we used the identity
\begin{align}
  G^-_{\lambda}(v,u)G^-_{-\lambda}(z,u)&=-G^-_{-\lambda}(z,v)G(u,v)
  -G^+_{\lambda}(v,z)G(u,z)
  -\frac{\partial}{\partial\lambda}G^-_{-\lambda}(z,v). \label{GG_deg}
\end{align}
The vanishing of the term containing the sum over $i\ge1$ in the expression~\eqref{cpfdefm_op3} follows from the fact that $\la G(u,v),\epsilon^{i;0}(u)\ra_u=0$ and $\la1,\epsilon^{i;0}(u)\ra_u=0$ for $i\ge1$. Considering the second line of~\eqref{lem_pr_ml} and taking into account $\la G^-_{\lambda}(v,u) G^-_{-\lambda}(z,u)\ra_u=0$, we conclude that the term containing $t^+_0$ in the expression~\eqref{cpfdefm_op3} also vanishes.

Thus we have relation $\Delta^{op}\A^{-,(1)}_{f,\lambda}\subset\A^{(1)}_F\otimes1+H^+\otimes\A^{-,(1)}_{f,\lambda+\hbar h^{(1)}}$. Using it one can prove the relation~\eqref{Delta_Amn} by induction. Indeed, let~\eqref{Delta_Amn} is true for some $n\ge0$, then using~\eqref{An_Am_Anm} and the commutation relation $(h+2m\hbar)a^{(m)}=a^{(m)}h$, where $a^{(m)}\in\A^{(m)}_F$, one concludes that it is true for $n+1$:
\begin{multline*}
 \Delta^{op}\A^{-,(n+1)}_{f,\lambda}\subset\sum_{m=0}^n\A^{(m)}_F\otimes\A^{-,(n-m)}_{f,\lambda+\hbar h^{(1)}}
 \big(\A^{(1)}_F\otimes1+H^+\otimes\A^{-,(1)}_{f,\lambda+\hbar h^{(1)}-2n\hbar}\big)\subset \\
 \subset\sum_{m=0}^n\A^{(m)}_F\A^{(1)}_F\otimes\A^{-,(n-m)}_{f,\lambda+\hbar h^{(1)}}
+\sum_{m=0}^n\A^{(m)}_F\otimes\A^{-,(n-m)}_{f,\lambda+\hbar h^{(1)}}
\A^{-,(1)}_{f,\lambda+\hbar h^{(1)}-2n\hbar+2m\hbar}\subset \\
 \subset\sum_{m=0}^n\A^{(m+1)}_F\otimes\A^{-,(n-m)}_{f,\lambda+\hbar h^{(1)}}
+\sum_{m=0}^n\A^{(m)}_F\otimes\A^{-,(n+1-m)}_{f,\lambda+\hbar h^{(1)}}
 \subset\sum_{m=0}^{n+1}\A^{(m)}_F\otimes\A^{-,(n+1-m)}_{f,\lambda+\hbar h^{(1)}}.
\end{multline*}

By the same reason the formula~\eqref{Delta_Apn} for general $n$ follows from~\eqref{Delta_Apn} for $n=1$. Since the coproduct $\Delta^{op}$ is a homomorphism it is sufficient to prove the formula~\eqref{Ap_coideal} on the half currents $K^+(u)$, $f^+_\lambda(u)$. For $K^+(u)$ it immediately follows from~\eqref{cpKdef_}. In the second case we have the relation $\Delta^{op}f^+_\lambda(u)\in\A^+_F\otimes1+H^+\otimes\A_F$ following from the formula
\begin{align}
 \Delta^{op}f^+_\lambda(u)&=f^+_\lambda(u)\otimes1+\La G^+_{-\lambda}(u,v) K^+(v)\otimes f(v)\Ra_v, \label{cpfdefp_op}
\end{align}
which is in turn obtained in the same way as the formula~\eqref{cpfdefm_op}. \qed

\noindent{\bfseries Proof of the proposition~\ref{prop_copr_f}.} It is sufficient to prove the formula~\eqref{copr_f} on the elements of the form $a^{-,(m)}_{\lambda+n\hbar}a^{+,(k)}$, with $a^{-,(m)}_{\lambda+n\hbar}\in\A^{-,(m)}_{f,\lambda}$, $a^{+,(k)}\in\A^{+,(k)}_F$, where $m+k=n+1$. Due to the lemma~\ref{lem_coideals} the coproduct of elements $a^{-,(m)}_{\lambda+n\hbar}$ and $a^{+,(k)}$ can be represented as~\eqref{Am_coideal} and \eqref{Ap_coideal}. Then, the coproduct of the considering element $a^{-,(m)}_{\lambda+n\hbar}a^{+,(k)}$ is equal to
\begin{align}
 \Delta^{op}(a^{-,(m)}_{\lambda+n\hbar}a^{+,(k)})
=\sum_{i,j}\sum_{l\ge0}
  \frac{(\hbar h)^l}{l!}b^{(m_i)}_i b^{+,(k_j)}_j \otimes\frac{\partial^l}{\partial\lambda^l}c^{-,(m-m_i)}_i(\lambda+n\hbar)c^{(k-k_j)}_j= \\
=\sum_{i,j}\sum_{l\ge0}
  b^{(m_i)}_i b^{+,(k_j)}_j \frac{(\hbar h-2m_i\hbar-2k_j\hbar)^l}{l!} \otimes\frac{\partial^l}{\partial\lambda^l}c^{-,(m-m_i)}_i(\lambda+n\hbar)c^{(k-k_j)}_j= \\
=\sum_{i,j}\sum_{l\ge0}
  b^{(m_i)}_i b^{+,(k_j)}_j \frac{(\hbar h)^l}{l!} \otimes\frac{\partial^l}{\partial\lambda^l}c^{-,(m-m_i)}_i(\lambda+n\hbar-2m_i\hbar-2k_j\hbar)c^{(k-k_j)}_j,
\end{align}
where $b^{(m_i)}_i\in\A^{(m_i)}_F$, $c^{-,(n-m_i)}_i(\lambda)\in\A^{-,(n-m_i)}_{f,\lambda}$, $b^{+,(k_j)}_j\in\A^{+,(k_j)}_F$, $c^{(n-k_j)}_j\in\A^{(n-k_j)}_F$. Using the formulae~\eqref{Ppm_epsilon} one yields
\begin{multline*}
 \mu\circ(P^-_{\lambda+n\hbar}\otimes P^+_{\lambda-n\hbar})\Delta^{op}(a^{-,(m)}_{\lambda+n\hbar}a^{+,(k)})= \\
=\sum_{i,j}\sum_{l\ge0}
  P^-_{\lambda+n\hbar}\Big(b^{(m_i)}_i b^{+,(k_j)}_j \frac{(\hbar h)^l}{l!}\Big) P^+_{\lambda-n\hbar}\Big(\frac{\partial^l}{\partial\lambda^l}c^{-,(m-m_i)}_i(\lambda+n\hbar-2m_i\hbar-2k_j\hbar)
  c^{(k-k_j)}_j\Big)= \\
=\sum_{i,j}P^-_{\lambda+n\hbar}\big(b^{(m_i)}_i\big) \varepsilon(b^{+,(k_j)}_j)
           P^+_{\lambda-n\hbar}\big(c^{-,(m-m_i)}_i(\lambda+n\hbar-2m_i\hbar-2k_j\hbar)c^{(k-k_j)}_j\big)= \\
=\sum_{i,j}P^-_{\lambda+n\hbar}\big(b^{(m_i)}_i\big) \varepsilon(b^{+,(k_j)}_j)
 \varepsilon(c^{-,(m-m_i)}_i(\lambda+n\hbar-2m_i\hbar-2k_j\hbar)) P^+_{\lambda-n\hbar}\big(c^{(k-k_j)}_j\big).
\end{multline*}
Since $\varepsilon(b^{+,(k_j)}_j)=0$ while $k_j\ne0$ we have only contribution of terms with $k_j=0$. Substituting $k_j=0$ to the argument of the function $c^{-,(m-m_i)}_i$ and taking into account the properties of the counity $(\varepsilon\otimes\id_{\A_F})\Delta^{op}=(\id_{\A_F}\otimes\varepsilon)\Delta^{op}=\id_{\A_F}$ we derive
\begin{multline*}
 \mu\circ(P^-_{\lambda+n\hbar}\otimes P^+_{\lambda-n\hbar})\Delta^{op}(a^{-,(m)}_{\lambda+n\hbar}a^{+,(k)})
=\sum_iP^-_{\lambda+n\hbar}\big(b^{(m_i)}_i\big)
 \varepsilon(c^{-,(m-m_i)}_i(\lambda+n\hbar-2m_i\hbar))\times \\
 \times P^+_{\lambda-n\hbar}\big(a^{+,(k)}\big)
=\sum_i\sum_{l\ge0}P^-_{\lambda+n\hbar}\Big(\frac{(\hbar h+2m_i\hbar)^l}{l!}b^{(m_i)}_i\Big)
 \varepsilon\Big(\frac{\partial^l}{\partial\lambda^l}c^{-,(m-m_i)}_i(\lambda+n\hbar-2m_i\hbar)\Big)\times \\
 \times P^+_{\lambda-n\hbar}\big(a^{+,(k)}\big)
 =P^-_{\lambda+n\hbar}(a^{-,(m)}_{\lambda+n\hbar})P^+_{\lambda-n\hbar}(a^{+,(k)})
 =a^{-,(m)}_{\lambda+n\hbar}a^{+,(k)}.
\end{multline*}
\qed

Consider the expression of the form
\begin{align}
 P^+_{\lambda-(n-1)\hbar}\big(f(u_n)f(u_{n-1})\cdots f(u_2)f(u_1)\big). \label{Pfn0}
\end{align}
This is an elliptic version of the function~\eqref{PPCs}. It is called {\itshape elliptic universal weight function}. The parameter $\lambda-(n-1)\hbar$ is chosen for the symmetry reason: as we shall see the distribution~\eqref{Pfn0} is represented as linear combinations of the terms
\begin{align}
 f^+_{\lambda-(n-1)\hbar}(u_{i_n})f^+_{\lambda-(n-3)\hbar}(u_{i_{n-1}})\cdots f^+_{\lambda+(n-3)\hbar}(u_{i_2})f^+_{\lambda+(n-1)\hbar}(u_{i_1}).
\end{align}

The projections~\eqref{Pfn0} can be calculated by the generalization of the method proposed  in~\cite{KhP} (see~\cite{S3}):
\begin{gather}
   P^+_{\lambda-(n-1)\hbar}\big(f(u_n)\cdots f(u_2)f(u_1)\big)
   =\prod_{n\ge m\ge1}^{\longleftarrow}f^+_{\lambda-(n-2m+1)\hbar}(u_m;u_n,\ldots,u_{m+1}). \label{Pfn0_prod_f}
\end{gather}
where
\begin{gather}
 f^+_{\lambda-(n-2m+1)\hbar}(u_m;u_n,\ldots,u_{m+1})= f^+_{\lambda-(n-2m+1)\hbar}(u_m)
 -\sum_{i=m+1}^n\frac{\theta(u_i-u_m+\lambda+(m-1)\hbar)}{\theta(\lambda+(m-1)\hbar)}\times \notag \\
 \times\prod_{k=m+1}^n \frac{\theta(u_k-u_i+\hbar)}{\theta(u_k-u_m+\hbar)}
  \prod\limits_{\substack{k=m+1 \\ k\ne i}}^n\frac{\theta(u_k-u_m)}{\theta(u_k-u_i)}
   f^+_{\lambda-(n-2m+1)\hbar}(u_i). \label{fpm_nm}
\end{gather}
Representing each current~\eqref{fpm_nm} in~\eqref{Pfn0_prod_f} as integral transforms of the total currents one yields
\begin{gather}
 P^+_{\lambda-(n-1)\hbar}\big(f(u_n)\cdots f(u_2)f(u_1)\big)
 =\prod_{n\ge k>m\ge1}\frac{\theta(u_k-u_m)}{\theta(u_k-u_m+\hbar)}
   \oint\limits_{|u_i|>|v_j|}\frac{dv_n\cdots dv_1}{(2\pi i)^n} \notag \\
     \prod_{n\ge k>m\ge1}\frac{\theta(u_k-v_m+\hbar)}{\theta(u_k-v_m)}
    \prod_{m=1}^n \frac{\theta(u_m-v_m-\lambda-(m-1)\hbar)}{\theta(u_m-v_m)\theta(-\lambda-(m-1)\hbar)}
           f(v_n)\cdots f(v_1). \label{Pfn0_prod_f_int}
\end{gather}

\section{Partition function for the SOS model}
\label{sec45}

We have seen in the section~\ref{sec42} that the connecting-link between the projection method and the statistical models is the $R$-matrix. As it was shown in~\cite{EF}, the algebra $\A$ is described by the dynamical $RLL$-relations with the Felder $R$-matrix, as we mentioned, the statistical model with the Felder $R$-matrix as a Boltzmann weight matrix is the SOS model. Therefore it is natural to expect that the integral kernel of the projections for the algebra $\A$ plays the same role for the SOS model as the kernel~\eqref{PpK_ker} plays for the 6-vertex model.

\begin{figure}[h]
\begin{center}
\begin{picture}(150,90)
\put(50,70){\line(1,0){80}}
\put(35,76){$n$}
\put(50,50){\line(1,0){80}}
\put(35,60){$\ldots$}
\put(50,30){\line(1,0){80}}
\put(5,38){$j=$}
\put(35,38){$2$}
\put(50,10){\line(1,0){80}}
\put(35,19){$1$}
\put(35,00){$0$}
\put(60,0){\line(0,1){80}}
\put(48,-15){$n$}
\put(80,0){\line(0,1){80}}
\put(68,-15){$\ldots$}
\put(100,0){\line(0,1){80}}
\put(88,-15){$2$}
\put(120,0){\line(0,1){80}}
\put(108,-15){$1$}
\put(123,-15){$0$}
\put(20,-15){$i=$}
\end{picture}
\end{center}
\caption{\footnotesize The numeration of faces.}
\label{fig5}
\end{figure}
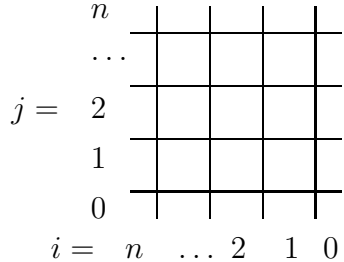

The SOS model is a face model mentioned in the section~\eqref{sec16}. In order to consider this model as a generalized 6-vertex model we present it in terms of $R$-matrix formalism as in~\cite{FS}. 
Consider a square $n\times n$ lattice with the vertices enumerated by index $i=1,\ldots,n$ like in Fig.~\ref{fig3}. It has $(n+1)\times(n+1)$ faces enumerated by pairs $(i,j)$, $i,j=0,\ldots,n$ (see Fig.~\ref{fig5}). The heights $d_{ij}$ putted to each face satisfy the conditions: $|d_{ij}-d_{i-1,j}|=1$ for $i=1,\ldots,n$, $j=0,\ldots,n$, and $|d_{ij}-d_{ij-1}|=1$ for $i=0,\ldots,n$, $j=1,\ldots,n$, where $d_{ij}$ is a height corresponding to the face $(i,j)$, which is placed to the up-left from the $(i,j)$-th vertex.

The Boltzmann weight of the $(i,j)$-th vertex $W_{ij}(d_{i,j-1},d_{i-1,j-1},d_{i-1,j},d_{ij})$ depends on the configuration via connected heights as follows~\cite{J}
\begin{align}
 W_{ij}(d+1,d+2,d+1,d)&=a(u_i-v_j)=\theta(u_i-v_j+\hbar), \label{BWa} \\
 W_{ij}(d-1,d-2,d-1,d)&=a(u_i-v_j)=\theta(u_i-v_j+\hbar), \\
 W_{ij}(d-1,d,d+1,d)&=b(u_i-v_j;\hbar d)=\frac{\theta(u_i-v_j)\theta(\hbar d+\hbar)}{\theta(\hbar d)}, \\
 W_{ij}(d+1,d,d-1,d)&=\bar b(u_i-v_j;\hbar d)=\frac{\theta(u_i-v_j)\theta(\hbar d-\hbar)}{\theta(\hbar d)}, \\
 W_{ij}(d-1,d,d-1,d)&=c(u_i-v_j;\hbar d)=\frac{\theta(u_i-v_j+\hbar d)\theta(\hbar)}{\theta(\hbar d)}, \\
 W_{ij}(d+1,d,d+1,d)&=\bar c(u_i-v_j;\hbar d)=\frac{\theta(u_i-v_j-\hbar d)\theta(\hbar)}{\theta(-\hbar d)},
  \label{BWc}
\end{align}
where $u_i$, $v_j$ are additive variables attached to the $i$-th vertical and $j$-th horizontal lines respectively, and $\hbar$ is a non-zero additive anisotropy parameter.
Each distribution of heights $d_{ij}$ ($i,j=0,\ldots,n$) subjected to boundary conditions defines a configuration of the model. The partition function of this model is the sum over these configurations:
\begin{align}
 Z=\sum\prod_{i,j=1}^n W_{ij}(d_{i,j-1},d_{i-1,j-1},d_{i-1,j},d_{ij}), \label{pf_def}
\end{align}

\begin{figure}
\begin{center}
 \begin{picture}(260,170)
\put(00,30){\line(1,0){60}}\put(30,00){\line(0,1){60}}
\put(-10,30){\bfseries $-$}\put(60,30){$-$}\put(30,-10){\bfseries $-$}\put(30,60){\bfseries $-$}
\put(0,10){\small $d-1$}\put(10,40){\small $d$}\put(35,40){\small $d-1$}\put(35,10){\small $d-2$}
\put(10,-20){\small  $a(u_i-v_j)$}
\put(00,130){\line(1,0){60}}\put(30,100){\line(0,1){60}}
\put(-10,130){\bfseries $+$}\put(60,130){\bfseries $+$}\put(30,90){\bfseries $+$}\put(30,160){\bfseries $+$}
\put(0,110){\small $d+1$}\put(10,140){\small $d$}\put(35,140){\small $d+1$}\put(35,110){\small $d+2$}
\put(10,77){\small  $a(u_i-v_j)$}
\put(100,30){\line(1,0){60}}\put(130,00){\line(0,1){60}}
\put(90,30){\bfseries $+$}\put(160,30){$+$}\put(130,-10){\bfseries $-$}\put(130,60){\bfseries $-$}
\put(100,10){\small $d+1$}\put(110,40){\small $d$}\put(135,40){\small $d-1$}\put(145,10){\small $d$}
\put(105,-20){\small  $\bar b(u_i-v_j;d)$}
\put(100,130){\line(1,0){60}}\put(130,100){\line(0,1){60}}
\put(90,130){\bfseries $-$}\put(160,130){\bfseries $-$}\put(130,90){\bfseries $+$}\put(130,160){\bfseries $+$}
\put(100,110){\small $d-1$}\put(110,140){\small $d$}\put(135,140){\small $d+1$}\put(145,110){\small $d$}
\put(105,77){\small  $b(u_i-v_j;d)$}
\put(200,30){\line(1,0){60}}\put(230,00){\line(0,1){60}}
\put(190,30){\bfseries $+$}\put(260,30){$-$}\put(230,-10){\bfseries $-$}\put(230,60){\bfseries $+$}
\put(200,10){\small $d+1$}\put(210,40){\small $d$}\put(235,40){\small $d+1$}\put(245,10){\small $d$}
\put(205,-20){\small  $\bar c(u_i-v_j;d)$}
\put(200,130){\line(1,0){60}}\put(230,100){\line(0,1){60}}
\put(190,130){\bfseries $-$}\put(260,130){\bfseries $+$}\put(230,90){\bfseries $+$}\put(230,160){\bfseries $-$}
\put(200,110){\small $d-1$}\put(210,140){\small $d$}\put(235,140){\small $d-1$}\put(245,110){\small $d$}
\put(205,77){\small  $c(u_i-v_j;d)$}
  \end{picture}
 \end{center}
\vspace{3mm}
\caption{\footnotesize The Boltzmann weights for the SOS model.}
\label{fig6}
\end{figure}

The Boltzmann weights~\eqref{BWa} -- \eqref{BWc} can be presented as the entries of the Felder $R$-matrix~\cite{F1,FS}:
\begin{align}
 &W_{ij}(d_{i,j-1},d_{i-1,j-1},d_{i-1,j},d_{ij})
   =R(u_i-v_j;\hbar d_{ij})^{\alpha_{ij}\beta_{ij}}_{\gamma_{ij}\delta_{ij}}, \\
 &R(u;\lambda)=\begin{pmatrix}
   a(u) & 0              & 0              & 0              \\
  0              & b(u;\lambda) & \bar c(u;\lambda)      & 0              \\
  0              & c(u;\lambda) & \bar b(u;\lambda) & 0              \\
  0              & 0              & 0              & a(u)
 \end{pmatrix}. \label{Rz}
\end{align}
where the matrix indexes take the values $+$ and $-$ being defined through the heights as follows
\begin{align}
 \alpha_{ij}&=d_{i-1,j}-d_{ij}, & \beta_{ij}&=d_{i-1,j-1}-d_{i-1,j},
 & \gamma_{ij}&=d_{i-1,j-1}-d_{i,j-1}, & \delta_{ij}&=d_{i,j-1}-d_{ij}. \label{albegade}
\end{align}
We attach the differences~\eqref{albegade} to the corresponding edges as in Fig.~\ref{fig2}: the signs $\gamma_{i,j+1}=\alpha_{ij}$ are attached to the vertical edges and the signs $\beta_{i,j+1}=\delta_{ij}$ -- to the horizontal edges.
The DWBC are imposed in terms of the signs on the external edges: they should coincide with the signs in the Fig.~\ref{fig4}. The configuration of the model can be considered as a distribution of these signs on the internal edges and a value of one of the boundary heights, for example, $d_{nn}$. The partition function in terms of the Felder $R$-matrix can be written as an entry
\begin{align} \label{ZSOSRRpmmp}
 Z^{(n)}(u;v;\lambda)=\mathbb R(u;v;\lambda)^{+,\ldots,+;-,\ldots,-}_{-,\ldots,-;+,\ldots,+}
\end{align}
of the matrix~\footnote{We use the notations of the section~\ref{sec110}: we understand an expression of the form $A(\lambda+\hbar B)$, where $A(\lambda)$ and $B$ are matrices, as the series
\begin{align}
 A(\lambda+\hbar B)=\sum_{k=0}^\infty\frac{\hbar^k}{k!}\frac{\partial^k}{\partial\lambda^k}A(\lambda) B^k.
\end{align}
}
\begin{align} \label{ZSOSDWBCdef}
 \mathbb R(u;v)=\mathop{\overrightarrow\prod}\limits_{1\le i\le n}\; \mathop{\overleftarrow\prod}\limits_{n\ge j\ge 1} R^{(i,j')}(u_i,v_j;\Lambda_{ij}),
\end{align}
where $\lambda=\hbar d_{nn}$ and $\Lambda_{ij}=\lambda+\hbar(n-i)+\hbar\sum\limits_{l=j+1}^n H^{(l)}$.
This description of the SOS model generalizes the description of the 6-vertex model.

Consider a group homomorphism $\chi\colon\Gamma\to\mathbb C^\times$, where $\Gamma=\mathbb Z+\tau\mathbb Z$ and $\mathbb C^\times$ is a group of non-zero complex numbers with respect to the multiplication. It can be set fixing two values $\chi(1)$ and $\chi(\tau)$. The holomorphic functions on $\mathbb C$ with the translation properties
\begin{align}
 \phi(u+1)&=\chi(1)\phi(u), \\
 \phi(u+\tau)&=\chi(\tau)e^{-2\pi inu-\pi in\tau}\phi(u)
\end{align}
are called elliptic polynomials (or theta-functions) of degree $n$ with the {\itshape character} $\chi$. Let $\Theta_n(\chi)$ be a space of these functions. If $n>0$ then $\dim\Theta_n(\chi)=n$ (and $\dim\Theta_n(\chi)=0$ if $n<0$).

In~\cite{S3} we derive the analytical properties of the function~\eqref{ZSOSDWBCdef} analogous to the properties found by Korepin for the 6-vertex model:
\begin{itemize}
\item[({\it I})] $Z^{(n)}(u;v;\lambda)$ is an elliptic polynomial of order $n$ with the character $\chi$ in each variable $u_i$, where the character $\chi$ are defined by the values
\begin{align}
 \chi(1)&=(-1)^n, &
 \chi(\tau)&=(-1)^ne^{2\pi i(\lambda+\sum\limits_{j=1}^n v_j)}. \label{char_chi}
\end{align}
\item[({\it II})] $Z^{(n)}(u;v;\lambda)$ is symmetric with respect to the variables $u$: $Z^{(n)}(u;v;\lambda)=Z^{(n)}(u^\sigma;v;\lambda)$, $\forall\sigma\in S_n$;
\item[({\it III})] $Z^{(n)}(u;v;\lambda)$ is an elliptic polynomial of order $n$  with the character $\tilde\chi$ in each variable $v_i$, where
\begin{align}
\tilde\chi(1)&=(-1)^n, &
\tilde\chi(\tau)&=(-1)^n e^{2\pi i(-\lambda+\sum\limits_{i=1}^n u_i)}
\end{align}
\item[({\it IV})] $Z^{(n)}(u;v;\lambda)$ is symmetric with respect to the variables $v$: $Z^{(n)}(u;v;\lambda)=Z^{(n)}(u;v^\sigma;\lambda)$, $\forall\sigma\in S_n$;
\item[({\it V})] $Z^{(n)}(u;v;\lambda)$ satisfies the recurrent relation
\begin{align}
 Z^{(n)}(u_1,&\ldots,u_{n-1},u_n=v_n-\hbar;v_1,\ldots,v_n;\lambda)=\frac{\theta(\lambda+n\hbar)\theta(\hbar)}{\theta(\lambda+(n-1)\hbar)}\times \label{Zrec} \\
  &\times\prod_{m=1}^{n-1}\Big(\theta(v_n-v_m-\hbar)\theta(u_m-v_n)\Big)\,
 Z^{(n-1)}(u_1,\ldots,u_{n-1};v_{n-1},\ldots,v_1;\lambda); \notag
\end{align}
\item[({\it VI})] the partition function for $1\times1$ lattice with DWBC is
\begin{gather}
 Z^{(1)}(u_1;v_1;\lambda)=\bar c(u_1-v_1)=\frac{\theta(u_1-v_1-\lambda)\theta(\hbar)}{\theta(-\lambda)}. \label{Z0c}
\end{gather}
\end{itemize}

Since the space of elliptic polimonials of order $n$ is $n$-dimentional there exists a unique function having the given values in $n$ gereric ponts (see Lemma~\ref{D_ep_coin} from Appendix~\ref{D_Appendix_ep} of~\cite{S3}. This implies that there is a unique functions $Z^{(n)}(u_1,\ldots,u_n;v_1,\ldots,v_n)$ satisfying the conditions~({\it I}), ({\it IV}) -- ({\it VI}). These conditions are similar to conditions~({\it i}), ({\it iv}) -- ({\it vi}) for the 6-vertex case, but in the elliptic case we have an additional proviso -- the character. This was an obstacle to find a determinant formula generalizing the Izergin's formula~\eqref{stat-s}. However we can generalize the formula~\eqref{stat-s_pr} using the projections for the algebra $\A_F$.

\begin{theor}
The functions
\begin{gather}
 Z^{(n)}(u;v;\lambda)=\prod_{i,j=1}^n\theta(u_i-v_j)
  \prod_{n\ge k>m\ge1}\frac{\theta(u_k-u_m+\hbar)\theta(v_k-v_m-\hbar)}{\theta(u_k-u_m)
  \theta(v_k-v_m)}\times \notag \\
\times\big(\hbar\theta(\hbar)\big)^n\La P^+_{\lambda-(n-1)\hbar}\big(f(u_n)\cdots f(u_1)\big),
e(v_n)\cdots e(v_1)\Ra= \notag \\
 =\prod_{n\ge k>m\ge1}\frac{\theta(v_k-v_m-\hbar)}{\theta(v_k-v_m)}
  \sum_{\sigma\in S_n}\prod_{\substack{l<l' \\ \sigma(l)>\sigma(l')}}
   \frac{\theta(v_{\sigma(l)}-v_{\sigma(l')}+\hbar)}{\theta(v_{\sigma(l)}-v_{\sigma(l')}-\hbar)}
	 \prod_{1\le k<m\le n}\theta(u_k-v_{\sigma(m)}) \times  \notag \\
  \times\prod_{n\ge k>m\ge1}\theta(u_k-v_{\sigma(m)}+\hbar)\prod_{m=1}^n\frac{\theta(u_m-v_{\sigma(m)}-\lambda-(m-1)\hbar)\theta(\hbar)}
                      {\theta(-\lambda-(m-1)\hbar)}. \label{Z_Z_}
\end{gather}
satisfy the conditions~({\it I}), ({\it IV}) -- ({\it VI}) and, consequently, the formula~\eqref{Z_Z_} gives the partition functions for the SOS model with DWBC.
\end{theor}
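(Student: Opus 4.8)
The plan is to verify that the explicit function given in the second and third lines of~\eqref{Z_Z_} satisfies the six analytical conditions~({\it I}), ({\it IV}) -- ({\it VI}), since by the uniqueness argument following the statement these conditions determine the SOS partition function with DWBC. The strategy mirrors the trigonometric case of Section~\ref{sec42}, where one checks that~\eqref{stat-s_pr} fulfills the corresponding conditions~({\it i}), ({\it iv}) -- ({\it vi}); the new ingredient here is tracking the \emph{character} $\chi$ of the elliptic polynomials. I would first establish that the algebraic expression in the first two lines of~\eqref{Z_Z_} equals the explicit theta-function sum in the last two lines. This follows by substituting the integral formula~\eqref{Pfn0_prod_f_int} for the projection $P^+_{\lambda-(n-1)\hbar}\big(f(u_n)\cdots f(u_1)\big)$ into the Hopf pairing and then evaluating that pairing with the current formula~\eqref{la_ff_ee_ra} from Proposition~\ref{propHP}. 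The delta-functions $\delta(u_m,v_{\sigma(m)})$ in~\eqref{la_ff_ee_ra} collapse the contour integrals over $v_1,\ldots,v_n$, and the elliptic $\hbar$-symmetrization sum over $\sigma\in S_n$ produces exactly the displayed sum; this is the elliptic analogue of the computation ending Section~\ref{sec42}.

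Next I would verify the conditions one at a time on the explicit form. For condition~({\it VI}), setting $n=1$ the sum collapses to the single term $\sigma=\id$, giving $\dfrac{\theta(u_1-v_1-\lambda)\theta(\hbar)}{\theta(-\lambda)}$, which matches~\eqref{Z0c}. For the symmetry in the variables $v$ (condition~({\it IV})) I would use that the combination $\prod_{k>m}\frac{\theta(v_k-v_m-\hbar)}{\theta(v_k-v_m)}$ times the $\hbar$-symmetrizing sum is invariant under permutations of $v$, exactly as in the remark at the end of Section~\ref{sec41} that $q$-symmetrization produces a symmetric function after multiplication by the appropriate product; here the elliptic $\hbar$-symmetrizator~\eqref{ElSymh_def} plays that role. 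The symmetry in $u$ (condition~({\it II})) is more delicate since $u$ enters the summand non-symmetrically, but it should follow from the commutation relation~\eqref{cr_EFR0ff} for the currents $f(u)$, which makes $f(u_n)\cdots f(u_1)$ itself $\hbar$-symmetric, so its projection inherits a symmetry that descends to the kernel.

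The conditions~({\it I}) and ({\it III}) on the elliptic-polynomial degree and the character require computing the quasi-periodicity of~\eqref{Z_Z_} under $u_i\mapsto u_i+1$ and $u_i\mapsto u_i+\tau$ using the defining relations~\eqref{D_theta} of $\theta$. I would count the theta-factors containing a fixed $u_i$: in each term of the sum there are $n$ factors of the form $\theta(u_i-v_{\sigma(m)})$ or $\theta(u_i-v_{\sigma(m)}+\hbar)$, plus the $u_i$-dependent factor $\theta(u_i-v_{\sigma(i)}-\lambda-(i-1)\hbar)$ when $\sigma(i)=i$ is selected, and the prefactor $\prod_{k>m}\theta(u_k-u_m+\hbar)/\theta(u_k-u_m)$. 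The multiplicative factors $-1$ and $-e^{-2\pi i(\cdot)-\pi i\tau}$ accumulated from~\eqref{D_theta} must combine to give precisely the character~\eqref{char_chi}; checking that the dynamical shift $-\lambda-(m-1)\hbar$ in the arguments and the $\hbar$-shifts cancel in the exponent is the crux. The main obstacle I anticipate is precisely this character bookkeeping: the elliptic case has no analogue of the simple polynomial-degree count of the trigonometric case, and one must show that the $\lambda$-dependent and $\hbar$-dependent phases conspire so that every term of the $S_n$-sum transforms with the \emph{same} character $\chi$ (otherwise the sum would not be a well-defined elliptic polynomial). Once the character and degree are confirmed term-by-term, the recurrence~({\it V}) should follow by specializing $u_n=v_n-\hbar$, noting that $\theta(u_n-v_n+\hbar)$ and related factors force only permutations with $\sigma(n)=n$ to survive, thereby factoring out the prefactor in~\eqref{Zrec} and leaving the $(n-1)$-variable sum equal to $Z^{(n-1)}$.
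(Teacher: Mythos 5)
Your proposal is correct and follows essentially the same route as the paper's proof: derive the explicit theta-sum in \eqref{Z_Z_} by substituting the integral formula \eqref{Pfn0_prod_f_int} for the projection into the pairing \eqref{la_ff_ee_ra}, verify the initial condition \eqref{Z0c} at $n=1$, obtain the symmetries from the $\hbar$-symmetric structure of the currents (equivalently the commutation relations \eqref{cr_EFR0ff} and \eqref{cr_EFR0ee}, which is exactly how the paper argues), read off the elliptic-polynomial property with character \eqref{char_chi} term by term from the explicit formula, and get the recurrence \eqref{Zrec} by setting $u_n=v_n-\hbar$, which kills every term with $\sigma(n)\ne n$. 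Only your bookkeeping for condition ({\it I}) needs repair before execution: in the explicit sum the $u$--$u$ prefactor has already cancelled against the kernel of \eqref{Pfn0_prod_f_int} and must not be counted, the factor $\theta(u_i-v_{\sigma(i)}-\lambda-(i-1)\hbar)$ is present for \emph{every} $\sigma$ (not only when $\sigma(i)=i$), and each fixed $u_i$ then occurs in exactly $n$ theta factors ($n-i$ of type $\theta(u_i-v_{\sigma(m)})$, $i-1$ of type $\theta(u_i-v_{\sigma(m)}+\hbar)$, plus the single $\lambda$-shifted one), whose arguments sum to $\lambda+\sum_j v_j$ independently of $\sigma$ and $i$, so every term indeed carries the same character and degree $n$.
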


The second equality in the formula~\eqref{Z_Z_} is derived using the integral representation of the projection~\eqref{Pfn0_prod_f_int} and the formula for the paring of the total currents~\eqref{la_ff_ee_ra} (for the case $t^+=t^-=1$).

Let us remark that the factors in the formula~\eqref{Z_Z_} before the paring are not purely fitting (maybe excepting $\theta(\hbar)^n$). The factor $\prod_{n\ge k>m\ge1}\frac{\theta(u_k-u_m+\hbar)\theta(v_k-v_m-\hbar)}{\theta(u_k-u_m)\theta(v_k-v_m)}$ is necessary for the symmetry over $\{u\}$ and $\{v\}$. The factor $\prod_{i,j=1}^n\theta(u_i-v_j)$ is needed to annihilate the poles. The numerical factor $\theta(\hbar)^n$ can be changed by a renormalization of the currents and of the pairing. 

The formula~\eqref{Z_Z_} in the trigonometric limit $\tau\to i\infty$ gives the partition function for the trigonometric SOS model:
\begin{gather}
Z^{(n)}_{\text{trig}}(z_1,\ldots,z_n;w_1\ldots,w_n;\mu)
 =\prod_{k,j=1}^n\big(2\pi i e^{\pi i(u_k+v_j)}\big)\lim_{\tau\to i\infty}
 Z^{(n)}(u;v;\lambda)= \notag \\
 =\prod_{n\ge k>m\ge1}\frac{q^{-1}w_k-qw_m}{w_k-w_m}
  \sum_{\sigma\in S_n}\prod_{\substack{l<l' \\ \sigma(l)>\sigma(l')}}
   \frac{qw_{\sigma(l)}-q^{-1}w_{\sigma(l')}}{q^{-1}w_{\sigma(l)}-qw_{\sigma(l')}}\times \label{Z_Z_t} \\
  \times\prod_{n\ge k>m\ge1}\big(qz_k-q^{-1}w_{\sigma(m)}\big)\prod_{1\le k<m\le n}\big(z_k-w_{\sigma(m)}\big)
   \prod_{m=1}^n\frac{\big(z_m-q^{2(m-1)}w_{\sigma(m)}\mu\big)(q-q^{-1})}
                      {\big(1-q^{2(m-1)}\mu\big)}, \notag
\end{gather}
where $z_i=e^{2\pi iu_i}$ and $w_j=e^{2\pi iv_j}$ are multiplicative variables, $q=e^{\pi i\hbar}$ is a multiplicative anisotropy parameter and $\mu=e^{2\pi i\lambda}$ is a multiplicative dynamical parameter. The trigonometric SOS model is defined by the matrix of Boltzmann weights
\begin{align}
 &R(z,w;\mu)=2\pi i e^{\pi i(u+v)}\lim_{\tau\to i\infty}R(u-v;\lambda)= \\
 &\begin{pmatrix}
  zq-wq^{-1}  & 0              & 0              & 0              \\
  0 & \frac{(z-w)(\mu q-q^{-1})}{(\mu-1)} &\frac{(z-w\mu)(q-q^{-1})}{(1-\mu)}  & 0              \\
  0 & \frac{(z\mu-w)(q-q^{-1})}{(\mu-1)} &\frac{(z-w)(\mu q^{-1}-q)}{(\mu-1)}  & 0   \\
  0              & 0              & 0              & zq-wq^{-1}
 \end{pmatrix}.  \label{Rzmu}
\end{align}
The limit $\lambda\to-i\infty$ implying $\mu\to\infty$ (or $\lambda\to i\infty$ implying $\mu\to0$) of the formula~\eqref{Z_Z_t} coincide with the formula~\eqref{stat-s_pr} for the 6-vertex partition function corresponding to DWBC.

%
%

\renewcommand{\thesection}{\arabic{section}}

\appendix

\chapter{Transition function for the Toda chain model}
\label{SA1}

\thispagestyle{empty} \pagestyle{myheadings}
\markboth{}{A.\,Silantyev\hfil{\itshape Transition function for the Toda chain model}}

\author{A.\,Silantyev \footnote{E-mail: silant@thsun1.jinr.ru, silant@tonton.univ-angers.fr} \bigskip\\
{\normalsize \itshape Bogoliubov Laboratory of Theoretical Physics, JINR } \\
{\normalsize \itshape 141980 Dubna, Moscow region, Russia} \\ [5pt]
{\normalsize \itshape D\'epartement de Math\'ematiques, Universit\'e d'Angers, 49045 Angers, France}}
\title{Transition function for the Toda chain model}
\date{}

\maketitle

\begin{abstract}
The method of $\Lambda$-operators developed by S.\,Derkachov, G.\,Korchemsky, A.\,Ma\-na\-shov is
applied to a derivation of eigenfunctions for the open Toda chain.
The Sklyanin measure is reproduced using diagram techniques developed
for these $\Lambda$-operators. The properties of the
$\Lambda$-operators are studied. This approach to the open Toda
chain eigenfunctions reproduces Gauss-Givental representation for
these eigenfunctions.
\end{abstract}

\section{Introduction}

This work was inspired by the article~\cite{Derkachov} devoted to
the Separation of Variables (SoV) method for $XXX$-model. The main
idea of this method is to find an integral transformation such
that eigenfunctions of quantum integrals of motion in new
variables becomes the product of functions of one
variable~\cite{Sklyanin}. If everyone of these functions satisfies
the Baxter equation, then the initial multivariable function
becomes an eigenfunction. The kernel of this transform is called a
transition function and can be constructed as consecutive
application of operators $\Lambda_k(u)$:
$\Lambda_N(\gamma_1)\Lambda_{N-1}(\gamma_2)\cdots\Lambda_1(\gamma_N)$.
Every operator $\Lambda_k(u)$ is an integral transformation, which
maps a function of $k-1$ variables onto function of $k$ variables.
The properties of the transition function can be translated to the
properties of these operators (see
section~\ref{A_Pr_Lambda}). \\

The transition function for the $N$-particle periodic Toda chain
was obtained in the works~\cite{Gutzwiller}, \cite{Sklyanin},
\cite{Kharchev_P}. In this case, the transition function is
proportional to the eigenfunction of the open Toda chain, with a
factor depending on the coordinate of $N$-th particle. We apply
methods of the paper~\cite{Derkachov} to obtain
these eigenfunctions as a product of $\Lambda$-operators. \\

This form of eigenfunctions of the open Toda chain leads to an
integral representation that appeared first in~\cite{Givental}
employing a different approach. Recently it was interpreted from a
group-theoretical point of view using the Gauss decomposition of
$GL(N,\mathbb R)$~\cite{Kharchev_GG}. Therefore, this integral
representation of the eigenfunctions for the open Toda chain is
called a Gauss-Givental representation. \\

The method of a triangulation of the Lax matrix described
in~\cite{Pasquier} was used in~\cite{Derkachov}. We also use a
triangulation, which is implemented by a gauge transformation
parametrized by variables $y_0,\ldots,y_N$. In the periodic case
one has to impose the condition $y_0=y_N$ and the method described
in~\cite{Pasquier} produces Baxter's $Q$-operators for the
periodic Toda chain model. Following~\cite{Derkachov} we impose a
different boundary condition: $y_0\to -\infty$, $y_N\to +\infty$
to construct $\Lambda$-operator. Thus $\Lambda$-operator and
Baxter's $Q$-operator for the periodic Toda chain correspond to the
different choice of the boundary conditions in the method of
triangulation of the Lax matrices. \\

To describe the construction of eigenfunctions for open Toda
chain we develop a kind of the Feynman diagram technique similar to
one exploited in~\cite{Derkachov}. It allows to reduce
calculations with kernels of $\Lambda$-operators to simple
manipulations with diagrams. \\

The article is organized as follows. In section~\ref{A_OTCh} we
recall a definition of the open Toda chain model following~\cite{Kharchev_O, Kharchev_OP}.
Section~\ref{A_Eig_OTCh} is devoted to a description of
eigenfunctions in terms of the product of $\Lambda$-operators
and formulation of a diagram technique developed in~\cite{Derkachov}. In
section~\ref{A_Int_meas} we use this technique in order to prove
that eigenfunctions satisfy an orthogonality condition. As a by-product
of this calculation we obtain a Sklyanin measure, which is necessary
to prove a completeness condition.
Section~\ref{A_Pr_Lambda} is devoted to algebraic properties of
$\Lambda$-operators.

\section{Open Toda chain model}
\label{A_OTCh}

 The quantum $N$-particle open Toda chain is a one-dimensional model with
the exponential interaction between the nearest particles.
The hamiltonian of the system is equal to
\begin{equation}
 H=\frac12\sum_{n=1}^N p_n^2+\sum_{n=1}^{N-1} e^{x_n-x_{n+1}},
\end{equation}
where $p_n\hm=-i\hbar\dfrac{\partial}{\partial x_n}$ is an operator of momentum
for the $n$-th particle. Due to the translational invariance, the total momentum
\begin{equation}
 P=\sum_{n=1}^N p_n
\end{equation}
commutes with the hamiltonian, i.e. it is also an integral of motion.
There are $N$ functionally independent integrals of motion for
this system. It is relevant to use the $R$-matrix formalism to find them.
First of all, introduce the Lax operator of the Toda chain
\begin{equation} \label{A_Ln}
 L_n(u)=
  \begin{pmatrix}
   u-p_n & e^{-x_n} \\
   -e^{x_n} & 0
  \end{pmatrix}, \qquad n=1,\ldots,N,
\end{equation}
and monodromy matrix for the $N$-particle Toda chain
\begin{equation} \label{A_TN}
 T_N(u)=L_N(u)\cdots L_1(u)=
  \begin{pmatrix}
   A_N(u) & B_N(u) \\
   C_N(u) & D_N(u)
  \end{pmatrix},
\end{equation}
where $u$ is a spectral parameter. \\

The following recurrent relations, which are direct consequence of this definition,
will be useful below:
\begin{gather}
 A_N(u)=(u-p_N)A_{N-1}(u)+e^{-x_N}C_{N-1}(u), \label{A_Arec} \\
 C_N(u)=-e^{x_N}A_{N-1}(u), \label{A_Crec} \\
 A_N(u)=(u-p_N)A_{N-1}(u)-e^{x_{N-1}-x_N}A_{N-2}(u).  \label{A_AArec}
\end{gather}
These relations show that $A_N(u)$ and $C_N(u)$ are polynomials in
$u$ of degree $N$ and $N-1$ respectively. Analogously, $B_N(u)$
and $D_N(u)$ have degree $N-1$ and $N-2$. \\

The monodromy matrix satisfies to the quantum $RTT$-relation
\begin{equation} \label{A_RTT}
  R(u-v)\,(T_N(u)\otimes I)(I\otimes T_N(v))=
  (I\otimes T_N(v))(T_N(u)\otimes I)\,R(u-v)
\end{equation}
with the rational $R$-matrix
\begin{equation} \label{A_RP}
 R(u)=I\otimes I+\frac{i\hbar}{u}\,\mathcal P,
\end{equation}
where $\mathcal P$ is a permutation matrix: $\mathcal P_{ij,kl}\hm=\delta_{il}\delta_{jk}$.

Rewriting~\eqref{A_RTT} by entries one obtains, in particular, the
relation
\begin{equation} \label{A_RTTAA}
 A_N(u)A_N(v)=A_N(v)A_N(u).
\end{equation}

This means that $A_N(u)$ is a generation function of the integrals
of motion of the integrable system with $N$ degrees of freedom.
Explicit calculations of two first integrals show that these are
integrals for open Toda chain model:
\begin{gather}
 A_N(u)=\sum_{k=0}^N (-1)^k u^{N-k} H_k, \\
 H_0=1, \quad H_1=P, \quad H_2=\frac12 P^2-H, \\
 [H_k,H_j]=0. \label{A_commHH}
\end{gather}

By virtue of~\eqref{A_commHH} there exist common eigenfunctions of the integrals $H_k$
corresponding to the eigenvalues $E_k$. They are defining by the equation
\begin{equation} \label{A_psi_def_E}
 A_N(u)\psi_E(x)= a_N(u;E)\psi_E(x),
\end{equation}
where
\begin{equation*}
 a_N(u;E)=\sum_{k=0}^N (-1)^k u^{N-k} E_k,
\end{equation*}
$E_0=1$, $E=(E_1,\ldots,E_N)$, $x=(x_1,\ldots,x_N)$. Representing
eigenvalues $E_k$ as symmetric combinations
\begin{equation} \label{A_Ek}
 E_k=\sum\limits_{j_1<\ldots<j_k}\gamma_{j_1}\ldots\gamma_{j_k}
\end{equation}
of real variables $\gamma=(\gamma_1,\ldots,\gamma_N)$, one can rewrite equation~\eqref{A_psi_def_E} as follows
\begin{equation} \label{A_psi_def}
 A_N(u)\psi_{\gamma}(x)= \prod_{j=1}^N(u-\gamma_j)\psi_{\gamma}(x).
\end{equation}

\section{Eigenfunctions of the open Toda chain}
\label{A_Eig_OTCh}

In this section we shall find eigenfunctions of the open Toda
chain defined in the previous section by the
equation~\eqref{A_psi_def}. This equation is equivalent to the
system of $N$ equations
\begin{equation} \label{A_Apsi0}
 A_N(\gamma_j)\psi_{\gamma}(x)=0, \qquad j=1,\ldots,N.
\end{equation}
The eigenvalues~\eqref{A_Ek} are invariant under the permutations of $\gamma_1,\ldots,\gamma_N$.
Therefore, it is reasonably to require the invariance of eigenfunction under these permutations, which we shall call the Weyl invariance:
\begin{equation} \label{A_Winv}
 \psi_{\sigma(\gamma)}(x)=\psi_{\gamma}(x), \qquad \text{for all } \sigma\in S_N,
\end{equation}
where $S_N$ is a permutation group and $\sigma(\gamma)
=(\gamma_{\sigma(1)},\ldots, \gamma_{\sigma(N)})$. \\

It is sufficiently to find a Weyl invariant solution of the unique
equation
\begin{equation} \label{A_Agamma1psi0}
 A_N(\gamma_1)\psi_{\gamma}(x)=0,
\end{equation}
which will be a solution for the whole system~\eqref{A_Apsi0} due to
its Weyl invariance. \\

To solve the last equation we shall consider a gauge transformation of the Lax operators
\begin{equation} \label{A_MLM}
 \widetilde L_n(u)=M_{n}L_n(u)M_{n-1}^{-1}, \qquad n=1,\ldots,N
\end{equation}
by the matrices
\begin{equation}
 M_n=\begin{pmatrix}
   1 & 0 \\
   ie^{y_n} & 1
  \end{pmatrix}, \qquad n=0,\ldots,N.
\end{equation}
The deformed $N$-particle monodromy matrix is
\begin{equation}
 \widetilde T_N(u)\equiv
  \begin{pmatrix}
   \widetilde A_N(u) & \widetilde B_N(u) \\
   \widetilde C_N(u) & \widetilde D_N(u)
  \end{pmatrix}
  =\widetilde L_N(u)\cdots\widetilde L_1(u)=M_N T_N(u) M_0^{-1}.
\end{equation}
In particular, we have
\begin{gather}
 \widetilde L_n(u)_{21}=ie^{y_n}(u-p_n-ie^{y_{n-1}-x_n}+ie^{x_n-y_n}), \label{A_L21_til}\\
 \widetilde C_N(u)=ie^{y_N} A_N(u)+e^{y_N+y_0}B_N(u)+C_N(u)-ie^{y_0}D_N(u). \label{A_C_til}
\end{gather}
Here $\widetilde L_n(u)_{21}$ is a lower off-diagonal entry of the matrix $\widetilde L_n(u)$. \\

Let us consider the auxiliary equation
\begin{equation}
 \widetilde L_n(u)_{21}w_n(u)=0,
\end{equation}
which has the following solution
\begin{equation}
 w_n(u)=\exp\Bigl\{\frac{i}{\hbar}u(x_n-y_{n-1})-\frac{1}{\hbar}e^{y_{n-1}-x_n}-\frac{1}{\hbar}e^{x_n-y_n}\Bigr\}.
\end{equation}
It is clear that the function
\begin{equation}
 W_u(x;y)\hm=\prod\limits_{n=1}^N w_n(u)
 =\exp\sum\limits_{n=1}^N\Bigl\{\frac{i}{\hbar}u(x_n-y_{n-1})-\frac{1}{\hbar}e^{y_{n-1}-x_n}-\frac{1}{\hbar}e^{x_n-y_n}\Bigr\}
\end{equation}
is a solution to the equation
\begin{equation} \label{A_C_tilW}
 \widetilde C_N(u) W_u(x;y)=0.
\end{equation}
In the limit $y_0\to -\infty$, $y_N\to +\infty$ the formula~\eqref{A_C_til} gives us
the equality
\begin{equation} \label{A_ANlim}
A_N(u)\hm=-i\lim\limits_{\substack{y_0\to -\infty \\ y_N\to +\infty}} e^{-y_N}\widetilde C_N(u).
\end{equation}
Therefore, multiplying the equation~\eqref{A_C_tilW} by
$-ie^{-y_N}e^{\frac{i}{\hbar}u(y_0+y_N)}$, taking the same limit
as in~\eqref{A_ANlim} and setting $u\hm=\gamma_1$ we arrive to the
equation~\eqref{A_Agamma1psi0} with the solution
$\psi_{\gamma}(x)\hm=\Lambda_{\gamma_1}(x;y)$, where
\begin{equation} \label{A_Lamb}
 \begin{split}
 \Lambda_u(x_1,\ldots,x_N;y_1,\ldots,y_{N-1})
     &=\lim\limits_{\substack{y_0\to -\infty \\ y_N\to +\infty}}
        e^{\frac{i}{\hbar}u(y_0+y_N)}W_{u}(x;y)=\\
     =\exp\Bigl\{\frac{i}{\hbar}u(\sum\limits_{n=1}^{N}x_n-&\sum\limits_{n=1}^{N-1}y_n)
        -\frac{1}{\hbar}\sum\limits_{n=1}^{N-1}(e^{y_n-x_{n+1}}+e^{x_n-y_n})\Bigr\}.
 \end{split}
\end{equation}

Let $\Lambda_N(u)$ be an operator with the kernel
$\Lambda_u(x_1,\ldots,x_N;y_1,\ldots,y_{N-1})$, i.e.
\begin{equation} \label{A_Lambf}
 (\Lambda_N(u)\cdot f)(x)
   =\int\limits_{\mathbb R^{N-1}}dy\,\Lambda_u(x_1,\ldots,x_N;y_1,\ldots,y_{N-1})f(y).
\end{equation}
This operator acts from the space of functions of $N-1$ variables
to the space of functions of $N$ variables. Setting $u=\gamma_1$
in~\eqref{A_Lambf} we obtain a solution to~\eqref{A_Agamma1psi0} for
arbitrary function of $N-1$ variables $f(y)$.

\begin{theor} \label{A_Th_psi}
 The following solution to the equation~\eqref{A_Agamma1psi0}
\begin{equation} \label{A_psi}
 \psi_{\gamma}(x)=(\Lambda_N(\gamma_1)\cdots\Lambda_2(\gamma_{N-1})\Lambda_1(\gamma_N)\cdot 1)(x_1,\ldots,x_N),
\end{equation}
where $(\Lambda_1(\gamma_N)\cdot
1)(x_1)=e^{\frac{i}{\hbar}\gamma_N x_1}$, is Weyl invariant, i.e.
satisfies to the condition~\eqref{A_Winv}, and, therefore, is a
solution to the equation~\eqref{A_psi_def}.
\end{theor}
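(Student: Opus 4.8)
The plan is to establish the Weyl invariance of $\psi_\gamma(x)$ defined by~\eqref{A_psi} as a consequence of the single commutativity property~\eqref{A_Lamoppr}, namely $\Lambda_N(u)\Lambda_{N-1}(v)=\Lambda_N(v)\Lambda_{N-1}(u)$. First I would observe that the full symmetric group $S_N$ is generated by the adjacent transpositions $\sigma_i=(i,\,i+1)$ for $i=1,\ldots,N-1$, so it suffices to check invariance of~\eqref{A_psi} under each $\sigma_i$ acting on $(\gamma_1,\ldots,\gamma_N)$. Since $\psi_\gamma$ is the consecutive application of the operators $\Lambda_N(\gamma_1)\Lambda_{N-1}(\gamma_2)\cdots\Lambda_1(\gamma_N)$, exchanging $\gamma_i\leftrightarrow\gamma_{i+1}$ amounts to swapping the spectral parameters of two \emph{adjacent} factors in this product, namely $\Lambda_{N-i+1}(\gamma_i)$ and $\Lambda_{N-i}(\gamma_{i+1})$.

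The key step is then to apply the commutativity relation~\eqref{A_Lamoppr} with the index shifted: for each $i$ the two consecutive factors are $\Lambda_{k}(\gamma_i)\Lambda_{k-1}(\gamma_{i+1})$ with $k=N-i+1$, and~\eqref{A_Lamoppr} (written for the pair of indices $k,k-1$) gives precisely
\begin{align}
 \Lambda_{k}(\gamma_i)\Lambda_{k-1}(\gamma_{i+1})=\Lambda_{k}(\gamma_{i+1})\Lambda_{k-1}(\gamma_i).
\end{align}
Because all the other factors $\Lambda_j(\gamma_l)$ in the product are untouched by $\sigma_i$, substituting this local identity into the composition shows that the value of~\eqref{A_psi} is unchanged under the swap $\gamma_i\leftrightarrow\gamma_{i+1}$. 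Carrying this out for every adjacent transposition yields invariance under all generators of $S_N$, hence under the whole group, which is the condition~\eqref{A_Winv}.

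The second half of the statement is then immediate: by construction $\psi_\gamma(x)=(\Lambda_N(\gamma_1)\cdot s)(x)$ with $s(y)$ the result of applying the remaining $\Lambda$-operators, so $\psi_\gamma$ solves~\eqref{A_Agamma1psi0} by the very property of $\Lambda_N(\gamma_1)$ established around~\eqref{A_ANlim}--\eqref{A_Lamb}. Combined with the Weyl invariance just proven, the remark immediately preceding~\eqref{A_Agamma1psi0} (any Weyl-invariant solution of the first equation solves the whole system~\eqref{A_Apsi0}) upgrades $\psi_\gamma$ to a solution of the full system, equivalently of~\eqref{A_psi_def}. I would also verify the degenerate base case separately, checking that $(\Lambda_1(\gamma_N)\cdot1)(x_1)=e^{\frac{i}{\hbar}\gamma_N x_1}$ is consistent with~\eqref{A_Lamb} for $N=1$, so that the recursion starts correctly.

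The main obstacle is the proof of the commutativity property~\eqref{A_Lamoppr} itself; however, that is assumed as an established fact here (it is stated and attributed to~\cite{Derkachov,S1} just before the theorem), so within the scope of this proof the only genuine work is the bookkeeping of indices --- making sure the shifted pair $(\,\Lambda_{N-i+1},\Lambda_{N-i}\,)$ matches the hypotheses of~\eqref{A_Lamoppr} for each $i$, and that no two non-adjacent operators ever need to be interchanged. This index-matching is routine once the generators-of-$S_N$ reduction is in place, so I expect the argument to be short. The one point requiring mild care is that the operators act between spaces of different numbers of variables, so the relation~\eqref{A_Lamoppr} can only be applied to genuinely adjacent factors, which is exactly why reducing to adjacent transpositions (rather than arbitrary permutations) is the right strategy.
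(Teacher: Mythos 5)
Your reduction of Weyl invariance to the adjacent transpositions, and the index bookkeeping showing that the swap $\gamma_i\leftrightarrow\gamma_{i+1}$ only touches the two consecutive factors $\Lambda_{N-i+1}(\gamma_i)\Lambda_{N-i}(\gamma_{i+1})$, coincide exactly with the opening step of the paper's proof: that step is precisely the kernel identity~\eqref{A_LgammaLgamma}. The genuine gap is what you do next. You invoke the commutativity $\Lambda_N(u)\Lambda_{N-1}(v)=\Lambda_N(v)\Lambda_{N-1}(u)$ as ``an established fact stated just before the theorem,'' but in this paper it is not a prior result: it is proved \emph{inside} the proof of Theorem~\ref{A_Th_psi}, and the Proposition of Section~\ref{A_Pr_Lambda} then lists it as property \textsc{(i)} with the one-line justification that it ``was proved in the theorem~\ref{A_Th_psi}.'' The statement~\eqref{Lamoppr} in Chapter~2 of the thesis, which you are implicitly citing, is itself attributed to this appendix. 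Read as a proof of the present theorem, your argument is therefore circular: the only nontrivial content of the theorem is exactly the lemma you assume, and what you actually establish is the implication ``commutativity of adjacent $\Lambda$-operators $\Rightarrow$ Weyl invariance,'' not the theorem.

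What the paper does to close this gap is the analytic heart of the proof, and it is entirely absent from your proposal. The kernels are encoded in a diagram technique (the functions $I$, $J_u$, $Z_u$, $Y_u$ drawn as lines); three local identities (figures~3a--3c) are proved by explicitly integrating over the shared intermediate variable, the integrals evaluating to Macdonald functions $K_{\frac{i}{\hbar}(\gamma_n-\gamma_{n-1})}$, as in~\eqref{A_rhsf3a} and~\eqref{A_lhsf3a}; comparing the two evaluations shows that the two sides differ exactly by the factor $Y_{\gamma_{n-1}-\gamma_n}(x_{k+1},z_k)\,Y^{-1}_{\gamma_{n-1}-\gamma_n}(x_k,z_{k-1})$. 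The identity~\eqref{A_LgammaLgamma} then follows by inserting such a $Y$-line at one end (figure~3c), transporting it across the diagram (figure~3a), and removing it at the other end (figure~3b), each step exchanging $\gamma_{n-1}$ and $\gamma_n$. Your remaining observations --- that $\psi_\gamma$ solves~\eqref{A_Agamma1psi0} because it is of the form $\Lambda_N(\gamma_1)\cdot s$, and that Weyl invariance then upgrades it to a solution of~\eqref{A_psi_def} via the remark before~\eqref{A_Agamma1psi0} --- are correct and match the paper, but without a proof of the commutation relation (by this diagrammatic computation or an equivalent direct one) the proposal does not prove the statement.
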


\noindent{\bfseries Proof.} It is sufficient to establish the invariance under the elementary
permutations, i.e. we need to check the equality
\begin{equation} \label{A_LgammaLgamma}
 \begin{split}
  &\int\limits_{\mathbb R^{N-n+1}}dy\,\Lambda_{\gamma_{n-1}}(x_1,\ldots,x_{N-n+2};y_1,\ldots,y_{N-n+1})
                            \Lambda_{\gamma_{n}}(y_1,\ldots,y_{N-n+1};z_1,\ldots,z_{N-n})= \\
  &=\int\limits_{\mathbb R^{N-n+1}}dy\,\Lambda_{\gamma_{n}}(x_1,\ldots,x_{N-n+2};y_1,\ldots,y_{N-n+1})
                            \Lambda_{\gamma_{n-1}}(y_1,\ldots,y_{N-n+1};z_1,\ldots,z_{N-n}),
 \end{split}
\end{equation}
for $n=2,\ldots,N$. \\

\begin{figure}[h]
\hspace{\fill} \setlength{\unitlength}{0.00087489in}
\begingroup\makeatletter\ifx\SetFigFont\undefined%
\gdef\SetFigFont#1#2#3#4#5{%
  \reset@font\fontsize{#1}{#2pt}%
  \fontfamily{#3}\fontseries{#4}\fontshape{#5}%
  \selectfont}%
\fi\endgroup%
{\renewcommand{\dashlinestretch}{30}
\begin{picture}(781,1336)(0,-10)
\path(60,343)(330,883)
\path(303.167,762.252)(330.000,883.000)(249.502,789.085)
\path(60,343)(510,1243)
\put(150,298){\makebox(0,0)[lb]{\smash{{\SetFigFont{12}{14.4}{\rmdefault}{\mddefault}{\updefault}$x_k$}}}}
\put(600,1198){\makebox(0,0)[lb]{\smash{{\SetFigFont{12}{14.4}{\rmdefault}{\mddefault}{\updefault}$y_k$}}}}
\put(5,-23){\makebox(0,0)[lb]{\smash{{\SetFigFont{14}{14.4}{\rmdefault}{\mddefault}{\updefault}fig.
1a}}}}
\end{picture}
} \hspace{\fill} \setlength{\unitlength}{0.00087489in}
\begingroup\makeatletter\ifx\SetFigFont\undefined%
\gdef\SetFigFont#1#2#3#4#5{%
  \reset@font\fontsize{#1}{#2pt}%
  \fontfamily{#3}\fontseries{#4}\fontshape{#5}%
  \selectfont}%
\fi\endgroup%
{\renewcommand{\dashlinestretch}{30}
\begin{picture}(736,1318)(0,-10)
\path(465,343)(15,1243) \path(465,343)(15,1243)
\path(465,343)(195,883) \path(465,343)(195,883)
\whiten\path(275.498,789.085)(195.000,883.000)(221.833,762.252)(264.765,743.469)(275.498,789.085)
\put(105,1198){\makebox(0,0)[lb]{\smash{{\SetFigFont{12}{14.4}{\rmdefault}{\mddefault}{\updefault}$y_k$}}}}
\put(555,298){\makebox(0,0)[lb]{\smash{{\SetFigFont{12}{14.4}{\rmdefault}{\mddefault}{\updefault}$x_{k+1}$}}}}
\put(15,793){\makebox(0,0)[lb]{\smash{{\SetFigFont{10}{14.4}{\rmdefault}{\mddefault}{\updefault}$u$}}}}
\put(50,-23){\makebox(0,0)[lb]{\smash{{\SetFigFont{14}{14.4}{\rmdefault}{\mddefault}{\updefault}fig.
1b}}}}
\end{picture}
} \hspace{\fill} \setlength{\unitlength}{0.00087489in}
\begingroup\makeatletter\ifx\SetFigFont\undefined%
\gdef\SetFigFont#1#2#3#4#5{%
  \reset@font\fontsize{#1}{#2pt}%
  \fontfamily{#3}\fontseries{#4}\fontshape{#5}%
  \selectfont}%
\fi\endgroup%
{\renewcommand{\dashlinestretch}{30}
\begin{picture}(736,913)(0,-10)
\put(555,298){\makebox(0,0)[lb]{\smash{{\SetFigFont{12}{14.4}{\rmdefault}{\mddefault}{\updefault}$x_k$}}}}
\path(465,343)(195,883) \path(465,343)(195,883)
\whiten\path(275.498,789.085)(195.000,883.000)(221.833,762.252)(275.498,789.085)
\put(15,793){\makebox(0,0)[lb]{\smash{{\SetFigFont{10}{14.4}{\rmdefault}{\mddefault}{\updefault}$u$}}}}
\put(160,-23){\makebox(0,0)[lb]{\smash{{\SetFigFont{14}{14.4}{\rmdefault}{\mddefault}{\updefault}fig.
1c}}}}
\end{picture}
} \hspace{\fill} \setlength{\unitlength}{0.00087489in}
\begingroup\makeatletter\ifx\SetFigFont\undefined%
\gdef\SetFigFont#1#2#3#4#5{%
  \reset@font\fontsize{#1}{#2pt}%
  \fontfamily{#3}\fontseries{#4}\fontshape{#5}%
  \selectfont}%
\fi\endgroup%
{\renewcommand{\dashlinestretch}{30}
\begin{picture}(522,913)(0,-10)
\path(105,838)(330,388) \path(105,838)(330,388)
\path(330,388)(285,478) \path(330,388)(285,478)
\whiten\path(365.498,384.085)(285.000,478.000)(311.833,357.252)(365.498,384.085)
\put(195,793){\makebox(0,0)[lb]{\smash{{\SetFigFont{12}{14.4}{\rmdefault}{\mddefault}{\updefault}$x_k$}}}}
\put(105,388){\makebox(0,0)[lb]{\smash{{\SetFigFont{10}{14.4}{\rmdefault}{\mddefault}{\updefault}$u$}}}}
\put(05,-23){\makebox(0,0)[lb]{\smash{{\SetFigFont{14}{14.4}{\rmdefault}{\mddefault}{\updefault}fig.
1d}}}}
\end{picture}
} \hspace{\fill} \setlength{\unitlength}{0.00087489in}
\begingroup\makeatletter\ifx\SetFigFont\undefined%
\gdef\SetFigFont#1#2#3#4#5{%
  \reset@font\fontsize{#1}{#2pt}%
  \fontfamily{#3}\fontseries{#4}\fontshape{#5}%
  \selectfont}%
\fi\endgroup%
{\renewcommand{\dashlinestretch}{30}
\begin{picture}(589,1093)(0,-10)
\path(195,343)(195,1018) \path(195,343)(195,793)
\whiten\path(225.000,673.000)(195.000,793.000)(165.000,673.000)(195.000,709.000)(225.000,673.000)
\put(285,298){\makebox(0,0)[lb]{\smash{{\SetFigFont{12}{14.4}{\rmdefault}{\mddefault}{\updefault}$x_{k+1}$}}}}
\put(285,973){\makebox(0,0)[lb]{\smash{{\SetFigFont{12}{14.4}{\rmdefault}{\mddefault}{\updefault}$z_k$}}}}
\put(05,-23){\makebox(0,0)[lb]{\smash{{\SetFigFont{14}{14.4}{\rmdefault}{\mddefault}{\updefault}fig.
1e}}}}
\put(15,658){\makebox(0,0)[lb]{\smash{{\SetFigFont{10}{12.0}{\rmdefault}{\mddefault}{\updefault}$u$}}}}
\end{picture}
} \hspace{\fill} \\
\end{figure}

To do it we shall use a diagram technique introduced
in~\cite{Derkachov}. Let us denote the function
$I(x_k,y_k)\hm=\exp\Bigl\{-\dfrac{1}{\hbar}e^{x_k-y_k}\Bigr\}$ by
a line pictured in the fig.~1a, the function
$J_u(x_{k+1},y_k)\hm=\exp\Bigl\{\dfrac{i}{\hbar}u(x_{k+1}-y_k)\hm-\dfrac{1}{\hbar}e^{y_k-x_{k+1}}\Bigr\}$
by a fig.~1b, the function
$Z_u(x_k)\hm=\exp\Bigl\{\dfrac{i}{\hbar}u x_k\Bigr\}$ by a
fig.~1c, the function
$Z_u^{-1}(x_k)\hm=\exp\Bigl\{-\dfrac{i}{\hbar}u x_k\Bigr\}$ by a
fig.~1d and the function
$Y_u(x_{k+1},z_k)\hm=\Bigl(1+e^{x_{k+1}-z_k}\Bigr)^{\tfrac{i}{\hbar}u}$
by a fig.~1e. \\

\begin{figure}[h]
\hspace{\fill} \setlength{\unitlength}{0.00087489in}
\begingroup\makeatletter\ifx\SetFigFont\undefined%
\gdef\SetFigFont#1#2#3#4#5{%
  \reset@font\fontsize{#1}{#2pt}%
  \fontfamily{#3}\fontseries{#4}\fontshape{#5}%
  \selectfont}%
\fi\endgroup%
{\renewcommand{\dashlinestretch}{30}
\begin{picture}(4291,4293)(0,-10)
\path(465,105)(735,645)
\path(708.167,524.252)(735.000,645.000)(654.502,551.085)
\path(465,105)(915,1005) \path(1365,105)(915,1005)
\path(1365,105)(915,1005) \path(1365,105)(1095,645)
\path(1365,105)(1095,645)
\whiten\path(1175.498,551.085)(1095.000,645.000)(1121.833,524.252)(1164.765,505.469)(1175.498,551.085)
\path(2265,105)(2535,645)
\path(2508.167,524.252)(2535.000,645.000)(2454.502,551.085)
\path(2265,105)(2715,1005) \path(3165,105)(2715,1005)
\path(3165,105)(2715,1005) \path(3165,105)(2895,645)
\path(3165,105)(2895,645)
\whiten\path(2975.498,551.085)(2895.000,645.000)(2921.833,524.252)(2964.765,505.469)(2975.498,551.085)
\path(2715,1005)(2265,1905) \path(2715,1005)(2265,1905)
\path(2715,1005)(2445,1545) \path(2715,1005)(2445,1545)
\whiten\path(2525.498,1451.085)(2445.000,1545.000)(2471.833,1424.252)(2514.765,1405.469)(2525.498,1451.085)
\put(915,555){\makebox(0,0)[lb]{\smash{{\SetFigFont{10}{14.4}{\rmdefault}{\mddefault}{\updefault}$\gamma_1$}}}}
\put(2715,555){\makebox(0,0)[lb]{\smash{{\SetFigFont{10}{14.4}{\rmdefault}{\mddefault}{\updefault}$\gamma_1$}}}}
\put(3615,555){\makebox(0,0)[lb]{\smash{{\SetFigFont{10}{14.4}{\rmdefault}{\mddefault}{\updefault}$\gamma_1$}}}}
\put(3615,555){\makebox(0,0)[lb]{\smash{{\SetFigFont{10}{14.4}{\rmdefault}{\mddefault}{\updefault}$\gamma_1$}}}}
\put(3165,1455){\makebox(0,0)[lb]{\smash{{\SetFigFont{10}{14.4}{\rmdefault}{\mddefault}{\updefault}$\gamma_2$}}}}
\put(1815,2355){\makebox(0,0)[lb]{\smash{{\SetFigFont{10}{14.4}{\rmdefault}{\mddefault}{\updefault}$\gamma_3$}}}}
\put(2715,2355){\makebox(0,0)[lb]{\smash{{\SetFigFont{10}{14.4}{\rmdefault}{\mddefault}{\updefault}$\gamma_3$}}}}
\path(915,1005)(1185,1545)
\path(1158.167,1424.252)(1185.000,1545.000)(1104.502,1451.085)
\path(915,1005)(1365,1905) \path(1365,1905)(1635,2445)
\path(1608.167,2324.252)(1635.000,2445.000)(1554.502,2351.085)
\path(1365,1905)(1815,2805) \path(2265,1905)(1815,2805)
\path(2265,1905)(1815,2805) \path(2265,1905)(1995,2445)
\path(2265,1905)(1995,2445)
\whiten\path(2075.498,2351.085)(1995.000,2445.000)(2021.833,2324.252)(2064.765,2305.469)(2075.498,2351.085)
\path(1815,2805)(2085,3345)
\path(2058.167,3224.252)(2085.000,3345.000)(2004.502,3251.085)
\path(1815,2805)(2265,3705) \path(2715,2805)(2265,3705)
\path(2715,2805)(2265,3705) \path(2715,2805)(2445,3345)
\path(2715,2805)(2445,3345)
\whiten\path(2525.498,3251.085)(2445.000,3345.000)(2471.833,3224.252)(2514.765,3205.469)(2525.498,3251.085)
\path(2265,1905)(2535,2445)
\path(2508.167,2324.252)(2535.000,2445.000)(2454.502,2351.085)
\path(2265,1905)(2715,2805) \path(3165,1905)(2715,2805)
\path(3165,1905)(2715,2805) \path(3165,1905)(2895,2445)
\path(3165,1905)(2895,2445)
\whiten\path(2975.498,2351.085)(2895.000,2445.000)(2921.833,2324.252)(2964.765,2305.469)(2975.498,2351.085)
\path(2715,1005)(2985,1545)
\path(2958.167,1424.252)(2985.000,1545.000)(2904.502,1451.085)
\path(2715,1005)(3165,1905) \path(3615,1005)(3165,1905)
\path(3615,1005)(3165,1905) \path(3615,1005)(3345,1545)
\path(3615,1005)(3345,1545)
\whiten\path(3425.498,1451.085)(3345.000,1545.000)(3371.833,1424.252)(3414.765,1405.469)(3425.498,1451.085)
\path(3165,105)(3435,645)
\path(3408.167,524.252)(3435.000,645.000)(3354.502,551.085)
\path(3165,105)(3615,1005) \path(4065,105)(3615,1005)
\path(4065,105)(3615,1005) \path(4065,105)(3795,645)
\path(4065,105)(3795,645)
\whiten\path(3875.498,551.085)(3795.000,645.000)(3821.833,524.252)(3864.765,505.469)(3875.498,551.085)
\put(15,555){\makebox(0,0)[lb]{\smash{{\SetFigFont{10}{14.4}{\rmdefault}{\mddefault}{\updefault}$\gamma_1$}}}}
\put(555,60){\makebox(0,0)[lb]{\smash{{\SetFigFont{12}{14.4}{\rmdefault}{\mddefault}{\updefault}$x_1$}}}}
\path(465,105)(195,645) \path(465,105)(195,645)
\whiten\path(275.498,551.085)(195.000,645.000)(221.833,524.252)(275.498,551.085)
\path(1365,105)(1635,645)
\path(1608.167,524.252)(1635.000,645.000)(1554.502,551.085)
\path(1365,105)(1815,1005) \path(2265,105)(1815,1005)
\path(2265,105)(1815,1005) \path(2265,105)(1995,645)
\path(2265,105)(1995,645)
\whiten\path(2075.498,551.085)(1995.000,645.000)(2021.833,524.252)(2064.765,505.469)(2075.498,551.085)
\put(1815,1005){\blacken\ellipse{46}{46}}
\put(1815,1005){\ellipse{46}{46}} \path(1815,1005)(2085,1545)
\path(2058.167,1424.252)(2085.000,1545.000)(2004.502,1451.085)
\path(1815,1005)(2265,1905)
\put(1815,555){\makebox(0,0)[lb]{\smash{{\SetFigFont{10}{14.4}{\rmdefault}{\mddefault}{\updefault}$\gamma_1$}}}}
\put(1365,1455){\makebox(0,0)[lb]{\smash{{\SetFigFont{10}{14.4}{\rmdefault}{\mddefault}{\updefault}$\gamma_2$}}}}
\put(2265,1455){\makebox(0,0)[lb]{\smash{{\SetFigFont{10}{14.4}{\rmdefault}{\mddefault}{\updefault}$\gamma_2$}}}}
\path(1815,1005)(1365,1905) \path(1815,1005)(1365,1905)
\path(1815,1005)(1545,1545) \path(1815,1005)(1545,1545)
\whiten\path(1625.498,1451.085)(1545.000,1545.000)(1571.833,1424.252)(1614.765,1405.469)(1625.498,1451.085)
\put(915,1005){\blacken\ellipse{46}{46}}
\put(915,1005){\ellipse{46}{46}}
\put(2715,1005){\blacken\ellipse{46}{46}}
\put(2715,1005){\ellipse{46}{46}}
\put(1815,2805){\blacken\ellipse{46}{46}}
\put(1815,2805){\ellipse{46}{46}}
\put(2265,3705){\blacken\ellipse{46}{46}}
\put(2265,3705){\ellipse{46}{46}}
\put(2715,2805){\blacken\ellipse{46}{46}}
\put(2715,2805){\ellipse{46}{46}}
\put(3165,1905){\blacken\ellipse{46}{46}}
\put(3165,1905){\ellipse{46}{46}}
\put(3615,1005){\blacken\ellipse{46}{46}}
\put(3615,1005){\ellipse{46}{46}}
\put(2265,1905){\blacken\ellipse{46}{46}}
\put(2265,1905){\ellipse{46}{46}}
\put(1365,1905){\blacken\ellipse{46}{46}}
\put(1365,1905){\ellipse{46}{46}} \path(915,1005)(645,1545)
\path(915,1005)(645,1545)
\whiten\path(725.498,1451.085)(645.000,1545.000)(671.833,1424.252)(725.498,1451.085)
\path(1365,1905)(1095,2445) \path(1365,1905)(1095,2445)
\whiten\path(1175.498,2351.085)(1095.000,2445.000)(1121.833,2324.252)(1175.498,2351.085)
\path(1815,2805)(1545,3345) \path(1815,2805)(1545,3345)
\whiten\path(1625.498,3251.085)(1545.000,3345.000)(1571.833,3224.252)(1625.498,3251.085)
\path(2265,3705)(1995,4245) \path(2265,3705)(1995,4245)
\whiten\path(2075.498,4151.085)(1995.000,4245.000)(2021.833,4124.252)(2075.498,4151.085)
\put(4110,60){\makebox(0,0)[lb]{\smash{{\SetFigFont{12}{14.4}{\rmdefault}{\mddefault}{\updefault}$x_5$}}}}
\put(1905,2760){\makebox(0,0)[lb]{\smash{{\SetFigFont{12}{14.4}{\rmdefault}{\mddefault}{\updefault}$f_1$}}}}
\put(2805,2760){\makebox(0,0)[lb]{\smash{{\SetFigFont{12}{14.4}{\rmdefault}{\mddefault}{\updefault}$f_2$}}}}
\put(3255,60){\makebox(0,0)[lb]{\smash{{\SetFigFont{12}{14.4}{\rmdefault}{\mddefault}{\updefault}$x_4$}}}}
\put(1005,960){\makebox(0,0)[lb]{\smash{{\SetFigFont{12}{14.4}{\rmdefault}{\mddefault}{\updefault}$y_1$}}}}
\put(2805,960){\makebox(0,0)[lb]{\smash{{\SetFigFont{12}{14.4}{\rmdefault}{\mddefault}{\updefault}$y_3$}}}}
\put(3705,960){\makebox(0,0)[lb]{\smash{{\SetFigFont{12}{14.4}{\rmdefault}{\mddefault}{\updefault}$y_4$}}}}
\put(3255,1860){\makebox(0,0)[lb]{\smash{{\SetFigFont{12}{14.4}{\rmdefault}{\mddefault}{\updefault}$z_3$}}}}
\put(2355,3660){\makebox(0,0)[lb]{\smash{{\SetFigFont{12}{14.4}{\rmdefault}{\mddefault}{\updefault}$g_1$}}}}
\put(2265,3255){\makebox(0,0)[lb]{\smash{{\SetFigFont{10}{14.4}{\rmdefault}{\mddefault}{\updefault}$\gamma_4$}}}}
\put(465,1455){\makebox(0,0)[lb]{\smash{{\SetFigFont{10}{14.4}{\rmdefault}{\mddefault}{\updefault}$\gamma_2$}}}}
\put(915,2355){\makebox(0,0)[lb]{\smash{{\SetFigFont{10}{14.4}{\rmdefault}{\mddefault}{\updefault}$\gamma_3$}}}}
\put(1365,3255){\makebox(0,0)[lb]{\smash{{\SetFigFont{10}{14.4}{\rmdefault}{\mddefault}{\updefault}$\gamma_4$}}}}
\put(1815,4155){\makebox(0,0)[lb]{\smash{{\SetFigFont{10}{14.4}{\rmdefault}{\mddefault}{\updefault}$\gamma_5$}}}}
\put(2355,1860){\makebox(0,0)[lb]{\smash{{\SetFigFont{12}{14.4}{\rmdefault}{\mddefault}{\updefault}$z_2$}}}}
\put(1455,60){\makebox(0,0)[lb]{\smash{{\SetFigFont{12}{14.4}{\rmdefault}{\mddefault}{\updefault}$x_2$}}}}
\put(2355,60){\makebox(0,0)[lb]{\smash{{\SetFigFont{12}{14.4}{\rmdefault}{\mddefault}{\updefault}$x_3$}}}}
\put(1905,960){\makebox(0,0)[lb]{\smash{{\SetFigFont{12}{14.4}{\rmdefault}{\mddefault}{\updefault}$y_2$}}}}
\put(1455,1860){\makebox(0,0)[lb]{\smash{{\SetFigFont{12}{14.4}{\rmdefault}{\mddefault}{\updefault}$z_1$}}}}
\end{picture}
}
 \hspace{\fill} \\
\vspace{-2pt} \hspace{77mm} {\large fig. 2}
\end{figure}

The function~\eqref{A_psi} can be represented in these graphical
notations. For the case $N=5$ it is pictured in fig.~2, where bold
bullets $\bullet$ signify that we integrate over corresponding
variables.

\begin{lem}
 The equalities represented in the figures~3a, 3b, 3c are valid.
\end{lem}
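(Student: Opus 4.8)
The plan is to treat each of the three figure-identities as what it literally is: an equality of integral kernels that differ only by an integration over a single intermediate variable, so that every one of the figures~3a, 3b, 3c reduces to one elementary one-dimensional integral over $\mathbb R$ among the building blocks $I$, $J_u$, $Z_u$, $Z_u^{-1}$ and $Y_u$ fixed above. The uniform device for all three will be the substitution $t=e^{(\cdot)}$, which converts an exponential of exponentials into a classical Euler integral. First I would pin down the orientation and labelling conventions of figs.~1a--1e so that each diagram appearing in figs.~3a, 3b, 3c is unambiguously a product of these five functions; I would then read off, for each identity separately, which variable is integrated and which external legs survive. A preliminary remark simplifying everything is that $J_u$ is just $I$ dressed by phases, $J_u(x,y)=Z_u(x)\,Z_u^{-1}(y)\,\exp\{-\tfrac1\hbar e^{y-x}\}$, so the only genuine interaction is the factor $I$, and all spectral dependence is carried by the $Z_u^{\pm}$ phases.

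Next I would establish the prototype \emph{terminal} (Gamma) integral, which underlies the identities in which a line carrying a spectral parameter is absorbed into a pure phase. Concretely I would compute
\[
 \int_{\mathbb R}dy\,e^{\frac{i}{\hbar}u y}\exp\Big\{-\tfrac1\hbar e^{x-y}\Big\}
 =e^{\frac{i}{\hbar}u x}\,\hbar^{-iu/\hbar}\,\Gamma\!\left(-\tfrac{iu}{\hbar}\right),
\]
obtained by setting $t=\tfrac1\hbar e^{x-y}$ and recognising $\int_0^\infty t^{-iu/\hbar-1}e^{-t}\,dt=\Gamma(-iu/\hbar)$. This accounts for the collapse of a $J_u$-leg into a $Z_u$-factor together with a numerical $\Gamma$-coefficient; convergence is automatic once $\mathrm{Im}\,u$ has the correct sign and the identity is then extended by analytic continuation in the spectral parameter.

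The genuinely non-trivial step, and the one I expect to be the main obstacle, is the move that produces the power function $Y_u(x,z)=(1+e^{x-z})^{iu/\hbar}$, since this is exactly the move that implements the interchange of spectral parameters underlying \eqref{A_LgammaLgamma}. It arises by integrating over an intermediate variable $y$ that enters two interaction factors with the \emph{same} orientation, i.e. from an integrand of the form $e^{\frac{i}{\hbar}u y}\exp\{-\tfrac1\hbar(e^{x-y}+e^{z-y})\}$. The key algebraic observation is that $e^{x-y}+e^{z-y}=e^{-y}(e^x+e^z)$ factorises, so after the substitution $t=\tfrac1\hbar e^{-y}(e^x+e^z)$ the $y$-integral again collapses to a single Euler integral $\Gamma(-iu/\hbar)$, while the prefactor re-expresses as
\[
 \big(e^x+e^z\big)^{iu/\hbar}=e^{\frac{i}{\hbar}u z}\,\big(1+e^{x-z}\big)^{iu/\hbar}=Z_u(z)\,Y_u(x,z).
\]
Thus $Y_u$ emerges precisely because the two like-oriented exponentials share the common factor $e^{-y}$. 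The delicate points, where the real analytic work sits, are the convergence of the $t$-integral (valid only for $\mathrm{Im}\,u$ in the appropriate half-plane, then continued) and the consistent choice of branch of the power $(1+e^{x-z})^{iu/\hbar}$.

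Finally I would assemble the three identities: once the terminal Gamma integral and the $Y_u$-producing integral are in hand, each of figs.~3a, 3b, 3c follows by specialising the external labels and matching the $Z_u$, $Z_u^{-1}$ phase bookkeeping on the two sides, the numerical $\Gamma$-prefactors on left and right being forced to agree by the elementary relations of $\Gamma$ at the relevant arguments. I would then record that these three local moves respect the orientation conventions, so that they may be applied inside the large diagram of fig.~2; this is precisely what the ensuing proof of the exchange relation \eqref{A_LgammaLgamma}, and hence of Theorem~\ref{A_Th_psi}, requires. Beyond the convergence-and-branch issue of the $Y_u$ move, the remaining verifications are routine substitutions and phase counting.
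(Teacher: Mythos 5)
There is a genuine gap, and it sits exactly at the step you yourself single out as the crux. Your plan reduces every figure to Euler Gamma integrals, with the $Y_u$-factor produced by the factorization $e^{x-y}+e^{z-y}=e^{-y}(e^{x}+e^{z})$, valid when the two interaction exponentials attached to the integrated vertex have the \emph{same} orientation in $y$. But in each of figures~3a, 3b, 3c the integrated vertex carries interaction lines of \emph{both} orientations: reading off the definitions of $I$ and $J_u$, the $y_k$-integral of fig.~3a has exponent
\begin{equation*}
 \frac{i}{\hbar}(\gamma_n-\gamma_{n-1})\,y_k
 \;-\;\frac1{\hbar}\bigl(e^{-x_{k+1}}+e^{-z_k}\bigr)e^{y_k}
 \;-\;\frac1{\hbar}\bigl(e^{x_k}+e^{z_{k-1}}\bigr)e^{-y_k},
\end{equation*}
and the boundary figures~3b, 3c have the same two-sided structure (with one of the two brackets containing a single term). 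An integral $\int_{\mathbb R}dy\,e^{\nu y}\exp\{-ae^{y}-be^{-y}\}$ with $a,b\neq0$ is \emph{not} a Gamma integral; it equals $2\,(b/a)^{\nu/2}K_\nu\bigl(2\sqrt{ab}\bigr)$, a Macdonald function. This is what the paper computes. The identity of each figure then follows from the evenness $K_\nu=K_{-\nu}$: interchanging $\gamma_{n-1}\leftrightarrow\gamma_n$ flips $\nu\to-\nu$, the Bessel factor is invariant, and the ratio of the power prefactors $(b/a)^{\pm\nu/2}$ is exactly the ratio $Y_{\gamma_{n-1}-\gamma_n}(x_{k+1},z_k)\,Y_{\gamma_{n-1}-\gamma_n}^{-1}(x_k,z_{k-1})$ of vertical lines. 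Your proposal never invokes this symmetry, and it cannot be replaced by "elementary relations of $\Gamma$": the analogous statement $\Gamma(\nu)=\Gamma(-\nu)$ is false, so if the integrals really were Gamma-type the two sides of the figures would not match.

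The Gamma-integral mechanism you describe is correct mathematics, but it proves a different lemma of the paper: it is precisely the content of figures~7a, 7b in the computation of the Sklyanin measure, where the integrated point is an \emph{external} coordinate $x$ attached only to like-oriented exponentials $e^{x-y}+e^{x-y'}$, together with a pure phase $Z_{\gamma}(x)$. For the present lemma you must replace your "prototype terminal integral" and "$Y_u$-producing integral" by the single Macdonald evaluation above (convergence for real $\gamma$'s is automatic since both $a,b>0$, so no analytic continuation is even needed), and then conclude by the $\nu\to-\nu$ symmetry of $K_\nu$; the phase bookkeeping in the prefactors $e^{\frac{i}{\hbar}(\gamma_{n-1}x_{k+1}-\gamma_n z_{k-1})}$ is the routine part, as you say.
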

\vspace{7pt}

\begin{figure}
\hspace{\fill} \setlength{\unitlength}{0.00087489in}
\begingroup\makeatletter\ifx\SetFigFont\undefined%
\gdef\SetFigFont#1#2#3#4#5{%
  \reset@font\fontsize{#1}{#2pt}%
  \fontfamily{#3}\fontseries{#4}\fontshape{#5}%
  \selectfont}%
\fi\endgroup%
{\renewcommand{\dashlinestretch}{30}
\begin{picture}(4186,1980)(0,-10)

\put(555,1005){\makebox(0,0)[rb]{\smash{{\SetFigFont{10}{12.0}{\rmdefault}{\mddefault}{\updefault}\makebox[0.8\width][s]{$(\gamma_{n-1}-\gamma_n)$}}}}}
\path(690,105)(960,645)
\path(933.167,524.252)(960.000,645.000)(879.502,551.085)
\path(690,105)(1140,1005) \path(1590,105)(1140,1005)
\path(1590,105)(1140,1005) \path(1590,105)(1320,645)
\path(1590,105)(1320,645)
\whiten\path(1400.498,551.085)(1320.000,645.000)(1346.833,524.252)(1389.765,505.469)(1400.498,551.085)
\put(1140,1005){\blacken\ellipse{46}{46}}
\put(1140,1005){\ellipse{46}{46}}
\put(780,1860){\makebox(0,0)[lb]{\smash{{\SetFigFont{12}{14.4}{\rmdefault}{\mddefault}{\updefault}$z_{k-1}$}}}}
\put(1680,1860){\makebox(0,0)[lb]{\smash{{\SetFigFont{12}{14.4}{\rmdefault}{\mddefault}{\updefault}$z_k$}}}}
\put(1230,960){\makebox(0,0)[lb]{\smash{{\SetFigFont{12}{14.4}{\rmdefault}{\mddefault}{\updefault}$y_k$}}}}
\put(780,60){\makebox(0,0)[lb]{\smash{{\SetFigFont{12}{14.4}{\rmdefault}{\mddefault}{\updefault}$x_k$}}}}
\put(1680,60){\makebox(0,0)[lb]{\smash{{\SetFigFont{12}{14.4}{\rmdefault}{\mddefault}{\updefault}$x_{k+1}$}}}}
\path(1140,1005)(1410,1545)
\path(1383.167,1424.252)(1410.000,1545.000)(1329.502,1451.085)
\path(1140,1005)(1590,1905) \path(1140,1005)(690,1905)
\path(1140,1005)(690,1905) \path(1140,1005)(870,1545)
\path(1140,1005)(870,1545)
\whiten\path(950.498,1451.085)(870.000,1545.000)(896.833,1424.252)(939.765,1405.469)(950.498,1451.085)
\path(3030,1005)(3300,1545)
\path(3273.167,1424.252)(3300.000,1545.000)(3219.502,1451.085)
\path(3030,1005)(3480,1905) \path(2580,105)(2850,645)
\path(2823.167,524.252)(2850.000,645.000)(2769.502,551.085)
\path(2580,105)(3030,1005) \path(3480,105)(3030,1005)
\path(3480,105)(3030,1005) \path(3480,105)(3210,645)
\path(3480,105)(3210,645)
\whiten\path(3290.498,551.085)(3210.000,645.000)(3236.833,524.252)(3279.765,505.469)(3290.498,551.085)
\put(3030,1005){\blacken\ellipse{46}{46}}
\put(3030,1005){\ellipse{46}{46}}
\put(2670,1860){\makebox(0,0)[lb]{\smash{{\SetFigFont{12}{14.4}{\rmdefault}{\mddefault}{\updefault}$z_{k-1}$}}}}
\put(3570,1860){\makebox(0,0)[lb]{\smash{{\SetFigFont{12}{14.4}{\rmdefault}{\mddefault}{\updefault}$z_k$}}}}
\put(3120,960){\makebox(0,0)[lb]{\smash{{\SetFigFont{12}{14.4}{\rmdefault}{\mddefault}{\updefault}$y_k$}}}}
\put(2670,60){\makebox(0,0)[lb]{\smash{{\SetFigFont{12}{14.4}{\rmdefault}{\mddefault}{\updefault}$x_k$}}}}
\put(3570,60){\makebox(0,0)[lb]{\smash{{\SetFigFont{12}{14.4}{\rmdefault}{\mddefault}{\updefault}$x_{k+1}$}}}}
\path(3480,105)(3480,1905) \path(3480,105)(3480,1905)
\path(3480,105)(3480,1140) \path(3480,105)(3480,1140)
\whiten\path(3510.000,1020.000)(3480.000,1140.000)(3450.000,1020.000)(3480.000,1056.000)(3510.000,1020.000)
\put(3570,1005){\makebox(0,0)[lb]{\smash{{\SetFigFont{10}{12.0}{\rmdefault}{\mddefault}{\updefault}\makebox[0.8\width][s]{$(\gamma_{n-1}-\gamma_n)$}}}}}
\put(2455,1410){\makebox(0,0)[lb]{\smash{{\SetFigFont{10}{12.0}{\rmdefault}{\mddefault}{\updefault}\makebox[0.8\width][s]{$\gamma_{n-1}$}}}}}
\path(3030,1005)(2580,1905) \path(3030,1005)(2580,1905)
\path(3030,1005)(2760,1545) \path(3030,1005)(2760,1545)
\whiten\path(2840.498,1451.085)(2760.000,1545.000)(2786.833,1424.252)(2829.765,1405.469)(2840.498,1451.085)
\put(735,1410){\makebox(0,0)[lb]{\smash{{\SetFigFont{10}{12.0}{\rmdefault}{\mddefault}{\updefault}$\gamma_n$}}}}
\put(1015,510){\makebox(0,0)[lb]{\smash{{\SetFigFont{10}{12.0}{\rmdefault}{\mddefault}{\updefault}\makebox[0.8\width][s]{$\gamma_{n-1}$}}}}}
\put(3030,510){\makebox(0,0)[lb]{\smash{{\SetFigFont{10}{12.0}{\rmdefault}{\mddefault}{\updefault}$\gamma_n$}}}}
\path(690,105)(690,1905) \path(690,105)(690,1905)
\path(690,105)(690,1140) \path(690,105)(690,1140)
\whiten\path(720.000,1020.000)(690.000,1140.000)(660.000,1020.000)(690.000,1056.000)(720.000,1020.000)
\put(2130,915){\makebox(0,0)[lb]{\smash{{\SetFigFont{14}{16.8}{\rmdefault}{\mddefault}{\updefault}=}}}}
\put(1970,-250){\makebox(0,0)[lb]{\smash{{\SetFigFont{14}{16.8}{\rmdefault}{\mddefault}{\updefault}fig.
3a}}}}
\end{picture}
} \hspace{\fill} \\ [17pt]
\end{figure}

\begin{figure}
\hspace{31mm} \setlength{\unitlength}{0.00087489in}
\begingroup\makeatletter\ifx\SetFigFont\undefined%
\gdef\SetFigFont#1#2#3#4#5{%
  \reset@font\fontsize{#1}{#2pt}%
  \fontfamily{#3}\fontseries{#4}\fontshape{#5}%
  \selectfont}%
\fi\endgroup%
{\renewcommand{\dashlinestretch}{30}
\begin{picture}(3905,1980)(0,-10)
\put(555,1005){\makebox(0,0)[rb]{\smash{{\SetFigFont{10}{12.0}{\rmdefault}{\mddefault}{\updefault}\makebox[0.8\width][s]{$(\gamma_{n-1}-\gamma_n)$}}}}}
\path(1140,1005)(690,1905) \path(1140,1005)(690,1905)
\path(1140,1005)(870,1545) \path(1140,1005)(870,1545)
\whiten\path(950.498,1451.085)(870.000,1545.000)(896.833,1424.252)(939.765,1405.469)(950.498,1451.085)
\path(2580,105)(2850,645)
\path(2823.167,524.252)(2850.000,645.000)(2769.502,551.085)
\path(2580,105)(3030,1005) \path(3480,105)(3030,1005)
\path(3480,105)(3030,1005) \path(3480,105)(3210,645)
\path(3480,105)(3210,645)
\whiten\path(3290.498,551.085)(3210.000,645.000)(3236.833,524.252)(3279.765,505.469)(3290.498,551.085)
\put(3030,1005){\blacken\ellipse{46}{46}}
\put(3030,1005){\ellipse{46}{46}}
\put(2455,1410){\makebox(0,0)[lb]{\smash{{\SetFigFont{10}{12.0}{\rmdefault}{\mddefault}{\updefault}$\gamma_{n-1}$}}}}
\path(3030,1005)(2580,1905) \path(3030,1005)(2580,1905)
\path(3030,1005)(2760,1545) \path(3030,1005)(2760,1545)
\whiten\path(2840.498,1451.085)(2760.000,1545.000)(2786.833,1424.252)(2829.765,1405.469)(2840.498,1451.085)
\put(735,1410){\makebox(0,0)[lb]{\smash{{\SetFigFont{10}{12.0}{\rmdefault}{\mddefault}{\updefault}$\gamma_n$}}}}
\put(1015,510){\makebox(0,0)[lb]{\smash{{\SetFigFont{10}{12.0}{\rmdefault}{\mddefault}{\updefault}$\gamma_{n-1}$}}}}
\put(3030,510){\makebox(0,0)[lb]{\smash{{\SetFigFont{10}{12.0}{\rmdefault}{\mddefault}{\updefault}$\gamma_n$}}}}
\path(690,105)(960,645)
\path(933.167,524.252)(960.000,645.000)(879.502,551.085)
\path(690,105)(1140,1005) \path(1590,105)(1140,1005)
\path(1590,105)(1140,1005) \path(1590,105)(1320,645)
\path(1590,105)(1320,645)
\whiten\path(1400.498,551.085)(1320.000,645.000)(1346.833,524.252)(1389.765,505.469)(1400.498,551.085)
\put(1140,1005){\blacken\ellipse{46}{46}}
\put(1140,1005){\ellipse{46}{46}}
\put(780,1860){\makebox(0,0)[lb]{\smash{{\SetFigFont{12}{14.4}{\rmdefault}{\mddefault}{\updefault}$z_{N-n}$}}}}
\put(2670,1860){\makebox(0,0)[lb]{\smash{{\SetFigFont{12}{14.4}{\rmdefault}{\mddefault}{\updefault}$z_{N-n}$}}}}
\put(1230,960){\makebox(0,0)[lb]{\smash{{\SetFigFont{12}{14.4}{\rmdefault}{\mddefault}{\updefault}$y_{N-n+1}$}}}}
\put(3120,960){\makebox(0,0)[lb]{\smash{{\SetFigFont{12}{14.4}{\rmdefault}{\mddefault}{\updefault}$y_{N-n+1}$}}}}
\put(2670,60){\makebox(0,0)[lb]{\smash{{\SetFigFont{12}{14.4}{\rmdefault}{\mddefault}{\updefault}$x_{N-n+1}$}}}}
\put(3570,60){\makebox(0,0)[lb]{\smash{{\SetFigFont{12}{14.4}{\rmdefault}{\mddefault}{\updefault}$x_{N-n+2}$}}}}
\put(1680,60){\makebox(0,0)[lb]{\smash{{\SetFigFont{12}{14.4}{\rmdefault}{\mddefault}{\updefault}$x_{N-n+2}$}}}}
\put(780,60){\makebox(0,0)[lb]{\smash{{\SetFigFont{12}{14.4}{\rmdefault}{\mddefault}{\updefault}$x_{N-n+1}$}}}}
\path(690,105)(690,1905) \path(690,105)(690,1905)
\path(690,105)(690,1140) \path(690,105)(690,1140)
\whiten\path(720.000,1020.000)(690.000,1140.000)(660.000,1020.000)(690.000,1056.000)(720.000,1020.000)
\put(2030,915){\makebox(0,0)[lb]{\smash{{\SetFigFont{14}{16.8}{\rmdefault}{\mddefault}{\updefault}=}}}}
\put(1870,-250){\makebox(0,0)[lb]{\smash{{\SetFigFont{14}{14.4}{\rmdefault}{\mddefault}{\updefault}fig.
3b}}}}
\end{picture}
} \\ [17pt]
\end{figure}

\begin{figure}
\hspace{35mm} \setlength{\unitlength}{0.00087489in}
\begingroup\makeatletter\ifx\SetFigFont\undefined%
\gdef\SetFigFont#1#2#3#4#5{%
  \reset@font\fontsize{#1}{#2pt}%
  \fontfamily{#3}\fontseries{#4}\fontshape{#5}%
  \selectfont}%
\fi\endgroup%
{\renewcommand{\dashlinestretch}{30}
\begin{picture}(4411,1980)(0,-10)
\path(510,105)(780,645)
\path(753.167,524.252)(780.000,645.000)(699.502,551.085)
\path(510,105)(960,1005) \path(1410,105)(960,1005)
\path(1410,105)(960,1005) \path(1410,105)(1140,645)
\path(1410,105)(1140,645)
\whiten\path(1220.498,551.085)(1140.000,645.000)(1166.833,524.252)(1209.765,505.469)(1220.498,551.085)
\put(960,1005){\blacken\ellipse{46}{46}}
\put(960,1005){\ellipse{46}{46}} \path(960,1005)(1230,1545)
\path(1203.167,1424.252)(1230.000,1545.000)(1149.502,1451.085)
\path(960,1005)(1410,1905)
\put(555,1410){\makebox(0,0)[lb]{\smash{{\SetFigFont{10}{12.0}{\rmdefault}{\mddefault}{\updefault}$\gamma_n$}}}}
\put(835,510){\makebox(0,0)[lb]{\smash{{\SetFigFont{10}{12.0}{\rmdefault}{\mddefault}{\updefault}$\gamma_{n-1}$}}}}
\path(960,1005)(690,1545) \path(960,1005)(690,1545)
\whiten\path(770.498,1451.085)(690.000,1545.000)(716.833,1424.252)(770.498,1451.085)
\path(510,105)(240,645) \path(510,105)(240,645)
\whiten\path(320.498,551.085)(240.000,645.000)(266.833,524.252)(320.498,551.085)
\put(-65,510){\makebox(0,0)[lb]{\smash{{\SetFigFont{10}{12.0}{\rmdefault}{\mddefault}{\updefault}$\gamma_{n-1}$}}}}
\path(3255,1005)(3525,1545)
\path(3498.167,1424.252)(3525.000,1545.000)(3444.502,1451.085)
\path(3255,1005)(3705,1905) \path(2805,105)(3075,645)
\path(3048.167,524.252)(3075.000,645.000)(2994.502,551.085)
\path(2805,105)(3255,1005) \path(3705,105)(3255,1005)
\path(3705,105)(3255,1005) \path(3705,105)(3435,645)
\path(3705,105)(3435,645)
\whiten\path(3515.498,551.085)(3435.000,645.000)(3461.833,524.252)(3504.765,505.469)(3515.498,551.085)
\put(3255,1005){\blacken\ellipse{46}{46}}
\put(3255,1005){\ellipse{46}{46}} \path(3705,105)(3705,1905)
\path(3705,105)(3705,1905) \path(3705,105)(3705,1140)
\path(3705,105)(3705,1140)
\whiten\path(3735.000,1020.000)(3705.000,1140.000)(3675.000,1020.000)(3705.000,1056.000)(3735.000,1020.000)
\put(3795,1005){\makebox(0,0)[lb]{\smash{{\SetFigFont{10}{12.0}{\rmdefault}{\mddefault}{\updefault}\makebox[0.8\width][s]{$(\gamma_{n-1}-\gamma_n)$}}}}}
\put(2680,1410){\makebox(0,0)[lb]{\smash{{\SetFigFont{10}{12.0}{\rmdefault}{\mddefault}{\updefault}$\gamma_{n-1}$}}}}
\put(3255,510){\makebox(0,0)[lb]{\smash{{\SetFigFont{10}{12.0}{\rmdefault}{\mddefault}{\updefault}$\gamma_n$}}}}
\path(3255,1005)(2985,1545) \path(3255,1005)(2985,1545)
\whiten\path(3065.498,1451.085)(2985.000,1545.000)(3011.833,1424.252)(3065.498,1451.085)
\path(2805,105)(2535,645) \path(2805,105)(2535,645)
\whiten\path(2615.498,551.085)(2535.000,645.000)(2561.833,524.252)(2615.498,551.085)
\put(2355,510){\makebox(0,0)[lb]{\smash{{\SetFigFont{10}{12.0}{\rmdefault}{\mddefault}{\updefault}$\gamma_n$}}}}
\put(600,60){\makebox(0,0)[lb]{\smash{{\SetFigFont{12}{14.4}{\rmdefault}{\mddefault}{\updefault}$x_1$}}}}
\put(1500,60){\makebox(0,0)[lb]{\smash{{\SetFigFont{12}{14.4}{\rmdefault}{\mddefault}{\updefault}$x_2$}}}}
\put(1050,960){\makebox(0,0)[lb]{\smash{{\SetFigFont{12}{14.4}{\rmdefault}{\mddefault}{\updefault}$y_1$}}}}
\put(1500,1860){\makebox(0,0)[lb]{\smash{{\SetFigFont{12}{14.4}{\rmdefault}{\mddefault}{\updefault}$z_1$}}}}
\put(1905,915){\makebox(0,0)[lb]{\smash{{\SetFigFont{14}{16.8}{\rmdefault}{\mddefault}{\updefault}=}}}}
\put(2895,60){\makebox(0,0)[lb]{\smash{{\SetFigFont{12}{14.4}{\rmdefault}{\mddefault}{\updefault}$x_1$}}}}
\put(3795,60){\makebox(0,0)[lb]{\smash{{\SetFigFont{12}{14.4}{\rmdefault}{\mddefault}{\updefault}$x_2$}}}}
\put(3345,960){\makebox(0,0)[lb]{\smash{{\SetFigFont{12}{14.4}{\rmdefault}{\mddefault}{\updefault}$y_1$}}}}
\put(3795,1860){\makebox(0,0)[lb]{\smash{{\SetFigFont{12}{14.4}{\rmdefault}{\mddefault}{\updefault}$z_1$}}}}
\put(1745,-250){\makebox(0,0)[lb]{\smash{{\SetFigFont{14}{16.8}{\rmdefault}{\mddefault}{\updefault}fig.
3c}}}}
\end{picture}
} \\ [17pt]
\end{figure}

\noindent{\bfseries Proof.} Integration over $y_k$ in the left
hand side of fig.~3a yields
\begin{equation}
\begin{split} \label{A_rhsf3a}
 \int\limits_{-\infty}^{+\infty} dy_k I(x_k,y_k) J_{\gamma_{n-1}}(x_{k+1},y_k) I(y_k,z_k) & J_{\gamma_n}(y_k,z_{k-1})
 =e^{\frac{i}{\hbar}(\gamma_{n-1}x_{k+1}-\gamma_nz_{k-1})}\times \\
  \times\int\limits_{-\infty}^{+\infty} dy_k
  \exp\Bigl\{\frac{i}{\hbar}(\gamma_n-\gamma_{n-1})y_k-\frac1{\hbar}(e^{-x_{k+1}}&+e^{-z_k})e^{y_k}
             -\frac1{\hbar}(e^{x_k}+e^{z_{k-1}})e^{-y_k}\Bigr\}= \\
 =2e^{\frac{i}{\hbar}(\gamma_{n-1}x_{k+1}-\gamma_nz_{k-1})} & \left( \frac{e^{x_k}+e^{z_{k-1}}}{e^{-x_{k+1}}+e^{-z_k}}
 \right)^{\frac{i(\gamma_n-\gamma_{n-1})}{2\hbar}} \times \\
 \times K_{\frac{i}{\hbar}(\gamma_n-\gamma_{n-1})} & \left(\frac{2}{\hbar}
                                  \sqrt{(e^{x_k}+e^{z_{k-1}})(e^{-x_{k+1}}+e^{-z_k})}\right),
\end{split}
\end{equation}
where $K_\nu (z)$ is a Macdonald function~\cite{BE2}.
Interchanging $\gamma_{n-1}$ and $\gamma_n$ in~\eqref{A_rhsf3a} we
obtain the expression for the integral in the right hand side of
fig.~3a
\begin{equation} \label{A_lhsf3a}
\begin{split}
 2e^{\frac{i}{\hbar}(\gamma_n x_{k+1}-\gamma_{n-1}z_{k-1})}&\left( \frac{e^{x_k}+e^{z_{k-1}}}{e^{-x_{k+1}}+e^{-z_k}}
 \right)^{-\frac{i(\gamma_n-\gamma_{n-1})}{2\hbar}} \times \\
 \times K_{\frac{i}{\hbar}(\gamma_n-\gamma_{n-1})} & \left(\frac{2}{\hbar}
                                  \sqrt{(e^{x_k}+e^{z_{k-1}})(e^{-x_{k+1}}+e^{-z_k})}\right).
\end{split}
\end{equation}
The ratio of~\eqref{A_rhsf3a} and \eqref{A_lhsf3a} is exactly equal to
$Y_{\gamma_{n-1}-\gamma_n}(x_{k+1},z_k)Y_{\gamma_{n-1}-\gamma_n}^{-1}(x_k,z_{k-1})$. \\

The equalities shown in the fig.~3b and 3c can be proved analogously. \qed \\

Let us continue the proof of the theorem~\ref{A_Th_psi}. It is shown
in fig.~4. The left hand side of~\eqref{A_LgammaLgamma} after
application of fig.~3c is reflected in this diagram. Then, using
the fig.~3a, one can move the vertical line picturing the function
$Y_{\gamma_{n-1}-\gamma_n}(x_j,z_{j-1})$ to the right, as shown in
the figure. When this line has arrived to the right one can apply
the fig. 3b. In each step the parameters $\gamma_{n-1}$ and
$\gamma_n$ are interchanged, and,
eventually, one has the right hand side of~\eqref{A_LgammaLgamma}. \qed \\

\begin{figure}[h] 
\hspace{\fill}\setlength{\unitlength}{0.00087489in}
\begingroup\makeatletter\ifx\SetFigFont\undefined%
\gdef\SetFigFont#1#2#3#4#5{%
  \reset@font\fontsize{#1}{#2pt}%
  \fontfamily{#3}\fontseries{#4}\fontshape{#5}%
  \selectfont}%
\fi\endgroup%
{\renewcommand{\dashlinestretch}{30}
\begin{picture}(5885,2497)(0,-10)
\path(1320,622)(870,1522) \path(1320,622)(870,1522)
\path(1320,622)(1050,1162) \path(1320,622)(1050,1162)
\whiten\path(1130.498,1068.085)(1050.000,1162.000)(1076.833,1041.252)(1119.765,1022.469)(1130.498,1068.085)
\path(420,622)(690,1162)
\path(663.167,1041.252)(690.000,1162.000)(609.502,1068.085)
\path(420,622)(870,1522)
\put(1655,1027){\makebox(0,0)[lb]{\smash{{\SetFigFont{10}{12.0}{\rmdefault}{\mddefault}{\updefault}$\gamma_{n-1}$}}}}
\put(05,1027){\makebox(0,0)[lb]{\smash{{\SetFigFont{10}{12.0}{\rmdefault}{\mddefault}{\updefault}$\gamma_n$}}}}
\put(905,1027){\makebox(0,0)[lb]{\smash{{\SetFigFont{10}{12.0}{\rmdefault}{\mddefault}{\updefault}$\gamma_n$}}}}
\put(1410,2377){\makebox(0,0)[lb]{\smash{{\SetFigFont{12}{14.4}{\rmdefault}{\mddefault}{\updefault}$z_1$}}}}
\put(1350,1972){\makebox(0,0)[lb]{\smash{{\SetFigFont{10}{12.0}{\rmdefault}{\mddefault}{\updefault}$\gamma_n$}}}}
\put(295,1972){\makebox(0,0)[lb]{\smash{{\SetFigFont{10}{12.0}{\rmdefault}{\mddefault}{\updefault}$\gamma_{n-1}$}}}}
\path(4155,1522)(3705,2422) \path(4155,1522)(3705,2422)
\path(4155,1522)(3885,2062) \path(4155,1522)(3885,2062)
\whiten\path(3965.498,1968.085)(3885.000,2062.000)(3911.833,1941.252)(3954.765,1922.469)(3965.498,1968.085)
\path(3705,622)(3975,1162)
\path(3948.167,1041.252)(3975.000,1162.000)(3894.502,1068.085)
\path(3705,622)(4155,1522)
\put(4040,1027){\makebox(0,0)[lb]{\smash{{\SetFigFont{10}{12.0}{\rmdefault}{\mddefault}{\updefault}$\gamma_{n-1}$}}}}
\put(4940,1027){\makebox(0,0)[lb]{\smash{{\SetFigFont{10}{12.0}{\rmdefault}{\mddefault}{\updefault}$\gamma_{n-1}$}}}}
\put(3705,1972){\makebox(0,0)[lb]{\smash{{\SetFigFont{10}{12.0}{\rmdefault}{\mddefault}{\updefault}$\gamma_n$}}}}
\put(4610,1972){\makebox(0,0)[lb]{\smash{{\SetFigFont{10}{12.0}{\rmdefault}{\mddefault}{\updefault}$\gamma_n$}}}}

\put(1020,1702){\makebox(0,0)[lb]{\smash{{\SetFigFont{10}{12.0}{\rmdefault}{\mddefault}{\updefault}\makebox[0.7\width][s]{$(\gamma_{n-1}-\gamma_n)$}}}}}
\put(870,1522){\blacken\ellipse{46}{46}}
\put(870,1522){\ellipse{46}{46}}
\put(1770,1522){\blacken\ellipse{46}{46}}
\put(1770,1522){\ellipse{46}{46}}
\put(4155,1522){\blacken\ellipse{46}{46}}
\put(4155,1522){\ellipse{46}{46}}
\put(5055,1522){\blacken\ellipse{46}{46}}
\put(5055,1522){\ellipse{46}{46}} \path(870,1522)(600,2062)
\path(870,1522)(600,2062)
\whiten\path(680.498,1968.085)(600.000,2062.000)(626.833,1941.252)(680.498,1968.085)
\path(420,622)(150,1162) \path(420,622)(150,1162)
\whiten\path(230.498,1068.085)(150.000,1162.000)(176.833,1041.252)(230.498,1068.085)
\path(1320,1747)(1320,2422) \path(1320,1747)(1320,2422)
\path(1320,622)(1320,1657) \path(1320,622)(1320,1657)
\whiten\path(1350.000,1537.000)(1320.000,1657.000)(1290.000,1537.000)(1320.000,1573.000)(1350.000,1537.000)
\path(870,1522)(1140,2062)
\path(1113.167,1941.252)(1140.000,2062.000)(1059.502,1968.085)
\path(870,1522)(1320,2422) \path(1635,1792)(1320,2422)
\path(1635,1792)(1320,2422) \path(1635,1792)(1500,2062)
\path(1635,1792)(1500,2062)
\whiten\path(1580.498,1968.085)(1500.000,2062.000)(1526.833,1941.252)(1569.765,1922.469)(1580.498,1968.085)
\path(2220,622)(1710,1642) \path(2220,622)(1710,1642)
\path(2220,622)(1950,1162) \path(2220,622)(1950,1162)
\whiten\path(2030.498,1068.085)(1950.000,1162.000)(1976.833,1041.252)(2019.765,1022.469)(2030.498,1068.085)
\path(1320,622)(1590,1162)
\path(1563.167,1041.252)(1590.000,1162.000)(1509.502,1068.085)
\path(1320,622)(1770,1522) \path(1770,1522)(2040,2062)
\path(2013.167,1941.252)(2040.000,2062.000)(1959.502,1968.085)
\path(1770,1522)(2220,2422) \path(1770,1522)(2220,2422)
\path(5505,622)(5055,1522) \path(5505,622)(5055,1522)
\path(5505,622)(5235,1162) \path(5505,622)(5235,1162)
\whiten\path(5315.498,1068.085)(5235.000,1162.000)(5261.833,1041.252)(5304.765,1022.469)(5315.498,1068.085)
\path(4605,622)(4875,1162)
\path(4848.167,1041.252)(4875.000,1162.000)(4794.502,1068.085)
\path(4605,622)(5055,1522) \path(5055,1522)(4605,2422)
\path(5055,1522)(4605,2422) \path(5055,1522)(4785,2062)
\path(5055,1522)(4785,2062)
\whiten\path(4865.498,1968.085)(4785.000,2062.000)(4811.833,1941.252)(4854.765,1922.469)(4865.498,1968.085)
\path(4605,622)(4155,1522) \path(4605,622)(4155,1522)
\path(4605,622)(4335,1162) \path(4605,622)(4335,1162)
\whiten\path(4415.498,1068.085)(4335.000,1162.000)(4361.833,1041.252)(4404.765,1022.469)(4415.498,1068.085)
\path(4155,1522)(4425,2062)
\path(4398.167,1941.252)(4425.000,2062.000)(4344.502,1968.085)
\path(4155,1522)(4605,2422)
\dashline{60.000}(1365,1297)(4515,1297)
\blacken\path(4395.000,1267.000)(4515.000,1297.000)(4395.000,1327.000)(4395.000,1267.000)
\put(960,1477){\makebox(0,0)[lb]{\smash{{\SetFigFont{12}{14.4}{\rmdefault}{\mddefault}{\updefault}$y_1$}}}}
\put(1860,1477){\makebox(0,0)[lb]{\smash{{\SetFigFont{12}{14.4}{\rmdefault}{\mddefault}{\updefault}$y_2$}}}}
\put(510,577){\makebox(0,0)[lb]{\smash{{\SetFigFont{12}{14.4}{\rmdefault}{\mddefault}{\updefault}$x_1$}}}}
\put(2310,577){\makebox(0,0)[lb]{\smash{{\SetFigFont{12}{14.4}{\rmdefault}{\mddefault}{\updefault}$x_3$}}}}
\put(1410,577){\makebox(0,0)[lb]{\smash{{\SetFigFont{12}{14.4}{\rmdefault}{\mddefault}{\updefault}$x_2$}}}}
\put(2310,2377){\makebox(0,0)[lb]{\smash{{\SetFigFont{12}{14.4}{\rmdefault}{\mddefault}{\updefault}$z_2$}}}}
\put(2895,892){\makebox(0,0)[lb]{\smash{{\SetFigFont{20}{24.0}{\rmdefault}{\mddefault}{\updefault}...}}}}
\put(2715,82){\makebox(0,0)[lb]{\smash{{\SetFigFont{14}{16.8}{\rmdefault}{\mddefault}{\updefault}fig.
4}}}}
\put(4695,577){\makebox(0,0)[lb]{\smash{{\SetFigFont{12}{14.4}{\rmdefault}{\mddefault}{\updefault}$x_{N-n+1}$}}}}
\put(4245,1477){\makebox(0,0)[lb]{\smash{{\SetFigFont{12}{14.4}{\rmdefault}{\mddefault}{\updefault}$y_{N-n}$}}}}
\put(5145,1477){\makebox(0,0)[lb]{\smash{{\SetFigFont{12}{14.4}{\rmdefault}{\mddefault}{\updefault}$y_{N-n+1}$}}}}
\put(4695,2377){\makebox(0,0)[lb]{\smash{{\SetFigFont{12}{14.4}{\rmdefault}{\mddefault}{\updefault}$z_{N-n}$}}}}
\put(3795,2377){\makebox(0,0)[lb]{\smash{{\SetFigFont{12}{14.4}{\rmdefault}{\mddefault}{\updefault}$z_{N-n-1}$}}}}
\put(3795,577){\makebox(0,0)[lb]{\smash{{\SetFigFont{12}{14.4}{\rmdefault}{\mddefault}{\updefault}$x_{N-n}$}}}}
\put(5550,577){\makebox(0,0)[lb]{\smash{{\SetFigFont{12}{14.4}{\rmdefault}{\mddefault}{\updefault}$x_{N-n+2}$}}}}
\end{picture}
}\hspace{\fill} \\
\end{figure}

Substituting the explicit expression~\eqref{A_Lamb} for the kernel of
the operator~\eqref{A_Lambf} one obtains the recurrent formula for the function~\eqref{A_psi}
\begin{equation} \label{A_psi_rec}
 \begin{split}
 \psi_{\gamma_1,\ldots,\gamma_N}(x_1,\ldots,x_N)=\int\limits_{\mathbb R^{N-1}}dy_1\ldots dy_{N-1}\,
  \psi_{\gamma_1,\ldots,\gamma_{N-1}}(y_1,\ldots,y_{N-1}) \times \\
 \times\exp\Bigl\{\frac{i}{\hbar}\gamma_N(\sum\limits_{n=1}^N x_n-\sum\limits_{n=1}^{N-1}y_n)
        -\frac{1}{\hbar}\sum\limits_{n=1}^{N-1}(e^{y_n-x_{n+1}}+e^{x_n-y_n})\Bigr\}.
 \end{split}
\end{equation}
The consecutive applications of this formula allow to derive the following integral representation for the
eigenfunctions of open Toda chain model
\begin{multline} \label{A_psi_GG}
 \psi_{\gamma}(z_{N1},\ldots,z_{NN})=\int\limits_{\mathbb R^{\frac{N(N-1)}2}}\prod_{n=1}^{N-1}\prod_{j=1}^{n}dz_{nj}\,
  \exp\biggl\{\frac{i}{\hbar}\Bigl[\gamma_N\sum\limits_{j=1}^N z_{Nj}
    +\sum_{n=1}^{N-1}(\gamma_n-\gamma_{n+1})\sum_{j=1}^n z_{nj}\Bigr]-\\
    -\frac{1}{\hbar}\sum_{n=1}^N\sum_{j=1}^{n-1}\bigl(e^{z_{nj}-z_{n-1,j}}
    +e^{z_{n-1,j}-z_{n,j+1}}\bigr)\biggr\}.
\end{multline}
This is a Gauss-Givental representation~\cite{Kharchev_GG},
\cite{Givental} of the Toda chain transition function.

\section{Integration measure}
\label{A_Int_meas}

In this section we shall check the orthogonality condition using
the diagram technique. The normalization function which appears in
this calculation coincides exactly with the Sklyanin integration
measure using in the SoV method for the periodic Toda
chain model~\cite{Sklyanin},
\cite{Kharchev_P},\cite{Kharchev_O}, \cite{Kharchev_OP}. \\

\begin{theor} \label{A_orthog_meas}
 The functions $\psi_{\gamma}(x)$ defined by the formula~\eqref{A_psi} satisfy to the
orthogonality condition
\begin{equation} \label{A_orthog}
 \int\limits_{\mathbb R^N}dx \overline{\psi_{\gamma}(x)}\psi_{\gamma'}(x)=\mu^{-1}(\gamma)\delta_{SYM}(\gamma,\gamma'),
\end{equation}
where $\delta_{SYM}(\gamma,\gamma')=
\dfrac1{N!}\sum\limits_{\sigma\in S_N}
\prod\limits_{i=1}^N\delta(\gamma_i-\gamma'_{\sigma(i)})$ is a
symmetrized delta function and $\mu(\gamma)$ is the Sklyanin
measure
 \begin{equation} \label{A_measure}
  \mu(\gamma)=\frac{(2\pi\hbar)^{-N}}{N!}\prod_{k<m}
  \left[\Gamma\Bigl(\frac{\gamma_m-\gamma_k}{i\hbar}\Bigr)
        \Gamma\Bigl(\frac{\gamma_k-\gamma_m}{i\hbar}\Bigr)\right]^{-1}.
 \end{equation}
\end{theor}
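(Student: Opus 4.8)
The plan is to establish the orthogonality relation~\eqref{A_orthog} by induction on the number of particles $N$, exploiting the recurrent structure~\eqref{A_psi_rec} of the eigenfunctions and carrying out the computation entirely within the diagram technique developed for the $\Lambda$-operators. The base case $N=1$ is immediate: $\psi_\gamma(x_1)=e^{\frac{i}{\hbar}\gamma_1 x_1}$, and $\int_{\mathbb R}dx_1\,\overline{\psi_{\gamma_1}(x_1)}\psi_{\gamma'_1}(x_1)=2\pi\hbar\,\delta(\gamma_1-\gamma'_1)$, which matches $\mu^{-1}(\gamma)\delta_{SYM}(\gamma,\gamma')$ since for $N=1$ the measure is the constant $(2\pi\hbar)^{-1}$ and $\delta_{SYM}=\delta$. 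The inductive step is where the real work lies: assuming the formula holds at level $N-1$, I would insert the recurrent representation~\eqref{A_psi_rec} for both $\overline{\psi_\gamma(x)}$ and $\psi_{\gamma'}(x)$ into the left-hand side of~\eqref{A_orthog} and integrate over the top-level variables $x_1,\ldots,x_N$.

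First I would perform the $x$-integrations. Each factor $\exp\big(\frac{i}{\hbar}\gamma_N(\sum_k x_k-\sum_k y_k)-\frac1\hbar\sum_k(e^{y_k-x_{k+1}}+e^{x_k-y_k})\big)$ (and its conjugate with $\gamma'_N$ and integration variables $y'_k$) contributes, and the $x_k$-integrals produce Macdonald-function factors of exactly the kind computed in the proof of Lemma~1 (see~\eqref{A_rhsf3a}). In diagrammatic terms this reduces the top rungs of the two ladders to a single transfer layer connecting the $y$- and $y'$-variables, with an overall scalar prefactor. The key algebraic fact I would isolate is that integrating out $x_N$ (the unconstrained translational direction) yields a delta-function $\delta(\gamma_N-\gamma'_N)$, while the remaining $x_1,\ldots,x_{N-1}$ integrations glue the two reduced diagrams together; by the diagram identities of Lemma~1 the resulting kernel collapses onto $\overline{\psi_{\gamma_1,\ldots,\gamma_{N-1}}(y)}\,\psi_{\gamma'_1,\ldots,\gamma'_{N-1}}(y)$ up to an explicit ratio of gamma-type factors in the $\gamma_N-\gamma_k$ differences. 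Applying the inductive hypothesis to the $(N-1)$-particle scalar product over the $y$-variables then leaves $\mu^{-1}(\gamma_1,\ldots,\gamma_{N-1})\,\delta_{SYM}$ at level $N-1$, multiplied by the delta in $\gamma_N$ and the gamma-factor prefactor.

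The crux is then to verify that the accumulated prefactor reorganizes correctly into the full Sklyanin measure~\eqref{A_measure}. Concretely, I would check that
\begin{align}
 \mu^{-1}(\gamma_1,\ldots,\gamma_{N-1})\cdot\big(\text{prefactor in }\gamma_N-\gamma_k\big)
 =\mu^{-1}(\gamma_1,\ldots,\gamma_N),
\end{align}
which amounts to showing the prefactor equals $\prod_{k<N}\big(\Gamma(\tfrac{\gamma_N-\gamma_k}{i\hbar})\Gamma(\tfrac{\gamma_k-\gamma_N}{i\hbar})\big)^{-1}$ times the bookkeeping factor $2\pi\hbar/N$ that promotes the $(N-1)!$ to $N!$ and restores the correct power of $(2\pi\hbar)^{-1}$. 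The delta-function structure must be handled with care: the single $\delta(\gamma_N-\gamma'_N)$ combined with the symmetrized $\delta_{SYM}$ at level $N-1$ must be re-symmetrized over all of $S_N$, which is precisely where the extra $1/N$ and the symmetrization over which of the $\gamma'_j$ pairs with $\gamma_N$ enter. I expect the main obstacle to be this final symmetry bookkeeping — ensuring that the spurious dependence on the ordering introduced by singling out the last particle washes out, and that the gamma-function prefactor emerging from the Macdonald-function integrals matches~\eqref{A_measure} exactly rather than up to an unaccounted constant. This is a known delicate point, and it is exactly the computation that~\cite{S1} carries out; the completeness relation~\eqref{compl} then follows from the general theory of rigged Hilbert spaces~\cite{Gelfand4} and need not be proved here.
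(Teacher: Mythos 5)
Your inductive strategy contains a genuine gap at its central step: the claim that ``integrating out $x_N$ (the unconstrained translational direction) yields a delta-function $\delta(\gamma_N-\gamma'_N)$'' is false. Translational invariance of the product $\overline{\psi_{\gamma}(x)}\psi_{\gamma'}(x)$ can only ever produce $\delta\bigl(\sum_k\gamma_k-\sum_k\gamma'_k\bigr)$, never a delta in a single pair of parameters; and in fact no single $x$-integration produces a delta at all. The boundary integrations are of the type $\int dx\, e^{\frac{i}{\hbar}\omega x}e^{-\frac{a}{\hbar}e^{x}}=\bigl(\tfrac{\hbar}{a}\bigr)^{\frac{i\omega}{\hbar}}\Gamma\bigl(\tfrac{\omega}{i\hbar}\bigr)$ and yield Gamma-function factors (this is exactly the content of the paper's Lemma pictured in figs.~7a, 7b), while the interior ones give Macdonald functions. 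As a consequence, the kernel $\int dx\,\overline{\Lambda_{\gamma_N}(x;y)}\Lambda_{\gamma'_N}(x;y')$ does not collapse onto $\delta(\gamma_N-\gamma'_N)\prod_k\delta(y_k-y'_k)$; it is a nontrivial integral operator satisfying an exchange relation that pushes the conjugated $\Lambda$ inward, so the parameters $\gamma_N,\gamma'_N$ propagate through the whole diagram and a delta function appears only at the innermost step of a complete sweep. This is visible in the paper's intermediate result~\eqref{A_int_psi}: the deltas that emerge pair the parameters in \emph{reversed} order, $\prod_j\delta(\gamma'_j-\gamma_{N+1-j})$, i.e.\ $\gamma_N$ meets $\gamma'_1$, not $\gamma'_N$. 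Accordingly the paper's recursion removes the pairs $(\gamma_1,\gamma'_N)$ and $(\gamma'_1,\gamma_N)$ simultaneously and reduces $N\to N-2$, producing at each sweep the mixed factors $\Gamma\bigl(\tfrac{\gamma'_1-\gamma_k}{i\hbar}\bigr)\Gamma\bigl(\tfrac{\gamma_k-\gamma'_1}{i\hbar}\bigr)\Gamma\bigl(\tfrac{\gamma'_k-\gamma_1}{i\hbar}\bigr)\Gamma\bigl(\tfrac{\gamma_1-\gamma'_k}{i\hbar}\bigr)$ together with $(2\pi\hbar)^2\delta(\gamma'_1-\gamma_N)\delta(\gamma'_N-\gamma_1)$; your $N\to N-1$ reduction with diagonal gluing in the $y$-variables never materializes.

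Your final ``symmetry bookkeeping'' is likewise misconceived. The direct computation yields exactly \emph{one} delta-term, corresponding to the longest permutation, not $N$ terms to be averaged with a factor $1/N$; there is no factor $2\pi\hbar/N$ promoting $(N-1)!$ to $N!$. The paper recovers the full symmetrized right-hand side of~\eqref{A_orthog} distributionally: the equality is tested on functions vanishing in a neighborhood of the diagonals $\{\gamma_k=\gamma'_j\}_{k+j\le N}$ (where the Gamma-factors are singular and cannot be multiplied against deltas), on which only the longest-permutation term of $\delta_{SYM}$ survives, and the remaining terms are then restored by the Weyl invariance of $\psi_\gamma(x)$. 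With this reading, the $N!$ in $\mu^{-1}$ cancels against the $1/N!$ in $\delta_{SYM}$ directly, and substituting $\gamma'_j=\gamma_{N+1-j}$ into the mixed Gamma-product of~\eqref{A_int_psi} turns it into the intra-set product defining the Sklyanin measure~\eqref{A_measure}. Your correct base case and your correct expectation that Macdonald-function integrals and Gamma-factors appear do not repair the inductive mechanism, which as stated would fail already at $N=2$: there the $x_1$- and $x_2$-integrations give $\Gamma\bigl(\tfrac{\gamma_2-\gamma'_2}{i\hbar}\bigr)\Gamma\bigl(\tfrac{\gamma'_2-\gamma_2}{i\hbar}\bigr)$ times a kernel in $(y,y')$, and the deltas $\delta(\gamma'_1-\gamma_2)\delta(\gamma'_2-\gamma_1)$ arise only from the subsequent $y,y'$-integrations.
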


Since $H_k$ are Hermitian operators, the set of the functions
$\psi_{\gamma}(x)$ is complete \cite{Gelfand4}. This means that
any function $f(x)$ belonging to the Hilbert space $L^2(\mathbb
R^N)$ can be represented as an integral
\begin{equation} \label{A_f_psi}
 f(x)=\int\limits_{\mathbb R^N}\psi_{\gamma}(x) g(\gamma)\mu(\gamma)\,d\gamma
\end{equation}
for some function $g(\gamma)$ with some good properties.
As a consequence we have the
completeness condition
\begin{equation} \label{A_compl}
 \int\limits_{\mathbb R^N}d\gamma\,\mu(\gamma)\overline{\psi_{\gamma}(x)}\psi_{\gamma}(x')=\prod_{i=1}^N\delta(x_i-x'_i),
\end{equation}.

{\bfseries Proof of the theorem~\ref{A_orthog_meas}.} First of all,
we need to obtain a diagram representation of
$\overline{\psi_{\gamma}(x)}$ in order to calculate the integral
in the left hand side of~\eqref{A_orthog} using the diagram
technique. Let us to consider the diagram for $\psi_{\gamma}(x)$,
which are shown in fig.~2 for $N=5$ and in fig.~6a for $N=3$, and
to implement the following steps.

{\bfseries First step.} The imaginary unit is contained only in the
functions $J$ and $Z$. To reduce the conjugation of whole function
$\psi_{\gamma}(x)$ to the conjugation of the functions $Z$ we
decompose $J_u(x_{k+1},y_k)$ into the product
$Z_u^{-1}(y_k)I(y_k,x_{k+1})Z_u(x_{k+1})$ (fig.~5a). This
corresponds to the transition from the fig.~6a to the fig.~6b.

{\bfseries Second step.} We replace all $Z_u(x_k)$ by
$Z_{u}^{-1}(x_k)$ (fig.~5b) implementing the complex conjugation
and arrive to the fig.~6c, in which the function
$\overline{\psi_{\gamma}(x)}$ is pictured. \\ [8pt]

\begin{figure}[h]
\noindent\hspace{\fill} \setlength{\unitlength}{0.00087489in}
\begingroup\makeatletter\ifx\SetFigFont\undefined%
\gdef\SetFigFont#1#2#3#4#5{%
\reset@font\fontsize{#1}{#2pt}%
  \fontfamily{#3}\fontseries{#4}\fontshape{#5}%
  \selectfont}%
\fi\endgroup%
{\renewcommand{\dashlinestretch}{30}
\begin{picture}(1636,1426)(0,-10)
\put(555,388){\makebox(0,0)[lb]{\smash{{\SetFigFont{12}{14.4}{\rmdefault}{\mddefault}{\updefault}$x_{k+1}$}}}}
\path(915,1333)(1185,793)
\path(1104.502,886.915)(1185.000,793.000)(1158.167,913.748)
\path(915,1333)(1365,433)
\put(1455,388){\makebox(0,0)[lb]{\smash{{\SetFigFont{12}{14.4}{\rmdefault}{\mddefault}{\updefault}$x_{k+1}$}}}}
\path(1050,1063)(1005,1153)
\whiten\path(1085.498,1059.085)(1005.000,1153.000)(1031.833,1032.252)(1085.498,1059.085)
\path(1230,703)(1185,793)
\whiten\path(1265.498,699.085)(1185.000,793.000)(1211.833,672.252)(1265.498,699.085)
\put(870,1018){\makebox(0,0)[lb]{\smash{{\SetFigFont{10}{14.4}{\rmdefault}{\mddefault}{\updefault}$u$}}}}
\put(1005,658){\makebox(0,0)[lb]{\smash{{\SetFigFont{10}{14.4}{\rmdefault}{\mddefault}{\updefault}$u$}}}}
\put(1005,1288){\makebox(0,0)[lb]{\smash{{\SetFigFont{12}{14.4}{\rmdefault}{\mddefault}{\updefault}$y_k$}}}}
\path(465,433)(15,1333) \path(465,433)(15,1333)
\path(465,433)(195,973) \path(465,433)(195,973)
\whiten\path(275.498,879.085)(195.000,973.000)(221.833,852.252)(264.765,833.469)(275.498,879.085)
\path(510,883)(825,883)
\blacken\path(705.000,853.000)(825.000,883.000)(705.000,913.000)(741.000,883.000)(705.000,853.000)
\put(15,883){\makebox(0,0)[lb]{\smash{{\SetFigFont{10}{14.4}{\rmdefault}{\mddefault}{\updefault}$u$}}}}
\put(105,1288){\makebox(0,0)[lb]{\smash{{\SetFigFont{12}{14.4}{\rmdefault}{\mddefault}{\updefault}$y_k$}}}}
\put(455,73){\makebox(0,0)[lb]{\smash{{\SetFigFont{14}{14.4}{\rmdefault}{\mddefault}{\updefault}fig.
5a}}}}
\end{picture}
} \hspace{\fill} \setlength{\unitlength}{0.00087489in}
\begingroup\makeatletter\ifx\SetFigFont\undefined%
\gdef\SetFigFont#1#2#3#4#5{%
  \reset@font\fontsize{#1}{#2pt}%
  \fontfamily{#3}\fontseries{#4}\fontshape{#5}%
  \selectfont}%
\fi\endgroup%
{\renewcommand{\dashlinestretch}{30}
\begin{picture}(1197,1000)(0,-10)
\path(150,973)(330,613) \path(1185,613)(1005,973)
\put(870,748){\makebox(0,0)[lb]{\smash{{\SetFigFont{10}{14.4}{\rmdefault}{\mddefault}{\updefault}$u$}}}}
\path(420,838)(735,838)
\blacken\path(615.000,808.000)(735.000,838.000)(615.000,868.000)(651.000,838.000)(615.000,808.000)
\put(15,748){\makebox(0,0)[lb]{\smash{{\SetFigFont{10}{14.4}{\rmdefault}{\mddefault}{\updefault}$u$}}}}
\put(365,73){\makebox(0,0)[lb]{\smash{{\SetFigFont{14}{14.4}{\rmdefault}{\mddefault}{\updefault}fig.
5b}}}} \path(240,793)(195,883)
\whiten\path(275.498,789.085)(195.000,883.000)(221.833,762.252)(275.498,789.085)
\path(1095,793)(1140,703)
\whiten\path(1059.502,796.915)(1140.000,703.000)(1113.167,823.748)(1059.502,796.915)
\end{picture}
} \hspace{\fill} \\ [10pt]

{\bfseries Third step.} Since the functions $Z_{u}^{-1}(x_k)$ are
attached to only one point, namely to $x_k$, one can turn these
functions in the manner shown in fig.~5c and fig.~5d. It means
that we can represent $\overline{\psi_{\gamma}(x)}$ by the
fig.~6d.

{\bfseries Fourth step.} Now we replace the product
$Z_u^{-1}(x_k)I(x_k,y_k)Z_u(y_k)$ by $J_u(y_k,x_k)$ (fig.~5e) to
obtain fig.~6e. \\ [8pt]

\noindent\hspace{\fill} \setlength{\unitlength}{0.00087489in}
\begingroup\makeatletter\ifx\SetFigFont\undefined%
\gdef\SetFigFont#1#2#3#4#5{%
  \reset@font\fontsize{#1}{#2pt}%
  \fontfamily{#3}\fontseries{#4}\fontshape{#5}%
  \selectfont}%
\fi\endgroup%
{\renewcommand{\dashlinestretch}{30}
\begin{picture}(1519,1138)(0,-10)
\put(303,1018){\makebox(0,0)[rb]{\smash{{\SetFigFont{12}{14.4}{\rmdefault}{\mddefault}{\updefault}$x_k$}}}}
\put(1248,1041){\blacken\ellipse{46}{46}}
\put(1248,1041){\ellipse{46}{46}} \path(1023,613)(1248,1063)
\put(1338,1018){\makebox(0,0)[lb]{\smash{{\SetFigFont{12}{14.4}{\rmdefault}{\mddefault}{\updefault}$x_k$}}}}
\put(933,748){\makebox(0,0)[lb]{\smash{{\SetFigFont{10}{14.4}{\rmdefault}{\mddefault}{\updefault}$u$}}}}
\put(78,1041){\blacken\ellipse{46}{46}}
\put(78,1041){\ellipse{46}{46}} \path(303,613)(78,1063)
\path(438,838)(753,838)
\blacken\path(633.000,808.000)(753.000,838.000)(633.000,868.000)(669.000,838.000)(633.000,808.000)
\put(123,748){\makebox(0,0)[rb]{\smash{{\SetFigFont{10}{14.4}{\rmdefault}{\mddefault}{\updefault}$u$}}}}
\put(473,73){\makebox(0,0)[lb]{\smash{{\SetFigFont{14}{14.4}{\rmdefault}{\mddefault}{\updefault}fig.
5c}}}} \path(1113,793)(1068,703)
\whiten\path(1094.833,823.748)(1068.000,703.000)(1148.498,796.915)(1094.833,823.748)
\path(213,793)(258,703)
\whiten\path(177.502,796.915)(258.000,703.000)(231.167,823.748)(177.502,796.915)
\end{picture}
} \hspace{\fill} \setlength{\unitlength}{0.00087489in}
\begingroup\makeatletter\ifx\SetFigFont\undefined%
\gdef\SetFigFont#1#2#3#4#5{%
  \reset@font\fontsize{#1}{#2pt}%
  \fontfamily{#3}\fontseries{#4}\fontshape{#5}%
  \selectfont}%
\fi\endgroup%
{\renewcommand{\dashlinestretch}{30}
\begin{picture}(1339,1045)(0,-10)
\put(573,523){\makebox(0,0)[rb]{\smash{{\SetFigFont{12}{14.4}{\rmdefault}{\mddefault}{\updefault}$x_k$}}}}
\put(348,568){\blacken\ellipse{46}{46}}
\put(348,568){\ellipse{46}{46}}
\put(1068,568){\blacken\ellipse{46}{46}}
\put(1068,568){\ellipse{46}{46}} \path(528,793)(843,793)
\blacken\path(723.000,763.000)(843.000,793.000)(723.000,823.000)(759.000,793.000)(723.000,763.000)
\path(348,568)(123,1018) \path(1068,568)(1293,1018)
\put(428,73){\makebox(0,0)[lb]{\smash{{\SetFigFont{14}{14.4}{\rmdefault}{\mddefault}{\updefault}fig.
5d}}}}
\put(123,793){\makebox(0,0)[rb]{\smash{{\SetFigFont{10}{14.4}{\rmdefault}{\mddefault}{\updefault}$u$}}}}
\put(1023,793){\makebox(0,0)[lb]{\smash{{\SetFigFont{10}{14.4}{\rmdefault}{\mddefault}{\updefault}$u$}}}}
\put(1158,523){\makebox(0,0)[lb]{\smash{{\SetFigFont{12}{14.4}{\rmdefault}{\mddefault}{\updefault}$x_k$}}}}
\path(213,838)(258,748)
\whiten\path(177.502,841.915)(258.000,748.000)(231.167,868.748)(177.502,841.915)
\path(1203,838)(1158,748)
\whiten\path(1184.833,868.748)(1158.000,748.000)(1238.498,841.915)(1184.833,868.748)
\end{picture}
} \hspace{\fill} \setlength{\unitlength}{0.00087489in}
\begingroup\makeatletter\ifx\SetFigFont\undefined%
\gdef\SetFigFont#1#2#3#4#5{%
  \reset@font\fontsize{#1}{#2pt}%
  \fontfamily{#3}\fontseries{#4}\fontshape{#5}%
  \selectfont}%
\fi\endgroup%
{\renewcommand{\dashlinestretch}{30}
\begin{picture}(1726,1426)(0,-10)
\path(60,433)(330,973)
\path(303.167,852.252)(330.000,973.000)(249.502,879.085)
\path(60,433)(510,1333) \path(555,883)(870,883)
\blacken\path(750.000,853.000)(870.000,883.000)(750.000,913.000)(786.000,883.000)(750.000,853.000)
\path(195,703)(150,613)
\whiten\path(176.833,733.748)(150.000,613.000)(230.498,706.915)(176.833,733.748)
\path(375,1063)(330,973)
\whiten\path(356.833,1093.748)(330.000,973.000)(410.498,1066.915)(356.833,1093.748)
\path(1455,1333)(1005,433) \path(1455,1333)(1005,433)
\path(1455,1333)(1185,793) \path(1455,1333)(1185,793)
\whiten\path(1211.833,913.748)(1185.000,793.000)(1265.498,886.915)(1254.765,932.531)(1211.833,913.748)
\put(15,658){\makebox(0,0)[lb]{\smash{{\SetFigFont{10}{14.4}{\rmdefault}{\mddefault}{\updefault}$u$}}}}
\put(150,1018){\makebox(0,0)[lb]{\smash{{\SetFigFont{10}{14.4}{\rmdefault}{\mddefault}{\updefault}$u$}}}}
\put(1005,833){\makebox(0,0)[lb]{\smash{{\SetFigFont{10}{14.4}{\rmdefault}{\mddefault}{\updefault}$u$}}}}
\put(600,1288){\makebox(0,0)[lb]{\smash{{\SetFigFont{12}{14.4}{\rmdefault}{\mddefault}{\updefault}$y_k$}}}}
\put(1545,1288){\makebox(0,0)[lb]{\smash{{\SetFigFont{12}{14.4}{\rmdefault}{\mddefault}{\updefault}$y_k$}}}}
\put(150,388){\makebox(0,0)[lb]{\smash{{\SetFigFont{12}{14.4}{\rmdefault}{\mddefault}{\updefault}$x_k$}}}}
\put(1095,388){\makebox(0,0)[lb]{\smash{{\SetFigFont{12}{14.4}{\rmdefault}{\mddefault}{\updefault}$x_k$}}}}
\put(500,73){\makebox(0,0)[lb]{\smash{{\SetFigFont{14}{14.4}{\rmdefault}{\mddefault}{\updefault}fig.
5e}}}}
\end{picture}
} \hspace{\fill} 
\end{figure}

{\bfseries Fifth step.} Finally, reflecting the fig.~6e with respect to a horizontal line
we obtain the fig.~6f. \\ [8pt]

\begin{figure}[h]
\noindent \setlength{\unitlength}{0.00075in}
\begingroup\makeatletter\ifx\SetFigFont\undefined%
\gdef\SetFigFont#1#2#3#4#5{%
  \reset@font\fontsize{#1}{#2pt}%
  \fontfamily{#3}\fontseries{#4}\fontshape{#5}%
  \selectfont}%
\fi\endgroup%
{\renewcommand{\dashlinestretch}{30}
\begin{picture}(8169,2814)(0,-10)
\path(1182,447)(1452,987)
\path(1425.167,866.252)(1452.000,987.000)(1371.502,893.085)
\path(1182,447)(1632,1347) \path(2082,447)(1632,1347)
\path(2082,447)(1632,1347) \path(2082,447)(1812,987)
\path(2082,447)(1812,987)
\whiten\path(1892.498,893.085)(1812.000,987.000)(1838.833,866.252)(1881.765,847.469)(1892.498,893.085)
\path(1182,447)(732,1347) \path(1182,447)(732,1347)
\path(1182,447)(912,987) \path(1182,447)(912,987)
\whiten\path(992.498,893.085)(912.000,987.000)(938.833,866.252)(981.765,847.469)(992.498,893.085)
\path(3522,1347)(3792,807)
\path(3711.502,900.915)(3792.000,807.000)(3765.167,927.748)
\path(3522,1347)(3972,447) \path(3837,717)(3792,807)
\whiten\path(3872.498,713.085)(3792.000,807.000)(3818.833,686.252)(3872.498,713.085)
\path(3657,1077)(3612,1167)
\whiten\path(3692.498,1073.085)(3612.000,1167.000)(3638.833,1046.252)(3692.498,1073.085)
\path(3972,2247)(4242,1707)
\path(4161.502,1800.915)(4242.000,1707.000)(4215.167,1827.748)
\path(3972,2247)(4422,1347) \path(4287,1617)(4242,1707)
\whiten\path(4322.498,1613.085)(4242.000,1707.000)(4268.833,1586.252)(4322.498,1613.085)
\path(4107,1977)(4062,2067)
\whiten\path(4142.498,1973.085)(4062.000,2067.000)(4088.833,1946.252)(4142.498,1973.085)
\path(4422,1347)(4692,807)
\path(4611.502,900.915)(4692.000,807.000)(4665.167,927.748)
\path(4422,1347)(4872,447) \path(4737,717)(4692,807)
\whiten\path(4772.498,713.085)(4692.000,807.000)(4718.833,686.252)(4772.498,713.085)
\path(4557,1077)(4512,1167)
\whiten\path(4592.498,1073.085)(4512.000,1167.000)(4538.833,1046.252)(4592.498,1073.085)
\path(6762,2247)(6537,2697) \path(6762,2247)(6537,2697)
\path(6537,2697)(6582,2607) \path(6537,2697)(6582,2607)
\whiten\path(6501.502,2700.915)(6582.000,2607.000)(6555.167,2727.748)(6501.502,2700.915)
\path(6312,1347)(6087,1797) \path(6312,1347)(6087,1797)
\path(6087,1797)(6132,1707) \path(6087,1797)(6132,1707)
\whiten\path(6051.502,1800.915)(6132.000,1707.000)(6105.167,1827.748)(6051.502,1800.915)
\path(5862,447)(5637,897) \path(5862,447)(5637,897)
\path(5637,897)(5682,807) \path(5637,897)(5682,807)
\whiten\path(5601.502,900.915)(5682.000,807.000)(5655.167,927.748)(5601.502,900.915)
\path(7212,1347)(6762,2247) \path(6762,2247)(7032,1707)
\path(6951.502,1800.915)(7032.000,1707.000)(7005.167,1827.748)
\path(6762,447)(6312,1347) \path(6312,1347)(6582,807)
\path(6501.502,900.915)(6582.000,807.000)(6555.167,927.748)
\path(7662,447)(7212,1347) \path(7212,1347)(7482,807)
\path(7401.502,900.915)(7482.000,807.000)(7455.167,927.748)
\put(732,1347){\blacken\ellipse{46}{46}}
\put(732,1347){\ellipse{46}{46}}
\put(1182,2247){\blacken\ellipse{46}{46}}
\put(1182,2247){\ellipse{46}{46}}
\put(1632,1347){\blacken\ellipse{46}{46}}
\put(1632,1347){\ellipse{46}{46}}
\put(3522,1347){\blacken\ellipse{46}{46}}
\put(3522,1347){\ellipse{46}{46}}
\put(3972,2247){\blacken\ellipse{46}{46}}
\put(3972,2247){\ellipse{46}{46}}
\put(4422,1347){\blacken\ellipse{46}{46}}
\put(4422,1347){\ellipse{46}{46}}
\put(6312,1347){\blacken\ellipse{46}{46}}
\put(6312,1347){\ellipse{46}{46}}
\put(6762,2247){\blacken\ellipse{46}{46}}
\put(6762,2247){\ellipse{46}{46}}
\put(7212,1347){\blacken\ellipse{46}{46}}
\put(7212,1347){\ellipse{46}{46}} \path(732,1347)(462,1887)
\path(732,1347)(462,1887)
\whiten\path(542.498,1793.085)(462.000,1887.000)(488.833,1766.252)(542.498,1793.085)
\path(1182,2247)(912,2787) \path(1182,2247)(912,2787)
\whiten\path(992.498,2693.085)(912.000,2787.000)(938.833,2666.252)(992.498,2693.085)
\path(282,447)(12,987) \path(282,447)(12,987)
\whiten\path(92.498,893.085)(12.000,987.000)(38.833,866.252)(92.498,893.085)
\path(1632,1347)(1182,2247) \path(1632,1347)(1182,2247)
\path(1632,1347)(1362,1887) \path(1632,1347)(1362,1887)
\whiten\path(1442.498,1793.085)(1362.000,1887.000)(1388.833,1766.252)(1431.765,1747.469)(1442.498,1793.085)
\path(732,1347)(1002,1887)
\path(975.167,1766.252)(1002.000,1887.000)(921.502,1793.085)
\path(732,1347)(1182,2247) \path(282,447)(552,987)
\path(525.167,866.252)(552.000,987.000)(471.502,893.085)
\path(282,447)(732,1347) \path(2172,1527)(2667,1527)
\blacken\path(2547.000,1497.000)(2667.000,1527.000)(2547.000,1557.000)(2583.000,1527.000)(2547.000,1497.000)
\path(4962,1527)(5457,1527)
\blacken\path(5337.000,1497.000)(5457.000,1527.000)(5337.000,1557.000)(5373.000,1527.000)(5337.000,1497.000)
\path(3522,1347)(3252,1887) \path(3522,1347)(3252,1887)
\whiten\path(3332.498,1793.085)(3252.000,1887.000)(3278.833,1766.252)(3332.498,1793.085)
\path(3972,2247)(3702,2787) \path(3972,2247)(3702,2787)
\whiten\path(3782.498,2693.085)(3702.000,2787.000)(3728.833,2666.252)(3782.498,2693.085)
\path(3072,447)(2802,987) \path(3072,447)(2802,987)
\whiten\path(2882.498,893.085)(2802.000,987.000)(2828.833,866.252)(2882.498,893.085)
\path(3522,1347)(3792,1887)
\path(3765.167,1766.252)(3792.000,1887.000)(3711.502,1793.085)
\path(3522,1347)(3972,2247) \path(3072,447)(3342,987)
\path(3315.167,866.252)(3342.000,987.000)(3261.502,893.085)
\path(3072,447)(3522,1347) \path(3972,447)(4242,987)
\path(4215.167,866.252)(4242.000,987.000)(4161.502,893.085)
\path(3972,447)(4422,1347) \path(6312,1347)(6582,1887)
\path(6555.167,1766.252)(6582.000,1887.000)(6501.502,1793.085)
\path(6312,1347)(6762,2247) \path(5862,447)(6132,987)
\path(6105.167,866.252)(6132.000,987.000)(6051.502,893.085)
\path(5862,447)(6312,1347) \path(6762,447)(7032,987)
\path(7005.167,866.252)(7032.000,987.000)(6951.502,893.085)
\path(6762,447)(7212,1347) \path(7662,1527)(8157,1527)
\blacken\path(8037.000,1497.000)(8157.000,1527.000)(8037.000,1557.000)(8073.000,1527.000)(8037.000,1497.000)
\put(822,1302){\makebox(0,0)[lb]{\smash{{\SetFigFont{12}{14.4}{\rmdefault}{\mddefault}{\updefault}$y_1$}}}}
\put(1272,402){\makebox(0,0)[lb]{\smash{{\SetFigFont{12}{14.4}{\rmdefault}{\mddefault}{\updefault}$x_2$}}}}
\put(2172,402){\makebox(0,0)[lb]{\smash{{\SetFigFont{12}{14.4}{\rmdefault}{\mddefault}{\updefault}$x_3$}}}}
\put(1722,1302){\makebox(0,0)[lb]{\smash{{\SetFigFont{12}{14.4}{\rmdefault}{\mddefault}{\updefault}$y_2$}}}}
\put(1272,2202){\makebox(0,0)[lb]{\smash{{\SetFigFont{12}{14.4}{\rmdefault}{\mddefault}{\updefault}$z_3$}}}}
\put(372,402){\makebox(0,0)[lb]{\smash{{\SetFigFont{12}{14.4}{\rmdefault}{\mddefault}{\updefault}$x_1$}}}}
\put(867,87){\makebox(0,0)[lb]{\smash{{\SetFigFont{14}{16.8}{\rmdefault}{\mddefault}{\updefault}fig.
6a}}}}
\put(3612,1302){\makebox(0,0)[lb]{\smash{{\SetFigFont{12}{14.4}{\rmdefault}{\mddefault}{\updefault}$y_1$}}}}
\put(4062,402){\makebox(0,0)[lb]{\smash{{\SetFigFont{12}{14.4}{\rmdefault}{\mddefault}{\updefault}$x_2$}}}}
\put(4962,402){\makebox(0,0)[lb]{\smash{{\SetFigFont{12}{14.4}{\rmdefault}{\mddefault}{\updefault}$x_3$}}}}
\put(4512,1302){\makebox(0,0)[lb]{\smash{{\SetFigFont{12}{14.4}{\rmdefault}{\mddefault}{\updefault}$y_2$}}}}
\put(4062,2202){\makebox(0,0)[lb]{\smash{{\SetFigFont{12}{14.4}{\rmdefault}{\mddefault}{\updefault}$z_1$}}}}
\put(3162,402){\makebox(0,0)[lb]{\smash{{\SetFigFont{12}{14.4}{\rmdefault}{\mddefault}{\updefault}$x_1$}}}}
\put(3657,87){\makebox(0,0)[lb]{\smash{{\SetFigFont{14}{16.8}{\rmdefault}{\mddefault}{\updefault}fig.
6b}}}}
\put(6402,1302){\makebox(0,0)[lb]{\smash{{\SetFigFont{12}{14.4}{\rmdefault}{\mddefault}{\updefault}$y_1$}}}}
\put(6852,402){\makebox(0,0)[lb]{\smash{{\SetFigFont{12}{14.4}{\rmdefault}{\mddefault}{\updefault}$x_2$}}}}
\put(7752,402){\makebox(0,0)[lb]{\smash{{\SetFigFont{12}{14.4}{\rmdefault}{\mddefault}{\updefault}$x_3$}}}}
\put(7302,1302){\makebox(0,0)[lb]{\smash{{\SetFigFont{12}{14.4}{\rmdefault}{\mddefault}{\updefault}$y_2$}}}}
\put(6852,2202){\makebox(0,0)[lb]{\smash{{\SetFigFont{12}{14.4}{\rmdefault}{\mddefault}{\updefault}$z_1$}}}}
\put(5952,402){\makebox(0,0)[lb]{\smash{{\SetFigFont{12}{14.4}{\rmdefault}{\mddefault}{\updefault}$x_1$}}}}
\put(6447,87){\makebox(0,0)[lb]{\smash{{\SetFigFont{14}{16.8}{\rmdefault}{\mddefault}{\updefault}fig.
6c}}}} \path(7077,1617)(7122,1527)
\whiten\path(7041.502,1620.915)(7122.000,1527.000)(7095.167,1647.748)(7041.502,1620.915)
\path(6897,1977)(6942,1887)
\whiten\path(6861.502,1980.915)(6942.000,1887.000)(6915.167,2007.748)(6861.502,1980.915)
\path(6627,717)(6672,627)
\whiten\path(6591.502,720.915)(6672.000,627.000)(6645.167,747.748)(6591.502,720.915)
\path(6447,1077)(6492,987)
\whiten\path(6411.502,1080.915)(6492.000,987.000)(6465.167,1107.748)(6411.502,1080.915)
\path(7527,717)(7572,627)
\whiten\path(7491.502,720.915)(7572.000,627.000)(7545.167,747.748)(7491.502,720.915)
\path(7347,1077)(7392,987)
\whiten\path(7311.502,1080.915)(7392.000,987.000)(7365.167,1107.748)(7311.502,1080.915)
\end{picture}
} \\[8pt]
\end{figure}

\begin{figure}[h]
\noindent \setlength{\unitlength}{0.00075in}
\begingroup\makeatletter\ifx\SetFigFont\undefined%
\gdef\SetFigFont#1#2#3#4#5{%
  \reset@font\fontsize{#1}{#2pt}%
  \fontfamily{#3}\fontseries{#4}\fontshape{#5}%
  \selectfont}%
\fi\endgroup%
{\renewcommand{\dashlinestretch}{30}
\begin{picture}(7588,2817)(0,-10)
\path(1272,2247)(1497,2697) \path(1272,2247)(1497,2697)
\path(1497,2697)(1452,2607) \path(1497,2697)(1452,2607)
\whiten\path(1478.833,2727.748)(1452.000,2607.000)(1532.498,2700.915)(1478.833,2727.748)
\path(1722,1347)(1947,1797) \path(1722,1347)(1947,1797)
\path(1947,1797)(1902,1707) \path(1947,1797)(1902,1707)
\whiten\path(1928.833,1827.748)(1902.000,1707.000)(1982.498,1800.915)(1928.833,1827.748)
\path(2172,447)(2397,897) \path(2172,447)(2397,897)
\path(2397,897)(2352,807) \path(2397,897)(2352,807)
\whiten\path(2378.833,927.748)(2352.000,807.000)(2432.498,900.915)(2378.833,927.748)
\path(822,1347)(1092,1887)
\path(1065.167,1766.252)(1092.000,1887.000)(1011.502,1793.085)
\path(822,1347)(1272,2247) \path(1137,1977)(1092,1887)
\whiten\path(1118.833,2007.748)(1092.000,1887.000)(1172.498,1980.915)(1118.833,2007.748)
\path(957,1617)(912,1527)
\whiten\path(938.833,1647.748)(912.000,1527.000)(992.498,1620.915)(938.833,1647.748)
\path(372,447)(642,987)
\path(615.167,866.252)(642.000,987.000)(561.502,893.085)
\path(372,447)(822,1347) \path(687,1077)(642,987)
\whiten\path(668.833,1107.748)(642.000,987.000)(722.498,1080.915)(668.833,1107.748)
\path(507,717)(462,627)
\whiten\path(488.833,747.748)(462.000,627.000)(542.498,720.915)(488.833,747.748)
\path(1272,447)(1542,987)
\path(1515.167,866.252)(1542.000,987.000)(1461.502,893.085)
\path(1272,447)(1722,1347) \path(1587,1077)(1542,987)
\whiten\path(1568.833,1107.748)(1542.000,987.000)(1622.498,1080.915)(1568.833,1107.748)
\path(1407,717)(1362,627)
\whiten\path(1388.833,747.748)(1362.000,627.000)(1442.498,720.915)(1388.833,747.748)
\put(4242,1347){\blacken\ellipse{46}{46}}
\put(4242,1347){\ellipse{46}{46}}
\put(3342,1347){\blacken\ellipse{46}{46}}
\put(3342,1347){\ellipse{46}{46}}
\put(3792,2247){\blacken\ellipse{46}{46}}
\put(3792,2247){\ellipse{46}{46}} \path(4692,447)(4917,897)
\path(4692,447)(4917,897) \path(4917,897)(4872,807)
\path(4917,897)(4872,807)
\whiten\path(4898.833,927.748)(4872.000,807.000)(4952.498,900.915)(4898.833,927.748)
\path(4242,1347)(4467,1797) \path(4242,1347)(4467,1797)
\path(4467,1797)(4422,1707) \path(4467,1797)(4422,1707)
\whiten\path(4448.833,1827.748)(4422.000,1707.000)(4502.498,1800.915)(4448.833,1827.748)
\path(3792,2247)(4017,2697) \path(3792,2247)(4017,2697)
\path(4017,2697)(3972,2607) \path(4017,2697)(3972,2607)
\whiten\path(3998.833,2727.748)(3972.000,2607.000)(4052.498,2700.915)(3998.833,2727.748)
\path(4242,1347)(4512,807)
\path(4431.502,900.915)(4512.000,807.000)(4485.167,927.748)
\path(4242,1347)(4692,447) \path(3792,2247)(4062,1707)
\path(3981.502,1800.915)(4062.000,1707.000)(4035.167,1827.748)
\path(3792,2247)(4242,1347) \path(3342,1347)(3612,807)
\path(3531.502,900.915)(3612.000,807.000)(3585.167,927.748)
\path(3342,1347)(3792,447) \path(4242,1347)(3792,447)
\path(4242,1347)(3792,447) \path(4242,1347)(3972,807)
\path(4242,1347)(3972,807)
\whiten\path(3998.833,927.748)(3972.000,807.000)(4052.498,900.915)(4041.765,946.531)(3998.833,927.748)
\path(3792,2247)(3342,1347) \path(3792,2247)(3342,1347)
\path(3792,2247)(3522,1707) \path(3792,2247)(3522,1707)
\whiten\path(3548.833,1827.748)(3522.000,1707.000)(3602.498,1800.915)(3591.765,1846.531)(3548.833,1827.748)
\path(3342,1347)(2892,447) \path(3342,1347)(2892,447)
\path(3342,1347)(3072,807) \path(3342,1347)(3072,807)
\whiten\path(3098.833,927.748)(3072.000,807.000)(3152.498,900.915)(3141.765,946.531)(3098.833,927.748)
\put(3477,87){\makebox(0,0)[lb]{\smash{{\SetFigFont{14}{16.8}{\rmdefault}{\mddefault}{\updefault}fig.
6e}}}}
\put(3882,402){\makebox(0,0)[lb]{\smash{{\SetFigFont{12}{14.4}{\rmdefault}{\mddefault}{\updefault}$x_2$}}}}
\put(4782,402){\makebox(0,0)[lb]{\smash{{\SetFigFont{12}{14.4}{\rmdefault}{\mddefault}{\updefault}$x_3$}}}}
\put(3432,1347){\makebox(0,0)[lb]{\smash{{\SetFigFont{12}{14.4}{\rmdefault}{\mddefault}{\updefault}$y_1$}}}}
\put(4332,1347){\makebox(0,0)[lb]{\smash{{\SetFigFont{12}{14.4}{\rmdefault}{\mddefault}{\updefault}$y_2$}}}}
\put(3882,2202){\makebox(0,0)[lb]{\smash{{\SetFigFont{12}{14.4}{\rmdefault}{\mddefault}{\updefault}$z_1$}}}}
\put(2982,402){\makebox(0,0)[lb]{\smash{{\SetFigFont{12}{14.4}{\rmdefault}{\mddefault}{\updefault}$x_1$}}}}
\put(6852,1842){\blacken\ellipse{46}{46}}
\put(6852,1842){\ellipse{46}{46}}
\put(5952,1842){\blacken\ellipse{46}{46}}
\put(5952,1842){\ellipse{46}{46}}
\put(6402,942){\blacken\ellipse{46}{46}}
\put(6402,942){\ellipse{46}{46}} \path(7302,2742)(7527,2292)
\path(7302,2742)(7527,2292) \path(7527,2292)(7482,2382)
\path(7527,2292)(7482,2382)
\whiten\path(7562.498,2288.085)(7482.000,2382.000)(7508.833,2261.252)(7562.498,2288.085)
\path(6852,1842)(7077,1392) \path(6852,1842)(7077,1392)
\path(7077,1392)(7032,1482) \path(7077,1392)(7032,1482)
\whiten\path(7112.498,1388.085)(7032.000,1482.000)(7058.833,1361.252)(7112.498,1388.085)
\path(6402,942)(6627,492) \path(6402,942)(6627,492)
\path(6627,492)(6582,582) \path(6627,492)(6582,582)
\whiten\path(6662.498,488.085)(6582.000,582.000)(6608.833,461.252)(6662.498,488.085)
\path(6852,1842)(7122,2382)
\path(7095.167,2261.252)(7122.000,2382.000)(7041.502,2288.085)
\path(6852,1842)(7302,2742) \path(6402,942)(6672,1482)
\path(6645.167,1361.252)(6672.000,1482.000)(6591.502,1388.085)
\path(6402,942)(6852,1842) \path(5952,1842)(6222,2382)
\path(6195.167,2261.252)(6222.000,2382.000)(6141.502,2288.085)
\path(5952,1842)(6402,2742) \path(6852,1842)(6402,2742)
\path(6852,1842)(6402,2742) \path(6852,1842)(6582,2382)
\path(6852,1842)(6582,2382)
\whiten\path(6662.498,2288.085)(6582.000,2382.000)(6608.833,2261.252)(6651.765,2242.469)(6662.498,2288.085)
\path(6402,942)(5952,1842) \path(6402,942)(5952,1842)
\path(6402,942)(6132,1482) \path(6402,942)(6132,1482)
\whiten\path(6212.498,1388.085)(6132.000,1482.000)(6158.833,1361.252)(6201.765,1342.469)(6212.498,1388.085)
\path(5952,1842)(5502,2742) \path(5952,1842)(5502,2742)
\path(5952,1842)(5682,2382) \path(5952,1842)(5682,2382)
\whiten\path(5762.498,2288.085)(5682.000,2382.000)(5708.833,2261.252)(5751.765,2242.469)(5762.498,2288.085)
\put(6087,87){\makebox(0,0)[lb]{\smash{{\SetFigFont{14}{16.8}{\rmdefault}{\mddefault}{\updefault}fig.
6f}}}}
\put(6492,2697){\makebox(0,0)[lb]{\smash{{\SetFigFont{12}{14.4}{\rmdefault}{\mddefault}{\updefault}$x_2$}}}}
\put(7392,2697){\makebox(0,0)[lb]{\smash{{\SetFigFont{12}{14.4}{\rmdefault}{\mddefault}{\updefault}$x_3$}}}}
\put(6042,1797){\makebox(0,0)[lb]{\smash{{\SetFigFont{12}{14.4}{\rmdefault}{\mddefault}{\updefault}$y_1$}}}}
\put(6942,1797){\makebox(0,0)[lb]{\smash{{\SetFigFont{12}{14.4}{\rmdefault}{\mddefault}{\updefault}$y_2$}}}}
\put(6492,897){\makebox(0,0)[lb]{\smash{{\SetFigFont{12}{14.4}{\rmdefault}{\mddefault}{\updefault}$z_1$}}}}
\put(5592,2697){\makebox(0,0)[lb]{\smash{{\SetFigFont{12}{14.4}{\rmdefault}{\mddefault}{\updefault}$x_1$}}}}
\put(822,1347){\blacken\ellipse{46}{46}}
\put(822,1347){\ellipse{46}{46}}
\put(1722,1347){\blacken\ellipse{46}{46}}
\put(1722,1347){\ellipse{46}{46}}
\put(1272,2247){\blacken\ellipse{46}{46}}
\put(1272,2247){\ellipse{46}{46}} \path(822,1347)(1092,807)
\path(1011.502,900.915)(1092.000,807.000)(1065.167,927.748)
\path(1722,1347)(1992,807)
\path(1911.502,900.915)(1992.000,807.000)(1965.167,927.748)
\path(1272,2247)(1542,1707)
\path(1461.502,1800.915)(1542.000,1707.000)(1515.167,1827.748)
\path(1272,447)(822,1347) \path(2172,447)(1722,1347)
\path(1722,1347)(1272,2247) \path(2397,1527)(2892,1527)
\blacken\path(2772.000,1497.000)(2892.000,1527.000)(2772.000,1557.000)(2808.000,1527.000)(2772.000,1497.000)
\path(5007,1527)(5502,1527)
\blacken\path(5382.000,1497.000)(5502.000,1527.000)(5382.000,1557.000)(5418.000,1527.000)(5382.000,1497.000)
\path(12,1527)(507,1527)
\blacken\path(387.000,1497.000)(507.000,1527.000)(387.000,1557.000)(423.000,1527.000)(387.000,1497.000)
\put(912,1302){\makebox(0,0)[lb]{\smash{{\SetFigFont{12}{14.4}{\rmdefault}{\mddefault}{\updefault}$y_1$}}}}
\put(1362,402){\makebox(0,0)[lb]{\smash{{\SetFigFont{12}{14.4}{\rmdefault}{\mddefault}{\updefault}$x_2$}}}}
\put(2262,402){\makebox(0,0)[lb]{\smash{{\SetFigFont{12}{14.4}{\rmdefault}{\mddefault}{\updefault}$x_3$}}}}
\put(1812,1302){\makebox(0,0)[lb]{\smash{{\SetFigFont{12}{14.4}{\rmdefault}{\mddefault}{\updefault}$y_2$}}}}
\put(1362,2202){\makebox(0,0)[lb]{\smash{{\SetFigFont{12}{14.4}{\rmdefault}{\mddefault}{\updefault}$z_1$}}}}
\put(462,402){\makebox(0,0)[lb]{\smash{{\SetFigFont{12}{14.4}{\rmdefault}{\mddefault}{\updefault}$x_1$}}}}
\put(957,87){\makebox(0,0)[lb]{\smash{{\SetFigFont{14}{16.8}{\rmdefault}{\mddefault}{\updefault}fig.
6d}}}}
\end{picture}
}
\end{figure}

To obtain a graphical representation for the integral in~\eqref{A_orthog} we should join the diagrams
shown in the figs.~6a for $\psi_{\gamma}(x)$ and 6f for
$\overline{\psi_{\gamma}(x)}$ in the points $x_1,\ldots,x_N$ and
integrating over them. The diagram obtained like this
in the case $N=4$ is pictured in fig.~8a. Further we shall
simplify it using the following equalities.

\begin{lem}
 The equalities represented in the figures~7a and 7b are valid. 
\end{lem}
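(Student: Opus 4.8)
The plan is to prove each of the two claimed graphical identities exactly as the identities of figures~3a--3c were proved: by translating the picture into an explicit one-dimensional integral over the single internal vertex that is to be eliminated, and then evaluating that integral in closed form. First I would substitute the definitions of the elementary building blocks --- the line $I(x,y)=\exp\{-\frac1\hbar e^{x-y}\}$, the function $J_u(x,y)=\exp\{\frac i\hbar u(x-y)-\frac1\hbar e^{y-x}\}$, the vertices $Z_u$, $Z_u^{-1}$, and $Y_u(x,z)=(1+e^{x-z})^{iu/\hbar}$ --- into both sides of the equality encoded by figure~7a, and then into figure~7b, reducing each graphical statement to an identity between integrals of products of exponentials of $e^{\pm y}$ (and, where the $Y$-vertices are present, of powers of $1+e^{y}$).

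The key analytic input is precisely the integral representation already used to pass to \eqref{A_rhsf3a}, namely
\begin{align*}
 \int_{-\infty}^{+\infty}dy\,\exp\Big\{\tfrac{i}{\hbar}\nu y-\tfrac{1}{\hbar} a\,e^{y}-\tfrac{1}{\hbar} b\,e^{-y}\Big\}
  =2\Big(\frac{b}{a}\Big)^{\frac{i\nu}{2\hbar}}K_{\frac{i\nu}{\hbar}}\Big(\tfrac{2}{\hbar}\sqrt{ab}\Big),
\end{align*}
where $K_\nu$ is the Macdonald function, together with the Euler beta integral
\begin{align*}
 \int_{-\infty}^{+\infty}dx\,\frac{e^{i s x/\hbar}}{(1+e^{x})^{i(s+s')/\hbar}}
  =\frac{\Gamma\!\big(\tfrac{is}{\hbar}\big)\,\Gamma\!\big(\tfrac{is'}{\hbar}\big)}
        {\Gamma\!\big(\tfrac{i(s+s')}{\hbar}\big)},
\end{align*}
which is the mechanism by which the $\Gamma$-factors of the Sklyanin measure~\eqref{A_measure} are generated. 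For figure~7a I expect the integration over the internal variable to fall into the first (Macdonald) case, so that the ``bubble'' collapses onto a single effective line carrying the appropriate $Y$-factors; matching the prefactors $(b/a)^{\pm i\nu/2\hbar}$ and the arguments of $K$ on the two sides, and then taking their ratio, yields the claimed $Y$-vertices, exactly as in the passage from \eqref{A_rhsf3a}--\eqref{A_lhsf3a}. For figure~7b I expect the $Y$-vertices to combine so that the relevant integral is of beta type, producing the ratio of $\Gamma$-functions that ultimately assembles the measure.

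Once these two elementary identities are in hand, the surrounding computation proceeds by feeding them into the doubled diagram of figure~8a --- the integrand of \eqref{A_orthog} obtained by gluing figure~6a for $\psi_\gamma$ to figure~6f for $\overline{\psi_\gamma}$ along $x_1,\dots,x_N$ and integrating. Repeated application of figures~7a and~7b unzips this diagram layer by layer, each step either collapsing a Macdonald bubble or emitting one $\Gamma$-factor, until only the data forcing $\gamma$ and $\gamma'$ to coincide up to a permutation, together with the product over $k<m$, survive. In this way the present lemma is exactly the elementary step on which Theorem~\ref{A_orthog_meas} rests.

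The main obstacle I anticipate is not any single integral but the bookkeeping of factors through each collapse: one must carry the phases $Z_u^{\pm1}$, the powers of the $Y$-vertices, and the spectral-parameter shifts $\gamma_{n-1}-\gamma_n$ correctly, and check convergence of the Macdonald- and beta-type integrals in the relevant parameter range --- which is why, at $\gamma=\gamma'$, the identities must be read as analytic continuations, i.e.\ in the distributional sense. Verifying that the accumulated prefactors reassemble into precisely $\mu^{-1}(\gamma)\,\delta_{SYM}(\gamma,\gamma')$ with the normalisation $(2\pi\hbar)^{-N}/N!$ is the delicate point, and the two equalities of figures~7a and~7b are designed exactly to make each elementary step of that reassembly transparent.
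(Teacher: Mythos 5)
Your overall strategy --- substitute the explicit building blocks and evaluate the resulting one-dimensional integral over the internal vertex in closed form --- is indeed how the paper proves this lemma, but the analytic mechanism you assign to each figure is wrong, and following it the argument would stall. In figure~7a the integrand is $Z_{\gamma_k}(x)I(x,y)J_{\gamma'_j}(y',x)$, whose exponent contains only the terms $e^{x-y}+e^{x-y'}=(e^{-y}+e^{-y'})\,e^{x}$, i.e.\ exponentials of a \emph{single} sign in the integration variable; likewise in figure~7b the exponent contains only $e^{y-x}+e^{y'-x}$. Neither integral is therefore of Macdonald type (which requires both $e^{x}$ and $e^{-x}$ terms, as in figure~3a), nor of Euler beta type (no $Y$-factors occur in either integrand). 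Both are plain Euler $\Gamma$-integrals: after the substitution $t=\hbar^{-1}(e^{-y}+e^{-y'})e^{x}$ one gets
\begin{align*}
\int_{-\infty}^{+\infty}dx\,e^{\frac{i}{\hbar}\nu x}\,e^{-\frac{1}{\hbar}(e^{-y}+e^{-y'})e^{x}}
=\Big(\frac{\hbar}{e^{-y}+e^{-y'}}\Big)^{\frac{i\nu}{\hbar}}\,\Gamma\Big(\frac{i\nu}{\hbar}\Big),
\end{align*}
and factoring $e^{-y}+e^{-y'}=e^{-y'}\bigl(1+e^{y'-y}\bigr)$ produces exactly the $Z$- and $Y$-vertices and the single $\Gamma$-prefactor claimed in the figures. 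This elementary $\Gamma$-integral, not a beta integral, is the source of the $\Gamma$-factors that assemble into the Sklyanin measure~\eqref{A_measure}.

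Moreover, the ratio-matching argument you import from the proof of figure~3a cannot even be set up here: figure~7a (and 7b) is not an equality between two integrals whose Macdonald evaluations are compared and divided; its right-hand side is already an explicit elementary expression (a $\Gamma$-factor times $Z$ times $Y$), so the left-hand side must be evaluated outright --- which a genuine Macdonald function would not permit, and which is possible precisely because one exponential direction is absent. Also note that figure~7b yields a single $\Gamma$-factor, not the ratio $\Gamma\Gamma/\Gamma$ a beta integral would give. Once the two identities are established as $\Gamma$-integrals, your description of how they unzip figure~8a to produce $\mu^{-1}(\gamma)\,\delta_{SYM}(\gamma,\gamma')$, with the identities read distributionally at coinciding spectral parameters, is consistent with what the paper does in Theorem~\ref{A_orthog_meas}.
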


\begin{figure}[h]
\hspace{\fill} \setlength{\unitlength}{0.00087489in}
\begingroup\makeatletter\ifx\SetFigFont\undefined%
\gdef\SetFigFont#1#2#3#4#5{%
  \reset@font\fontsize{#1}{#2pt}%
  \fontfamily{#3}\fontseries{#4}\fontshape{#5}%
  \selectfont}%
\fi\endgroup%
{\renewcommand{\dashlinestretch}{30}
\begin{picture}(4366,2565)(0,-10)

\put(3210,1122){\makebox(0,0)[lb]{\smash{{\SetFigFont{10}{14.4}{\rmdefault}{\mddefault}{\updefault}$\gamma_k$}}}}
\put(3750,1572){\makebox(0,0)[lb]{\smash{{\SetFigFont{10}{12.0}{\rmdefault}{\mddefault}{\updefault}\makebox[0.8\width][s]{$(\gamma'_j-\gamma_k)$}}}}}
\path(3660,672)(3390,1212) \path(3660,672)(3390,1212)
\whiten\path(3470.498,1118.085)(3390.000,1212.000)(3416.833,1091.252)(3470.498,1118.085)
\path(3660,672)(3660,2472) \path(3660,672)(3660,2472)
\path(3660,672)(3660,1707) \path(3660,672)(3660,1707)
\whiten\path(3690.000,1587.000)(3660.000,1707.000)(3630.000,1587.000)(3660.000,1623.000)(3690.000,1587.000)
\put(3750,2427){\makebox(0,0)[lb]{\smash{{\SetFigFont{12}{14.4}{\rmdefault}{\mddefault}{\updefault}$y$}}}}
\put(3750,627){\makebox(0,0)[lb]{\smash{{\SetFigFont{12}{14.4}{\rmdefault}{\mddefault}{\updefault}$y'$}}}}
\put(15,2022){\makebox(0,0)[lb]{\smash{{\SetFigFont{10}{14.4}{\rmdefault}{\mddefault}{\updefault}$\gamma_k$}}}}
\put(555,1527){\makebox(0,0)[lb]{\smash{{\SetFigFont{12}{14.4}{\rmdefault}{\mddefault}{\updefault}$x$}}}}
\path(465,1572)(195,2112) \path(465,1572)(195,2112)
\whiten\path(275.498,2018.085)(195.000,2112.000)(221.833,1991.252)(275.498,2018.085)
\path(465,1572)(735,2112)
\path(708.167,1991.252)(735.000,2112.000)(654.502,2018.085)
\path(465,1572)(915,2472) \path(465,1572)(735,2112)
\path(708.167,1991.252)(735.000,2112.000)(654.502,2018.085)
\path(465,1572)(915,2472) \path(915,672)(465,1572)
\path(915,672)(465,1572) \path(915,672)(645,1212)
\path(915,672)(645,1212)
\whiten\path(725.498,1118.085)(645.000,1212.000)(671.833,1091.252)(714.765,1072.469)(725.498,1118.085)
\put(465,1122){\makebox(0,0)[lb]{\smash{{\SetFigFont{10}{14.4}{\rmdefault}{\mddefault}{\updefault}$\gamma'_j$}}}}
\put(1005,2427){\makebox(0,0)[lb]{\smash{{\SetFigFont{12}{14.4}{\rmdefault}{\mddefault}{\updefault}$y$}}}}
\put(1005,627){\makebox(0,0)[lb]{\smash{{\SetFigFont{12}{14.4}{\rmdefault}{\mddefault}{\updefault}$y'$}}}}
\put(465,1572){\blacken\ellipse{46}{46}}
\put(465,1572){\ellipse{46}{46}}
\put(1320,1527){\makebox(0,0)[lb]{\smash{{\SetFigFont{14}{16.8}{\rmdefault}{\mddefault}{\updefault}=}}}}
\put(1590,1527){\makebox(0,0)[lb]{\smash{{\SetFigFont{14}{16.8}{\rmdefault}{\mddefault}{\updefault}$\hbar^{\frac{\gamma'_j-\gamma_k}{i\hbar}}\Gamma\Bigl(\dfrac{\gamma'_j-\gamma_k}{i\hbar}\Bigr)$}}}}
\put(1995,87){\makebox(0,0)[lb]{\smash{{\SetFigFont{14}{16.8}{\rmdefault}{\mddefault}{\updefault}fig.
7a}}}}
\end{picture}
} \hspace{\fill} 
\end{figure}

\begin{figure}[h]
\hspace{31mm} \setlength{\unitlength}{0.00087489in}
\begingroup\makeatletter\ifx\SetFigFont\undefined%
\gdef\SetFigFont#1#2#3#4#5{%
  \reset@font\fontsize{#1}{#2pt}%
  \fontfamily{#3}\fontseries{#4}\fontshape{#5}%
  \selectfont}%
\fi\endgroup%
{\renewcommand{\dashlinestretch}{30}
\begin{picture}(4051,2430)(0,-10)
\put(1095,1392){\makebox(0,0)[lb]{\smash{{\SetFigFont{14}{16.8}{\rmdefault}{\mddefault}{\updefault}=}}}}
\put(1365,1392){\makebox(0,0)[lb]{\smash{{\SetFigFont{14}{16.8}{\rmdefault}{\mddefault}{\updefault}$\hbar^{\frac{\gamma_k-\gamma'_j}{i\hbar}}\Gamma\Bigl(\dfrac{\gamma_k-\gamma'_j}{i\hbar}\Bigr)$}}}}
\path(3345,537)(3345,2337) \path(3345,537)(3345,2337)
\path(3345,537)(3345,1572) \path(3345,537)(3345,1572)
\whiten\path(3375.000,1452.000)(3345.000,1572.000)(3315.000,1452.000)(3345.000,1488.000)(3375.000,1452.000)
\put(3660,1842){\makebox(0,0)[lb]{\smash{{\SetFigFont{10}{14.4}{\rmdefault}{\mddefault}{\updefault}$\gamma'_j$}}}}
\path(3345,2337)(3570,1887) \path(3345,2337)(3570,1887)
\path(465,1437)(15,2337) \path(465,1437)(15,2337)
\path(465,1437)(195,1977) \path(465,1437)(195,1977)
\whiten\path(275.498,1883.085)(195.000,1977.000)(221.833,1856.252)(264.765,1837.469)(275.498,1883.085)
\put(15,1887){\makebox(0,0)[lb]{\smash{{\SetFigFont{10}{14.4}{\rmdefault}{\mddefault}{\updefault}$\gamma_k$}}}}
\path(15,537)(285,1077)
\path(258.167,956.252)(285.000,1077.000)(204.502,983.085)
\path(15,537)(465,1437)
\put(465,987){\makebox(0,0)[lb]{\smash{{\SetFigFont{10}{14.4}{\rmdefault}{\mddefault}{\updefault}$\gamma'_j$}}}}
\put(105,2292){\makebox(0,0)[lb]{\smash{{\SetFigFont{12}{14.4}{\rmdefault}{\mddefault}{\updefault}$y$}}}}
\put(105,492){\makebox(0,0)[lb]{\smash{{\SetFigFont{12}{14.4}{\rmdefault}{\mddefault}{\updefault}$y'$}}}}
\put(555,1392){\makebox(0,0)[lb]{\smash{{\SetFigFont{12}{14.4}{\rmdefault}{\mddefault}{\updefault}$x$}}}}
\put(465,1437){\blacken\ellipse{46}{46}}
\put(465,1437){\ellipse{46}{46}} \path(465,1437)(690,987)
\path(465,1437)(690,987) \path(690,987)(645,1077)
\path(690,987)(645,1077)
\whiten\path(725.498,983.085)(645.000,1077.000)(671.833,956.252)(725.498,983.085)
\put(1725,87){\makebox(0,0)[lb]{\smash{{\SetFigFont{14}{16.8}{\rmdefault}{\mddefault}{\updefault}fig.
7b}}}}
\put(3435,2292){\makebox(0,0)[lb]{\smash{{\SetFigFont{12}{14.4}{\rmdefault}{\mddefault}{\updefault}$y$}}}}
\put(3435,1437){\makebox(0,0)[lb]{\smash{{\SetFigFont{10}{12.0}{\rmdefault}{\mddefault}{\updefault}\makebox[0.8\width][s]{$(\gamma_k-\gamma'_j)$}}}}}
\put(3435,492){\makebox(0,0)[lb]{\smash{{\SetFigFont{12}{14.4}{\rmdefault}{\mddefault}{\updefault}$y'$}}}}
\path(3570,1887)(3525,1977) \path(3570,1887)(3525,1977)
\whiten\path(3605.498,1883.085)(3525.000,1977.000)(3551.833,1856.252)(3605.498,1883.085)
\end{picture}
} \hspace{\fill} 
\end{figure}

\noindent{\bfseries Proof.} The fig.~7a and 7b mean the following
equalities, which are consequence of the integral representation
of the $\Gamma$-function
\begin{multline*}
 \int\limits_{-\infty}^{+\infty} dx\, Z_{\gamma_k}(x)I(x,y)J_{\gamma'_j}(y',x)=
 \int\limits_{-\infty}^{+\infty} dx\, \exp\Bigl\{\frac{i}{\hbar}[(\gamma_k-\gamma'_j)x+\gamma'_j y']
   -\frac{1}{\hbar}(e^{x-y}+e^{x-y'})\Bigr\}=\\
 =e^{\frac{i}{\hbar}\gamma'_j y'}
  \left(\frac{\hbar}{e^{-y}+e^{-y'}}\right)^{\frac{i}{\hbar}(\gamma_k-\gamma'_j)}
  \Gamma\Bigl(\frac{\gamma'_j-\gamma_k}{i\hbar}\Bigr)=
  \hbar^{\frac{\gamma'_j-\gamma_k}{i\hbar}}\Gamma\Bigl(\frac{\gamma'_j-\gamma_k}{i\hbar}\Bigr)
  Z_{\gamma_k}(y') Y_{\gamma'_j-\gamma_k}(y',y),
\end{multline*}
\begin{multline*}
 \int\limits_{-\infty}^{+\infty} dx\, J_{\gamma_k}(x,y)I(y',x)Z_{\gamma'_j}^{-1}(x)=
 \int\limits_{-\infty}^{+\infty} dx\, \exp\Bigl\{\frac{i}{\hbar}[(\gamma_k-\gamma'_j)x-\gamma_k y]
   -\frac{1}{\hbar}(e^{y-x}+e^{y'-x})\Bigr\}=\\
 =e^{-\frac{i}{\hbar}\gamma_k y}
  \left(\frac{\hbar}{e^{y}+e^{y'}}\right)^{\frac{i}{\hbar}(\gamma'_j-\gamma_k)}
  \Gamma\Bigl(\frac{\gamma_k-\gamma'_j}{i\hbar}\Bigr)=
  \hbar^{\frac{\gamma_k-\gamma'_j}{i\hbar}}\Gamma\Bigl(\frac{\gamma_k-\gamma'_j}{i\hbar}\Bigr)
  Y_{\gamma_k-\gamma'_j}(y',y) Z_{\gamma'_j}(y).
\end{multline*} \\

\qed \\

The integral represented graphically in the fig.~8a 
can be calculated by induction. We sequentially integrate over the
boundary points connected with $\gamma_1$, $\gamma'_1$. First, we
integrate over the very left point and the very right point, i.e.
over $x_1$ and $x_N$, using the fig.~7a and 7b respectively, and
we obtain the fig.~8b. Like fig.~4, the vertical line arising in
the left side is moved to right, where it is annihilated by the
line arising from the right side. This process exchanges
$\gamma_1$ with $\gamma'_1$. After the integration the factor
\begin{equation} \label{A_fac_1}
 \Gamma\Bigl(\frac{\gamma'_1-\gamma_1}{i\hbar}\Bigr)
 \Gamma\Bigl(\frac{\gamma_1-\gamma'_1}{i\hbar}\Bigr)
\end{equation}
arises, and thus one obtains the fig.~8c, where $\gamma_1$,
$\gamma'_1$ begin to be connected with another boundary points
(boundary in sense of fig.~8a), 
Now we integrate over these points. In $k$-th integration
($k=2,\ldots,N-1$) $\gamma_1$ exchanges with $\gamma'_k$ and
$\gamma'_1$ exchanges with $\gamma_k$ and one has the factor
\begin{equation} \label{A_fac_k}
 \Gamma\Bigl(\frac{\gamma'_1-\gamma_k}{i\hbar}\Bigr)
 \Gamma\Bigl(\frac{\gamma_k-\gamma'_1}{i\hbar}\Bigr)
 \Gamma\Bigl(\frac{\gamma'_k-\gamma_1}{i\hbar}\Bigr)
 \Gamma\Bigl(\frac{\gamma_1-\gamma'_k}{i\hbar}\Bigr).
\end{equation}
The $N$-th integration leads to the factor
\begin{equation} \label{A_fac_N}
 (2\pi\hbar)\delta(\gamma'_1-\gamma_N)\cdot(2\pi\hbar)\delta(\gamma'_N-\gamma_1).
\end{equation}
After this the variables $\gamma_1$, $\gamma'_1$, $\gamma_N$ and
$\gamma'_N$ disappear completely from diagram -- they have gone
away to the factors~\eqref{A_fac_1}, \eqref{A_fac_k} and
\eqref{A_fac_N}. The remaining diagram (the middle part of fig.~8e)
is exactly the initial diagram, but for $N-2$ which is depend on
$\gamma_2,\ldots,\gamma_{N-1}$, $\gamma'_2,\ldots,\gamma'_{N-1}$.
Thereby we have obtained the recurrent formula
\begin{multline*} 
 \int\limits_{\mathbb R^N}dx_1\ldots dx_N\,\overline{\psi_{\gamma_1,\ldots,\gamma_N}(x_1,\ldots,x_N)}
                                 \psi_{\gamma'_1,\ldots,\gamma'_N}(x_1,\ldots,x_N)=
 \Gamma\Bigl(\frac{\gamma'_1-\gamma_1}{i\hbar}\Bigr)\Gamma\Bigl(\frac{\gamma_1-\gamma'_1}{i\hbar}\Bigr)\times \\
 \times\prod_{k=2}^{N-1}\Gamma\Bigl(\frac{\gamma'_1-\gamma_k}{i\hbar}\Bigr)\Gamma\Bigl(\frac{\gamma_k-\gamma'_1}{i\hbar}\Bigr)
 \Gamma\Bigl(\frac{\gamma'_k-\gamma_1}{i\hbar}\Bigr)\Gamma\Bigl(\frac{\gamma_1-\gamma'_k}{i\hbar}\Bigr)\cdot
 (2\pi\hbar)\delta(\gamma'_1-\gamma_N)\times \\
 \times(2\pi\hbar)\delta(\gamma'_N-\gamma_1)
 \int\limits_{\mathbb R^{N-2}}dx_2\ldots dx_{N-1}\,\overline{\psi_{\gamma_2,\ldots,\gamma_{N-1}}(x_2,\ldots,x_{N-1})}
                                     \psi_{\gamma'_2,\ldots,\gamma'_{N-1}}(x_2,\ldots,x_{N-1}).
\end{multline*}
Continuing the calculation for the integral~\eqref{A_orthog} in the same
manner we obtain the following result.
\begin{multline} \label{A_int_psi}
 \int\limits_{\mathbb R^N}dx_1\ldots dx_N\,\overline{\psi_{\gamma_1,\ldots,\gamma_N}(x_1,\ldots,x_N)}
                                 \psi_{\gamma'_1,\ldots,\gamma'_N}(x_1,\ldots,x_N)=\\
 =\prod_{\substack{k,j=1\\k+j\le N}}^{N-1}\left[\Gamma\Bigl(\frac{\gamma'_j-\gamma_k}{i\hbar}\Bigr)
 \Gamma\Bigl(\frac{\gamma_k-\gamma'_j}{i\hbar}\Bigr)\right] \prod_{j=1}^N (2\pi\hbar) \delta(\gamma'_j-\gamma_{N+1-j}).
\end{multline}

\begin{figure}[h]
\noindent\hspace{-3mm} \setlength{\unitlength}{0.0004in} 
\begingroup\makeatletter\ifx\SetFigFont\undefined%
\gdef\SetFigFont#1#2#3#4#5{%
  \reset@font\fontsize{#1}{#2pt}%
  \fontfamily{#3}\fontseries{#4}\fontshape{#5}%
  \selectfont}%
\fi\endgroup%
{\renewcommand{\dashlinestretch}{30}
\begin{picture}(8272,7240)(0,-10)
\path(2265,3952)(2535,4492)
\path(2508.167,4371.252)(2535.000,4492.000)(2454.502,4398.085)
\path(2265,3952)(2715,4852) \path(2715,4852)(2265,5752)
\path(2715,4852)(2265,5752) \path(2715,4852)(2445,5392)
\path(2715,4852)(2445,5392)
\whiten\path(2525.498,5298.085)(2445.000,5392.000)(2471.833,5271.252)(2514.765,5252.469)(2525.498,5298.085)
\path(915,4852)(1185,5392)
\path(1158.167,5271.252)(1185.000,5392.000)(1104.502,5298.085)
\path(915,4852)(1365,5752) \path(1365,5752)(1635,6292)
\path(1608.167,6171.252)(1635.000,6292.000)(1554.502,6198.085)
\path(1365,5752)(1815,6652) \path(2265,5752)(1815,6652)
\path(2265,5752)(1815,6652) \path(2265,5752)(1995,6292)
\path(2265,5752)(1995,6292)
\whiten\path(2075.498,6198.085)(1995.000,6292.000)(2021.833,6171.252)(2064.765,6152.469)(2075.498,6198.085)
\put(-85,4402){\makebox(0,0)[lb]{\smash{{\SetFigFont{8}{14.4}{\rmdefault}{\mddefault}{\updefault}$\gamma_1$}}}}
\path(465,3952)(195,4492) \path(465,3952)(195,4492)
\whiten\path(275.498,4398.085)(195.000,4492.000)(221.833,4371.252)(275.498,4398.085)
\path(1815,4852)(2085,5392)
\path(2058.167,5271.252)(2085.000,5392.000)(2004.502,5298.085)
\path(1815,4852)(2265,5752)
\path(1815,4852)(1365,5752) \path(1815,4852)(1365,5752)
\path(1815,4852)(1545,5392) \path(1815,4852)(1545,5392)
\whiten\path(1625.498,5298.085)(1545.000,5392.000)(1571.833,5271.252)(1614.765,5252.469)(1625.498,5298.085)
\path(1365,3952)(1635,4492)
\path(1608.167,4371.252)(1635.000,4492.000)(1554.502,4398.085)
\path(1365,3952)(1815,4852) \path(2265,3952)(1815,4852)
\path(2265,3952)(1815,4852) \path(2265,3952)(1995,4492)
\path(2265,3952)(1995,4492)
\whiten\path(2075.498,4398.085)(1995.000,4492.000)(2021.833,4371.252)(2064.765,4352.469)(2075.498,4398.085)
\put(1815,4852){\blacken\ellipse{46}{46}}
\put(1815,4852){\ellipse{46}{46}}
\path(1365,3952)(915,4852) \path(1365,3952)(915,4852)
\path(1365,3952)(1095,4492) \path(1365,3952)(1095,4492)
\whiten\path(1175.498,4398.085)(1095.000,4492.000)(1121.833,4371.252)(1164.765,4352.469)(1175.498,4398.085)
\put(915,4852){\blacken\ellipse{46}{46}}
\put(915,4852){\ellipse{46}{46}} \path(465,3952)(735,4492)
\path(708.167,4371.252)(735.000,4492.000)(654.502,4398.085)
\path(465,3952)(915,4852) \path(465,3952)(735,4492)
\path(708.167,4371.252)(735.000,4492.000)(654.502,4398.085)
\path(465,3952)(915,4852) \path(915,3052)(1185,3592)
\path(1158.167,3471.252)(1185.000,3592.000)(1104.502,3498.085)
\path(915,3052)(1365,3952) \path(1815,3052)(1365,3952)
\path(1815,3052)(1365,3952) \path(1815,3052)(1545,3592)
\path(1815,3052)(1545,3592)
\whiten\path(1625.498,3498.085)(1545.000,3592.000)(1571.833,3471.252)(1614.765,3452.469)(1625.498,3498.085)
\path(1815,3052)(2085,3592)
\path(2058.167,3471.252)(2085.000,3592.000)(2004.502,3498.085)
\path(1815,3052)(2265,3952) \path(2715,3052)(2985,3592)
\path(2958.167,3471.252)(2985.000,3592.000)(2904.502,3498.085)
\path(2715,3052)(3165,3952) \path(2715,3052)(2265,3952)
\path(2715,3052)(2265,3952) \path(2715,3052)(2445,3592)
\path(2715,3052)(2445,3592)
\whiten\path(2525.498,3498.085)(2445.000,3592.000)(2471.833,3471.252)(2514.765,3452.469)(2525.498,3498.085)
\path(2265,2152)(2535,2692)
\path(2508.167,2571.252)(2535.000,2692.000)(2454.502,2598.085)
\path(2265,2152)(2715,3052) \path(2265,2152)(1815,3052)
\path(2265,2152)(1815,3052) \path(2265,2152)(1995,2692)
\path(2265,2152)(1995,2692)
\whiten\path(2075.498,2598.085)(1995.000,2692.000)(2021.833,2571.252)(2064.765,2552.469)(2075.498,2598.085)
\path(1365,2152)(1635,2692)
\path(1608.167,2571.252)(1635.000,2692.000)(1554.502,2598.085)
\path(1365,2152)(1815,3052) \path(1365,2152)(915,3052)
\path(1365,2152)(915,3052) \path(1365,2152)(1095,2692)
\path(1365,2152)(1095,2692)
\whiten\path(1175.498,2598.085)(1095.000,2692.000)(1121.833,2571.252)(1164.765,2552.469)(1175.498,2598.085)
\path(1815,1252)(2085,1792)
\path(2058.167,1671.252)(2085.000,1792.000)(2004.502,1698.085)
\path(1815,1252)(2265,2152) \path(1815,1252)(1365,2152)
\path(1815,1252)(1365,2152) \path(1815,1252)(1545,1792)
\path(1815,1252)(1545,1792)
\whiten\path(1625.498,1698.085)(1545.000,1792.000)(1571.833,1671.252)(1614.765,1652.469)(1625.498,1698.085)
\path(3165,3952)(2715,4852) \path(3165,3952)(2715,4852)
\path(3165,3952)(2895,4492) \path(3165,3952)(2895,4492)
\whiten\path(2975.498,4398.085)(2895.000,4492.000)(2921.833,4371.252)(2964.765,4352.469)(2975.498,4398.085)
\path(915,3052)(465,3952) \path(915,3052)(465,3952)
\path(915,3052)(645,3592) \path(915,3052)(645,3592)
\whiten\path(725.498,3498.085)(645.000,3592.000)(671.833,3471.252)(714.765,3452.469)(725.498,3498.085)
\path(6540,3952)(6810,4492)
\path(6783.167,4371.252)(6810.000,4492.000)(6729.502,4398.085)
\path(6540,3952)(6990,4852) \path(6990,4852)(6540,5752)
\path(6990,4852)(6540,5752) \path(6990,4852)(6720,5392)
\path(6990,4852)(6720,5392)
\whiten\path(6800.498,5298.085)(6720.000,5392.000)(6746.833,5271.252)(6789.765,5252.469)(6800.498,5298.085)
\path(5190,4852)(5460,5392)
\path(5433.167,5271.252)(5460.000,5392.000)(5379.502,5298.085)
\path(5190,4852)(5640,5752) \path(5640,5752)(5910,6292)
\path(5883.167,6171.252)(5910.000,6292.000)(5829.502,6198.085)
\path(5640,5752)(6090,6652) \path(6540,5752)(6090,6652)
\path(6540,5752)(6090,6652) \path(6540,5752)(6270,6292)
\path(6540,5752)(6270,6292)
\whiten\path(6350.498,6198.085)(6270.000,6292.000)(6296.833,6171.252)(6339.765,6152.469)(6350.498,6198.085)
\path(6090,4852)(6360,5392)
\path(6333.167,5271.252)(6360.000,5392.000)(6279.502,5298.085)
\path(6090,4852)(6540,5752)
\path(6090,4852)(5640,5752) \path(6090,4852)(5640,5752)
\path(6090,4852)(5820,5392) \path(6090,4852)(5820,5392)
\whiten\path(5900.498,5298.085)(5820.000,5392.000)(5846.833,5271.252)(5889.765,5252.469)(5900.498,5298.085)
\path(5190,3052)(5460,3592)
\path(5433.167,3471.252)(5460.000,3592.000)(5379.502,3498.085)
\path(5190,3052)(5640,3952) \path(6090,3052)(5640,3952)
\path(6090,3052)(5640,3952) \path(6090,3052)(5820,3592)
\path(6090,3052)(5820,3592)
\whiten\path(5900.498,3498.085)(5820.000,3592.000)(5846.833,3471.252)(5889.765,3452.469)(5900.498,3498.085)
\path(6090,3052)(6360,3592)
\path(6333.167,3471.252)(6360.000,3592.000)(6279.502,3498.085)
\path(6090,3052)(6540,3952) \path(6990,3052)(6540,3952)
\path(6990,3052)(6540,3952) \path(6990,3052)(6720,3592)
\path(6990,3052)(6720,3592)
\whiten\path(6800.498,3498.085)(6720.000,3592.000)(6746.833,3471.252)(6789.765,3452.469)(6800.498,3498.085)
\path(6540,2152)(6810,2692)
\path(6783.167,2571.252)(6810.000,2692.000)(6729.502,2598.085)
\path(6540,2152)(6990,3052) \path(6540,2152)(6090,3052)
\path(6540,2152)(6090,3052) \path(6540,2152)(6270,2692)
\path(6540,2152)(6270,2692)
\whiten\path(6350.498,2598.085)(6270.000,2692.000)(6296.833,2571.252)(6339.765,2552.469)(6350.498,2598.085)
\path(5640,2152)(5910,2692)
\path(5883.167,2571.252)(5910.000,2692.000)(5829.502,2598.085)
\path(5640,2152)(6090,3052) \path(5640,2152)(5190,3052)
\path(5640,2152)(5190,3052) \path(5640,2152)(5370,2692)
\path(5640,2152)(5370,2692)
\whiten\path(5450.498,2598.085)(5370.000,2692.000)(5396.833,2571.252)(5439.765,2552.469)(5450.498,2598.085)
\path(6090,1252)(6360,1792)
\path(6333.167,1671.252)(6360.000,1792.000)(6279.502,1698.085)
\path(6090,1252)(6540,2152) \path(6090,1252)(5640,2152)
\path(6090,1252)(5640,2152) \path(6090,1252)(5820,1792)
\path(6090,1252)(5820,1792)
\whiten\path(5900.498,1698.085)(5820.000,1792.000)(5846.833,1671.252)(5889.765,1652.469)(5900.498,1698.085)
\path(5640,3952)(5910,4492)
\path(5883.167,4371.252)(5910.000,4492.000)(5829.502,4398.085)
\path(5640,3952)(6090,4852) \path(6540,3952)(6090,4852)
\path(6540,3952)(6090,4852) \path(6540,3952)(6270,4492)
\path(6540,3952)(6270,4492)
\whiten\path(6350.498,4398.085)(6270.000,4492.000)(6296.833,4371.252)(6339.765,4352.469)(6350.498,4398.085)
\put(6090,4852){\blacken\ellipse{46}{46}}
\put(6090,4852){\ellipse{46}{46}}
\path(5640,3952)(5190,4852) \path(5640,3952)(5190,4852)
\path(5640,3952)(5370,4492) \path(5640,3952)(5370,4492)
\whiten\path(5450.498,4398.085)(5370.000,4492.000)(5396.833,4371.252)(5439.765,4352.469)(5450.498,4398.085)
\put(5025,4402){\makebox(0,0)[lb]{\smash{{\SetFigFont{8}{14.4}{\rmdefault}{\mddefault}{\updefault}$\gamma_1$}}}}
\path(6990,3052)(6990,4852) \path(6990,3052)(6990,4852)
\path(6990,3052)(6990,4087) \path(6990,3052)(6990,4087)
\whiten\path(7020.000,3967.000)(6990.000,4087.000)(6960.000,3967.000)(6990.000,4003.000)(7020.000,3967.000)
\put(7255,4357){\makebox(0,0)[lb]{\smash{{\SetFigFont{8}{14.4}{\rmdefault}{\mddefault}{\updefault}$\gamma'_1$}}}}
\put(4640,3502){\makebox(0,0)[lb]{\smash{{\SetFigFont{8}{14.4}{\rmdefault}{\mddefault}{\updefault}$\gamma_1$}}}}
\put(1815,6652){\blacken\ellipse{46}{46}}
\put(1815,6652){\ellipse{46}{46}}
\put(2265,5752){\blacken\ellipse{46}{46}}
\put(2265,5752){\ellipse{46}{46}}
\put(1365,5752){\blacken\ellipse{46}{46}}
\put(1365,5752){\ellipse{46}{46}}
\put(465,3952){\blacken\ellipse{46}{46}}
\put(465,3952){\ellipse{46}{46}}
\put(1365,3952){\blacken\ellipse{46}{46}}
\put(1365,3952){\ellipse{46}{46}}
\put(2265,3952){\blacken\ellipse{46}{46}}
\put(2265,3952){\ellipse{46}{46}}
\put(3165,3952){\blacken\ellipse{46}{46}}
\put(3165,3952){\ellipse{46}{46}}
\put(2715,3052){\blacken\ellipse{46}{46}}
\put(2715,3052){\ellipse{46}{46}}
\put(1815,3052){\blacken\ellipse{46}{46}}
\put(1815,3052){\ellipse{46}{46}}
\put(2265,2152){\blacken\ellipse{46}{46}}
\put(2265,2152){\ellipse{46}{46}}
\put(1815,1252){\blacken\ellipse{46}{46}}
\put(1815,1252){\ellipse{46}{46}}
\put(915,3052){\blacken\ellipse{46}{46}}
\put(915,3052){\ellipse{46}{46}}
\put(1365,2152){\blacken\ellipse{46}{46}}
\put(1365,2152){\ellipse{46}{46}}
\put(2715,4852){\blacken\ellipse{46}{46}}
\put(2715,4852){\ellipse{46}{46}}
\put(6090,6652){\blacken\ellipse{46}{46}}
\put(6090,6652){\ellipse{46}{46}}
\put(6540,5752){\blacken\ellipse{46}{46}}
\put(6540,5752){\ellipse{46}{46}}
\put(5640,5752){\blacken\ellipse{46}{46}}
\put(5640,5752){\ellipse{46}{46}}
\put(5640,3952){\blacken\ellipse{46}{46}}
\put(5640,3952){\ellipse{46}{46}}
\put(6540,3952){\blacken\ellipse{46}{46}}
\put(6540,3952){\ellipse{46}{46}}
\put(6990,3052){\blacken\ellipse{46}{46}}
\put(6990,3052){\ellipse{46}{46}}
\put(6090,3052){\blacken\ellipse{46}{46}}
\put(6090,3052){\ellipse{46}{46}}
\put(6540,2152){\blacken\ellipse{46}{46}}
\put(6540,2152){\ellipse{46}{46}}
\put(6090,1252){\blacken\ellipse{46}{46}}
\put(6090,1252){\ellipse{46}{46}}
\put(5190,3052){\blacken\ellipse{46}{46}}
\put(5190,3052){\ellipse{46}{46}}
\put(5640,2152){\blacken\ellipse{46}{46}}
\put(5640,2152){\ellipse{46}{46}}
\put(6990,4852){\blacken\ellipse{46}{46}}
\put(6990,4852){\ellipse{46}{46}}
\put(5190,4852){\blacken\ellipse{46}{46}}
\put(5190,4852){\ellipse{46}{46}} \path(915,4852)(645,5392)
\path(915,4852)(645,5392)
\whiten\path(725.498,5298.085)(645.000,5392.000)(671.833,5271.252)(725.498,5298.085)
\path(1365,5752)(1095,6292) \path(1365,5752)(1095,6292)
\whiten\path(1175.498,6198.085)(1095.000,6292.000)(1121.833,6171.252)(1175.498,6198.085)
\path(1815,6652)(1545,7192) \path(1815,6652)(1545,7192)
\whiten\path(1625.498,7098.085)(1545.000,7192.000)(1571.833,7071.252)(1625.498,7098.085)
\path(3165,3952)(3390,3502) \path(3165,3952)(3390,3502)
\path(3390,3502)(3345,3592) \path(3390,3502)(3345,3592)
\whiten\path(3425.498,3498.085)(3345.000,3592.000)(3371.833,3471.252)(3425.498,3498.085)
\path(2715,3052)(2940,2602) \path(2715,3052)(2940,2602)
\path(2940,2602)(2895,2692) \path(2940,2602)(2895,2692)
\whiten\path(2975.498,2598.085)(2895.000,2692.000)(2921.833,2571.252)(2975.498,2598.085)
\path(2265,2152)(2490,1702) \path(2265,2152)(2490,1702)
\path(2490,1702)(2445,1792) \path(2490,1702)(2445,1792)
\whiten\path(2525.498,1698.085)(2445.000,1792.000)(2471.833,1671.252)(2525.498,1698.085)
\path(1815,1252)(2040,802) \path(1815,1252)(2040,802)
\path(2040,802)(1995,892) \path(2040,802)(1995,892)
\whiten\path(2075.498,798.085)(1995.000,892.000)(2021.833,771.252)(2075.498,798.085)
\thicklines \path(3705,3817)(4335,3817)
\blacken\path(4095.000,3757.000)(4335.000,3817.000)(4095.000,3877.000)(4167.000,3817.000)(4095.000,3757.000)
\thinlines \path(5190,4852)(4920,5392) \path(5190,4852)(4920,5392)
\whiten\path(5000.498,5298.085)(4920.000,5392.000)(4946.833,5271.252)(5000.498,5298.085)
\path(5640,5752)(5370,6292) \path(5640,5752)(5370,6292)
\whiten\path(5450.498,6198.085)(5370.000,6292.000)(5396.833,6171.252)(5450.498,6198.085)
\path(6090,6652)(5820,7192) \path(6090,6652)(5820,7192)
\whiten\path(5900.498,7098.085)(5820.000,7192.000)(5846.833,7071.252)(5900.498,7098.085)
\path(6990,3052)(7215,2602) \path(6990,3052)(7215,2602)
\path(7215,2602)(7170,2692) \path(7215,2602)(7170,2692)
\whiten\path(7250.498,2598.085)(7170.000,2692.000)(7196.833,2571.252)(7250.498,2598.085)
\path(6540,2152)(6765,1702) \path(6540,2152)(6765,1702)
\path(6765,1702)(6720,1792) \path(6765,1702)(6720,1792)
\whiten\path(6800.498,1698.085)(6720.000,1792.000)(6746.833,1671.252)(6800.498,1698.085)
\path(6090,1252)(6315,802) \path(6090,1252)(6315,802)
\path(6315,802)(6270,892) \path(6315,802)(6270,892)
\whiten\path(6350.498,798.085)(6270.000,892.000)(6296.833,771.252)(6350.498,798.085)
\path(5190,3052)(4920,3592) \path(5190,3052)(4920,3592)
\whiten\path(5000.498,3498.085)(4920.000,3592.000)(4946.833,3471.252)(5000.498,3498.085)
\path(5190,3052)(5190,4852) \path(5190,3052)(5190,4852)
\path(5190,3052)(5190,4087) \path(5190,3052)(5190,4087)
\whiten\path(5220.000,3967.000)(5190.000,4087.000)(5160.000,3967.000)(5190.000,4003.000)(5220.000,3967.000)
\path(7215,4402)(7170,4492) \path(7215,4402)(7170,4492)
\whiten\path(7250.498,4398.085)(7170.000,4492.000)(7196.833,4371.252)(7250.498,4398.085)
\path(6990,4852)(7175,4482) \path(6990,4852)(7175,4482)
\put(345,5302){\makebox(0,0)[lb]{\smash{{\SetFigFont{8}{14.4}{\rmdefault}{\mddefault}{\updefault}$\gamma_2$}}}}
\put(795,6202){\makebox(0,0)[lb]{\smash{{\SetFigFont{8}{14.4}{\rmdefault}{\mddefault}{\updefault}$\gamma_3$}}}}
\put(1245,7102){\makebox(0,0)[lb]{\smash{{\SetFigFont{8}{14.4}{\rmdefault}{\mddefault}{\updefault}$\gamma_4$}}}}
\put(1135,82){\makebox(0,0)[lb]{\smash{{\SetFigFont{14}{16.8}{\rmdefault}{\mddefault}{\updefault}fig.
8a}}}}
\put(5410,82){\makebox(0,0)[lb]{\smash{{\SetFigFont{14}{16.8}{\rmdefault}{\mddefault}{\updefault}fig.
8b}}}}
\put(4620,5302){\makebox(0,0)[lb]{\smash{{\SetFigFont{8}{14.4}{\rmdefault}{\mddefault}{\updefault}$\gamma_2$}}}}
\put(5070,6202){\makebox(0,0)[lb]{\smash{{\SetFigFont{8}{14.4}{\rmdefault}{\mddefault}{\updefault}$\gamma_3$}}}}
\put(5520,7102){\makebox(0,0)[lb]{\smash{{\SetFigFont{8}{14.4}{\rmdefault}{\mddefault}{\updefault}$\gamma_4$}}}}
\put(5510,3502){\makebox(0,0)[lb]{\smash{{\SetFigFont{8}{14.4}{\rmdefault}{\mddefault}{\updefault}$\gamma'_1$}}}}
\put(365,3502){\makebox(0,0)[lb]{\smash{{\SetFigFont{8}{14.4}{\rmdefault}{\mddefault}{\updefault}$\gamma'_1$}}}}
\put(785,2602){\makebox(0,0)[lb]{\smash{{\SetFigFont{8}{14.4}{\rmdefault}{\mddefault}{\updefault}$\gamma'_2$}}}}
\put(5060,2602){\makebox(0,0)[lb]{\smash{{\SetFigFont{8}{14.4}{\rmdefault}{\mddefault}{\updefault}$\gamma'_2$}}}}
\put(1235,1702){\makebox(0,0)[lb]{\smash{{\SetFigFont{8}{14.4}{\rmdefault}{\mddefault}{\updefault}$\gamma'_3$}}}}
\put(1655,802){\makebox(0,0)[lb]{\smash{{\SetFigFont{8}{14.4}{\rmdefault}{\mddefault}{\updefault}$\gamma'_4$}}}}
\put(5510,1702){\makebox(0,0)[lb]{\smash{{\SetFigFont{8}{14.4}{\rmdefault}{\mddefault}{\updefault}$\gamma'_3$}}}}
\put(5960,802){\makebox(0,0)[lb]{\smash{{\SetFigFont{8}{14.4}{\rmdefault}{\mddefault}{\updefault}$\gamma'_4$}}}}
\end{picture}
} \hspace{-10mm} \setlength{\unitlength}{0.0004in}
\begingroup\makeatletter\ifx\SetFigFont\undefined%
\gdef\SetFigFont#1#2#3#4#5{%
  \reset@font\fontsize{#1}{#2pt}%
  \fontfamily{#3}\fontseries{#4}\fontshape{#5}%
  \selectfont}%
\fi\endgroup%
{\renewcommand{\dashlinestretch}{30}
\begin{picture}(8360,7129)(0,-10)
\put(1867,6517){\blacken\ellipse{46}{46}}
\put(1867,6517){\ellipse{46}{46}}
\put(2317,5617){\blacken\ellipse{46}{46}}
\put(2317,5617){\ellipse{46}{46}}
\put(1417,5617){\blacken\ellipse{46}{46}}
\put(1417,5617){\ellipse{46}{46}}
\put(1417,3817){\blacken\ellipse{46}{46}}
\put(1417,3817){\ellipse{46}{46}}
\put(2317,3817){\blacken\ellipse{46}{46}}
\put(2317,3817){\ellipse{46}{46}}
\put(1867,2917){\blacken\ellipse{46}{46}}
\put(1867,2917){\ellipse{46}{46}}
\put(2317,2017){\blacken\ellipse{46}{46}}
\put(2317,2017){\ellipse{46}{46}}
\put(967,2917){\blacken\ellipse{46}{46}}
\put(967,2917){\ellipse{46}{46}}
\put(1417,2017){\blacken\ellipse{46}{46}}
\put(1417,2017){\ellipse{46}{46}}
\put(2767,4717){\blacken\ellipse{46}{46}}
\put(2767,4717){\ellipse{46}{46}}
\put(967,4717){\blacken\ellipse{46}{46}}
\put(967,4717){\ellipse{46}{46}}
\put(2767,2917){\blacken\ellipse{46}{46}}
\put(2767,2917){\ellipse{46}{46}}
\put(1867,1117){\blacken\ellipse{46}{46}}
\put(1867,1117){\ellipse{46}{46}}
\put(1867,4717){\blacken\ellipse{46}{46}}
\put(1867,4717){\ellipse{46}{46}}
\put(4702,6517){\blacken\ellipse{46}{46}}
\put(4702,6517){\ellipse{46}{46}}
\put(5152,5617){\blacken\ellipse{46}{46}}
\put(5152,5617){\ellipse{46}{46}}
\put(4252,5617){\blacken\ellipse{46}{46}}
\put(4252,5617){\ellipse{46}{46}}
\put(4702,1117){\blacken\ellipse{46}{46}}
\put(4702,1117){\ellipse{46}{46}}
\put(4252,2017){\blacken\ellipse{46}{46}}
\put(4252,2017){\ellipse{46}{46}}
\put(5152,2017){\blacken\ellipse{46}{46}}
\put(5152,2017){\ellipse{46}{46}}
\put(4252,3817){\blacken\ellipse{46}{46}}
\put(4252,3817){\ellipse{46}{46}}
\put(5152,3817){\blacken\ellipse{46}{46}}
\put(5152,3817){\ellipse{46}{46}}
\put(4702,2917){\blacken\ellipse{46}{46}}
\put(4702,2917){\ellipse{46}{46}}
\put(4702,4717){\blacken\ellipse{46}{46}}
\put(4702,4717){\ellipse{46}{46}}
\put(7267,6517){\blacken\ellipse{46}{46}}
\put(7267,6517){\ellipse{46}{46}}
\put(7267,2917){\blacken\ellipse{46}{46}}
\put(7267,2917){\ellipse{46}{46}}
\put(7267,1117){\blacken\ellipse{46}{46}}
\put(7267,1117){\ellipse{46}{46}}
\put(7267,4717){\blacken\ellipse{46}{46}}
\put(7267,4717){\ellipse{46}{46}}
\put(6817,3817){\blacken\ellipse{46}{46}}
\put(6817,3817){\ellipse{46}{46}}
\put(7717,3817){\blacken\ellipse{46}{46}}
\put(7717,3817){\ellipse{46}{46}} \path(967,4717)(697,5257)
\path(967,4717)(697,5257)
\whiten\path(777.498,5163.085)(697.000,5257.000)(723.833,5136.252)(777.498,5163.085)
\path(1417,5617)(1147,6157) \path(1417,5617)(1147,6157)
\whiten\path(1227.498,6063.085)(1147.000,6157.000)(1173.833,6036.252)(1227.498,6063.085)
\path(1867,6517)(1597,7057) \path(1867,6517)(1597,7057)
\whiten\path(1677.498,6963.085)(1597.000,7057.000)(1623.833,6936.252)(1677.498,6963.085)
\path(967,2917)(697,3457) \path(967,2917)(697,3457)
\whiten\path(777.498,3363.085)(697.000,3457.000)(723.833,3336.252)(777.498,3363.085)
\thicklines \path(22,3817)(652,3817)
\blacken\path(412.000,3757.000)(652.000,3817.000)(412.000,3877.000)(484.000,3817.000)(412.000,3757.000)
\path(5782,3817)(6412,3817)
\blacken\path(6172.000,3757.000)(6412.000,3817.000)(6172.000,3877.000)(6244.000,3817.000)(6172.000,3757.000)
\path(3082,3817)(3712,3817)
\blacken\path(3472.000,3757.000)(3712.000,3817.000)(3472.000,3877.000)(3544.000,3817.000)(3472.000,3757.000)
\thinlines \path(967,4717)(1237,5257)
\path(1210.167,5136.252)(1237.000,5257.000)(1156.502,5163.085)
\path(967,4717)(1417,5617) \path(1867,4717)(1417,5617)
\path(1867,4717)(1417,5617) \path(1867,4717)(1597,5257)
\path(1867,4717)(1597,5257)
\whiten\path(1677.498,5163.085)(1597.000,5257.000)(1623.833,5136.252)(1666.765,5117.469)(1677.498,5163.085)
\path(2767,4717)(2992,4267) \path(2767,4717)(2992,4267)
\path(2317,3817)(2587,4357)
\path(2560.167,4236.252)(2587.000,4357.000)(2506.502,4263.085)
\path(2317,3817)(2767,4717) \path(2767,4717)(2317,5617)
\path(2767,4717)(2317,5617) \path(2767,4717)(2497,5257)
\path(2767,4717)(2497,5257)
\whiten\path(2577.498,5163.085)(2497.000,5257.000)(2523.833,5136.252)(2566.765,5117.469)(2577.498,5163.085)
\path(967,2917)(1237,3457)
\path(1210.167,3336.252)(1237.000,3457.000)(1156.502,3363.085)
\path(967,2917)(1417,3817) \path(1417,3817)(1687,4357)
\path(1660.167,4236.252)(1687.000,4357.000)(1606.502,4263.085)
\path(1417,3817)(1867,4717) \path(2767,2917)(2992,2467)
\path(2767,2917)(2992,2467) \path(1867,2917)(1417,3817)
\path(1867,2917)(1417,3817) \path(1417,3817)(967,4717)
\path(1867,2917)(2137,3457)
\path(2110.167,3336.252)(2137.000,3457.000)(2056.502,3363.085)
\path(1867,2917)(2317,3817) \path(2317,2017)(2587,2557)
\path(2560.167,2436.252)(2587.000,2557.000)(2506.502,2463.085)
\path(2317,2017)(2767,2917) \path(2317,2017)(2542,1567)
\path(2317,2017)(2542,1567) \path(1417,2017)(1687,2557)
\path(1660.167,2436.252)(1687.000,2557.000)(1606.502,2463.085)
\path(1417,2017)(1867,2917) \path(1867,1117)(2137,1657)
\path(2110.167,1536.252)(2137.000,1657.000)(2056.502,1563.085)
\path(1867,1117)(2317,2017) \path(1867,1117)(2092,667)
\path(1867,1117)(2092,667) \path(1867,1117)(1417,2017)
\path(1867,1117)(1417,2017) \path(4252,5617)(3982,6157)
\path(4252,5617)(3982,6157)
\whiten\path(4062.498,6063.085)(3982.000,6157.000)(4008.833,6036.252)(4062.498,6063.085)
\path(4702,6517)(4432,7057) \path(4702,6517)(4432,7057)
\whiten\path(4512.498,6963.085)(4432.000,7057.000)(4458.833,6936.252)(4512.498,6963.085)
\path(4252,2017)(3982,2557) \path(4252,2017)(3982,2557)
\whiten\path(4062.498,2463.085)(3982.000,2557.000)(4008.833,2436.252)(4062.498,2463.085)
\path(4702,1117)(4972,1657)
\path(4945.167,1536.252)(4972.000,1657.000)(4891.502,1563.085)
\path(4702,1117)(5152,2017) \path(4702,1117)(4927,667)
\path(4702,1117)(4927,667) \path(5152,2017)(4702,2917)
\path(5152,2017)(4702,2917) \path(5152,2017)(4882,2557)
\path(5152,2017)(4882,2557)
\whiten\path(4962.498,2463.085)(4882.000,2557.000)(4908.833,2436.252)(4951.765,2417.469)(4962.498,2463.085)
\path(5152,2017)(5377,1567) \path(5152,2017)(5377,1567)
\path(4702,1117)(4252,2017) \path(4702,1117)(4252,2017)
\path(4702,1117)(4432,1657) \path(4702,1117)(4432,1657)
\whiten\path(4512.498,1563.085)(4432.000,1657.000)(4458.833,1536.252)(4501.765,1517.469)(4512.498,1563.085)
\path(4252,2017)(4522,2557)
\path(4495.167,2436.252)(4522.000,2557.000)(4441.502,2463.085)
\path(4252,2017)(4702,2917) \path(4702,2917)(4252,3817)
\path(4702,2917)(4252,3817) \path(4702,2917)(4432,3457)
\path(4702,2917)(4432,3457)
\whiten\path(4512.498,3363.085)(4432.000,3457.000)(4458.833,3336.252)(4501.765,3317.469)(4512.498,3363.085)
\path(5152,3817)(5377,3367) \path(5152,3817)(5377,3367)
\path(4702,2917)(4972,3457)
\path(4945.167,3336.252)(4972.000,3457.000)(4891.502,3363.085)
\path(4702,2917)(5152,3817) \path(4252,3817)(3982,4357)
\path(4252,3817)(3982,4357)
\whiten\path(4062.498,4263.085)(3982.000,4357.000)(4008.833,4236.252)(4062.498,4263.085)
\path(5152,5617)(4702,6517) \path(5152,5617)(4702,6517)
\path(5152,5617)(4882,6157) \path(5152,5617)(4882,6157)
\whiten\path(4962.498,6063.085)(4882.000,6157.000)(4908.833,6036.252)(4951.765,6017.469)(4962.498,6063.085)
\path(4252,5617)(4522,6157)
\path(4495.167,6036.252)(4522.000,6157.000)(4441.502,6063.085)
\path(4252,5617)(4702,6517) \path(5152,5617)(5377,5167)
\path(5152,5617)(5377,5167) \path(4702,4717)(4252,5617)
\path(4702,4717)(4252,5617) \path(4702,4717)(4432,5257)
\path(4702,4717)(4432,5257)
\whiten\path(4512.498,5163.085)(4432.000,5257.000)(4458.833,5136.252)(4501.765,5117.469)(4512.498,5163.085)
\path(4702,4717)(4972,5257)
\path(4945.167,5136.252)(4972.000,5257.000)(4891.502,5163.085)
\path(4702,4717)(5152,5617) \path(4252,3817)(4522,4357)
\path(4495.167,4236.252)(4522.000,4357.000)(4441.502,4263.085)
\path(4252,3817)(4702,4717) \path(5152,3817)(4702,4717)
\path(5152,3817)(4702,4717) \path(5152,3817)(4882,4357)
\path(5152,3817)(4882,4357)
\whiten\path(4962.498,4263.085)(4882.000,4357.000)(4908.833,4236.252)(4951.765,4217.469)(4962.498,4263.085)
\path(7267,6517)(6997,7057) \path(7267,6517)(6997,7057)
\whiten\path(7077.498,6963.085)(6997.000,7057.000)(7023.833,6936.252)(7077.498,6963.085)
\path(7267,1117)(6997,1657) \path(7267,1117)(6997,1657)
\whiten\path(7077.498,1563.085)(6997.000,1657.000)(7023.833,1536.252)(7077.498,1563.085)
\path(7267,4717)(6997,5257) \path(7267,4717)(6997,5257)
\whiten\path(7077.498,5163.085)(6997.000,5257.000)(7023.833,5136.252)(7077.498,5163.085)
\path(7267,6517)(7492,6067) \path(7267,6517)(7492,6067)
\path(7717,3817)(7267,4717) \path(7717,3817)(7267,4717)
\path(7717,3817)(7447,4357) \path(7717,3817)(7447,4357)
\whiten\path(7527.498,4263.085)(7447.000,4357.000)(7473.833,4236.252)(7516.765,4217.469)(7527.498,4263.085)
\path(6817,3817)(7087,4357)
\path(7060.167,4236.252)(7087.000,4357.000)(7006.502,4263.085)
\path(6817,3817)(7267,4717) \path(6817,3817)(6547,4357)
\path(6817,3817)(6547,4357)
\whiten\path(6627.498,4263.085)(6547.000,4357.000)(6573.833,4236.252)(6627.498,4263.085)
\path(7267,2917)(7537,3457)
\path(7510.167,3336.252)(7537.000,3457.000)(7456.502,3363.085)
\path(7267,2917)(7717,3817) \path(7267,2917)(6817,3817)
\path(7267,2917)(6817,3817) \path(7267,2917)(6997,3457)
\path(7267,2917)(6997,3457)
\whiten\path(7077.498,3363.085)(6997.000,3457.000)(7023.833,3336.252)(7066.765,3317.469)(7077.498,3363.085)
\path(7717,3817)(7942,3367) \path(7717,3817)(7942,3367)
\path(7267,2917)(7492,2467) \path(7267,2917)(7492,2467)
\path(7267,1117)(7492,667) \path(7267,1117)(7492,667)
\path(1867,4717)(2137,5257)
\path(2110.167,5136.252)(2137.000,5257.000)(2056.502,5163.085)
\path(1867,4717)(2317,5617) \path(2317,5617)(1867,6517)
\path(2317,5617)(1867,6517) \path(2317,5617)(2047,6157)
\path(2317,5617)(2047,6157)
\whiten\path(2127.498,6063.085)(2047.000,6157.000)(2073.833,6036.252)(2116.765,6017.469)(2127.498,6063.085)
\path(1417,5617)(1687,6157)
\path(1660.167,6036.252)(1687.000,6157.000)(1606.502,6063.085)
\path(1417,5617)(1867,6517) \path(2317,3817)(1867,4717)
\path(2317,3817)(1867,4717) \path(2317,3817)(2047,4357)
\path(2317,3817)(2047,4357)
\whiten\path(2127.498,4263.085)(2047.000,4357.000)(2073.833,4236.252)(2116.765,4217.469)(2127.498,4263.085)
\put(1097,82){\makebox(0,0)[lb]{\smash{{\SetFigFont{14}{16.8}{\rmdefault}{\mddefault}{\updefault}fig.
8c}}}}
\put(842,4267){\makebox(0,0)[lb]{\smash{{\SetFigFont{8}{12.0}{\rmdefault}{\mddefault}{\updefault}$\gamma'_1$}}}}
\put(387,3367){\makebox(0,0)[lb]{\smash{{\SetFigFont{8}{12.0}{\rmdefault}{\mddefault}{\updefault}$\gamma_1$}}}}
\put(3932,82){\makebox(0,0)[lb]{\smash{{\SetFigFont{14}{16.8}{\rmdefault}{\mddefault}{\updefault}fig.
8d}}}}
\put(6452,82){\makebox(0,0)[lb]{\smash{{\SetFigFont{14}{16.8}{\rmdefault}{\mddefault}{\updefault}fig.
8e}}}}
\put(287,5167){\makebox(0,0)[lb]{\smash{{\SetFigFont{8}{12.0}{\rmdefault}{\mddefault}{\updefault}$\gamma_2$}}}}
\put(837,2467){\makebox(0,0)[lb]{\smash{{\SetFigFont{8}{12.0}{\rmdefault}{\mddefault}{\updefault}$\gamma'_2$}}}}
\put(1287,1567){\makebox(0,0)[lb]{\smash{{\SetFigFont{8}{12.0}{\rmdefault}{\mddefault}{\updefault}$\gamma'_3$}}}}
\put(1737,667){\makebox(0,0)[lb]{\smash{{\SetFigFont{8}{12.0}{\rmdefault}{\mddefault}{\updefault}$\gamma'_4$}}}}
\put(837,6067){\makebox(0,0)[lb]{\smash{{\SetFigFont{8}{12.0}{\rmdefault}{\mddefault}{\updefault}$\gamma_3$}}}}
\put(1287,6967){\makebox(0,0)[lb]{\smash{{\SetFigFont{8}{12.0}{\rmdefault}{\mddefault}{\updefault}$\gamma_4$}}}}
\put(4122,5167){\makebox(0,0)[lb]{\smash{{\SetFigFont{8}{12.0}{\rmdefault}{\mddefault}{\updefault}$\gamma'_1$}}}}
\put(3672,6067){\makebox(0,0)[lb]{\smash{{\SetFigFont{8}{12.0}{\rmdefault}{\mddefault}{\updefault}$\gamma_3$}}}}
\put(3672,4267){\makebox(0,0)[lb]{\smash{{\SetFigFont{8}{12.0}{\rmdefault}{\mddefault}{\updefault}$\gamma_2$}}}}
\put(4122,3367){\makebox(0,0)[lb]{\smash{{\SetFigFont{8}{12.0}{\rmdefault}{\mddefault}{\updefault}$\gamma'_2$}}}}
\put(3672,2467){\makebox(0,0)[lb]{\smash{{\SetFigFont{8}{12.0}{\rmdefault}{\mddefault}{\updefault}$\gamma_1$}}}}
\put(4122,1567){\makebox(0,0)[lb]{\smash{{\SetFigFont{8}{12.0}{\rmdefault}{\mddefault}{\updefault}$\gamma'_3$}}}}
\put(4572,667){\makebox(0,0)[lb]{\smash{{\SetFigFont{8}{12.0}{\rmdefault}{\mddefault}{\updefault}$\gamma'_4$}}}}
\put(6687,1567){\makebox(0,0)[lb]{\smash{{\SetFigFont{8}{12.0}{\rmdefault}{\mddefault}{\updefault}$\gamma_1$}}}}
\put(7137,667){\makebox(0,0)[lb]{\smash{{\SetFigFont{8}{12.0}{\rmdefault}{\mddefault}{\updefault}$\gamma'_4$}}}}
\put(6687,3367){\makebox(0,0)[lb]{\smash{{\SetFigFont{8}{12.0}{\rmdefault}{\mddefault}{\updefault}$\gamma'_2$}}}}
\put(7137,2467){\makebox(0,0)[lb]{\smash{{\SetFigFont{8}{12.0}{\rmdefault}{\mddefault}{\updefault}$\gamma'_3$}}}}
\put(6237,4267){\makebox(0,0)[lb]{\smash{{\SetFigFont{8}{12.0}{\rmdefault}{\mddefault}{\updefault}$\gamma_2$}}}}
\put(6687,5167){\makebox(0,0)[lb]{\smash{{\SetFigFont{8}{12.0}{\rmdefault}{\mddefault}{\updefault}$\gamma_3$}}}}
\put(7137,6067){\makebox(0,0)[lb]{\smash{{\SetFigFont{8}{12.0}{\rmdefault}{\mddefault}{\updefault}$\gamma'_1$}}}}
\put(6687,6967){\makebox(0,0)[lb]{\smash{{\SetFigFont{8}{12.0}{\rmdefault}{\mddefault}{\updefault}$\gamma_4$}}}}
\put(4122,6967){\makebox(0,0)[lb]{\smash{{\SetFigFont{8}{12.0}{\rmdefault}{\mddefault}{\updefault}$\gamma_4$}}}}
\path(2992,4267)(2947,4357) \path(2992,4267)(2947,4357)
\whiten\path(3027.498,4263.085)(2947.000,4357.000)(2973.833,4236.252)(3027.498,4263.085)
\path(2992,2467)(2947,2557) \path(2992,2467)(2947,2557)
\whiten\path(3027.498,2463.085)(2947.000,2557.000)(2973.833,2436.252)(3027.498,2463.085)
\path(1867,2917)(1597,3457) \path(1867,2917)(1597,3457)
\whiten\path(1677.498,3363.085)(1597.000,3457.000)(1623.833,3336.252)(1666.765,3317.469)(1677.498,3363.085)
\path(1417,3817)(1147,4357) \path(1417,3817)(1147,4357)
\whiten\path(1227.498,4263.085)(1147.000,4357.000)(1173.833,4236.252)(1216.765,4217.469)(1227.498,4263.085)
\path(2767,2917)(2317,3817) \path(2767,2917)(2317,3817)
\path(2542,1567)(2497,1657) \path(2542,1567)(2497,1657)
\whiten\path(2577.498,1563.085)(2497.000,1657.000)(2523.833,1536.252)(2577.498,1563.085)
\path(2092,667)(2047,757) \path(2092,667)(2047,757)
\whiten\path(2127.498,663.085)(2047.000,757.000)(2073.833,636.252)(2127.498,663.085)
\path(1417,2017)(967,2917) \path(1417,2017)(967,2917)
\path(2317,2017)(1867,2917) \path(2317,2017)(1867,2917)
\path(2767,2917)(2497,3457) \path(2767,2917)(2497,3457)
\whiten\path(2577.498,3363.085)(2497.000,3457.000)(2523.833,3336.252)(2566.765,3317.469)(2577.498,3363.085)
\path(2317,2017)(2047,2557) \path(2317,2017)(2047,2557)
\whiten\path(2127.498,2463.085)(2047.000,2557.000)(2073.833,2436.252)(2116.765,2417.469)(2127.498,2463.085)
\path(1417,2017)(1147,2557) \path(1417,2017)(1147,2557)
\whiten\path(1227.498,2463.085)(1147.000,2557.000)(1173.833,2436.252)(1216.765,2417.469)(1227.498,2463.085)
\path(1867,1117)(1597,1657) \path(1867,1117)(1597,1657)
\whiten\path(1677.498,1563.085)(1597.000,1657.000)(1623.833,1536.252)(1666.765,1517.469)(1677.498,1563.085)
\path(4927,667)(4882,757) \path(4927,667)(4882,757)
\whiten\path(4962.498,663.085)(4882.000,757.000)(4908.833,636.252)(4962.498,663.085)
\path(5377,1567)(5332,1657) \path(5377,1567)(5332,1657)
\whiten\path(5412.498,1563.085)(5332.000,1657.000)(5358.833,1536.252)(5412.498,1563.085)
\path(5377,3367)(5332,3457) \path(5377,3367)(5332,3457)
\whiten\path(5412.498,3363.085)(5332.000,3457.000)(5358.833,3336.252)(5412.498,3363.085)
\path(5377,5167)(5332,5257) \path(5377,5167)(5332,5257)
\whiten\path(5412.498,5163.085)(5332.000,5257.000)(5358.833,5136.252)(5412.498,5163.085)
\path(7492,6067)(7447,6157) \path(7492,6067)(7447,6157)
\whiten\path(7527.498,6063.085)(7447.000,6157.000)(7473.833,6036.252)(7527.498,6063.085)
\path(7942,3367)(7897,3457) \path(7942,3367)(7897,3457)
\whiten\path(7977.498,3363.085)(7897.000,3457.000)(7923.833,3336.252)(7977.498,3363.085)
\path(7492,2467)(7447,2557) \path(7492,2467)(7447,2557)
\whiten\path(7527.498,2463.085)(7447.000,2557.000)(7473.833,2436.252)(7527.498,2463.085)
\path(7492,667)(7447,757) \path(7492,667)(7447,757)
\whiten\path(7527.498,663.085)(7447.000,757.000)(7473.833,636.252)(7527.498,663.085)
\end{picture}
}
\end{figure}

The left and the right hand side of the equality~\eqref{A_int_psi}
are not agreed with each other because the right one is not
symmetric with respect to $\gamma_1,\ldots,\gamma_N$. It is
consequence of the fact that the singularities of the
$\Gamma$-functions were not taken into account. In order to treat
this problem we consider an equality~\eqref{A_orthog} as an equality
of distributions and check this equality on the basic functions
$\varphi(\gamma,\gamma')$ vanishing in some neighborhood of the
union of lines $\{\gamma_k=\gamma'_j\}_{k+j\le N}$. This
restriction allow us to multiply distributions by the
$\Gamma$-functions arising in~\eqref{A_int_psi}. But for these basic
functions only one term in the right hand side of~\eqref{A_orthog}
remain, which corresponds to the longest permutation
$\sigma={\,1\,\,2\,\ldots N \choose N\ldots\,2\,\,1\,}$. This
explains that a direct computation of the integral~\eqref{A_orthog}
using diagram technique produces only one (instead of $N!$)
$\delta$-function term. Other terms can be rebuilt by the reason
of symmetry. Analogous treatment of the similar calculation for
$XXX$-model was discussed in~\cite{Derkachov}. \\

After substitution of $\gamma'_j=\gamma_{N+1-j}$ in the product of
$\Gamma$-functions in right hand side of~\eqref{A_int_psi} we derive
the expression~\eqref{A_measure} for the Sklyanin measure. \qed \\

\section{Properties of the $\Lambda$-operators}
\label{A_Pr_Lambda}

Baxter's $Q$-operators $\Hat Q_N(u)$ described in~\cite{Pasquier}
for the periodic Toda chain satisfy the following properties.
\begin{itemize}
\item[(a)] These operators commute for different values of the spectral parameters: \\
$[\hat Q_N(u),\hat Q_N(v)]=0$.
\item[(b)] They commute with the transfer matrix $\hat
t_N(u)\hm=A_N(u)\hm+D_N(u)$ of the periodic Toda chain model:
$[\hat Q_N(u),\hat t_N(v)]=0$.
\item[(c)] $Q$-operator satisfies the Baxter equation
\begin{equation*} \label{A_Baxter_od_}
  \hat t_N(u)\hat Q_N(u)=i^N \hat Q_N(u+i\hbar)(x)+i^{-N} \hat Q_N(u-i\hbar).
\end{equation*}
\end{itemize}

In this section the similar properties for the operators
$\Lambda_N(u)$ defined in section~\ref{A_Eig_OTCh} will be
established.

\begin{prop}
 $\Lambda$-operator has the following properties:
\begin{itemize}
\item[{\scshape (i)}] $\Lambda_N(u)\Lambda_{N-1}(v)=\Lambda_N(v)\Lambda_{N-1}(u),$

\item[{\scshape (ii)}] $A_N(u)\Lambda_N(v)=(u-v)\Lambda_N(v)A_{N-1}(u),$

\item[{\scshape (iii)}] $C_N(u)\Lambda_N(u)=i^{-N-1}\Lambda_N(u-i\hbar),$

\item[{\scshape (iv)}] $B_N(u)\Lambda_N(u)=i^{N-1}\Lambda_N(u+i\hbar).$
\end{itemize}
\end{prop}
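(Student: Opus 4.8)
The plan is to handle (i) separately from (ii)--(iv), since (i) is essentially already in hand, and to build a single computational engine out of the triangulation for the remaining three. Property (i) is exactly the integral identity \eqref{A_LgammaLgamma} proved in Theorem~\ref{A_Th_psi} by the diagram manipulations of fig.~3a--c and fig.~4: taking there the substitution $M=N-n+2$ turns \eqref{A_LgammaLgamma} into $\Lambda_N(u)\Lambda_{N-1}(v)=\Lambda_N(v)\Lambda_{N-1}(u)$, so I would simply record this reduction. For (ii)--(iv) the engine is the observation that, because $W_v=\prod_{n=1}^N w_n(v)$ and each $\widetilde L_n(u)$ acts only through $p_n$ and functions of $x_n$ while $w_n(v)$ is the unique factor depending on $x_n$, the gauge-transformed monodromy factorizes on this vacuum:
\begin{equation*}
 \widetilde T_N(u)\,W_v = W_v\,\ell_N(u,v)\cdots\ell_1(u,v),
\end{equation*}
where $\ell_n(u,v)$ is the ordinary matrix of functions defined by $\widetilde L_n(u)\,w_n(v)=w_n(v)\,\ell_n(u,v)$. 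Using the defining ODE $p_n w_n(v)=(v-ie^{y_{n-1}-x_n}+ie^{x_n-y_n})w_n(v)$ one reads the entries off \eqref{A_L21_til}; the lower-left entry is $i(u-v)e^{y_n}$, so at $v=u$ the $\ell_n$ are upper triangular with diagonal $-ie^{x_n-y_n}$, $ie^{y_n-x_n}$ (recovering $\widetilde C_N(u)W_u=0$). Converting between $T_N$ and $\widetilde T_N$ via \eqref{A_C_til} and passing to the limit $y_0\to-\infty$, $y_N\to+\infty$ with the factor $e^{\frac i\hbar u(y_0+y_N)}$ then translates statements about $\widetilde T_N(u)W_v$ into statements about $T_N(u)$ acting on the kernel $\Lambda_v$.

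With this engine, (iii) and (iv) are short, and the shifts of the spectral parameter come from the elementary relations $e^{\sum_{k=1}^N x_k}e^{-\sum_{k=1}^{N-1}y_k}\Lambda_u=\Lambda_{u-i\hbar}$ and $e^{-\sum_{k=1}^N x_k}e^{\sum_{k=1}^{N-1}y_k}\Lambda_u=\Lambda_{u+i\hbar}$, both immediate from \eqref{A_Lamb}. For (iv) I would use $B_N=\widetilde B_N$, so $B_N(u)W_u=W_u(\ell_N\cdots\ell_1)_{12}$; as $y_N\to+\infty$ every term of the $(1,2)$ entry except the last is suppressed by a factor $e^{x_N-y_N}$, leaving $i^{N-1}e^{-\sum_{k=1}^N x_k+\sum_{k=1}^{N-1}y_k}$, whence $B_N(u)\Lambda_N(u)=i^{N-1}\Lambda_N(u+i\hbar)$. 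For (iii) I would instead invoke the recurrence \eqref{A_Crec}, $C_N=-e^{x_N}A_{N-1}$, together with the $(N-1)$-particle version of the engine and the gauge relation $A_{N-1}=\widetilde A_{N-1}+ie^{y_0}\widetilde B_{N-1}$: the $\widetilde B_{N-1}$-term carries an $e^{y_0}$ and drops as $y_0\to-\infty$, leaving $-e^{x_N}(-i)^{N-1}e^{\sum_{k=1}^{N-1}x_k-\sum_{k=1}^{N-1}y_k}\Lambda_u=i^{-N-1}\Lambda_N(u-i\hbar)$, using $-(-i)^{N-1}=i^{-N-1}$.

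Property (ii) is the genuinely hard part, and where I expect to spend the most effort. Here $u\neq v$, the $\ell_n(u,v)$ are no longer triangular, and from $A_N=\widetilde A_N+ie^{y_0}\widetilde B_N$ (the $e^{y_0}$-term again dropping in the limit) one obtains $A_N(u)\Lambda_v=\Lambda_v\cdot\lim(\ell_N(u,v)\cdots\ell_1(u,v))_{11}$, in which the factor $(u-v)$ is already visible locally through $\widetilde L_n(u)_{21}w_n(v)=i(u-v)e^{y_n}w_n(v)$. The obstacle is that the right-hand side of (ii) contains $A_{N-1}(u)$ acting on the $y$-variables \emph{under the integral}, whereas the computation produces a multiplication operator in the $x$-variables. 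I would reconcile these by reducing (ii) to the kernel identity
\begin{equation*}
 A_N(u)^{(x)}\,\Lambda_v(x;y)=(u-v)\,\big(A_{N-1}(u)^{(y)}\big)^{t}\Lambda_v(x;y),
\end{equation*}
with transpose taken in $y$, which passes to the operator statement after integration by parts, and then prove the identity by induction on $N$ using the three-term recurrences \eqref{A_Arec}--\eqref{A_AArec} and the exponential form of the kernel.

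The steps I expect to be delicate are exactly those attached to (ii): carrying out the integration by parts against the transpose of a differential operator of order up to $N-1$, controlling the iterated limits $y_0\to-\infty$, $y_N\to+\infty$ in the same regularized sense already used in \eqref{A_ANlim} and \eqref{A_Lamb}, and verifying that the boundary contributions generated along the way neither survive nor corrupt the clean prefactor $(u-v)$. By contrast, (i), (iii) and (iv) should follow essentially by the bookkeeping described above once the factorization lemma $\widetilde T_N(u)W_v=W_v\,\ell_N(u,v)\cdots\ell_1(u,v)$ is in place.
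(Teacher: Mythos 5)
Your handling of (i) is exactly the paper's: both reduce it to Theorem~\ref{A_Th_psi}. Your factorization lemma $\widetilde T_N(u)\,W_v=W_v\,\ell_N(u,v)\cdots\ell_1(u,v)$ is correct (each $\widetilde L_n(u)$ involves only $x_n$, and $p_nw_n(v)=(v-ie^{y_{n-1}-x_n}+ie^{x_n-y_n})w_n(v)$ turns every entry into a multiplication operator on $w_n(v)$), and your derivations of (iii) and (iv) from it are sound: the entries are $(\ell_n)_{11}=u-v-ie^{x_n-y_n}$, $(\ell_n)_{12}=e^{-x_n}$, $(\ell_n)_{21}=i(u-v)e^{y_n}$, $(\ell_n)_{22}=ie^{y_n-x_n}$, so at $v=u$ the product is upper triangular, the surviving term of the $(1,2)$ entry as $y_N\to+\infty$ gives (iv), and $C_N=-e^{x_N}A_{N-1}$ from \eqref{A_Crec} together with $A_{N-1}=\widetilde A_{N-1}+ie^{y_0}\widetilde B_{N-1}$ gives (iii); the sign identity $-(-i)^{N-1}=i^{-N-1}$ checks out. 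This is a legitimate and arguably cleaner repackaging of what the paper does: the paper establishes the same kernel formulas \eqref{A_AmLam} and \eqref{A_BmLam} by induction on $m$ from the recurrences \eqref{A_AArec} and \eqref{A_BBrec}, whereas you read them off a product of triangular matrices; the content is identical.

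The genuine gap is (ii). What you offer there is a program, not a proof: the transposed kernel identity $A_N(u)^{(x)}\Lambda_v(x;y)=(u-v)\big(A_{N-1}(u)^{(y)}\big)^{t}\Lambda_v(x;y)$ is indeed the correct reduction (it is the identity behind the direct computation in \cite{Kharchev_GG}, which the paper explicitly set out to bypass), but the induction on $N$ that would establish it, the integration by parts against a $y$-differential operator of order $N-1$, and the vanishing of boundary terms are all deferred rather than done, and that is where essentially all of the work of (ii) lies. A smaller inaccuracy: the claim that the prefactor $(u-v)$ is ``already visible locally'' through $\widetilde L_n(u)_{21}w_n(v)=i(u-v)e^{y_n}w_n(v)$ is not sufficient, since the diagonal entries $(\ell_n)_{11}=u-v-ie^{x_n-y_n}$ do \emph{not} vanish at $u=v$; the clean overall factor only emerges after the limit $y_N\to+\infty$ sends $(\ell_N)_{11}\to u-v$ and one checks inductively that every entry $(\ell_m\cdots\ell_1)_{21}$ is divisible by $(u-v)$.

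Note that the paper's proof of (ii) is a short spectral argument whose ingredients you already have in hand: by Theorem~\ref{A_Th_psi} and property (i), $\Lambda_N(v)\psi_{\gamma_1,\ldots,\gamma_{N-1}}=\psi_{\gamma_1,\ldots,\gamma_{N-1},v}$, so by \eqref{A_psi_def} both sides of (ii) act on an arbitrary eigenfunction $\psi_{\gamma_1,\ldots,\gamma_{N-1}}$ as multiplication by $(u-v)\prod_{k=1}^{N-1}(u-\gamma_k)$; completeness of the eigenfunctions, i.e. \eqref{A_compl} and the representation \eqref{A_f_psi}, then extends the identity to every function on which the operators are defined. I would either adopt this argument for (ii), or, if you prefer a proof independent of the completeness input (which the paper quotes from \cite{Gelfand4}), actually carry out the kernel induction you sketch — as written, (ii) remains unproved in your proposal.
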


{\bfseries Proof.} The equality \textsc{(i)} was proved in the
theorem~\ref{A_Th_psi}.
Property \textsc{(ii)} have been proved in \cite{Kharchev_GG} by
the direct calculation. Below we give a simplified version of this
property. Let us
notice that \textsc{(ii)} is valid on arbitrary $N-1$-particle eigenfunction
$\psi_{\gamma_1,\ldots,\gamma_{N-1}}(y_1,\ldots, y_{N-1})$.
Indeed, in according with the equation~\eqref{A_psi_def} the left hand
side of \textsc{(ii)} is
\begin{multline} \label{A_lhs_pr_ii}
A_N(u)(\Lambda_N(v)\psi_{\gamma_1,\ldots,\gamma_{N-1}})(x_1,\ldots, x_N)=
A_N(u)\psi_{\gamma_1,\ldots,\gamma_{N-1},v}(x_1,\ldots, x_N)=\\
 =(u-v)\prod_{k=1}^{N-1}(u-\gamma_k)\psi_{\gamma_1,\ldots,\gamma_{N-1},v}(x_1,\ldots, x_N)
\end{multline}
while the right hand side is
\begin{multline} \label{A_rhs_pr_ii}
(u-v)(\Lambda_N(v)A_{N-1}(u)\psi_{\gamma_1,\ldots,\gamma_{N-1}})(x_1,\ldots,x_N)=\\
 =(u-v)\prod_{k=1}^{N-1}(u-\gamma_k)(\Lambda_N(v)\psi_{\gamma_1,\ldots,\gamma_{N-1}})(x_1,\ldots, x_N)=\\
  =(u-v)\prod_{k=1}^{N-1}(u-\gamma_k)\psi_{\gamma_1,\ldots,\gamma_{N-1},v}(x_1,\ldots, x_N).
\end{multline}
Comparing~\eqref{A_lhs_pr_ii} and \eqref{A_rhs_pr_ii} one concludes
that the equality \textsc{(ii)} is valid on the functions
$\psi_{\gamma}(x)$. Due to the completeness of the system of these
functions~\eqref{A_compl} any function that belongs to the domain of definition of
the operators $A_{N-1}(u)$ and $\Lambda_N(u)$ can be represented
in the form~\eqref{A_f_psi}. Therefore the property \textsc{(ii)} is
valid on any function of $N-1$ variables where the action of the
operators $A_{N-1}(u)$ and $\Lambda_N(v)$ is well defined.  \\

To prove \textsc{(iii)} we need the formula for an action of the
operators $A_m(u)$ on the kernel of the operator $\Lambda_N(u)$:
\begin{equation} \label{A_AmLam}
 A_m(u)\Lambda_{u}(x_1,\ldots,x_N;y_1,\ldots,y_{N-1})=
 (-i)^{m}e^{\sum\limits_{j=1}^m(x_j-y_j)}\Lambda_{u}(x_1,\ldots,x_N;y_1,\ldots,y_{N-1})
\end{equation}
for $m=0,\ldots,N-1$, ($A_0(u)\hm=1$ is implied). In the case
$m\hm=0$ and $m\hm=1$ this equality is obvious. For
$m=2,\ldots,N-1$ it can be proved by induction using the
relation~\eqref{A_AArec}. Indeed
\begin{multline} \label{A_iii_proof_ind}
 A_m(u)\Lambda_{u}(x;y)=(u+i\hbar\partial_{x_m})A_{m-1}(u)\Lambda_{u}(x;y)-e^{x_{m-1}-x_m}A_{m-2}(u)\Lambda_{u}(x;y)=\\
 =\bigl[(-i)^{m}e^{\sum\limits_{j=1}^{m-1}(x_j-y_j)}(e^{x_m-y_m}-e^{y_{m-1}-x_m})-(-i)^{m-2}e^{x_{m-1}-x_m}e^{\sum\limits_{j=1}^{m-2}(x_j-y_j)}\bigr]\Lambda_{u}(x;y)=\\
 =(-i)^{m}e^{\sum\limits_{j=1}^m(x_j-y_j)}\Lambda_{u}(x;y).
\end{multline}
The formula~\eqref{A_AmLam} for $m=N-1$ and the
relation~\eqref{A_Crec} give
\begin{multline*}
 C_N(u)\Lambda_{u}(x;y)=-e^{x_N}A_{N-1}(u)\Lambda_{u}(x;y)= \\
  =-(-i)^{N-1}e^{x_N+\sum_{j=1}^{N-1}(x_j-y_j)}\Lambda_{u}(x;y)=(-i)^{N+1}\Lambda_{u-i\hbar}(x;y),
\end{multline*}
what means in turn the equality \textsc{(iii)}. \\

Note that if one sets  $m=N$ in~\eqref{A_iii_proof_ind} then the
term containing $e^{x_m-y_m}$ does not arise and, consequently, we
obtain zero in the right hand side. Thus we have proved directly
that the kernel $\Lambda_{u}(x;y)$ satisfies to the
equation~\eqref{A_Apsi0}. \\

Analogously, to prove \textsc{(iv)} we used
\begin{equation} \label{A_BBrec}
 B_N(u)=(u-p_N)B_{N-1}(u)-e^{x_{N-1}-x_N}B_{N-2}(u),
\end{equation}
and the action of operator $B_m(u)$ on the kernel of the operator
$\Lambda_N(u)$
\begin{equation} \label{A_BmLam}
 B_m(u)\Lambda_{u}(x;y)=e^{-x_1}(-i)^{m-1}\sum_{k=1}^m(-1)^{k+1}\prod_{j=2}^k e^{y_{j-1}-x_j}
   \prod_{s=k+1}^m e^{x_s-y_s}\Lambda_{u}(x;y)
\end{equation}
which can be also proved by induction.
Exploring the relation~\eqref{A_BBrec} and the formula~\eqref{A_BmLam}
for $m=N-1, N-2$ we obtain
\begin{multline*}
 B_{N}(u)\Lambda_{u}(x;y)=e^{-x_1}(-i)^{N-1}\sum_{k=1}^{N-1} (-1)^{k+1}\prod_{j=2}^k e^{y_{j-1}-x_j}
    \prod_{s=k+1}^{N-1} e^{x_s-y_s}(-e^{y_{N-1}-x_N})\Lambda_{u}(x;y)-\\
  -e^{-x_1}(-i)^{N-3}\sum_{k=1}^{N-2} (-1)^{k+1}\prod_{j=2}^k e^{y_{j-1}-x_j}
    \prod_{s=k+1}^{N-2} e^{x_s-y_s} e^{x_{N-1}-x_N}\Lambda_{u}(x;y)=\\
  =-e^{-x_1}(-i)^{N-1}(-1)^N \prod_{j=2}^N e^{y_{j-1}-x_j}\Lambda_{u}(x;y)=i^{N-1}\Lambda_{u+i\hbar}(x;y).
\end{multline*}

Let us note that the equality \textsc{(iii)}, \textsc{(iv)} up to
the factors $(-i)^{\mp N-1}$ can be derived from \textsc{(ii)}
using R-matrix formalism (see for example~\cite{Kharchev_OP},
\cite{Kharchev_O}, \cite{Kharchev_P}).

\section{Conclusion}
\label{A_Conclusion}

As it was mentioned in the introduction the separation of variables is achieved by a special choice of the transition function. In the case of Toda chain this function should be chosen as follows~\cite{Kharchev_P}:
\begin{equation} \label{A_UEq}
 U_{\varepsilon,\gamma_1,\ldots,\gamma_{N-1}}(x_1,\ldots,x_N)=
  e^{\frac{i}{\hbar}\bigl(\varepsilon-\sum\limits_{j=1}^{N-1}\gamma_j\bigr)x_N}
   \psi_{\gamma_1,\ldots,\gamma_{N-1}}(x_1,\ldots,x_{N-1}).
\end{equation}
 The function~\eqref{A_UEq} can be calculated by induction: having an expression for the $(N-1)$-particle transition function one can yield the expression for the $N$-particle one. This method was proposed in the work~\cite{Kharchev_O}.  The authors of~\cite{Kharchev_O} obtain the recurrent formula integrating over the variables $\gamma_j$:
\begin{equation}  \label{A_psi_rec_gamma}
 \begin{split}
 \psi_{\lambda_1,\ldots,\lambda_N}&(x_1,\ldots,x_N)= \\
   &=\int\limits_{\mathbb R^{N-1}}
  e^{\frac{i}{\hbar}\bigl(\sum\limits_{k=1}^N\lambda_k-\sum\limits_{j=1}^{N-1}\gamma_j\bigr)x_N}
  \psi_{\gamma_1,\ldots,\gamma_{N-1}}(x_1,\ldots,x_{N-1})
  K(\lambda;\gamma) \mu(\gamma)d\gamma,
 \end{split}
\end{equation}
where $\mu(\gamma)$ is a Sklyanin measure described in
section~\ref{A_Int_meas} and $K(\lambda;\gamma)$ is some kernel.
This formula leads to the Mellin-Barns representation for the
transition function. \\

In the present paper it is shown that the recurrent integration
can be realized in terms of the coordinates $x_n$
\big(eq.~\eqref{A_psi_rec}\big). This leads in turn to the
Gauss-Givental representation~\eqref{A_psi_GG}, which was obtained
in~\cite{Givental}, \cite{Kharchev_GG} from the other circle of
ideas. The integration over the coordinates in the
formula~\eqref{A_psi_rec} implies actually an action of the operator
$\Lambda_N(\gamma_N)$ on the $(N-1)$-particle eigenfunction. The
function~\eqref{A_UEq} can be rewritten then in terms of the product
of $\Lambda$-operators like~\eqref{A_psi}. This results in the fact
that the function~\eqref{A_UEq} inherits the properties of the
$\Lambda$-operators discussed in section~\ref{A_Pr_Lambda}. For
example, the Weyl-invariance of this function is encoded in the
property \textsc{(i)}. Due to the properties \textsc{(iii)} and
\textsc{(iv)} the eigenfunction for the periodic Toda chain in the
new variables $\Phi_\varepsilon(\gamma)$ should satisfy the Baxter
equation. The last fact leads to the separation of variables, that
is this function decomposes in the product of one-variable
functions:
 $\Phi_\varepsilon(\gamma_1,\ldots,\gamma_{N-1})=\prod\limits_{j=1}^{N-1}c_\varepsilon(\gamma_j)$.
See details in~\cite{Kharchev_P}. \\

Availability of two kind of recurrent formulae -- of type~\eqref{A_psi_rec_gamma} and of type~\eqref{A_psi_rec} -- is explained by the fact that the function $\psi_\gamma(x)$ can be regarded as well as a function of $x_n$ satisfying the differential equations~\eqref{A_psi_def} and in other hand as a function of $\gamma_j$ satisfying difference equations in $\gamma_j$ \cite{Babelon}, i.e. as a wave function of some dual model. The duality of the same kind appears in the Representation Theory. The infinite-dimensional Gelfand-Zetlin representation of Lie algebra $\mathfrak{gl}(N)$ by shift operators in $\gamma_j$ allows to obtain the Mellin-Barns integral representation~\cite{Kharchev_GZ}, while the Gauss representation of the same Lie algebra by differential operators in $x_n$ leads to Gauss-Givental representation~\cite{Kharchev_GG}. \\

We hope the method proposed for the $XXX$-model in~\cite{Derkachov} and developed here for the Toda chain (including the use of diagram technique) can be applied to other more complicated integrable systems.

\section*{Acknowledgments}

This work is a part of PhD which the author is preparing at Bogoliubov Laboratory of Theoretical Physics (JINR, Dubna, Russia) and at LAREMA (UMR 6093 du CNRS, Univ.\,of Angers, France). He thanks both laboratories for excellent work conditions. He is thankful also to the French-Russian Network in Theoretical Physics and to prof. J.-M.\,Maillet for financial support for this joint PhD programme.\\

Author thanks to S.\,M.\,Kharchev, S.\,Z.\,Pakuliak and V.\,N.\,Rubtsov for useful advices and remarks.

\chapter{Classical elliptic current algebras. I}
\label{SA21}

\thispagestyle{empty} \pagestyle{myheadings}
\markboth{}{S.\,Pakuliak, V.\,Rubtsov, A.\,Silantyev\hfil{\itshape Classical elliptic current algebras. I}}

\newpage \thispagestyle{empty}

\vspace*{3cm}
\begin{center}
{\LARGE Classical elliptic current algebras. I} \\[17mm]
{\large Stanislav PAKULIAK~$^a$, Vladimir RUBTSOV~$^b$ and \\[4mm]
Alexey SILANTYEV~$^c$} \\[8mm]
$^a$ Institute of Theoretical \& Experimental Physics, 117259 Moscow, Russia \\
Laboratory of Theoretical Physics, JINR, 141980 Dubna, Moscow reg., Russia \\
~~E-mail: pakuliak@theor.jinr.ru \\[4mm]
$^b$ Institute of Theoretical \& Experimental Physics, 117259 Moscow, Russia  \\
D\'epartment de Math\'ematiques, Universit\'e d'Angers, 2 Bd. Lavoisier, 49045 Angers, France \\
~~E-mail: Volodya.Roubtsov@univ-angers.fr \\[4mm]
$^c$ Laboratory of Theoretical Physics, JINR, 141980 Dubna, Moscow reg., Russia \\
D\'epartment de Math\'ematiques, Universit\'e d'Angers, 2 Bd. Lavoisier, 49045 Angers, France \\
~~E-mail: silant@tonton.univ-angers.fr
\end{center}

\begin{center}
\bigskip
\bigskip
\emph{In memory of Leonid Vaksman}
\bigskip
\bigskip
\end{center}

\begin{abstract}
In this paper we discuss classical elliptic current algebras and
show that there are two different choices of commutative {\em
test function algebras} on a complex torus leading to two
different elliptic current algebras. Quantization of these
classical current algebras give rise to two classes of quantized
dynamical quasi-Hopf current algebras studied by
Enriquez-Felder-Rubtsov and
Arnaudon-Buffenoir-Ragoucy-Roche-Jimbo-Konno-Odake-Shi\-raishi.
\end{abstract}

\section{Introduction}
\label{B_secb1}

Classical elliptic algebras are "quasi-classical limits" of
quantum algebras whose structure is defined by an elliptic
$R$-matrix. The first elliptic $R$-matrix appeared as a matrix
of Boltzmann weights for the eight-vertex model~\cite{Bax1}. This
matrix satisfies the Yang-Baxter equation using which one proves
integrability of the model. An investigation of the eight-vertex
model~\cite{Bax2} uncovered its relation  to the so-called
generalized ice-type model -- the Solid-On-Solid (SOS) model. This is
a face type model with Boltzmann weights which form a matrix
satisfying a {\em dynamical} Yang-Baxter equation.

In this paper we restrict our attention to {\em classical current
algebras} (algebras which can be described by a collection of currents)
related to the classical $r$-matrices and which are quasi-classical limits of
SOS-type {\em quantized elliptic current algebras}. The latter
were introduced by Felder~\cite{F2} and the corresponding $R$-matrix
is called usually a {\em Felder $R$-matrix}. In \emph{loc. cit.}
the current algebras were defined by dynamical $RLL$-relations. At
the same time Enriquez and one of authors (V.R.) developed  a
theory of quantum current algebras related to arbitrary genus
complex curves (in particular to an elliptic curve) as a
quantization of certain (twisted) Manin pairs~\cite{ER1}
using {\em Drinfeld's new realization} of quantized current
algebras. Further, it was shown in~\cite{EF}
that the Felder algebra can be obtained by twisting of the
Enriquez-Rubtsov elliptic algebra. This twisted algebra will be
denoted by $E_{\tau,\eta}$ and it is a quasi-Hopf algebra.

Originally, the dynamical Yang-Baxter equation appeared in ~\cite{GN,F1}. The
fact that elliptic algebras could be obtained as quasi-Hopf
deformations of Hopf algebras was noted first in a special
case in ~\cite{BBB} and was discussed in ~\cite{Fr97}. The full potential
of this idea was realized in papers ~\cite{ABRR97} and
~\cite{JKOS1}. It was explained in these papers how to obtain
the universal dynamical Yang-Baxter equation for the twisted elliptic universal
$R$-matrix from the Yang-Baxter equation for the universal $R$-matrix of the quantum
affine algebra $U_q(\hat{\mathfrak g})$. It was also shown that the image of the twisted
$R$-matrix in finite-dimensional representations coincides with
SOS type $R$-matrix.

 Konno proposed in~\cite{K98} an RSOS type elliptic current algebra
 (which will be denoted by
$U_{p,q}(\hat{\mathfrak{sl}}_2)$) generalizing some ideas of
~\cite{KLP98}. This algebra was studied in detail in ~\cite{JKOS2}
where it was shown that
commutation relations for $U_{p,q}(\hat{\mathfrak{sl}}_2)$ expressed in terms of $L$-operators
coincide with the commutation relations of the Enriquez-Felder-Rubtsov
algebra up to a shift of the elliptic module by the central
element. It was observed in ~\cite{JKOS2} that this difference of
central charges can be explained by different choices of contours
on the elliptic curve entering in these extensions. In the case of the algebra
$E_{\tau,\eta}$ the elliptic module is fixed, while
in the case of $U_{p,q}(\hat{\mathfrak{sl}}_2),\ p=e^{i\pi\tau},$ it
turn out to be a dynamical parameter shifted by the central element.
Commutation relations for these algebras coincide when the central charge is zero,
but the algebras themselves are different. Furthermore, the difference between these two
algebras was interpreted in~\cite{EPR} as a difference in
definitions of half-currents (or Gauss coordinates) in $L$-operator
representation. The roots of this difference are related to
different decomposition types of so-called {\em Green kernels}
introduced in~\cite{ER1} for quantization of Manin pairs: they
are expanded into Taylor series in the case of the algebra
$E_{\tau,\eta}$ and into Fourier series for
$U_{p,q}(\hat{\mathfrak{sl}}_2)$.

Here, we continue a comparative study of different elliptic current
algebras. Since the Green kernel is the same in both the classical
and the quantum case we restrict ourselves only to the classical
case for the sake of  simplicity. The classical limits of
quasi-Hopf algebras $E_{\tau,\eta}$ and
$U_{p,q}(\hat{\mathfrak{sl}}_2)$ are {\em quasi-Lie bialgebras}
denoted by $\cEF$ and $\utau$ respectively. We will give an
"analytic" description of these algebras in terms of
distributions. Then, the different expansions of Green kernels
will be interpreted as the action of distributions on different test
function algebras. We will call them {\em Green distributions}.
The scalar products for test function algebras which define their
embedding in the corresponding space of distributions are defined by
integration over different contours on the surface.

Let us describe briefly the structure of the paper. Section 2
contains some basic notions and constructions which are used
throughout the paper. Here, we remind some
definitions from ~\cite{ER1}. Namely, we define test function
algebras on a complex curve $\Sigma$, a continuous non-degenerate
scalar product, distributions on the test functions and a
generalized notion of Drinfeld currents associated with these
algebras and with a (possibly infinite-dimensional) Lie algebra
$\mathfrak g$. Hence, our currents will be certain $\mathfrak
g$-valued distributions. Then we review the case when $\mathfrak
g$ is a loop algebra generated by a semi-simple Lie algebra
$\mathfrak a$. We also discuss a centrally and a co-centrally
extended version of $\mathfrak g$ and different bialgebra
structures. The latter are based on the notion of Green
distributions and related half-currents.

We describe in detail two different classical elliptic current
algebras which correspond to two different choices of the basic
test function algebras (in fact they correspond to two different
coverings of the underlying elliptic curve).

Section 3 is devoted to the construction and comparison of
classical elliptic  algebras $\cEF$ and $\utau$. In the first two
subsections we define elliptic Green distributions for both test
function algebras. We pay special attention to their properties
because they manifest the main differences between the
corresponding elliptic algebras. Further, we describe these
classical elliptic algebras in terms of the half-currents
constructed using the Green distributions. We use projections introducing
in ~\cite{ER2} to define these half-currents. We see how the
half-currents inherit the properties of Green distributions. In
the last subsection we show that the half-currents describe the
corresponding bialgebra structure. Namely, we recall the universal
classical $r$-matrices for both elliptic classical algebras $\cEF$
and $\utau$ and make explicit their relation to the $L$-operators.
Then, the corresponding co-brackets for half-currents are expressed
in a matrix form via the $L$-operators.

In the next paper \cite{S22} we will describe different degenerations of the
classical elliptic current algebras in terms of
degenerations of Green distributions. We will discuss also the inverse
problem of reconstruction of the trigonometric and elliptic
classical $r$-matrices from the rational and trigonometric $r$-matrices
using approach of \cite{RF}.

\section{Currents and half-currents}

Current realization of the quantum affine algebras and Yangiens
was introduced by Drinfeld in~\cite{D88}. In these cases they can
be understand as elements of the space $\A[[z,z^{-1}]]$, where
$\A$ is a corresponding algebra. Here we introduce a more general
notion of currents suitable even for the case when the currents
are expressed by integrals instead of formal series.

\bb{Test function algebras.} Let $\fK$ be a function algebra on a
one-dimensional complex manifold $\Sigma$ with a point-wise
multiplication and a continuous invariant (non-degenerate) scalar
product $\la\cdot,\cdot\ra\colon\fK\times\fK\to\CC$. We shall call
the pair $(\fK,\la\cdot,\cdot\ra)$ {\em a test function algebra}.
The non-degeneracy of the scalar product implies that the algebra
$\fK$ can be extended to a space $\fK'$ of linear continuous
functionals on $\fK$. We use the notation $\la a(u),s(u)\ra$ or
$\la a(u),s(u)\ra_u$ for the action of the distribution
$a(u)\in\fK'$ on a test function $s(u)\in\fK$. Let
$\{\epsilon^i(u)\}$ and $\{\epsilon_i(u)\}$ be  dual bases of
$\fK$. A typical example of the element from $\fK'$ is the series
$\delta(u,z)=\sum_i\epsilon^i(u)\epsilon_i(z)$. This is a
delta-function distribution on $\fK$ because it satisfies $\la
\delta(u,z),s(u)\ra_u=s(z)$ for any test function $s(u)\in\fK$.

\bb{Currents.} Consider an infinite-dimensional complex Lie algebra $\frg$ and an operator
$\hx\colon\fK\to\frg$. The expression
$
 x(u)=\sum_i \epsilon^i(u)\hx[\epsilon_i]
$ 
does not depend on a choice of dual bases in $\fK$ and is called a current corresponding
to the operator $\hx$ ($\hx[\epsilon_i]$ means an action of $\hx$ on $\epsilon_i$).
We should interpret the current $x(u)$ as a $\frg$-valued distribution such that
$
 \la x(u),s(u)\ra=\hx[s].\
$
That is the current $x(u)$ can be regarded as a kernel of the operator $\hx$ and the
latter formula\ 
 gives its invariant definition.

\bb{Loop algebras} Let $\{\hat x_k\},\ k=1,\ldots, n$ be a finite
number of operators $\hat x_k\colon\fK\to\frg$, where $\frg$ is an
infinite-dimensional space spanned by $\hat x_k[s]$, $s\in\fK$.
Consider the corresponding currents $x_k(u)$. For these currents
we impose the standard commutation relations
\begin{align}\label{B_cu-com-rel}
[x_k(u),x_l(v)]=C^m_{kl}x_m(u)\delta(u,v)\,
\end{align}
where $C^m_{kl}$ are structure constants of some semi-simple Lie
algebra $\mathfrak a,\ \dim\mathfrak a = n$
(equality~\r{B_cu-com-rel} is understood in sense of distributions).
These commutation relations equip $\frg$ with a Lie algebra
structure. The Lie algebra $\frg$ defined in such a way can be
viewed as a Lie algebra $\mathfrak a\otimes\fK$ with the brackets
$[x\otimes s(z),y\otimes t(z)]=[x,y]_{\mathfrak a}\otimes
s(z)t(z)$, where $x,y\in\mathfrak a$, $s,t\in\fK$. This algebra
possesses an invariant scalar product
$
 \la x\otimes s,y\otimes t\ra=(x,y)\la s(u),t(u)\ra_u\,,\
$
where $(\cdot,\cdot)$ an invariant scalar product on $\mathfrak a$ proportional to the Killing form.

\bb{Central extension.} The algebra $\frg=\mathfrak a\otimes\fK$
can be extended by introducing a central element $c$ and a
co-central element $d$. Let us consider the space $\hg=(\mathfrak
a\otimes\fK)\oplus\CC\oplus\CC$ and define an algebra structure on
this space. Let the element $c\equiv(0,1,0)$ commutes with
everything and the commutator of the element $d\equiv(0,0,1)$ with
the elements $\hat x[s]\equiv(x\otimes s,0,0)$, $x\in\mathfrak a$,
$s\in\mathfrak K$, is given by the formula
$
 [d,\hx[s]]=\hx[s'],\
$
where $s'$ is a derivation of $s$.
Define the Lie bracket between the elements of type $\hx[s]$ requiring the scalar product defined by formulae
\begin{align*}
 \la \hat x[s],\hat y[t]\ra&=\la x\otimes s,y\otimes t\ra, & \la c,\hx[s]\ra&=\la d,\hx[s]\ra=0
\end{align*}
to be invariant. It gives the formula
\begin{align}
 [\hat x[s],\hat y[t]]=([x\otimes s,y\otimes t]_0,0,0)+
 c\cdot B(x_1\otimes s_1,x_2\otimes s_2), \label{B_XXc}
\end{align}
where $[\cdot,\cdot]_0$ is the Lie bracket in the algebra
$\frg=\mathfrak a\otimes\mathfrak K$ and $B(\cdot,\cdot)$ is a
standard 1-cocycle:
$
 B\big(x\otimes s,y\otimes t\big)=(x,y)\la s'(z),t(z)\ra_z.\
$
The expression $\hat x[s]$ depends linearly on $s\in\fK$ and,
therefore, can be regarded as an action of operator
$\hx\colon\fK\to\hg$. The commutation relations for the algebra
$\hg$ in terms of currents $x(u)$ corresponding to these operators
can be written in the standard form: $[c,x(u)]=[c,d]=0$ and
\begin{align}
 [x_1(u),x_2(v)]&=x_3(u)\delta(u,v)-c\cdot(x_1,x_2){d\delta(u,v)}/{du}, & [d,x(u)]&=-d x(u)/du, \label{B_xxtc}
\end{align}
where $x_1,x_2\in\mathfrak a$, $x_3=[x_1,x_2]_{\mathfrak a}$.

\bb{Half-currents.} To describe different bialgebra structures in
the current algebras we have to decompose  the currents in these
algebras into difference of the currents which have good
analytical properties in certain domains:
$
 x(u)=x^+(u)-x^-(u).\
$
The $\frg$-valued distributions $x^+(u)$, $x^-(u)$ are called {\em
half-currents}. To perform such a decomposition we will use
so-called Green distributions~\cite{ER1}. Let
$\Omega^+,\Omega^-\subset\Sigma\times\Sigma$ be two domain
separated by a hypersurface $\bar\Delta\subset\Sigma\times\Sigma$
which contains the diagonal $\Delta=\{(u,u)\mid
u\in\Sigma\}\subset\bar\Delta$. Let there exist distributions
$G^+(u,z)$ and $G^-(u,z)$ regular in $\Omega^+$ and $\Omega^-$
respectively such that $ \delta(u,z)=G^+(u,z)-G^-(u,z)$. To define
half-currents corresponding to these Green distributions we
decompose them as $G^+(u,z)=\sum_i\alpha^+_i(u)\beta^+_i(z)$ and
$G^-(u,z)=\sum_i\alpha^-_i(u)\beta^-_i(z)$. Then the half-currents
are defined as $x^+(u)=\sum_i \alpha^+_i(u)\hx[\beta^+_i]$ and
$x^-(u)=\sum_i \alpha^-_i(u)\hx[\beta^-_i]$. This definition does
not depend on a choice of decompositions of the Green
distributions. The half-currents are currents corresponding to the
operators $\hx^\pm=\pm\,\hx\cdot P^\pm$, where $P^\pm[s](z)=\pm\la
G^\pm(u,z),s(u)\ra$, $s\in\fK$. One can express the half-currents
through the current $x(u)$, which we shall call {\em a total
current} in contrast with the half ones:
\begin{align}\label{B_hc_tc}
 x^+(u)=\la G^+(u,z)x(z)\ra_z,\qquad x^-(u)=\la G^-(u,z)x(z)\ra_z.
\end{align}
Here $\la a(z) \ra_z\equiv \la a(z),1 \ra_z$.

\bb{Two elliptic classical current algebras.} In this paper we
will consider the case when $\Sigma$ is a covering of an elliptic
curve and Green distributions are regularization of certain
quasi-doubly periodic meromorphic functions. We will call the
corresponding centrally extended algebras of currents by {\em
elliptic classical current algebras}. The main aim of this paper
is to show the following facts:
\begin{itemize}
\item There are two essentially different choices of the test function algebras $\fK$ in this case
corresponding to the different covering $\Sigma$.
\item The same quasi-doubly periodic meromorphic functions regularized with respect to the different
 test function algebras define the different quasi-Lie bialgebra structures and, therefore,
 the different classical elliptic current algebras.
\item The internal structure of these two elliptic algebras is essentially different
in spite of a similarity in the commutation relations between
their half-currents.
\end{itemize}

The first choice corresponds to $\fK=\lfK_0$, where $\lfK_0$
consists of complex-valued one-variable functions defined in a
vicinity of origin  equipped
with the scalar product
\begin{align}
 \la s_1(u),s_2(u)\ra=\oint\limits_{C_0}\frac{du}{2\pi i}s_1(u)s_2(u). \label{B_lfK_sp}
\end{align}
Here $C_0$ is a contour encircling zero and belonging in the intersection of domains
of functions $s_1(u)$, $s_2(u)$, such that the scalar product is a residue in zero.
 These functions can be
extended up to meromorphic functions on the covering
$\Sigma=\mathbb C$. The regularization domain $\Omega^+$,
$\Omega^-$ for Green distributions in this case consist of the
pairs $(u,z)$ such that $\max(1,|\tau|)>|u|>|z|>0$ and
$0<|u|<|z|<\max(1,|\tau|)$ respectively, where $\tau$ is an
elliptic module, and $\bar\Delta=\{(u,z)\mid |u|=|z|\}$.

The second choice corresponds to $\fK=K=K(\Cyl)$. The algebra $K$ consists of entire periodic functions
$s(u)=s(u+1)$ on $\CC$ decaying exponentially at $\Im u\to\pm\infty$ equipped with an invariant scalar
product
\begin{align}
 \la s(u),t(u)\ra=\int\limits_{-\frac12}^{\frac12}\frac{du}{2\pi i}s(u)t(u),
   \qquad s,t\in K. \label{B_spJ}
\end{align}
This functions can be regarded as functions on cylinder $\Sigma=\Cyl$.
The regularization domain $\Omega^+$, $\Omega^-$ for Green distributions
consist of the pairs $(u,z)$ such that $-\Im\tau<\Im(u-z)<0$ and $0<\Im (u-z)<\Im\tau$ respectively
and $\bar\Delta=\{(u,z)\mid \Im u=\Im z\}$.

\bb{Integration contour.} The geometric roots of the difference
between these two choices can be explained as follows. These
choices of test functions on different coverings $\Sigma$ of
elliptic curve correspond to the homotopically different contours
on the elliptic curve. Each test function can be considered as an
analytical continuation of a function from this contour -- a real
manifold -- to the corresponding covering. This covering should be
chosen as a most homotopically simple covering which permits to
obtain a bigger source of test functions. In the first case, this
contour is a homotopically trivial and coincides with a small
contour around fixed point on the torus. We can always choose a
local coordinate $u$ such that $u=0$ in this point. This explains
the notation $\lfK_0$. This contour corresponds to the covering
$\Sigma=\CC$ and it enters in the pairing~\r{B_lfK_sp}. In the
second case, it goes along a cycle and it can not be represented
as a closed contour on $\CC$. Hence the most simple covering in
this case is a cylinder $\Sigma=\Cyl$ and the contour is that one
in the pairing~\r{B_spJ}. This leads to essentially different
properties of the current elliptic algebras based on the test
function algebras $\fK=\lfK_0$ and $\fK=K(\Cyl)$.

\bb{Restriction to the $\slt$ case.} To make these differences more transparent we shall consider
only the simplest case of Lie algebra $\mathfrak a=\slt$ defined as a three-dimensional complex
Lie algebra with commutation relations $[h,e]=2e$, $[h,f]=-2f$ and $[e,f]=h$. We denote the
constructed current algebra $\hg$ for the case $\fK=\lfK_0$ as $\cEF$ and for $\fK=K=K(\Cyl)$ as
$\utau$. These current algebras may be identified with classical limits of the quantized currents
algebra $E_{\tau,\eta}(\slt)$ of  \cite{EF} and $U_{p,q}(\widehat{\mathfrak{sl}}_2)$
of \cite{JKOS2} respectively. The Green
distributions appear in the algebras  $\cEF$ and $\utau$ as a regularization of the same meromorphic
quasi-doubly periodic functions but in different spaces: $(\lfK_0\otimes\lfK_0)'$ and $(K\otimes K)'$
respectively. Primes mean the extension to the space of the distributions. We call them {\em elliptic
Green distributions}. We define the algebras $\cEF$ and $\utau$ to be {\itshape a priori} different,
because the main component of our construction, elliptic Green distributions are {\itshape a priori}
different being understood as distributions of different types: related to algebras $\lfK_0$ and $K$
respectively. It means, in particular, that their quantum analogs, the algebras
$E_{\tau,\eta}(\mathfrak{sl}_2)$ and $U_{p,q}(\widehat{\mathfrak{sl}}_2)$ are different.

\section{Half-currents and co-structures}
\label{B_secb3}
\setcounter{bbcount}{0}

We start with a suitable definition of theta-functions and a
conventional choice of standard bases. This choice is motivated
and corresponds to definitions and notations of ~\cite{EPR}.

\bb{Theta-function.} Let $\tau\in\CC$, $\Im\tau>0$ be a module of
the elliptic curve $\mathbb C/\Gamma$, where $\Gamma=\mathbb
Z+\tau\mathbb Z$ is a period lattice. The odd theta function
$\theta(u)=-\theta(-u)$ is defined as a holomorphic function on
$\CC$ with the properties
\begin{align}
 \theta(u+1)&=-\theta(u), &\theta(u+\tau)&=-e^{-2\pi i u-\pi i\tau}\theta(u), &\theta'(0)&=1.
\end{align}

\subsection{Elliptic Green distributions on $\lfK_0$}
\label{B_subsec31}
\setcounter{bbcount}{1}

\bb{Dual bases.} Fix a complex number $\lambda$.  Consider the following bases in $\lfK_0$ ($n\geq0$):
$\epsilon_{n;\lambda}(u)=(-u)^n$, $\epsilon^{-n-1;\lambda}(u)=u^n$,
\begin{align}
 \epsilon^{n;\lambda}(u)=\frac1{n!}\left(\frac{\theta(u+\lambda)}
 {\theta(u)\theta(\lambda)}\right)^{(n)}\,,\quad
 \epsilon_{-n-1;\lambda}(u)=\frac{(-1)^n}{n!}\left(\frac{\theta(u-\lambda)}
 {\theta(u)\theta(-\lambda)}\right)^{(n)}\,,  \nn
\end{align}
for $\lambda\not\in\Gamma$ and the bases $\epsilon_{n;0}(u)=(-u)^n$, $\epsilon^{-n-1;0}(u)=u^n$,
\begin{align}
 \epsilon^{n;0}(u)=\frac1{n!}\left(\frac{\theta'(u)}{\theta(u)}\right)^{(n)}\,,\quad
 \epsilon_{-n-1;0}(u)=\frac{(-1)^n}{n!}\left(\frac{\theta'(u)}
 {\theta(u)}\right)^{(n)}\,, \nn
\end{align}
for $\lambda=0$. Here $(\cdot)^{(n)}$ means $n$-times derivation. These bases are dual:
$\la\epsilon^{n;\lambda}(u),\epsilon_{m;\lambda}(u)\ra=\delta^n_m$ and
$\la\epsilon^{n;0}(u),\epsilon_{m;0}(u)\ra=\delta^n_m$
with respect to the scalar product~\r{B_lfK_sp}
which means
\begin{align}
 &\sum_{n\in\mathbb Z}\epsilon^{n;\lambda}(u)\epsilon_{n;\lambda}(z)=\delta(u,z),
 & &\sum_{n\in\mathbb Z}\epsilon^{n;0}(u),\epsilon_{n;0}(z)=\delta(u,z). \label{B_dbs}
\end{align}

\bb{Green distributions for $\lfK_0$ and the addition theorems.}
Here we follow the ideas of ~\cite{ER1} and ~\cite{EPR}. We
define the following distribution
\begin{align}
 \G+(u,z)&=\sum_{n\ge0}\epsilon^{n;\lambda}(u)\epsilon_{n;\lambda}(z)\,,\quad
 \G-(u,z)=-\sum_{n<0}\epsilon^{n;\lambda}(u)\epsilon_{n;\lambda}(z)\,,  \label{B_Glpmdef} \\
 G(u,z)&=\sum_{n\ge0}\epsilon^{n;0}(u)\epsilon_{n;0}(z)=\sum_{n<0}\epsilon^{n;0}(z)\epsilon_{n;0}(u)\,.
 \label{B_Gdef} \end{align}
One can check that these series converge in sense of distributions and, therefore, define
continuous functionals on $\lfK_0$ called Green distributions. Their action on a test function $s(u)$ reads
\begin{align}
\la\G\pm(u,z),s(u)\ra_u&=
  \oint\limits_{|u|>|z|\atop |u|<|z|}\frac{du}{2\pi i}\frac{\theta(u-z+\lambda)}{\theta(u-z)
  \theta(\lambda)}s(u)\,,
  \label{B_Glpmact} \\
 \la G(u,z),s(u)\ra_u&=
  \oint\limits_{|u|>|z|}\frac{du}{2\pi i}\frac{\theta'(u-z)}{\theta(u-z)}s(u)\,, \label{B_Gact}
\end{align}
where integrations are taken over circles around zero which are
small enough such that the corresponding inequality takes place.

One can define a 'rescaling' of a test function $s(u)$ as a function $s\big(\frac u\alpha\big)$,
where $\alpha\in\mathbb C$, and therefore a 'rescaling' of distributions by the formula
$\la a(\frac u\alpha\big),s(u)\ra=\la a(u),s(\alpha u)\ra$. On the contrary, we are unable to define a
'shift' of test functions by a standard rule, because the operator $s(u)\mapsto s(u+z)$
is not a continuous one~\footnote{Consider, for example, the sum
$s_N(u)=\sum_{n=0}^N(\frac{u}{\alpha})^n$. For each $z$ there exist $\alpha$ such that the
sum $s_N(u+z)$ diverges, when $N\to\infty$.}. Nevertheless we use distributions 'shifted' in some sense.
Namely, we say that a two-variable distribution $a(u,z)$ (a linear continuous functional
$a\colon\lfK_0\otimes\lfK_0\to\mathbb C$) is 'shifted' if it possesses the properties:
(i) for any $s\in\lfK_0$ the functions $s_1(z)=\la a(u,z),s(u)\ra_u$ and $s_2(u)=\la a(u,z),s(z)\ra_z$
belong to $\lfK_0$; (ii) $\frac{\partial}{\partial u}a(u,z)=-\frac{\partial}{\partial z}a(u,z)$.
Here the subscripts $u$ and $z$ mean the corresponding partial action, for instance,
$\la a(u,z),s(u,z)\ra_u$ is a distribution acting on $\lfK_0$ by the formula
\begin{align*}
 \La\la a(u,z),s(u,z)\ra_u,t(z)\Ra=\la a(u,z),s(u,z)t(z)\ra.
\end{align*}
The condition (ii) means the equality $\la a(u,z),s'(u)t(z)\ra=-\la a(u,z),s(u)t'(z)\ra$.
The condition (i) implies that for any $s\in\lfK_0\otimes\lfK_0$ the expression
\begin{align}
 \la a(u,z),s(u,z)\ra_u=\sum_i\la a(u,z),p_i(u)\ra_u q_i(z), \label{B_auz_suz}
\end{align}
where $s(u,z)=\sum_i p_i(u)q_i(z)$, belongs to $\lfK_0$ (as a function of $z$).

The Green distributions ~\r{B_Glpmdef} and \r{B_Gdef} are examples of the `shifted'
distributions. The formula~\r{B_dbs} implies that
\begin{align}
 \G+(u,z)-\G-(u,z)=\delta(u,z)\,,\quad
 G(u,z)+G(z,u)=\delta(u,z)\,. \label{B_GGdelta}
\end{align}
The last formulae can be also obtained from~\r{B_Glpmact},
\r{B_Gact} taking into account that the function $s(u)$ has poles only in the points $u=0$.
As it is seen from~\r{B_Glpmact}, the oddness of function $\theta(u)$
leads to the following connection between the $\lambda$-depending Green distributions:
$
 \G+(u,z)=-G_{-\lambda}^-(z,u).
$

Now we are able to define a \emph{semidirect product of
two 'shifted' distributions} $a(u,z)$ and $b(v,z)$
as a linear continuous functional $a(u,z)b(v,z)$ acting on $s\in\lfK_0\otimes\lfK_0\otimes\lfK_0$
by the rule
\begin{align*}
 \la a(u,z)b(v,z),s(u,v,z)\ra=\La a(u,z),\la b(v,z),s(u,v,z)\ra_v\Ra_{u,z}.
\end{align*}

\begin{prop} \label{B_prop1}
The {\em semi-direct products} of Green
distributions are related by the following addition formulae
\begin{align}
\begin{split}
 \G+(u,z)\G-(z,v)&=\G+(u,v)G(u,z)-\G+(u,v)G(v,z)-\frac{\partial}{\partial\lambda}\G+(u,v)\,,
\end{split} \label{B_llpp} \\
\begin{split}
 \G+(u,z)\G+(z,v)&=\G+(u,v)G(u,z)+\G+(u,v)G(z,v)-\frac{\partial}{\partial\lambda}\G+(u,v)\,,
\end{split} \label{B_llpm} \\
\begin{split}
 \G-(u,z)\G-(z,v)&=-\G-(u,v)G(z,u)-\G-(u,v)G(v,z)-\frac{\partial}{\partial\lambda}\G+(u,v)\,,
\end{split} \label{B_llmp} \\
\begin{split}
 \G-(u,z)\G+(z,v)&=-\G+(u,v)G(z,u)+\G+(u,v)G(z,v)-\frac{\partial}{\partial\lambda}\G+(u,v)\,.
\end{split} \label{B_llmm}
\end{align}
\end{prop}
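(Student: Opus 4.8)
The plan is to prove the four addition formulae \eqref{B_llpp}--\eqref{B_llmm} by recognizing that all of them are, at the level of kernels, a single classical identity for the elliptic function $\frac{\theta(u-z+\lambda)}{\theta(u-z)\theta(\lambda)}$, and that the four variants differ only in the choice of integration domains encoded in the $\pm$ superscripts. Concretely, write $\phi_\lambda(w)=\frac{\theta(w+\lambda)}{\theta(w)\theta(\lambda)}$ and $\zeta(w)=\frac{\theta'(w)}{\theta(w)}$, so that the Green distributions \eqref{B_Glpmact}, \eqref{B_Gact} have kernels $\phi_\lambda(u-z)$ and $\zeta(u-z)$ on the appropriate nested annuli. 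The first and main step is to establish the underlying \emph{functional} identity
\begin{align}
 \phi_\lambda(u-z)\,\phi_\lambda(z-v)=\phi_\lambda(u-v)\big(\zeta(u-z)-\zeta(v-z)\big)-\frac{\partial}{\partial\lambda}\phi_\lambda(u-v), \nn
\end{align}
valid as meromorphic functions of $u,v,z$. This is a standard three-term theta identity: both sides are elliptic in each variable with the same quasi-periodicity factors, have at most simple poles along $u=z$, $z=v$ (and a double pole structure in $\lambda$ controlled by the $\partial_\lambda$ term), and matching residues at $u=z$ and $z=v$ together with the behaviour at $\lambda\to0$ pins down the equality. I would verify it by comparing principal parts: the residue of the left side at $z=u$ is $\phi_\lambda(u-v)$, which matches the residue of $\phi_\lambda(u-v)\zeta(u-z)$, and similarly at $z=v$.

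Next I would translate this kernel identity into the distributional statements using the definition of the semi-direct product $a(u,z)b(v,z)$ given just above the proposition, namely $\la a(u,z)b(v,z),s\ra=\La a(u,z),\la b(v,z),s\ra_v\Ra_{u,z}$. The key point is that the semi-direct product is computed by iterated contour integration, and the superscripts $+,-$ dictate the radial ordering of the contours: $G_\lambda^+$ integrates over $|u|>|z|$ and $G_\lambda^-$ over $|u|<|z|$. Thus for $\G+(u,z)\G-(z,v)$ the relevant ordering is $|u|>|z|$ together with $|z|<|v|$, i.e. $|z|<\min(|u|,|v|)$, and one evaluates the $z$-integral first by residues at the poles enclosed in that configuration. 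Carrying out the $z$-contour integral of the product of two kernels, and reading off which poles ($z=u$, $z=v$, or the regular part) contribute in each radial regime, reproduces exactly the right-hand sides of \eqref{B_llpp}--\eqref{B_llmm}, where the sign of each $G(\cdot,\cdot)$ term records whether the pole lies inside or outside the $z$-contour and hence whether it contributes to $G(u,z)$ or $-G(z,u)$ via \eqref{B_GGdelta}.

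The four formulae then correspond to the four sign choices for the two factors, and each is obtained by the same residue computation with the contours nested according to the superscripts; the $\frac{\partial}{\partial\lambda}\G+(u,v)$ term arises uniformly from the $\lambda$-derivative piece of the functional identity, which is insensitive to the contour ordering. I expect the main obstacle to be bookkeeping rather than analysis: one must check that the `shifted' distribution conditions (i) and (ii) preceding the proposition genuinely guarantee that the iterated integrals are well defined and that partial actions may be interchanged, so that the purely formal manipulation of kernels is legitimate as an identity in $(\lfK_0\otimes\lfK_0)'$. In particular I would take care that the intermediate expressions such as $\la\phi_\lambda(z-v),s(z)\ra_z$ land back in $\lfK_0$ (this is exactly property (i)), justifying the second integration, and that the residue at $z=u$ versus $z=v$ is assigned to the correct Green distribution $G^+$ or $G^-$ according to \eqref{B_GGdelta} and the identity $\G+(u,z)=-G_{-\lambda}^-(z,u)$ noted above. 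Once the contour ordering is matched to the superscripts, the four identities follow in parallel from the single functional identity established in the first step.
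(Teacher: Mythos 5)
Your strategy coincides in its core with the paper's own proof: the paper establishes \eqref{B_llpp} by observing that the actions of both sides reduce to integration of kernels over the same contours, and that the kernels agree by the degenerate Fay identity \eqref{B_Fayid}; your ``functional identity'' is precisely \eqref{B_Fayid} after the substitution $u\to u-v$, $z\to z-v$ and the oddness of $\theta'/\theta$, and your residue/quasi-periodicity verification of it is a legitimate replacement for the paper's citation of Fay. Your parallel treatment of the four sign choices by contour bookkeeping is equivalent to the paper's derivation of the remaining formulae from the first one via $a(u,v)\delta(u,z)=a(z,v)\delta(u,z)$ and \eqref{B_GGdelta}.

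One step, however, is misdescribed and would fail if carried out literally: ``one evaluates the $z$-integral first by residues at the poles enclosed \dots\ reproduces exactly the right-hand sides''. In the semi-direct product $\la a(u,z)b(v,z),s(u,v,z)\ra=\La a(u,z),\la b(v,z),s(u,v,z)\ra_v\Ra_{u,z}$ nothing is integrated out: both sides of \eqref{B_llpp}--\eqref{B_llmm} are distributions in all three variables $u,v,z$, i.e.\ elements of $(\lfK_0\otimes\lfK_0\otimes\lfK_0)'$ (not of $(\lfK_0\otimes\lfK_0)'$ as you write), and the right-hand sides still depend on $z$. Evaluating a $z$-integral by residues at $z=u$, $z=v$ is the computation of the convolution $\la\cdot\ra_z$, which is a different object, treated separately in \eqref{B_GpGpconv}--\eqref{B_GGconv}; there one finds for instance $\la\G+(u,z)\G-(z,v)\ra_z=0$, which is certainly not the right-hand side of \eqref{B_llpp}. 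The correct mechanism, which your closing paragraph in fact gestures at, is the following: both sides act on a test function $s(u,v,z)$ by integrating their kernels over circles nested according to the superscripts; since the left kernel $\phi_\lambda(u-z)\phi_\lambda(z-v)$ (and hence, by your identity, also the right kernel) has no pole at $u=v$, the contours of the two sides can be deformed to a common configuration without crossing singularities, after which equality of the distributions is exactly equality of the kernels, i.e.\ the Fay identity, with each polar piece of the kernel identified with the Green distribution ($G(u,z)$ versus $-G(z,u)$, etc.) appropriate to that contour configuration. With that replacement your argument is complete and agrees with the paper's.
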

\begin{proof}\
 The actions of both hand sides of~\r{B_llpp}, for example, can be reduced to the integration
 over the same contours with some kernels. One can check the equality of these kernels using the
 degenerated Fay's identity~\cite{Fay}
\begin{align}
\frac{\theta(u-z+\lambda)}{\theta(u-z)\theta(\lambda)} \frac{\theta(z+\lambda)}{\theta(z)\theta(\lambda)}=
 \frac{\theta(u+\lambda)}{\theta(u)\theta(\lambda)} \frac{\theta'(u-z)}{\theta(u-z)}
 +\frac{\theta(u+\lambda)}{\theta(u)\theta(\lambda)} \frac{\theta'(z)}{\theta(z)}
 -\frac{\partial}{\partial\lambda} \frac{\theta(u+\lambda)}{\theta(u)\theta(\lambda)}. \label{B_Fayid}
\end{align}
The other formulae can be proved in the same way
if one takes into account $a(u,v)\delta(u,z)=a(z,v)\delta(u,z)$ and~\r{B_GGdelta}.
\end{proof}

\bb{Projections.} Let us notice that the vectors
$\epsilon_{n;\lambda}(u)$ and $\epsilon^{-n-1;\lambda}(u)$ span
two complementary subspaces  of $\lfK_0$. The
formulae~\eqref{B_Glpmdef} mean that the distributions $\G+(u,z)$
and $\G-(u,z)$ define orthogonal projections $P_\lambda^+$ and
$P_\lambda^-$ onto these subspaces. They act as
$
 P_\lambda^+[s](z)=\la\G+(u,z),s(u)\ra_u$ and   $P_\lambda^-[s](z)=-\la\G-(u,z),s(u)\ra_u.\
$
Similarly, the operators
$
 P^+[s](z)=\la G(u,z),s(u)\ra_u$ and   $P^-[s](z)=\la G(z,u),s(u)\ra_u\
$
are projections onto the lagrangian (involutive) subspaces spanned
by vectors $\epsilon_{n;0}(u)$ and $\epsilon_{-n-1;0}(u)$,
respectively. The fact that the corresponding spaces are
complementary to each other is encoded in the
formulae~\eqref{B_GGdelta}, which can be rewritten as
$P_\lambda^++P_\lambda^-=\id$, $ P^++P^-=\id$. The idempotent
properties and orthogonality of these projection
\begin{align*}
P_\lambda^\pm\cdot P_\lambda^\pm&=P_\lambda^\pm, &
P^\pm\cdot P^\pm&=P^\pm, &  P_\lambda^+\cdot P_\lambda^-&=P_\lambda^-\cdot P_\lambda^+\hm=0, &
P^+\cdot P^-&=P^-\cdot P^+&=0
\end{align*}
are encoded in the formulae
\begin{align}
 \la\G+(u,z)\G+(z,v)\ra_z&=\G+(u,v)\,, & \la\G+(u,z)\G-(z,v)\ra_z&=0\,, \label{B_GpGpconv}\\
 \la\G-(u,z)\G-(z,v)\ra_z&=-\G-(u,v)\,, & \la\G-(u,z)\G+(z,v)\ra_z&=0\,, \label{B_GmGmconv} \\
 \la G(u,z)G(z,v)\ra_z&=G(u,v)\,, & \la G(u,z)G(v,z)\ra_z&=0\,, \label{B_GGconv}
\end{align}
which immediately follow from~\eqref{B_Glpmdef} and also can be obtained from the
relations~\eqref{B_llpp} -- \eqref{B_llmm} if one takes into account $\la G(u,z)\ra_z=0$, $\la G(z,u)\ra_z=1$.

\subsection{Elliptic Green distributions on $K$}
\label{B_subsec32}

\bb{Green distributions and dual bases for $K$.} The analogs of
the Green distributions $\G+(u,z)$, $\G-(u,z)$ are defined in this case
by the following action on the space $K$
\begin{align}
 \la\Gg\pm(u-z),s(u)\ra_u&=\int\limits_{-\Im\tau<\Im(u-z)<0\atop 0<\Im(u-z)<\Im\tau}
 \frac{du}{2\pi i} \frac{\theta(u-z+\lambda)}{\theta(u-z)\theta(\lambda)}s(u), \label{B_GlpmdefJ} \\
 \la{\cal G}(u-z),s(u)\ra_u
   &=\int\limits_{-\Im\tau<\Im(u-z)<0}\frac{du}{2\pi i}\frac{\theta'(u-z)}{\theta(u-z)}s(u). \label{B_GdefJ}
\end{align}
where we integrate over line segments of unit length (cycles of cylinder) such that
the corresponding inequality takes place. The role of dual bases in the algebra $K$ is played by
$\{j_n(u)=e^{2\pi inu}\}_{n\in\mathbb Z}$ and
$\{j^n(u)=2\pi i e^{-2\pi inu}\}_{n\in\mathbb Z}$, a decomposition to
these bases is the usual Fourier expansion. The Fourier expansions for the Green distributions
are~\footnote{Fourier expansions presented in this subsection are obtained considering integration
around boundary of fundamental domain.}
\begin{gather}
 \Gg\pm(u-z)=\pm2\pi i\sum_{n\in\mathbb Z}\frac{e^{-2\pi in(u-z)}}{1-e^{\pm2\pi i(n\tau-\lambda)}},
 \label{B_Ggpmdec} \\
{\cal G}(u-z)=\pi i+2\pi i\sum_{n\ne0}\frac{e^{-2\pi in(u-z)}}{1-e^{2\pi in\tau}}. \label{B_GexpanJ}
\end{gather}
These expansions are in according with formulae
\begin{align}
 \Gg+(u-z)-\Gg-(u-z)&=\delta(u-z), \label{B_GpGmdeltaJ} \\
  {\cal G}(u-z)+{\cal G}(z-u)&=\delta(u-z), \label{B_GGdeltaJ}
\end{align}
where $\delta(u-z)$ is a delta-function on $K$, given by the
expansion
\begin{align}
 \delta(u-z)=\sum_{n\in\mathbb Z}j^n(u)j_n(z)=2\pi i\sum_{n\in\mathbb Z} e^{-2\pi in(u-z)}. \label{B_dffeJ}
\end{align}

\bb{Addition theorems.} Now we obtain some properties of these
Green distributions and compare them with the properties of their
analogs $\G+(u,z)$, $\G-(u,z)$, $G(u,z)$ described in
subsection~\ref{B_subsec31}. In particular, we shall see that some
properties are essentially different. Let us start with the
properties of Green distribution which are similar to the case of
algebra $\lfK_0$. They satisfy the same addition theorems that was
described in~\ref{B_subsec31}:
\begin{prop} \label{B_prop2}
 The semi-direct product of Green distributions for algebra $K$ is related by the
 formulae~\r{B_llpp}--\r{B_llmm} with the distributions $\Gg\pm(u-z)$,
 ${\cal G}(u-z)$ instead of $\G\pm(u-z)$, $G(u-z)$ respectively.
\end{prop}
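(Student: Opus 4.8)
The plan is to follow, step for step, the proof of Proposition~\ref{B_prop1}, using that the periodic Green distributions $\Gg+(u-z)$, $\Gg-(u-z)$ and ${\cal G}(u-z)$ are obtained by integrating \emph{the same} quasi-doubly periodic theta-kernels $\frac{\theta(u-z+\lambda)}{\theta(u-z)\theta(\lambda)}$ and $\frac{\theta'(u-z)}{\theta(u-z)}$ as in the $\lfK_0$ case, the only change being the test function algebra $K$ and the replacement of the small circles $C_0$ by cycles of the cylinder $\Cyl$. First I would make both sides of each relation into linear continuous functionals on $K\otimes K\otimes K$. Since these distributions are translation invariant (functions of the difference alone), the semi-direct product, defined as in the $\lfK_0$ case but with the partial actions recalled before \eqref{B_GlpmdefJ}, reduces by a Fubini-type argument to the integration of a product of two theta-kernels over nested cycles in the variables $u$, $z$, $v$.

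Then the algebraic heart of the matter is identical: the degenerate Fay identity \eqref{B_Fayid}, being a meromorphic identity in the difference variables, splits the product $\frac{\theta(u-z+\lambda)}{\theta(u-z)\theta(\lambda)}\,\frac{\theta(z-v+\lambda)}{\theta(z-v)\theta(\lambda)}$ (and its analogues for the three remaining products) into the sum of single kernels that appear on the right-hand sides of \eqref{B_llpp}--\eqref{B_llmm}. Thus, once the contour bookkeeping is settled, \eqref{B_llpp} follows at once, and the remaining formulae \eqref{B_llpm}--\eqref{B_llmm} follow from it together with the $\delta$-function relations \eqref{B_GpGmdeltaJ}, \eqref{B_GGdeltaJ} and the substitution rule $a(u-v)\delta(u-z)=a(z-v)\delta(u-z)$, exactly as the corresponding step was carried out in Proposition~\ref{B_prop1}.

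The hard part will be the contour geometry, which is genuinely different on the cylinder. In the $\lfK_0$ case the circles are nested by modulus, $|u|>|z|>|v|$, and this nesting is unobstructed; on $\Cyl$ the admissible strips $-\Im\tau<\Im(u-z)<0$ for $\Gg+$ and $0<\Im(u-z)<\Im\tau$ for $\Gg-$ have finite width $\Im\tau$, so in composing the two factors I must check that the intermediate $z$-cycle can be placed simultaneously in both admissible strips. Where it cannot, deforming it to a common cycle crosses a pole of the kernel at $z=u$ or $z=v$, producing a residue; the point to verify is that these residues are precisely the $\delta$-function and $\frac{\partial}{\partial\lambda}$ contributions that assemble the extra terms ${\cal G}(u-z)$, ${\cal G}(v-z)$ and $\frac{\partial}{\partial\lambda}\Gg+(u-v)$ on the right-hand sides. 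Because the theta-kernel is quasi-doubly periodic with exactly the same local behaviour near each lattice pole as in the elliptic computation, the residues computed on the cylinder coincide with those obtained from the small circles in Proposition~\ref{B_prop1}; once this is checked, the identities transfer verbatim and the proposition follows.
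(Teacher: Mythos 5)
Your proposal is correct and takes essentially the same route as the paper: the paper's own proof is a one-line observation that the kernels of $\Gg\pm(u-z)$ and ${\cal G}(u-z)$ are the very same theta-function kernels as in the $\lfK_0$ case, so the addition formulae \eqref{B_llpp}--\eqref{B_llmm} again reduce to the degenerate Fay identity \eqref{B_Fayid}, the only change being integration over cylinder cycles instead of small circles. Your additional contour bookkeeping (placing the intermediate cycle in the common admissible strip, with pole crossings accounting for the $\delta$-function terms) is precisely the verification the paper leaves implicit.
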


\noindent{\bf Proof.} The kernels of these distributions are the same and therefore
the addition formula in this case is also based on the Fay's identity~\r{B_Fayid}.
\qed

\bb{Analogs of projections.} The Green distributions define the
operators on $K$:
\begin{align*}
 {\cal P}_\lambda^+[s](z)&=\la\Gg+(u-z),s(u)\ra_u, & {\cal P}_\lambda^-[s](z)&=\la\Gg-(u-z),s(u)\ra_u, \\
  {\cal P}^+[s](z)&=\la {\cal G}(u-z),s(u)\ra_u, & {\cal P}^-[s](z)&=\la {\cal G}(z-u),s(u)\ra_u
\end{align*}
which are similar to their analogs $P_\lambda^\pm$, $P^\pm$ and
satisfy ${\cal P}_\lambda^++{\cal P}_\lambda^-=\id$, ${\cal
P}^++{\cal P}^-=\id$ (due to~\r{B_GpGmdeltaJ}), but they are not
projections. This fact is reflected in the following relations,
which are consequence of the formulae~\r{B_llpp}--\r{B_llmp} and
$\la{\cal G}(u-z)\ra_z=\dfrac12$, $\la{\cal
G}(z-u)\ra_z=\dfrac12$,
\begin{align}
 \la\Gg+(u-z)\Gg+(z-v)\ra_z&=\Gg+(u-v)-\frac1{2\pi i}\frac{\partial}{\partial\lambda}\Gg+(u-v)\,,
 \label{B_GpGpconvJ} \\
 \la\Gg+(u-z)\Gg-(z-v)\ra_z&=-\frac1{2\pi i}\frac{\partial}{\partial\lambda}\Gg+(u-v)\,,
 \label{B_GpGmconvJ} \\
 \la\Gg-(u-z)\Gg+(z-v)\ra_z&=-\frac1{2\pi i}\frac{\partial}{\partial\lambda}\Gg+(u-v)\,,
 \nn\\ 
 \la\Gg-(u-z)\Gg-(z-v)\ra_z&=-\Gg-(u-v)-\frac1{2\pi i}\frac{\partial}{\partial\lambda}\Gg+(u-v)\,,
 \label{B_GmGmconvJ}
\end{align}
where $\frac{\partial}{\partial\lambda}\Gg+(u-v)=\frac{\partial}{\partial\lambda}\Gg-(u-v)$ and
\begin{align}
\la{\cal G}(u-z){\cal G}(z-v)\ra_z=&{\cal G}(u-v)
 -\frac1{4\pi i}\gamma(u-v)\,, \label{B_GGppconvJ} \\
\la{\cal G}(u-z){\cal G}(v-z)\ra_z=&\la{\cal G}(z-u){\cal G}(z-v)\ra_z=
 \frac1{4\pi i}\gamma(u-v)\,. \label{B_GGpmconvJ}
\end{align}
$\gamma(u-z)$ is a distribution which has the following action and expansion
\begin{align}
 \la&\gamma(u-z),s(u)\ra=-\frac{\theta'''(0)+4\pi^2}3+\int\limits_{-\frac12}^{+\frac12}\frac{du}{2\pi i}
 \frac{\theta''(u-z)}{\theta(u-z)}s(u)\,, \nn\\
&\gamma(u-z)=-2\pi^2+8\pi^2\sum_{n\ne0}\frac{e^{-2\pi in(u-z)+2\pi in\tau}}{(1-e^{2\pi in\tau})^2}\,.
\nn
\end{align}

\bb{Comparison of the Green distributions.} Contrary
to~\r{B_GpGpconv}, \r{B_GmGmconv} the
formulae~\eqref{B_GpGpconvJ}--\eqref{B_GGpmconvJ} contain some
additional terms in the right hand sides obstructed the operators
${\cal P}_\lambda^\pm$, ${\cal P}^\pm$ to be projections. They do
not decompose the space $K(\Cyl)$ in a direct sum of subspaces
as it would be in the case of projections $P_\lambda^\pm$, $P^\pm$
acting on $\lfK_0$. Moreover, as one can see from the Fourier
expansions~\r{B_Ggpmdec}, \r{B_GexpanJ} of Green distributions the
images of the operators coincide with whole algebra $K$:
${\cal P}_\lambda^\pm\big(K(\Cyl)\big)=K(\Cyl)$, ${\cal
P}^\pm\big(K(\Cyl)\big)=K(\Cyl)$. As we shall see this fact has a
deep consequence for the half-currents of the corresponding Lie
algebra $\utau$. As soon as we are aware that the positive
operators ${\cal P}_\lambda^+$, ${\cal P}^+$ as well as negative
ones ${\cal P}_\lambda^-$, ${\cal P}^-$
 transform the algebra $K$ to itself, we can surmise that
they can be related to each other. This is actually true. From formulae~\r{B_Ggpmdec}, \r{B_GexpanJ} we
conclude that
\begin{align}
\Gg+(u-z-\tau)&=e^{2\pi i\lambda}\Gg-(u-z),  &{\cal G}(u-z-\tau)&=2\pi i-{\cal G}(z-u). \label{B_shift_tau}
\end{align}
In terms of operator's composition these properties look as
\begin{align}
 {\cal T}_\tau\circ{\cal P}_\lambda^+&={\cal P}_\lambda^+\circ{\cal T}_\tau=-e^{2\pi i\lambda}
{\cal P}_\lambda^-, &
{\cal T}_\tau\circ{\cal P}^+&={\cal P}^+\circ{\cal T}_\tau=2\pi i{\cal I}-{\cal P}^-,
\end{align}
where ${\cal T}_t$ is a shift operator: ${\cal T}_t[s](z)=s(z+t)$,
and ${\cal I}$ is an integration operator: ${\cal
I}[s](z)=\int_{-\frac12}^{\frac12}\frac{du}{2\pi i}s(u)$. This
property is no longer true for the case of
 Green distributions from section~\ref{B_subsec31}.

\subsection{Elliptic half-currents}
\label{B_subsec33}
\setcounter{bbcount}{4}

\bb{Tensor subscripts.} First introduce the following
notation. Let $\bU={\cal U}(\mathfrak{g})$ be a universal
enveloping algebra of the considering Lie algebra $\mathfrak{g}$
and $V$ be a $\bU$-module. For an element
$t=\sum_{k}a_1^k\otimes\ldots\otimes a_n^{k}\otimes
u_1^k\otimes\ldots\otimes u_m^{k} \in\End V^{\otimes
n}\otimes\bU^{\otimes m}$, where $n,m\ge0$,
$a_1^k,\ldots,a_n^k\in\End V$, $u_1^k,\ldots,u_m^k\in\bU$ we shall
use the following notation for an element of $\End V^{\otimes
N}\otimes\bU^{\otimes M}$, $N\ge n, M\ge m$,
\begin{align*}
 t_{i_1,\ldots,i_n,{\bf j_1},\ldots,{\bf j_m}}=&
  \sum_k \id_V\otimes\cdots\otimes\id_V\otimes a_1^k\otimes\id_V\otimes\cdots\otimes\id_V\otimes
  a_n^k\otimes\id_V\otimes\cdots\otimes\id_V\otimes \\
  &\otimes1\otimes\ldots\otimes1\otimes
  u_1^k\otimes1\otimes\ldots\ldots\otimes u_m^{k}
  \otimes1\otimes\ldots\otimes1,
\end{align*}
where $a_s^k$ stays in the $i_s$-th position in the tensor product and $u_s^k$ stays in the $j_s$-th position.

\bb{Half-currents.} The total currents $h(u)$, $e(u)$ and $f(u)$ of the algebra $\cEF$ can be divided into
half-currents using the Green distributions $G(u,z)$, $-G(z,u)$ for $h(u)$; $\G+(u,z)$,
$\G-(u,z)$ for $e(u)$; and $G^+_{-\lambda}(u,z)=-\G-(z,u)$, $G^-_{-\lambda}(u,z)=-\G+(z,u)$.
The relations of type~\r{B_hc_tc}, then, looks as
\begin{align}
 h^+(u)&=\la G(u,v)h(v)\ra_v, & h^-(u)&=-\la G(v,u)h(v)\ra_v,         \label{B_hpmht} \\
 \e^+(u)&=\la \G+(u,v)e(v)\ra_v, & \e^-(u)&=\la \G-(u,v)e(v)\ra_v,     \label{B_epmht} \\
 \f^+(u)&=\la G_{-\lambda}^+(u,v)f(v)\ra_v, &  \f^-(u)&=\la G_{-\lambda}^-(u,v)f(v)\ra_v,  \label{B_fpmht}
\end{align}
so that
  $h(u)=h^+(u)-h^-(u)$, $e(u)=\e^+(u)-\e^-(u)$, $f(u)=\f^+(u)-\f^-(u)$.  

\bb{$rLL$-relations for $\cEF$.} The commutation relation between
the half-currents can be written in a matrix form. Let us
introduce the matrices of {\em $L$-operators:}
\begin{gather}
\L_\lambda^\pm(u)=
 \begin{pmatrix}
  \frac12h^\pm(u) & \f^\pm(u) \\
  \e^\pm(u)       & -\frac12h^\pm(u)
 \end{pmatrix}, \\
\end{gather}
as well as the $r$-matrices:
\begin{gather} r_\lambda^+(u,v)=
 \begin{pmatrix}
   \frac12G(u,v) & 0              & 0              & 0              \\
  0              & -\frac12G(u,v) & G^+_{-\lambda}(u,v)      & 0              \\
  0              & \G+(u,v)       & -\frac12G(u,v) & 0              \\
  0              & 0              & 0              & \frac12G(u,v)
 \end{pmatrix}.
\end{gather}

\begin{prop} \label{B_prop3}
 The commutation relations of the algebra $\cEF$ in terms of half-currents
 can be written in the form:
\begin{align}
   [d,\L_{\lambda}^\pm(u)]=-\frac{\partial}{\partial u}\L_{\lambda}^\pm(u), \label{B_dL}
\end{align}
\begin{multline} \label{B_rLpmLpm}
 [\L_{\lambda,1}^\pm(u),\L_{\lambda,2}^\pm(v)]=[\L_{\lambda,1}^\pm(u)+\L_{\lambda,2}^\pm
 (v),r_\lambda^+(u-v)]+ \\
 +H_1\frac{\partial}{\partial\lambda}\L_{\lambda,2}^\pm(v)-H_2\frac{\partial}{\partial\lambda}
 \L_{\lambda,1}^\pm(u)
 +h\frac{\partial}{\partial\lambda}r_\lambda^+(u-v),
\end{multline}
\begin{multline} \label{B_rLpLm}
 [\L_{\lambda,1}^+(u),\L_{\lambda,2}^-(v)]=[\L_{\lambda,1}^+(u)+\L_{\lambda,2}^-(v),r_\lambda^+(u-v)]+ \\
 +H_1\frac{\partial}{\partial\lambda}\L_{\lambda,2}^-(v)-H_2\frac{\partial}{\partial
 \lambda}\L_{\lambda,1}^+(u)
 +h\frac{\partial}{\partial\lambda}r_\lambda^+(u-v)+c\cdot\frac{\partial}{\partial u}r_\lambda^+(u-v),
\end{multline}
where
$H=
 \begin{pmatrix}
  1 & 0 \\
  0  & -1
 \end{pmatrix}$ and  $h=\hat h[\epsilon_{0;0}]$.
The $L$-operators satisfy an important relation
\begin{align}
 [H+h,\L^\pm(u)]=0\,. \label{B_HhL}
\end{align}
\end{prop}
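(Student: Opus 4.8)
The plan is to read Proposition \ref{B_prop3} as a compact matrix repackaging of the pairwise commutators of the half-currents \r{B_hpmht}--\r{B_fpmht}, and to verify it entry by entry. I would first dispose of the two purely structural relations. For \r{B_dL}, I would apply $[d,x(v)]=-\frac{\partial}{\partial v}x(v)$ from \r{B_xxtc} to each total current $x\in\{h,e,f\}$, pull $d$ inside the pairing defining the corresponding half-current, and then integrate by parts and invoke the `shifted' property $\frac{\partial}{\partial u}G^{\pm}(u,v)=-\frac{\partial}{\partial v}G^{\pm}(u,v)$ to turn the $v$-derivative of the Green distribution into a $u$-derivative; the net effect is $[d,\L^{\pm}_\lambda(u)]=-\frac{\partial}{\partial u}\L^{\pm}_\lambda(u)$. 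For \r{B_HhL}, I would note that $H=\mathrm{diag}(1,-1)$ assigns weights $0,+2,-2$ to the entries carrying $h^{\pm}$, $\f^{\pm}$, $\e^{\pm}$, while the Cartan zero mode $h=\hat h[\epsilon_{0;0}]=\la h(z)\ra_z$ acts on the total currents with the opposite weights: \r{B_xxtc} gives $[h,f(z)]=-2f(z)$, $[h,e(z)]=2e(z)$, $[h,h(z)]=0$, and integrating these against the Green distributions yields $[h,\f^{\pm}(u)]=-2\f^{\pm}(u)$, $[h,\e^{\pm}(u)]=2\e^{\pm}(u)$, $[h,h^{\pm}(u)]=0$. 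Entrywise these exactly cancel $[H,\L^{\pm}(u)]$, proving \r{B_HhL}.

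The core of the proof is the pair of $r$-matrix relations \r{B_rLpmLpm}, \r{B_rLpLm}, which I would establish one matrix entry at a time by a uniform mechanism. To compute a bracket such as $[\e^{+}(u),\f^{+}(v)]$, I would write both half-currents as pairings of the total currents against the appropriate Green distributions, pull the commutator inside, and substitute \r{B_xxtc}. The ``diagonal'' $\delta$-term then leaves a single integral of a \emph{product} of two Green distributions against a total current, which I would collapse to single Green distributions using the addition theorems of Proposition \ref{B_prop1}, i.e. \r{B_llpp}--\r{B_llmm} (these in turn rest on the degenerate Fay identity \r{B_Fayid}). For instance, using $G^{+}_{-\lambda}(v,z)=-G^{-}_\lambda(z,v)$ together with \r{B_llpp}, the bracket $[\e^{+}(u),\f^{+}(v)]$ reduces to $-\G+(u,v)h^{+}(u)+\G+(u,v)h^{+}(v)+\frac{\partial}{\partial\lambda}\G+(u,v)\,h$, where the zero mode $h=\la h(z)\ra_z$ arises from the constant term and the dynamical piece $\frac{\partial}{\partial\lambda}\G+$ is precisely the term the addition theorems produce. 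Comparing with the $(\e,\f)$-entry of the right-hand side, the half-current differences match $[\L_1^{\pm}+\L_2^{\pm},r_\lambda^{+}]$ and the $\frac{\partial}{\partial\lambda}\G+$ contribution matches $h\frac{\partial}{\partial\lambda}r_\lambda^{+}$. The remaining entries ($[h^{\pm},\e^{\pm}]$, $[h^{\pm},\f^{\pm}]$, $[\e^{\pm},\e^{\pm}]$, $[\f^{\pm},\f^{\pm}]$ and their mixed $+,-$ counterparts) go through the same way; in several of them the total-current bracket vanishes or is purely central, and consistency of the $r$-matrix form then forces the presence of the dynamical terms $H_1\frac{\partial}{\partial\lambda}\L_2^{\pm}-H_2\frac{\partial}{\partial\lambda}\L_1^{\pm}$, which are exactly the $\frac{\partial}{\partial\lambda}\G+$ contributions reorganized into matrix form.

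The delicate point, and the one I expect to be the main obstacle, is the bookkeeping of the central extension. The cocycle term $-c\,(x_1,x_2)\frac{d\delta(z,w)}{dz}$ in \r{B_xxtc} must contribute a genuine c-number only in the mixed bracket \r{B_rLpLm} and must drop out of the same-sign bracket \r{B_rLpmLpm}. I would verify this through the orthogonality relations \r{B_GpGpconv}--\r{B_GGconv}. For the diagonal Cartan entry, integrating $\frac{d}{dz}\delta$ over one variable produces $\la G(u,z)\frac{d}{dz}G(v,z)\ra_z$ in the $++$ case, which vanishes because $\la G(u,z)G(v,z)\ra_z=0$ by \r{B_GGconv} and the distributions are `shifted'; whereas in the $+-$ case one gets $\la G(u,z)\frac{d}{dz}G(z,v)\ra_z=-\frac{\partial}{\partial v}G(u,v)=\frac{\partial}{\partial u}G(u,v)$ from $\la G(u,z)G(z,v)\ra_z=G(u,v)$, reproducing the term $c\,\frac{\partial}{\partial u}r_\lambda^{+}(u-v)$. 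The analogous computation for the $[\e^{\pm},\f^{\pm}]$ entries uses \r{B_GpGpconv} instead and yields $c\,\frac{\partial}{\partial u}\G+(u,v)$ in the mixed case and $0$ in the same-sign case, matching the off-diagonal entries of $c\,\frac{\partial}{\partial u}r_\lambda^{+}$.

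Keeping the regularization domains straight, treating $\frac{d\delta}{dz}$ correctly under the pairing (so that differentiation under the pairing and the `shifted' identity are applied consistently), and matching every sign across all entries is where the computation is most error-prone. Once a single representative entry of each type is verified, the rest follow by the same manipulations together with the symmetries $G^{+}_\lambda(u,v)=-G^{-}_{-\lambda}(v,u)$ and $G(u,v)+G(v,u)=\delta(u,v)$ recorded in \r{B_GGdelta}, which relate the $+$ and $-$ half-currents and thereby cut the number of independent checks in half.
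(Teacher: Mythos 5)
Your proposal is correct, and its core mechanism is the same as the paper's: write the half-currents as pairings of total currents against Green distributions, collapse the resulting products of Green distributions with the addition theorems \r{B_llpp}--\r{B_llmm} of Proposition~\ref{B_prop1} to get the bracket part, and use the convolution/orthogonality identities \r{B_GpGpconv}--\r{B_GGconv} to control the central term (your sample computation of $[\e^+(u),\f^+(v)]$ and of the vanishing versus non-vanishing cocycle contributions checks out). There are two organizational differences worth noting. First, the paper packages the cocycle bookkeeping more economically: it computes the scalar products $\la \L_{\lambda,1}^\pm(u),\L_{\lambda,2}^\pm(v)\ra=0$ and $\la \L_{\lambda,1}^+(u),\L_{\lambda,2}^-(v)\ra=-r_\lambda^+(u,v)$ once, in matrix form, and then obtains all cocycle values $B(\cdot,\cdot)$ simultaneously by differentiating in $u$ (since $B(\hat x[s],\hat y[t])=(x,y)\la s'(z),t(z)\ra_z$); your entrywise treatment of $\frac{d}{dz}\delta$ under the pairing is the same computation done entry by entry, so it costs more writing but no new ideas. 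Second, and more substantively, your route to \eqref{B_HhL} is genuinely different: you prove it directly from the zero-mode relations $[h,\f^\pm(u)]=-2\f^\pm(u)$, $[h,\e^\pm(u)]=2\e^\pm(u)$, $[h,h^\pm(u)]=0$, which cancel $[H,\L^\pm(u)]$ entrywise, whereas the paper derives \eqref{B_HhL} as a consequence of the already-established relations \eqref{B_rLpmLpm}, \eqref{B_rLpLm} via the trace identities $[h,\L_{\lambda}^\pm(v)]=\tr_1\la H_1[\L_{\lambda,1}^+(u),\L_{\lambda,2}^\pm(v)]\ra_u$, $\tr_1\la H_1r_{\lambda}^+(u,v)\ra_u=H$, $\tr_1\la[H_1,\L_{\lambda,1}^+(u)]r_{\lambda}^+(u,v)\ra_u=0$. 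Your version is more elementary and logically independent of the $rLL$ relations; the paper's version has the advantage of exhibiting \eqref{B_HhL} as a structural consequence of the $r$-matrix form itself, a mechanism that transfers verbatim to the $\utau$ case in Proposition~\ref{B_prop4}.
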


\begin{proof}\  Using the formulae~\eqref{B_GpGpconv} -- \eqref{B_GGconv} we calculate the
scalar products on the half-currents:
  $\la \L_{\lambda,1}^\pm(u),\L_{\lambda,2}^\pm(v)\ra=0$,
    $\la \L_{\lambda,1}^+(u),\L_{\lambda,2}^-(v)\ra=-r_\lambda^+(u,v)$.
Differentiating these formulae by $u$  we can  obtain the values of the standard co-cycle
on the half-currents:
  $B\big(\L_{\lambda,1}^\pm(u),\L_{\lambda,2}^\pm(v)\big)=0$,
  $B\big(\L_{\lambda,1}^+(u),\L_{\lambda,2}^-(v)\big)=\frac{\partial}{\partial u}r_\lambda^+(u,v)$.
Using the formulae~\r{B_llpp}--\r{B_llmm} one can calculate the
brackets $[\cdot,\cdot]_0$ on the half-currents. Representing them
in the matrix form and adding the co-cycle term one can derive the
relations~\r{B_rLpmLpm}, \eqref{B_rLpLm}. Using the formulae
$[h,\L_{\lambda}^\pm(v)]=\tr_1\la
H_1[\L_{\lambda,1}^+(u),\L_{\lambda,2}^\pm(v)]\ra_u$, $\tr_1\la
H_1r_{\lambda}^+(u,v)\ra_u=H$,
$\tr_1\la[H_1,\L_{\lambda,1}^+(u)]r_{\lambda}^+(u,v)\ra_u=0$ we
obtain the relation~\eqref{B_HhL} from~\eqref{B_rLpmLpm},
\eqref{B_rLpLm}.
\end{proof}

\bb{$rLL$-relations for $\utau$.} Now consider the case of the
algebra $\utau$. The half-currents, $L$-operators
$\LL_\lambda^\pm(u)$ and $r$-matrix $\rr_\lambda^+(u-v)$ are
defined by the same formulas as above with distributions
${G}(u,v)$ and ${G}^\pm_{\lambda}(u,v)$ replaced everywhere by the
distributions ${\cal G}(u,v)$  and ${\cal G}^\pm_{\lambda}(u,v)$.
We have
\begin{prop} \label{B_prop4}
The commutation relations of algebra $\cEF$ in terms of
half-currents  can be written in the form:
\begin{multline}
 [\LL_{\lambda,1}^\pm(u),\LL_{\lambda,2}^\pm(v)]=[\LL_{\lambda,1}^\pm(u)+
 \LL_{\lambda,2}^\pm(v),\rr_\lambda^+(u-v)]+ \\
 +H_1\frac{\partial}{\partial\lambda}\LL_{\lambda,2}^\pm(v)-H_2\frac{\partial}
 {\partial\lambda}\LL_{\lambda,1}^\pm(u)
 +h\frac{\partial}{\partial\lambda}\rr_\lambda^+(u-v)-c\cdot\frac{\partial}
 {\partial\tau}\rr_\lambda^+(u-v)\,, \label{B_rLpmLpmJ}
\end{multline}
\begin{multline}
 [\LL_{\lambda,1}^+(u),\LL_{\lambda,2}^-(v)]=[\LL_{\lambda,1}^+(u)+
 \LL_{\lambda,2}^-(v),\rr_\lambda^+(u-v)]+ \\
 +H_1\frac{\partial}{\partial\lambda}\LL_{\lambda,2}^-(v)-H_2\frac{\partial}
 {\partial\lambda}\LL_{\lambda,1}^+(u)
 +h\frac{\partial}{\partial\lambda}\rr_\lambda^+(u-v)
 +c\cdot\bigg(\frac{\partial}{\partial u}-\frac{\partial}{\partial\tau}\bigg)
 \rr_\lambda^+(u-v)\,, \label{B_rLpLmJ}
\end{multline}
where $h=\hat h[j_0]$. We also have in this case the relation
\begin{align}
 [H+h,\LL^\pm(u)]=0\,. \label{B_HhLJ}
\end{align}
\end{prop}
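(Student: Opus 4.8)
The plan is to follow the proof of Proposition~\r{B_prop3} almost verbatim and to isolate precisely the place where the two algebras diverge, namely the central cocycle. First I would compute the invariant $\mathfrak g$-valued scalar products of the half-currents $\LL_\lambda^\pm(u)$ of $\utau$. Writing the half-currents through the total currents as in~\r{B_hpmht}--\r{B_fpmht} (with $\Gg\pm$, ${\cal G}$ replacing $\G\pm$, $G$), using the invariant form $\la x(w),y(w')\ra_{\mathfrak g}=(x,y)\delta(w-w')$ on total currents and the reflection identity ${\cal G}^+_\lambda(u-z)=-{\cal G}^-_{-\lambda}(z-u)$, these products reduce to the convolutions~\r{B_GpGpconvJ}--\r{B_GGpmconvJ}. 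The decisive difference with the $\cEF$ case is that, whereas~\r{B_GpGpconv}--\r{B_GGconv} either vanish or reproduce a single Green distribution, the convolutions on $K$ carry the extra terms $-\frac1{2\pi i}\frac{\partial}{\partial\lambda}\Gg+$ and a multiple of $\gamma$. Concretely one finds $\la\e^+(u),\f^+(v)\ra_{\mathfrak g}=\frac1{2\pi i}\frac{\partial}{\partial\lambda}\Gg+(u-v)$ and $\la\e^+(u),\f^-(v)\ra_{\mathfrak g}=-\Gg+(u-v)+\frac1{2\pi i}\frac{\partial}{\partial\lambda}\Gg+(u-v)$, together with analogous $\gamma$-expressions on the Cartan entries; assembling them in matrix form gives nonvanishing $\la\LL_{\lambda,1}^\pm(u),\LL_{\lambda,2}^\pm(v)\ra$ and $\la\LL_{\lambda,1}^+(u),\LL_{\lambda,2}^-(v)\ra=-\rr_\lambda^+(u-v)+(\text{extra terms})$, in contrast to the vanishing $\pm\pm$ products of Proposition~\r{B_prop3}.

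Second, exactly as in Proposition~\r{B_prop3}, the non-central part of the commutators is governed by the brackets $[\cdot,\cdot]_0$, which are computed from the addition theorems. Since Proposition~\r{B_prop2} asserts that these are literally formulae~\r{B_llpp}--\r{B_llmm} with $\Gg\pm$, ${\cal G}$ in place of $\G\pm$, $G$, the $r$-matrix commutator $[\LL_{\lambda,1}^\pm+\LL_{\lambda,2}^\pm,\rr_\lambda^+]$ and the dynamical terms $H_1\frac{\partial}{\partial\lambda}\LL_{\lambda,2}^\pm-H_2\frac{\partial}{\partial\lambda}\LL_{\lambda,1}^\pm+h\frac{\partial}{\partial\lambda}\rr_\lambda^+$ are reproduced without change. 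The only genuinely new contribution is the value of the standard $1$-cocycle $B$ on the half-currents, obtained by differentiating the scalar products: $B(\LL_{\lambda,1}^\pm(u),\LL_{\lambda,2}^\pm(v))=-\frac{\partial}{\partial u}\la\LL_{\lambda,1}^\pm(u),\LL_{\lambda,2}^\pm(v)\ra_{\mathfrak g}$, and likewise for the mixed pair. Here the extra terms of the scalar products survive and yield, after multiplication by $c$, the central pieces $-\frac{c}{2\pi i}\frac{\partial^2}{\partial u\,\partial\lambda}\Gg+(u-v)$ on the $\e\f$-entries and a term proportional to $\frac{\partial}{\partial u}\gamma(u-v)$ on the $hh$-entry (the latter via~\r{B_GGpmconvJ}, keeping track of the $(h,h)$- and $\tfrac12$-normalisations).

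The crucial step, and the main obstacle, is to recognise these as $\tau$-derivatives of the $r$-matrix. From the Fourier expansions~\r{B_Ggpmdec}, \r{B_GexpanJ} one derives the heat-type identities
\begin{align*}
 \frac{\partial}{\partial\tau}\Gg+(u-v)&=\frac1{2\pi i}\frac{\partial^2}{\partial u\,\partial\lambda}\Gg+(u-v), &
 \frac{\partial}{\partial\tau}{\cal G}(u-v)&=\frac1{4\pi i}\frac{\partial}{\partial u}\gamma(u-v),
\end{align*}
each verified termwise by comparing $\frac{\partial}{\partial\tau}(1-e^{2\pi i(n\tau-\lambda)})^{-1}$ with the corresponding $u$- and $\lambda$-derivatives. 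Substituting them, every central contribution collapses to $-c\frac{\partial}{\partial\tau}$ acting on the matching entry of $\rr_\lambda^+$, which is exactly the term $-c\frac{\partial}{\partial\tau}\rr_\lambda^+(u-v)$ of~\r{B_rLpmLpmJ}; in the mixed case the surviving $-\Gg+$ part of the scalar product additionally reproduces the $\cEF$-type piece $+c\frac{\partial}{\partial u}\rr_\lambda^+$, so that~\r{B_rLpLmJ} acquires $+c\big(\frac{\partial}{\partial u}-\frac{\partial}{\partial\tau}\big)\rr_\lambda^+$. Finally, relation~\r{B_HhLJ} follows from~\r{B_rLpmLpmJ}, \r{B_rLpLmJ} by the same contraction as in Proposition~\r{B_prop3}, using $\tr_1\la H_1\rr_\lambda^+(u-v)\ra_u=H$ and $\tr_1\la H_1\frac{\partial}{\partial\tau}\rr_\lambda^+\ra_u=\frac{\partial}{\partial\tau}H=0$. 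The delicate points I anticipate are the consistent treatment of the $\tfrac12$-normalisations in $\LL_\lambda^\pm$ and of the signs generated by $\frac{\partial}{\partial v}=-\frac{\partial}{\partial u}$, and the rigorous justification of all the manipulations in the sense of distributions on $K$.
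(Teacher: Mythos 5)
Your derivation of \r{B_rLpmLpmJ} and \r{B_rLpLmJ} is correct and is essentially the paper's own proof: the paper likewise mirrors Proposition~\ref{B_prop3}, reads off the pairings of half-currents from the convolution identities \r{B_GpGpconvJ}--\r{B_GGpmconvJ}, and converts the resulting cocycle values into $\tau$-derivatives by exactly the heat-type identities you state, $\frac1{2\pi i}\partial_u\partial_\lambda\Gg+(u-v)=\partial_\tau\Gg+(u-v)$ and $\frac1{4\pi i}\partial_u\gamma(u-v)=\partial_\tau{\cal G}(u-v)$, arriving at $B\big(\LL_{\lambda,1}^\pm(u),\LL_{\lambda,2}^\pm(v)\big)=-\partial_\tau\rr_\lambda^+(u-v)$ and $B\big(\LL_{\lambda,1}^+(u),\LL_{\lambda,2}^-(v)\big)=\big(\partial_u-\partial_\tau\big)\rr_\lambda^+(u-v)$.

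Your last step, however, fails as written. You derive \r{B_HhLJ} ``by the same contraction as in Proposition~\ref{B_prop3}, using $\tr_1\la H_1\rr_\lambda^+(u-v)\ra_u=H$'', but the trace identities of the $\lfK_0$ case do not carry over to $K$ --- and this discrepancy is precisely the point of the comparison between the two algebras. On $\lfK_0$ one has $\la G(u,v)\ra_u=1$ and the orthogonality $\la\G-(v,u)\G+(u,w)\ra_u=0$, whence $\tr_1\la H_1 r^+_\lambda(u,v)\ra_u=H$ and $\tr_1\la[H_1,\L^+_{\lambda,1}(u)]\,r^+_\lambda(u,v)\ra_u=0$. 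On $K$ the constant term $\pi i$ in the expansion \r{B_GexpanJ} gives $\la{\cal G}(u-v)\ra_u=\tfrac12$, so $\tr_1\la H_1\rr^+_\lambda(u-v)\ra_u=H/2$, not $H$ (correspondingly $[h,\LL^\pm_\lambda(v)]=2\,\tr_1\la H_1[\LL^+_{\lambda,1}(u),\LL^\pm_{\lambda,2}(v)]\ra_u$ carries a compensating factor $2$); more importantly, since the Green operators on $K$ are not orthogonal idempotents (cf.\ \r{B_GpGmconvJ}), the term $\tr_1\la[H_1,\LL^+_{\lambda,1}(u)]\,\rr^+_\lambda(u,v)\ra_u$ does not vanish but equals $\frac{i}{\pi}\partial_\lambda\LL^\pm_\lambda(v)$. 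This nonzero term is needed to absorb the dynamical contributions $H_1\partial_\lambda\LL^\pm_{\lambda,2}(v)$, $H_2\partial_\lambda\LL^\pm_{\lambda,1}(u)$, $h\,\partial_\lambda\rr^+_\lambda$ when you contract \r{B_rLpmLpmJ}, \r{B_rLpLmJ}; with your constants the contraction does not close to $[H+h,\LL^\pm(u)]=0$. These corrected identities are exactly what the paper's proof of Proposition~\ref{B_prop4} invokes in place of the Proposition~\ref{B_prop3} ones, so you should recompute the moments of ${\cal G}$, ${\cal G}^\pm_\lambda$ on $K$ before performing the contraction.
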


\noindent{\bf Proof.} To express the standard co-cycle on the half currents through
the derivatives of the $r$-matrix we need the following formulae
\begin{align}
 \frac1{2\pi i}\frac{\partial}{\partial u}\frac{\partial}{\partial\lambda}\Gg+(u-v)
     &=\frac{\partial}{\partial\tau}\Gg+(u-v), \nn\\ 
 \frac1{2\pi i}\frac{\partial}{\partial u}\frac{\partial}{\partial\lambda}\Gg-(u-v)
     &=\frac{\partial}{\partial\tau}\Gg-(u-v)=\frac{\partial}{\partial\tau}\Gg+(u-v),
     \nn\\ 
 \frac1{4\pi i}\frac{\partial}{\partial u}\gamma(u-v)&=\frac{\partial}{\partial\tau}{\cal G}(u-v).
 \nn  
\end{align}
Using these formulae we obtain
\begin{align}
  B\big(\LL_{\lambda,1}^\pm(u),\LL_{\lambda,2}^\pm(v)\big)&=-\frac{\partial}
  {\partial\tau}\rr_\lambda^+(u-v),
    & B\big(\LL_{\lambda,1}^+(u),\LL_{\lambda,2}^-(v)\big)
      &=\bigg(\frac{\partial}{\partial u}-\frac{\partial}{\partial\tau}\bigg)\rr_\lambda^+(u-v).
    \notag
\end{align}
Using the formulae $[h,\LL_{\lambda}^\pm(v)]=2\tr_1\la
H_1[\LL_{\lambda,1}^+(u),\L_{\lambda,2}^\pm(v)]\ra_u$, $\tr_1\la
H_1r_{\lambda}^+(u,v)\ra_u=H/2$,
$\tr_1\la[H_1,\L_{\lambda,1}^+(u)]r_{\lambda}^+(u,v)\ra_u
=\frac{i}{\pi
}\frac{\partial}{\partial\lambda}\LL_{\lambda}^\pm(v)$ we get the
relation~\r{B_HhLJ} from~\r{B_rLpmLpmJ}, \r{B_rLpLmJ}.
\qed

\bb{Peculiarities of half-currents for $\utau$. } To conclude this subsection we discuss the
implication of  the properties of Green distributions described in the end of the previous section
to the Lie algebra $\utau$. The fact that the images of the operators ${\cal P}_\lambda^\pm$,
${\cal P}^\pm$ coincide with all the space $K$ means that the
commutation relations between the positive (or negative) half-currents are sufficient to describe
all the Lie algebra $\utau$.
This is a consequence of construction of the Lie algebra $\utau$ as the
central extension of $\slt\otimes K$. To obtain all commutation relations given in
Proposition~\ref{B_prop4}
from relations between only positive (or negative) half-currents one can use, first,
the connection between positive and negative ones:
\begin{align*}
 h^+(u-\tau)&=2\pi i h+h^-(u), & e^+(u-\tau)&=e^{2\pi i\lambda}e^-(u), & f^+(u-\tau)&=e^{-2\pi i\lambda}f^-(u),
\end{align*}
which follows from the properties of Green distributions expressed in formulae~\r{B_shift_tau}.
Second, relations~\r{B_HhLJ}, which also follow from the relations between only positive (respectively
negative) half-currents, and finally, one needs to use the equality
$\frac{\partial}{\partial\tau}\Gg\pm(u-z-\tau)=
  e^{2\pi i\lambda}(-\frac{\partial}{\partial u}+\frac{\partial}{\partial\tau})\Gg-(u-z)$.
At this point we see the essential difference of the Lie algebra $\utau$ with the Lie algebra $\cEF$.

\subsection{Coalgebra  structures of $\cEF$ and $\utau$}
\label{B_subsec34}
We describe here the structure of {\emph quasi-Lie bialgebras} for our Lie algebras $\cEF$ and $\utau$.
We will start with an explicit expression for universal (dynamical) $r$-matrices for both Lie algebras.

\begin{prop} \label{B_prop_CDYBE}
The universal $r$-matrix for the Lie algebra $\cEF$ defined as
\begin{align}
 r_\lambda=\frac12\sum_{n\ge0}\hat h[\epsilon^{n;0}]\otimes\hat h[\epsilon_{n;0}]
  +\sum_{n\ge0}\hat f[\epsilon^{n;\lambda}]\otimes\hat e[\epsilon_{n;\lambda}]
  +\sum_{n<0}\hat e[\epsilon_{n;\lambda}]\otimes\hat f[\epsilon^{n;\lambda}]+c\otimes d
  \notag 
\end{align}
satisfies the Classical Dynamical Yang-Baxter Equation (CDYBE)
\begin{align}
  [r_{\lambda,{\bf 12}},r_{\lambda,{\bf 13}}]
 +[r_{\lambda,{\bf 12}},r_{\lambda,{\bf 23}}]
 +[r_{\lambda,{\bf 13}},r_{\lambda,{\bf 23}}]=
   h_{{\bf 1}}\frac{\partial}{\partial\lambda}r_{\lambda,{\bf 23}}
  -h_{{\bf 2}}\frac{\partial}{\partial\lambda}r_{\lambda,{\bf 13}}
  +h_{{\bf 3}}\frac{\partial}{\partial\lambda}r_{\lambda,{\bf 12}}\,. \label{B_CDYBE}
\end{align}
\end{prop}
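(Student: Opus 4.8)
The plan is to compute the left-hand side of \r{B_CDYBE} directly, after rewriting the universal $r$-matrix of $\cEF$ in "integral" form and reducing every product of Green distributions by the addition theorems of Proposition~\ref{B_prop1}. Using $\hat x[s]=\langle x(u),s(u)\rangle_u$ together with the decompositions \r{B_Glpmdef}, \r{B_Gdef} of the Green distributions in the dual bases, and the identity $G^+_\lambda(u,z)=-G^-_{-\lambda}(z,u)$ to recombine the third sum, the $r$-matrix becomes
\begin{align*}
 r_\lambda={}&\frac12\langle h(u)\otimes h(v)\,G(u,v)\rangle_{u,v}
  +\langle f(u)\otimes e(v)\,G^+_\lambda(u,v)\rangle_{u,v}\\
  &+\langle e(u)\otimes f(v)\,G^+_{-\lambda}(u,v)\rangle_{u,v}+c\otimes d.
\end{align*}
First I would record $\partial_\lambda r_\lambda$ explicitly: since the $h\otimes h$ block and $c\otimes d$ are $\lambda$-independent, only the $f\otimes e$ and $e\otimes f$ blocks contribute, carrying the kernels $\partial_\lambda G^+_\lambda$ and $\partial_\lambda G^+_{-\lambda}$.

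The core of the argument is the evaluation of the three double brackets by means of the current commutation relations \r{B_xxtc}. A current sitting in a tensor slot common to two $r$-factors is commuted by \r{B_cu-com-rel}, producing a single current times $\delta(u,v)$ together with the cocycle term proportional to $c\,\partial_u\delta(u,v)$. Integrating the delta-function collapses one spectral variable, so each contribution reduces to a single integral of a product of two Green distributions sharing one argument. At this point the formulae \r{B_llpp}--\r{B_llmm} apply verbatim: every such product equals a linear combination of Green distributions (the "regular" part) minus a term $\partial_\lambda G^+_{\pm\lambda}$. The regular parts reproduce precisely the combinatorics of the ordinary classical Yang--Baxter equation and I expect them to cancel pairwise among the three brackets, exactly as in the $c=0$, $\lambda$-independent situation.

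What survives are the $-\partial_\lambda G^+_{\pm\lambda}$ terms generated by the addition formulae. In every such contribution the commutator \r{B_xxtc} has, via the $\slt$ brackets $[e,f]=h$, $[h,e]=2e$, $[h,f]=-2f$, converted one of the three currents into a structure-constant multiple of another, so that the surviving term carries a single Cartan current $h$ in one tensor slot and an off-diagonal $e$--$f$ block dressed with a $\partial_\lambda G^+_{\pm\lambda}$ kernel in the other two. Organizing these by the slot occupied by the leftover $h$ and tracking the signs, they add up to $h_{{\bf 1}}\partial_\lambda r_{\lambda,{\bf 23}}-h_{{\bf 2}}\partial_\lambda r_{\lambda,{\bf 13}}+h_{{\bf 3}}\partial_\lambda r_{\lambda,{\bf 12}}$, which is the right-hand side of \r{B_CDYBE}. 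As a consistency check, projecting this universal identity through the evaluation representations should return the matrix dynamical relation compatible with the $rLL$-relations of Proposition~\ref{B_prop3}.

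The main obstacle will be the bookkeeping of the central and cocentral contributions, which must cancel completely, since the right-hand side of \r{B_CDYBE} contains no $c$-dependent and no $\partial_u$-type terms. Here the role of the summand $c\otimes d$ is essential: its brackets with the finite part produce, through $[d,x(u)]=-\partial_u x(u)$, derivatives $\partial_u$ of the Green kernels, and these have to be matched against the cocycle terms $c\,(x_1,x_2)\,\partial_u\delta$ arising from \r{B_xxtc}. I would handle this using the "shifted" property $\partial_u G=-\partial_z G$ of the Green distributions together with integration by parts in the collapsed variable, which should convert the cocycle contributions into total derivatives cancelling the $c\otimes d$ brackets. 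Verifying this cancellation, and fixing all signs in the dynamical terms, is the delicate part; the Fay-identity step itself is already packaged in Proposition~\ref{B_prop1} and is essentially routine once the products are organized.
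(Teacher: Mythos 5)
Your proposal is sound, and the steps I checked (the integral rewriting of $r_\lambda$, the delta--collapse, the use of the addition formulae, the emergence of the zero mode $h=\hat h[\epsilon_{0;0}]$ multiplying $\partial_\lambda$-dressed kernels) all go through. It is worth stressing, however, that the paper itself never writes down a proof of Proposition~\ref{B_prop_CDYBE}: the universal CDYBE is asserted (it is inherited from the classical limit of the quasi-Hopf construction of~\cite{EF}), and the text following the proposition runs your final ``consistency check'' in reverse, i.e.\ it applies $(\Pi_u\otimes\Pi_v\otimes\id)$, $(\id\otimes\Pi_u\otimes\Pi_v)$, $(\Pi_u\otimes\id\otimes\Pi_v)$ to~\eqref{B_CDYBE} and recovers the $rLL$-relations \eqref{B_rLpmLpm}, \eqref{B_rLpLm}, which were proved independently in Proposition~\ref{B_prop3}. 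Your computation is exactly the universal-level counterpart of that proof of Proposition~\ref{B_prop3}: the same key lemma (the addition formulae \r{B_llpp}--\r{B_llmm} of Proposition~\ref{B_prop1}, i.e.\ the Fay identity), the same symmetry $G^+_\lambda(u,z)=-G^-_{-\lambda}(z,u)$, the same cocycle bookkeeping, but applied to total currents and $r_\lambda$ rather than to half-currents and $L$-operators. What your route buys is an actual proof of the universal statement, which the paper leaves implicit; it also makes transparent why \eqref{B_CDYBE} contains no central term even though $c\cdot\partial_u r_\lambda^+$ appears in \eqref{B_rLpLm} --- in the evaluated picture that term comes entirely from the correction \eqref{B_Lm_ev_r} relating $\L_\lambda^-(u)$ to $-(\Pi_u\otimes\id)r_{\lambda,\mathbf{21}}$, not from the universal equation.

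One refinement concerning the part you rightly single out as delicate. Writing $r_\lambda=\sum_i a_i\otimes b_i+c\otimes d$, the $c$-carrying terms cancel in a slot-dependent way, not by a single global matching of cocycle terms against $c\otimes d$ brackets. (i) The cocycle terms produced by the slot-one brackets in $[r_{\lambda,\mathbf{12}},r_{\lambda,\mathbf{13}}]$ vanish on their own: the convolutions $\langle\partial_u G(u,v)\,G(u,w)\rangle_u$ and $\langle\partial_u G^+_\lambda(u,v)\,G^+_{-\lambda}(u,w)\rangle_u$ reduce to two residues, at $u=v$ and $u=w$, which cancel by the parity $\theta(-x)=-\theta(x)$; likewise the cocycle terms from the slot-three brackets in $[r_{\lambda,\mathbf{13}},r_{\lambda,\mathbf{23}}]$ vanish because there the shared variable is the \emph{second} argument of both kernels, so the small contour encloses no pole. (ii) The two families $c\otimes[d,a_j]\otimes b_j$ and $c\otimes a_j\otimes[d,b_j]$ cancel against \emph{each other} after integration by parts, by the shifted property $\partial_v K(v,w)=-\partial_w K(v,w)$ of the kernels. (iii) Only in the middle slot do cocycle terms (from $[r_{\lambda,\mathbf{12}},r_{\lambda,\mathbf{23}}]$) cancel against $d$-bracket terms (from $[r_{\lambda,\mathbf{13}},r_{\lambda,\mathbf{23}}]$), via identities of the type $\langle\partial_v G(u,v)\,G(v,w)\rangle_v=\partial_w G(u,w)$; this is the mechanism you describe. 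So your toolbox is the right one, but the bookkeeping splits into these three cases; once they are checked, the regular parts and the $\partial_\lambda$-terms assemble into the right-hand side of~\eqref{B_CDYBE} exactly as you claim.
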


Denote by $\Pi_u$  the evaluation representation $\Pi_u\colon\cEF\to\End V_u$,
where $V_u=\mathbb C^2\otimes\lfK_0$ and the subscript $u$ means the argument
of the functions belonging to $\lfK_0$:
\begin{align}
 &\Pi_u\colon \hat h[s]\mapsto s(u)H, & &\Pi_u\colon \hat e[s]\mapsto s(u)E, &
 &\Pi_u\colon \hat f[s]\mapsto s(u)F, \label{B_ev_repr}
\end{align}
and
$\Pi_u\colon c\mapsto 0$, $\Pi_u\colon d\mapsto \frac{\partial}{\partial u}$, where $ H=\left(
 \begin{smallmatrix}
  1 & 0 \\
  0  & -1
 \end{smallmatrix}\right)$,
 $ E=\left(
 \begin{smallmatrix}
  0 & 1 \\
  0  & 0
 \end{smallmatrix}\right)$,
 $ F=\left(
 \begin{smallmatrix}
  0 & 0 \\
  1  & 0
 \end{smallmatrix}\right)$,
$s\in\lfK_0$. The relations between $L$-operators and the universal $r$-matrix are given by the formulae
\begin{align}
\L_\lambda^+(u)&=(\Pi_u\otimes\id) r_\lambda, & 
\L_\lambda^-(u)-c\frac{\partial}{\partial u}&=-(\Pi_u\otimes\id) r_{\lambda,\textbf{21}},  \label{B_Lm_ev_r}
\end{align}
and $r_\lambda^+(u-v)=(\Pi_u\otimes\Pi_v) r_\lambda$.
Taking into account these formulae  and applying $(\Pi_u\otimes\Pi_v\otimes\id)$,
$(\id\otimes\Pi_u\otimes\Pi_v)$, $(\Pi_u\otimes\id\otimes\Pi_v)$ to the
equation~\eqref{B_CDYBE} we derive the relation~\eqref{B_rLpmLpm} for the sign `$+$',
the relation~\eqref{B_rLpmLpm} for the sign `$-$' and the relation~\eqref{B_rLpLm} respectively.
Applying $(\Pi_u\otimes\id)$ or $(\id\otimes\Pi_u)$ to the identity $[\Delta h,\rr_\lambda]=0$
we derive the relation~\eqref{B_HhL}.

The co-bracket $\delta\colon\cEF\to\cEF\wedge\cEF$ and an element $\varphi\in\cEF\wedge\cEF\wedge\cEF$
are defined as
$\delta x=[\Delta x,r_\lambda]=[x\otimes1+1\otimes x,r_\lambda]$, for
 $x\in\cEF$ and
\begin{align}
 \varphi=-[r_{\lambda,\textbf{12}},r_{\lambda,\textbf{13}}]
 -[r_{\lambda,\textbf{12}},r_{\textbf{23}}]-[r_{\lambda,\textbf{13}},r_{\lambda,\textbf{23}}]
 =
-h_{\textbf{1}}\frac{\partial}{\partial\lambda}r_{\lambda,\textbf{23}}
  +h_{\textbf{2}}\frac{\partial}{\partial\lambda}r_{\lambda,\textbf{13}}
  -h_{\textbf{3}}\frac{\partial}{\partial\lambda}r_{\lambda,\textbf{12}}\,. \notag
\end{align}
They equip the Lie algebra $\cEF$ with a structure of a quasi-Lie bialgebra~\cite{D90}.
This fact follows from the equality $r_{\textbf{12}}+r_{\textbf{21}}=\Omega$,
where $\Omega$ is a tensor Casimir element of algebra $\cEF$.
 To calculate this co-bracket on the half-currents in the matrix
 form we apply $(\Pi_u\otimes\id\otimes\id)$, $(\id\otimes\id\otimes\Pi_u)$
 to the equation~\eqref{B_CDYBE} and derive
\begin{align*}
 \delta L^+_{\lambda}(u)=-[L^+_{\lambda,\textbf{1}}(u)&,L^+_{\lambda,\textbf{2}}(u)]
  +H\frac{\partial}{\partial\lambda}r_{\lambda}
  -h\wedge\frac{\partial}{\partial\lambda}L^+_{\lambda}(u)\,, \\
 \delta L^-_{\lambda}(u)=-[L^-_{\lambda,\textbf{1}}(u)&,L^-_{\lambda,\textbf{2}}(u)]
  +H\frac{\partial}{\partial\lambda}r_{\lambda}
  -h\wedge\frac{\partial}{\partial\lambda}L^-_{\lambda}(u)
  -c\wedge\frac{\partial}{\partial u}L^-_{\lambda}(u)\,.
\end{align*}
We can see also that
 $\delta h=0$, $\delta c=0$, $\delta d=0$.

\begin{prop} \label{B_prop_CDYBEJ}
The universal $r$-matrix for the Lie algebra $\utau$ defined by formula
\begin{align}
 \rr_\lambda=\frac14\hat h[j^0]\otimes\hat h[j_0]&+\frac12\sum_{n\ne0}\frac{\hat h[j^n]\otimes\hat
  h[j_n]}{1-e^{2\pi in\tau}}+ \notag \\
 &+\sum_{n\in\mathbb Z}\frac{\hat e[j^n]\otimes\hat f[j_n]}{1-e^{2\pi i(n\tau+\lambda)}}
 +\sum_{n\in\mathbb Z}\frac{\hat f[j^n]\otimes\hat e[j_n]}{1-e^{2\pi i(n\tau-\lambda)}}
 +c\otimes d. \notag 
\end{align}
satisfies the equation
\begin{multline}
  [\rr_{\lambda,\bf{12}},\rr_{\lambda,\bf{13}}]
 +[\rr_{\lambda,\bf{12}},\rr_{\lambda,\bf{23}}]
 +[\rr_{\lambda,\bf{13}},\rr_{\lambda,\bf{23}}]=\\
  =h_{\bf{1}}\frac{\partial}{\partial\lambda}\rr_{\lambda,\bf{23}}
  -h_{\bf{2}}\frac{\partial}{\partial\lambda}\rr_{\lambda,\bf{13}}
  +h_{\bf{3}}\frac{\partial}{\partial\lambda}\rr_{\lambda,\bf{12}}
  -c_{\bf{1}}\frac{\partial}{\partial\tau}\rr_{\lambda,\bf{23}}
  +c_{\bf{2}}\frac{\partial}{\partial\tau}\rr_{\lambda,\bf{13}}
  -c_{\bf{3}}\frac{\partial}{\partial\tau}\rr_{\lambda,\bf{12}}. \notag
\end{multline}
\end{prop}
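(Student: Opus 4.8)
The plan is to mirror exactly the strategy used for the first elliptic current algebra $\cEF$ in Proposition~\ref{B_prop_CDYBE}, since the statement for $\utau$ in Proposition~\ref{B_prop_CDYBEJ} is its direct analogue with Taylor-series Green distributions replaced by Fourier-series ones. The Classical Dynamical Yang-Baxter Equation with the extra $c$-dependent terms on the right-hand side should follow from the same bialgebra machinery, now accounting for the central extension cocycle that produces the $\frac{\partial}{\partial\tau}$ corrections visible in the $rLL$-relations \eqref{B_rLpmLpmJ}, \eqref{B_rLpLmJ}. First I would verify the evaluation representation $\Pi_u\colon\utau\to\End V_u$ with $V_u=\mathbb C^2\otimes K$, acting by the formulae analogous to \eqref{B_ev_repr}, and check that $\LL_\lambda^\pm(u)$ are recovered from $\rr_\lambda$ by the analogues of \eqref{B_Lm_ep_r}, i.e. $\LL_\lambda^+(u)=(\Pi_u\otimes\id)\rr_\lambda$ and $\LL_\lambda^-(u)-c\frac{\partial}{\partial u}=-(\Pi_u\otimes\id)\rr_{\lambda,\textbf{21}}$, together with $\rr_\lambda^+(u-v)=(\Pi_u\otimes\Pi_v)\rr_\lambda$.

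The core computation is to evaluate the left-hand side $[\rr_{\lambda,\textbf{12}},\rr_{\lambda,\textbf{13}}]+[\rr_{\lambda,\textbf{12}},\rr_{\lambda,\textbf{23}}]+[\rr_{\lambda,\textbf{13}},\rr_{\lambda,\textbf{23}}]$. I would expand $\rr_\lambda$ into its Cartan, $e\otimes f$, $f\otimes e$, and $c\otimes d$ blocks and compute the brackets using the defining relations \eqref{B_xxtc} for $\utau$. The key algebraic input is the addition theorem of Proposition~\ref{B_prop2}, which states that the semi-direct products of the Fourier Green distributions $\Gg\pm(u-z)$, ${\cal G}(u-z)$ obey the very same identities \eqref{B_llpp}–\eqref{B_llmm} as their Taylor counterparts. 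Since these identities are the analytic heart of the CDYBE proof for $\cEF$, their persistence in the $K$-case guarantees that the non-central part of the equation closes in precisely the same form, producing the three dynamical terms $h_{\textbf{1}}\frac{\partial}{\partial\lambda}\rr_{\lambda,\textbf{23}}-h_{\textbf{2}}\frac{\partial}{\partial\lambda}\rr_{\lambda,\textbf{13}}+h_{\textbf{3}}\frac{\partial}{\partial\lambda}\rr_{\lambda,\textbf{12}}$.

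The genuinely new part, and the step I expect to be the main obstacle, is the generation of the three additional central terms $-c_{\textbf{1}}\frac{\partial}{\partial\tau}\rr_{\lambda,\textbf{23}}+c_{\textbf{2}}\frac{\partial}{\partial\tau}\rr_{\lambda,\textbf{13}}-c_{\textbf{3}}\frac{\partial}{\partial\tau}\rr_{\lambda,\textbf{12}}$. These arise from the central cocycle $B(\cdot,\cdot)$ in \eqref{B_XXc} contributing to the commutators of half-current modes, and the crucial point is that the resulting correction terms are $\tau$-derivatives rather than vanishing. I would track these using the three differentiation identities established in the proof of Proposition~\ref{B_prop4}, namely $\frac1{2\pi i}\frac{\partial}{\partial u}\frac{\partial}{\partial\lambda}\Gg\pm(u-v)=\frac{\partial}{\partial\tau}\Gg\pm(u-v)$ and $\frac1{4\pi i}\frac{\partial}{\partial u}\gamma(u-v)=\frac{\partial}{\partial\tau}{\cal G}(u-v)$. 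These convert the $\frac{\partial}{\partial u}$-type cocycle contributions, which in the $\cEF$ case simply cancelled, into the $\frac{\partial}{\partial\tau}$-terms appearing on the right-hand side. The antisymmetric placement of the signs among positions $\textbf{1},\textbf{2},\textbf{3}$ follows from the antisymmetry of the cocycle and the ordering of the double commutators.

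The cleanest way to organize the verification is representation-theoretic rather than by brute mode expansion: I would apply $(\Pi_u\otimes\Pi_v\otimes\id)$, $(\id\otimes\Pi_u\otimes\Pi_v)$, and $(\Pi_u\otimes\id\otimes\Pi_v)$ to the claimed equation and check that one recovers respectively \eqref{B_rLpmLpmJ} for the sign $+$, \eqref{B_rLpmLpmJ} for the sign $-$, and \eqref{B_rLpLmJ}, exactly as was done for $\cEF$. Since those three $rLL$-relations are already proved in Proposition~\ref{B_prop4}, and since the universal $r$-matrix reproduces the $L$-operators and the matrix $r$-matrix under $\Pi_u$, this reduces the universal statement to the already-established matrix relations on a spanning set of representations. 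The only residual check is that the relations hold on the Cartan and central generators $h$, $c$, $d$ as well; these follow from $[\Delta h,\rr_\lambda]=0$ (the analogue of \eqref{B_HhLJ}) and the centrality of $c$, $d$. I would close by remarking that, as for $\cEF$, the identity $\rr_{\textbf{12}}+\rr_{\textbf{21}}=\Omega$ with $\Omega$ the tensor Casimir then equips $\utau$ with the structure of a quasi-Lie bialgebra, with co-bracket $\delta x=[\Delta x,\rr_\lambda]$ and the corresponding element $\varphi$ read off from minus the left-hand side of the CDYBE.
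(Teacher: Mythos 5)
You should first note that the paper offers no proof of Proposition~\ref{B_prop_CDYBEJ}: it is stated as a fact, and the surrounding text runs the implication in the direction opposite to yours, deriving the $rLL$-relations \eqref{B_rLpmLpmJ}, \eqref{B_rLpLmJ} \emph{from} the universal equation by applying the evaluation maps (exactly as is done for $\cEF$ after Proposition~\ref{B_prop_CDYBE}). Since Proposition~\ref{B_prop4} proves those $rLL$-relations independently of any universal object, your plan to invert the implication is not circular, and your preparatory ingredients are the right ones: the $\utau$-analogues of \eqref{B_Lm_ev_r}, the addition theorems of Proposition~\ref{B_prop2} for the non-central sector, and the cocycle $B$ of \eqref{B_XXc} together with the identities $\frac1{2\pi i}\partial_u\partial_\lambda{\cal G}^\pm_\lambda=\partial_\tau{\cal G}^\pm_\lambda$ and $\frac1{4\pi i}\partial_u\gamma=\partial_\tau{\cal G}$ for the central one.

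The genuine gap lies in the step on which your proof actually rests: ``this reduces the universal statement to the already-established matrix relations on a spanning set of representations.'' The three placements $(\Pi_u\otimes\Pi_v\otimes\id)$, $(\id\otimes\Pi_u\otimes\Pi_v)$, $(\Pi_u\otimes\id\otimes\Pi_v)$ are \emph{not} jointly faithful on the completed triple tensor product, precisely because $\Pi_u(c)=0$: any tensor with $c$ in two or more slots is annihilated by all three maps, and a tensor with $c$ in exactly one slot is detected by exactly one of them. Hence the vanishing of the three evaluations of the difference of the two sides does not, by itself, force that difference to vanish --- and the terms at risk are exactly the new content of the proposition, namely the $c_{\textbf{k}}\,\partial_\tau$-terms and the central cocycle contributions to the double commutators. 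To turn your reduction into a proof you must supply two missing checks: (i) that the difference of the two sides contains only pure mode terms and terms with $c$ in a single slot; this is true, but only because $c$ enters $\rr_\lambda$ solely through $c\otimes d$ with $c$ central, so every doubly-central candidate carries a factor $[c,\cdot]=0$, and because the derivation terms $c\otimes\bigl([d,\hat x[j^n]]\otimes\hat y[j_n]+\hat x[j^n]\otimes\hat y[(j_n)']\bigr)$ cancel mode by mode since $(j^n)'=-2\pi in\,j^n$ while $(j_n)'=2\pi in\,j_n$; and (ii) that on the remaining subspace the three placements separate points, using the linear independence of the products $j^n(u)j_m(v)$, of the matrices $X\otimes Y$ with $X,Y\in\{H,E,F\}$, and of $\Pi_u(d)=\partial_u$ from multiplication operators. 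A final small correction: $d$ is not central in $\utau$ (only $c$ is, while $[d,\hat x[s]]=\hat x[s']$), so your closing appeal to ``the centrality of $c$, $d$'' is wrong as stated and must be replaced by the bookkeeping above.
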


The relations between the universal matrix $\rr_{\lambda}$ and $L$-operators of the
algebra $\utau$ are the same as for the algebra $\cEF$ with a proper modification of the
evaluation representation
$\Pi_u\colon\utau\to\End {\cal V}_u$, ${\cal V}_u=\mathbb C^2\otimes K$ defined by the same
formulas~\r{B_ev_repr} as above for
$s\in K$.

The bialgebra structure of $\utau$ is defined in analogous way as for the algebra $\cEF$
and  can be presented in the form
\begin{align}
 &\delta\LL^+_{\lambda}(u)=-[\LL^+_{\lambda,\textbf{1}}(u),\LL^+_{\lambda,\textbf{2}}(u)]
  +H\frac{\partial}{\partial\lambda}\rr_{\lambda}
  -h\wedge\frac{\partial}{\partial\lambda}\LL^+_{\lambda}(u)+c\wedge\frac{\partial}{\partial\tau}
  \LL^+_{\lambda}(u)\,, \notag \\
 &\delta\LL^-_{\lambda}(u)=-[\LL^-_{\lambda,\textbf{1}}(u),\LL^-_{\lambda,\textbf{2}}(u)]
  +H\frac{\partial}{\partial\lambda}\rr_{\lambda}
  -h\wedge\frac{\partial}{\partial\lambda}\LL^-_{\lambda}(u)
  -c\wedge\Big(\frac{\partial}{\partial u}-\frac{\partial}{\partial\tau}\Big)\LL^-_{\lambda}(u)\,.\notag
\end{align}

\section*{Acknowledgements} This paper is a part of PhD thesis
of A.S. which he is prepared in co-direction of S.P. and V.R. in
the Bogoliubov  Laboratory of Theoretical Physics, JINR, Dubna and in LAREMA,
D\'epartement de Math\'ematics, Universit\'e d'Angers. He is
grateful to the CNRS-Russia exchange program on mathematical physics  and  personally to J.-M. Maillet
for financial and general support of this thesis project. V.R. are
thankful to B.Enriquez for discussions. He had used during the
project a partial financial support by ANR GIMP, Grant for support of scientific schools
NSh-8065.2006.2
and a support of INFN-RFBR "Einstein" grant (Italy-Russia). He acknowledges a warm hospitality
of Erwin Schr\"{o}dinger Institute for Mathematical Physics and the Program
"Poisson Sigma Models, Lie Algebroids, deformations and higher analogues" where this paper was finished.
S.P. was supported in part by RFBR grant 06-02-17383.

Leonid Vaksman, an excellent mathematician, one of Quantum Group "pioneers", patient
teacher and a bright person,
had passed away after coward disease when this paper was finished. We dedicate it to
his memory with a sad and sorrow.

\chapter{Classical elliptic current algebras. II}
\label{SA22}

\thispagestyle{empty} \pagestyle{myheadings}
\markboth{}{S.\,Pakuliak, V.\,Rubtsov, A.\,Silantyev\hfil{\itshape Classical elliptic current algebras. II}}

\newpage \thispagestyle{empty}

\vspace*{3cm}
\begin{center}
{\LARGE Classical elliptic current algebras. II} \\[17mm]
{\large Stanislav PAKULIAK~$^a$, Vladimir RUBTSOV~$^b$ and \\[4mm]
Alexey SILANTYEV~$^c$} \\[8mm]
$^a$ Institute of Theoretical \& Experimental Physics, 117259 Moscow, Russia \\
Laboratory of Theoretical Physics, JINR, 141980 Dubna, Moscow reg., Russia \\
~~E-mail: pakuliak@theor.jinr.ru \\[4mm]
$^b$ Institute of Theoretical \& Experimental Physics, 117259 Moscow, Russia  \\
D\'epartment de Math\'ematiques, Universit\'e d'Angers, 2 Bd. Lavoisier, 49045 Angers, France \\
~~E-mail: Volodya.Roubtsov@univ-angers.fr \\[4mm]
$^c$ Laboratory of Theoretical Physics, JINR, 141980 Dubna, Moscow reg., Russia \\
D\'epartment de Math\'ematiques, Universit\'e d'Angers, 2 Bd. Lavoisier, 49045 Angers, France \\
~~E-mail: silant@tonton.univ-angers.fr
\end{center}

\begin{center}
\bigskip
\bigskip
\emph{In memory of Leonid Vaksman}
\bigskip
\bigskip
\end{center}

\begin{abstract}
This is a continuation of the previous paper~\cite{S21}.
We describe different degenerations of the classical elliptic algebras.
They yield different versions of rational and
trigonometric current algebras.
 We also review the
averaging method of Faddeev-Reshetikhin, which allows to restore elliptic algebras from
the trigonometric ones.
\end{abstract}

\section{Introduction}
\label{C_sec1}

We continue the investigation of the  {\em classical current
algebras} (algebras which can be described by a collection of currents)
related to the classical $r$-matrices and which are quasi-classical limits of
SOS-type {\em quantized elliptic current algebras}. The latter
were introduced by Felder~\cite{F1}. We describe shortly the results of the
previous paper \cite{S21}.

Let $\fK$ be a function algebra on a
one-dimensional complex manifold $\Sigma$ with a point-wise
multiplication and a continuous invariant (non-degenerate) scalar
product $\la\cdot,\cdot\ra\colon\fK\times\fK\to\CC$. We shall call
the pair $(\fK,\la\cdot,\cdot\ra)$ {\em a test function algebra}.
The non-degeneracy of the scalar product implies that the algebra
$\fK$ can be extended to a space $\fK'$ of linear continuous
functionals on $\fK$. We use the notation $\la a(u),s(u)\ra$ or
$\la a(u),s(u)\ra_u$ for the action of the distribution
$a(u)\in\fK'$ on a test function $s(u)\in\fK$. Let
$\{\epsilon^i(u)\}$ and $\{\epsilon_i(u)\}$ be  dual bases of
$\fK$. A typical example of the element from $\fK'$ is the series
$\delta(u,z)=\sum_i\epsilon^i(u)\epsilon_i(z)$. This is a
delta-function distribution on $\fK$ because it satisfies $\la
\delta(u,z),s(u)\ra_u=s(z)$ for any test function $s(u)\in\fK$.

Consider an infinite-dimensional complex Lie algebra $\frg$ and an operator
$\hx\colon\fK\to\frg$. The expression
$
 x(u)=\sum_i \epsilon^i(u)\hx[\epsilon_i]
$ 
does not depend on a choice of dual bases in $\fK$ and is called a current corresponding
to the operator $\hx$ ($\hx[\epsilon_i]$ means an action of $\hx$ on $\epsilon_i$).
We should interpret the current $x(u)$ as a $\frg$-valued distribution such that
$
 \la x(u),s(u)\ra=\hx[s].\
$
That is the current $x(u)$ can be regarded as a kernel of the operator $\hx$ and the
latter formula\ 
 gives its invariant definition.

To describe different bialgebra structures in
the current algebras we have to decompose  the currents in these
algebras into difference of the currents which have good
analytical properties in certain domains:
$
 x(u)=x^+(u)-x^-(u).\
$
The $\frg$-valued distributions $x^+(u)$, $x^-(u)$ are called {\em
half-currents}. To perform such a decomposition we will use
so-called Green distributions~\cite{ER1}. Let
$\Omega^+,\Omega^-\subset\Sigma\times\Sigma$ be two domain
separated by a hypersurface $\bar\Delta\subset\Sigma\times\Sigma$
which contains the diagonal $\Delta=\{(u,u)\mid
u\in\Sigma\}\subset\bar\Delta$. Let there exist distributions
$G^+(u,z)$ and $G^-(u,z)$ regular in $\Omega^+$ and $\Omega^-$
respectively such that $ \delta(u,z)=G^+(u,z)-G^-(u,z)$. To define
half-currents corresponding to these Green distributions we
decompose them as $G^+(u,z)=\sum_i\alpha^+_i(u)\beta^+_i(z)$ and
$G^-(u,z)=\sum_i\alpha^-_i(u)\beta^-_i(z)$. Then the half-currents
are defined as $x^+(u)=\sum_i \alpha^+_i(u)\hx[\beta^+_i]$ and
$x^-(u)=\sum_i \alpha^-_i(u)\hx[\beta^-_i]$. This definition does
not depend on a choice of decompositions of the Green
distributions. The half-currents are currents corresponding to the
operators $\hx^\pm=\pm\,\hx\cdot P^\pm$, where $P^\pm[s](z)=\pm\la
G^\pm(u,z),s(u)\ra$, $s\in\fK$. One can express the half-currents
through the current $x(u)$, which we shall call {\em a total
current} in contrast with the half ones:
\begin{align}\label{C_hc_tc}
 x^+(u)=\la G^+(u,z)x(z)\ra_z,\qquad x^-(u)=\la G^-(u,z)x(z)\ra_z.
\end{align}
Here $\la a(z) \ra_z\equiv \la a(z),1 \ra_z$.

When $\Sigma$ is a covering of an elliptic
curve  we showed in \cite{S21} that:
\begin{itemize}
\item there are two essentially different choices of the test function algebras $\fK$ in this case
corresponding to the different covering $\Sigma$;
\item the same quasi-doubly periodic meromorphic functions regularized with respect to the different
 test function algebras define the different quasi-Lie bialgebra structures and, therefore,
 the different classical elliptic current algebras;
\item the internal structure of these two elliptic algebras is essentially different
in spite of a similarity in the commutation relations between
their half-currents.
\end{itemize}

Let $\tau\in\CC$, $\Im\tau>0$ be a module of
the elliptic curve $\mathbb C/\Gamma$, where $\Gamma=\mathbb
Z+\tau\mathbb Z$ is a period lattice. The odd theta function
$\theta(u)=-\theta(-u)$ is defined as a holomorphic function on
$\CC$ with the properties
\begin{align}
 \theta(u+1)&=-\theta(u), &\theta(u+\tau)&=-e^{-2\pi i u-\pi i\tau}\theta(u), &\theta'(0)&=1.
\end{align}

The first choice of the test function algebra corresponds to $\fK=\lfK_0$, where $\lfK_0$
consists of complex-valued one-variable functions defined in a
vicinity of origin (see details in Appendix~\ref{C_apA}) equipped
with the scalar product~\r{C_lfK_sp}. These functions can be
extended up to meromorphic functions on the covering
$\Sigma=\mathbb C$. The regularization domain $\Omega^+$,
$\Omega^-$ for Green distributions in this case consist of the
pairs $(u,z)$ such that $\max(1,|\tau|)>|u|>|z|>0$ and
$0<|u|<|z|<\max(1,|\tau|)$ respectively, where $\tau$ is an
elliptic module, and $\bar\Delta=\{(u,z)\mid |u|=|z|\}$.
We denote the corresponding elliptic Green distributions as
$\G\pm(u,z)$ and $G(u,z)$ and their action on a test function $s(u)$ is defined as
\begin{align}
\la\G\pm(u,z),s(u)\ra_u&=
  \oint\limits_{|u|>|z|\atop |u|<|z|}\frac{du}{2\pi i}\frac{\theta(u-z+\lambda)}{\theta(u-z)
  \theta(\lambda)}s(u)\,,
  \label{C_Glpmact} \\
 \la G(u,z),s(u)\ra_u&=
  \oint\limits_{|u|>|z|}\frac{du}{2\pi i}\frac{\theta'(u-z)}{\theta(u-z)}s(u)\,, \label{C_Gact}
\end{align}
where integrations are taken over circles around zero which are
small enough such that the corresponding inequality takes place.
The Green distributions are examples of the `shifted'
distributions (see Appendix~\ref{C_apA}) satisfying
\begin{align}
 \G+(u,z)-\G-(u,z)=\delta(u,z)\,,\quad
 G(u,z)+G(z,u)=\delta(u,z)\,. \label{C_GGdelta}
\end{align}
The oddness of function $\theta(u)$
leads to the following connection between the $\lambda$-depending Green distributions:
$
 \G+(u,z)=-G_{-\lambda}^-(z,u).
$

The second choice of the test function algebra
corresponds to $\fK=K=K(\Cyl)$. The algebra $K$ consists of entire periodic functions
$s(u)=s(u+1)$ on $\CC$ decaying exponentially at $\Im u\to\pm\infty$ equipped with an invariant scalar
product~\r{C_spJ}. This functions can be regarded as functions on cylinder $\Sigma=\Cyl$
(see Appendix~\ref{C_apA}). The regularization domain $\Omega^+$, $\Omega^-$ for Green distributions
consist of the pairs $(u,z)$ such that $-\Im\tau<\Im(u-z)<0$ and $0<\Im (u-z)<\Im\tau$ respectively
and $\bar\Delta=\{(u,z)\mid \Im u=\Im z\}$.
We denote the corresponding distributions as $\Gg\pm(u-z)$ and ${\cal G}(u-z)$ and
their action on the space $K$ is given by the formulas
\begin{align}
 \la\Gg\pm(u-z),s(u)\ra_u&=\int\limits_{-\Im\tau<\Im(u-z)<0\atop 0<\Im(u-z)<\Im\tau}
 \frac{du}{2\pi i} \frac{\theta(u-z+\lambda)}{\theta(u-z)\theta(\lambda)}s(u), \label{C_GlpmdefJ} \\
 \la{\cal G}(u-z),s(u)\ra_u
   &=\int\limits_{-\Im\tau<\Im(u-z)<0}\frac{du}{2\pi i}\frac{\theta'(u-z)}{\theta(u-z)}s(u), \label{C_GdefJ}
\end{align}
where the integration goes  over the line segments of unit length (cycles of cylinder) such that
the corresponding inequality takes place. The role of dual bases in the algebra $K$ is played by
$\{j_n(u)=e^{2\pi inu}\}_{n\in\mathbb Z}$ and
$\{j^n(u)=2\pi i e^{-2\pi inu}\}_{n\in\mathbb Z}$ (see Appendix~\ref{C_apA}), a decomposition to
these bases is the usual Fourier expansion. The Fourier expansions for the Green distributions
are
\begin{gather}
 \Gg\pm(u-z)=\pm2\pi i\sum_{n\in\mathbb Z}\frac{e^{-2\pi in(u-z)}}{1-e^{\pm2\pi i(n\tau-\lambda)}},
\qquad 
{\cal G}(u-z)=\pi i+2\pi i\sum_{n\ne0}\frac{e^{-2\pi in(u-z)}}{1-e^{2\pi in\tau}}. \label{C_GexpanJ}
\end{gather}
These expansions are in according with formulae
\begin{align}
 \Gg+(u-z)-\Gg-(u-z)=\delta(u-z),\qquad
  {\cal G}(u-z)+{\cal G}(z-u)=\delta(u-z),
  \notag
\end{align}
where $\delta(u-z)$ is a delta-function on $K$, given by the
expansion~\r{C_dffeJ}.

Using these two types of distributions we define in \cite{S21} two {\em different}
{\em quasi-Lie bialgebras} $\cEF$ and $\utau$,
which are classical limits of
quasi-Hopf algebras $E_{\tau,\eta}$~\cite{EF} and
$U_{p,q}(\hat{\mathfrak{sl}}_2)$~\cite{K98}
 respectively. The algebraic and coalgebraic structures of these quasi-Lie bialgebras
were described in terms of $L$-operators and classical $r$-matrices.
In case of the algebra $\cEF$ these objects are
\begin{gather}
\L_\lambda^\pm(u)=
 \begin{pmatrix}
  \frac12h^\pm(u) & \f^\pm(u) \\
  \e^\pm(u)       & -\frac12h^\pm(u)
 \end{pmatrix}, \\
\end{gather}
\begin{gather} r_\lambda^+(u,v)=
 \begin{pmatrix}
   \frac12G(u,v) & 0              & 0              & 0              \\
  0              & -\frac12G(u,v) & G^+_{-\lambda}(u,v)      & 0              \\
  0              & \G+(u,v)       & -\frac12G(u,v) & 0              \\
  0              & 0              & 0              & \frac12G(u,v)
 \end{pmatrix}
\end{gather}
and can be obtained from the corresponding classical universal $r$-matrix using evaluation
map. We denote the same objects in case of the algebra $\utau$ as
$\LL_\lambda^\pm(u)$ and $r$-matrix $\rr_\lambda^+(u-v)$  with distributions
${G}(u,v)$ and ${G}^\pm_{\lambda}(u,v)$ replaced everywhere by the
distributions ${\cal G}(u,v)$  and ${\cal G}^\pm_{\lambda}(u,v)$.

 In Section~\ref{C_sec4} we describe different degenerations of the
classical elliptic current algebras  $\cEF$ and $\utau$  in terms of
degenerations of Green distributions entering the $r$-matrix. The
degenerate Green functions define the $rLL$-relations, the
bialgebra structure and the analytic structure of half-currents.
We do not write out explicitly the bialgebra structure related to
the half-currents of the second classical elliptic algebra: it can
be reconstructed along the lines of the paper~\cite{S21}.

We discuss the inverse problem in Section~\ref{C_sec5}. A
way to present the trigonometric and elliptic solutions of
a Classical Yang-Baxter Equation (CYBE) by averaging of the
rational ones was introduced in~\cite{RF}. Faddeev and Reshetikhin applied the
averaging method to a description of corresponding algebras. Here, we only
represent the elliptic $r$-matrix $\rr^+_\lambda(u-v)$ and
the trigonometric $r$-matrix $\rr^{(c)+}(u-v)$ as an average of
the trigonometric $r$-matrix $\rr^{(b)+}(u-v)$ and the rational $r$-matrix
$\rr^{(a)+}(u-v)$ respectively, for some domains of parameters.

Finally, in the Appendix, we have collected technical and "folklore" definitions, results concerning the
test and distribution algebras on Riemann surfaces. Although some of the results
can be extracted from standard textbooks~\cite{GSh1}, \cite{Vl}, we were not able to
find them in the literature in the form suited for our goals and we have decided to keep
them for the sake of completeness.

\section{Degenerated classical elliptic algebras}
\label{C_sec4}

We will describe a behavior of our algebras while one or both periods of the elliptic curve become infinite.
The corresponding `degenerated' Green distributions, $r$-matrix and $L$-operators give us
a classical rational or a classical trigonometric `limit' of corresponding elliptic current algebras.

\subsection{Degenerations of the quasi-Lie bialgebra $\cEF$}
\label{C_subsec41}

There are two different degenerations denoted {\bf(a)} and {\bf(b)} for $\cEF$.
{\bf(a)} corresponds to the case when both periods are infinite ($\omega\to\infty$,
$\omega'\to\infty$). This is a rational degeneration. In the case {\bf(b)} one of the periods
is infinite ($\omega'\to\infty$) while another ($\omega$) rests finite. This is a case of trigonometric
degeneration. A situation when $\omega\to\infty$ and $\omega'$ is finite,
is equivalent to {\bf(b)} due to the symmetry of integration contour and, therefore,
we do not consider it separately.

\bb{Case (a): $\omega\to\infty$, $\omega'\to\infty$,
($\Im\frac{\omega'}{\omega}>0$)}. In order to turn to the lattice
of periods $\Gamma=\mathbb Z \omega+\mathbb Z \omega'$, with
$\frac{\omega'}{\omega}=\tau$, we need to re-scale the variables
like $u\to\frac{u}{\omega}$. Let us introduce the following
notations for rational Green distributions
\begin{align*}
\la\ph\pm(u,z),s(u)\ra_u=\oint\limits_{\substack{|u|>|z| \\|u|<|z|}}\frac{du}{2\pi i}\frac1{u-z}s(u).
\end{align*}
These distributions are degenerations of elliptic Green
distributions:
\begin{align*}
 \frac1{\omega}G\big(\frac{u}{\omega},\frac{z}{\omega}\big)&\to\ph+(u,z), &
 \frac1{\omega}G_{\frac\lambda{\omega}}^\pm\big(\frac{u}{\omega},\frac{z}{\omega}\big)&\to
 \frac1{\lambda}+\ph\pm(u,z),
\end{align*}
and the $r$-matrix tends to
\begin{align*}
r_\lambda^{(a)+}(u,v)&=\lim_{\omega,\omega'\to\infty}\frac1{\omega}r_{\frac\lambda{\omega}}^+
\Big(\frac{u}{\omega},\frac{v}{\omega}\Big)= \nn \\
 &=\begin{pmatrix}
   \frac12\ph+(u,v) & 0              & 0              & 0              \\
  0              & -\frac12\ph+(u,v) & -\frac1{\lambda}+\ph+(u,v)      & 0              \\
  0              & \frac1{\lambda}+\ph+(u,v)       & -\frac12\ph+(u,v) & 0              \\
  0              & 0              & 0              & \frac12\ph+(u,v)
 \end{pmatrix}. \nn
\end{align*}
Actually the quasi-Lie bialgebras obtained as rational
degenerations of $\cEF$ for different $\lambda$ is related to each
other by very simple twist. Therefore we shall consider only one
value of the parameter $\lambda$, namely we shall consider the
limited value $\lambda\to\infty$. The $r$-matrix and $L$-operators
looks then as follows
\begin{align*}
 \L^{(a)\pm}(u)&=\lim\limits_{\lambda\to\infty}\lim\limits_{\omega,\omega'\to\infty}\dfrac1{\omega}
   \L_{\frac{\lambda}{\omega}}^\pm\Big(\dfrac{u}{\omega}\Big)=
 \begin{pmatrix}
  \frac12h^{(a)\pm}(u) & f^{(a)\pm}(u) \\
  e^{(a)\pm}(u)       & -\frac12h^{(a)\pm}(u)
 \end{pmatrix},
 \end{align*}
\begin{align}
r^{(a)+}(u,v)&=\lim_{\lambda\to\infty}\frac1{\omega}r_\lambda^{(a)+}(u,v)= \nn \\
 &=\begin{pmatrix}
   \frac12\ph+(u,v) & 0              & 0                 & 0              \\
  0              & -\frac12\ph+(u,v) & \ph+(u,v)         & 0              \\
  0              & \ph+(u,v)         & -\frac12\ph+(u,v) & 0              \\
  0              & 0                 & 0                 & \frac12\ph+(u,v)
 \end{pmatrix}. \label{C_ra}
\end{align}

Substituting $u\to\frac{u}{\omega}$, $v\to\frac{v}{\omega}$, $\lambda\to\frac{\lambda}{\omega}$
into commutation relations of the algebra $\cEF$ given by
the formulas (3.38) and (3.39) of the paper \cite{S21},
 multiplying it by $\frac1{\omega^2}$ and passing to
the limits we obtain
\begin{align}
 [\L_1^{(a),\pm}(u),\L_2^{(a)\pm}(v)]&=[\L_1^{(a)\pm}(u)+\L_2^{(a)\pm}(v),r^{(a)+}(u,v)]. \label{C_rLpmLpma} \\
 [\L_1^{(a)+}(u),\L_2^{(a)-}(v)]&=[\L_1^{(a)+}(u)+\L_2^{(a)-}(v),r^{(a)+}(u,v)]
   +c\cdot\frac{\partial}{\partial u}r^{(a)+}(u,v). \label{C_rLpLma}
\end{align}
The half-currents have decompositions
\begin{align*}
   x^{(a)+}(u)&=\sum_{n\ge0}x^{(a)}_n u^{-n-1}, &   x^{(a)-}(u)&=-\sum_{n<0}x^{(a)}_n u^{-n-1},
\end{align*}
where $x^{(a)}_n=(x\otimes z^n,0,0)$ for $n\in\mathbb Z$, $x\in\{h,e,f\}$. This means that this
algebra coincides with a classical limit of the central extension of the Yangien double
$\widehat{DY(\slt)}$~\cite{Kh}.

\bb{Case (b): $\tau\to i\infty$ ($\omega=1$, $\omega'\to\infty$, $\tau=\omega'/\omega$, $\Im\tau>0$).}
In this case the degenerations of elliptic Green distributions
look as follows:
\begin{align*}
 G(u,z)&\to\psi^+(u-z), \\
 \G\pm(u,z)&\to\pi\ctg\pi\lambda+\psi^\pm(u,z),
\end{align*}
where
\begin{align*}
\la\psi^\pm(u,z),s(u)\ra_u=\oint\limits_{\substack{|u|>|z| \\ |u|<|z|}}\frac{du}{2\pi i}\pi\ctg\pi(u-z)s(u).
\end{align*}
By the same reason the degenerated algebras are isomorphic for
different $\lambda$ and we shall consider this bialgebra only in
the limit $\lambda\to-i\infty$. The $r$-matrix, $L$-operators and
$rLL$-relations in this case take the form
\begin{align}
r^{(b)+}(u,v)
 =\begin{pmatrix}
   \frac12\psi^+(u,v) & 0              & 0              & 0              \\
  0              & -\frac12\psi^+(u,v) & -\pi i+\psi^+(u,v)      & 0              \\
  0              & \pi i+\psi^+(u,v)       & -\frac12\psi^+(u,v) & 0              \\
  0              & 0              & 0              &  \frac12\psi^+(u,v)
 \end{pmatrix}. \label{C_rb}
\end{align}
\begin{align*}
 \L^{(b)\pm}(u)&=\lim\limits_{\lambda\to-i\infty}\lim\limits_{\tau\to i\infty}\L_{\lambda}^\pm(u)=
 \begin{pmatrix}
  \frac12h^{(b)\pm}(u) & f^{(b)\pm}(u) \\
  e^{(b)\pm}(u)       & -\frac12h^{(b)\pm}(u)
 \end{pmatrix},
\end{align*}
\begin{align*}
 [\L_1^{(b),\pm}(u),\L_2^{(b)\pm}(v)]&=[\L_1^{(b)\pm}(u)+\L_2^{(b)\pm}(v),r^{(b)+}(u,v)], \\
 [\L_1^{(b)+}(u),\L_2^{(b)-}(v)]&=[\L_1^{(b)+}(u)+\L_2^{(b)-}(v),r^{(b)+}(u,v)]
 +c\cdot\frac{\partial}{\partial u}r^{(b)+}(u,v)
\end{align*}
The half-currents have the following decompositions
\begin{align*}
  h^{(b)+}(u)&=\sum_{n\ge0}h^{(b)}_n \dfrac{\partial^n}{\partial u^n}\ctg\pi u, &
  h^{(b)-}(u)&=-\sum_{n\ge0}h^{(b)}_{-n-1} u^n, \\
  e^{(b)+}(u)&=ie^{(b)}_0+\sum_{n\ge0}e^{(b)}_n \dfrac{\partial^n}{\partial u^n}\ctg\pi u, &
  e^{(b)-}(u)&=ie^{(b)}_0-\sum_{n\ge0}e^{(b)}_{-n-1} u^n, \\
  f^{(b)+}(u)&=-if^{(b)}_0+\sum_{n\ge0}f^{(b)}_n \dfrac{\partial^n}{\partial u^n}\ctg\pi u, &
  f^{(b)-}(u)&=-if^{(b)}_0-\sum_{n\ge0}f^{(b)}_{-n-1} u^n,
\end{align*}
where $x^{(b)}_n=(x\otimes\pi\dfrac{(-1)^n}{n!}z^n,0,0)$,
$x^{(b)}_{-n-1}=(x\otimes\pi\dfrac{(-1)^n}{n!}\dfrac{\partial^n}{\partial z^n}\ctg\pi z,0,0)$
for $n\ge0$, $x\in\slt$.

\subsection{Degeneration of the quasi-Lie bialgebra $\utau$}
\label{C_subsec42}

In the case of algebra $\utau$ there are three cases of degenerations: {\bf(a)}, {\bf(b)} and {\bf(c)}.
The rational degeneration {\bf(a)} and trigonometric degeneration {\bf(b)} are analogous to the
corresponding degenerations of $\cEF$. Additionally there is one more trigonometric case {\bf(c)},
when $\omega\to\infty$ and $\omega'$ is finite. It is not equivalent to the case {\bf(b)} because
the integration contour for $\utau$ is not symmetric in this case.
In the cases {\bf(a)} and {\bf(c)}
the degeneration of elliptic Green distributions  acts on another test function algebra $Z$.
This is an algebra of entire functions $s(u)$ subjected to the inequalities
$|u^n s(u)|<C_n e^{p|\Im u|}$, $n\in\mathbb Z_+$, for some constants $C_n,p>0$
depending on $s(u)$~\cite{GSh1}. The scalar product in $Z$ is $\la s(u),
t(u)\ra_u=\int_{-\infty}^{+\infty}\frac{du}{2\pi i}s(u)t(u)$. The distributions acting on
$K$ can be considered as periodic distributions acting on $Z$.

\bb{Case (a): $\omega\to\infty$, $\omega'\to\infty$, ($\Im\frac{\omega'}{\omega}>0$).}
The degenerating of the elliptic Green distributions in this case reads as
\begin{align*}
 \frac1{\omega}{\cal G}\big(\frac{u-z}{\omega}\big)&\to\frac1{u-z-i0}=\Ph+(u-z), \\
 \frac1{\omega}{\cal G}_{\frac\lambda{\omega}}^\pm\big(\frac{u-z}{\omega}\big)&\to
   \frac{u-z+\lambda}{(u-z\mp i0)\lambda}=\frac1{\lambda}+\Ph\pm(u-z),
\end{align*}
where we introduced the rational Green distributions acting on the test function algebra $Z$ by formula
\begin{align*}
\la\Ph\pm(u-z),s(u)\ra=\int\limits_{\substack{\Im u<\Im z \\ \Im u>\Im z}}\frac{du}{2\pi i}\frac1{u-z}s(u),
\end{align*}
with infinite horizontal integration lines. They can be represented as integrals
\begin{align*}
 \Ph+(u-z)&=2\pi i\int\limits_{-\infty}^0e^{2\pi i k(u-z)}dk=2\pi i\int
 \limits_0^{+\infty}e^{-2\pi i k(u-z)}dk, \\
 \Ph-(u-z)&=-2\pi i\int\limits_0^{+\infty}e^{2\pi i k(u-z)}dk=-2\pi i\int
 \limits_{-\infty}^0e^{-2\pi i k(u-z)}dk, 
\end{align*}
These formulae are degenerations of the expansions~\eqref{C_GexpanJ}.

As above all the algebras that are obtained from the limit $\omega,\omega'\to\infty$
are isomorphic for the different values of the parameter $\lambda$ and it is sufficient
to describe the limit case $\lambda=\infty$. The $rLL$-relations are the
same as~\eqref{C_rLpmLpma},~\eqref{C_rLpLma}
\begin{align*}
 [\LL_{1}^{(a),\pm}(u),\LL_{2}^{(a)\pm}(v)]&=[\LL_{1}^{(a)\pm}(u)+\LL_{2}^{(a)\pm}(v),\rr^{(a)+}(u-v)], \\
 [\LL_{1}^{(a)+}(u),\LL_{2}^{(a)-}(v)]&=[\LL_{1}^{(a)+}(u)+\LL_{2}^{(a)-}(v),\rr^{(a)+}(u-v)]
 +c\cdot\frac{\partial}{\partial u}\rr^{(a)+}(u-v)
\end{align*}
with similar $r$-matrix
\begin{align}
\rr^{(a)+}(u-v)&=\lim_{\lambda\to\infty}\lim_{\omega,\omega'\to\infty}\frac1{\omega}
\rr_{\frac\lambda{\omega}}^+\Big(\frac{u-v}{\omega}\Big)= \notag \\
 &=\begin{pmatrix}
   \frac12\Ph+(u-v) & 0              & 0              & 0              \\
  0              & -\frac12\Ph+(u-v) & \Ph+(u-v)      & 0              \\
  0              & \Ph+(u-v)       & -\frac12\Ph+(u-v) & 0              \\
  0              & 0              & 0              & \frac12\Ph+(u-v)
 \end{pmatrix}, \label{C_raJinf}
\end{align}
but the entries of the $L$-matrix
\begin{align*}
 \LL^{(a)\pm}(u)&=\lim_{\lambda\to\infty}\lim\limits_{\omega,\omega'\to\infty}
 \dfrac1{\omega}\LL_{\frac{\lambda}{\omega}}^\pm\Big(\dfrac{u}{\omega}\Big)=
 \begin{pmatrix}
  \frac12h^{(a)\pm}(u) & f^{(a)\pm}(u) \\
  e^{(a)\pm}(u)       & -\frac12h^{(a)\pm}(u)
 \end{pmatrix}
\end{align*}
are decomposed to the integrals instead of the series:
\begin{align}
  x^{(a)+}(u)&=\int\limits_0^{+\infty}x^{(a)}_k e^{-2\pi i k u}dk, &
  x^{(a)-}(u)&=-\int\limits_{-\infty}^0x^{(a)}_k e^{-2\pi i k u}dk,
\end{align}
where $x^{(a)}_k=(x\otimes 2\pi i e^{2\pi i k z},0,0)$, $x\in\{h,e,f\}$.
These half-currents form a quasi-classical degeneration of the algebra
${\cal A}_\hbar(\widehat{\slt})$~\cite{KLP99}. The difference between
algebras $\widehat{DY(\slt)}$ and ${\cal A}_\hbar(\widehat{\slt})$
is considered in details on the quantum level in this paper.

\bb{Case (b). $\tau\to i\infty$, ($\omega=1$, $\omega'\to\infty$, $\tau=\omega'/\omega$, $\Im\tau>0$).}
Taking the limit $\tau\to i\infty$ in the formula~\eqref{C_GexpanJ} we obtain
\begin{align*}
 {\cal G}(u-z)&\to\pi i+2\pi i\sum_{n>0}e^{-2\pi in(u-z)}=\tilde\psi^+(u-z), \\
 \Gg\pm(u-z)&\to\pi\ctg\pi\lambda-\pi i
    \pm2\pi i\sum_{\substack{n\ge0 \\ n<0}}e^{-2\pi in(u-z)}=\pi\ctg\pi\lambda+\tilde\psi^\pm(u-z).
\end{align*}
where
\begin{align*}
\la\tilde\psi^\pm(u-z),s(u)\ra=\int\limits_{\substack{\Im u<\Im z \\ \Im u>\Im z}}
\frac{du}{2\pi i}\pi\ctg\pi(u-z)s(u), 
\end{align*}
where $s\in K$ and the integration is taken over a horizontal line segments with unit length.
In these notations the $r$-matrix (in the limit $\lambda\to-i\infty$) can be written as
\begin{align}
&\rr^{(b)+}(u-v)=\lim_{\lambda\to-i\infty}\lim_{\tau\to i\infty}\rr_\lambda^+(u-v)= \notag \\ 
 &=\begin{pmatrix}
   \frac12\tilde\psi^+(u-v) & 0              & 0              & 0              \\
  0              & -\frac12\tilde\psi^+(u-v) & -\pi i+\tilde\psi^+(u-v)      & 0              \\
  0              & \pi i+\tilde\psi^+(u-v)   & -\frac12\tilde\psi^+(u-v) & 0              \\
  0              & 0              & 0              & \frac12\tilde\psi^+(u-v)
 \end{pmatrix}. \label{C_rbJ}
\end{align}
Setting
\begin{align*}
 \LL^{(b)\pm}(u)&=\lim_{\lambda\to-i\infty}\lim\limits_{\tau\to i\infty}\LL_{\lambda}^\pm(u)=
 \begin{pmatrix}
  \frac12h^{(b)\pm}(u) & f^{(b)\pm}(u) \\
  e^{(b)\pm}(u)       & -\frac12h^{(b)\pm}(u)
 \end{pmatrix}.
\end{align*}
one derives
\begin{align*}
 [\LL_{1}^{(b),\pm}(u),\LL_{2}^{(b)\pm}(v)]=&[\LL_{1}^{(b)\pm}(u)+\LL_{2}^{(b)\pm}(v),\rr^{(b)+}(u-v)], \\
 [\LL_{1}^{(b)+}(u),\LL_{2}^{(b)-}(v)]=&[\LL_{1}^{(b)+}(u)+\LL_{2}^{(b)-}(v),\rr^{(b)+}(u-v)]
 +c\cdot\frac{\partial}{\partial u}\rr^{(b)+}(u-v),
\end{align*}
which define some Lie algebra together with half-current decompositions
\begin{align*}
 h^{(b)+}(u)&=-\frac12h^{(a)}_0+\sum_{n\ge0}h^{(a)}_n e^{-2\pi inu}, &
 h^{(b)-}(u)&=-\frac12h^{(a)}_0-\sum_{n<0}h^{(a)}_n e^{-2\pi inu}, \\
 e^{(b)+}(u)&=\sum_{n\ge0}e^{(a)}_n e^{-2\pi inu}, &
 e^{(b)-}(u)&=-\sum_{n<0}e^{(a)}_n e^{-2\pi inu},\\
 f^{(b)+}(u)&=\sum_{n>0}f^{(a)}_n e^{-2\pi inu}, &
 f^{(b)-}(u)&=-\sum_{n\le0}f^{(a)}_n e^{-2\pi inu}.
\end{align*}
where $x^{(b)}_k=(x\otimes 2\pi i e^{2\pi inz},0,0)$, $x\in\{h,e,f\}$. This is exactly
an affine Lie algebra $\widehat{\slt}$ with a bialgebra structure inherited from the quantum
affine algebra $U_q(\widehat{\slt})$.

\bb{Case (c): $\omega=\to\infty$, $\omega'=\tau\omega=\frac{i}\eta=const$, ($\Re\eta>0$).}
Substituting $u\to\frac{u}{\omega}$, $z\to\frac{z}{\omega}$ to the expansion for ${\cal G}(u-z)$
we yield the following degeneration
\begin{align}
 \frac1\omega{\cal G}\big(\frac{u-z}\omega\big)&\to
  \Psi(u-z)\stackrel{\tiny{\rm def}}{=}2\pi i\vpint\limits_{-\infty}^{+\infty}
   \frac{e^{-2\pi i k(u-z)}dk}{1-e^{-\frac{2\pi k}\eta}}. \label{C_GdaJ}
\end{align}
Here $\vpints$ means an integral in sense of principal value. The integral in the
formula~\eqref{C_GdaJ} converges in the domain $\Re\eta^{-1}<\Im(u-z)<0$ and is equal
to $\pi\eta\cth\pi\eta(u-z)$ in this domain. It means that the distribution $\Psi(u-z)$
defined by formula~\eqref{C_GdaJ} acts on $Z$ as follows
\begin{align}
\la\Psi(u-z),s(u)\ra=\int\limits_{-\Re\eta^{-1}<\Im (u-z)<0}\frac{du}{2\pi i}
\pi\eta\cth\pi\eta(u-z)s(u).\label{C_Psitg}
\end{align}

The degeneration of Green distributions parametrized by $\lambda$ can be performed in different ways.
We can consider a more general substitution $\lambda\to\mu+\frac\lambda\omega$ instead of
$\lambda\to\frac\lambda\omega$ used above. Substituting $u\to\frac{u}{\omega}$,
$z\to\frac{z}{\omega}$, $\lambda\to\mu+\frac{\lambda}{\omega}$ to the formula~\eqref{C_GexpanJ}
and taking the limits $\omega\to\infty$ and $\lambda\to\infty$  we obtain
\begin{align}
 \frac1{\omega}{\cal G}_{\mu+\frac\lambda{\omega}}^+\big(\frac{u-z}{\omega}\big)&\to
 2\pi i\int\limits_{-\infty}^{\infty}\frac{e^{-2\pi i k(u-z)}dk}{1-e^{-\frac{2\pi k}\eta-2\pi i\mu}},
 \label{C_GpdaJmu}\\
 \frac1{\omega}{\cal G}_{\mu+\frac\lambda{\omega}}^-\big(\frac{u-z}{\omega}\big)&\to
 2\pi i\int\limits_{-\infty}^{\infty}\frac{e^{-2\pi i k(u-z)}dk}{e^{\frac{2\pi k}\eta+2\pi i\mu}-1}.
 \label{C_GmdaJmu}
\end{align}
Left hand sides of~\eqref{C_GpdaJmu} and \eqref{C_GmdaJmu} as well as right hand sides are invariant under
$\mu\to\mu+1$, but the right hand side is not holomorphic with respect to $\mu$ because of
integrand poles. The complex plane split up to the following  analyticity zones
$\frac{\Im\eta\Im\mu}{\Re\eta}+n<\Re\mu<\frac{\Im\eta\Im\mu}{\Re\eta}+n+1$, $n\in\mathbb Z$,
and due to periodicity with respect to $\mu$ one can consider only one of these zones.

Integrals in the formulae~\eqref{C_GpdaJmu} and \eqref{C_GmdaJmu} converge in the domain
$\Re\eta^{-1}<\Im(u-z)<0$ and $0<\Im (u-z)<\Re\eta^{-1}$ respectively and they can be
calculated like the integral in~\eqref{C_GdaJ} for the chosen zone. Denote by $\Psi_\mu^+(u-z)$
and $\Psi_\mu^-(u-z)$ the analytic continuation with respect to $\mu$ of the right hand sides
of~\eqref{C_GpdaJmu} and \eqref{C_GmdaJmu} respectively from the zone
\begin{align}
 \frac{\Im\eta\Im\mu}{\Re\eta}<\Re\mu<\frac{\Im\eta\Im\mu}{\Re\eta}+1. \label{C_zone_mu}
\end{align}
Thus, this degeneration of Green distributions can be rewritten as
\begin{align}
 \lim_{\omega\to\infty}{\cal G}_{\mu+\frac\lambda{\omega}}^\pm\big(\frac{u-z}{\omega}\big)
    &=\Psi_\mu^\pm(u-z), \notag \\
\la\Psi_\mu^\pm(u-z),s(u)\ra&=\int\limits_{\substack{-\Re\eta^{-1}<\Im (u-z)<0 \\
0<\Im(u-z)<\Re\eta^{-1}}}\frac{du}{2\pi i}2\pi\eta\frac{e^{-2\pi\eta\mu(u-z)}}{1-e^{-2\pi\eta(u-z)}}s(u)\ ,
 \label{C_Psipsh}
\end{align}
where $s\in Z$ and the integrals are taken over the horizontal lines.

For the values $\Re\mu=\frac{\Im\eta\Im\mu}{\Re\eta}+n$, $n\in\mathbb Z$, integrands
in~\eqref{C_GpdaJmu}, \eqref{C_GmdaJmu} have a pole on the real axis and the distributions
$\Psi_\mu^+(u-z)$ and $\Psi_\mu^-(u-z)$ regularize these integrals as analytical continuation
(see~\cite{GSh1}). The $r$-matrix obtained by another regularization does not satisfy the CYBE.

The degeneration of $r$-matrix is~\footnote{Let us remark that the degeneration of
the entry $r^+(u-v)_{12,21}={\cal G}^+_{-\lambda}(u-v)$ in the zone~\eqref{C_zone_mu} is
$\Psi^+_{1-\mu}(u-v)$, but is not $\Psi^+_{-\mu}(u-v)$ as one could expect, because of
periodicity with respect to $\mu$ and the fact that $\mu$ belongs to the zone~\eqref{C_zone_mu}
if and only if $1-\mu$ belongs to the zone~\eqref{C_zone_mu}. One can also use the relations
${\cal G}^+_{-\mu-\frac\lambda\omega}(\frac{u-v}\omega)=-{\cal G}^-_{\mu+\frac\lambda\omega}
(\frac{v-u}\omega)\longrightarrow-\Psi^-_{\mu}(v-u)=\Psi^+_{1-\mu}(u-v)$.}
\begin{align}
&\rr^{(c)+}(u-v)=\lim_{\lambda\to\infty}\lim_{\omega\to\infty}\frac1{\omega}
\rr_{\mu+\frac\lambda{\omega}}^+\Big(\frac{u-v}{\omega}\Big)= \notag \\ 
 &=\begin{pmatrix}
   \frac12\Psi(u-v) & 0              & 0              & 0              \\
  0              & -\frac12\Psi(u-v) & -\Psi_\mu^-(v-u)      & 0              \\
  0              & \Psi_\mu^+(u-v)   & -\frac12\Psi(u-v) & 0              \\
  0              & 0              & 0              & \frac12\Psi(u-v)
 \end{pmatrix}. \label{C_rcJ}
\end{align}
The $L$-operators
\begin{align*}
 \LL^{(c)\pm}(u)&=\lim_{\lambda\to\infty}\lim\limits_{\omega\to\infty}\dfrac1
 {\omega}\LL_{\mu+\frac{\lambda}{\omega}}^\pm\Big(\dfrac{u}{\omega}\Big)=
 \begin{pmatrix}
  \frac12h^{(c)\pm}(u) & f^{(c)\pm}(u) \\
  e^{(c)\pm}(u)       & -\frac12h^{(c)\pm}(u)
 \end{pmatrix}.
\end{align*}
with this $r$-matrix satisfy the dynamical $rLL$-relations
\begin{align}
 [\LL_{1}^{(c)\pm}(u),\LL_{2}^{(c)\pm}(v)]=[\LL_{1}^{(c)\pm}(u)&+\LL_{2}^{(c)\pm}(v),\rr^{(c)+}(u-v)]
 -c\cdot i\eta^2\frac{\partial}{\partial\eta}\rr^{(c)+}(u-v)\ , \nn \\
 [\LL_{1}^{(c)+}(u),\LL_{2}^{(c)-}(v)]=[\LL_{1}^{(c)+}(u)&+\LL_{2}^{(c)-}(v),\rr^{(c)+}(u-v)]+
  \label{C_rLLcJ} \\
 &+c\cdot\bigg(\frac{\partial}{\partial u}-i\eta^2\frac{\partial}{\partial\eta}\bigg)\rr^{(c)+}(u-v)\ . \nn
\end{align}
Decompositions of the half-current in this degeneration are
\begin{align*}
 h^{(c)+}(u)&=\vpint\limits_{-\infty}^{+\infty}h^{(c)}_k
 \frac{e^{-2\pi i k u}dk}{1-e^{-\frac{2\pi k}\eta}}\ , &
 h^{(c)-}(u)&=\vpint\limits_{-\infty}^{+\infty}h^{(c)}_k
 \frac{e^{-2\pi i k u}dk}{e^{\frac{2\pi k}\eta}-1}\ , \\
 e^{(c)+}(u)&=\vpint\limits_{-\infty}^{+\infty}e^{(c)}_k
 \frac{e^{-2\pi i k u}dk}{1-e^{-\frac{2\pi k}\eta-2\pi i\mu}}\ , &
 e^{(c)-}(u)&=\vpint\limits_{-\infty}^{+\infty}e^{(c)}_k
 \frac{e^{-2\pi i k u}dk}{e^{\frac{2\pi k}\eta+2\pi i\mu}-1}\ , \\
 f^{(c)+}(u)&=\vpint\limits_{-\infty}^{+\infty}f^{(c)}_k
 \frac{e^{-2\pi i k u}dk}{1-e^{-\frac{2\pi k}\eta+2\pi i\mu}}\ , &
 f^{(c)-}(u)&=\vpint\limits_{-\infty}^{+\infty}f^{(c)}_k
 \frac{e^{-2\pi i k u}dk}{e^{\frac{2\pi k}\eta-2\pi i\mu}-1}\ ,
\end{align*}
where $x^{(a)}_k=x\otimes 2\pi i e^{2\pi i k z}$, $x\in\{h,e,f\}$.
We do not make explicit the dependence of the parameter $\mu$ because, contrary
to $\lambda$, it is not a dynamical parameter. We also omit dependence on the parameter $\eta$
which provides the dynamics over $c$ just as we omitted its analogue $\tau$ in the elliptic case.
The case $\mu=\frac12$, $\Im\eta=0$ coincides with the quasi-classical limit of the
quantum current algebra ${\cal A}_{\hbar,\eta}(\widehat{\slt})$ \cite{KLP98,CKP}.
This algebra was investigated in~\cite{KLPST} in detail. Other degenerations
{\bf(c)} seem to be unknown, though the matrices $\rr^{(c)+}(u)$ fit the Belavin-Drinfeld
classification~\cite{BD}.

\section{Averaging of $r$-matrices}
\label{C_sec5}

Now we will use the averaging method of Faddeev-Reshetikhin~\cite{RF} and will write down trigonometric and
elliptic $r$-matrices starting with a rational solution of the Classical Yang-Baxter Equation. We show that the
$r$-matrices satisfying to a Dynamical Classical Yang-Baxter Equation can be also obtained by this method.

\setcounter{bbcount}{0}

\bb{CYBE.} A meromorphic $\mathfrak{a}\otimes\mathfrak{a}$-valued function $X(u)$
(in our case $\mathfrak a=\slt$) is called solution of the CYBE if it satisfies the equation
\begin{multline}
  [X_{12}(u_1-u_2),X_{13}(u_1-u_3)]+[X_{12}(u_1-u_2),X_{23}(u_2-u_3)]+ \\
  +[X_{13}(u_1-u_3),X_{23}(u_2-u_3)]=0. \label{C_CYBE}
\end{multline}
The $r$-matrices $\rr^{(a)+}(u-v)$, $\rr^{(b)+}(u-v)$, $\rr^{(c)+}(u-v)$
defined by formulae~\eqref{C_raJinf}, \eqref{C_rbJ} and \eqref{C_rcJ} satisfy CYBE~\eqref{C_CYBE},
what follows from the fact that they are regularization of the corresponding rational
and trigonometric solutions of CYBE in the domain $\Im u<\Im v$. Indeed, in order to
check the equation~\eqref{C_CYBE} for these $r$-matrices it is sufficient to check it
in the domain $\Im u_1<\Im u_2<\Im u_3$. The regularization of the same first
two solutions of CYBE ({\bf(a)} and {\bf(b)} cases) but in domain $|u|>|v|$ are
$r$-matrices $\rr^{(a)+}(u-v)$ and $\rr^{(b)+}(u-v)$ (formulae~\eqref{C_rb} and \eqref{C_rb})
respectively. Hence they also satisfy~\eqref{C_CYBE}, but where $X_{ij}(u_i-u_j)$ replaced
by $X_{ij}(u_i,u_j)$. The elliptic $r$-matrix $\rr^+_\lambda(u-v)$ satisfies
{\itshape Dynamical} CYBE, but it can be also obtained by the averaging method.

\bb{Basis of averaging.} As it was shown in~\cite{BD} each solution of CYBE $X(u)$ is a
rational, trigonometric or elliptic (doubly periodic) function of $u$, the poles of $X(u)$
form a lattice $\mathfrak R\subset\mathbb C$ and there is a group homomorphism
$A\colon\mathfrak R\to\Aut\mathfrak{g}$ such that for each $\gamma\in\mathfrak R$
one has the relation $X(u+\gamma)=(A_\gamma\otimes\id)X(u)$. Having a rational solution
$X(u)$, for which $\mathfrak R=\{0\}$, and choosing an appropriate automorphisms
$A=A_{\gamma_0}$ we can construct the trigonometric solution with $\mathfrak R=\gamma_0\mathbb Z$
in the form
\begin{align}
 \sum_{n\in\mathbb Z}(A^n\otimes\id)X(u-n\gamma_0). \label{C_aver}
\end{align}
Applying the same procedure for a trigonometric solution with $\mathfrak R=\gamma_1\mathbb Z$,
 where $\gamma_1/\gamma_0\not\in\mathbb R$, we obtain an elliptic (doubly periodic) solution of
 CYBE with $\mathfrak R=\gamma_1\mathbb Z+\gamma_0\mathbb Z$. The convergence of series in the
 formula~\eqref{C_aver} should be understood in the principal value sense (below we will detail it).

\bb{Quasi-doubly periodic case.} The entries of elliptic $r$-matrix $\rr^+_\lambda(u)$ --
elliptic Green distributions -- are regularizations of {\itshape quasi-}doubly periodic functions.
This is a direct consequence of those fact that this $r$-matrix satisfies {\itshape Dynamical} CYBE and
therefore does not belong to the Belavin-Drinfeld classification~\cite{BD}. Nevertheless, these
functions have the elliptic type of the pole lattice $\mathfrak R=\Gamma=\mathbb Z+\mathbb Z\tau$
and one can expect that the $r$-matrix $\rr^+_\lambda(u)$ can be represent by formula~\eqref{C_aver}
with $\gamma_0=\tau$ and $X(u)$ replaced by some trigonometric $r$-matrix with $\mathfrak R=\mathbb Z$.
To pass on from the averaging of meromorphic functions to the averaging of distributions we should
choose the proper regularization. Actually the regularization of this trigonometric $r$-matrix in
this formula can depend on $n$ (see~\eqref{C_rerb}). The $r$-matrices $r^{(a)+}(u,v)$, $r^{(b)+}(u,v)$,
$r^+_\lambda(u,v)$ can be also regarded as a regularization of the same meromorphic $\slt\otimes\slt$-valued
function, but they depend on $u$, $v$ in more general way than on the difference $(u-v)$.
This makes their averaging
 to be more complicated. By this reason we shall not consider these matrices
 in this section.

\bb{Dynamical elliptic $r$-matrix as an averaging of $\rr^{(b)\pm}(u)$.} To represent
the $r$-matrix $\rr^+_\lambda(u)$ as an averaging of trigonometric matrix~\eqref{C_rbJ}
we need the following formulae
\begin{align}
 \frac{\theta'(u)}{\theta(u)}&=v.p.\sum_{n\in\mathbb Z}\pi\ctg\pi(u-n\tau)\ , \label{C_avctg} \\
 \frac{\theta(u+\lambda)}{\theta(u)\theta(\lambda)}&=\frac{\theta'(\lambda)}{\theta(\lambda)}+
 v.p.\sum_{n\in\mathbb Z}\big(\pi e^{-2n\pi i\lambda}\ctg\pi(u-n\tau)
   +(1-\delta_{n0})\pi e^{-2n\pi i\lambda}\ctg\pi n\tau\big)\ , \label{C_avctg_}
\end{align}
where  $|\Im\lambda|<\Im\tau$, $\lambda\notin\mathbb Z$ and the symbol $v.p.$ means convergence
of the series in the principal value sense:
\begin{align*}
 v.p.\sum_{n\in\mathbb Z}x_n=\lim_{N\to\infty}\sum_{n=-N}^N x_n\ .
\end{align*}
The Fourier expansion of the function $\frac{\theta'(\lambda)}{\theta(\lambda)}$ has
the form~\eqref{C_GexpanJ} (with $(u-z)$ replaced by $\lambda$) in the domain $-\Im\tau<\Im\lambda<0$.
Substituting this expansion to the right hand side of~\eqref{C_avctg_} one yields
\begin{align}
 \frac{\theta(u+\lambda)}{\theta(u)\theta(\lambda)}=
  v.p.\sum_{n\in\mathbb Z}\pi e^{-2n\pi i\lambda}\big(\ctg\pi(u-n\tau)+i\big)\ , \label{C_avctg_p}\\
 \frac{\theta(u-\lambda)}{\theta(u)\theta(-\lambda)}=
  v.p.\sum_{n\in\mathbb Z}\pi e^{2n\pi i\lambda}\big(\ctg\pi(u-n\tau)-i\big)\ .  \label{C_avctg_m}
\end{align}
The formula~\eqref{C_avctg_m} is obtained from~\eqref{C_avctg_p} by replacing $u\to-u$, $n\to-n$,
hence both formulae are valid in the domain $-\Im\tau<\Im\lambda<0$.
Let us choose an automorphism $A=A_\tau$ as follows
\begin{align*}
 &A\colon h \mapsto h, & &A\colon e \mapsto e^{2\pi i\lambda} e,  & &A\colon f \mapsto e^{-2\pi i\lambda} f,
\end{align*}
and define $\vartheta_n=+$ for $n\ge0$ and $\vartheta_n=-$ for $n<0$.
Then the formulae~\eqref{C_avctg}, \eqref{C_avctg_p}, \eqref{C_avctg_m} imply
\begin{align}
 \rr^+_\lambda(u)=v.p.\sum_{n\in\mathbb Z}(A^n\otimes\id)\rr^{(b),\vartheta_n}(u-\tau n)\ ,  \label{C_rerb}
\end{align}
where $-\Im\tau<\Im\lambda<0$ and $\rr^{(b),-}(u)$ defined by formula~\eqref{C_rbJ} with
$\tilde\psi^+(u)$ replaced by $\tilde\psi^-(u)$. Let us notice that $r$-matrix $\rr^+_\lambda(u)$
and $\rr^{(b)\pm}(u)$ act as distributions on the same space $K$ and hence belong to the same space.
Thus we do not have any problem with interpretation of the averaging formula in sense of distributions.

\bb{The matrix $\rr^{(c)+}(u)$ as an averaging of $\rr^{(a)\pm}(u)$.} We restrict our
attention to the case $\frac{\Im\eta\Im\mu}{\Re\eta}\le\Re\mu<\frac{\Im\eta\Im\mu}{\Re\eta}+1$.
In this case Green distributions $\Psi^+_\mu(u)$, $\Psi^-_\mu(-u)$, $\Psi(u)$ entering into the
$r$-matrix $\rr^{(c)+}(u)$ are defined by~\eqref{C_Psitg}, \eqref{C_Psipsh}. One has the formula
\begin{align*}
 2\pi\eta\frac{e^{2\pi\eta\mu u}}{e^{2\pi\eta u}-1}
  =v.p.\sum_{n\in\mathbb Z}\frac{e^{2\pi i\mu n}}{u-i\eta^{-1}n}. 
\end{align*}
Replacing $u\to-u$, $n\to-n$ in both sides one yields
\begin{align*}
 2\pi\eta\frac{e^{-2\pi\eta\mu u}}{1-e^{-2\pi\eta u}}
  =v.p.\sum_{n\in\mathbb Z}\frac{e^{-2\pi i\mu n}}{u-i\eta^{-1}n}. 
\end{align*}
Let us choose the automorphism $A=A_{i\eta^{-1}}$ in the form
\begin{align*}
 &A\colon H\mapsto H, & &A\colon E\mapsto e^{2\pi i\mu}E, & &A\colon F\mapsto e^{-2\pi i\mu}F\ .
\end{align*}
Then the formulae imply the averaging $r$-matrix
\begin{align*}
 \rr^{(c)+}(u)=\sum_{n\in\mathbb Z}(A^n\otimes\id)\rr^{(a),\vartheta_n}(u-i\eta^{-1}n)\ ,
\end{align*}
where $\frac{\Im\eta\Im\mu}{\Re\eta}\le\Re\mu<\frac{\Im\eta\Im\mu}{\Re\eta}+1$ and $\rr^{(a),-}(u)$
is defined by formula~\eqref{C_raJinf} with $\Phi^+(u)$ substituted by $\Phi^-(u)$. These obtained
averaged $r$-matrices $\rr^{(c)+}(u)$ and $\rr^{(a)\pm}(u)$ act also on the same space -- on the
algebra $Z$ from 4.2.

\section*{Acknowledgements} This paper is a part of PhD thesis
of A.S. which he is prepared in co-direction of S.P. and V.R. in
the Bogoliubov  Laboratory of Theoretical Physics, JINR, Dubna and in LAREMA,
D\'epartement de Math\'ematics, Universit\'e d'Angers. He is
grateful to the CNRS-Russia exchange program on mathematical physics  and  personally to J.-M. Maillet
for financial and general support of this thesis project. V.R. are
thankful to B.Enriquez for discussions. He had used during the
project a partial financial support by ANR GIMP, Grant for support of scientific schools
NSh-8065.2006.2
and a support of INFN-RFBR "Einstein" grant (Italy-Russia). He acknowledges a warm hospitality
of Erwin Schr\"{o}dinger Institute for Mathematical Physics and the Program
"Poisson Sigma Models, Lie Algebroids, deformations and higher analogues" where this paper was finished.
S.P. was supported in part by RFBR grant 06-02-17383.

\section*{Appendix}

\setcounter{section}{0} \setcounter{subsection}{0}
\renewcommand{\thesection}{\Alph{section}}

\section{Test function algebras $\lfK_0$ and $K=K(\Cyl)$}\label{C_apA}

\setcounter{bbcount}{0}

\bb{Test function algebra $\lfK_0$.} Let $\lfK_0$ be a set of complex-valued meromorphic
functions defined in some vicinity of origin which have an only pole in the origin. If $s_1(u)$
and $s_2(u)$ are two such functions with domains $U_1$ and $U_2$ then their sum $s_1(u)+s_2(u)$
and their product $s_1(u)\times s_2(u)$ are also functions of this type which are defined
in the intersection $U_1\cap U_2$. Moreover if $s(u)$ is a function from $\lfK_0$
which is not identically zero then there exists a neighborhood $U$ of the origin such that
the domain $U\bs0$ does not contain zeros of function $s(u)$ and, therefore,
the function $\dfrac1{s(u)}$ is a function from $\lfK_0$ with the domain $U$. This
means that the set $\lfK_0$ can be endowed with a structure of a function field. We shall consider
$\lfK_0$ as an associative unital algebra over $\mathbb C$ equipped with the invariant scalar product
\begin{align}
 \la s_1(u),s_2(u)\ra=\oint\limits_{C_0}\frac{du}{2\pi i}s_1(u)s_2(u), \label{C_lfK_sp}
\end{align}
where $C_0$ is a contour encircling zero and belonging in the intersection of domains
of functions $s_1(u)$, $s_2(u)$, such that the scalar product is a residue in zero. We
consider the algebra $\lfK_0$ as an algebra of test functions. A convergence in $\lfK_0$
is defined as follows: a sequence of functions $\{s_n(u)\}$ converges to
zero if there exists a number $N$ such that all the function $z^Ns_n(u)$ are regular
in origin and all the coefficients in their Laurent expansion tend to zero.
One can consider (instead of the algebra $\lfK_0$ defined in this way one) the completion
$\overline{\lfK}_0=\mathbb C[u^{-1}][[u]]$. Linear continuous functionals on $\lfK_0$
 are called distributions and form the space $\lfK'_0$ (which coincide with $\overline{\lfK}'_0$).
 The scalar product~\eqref{C_lfK_sp} being continues defines a continuous injection
 $\lfK_0\to\lfK'_0$. We use the notation $\la a(u),s(u)\ra$ for the action of a distribution $a(u)$
 on a test function $s(u)$ and also the notation $\la a(u)\ra_u=\la a(u),1\ra$, where $1$ is a
 function which identically equals to the unit.

One can define a 'rescaling' of a test function $s(u)$ as a function $s\big(\frac u\alpha\big)$,
where $\alpha\in\mathbb C$, and therefore a 'rescaling' of distributions by the formula
$\la a(\frac u\alpha\big),s(u)\ra=\la a(u),s(\alpha u)\ra$. On the contrary, we are unable to define a
'shift' of test functions by a standard rule, because the operator $s(u)\mapsto s(u+z)$
is not a continuous one~\footnote{Consider, for example, the sum
$s_N(u)=\sum_{n=0}^N(\frac{u}{\alpha})^n$. For each $z$ there exist $\alpha$ such that the
sum $s_N(u+z)$ diverges, when $N\to\infty$.}. Nevertheless we use distributions 'shifted' in some sense.
Namely, we say that a two-variable distribution $a(u,z)$ (a linear continuous functional
$a\colon\lfK_0\otimes\lfK_0\to\mathbb C$) is 'shifted' if it possesses the properties:
(i) for any $s\in\lfK_0$ the functions $s_1(z)=\la a(u,z),s(u)\ra_u$ and $s_2(u)=\la a(u,z),s(z)\ra_z$
belong to $\lfK_0$; (ii) $\frac{\partial}{\partial u}a(u,z)=-\frac{\partial}{\partial z}a(u,z)$.
Here the subscripts $u$ and $z$ mean the corresponding partial action, for instance,
$\la a(u,z),s(u,z)\ra_u$ is a distribution acting on $\lfK_0$ by the formula
\begin{align*}
 \La\la a(u,z),s(u,z)\ra_u,t(z)\Ra=\la a(u,z),s(u,z)t(z)\ra.
\end{align*}
The condition (ii) means the equality $\la a(u,z),s'(u)t(z)\ra=-\la a(u,z),s(u)t'(z)\ra$.
The condition (i) implies that for any $s\in\lfK_0\otimes\lfK_0$ the expression
\begin{align}
 \la a(u,z),s(u,z)\ra_u=\sum_i\la a(u,z),p_i(u)\ra_u q_i(z), \label{C_auz_suz}
\end{align}
where $s(u,z)=\sum_i p_i(u)q_i(z)$, belongs to $\lfK_0$ (as a function of $z$).

\bb{Semidirect product.} Now we are able to define a \emph{semidirect product of
two 'shifted' distributions} $a(u,z)$ and $b(v,z)$
as a linear continuous functional $a(u,z)b(v,z)$ acting on $s\in\lfK_0\otimes\lfK_0\otimes\lfK_0$
by the rule
\begin{align*}
 \la a(u,z)b(v,z),s(u,v,z)\ra=\La a(u,z),\la b(v,z),s(u,v,z)\ra_v\Ra_{u,z}.
\end{align*}

The 'shifted' distribution $a(u,z)$ acting on $\lfK_0\otimes\lfK_0$ can be defined by one
of its partial actions on the function of one variable. For instance, if the partial action
of the type $\la a(u,v),s(u)\ra_u$ is defined for any test function $s(u)$ then one can
calculate the left hand side of~\eqref{C_auz_suz} and, then, obtain the total action of $a(u,z)$
on the test function $s(u,z)$. This means that a 'shifted' distribution (more generally, a
distribution satisfying condition (i)) define a continuous operator on $\lfK_0$.

The main example of a 'shifted' distribution is a delta-function $\delta(u,z)$ defined
by one of the formulae
\begin{align*}
 &\la\delta(u,z),s(u)\ra_u=s(z), & &\la\delta(u,z),s(z)\ra_z=s(u), &
 &\la\delta(u,z),s(u,z)\ra_{u,z}=\la s(z,z)\ra_z.
\end{align*}
It is symmetric: $\delta(u,z)=\delta(z,u)$ and one can show that any 'shifted'
distribution $a(u,v)$ satisfies
\begin{align}
 a(u,v)\delta(u,z)=a(z,v)\delta(u,z). \label{C_a_delta}
\end{align}
The distribution $\delta(u,z)$ defines an identical operator on $\lfK_0$.

\bb{Test function algebra $K$.} We define the algebra $K$ as an algebra of entire functions
on $\mathbb C$ subjected to the periodicity condition $s(u+1)=s(u)$ and
to the condition $|s(u)|\le C e^{p|\Im u|}$, where the constants $C,p>0$ depend on the function $s(u)$.
The periodicity of these functions means that they can be considered as functions on the
cylinder $\Cyl=\mathbb C/{\mathbb Z}$: $K=K(\Cyl)$. We equip the algebra $K$ with an
invariant scalar product
\begin{align}
 \la s(u),t(u)\ra=\int\limits_{-\frac12+\alpha}^{\frac12+\alpha}\frac{du}{2\pi i}s(u)t(u),
   \qquad s,t\in K, \label{C_spJ}
\end{align}
which does not depend on a choice of the complex number $\alpha\in\mathbb C$. A
convergence in $K$ is given as follows: a sequence $\{s_n\}\subset K$ tends to zero
if there exist such constants $C,p>0$ that $|s_n(u)|\le C e^{p|\Im u|}$ and for all
$u\in\mathbb C$ the sequence $s_n(u)\to0$. In particular, if $s_n\to0$ then the
functions $s_n(u)$ tends uniformly to zero. Therefore, the scalar product~\eqref{C_spJ}
is continuous with respect to this topology and it defines a continuous embedding
of $K=K(\Cyl)$ to the space of distributions $K'=K'(\Cyl)$.

Each function $s\in K$ can be restricted to the line segment
$[-\frac12+\alpha;\frac12+\alpha]$, be expanded in this line segment to a Fourier
series and, then, this expansion can be uniquely extend to all the
$\mathbb C$ by the analyticity principle. It means that $\{j_n(u)=e^{2\pi inu}\}_{n\in\mathbb Z}$ is
a basis of $K$ and $\{j^n(u)=2\pi i e^{-2\pi inu}\}_{n\in\mathbb Z}$ is its dual
one with respect to the scalar product~\eqref{C_spJ}.

The functions belonging to the space $K$ can be correctly shifted because for
all $z\in\mathbb C$ the operator ${\cal T}_z\colon s(u)\mapsto s(u+z)$ is continuous
and maps a periodic function to a periodic one. Hence we can define sifted distributions
in the usual way: $\la a(u-z),s(u)\ra\hm=\la a(u),s(u+z)\ra$. Thereby defined shifted distributions
$a(u-z)$ can be considered as two-variable distributions with properties (i) and (ii) as well as
distributions depending of one of variables as of an argument and of another as of a parameter.
For example the distribution $\delta(u)\in K'$ defined by the formula $\la\delta(u),s(u)\ra=s(0)$
can be shifted by variable $z$ and consider as a distribution of variables $u$ and $z$. This
shifted distribution is called delta-function. Their Fourier expansion looks as follows
\begin{align}
 \delta(u-z)=\sum_{n\in\mathbb Z}j^n(u)j_n(z)=2\pi i\sum_{n\in\mathbb Z} e^{-2\pi in(u-z)}. \label{C_dffeJ}
\end{align}

\renewcommand{\thesection}{\arabic{section}}

\chapter{SOS model partition function and the elliptic weight functions}
\label{SA3}

\thispagestyle{empty} \pagestyle{myheadings}
\markboth{}{S.\,Pakuliak, V.\,Rubtsov, A.\,Silantyev\hfil{\itshape SOS model partition function}}

\newpage \thispagestyle{empty}

\renewcommand{\thefigure}{\arabic{figure}}

\vspace*{3cm}
\begin{center}
{\LARGE SOS model partition function and the elliptic weight functions} \\[17mm]
{\large Stanislav PAKULIAK~$^{\dag\sharp}$, Vladimir RUBTSOV~$^{\ddag\sharp}$ and \\[4mm]
Alexey SILANTYEV~$^{\dag\ddag}$} \\[8mm]
$^\dag$\ Laboratory of Theoretical Physics, JINR, \\
141980 Dubna, Moscow reg., Russia \\[4mm]
$^\ddag$\ D\'epartment de Math\'ematiques, Universit\'e d'Angers, \\
2 Bd. Lavoisier, 49045 Angers, France \\[4mm]
$^\sharp$\ Institute of Theoretical and Experimental Physics, \\
Moscow 117259, Russia \\[6mm]
 pakuliak@theor.jinr.ru, \\  Volodya.Roubtsov@univ-angers.fr, \\  silant@tonton.univ-angers.fr
\end{center}

\begin{abstract}
We generalized a recent observation \cite{KhP} that the partition function of the 6-vertex
model with domain wall boundary conditions can be obtained from a calculation of projections of
the product of total currents in the quantum affine algebra $U_{q}(\widehat{\mathfrak{sl}}_{2})$
in its current realization. A generalization is done for the elliptic current algebra \cite{EF,ER1}.
The projections of the product of total currents in this case are calculated explicitly and are
presented as integral transforms of a product of the total currents. It is proved that the integral kernel of
this transform is proportional to the partition function of the SOS model with domain wall boundary
conditions.
\end{abstract}

\section{Introduction}

The main aim of this paper is to apply the elliptic current projection method to calculate
the universal elliptic weight functions. The projections of currents first appeared in the works of
B.Enriquez and second author~\cite{ER2}, \cite{ER3}.
This was a method to construct a higher genus analog of quantum groups in terms of Drinfeld
currents~\cite{D88}. The current (or ``new") realization supplies the quantum affine algebra with another
co-product (the ``Drinfeld" co-product). The standard and Drinfeld co-products are related by a ``twist"
(see ~\cite{ER2}). The quantum algebra is decomposed (in two different ways) in a product of two ``Borel
 subalgebras''.
We can consider ( for each subalgebra) its intersection with two another Borel subalgebras and
express it as their
product. Thus we obtain to each subalgebra a pair of projection operators from the subalgebra to
 each of these intersections.
The above-mentioned twist is defined by a Hopf pairing of the subalgebras and the projection
operators (see Sect.4 where
we remind an elliptic version of this construction).

Further, S.Khoroshkin and first author have applied this method
 for a factorization of the universal $R$-matrix~\cite{DKhP} in the
quantum affine algebras and to obtain
a universal weight functions~\cite{KhP,EKhP} for arbitrary quantum affine algebra.
The weight functions play a fundamental role in the theory of deformed Knizhnik-Zamolodchikov and
Knizhnik-Zamolodchikov-Bernard equations.
In particular, in the case of $U_q(\widehat{\glt})$, acting by the projections of Drinfeld
currents on the highest
weight vectors of irreducible finite-dimensional representations, one obtains exactly the
(trigonometric) weight functions
or off-shell Bethe vectors. In the canonical nested Bethe ansatz these objects are defined
implicitly by the recursive relations. Calculations of the projections are an effective
way to resolve the hierarchical relations of the nested Bethe ansatz.

It was observed in \cite{KhP} that the projections for the
algebra $U_q(\widehat{\slt})$ can be presented as an integral transform and the integral kernel of this
transform is proportional to the partition function of the finite 6-vertex model with
domain wall boundary conditions (DWBC)~\cite{KhP}. Here we prove that the elliptic projections
described in~\cite{EF} can help to derive the partition function for the elliptic models.
It was shown in this paper that the
calculation of the projections in the current elliptic algebra \cite{EF,ER1}
yields the partition function of   the Solid-On-Solid (SOS) model with domain wall boundary
conditions.

In~\cite{Kor} Korepin derived recurrent relations for the partition function for the finite 6-vertex model with domain wall boundary conditions. Further Izergin used them to find the expression for the partition function in a determinant form~\cite{I87}. The integral kernel of projections satisfies the same recursive
relations and gives another formula for the partition function.

The problem of generalization of the Izergin's determinant formula to the elliptic case
was extensively discussed in the last two decades.
One can prove that the statistical sum of the SOS model with DWBC  cannot be presented in
the form of the single determinant. When this paper was prepared H. Rosengren \cite{Ros08} has shown
that this statistical sum for $n\times n$ lattice can be written as a sum of $2^n$ determinant
generalizing the Izergin's determinant formula. His approach relates to some (dynamical) generalization
of Alternating-Sign Matrices and goes along with the famous Kuperberg combinatorial demonstration
\cite{Kup}.

We expect that the projection method gives a universal form of the
elliptic weight function \cite{TV} as it does in the case of the quantum affine
algebras \cite{KhPT}. When this universal weight function is presented as
an integral transform of the product of the elliptic currents we show that the integral kernel
of this transform gives an expression of the partition function for the SOS model.
In one hand we generalize Korepin's recurrent relations and in other hand we generalize the method
proposed in \cite{KhP} for calculating the projections to the elliptic case.
We check that the integral kernel extracted from the universal weight function
and multiplied by certain  factor satisfies
the obtained recurrent relations, which uniquely define the partition function for SOS model with DWBC.
Our formula given by the projection method coincides with the Rosengren's one.

An interesting open problem which still deserves more extensive studies is a relation of the projection method
with Elliptic Sklyanin-Odesskii-Feigin algebras. It was observed in the pioneering paper \cite{ER1} that
the half of elliptic current generators satisfies the commutation relations of $W$-elliptic algebras of
Feigin. Another intriguing relation was observed in \cite{FO}: a certain subalgebra in the
``$\lambda$-generalization"
of Sklyanin algebra (a graded algebra of meromorphic functions with ``$\lambda$-twisted"
(anti)symmetrization product) satisfies the Felder $R$-matrix
quadratic relations \cite{FVT}. The latter paper gives a description of the elliptic
Bethe eigenvectors or the elliptic weight functions.
This is a strong indication that the projection method should be considered and interpreted in the
framework of (generalized) Sklyanin-Odesskii-Feigin algebras. We hope to  discuss it elsewhere.

The main results of the paper were reported in 7-th International Workshop ``Supersymmetry and Quantum
Symmetry" in JINR, Dubna (Russia), July 30 - August 4, 2007.

The paper is organized as follows. In section~\ref{D_secd2} we briefly review the finite 6-vertex model
with DWBC and present the formulae for the partition function: the Izergin's determinant formula
and the formula obtained by the projection method. Section~\ref{D_secd3} is devoted to the SOS model
with DWBC. We introduce the model without great details and pose a problem how to calculate
the partition function of this model. We obtain analytical properties of the partition function
and prove that they allow to reconstruct the partition function exactly.
In section~\ref{D_secd4} we introduce the projections in terms of the currents
for the elliptic algebra following~\cite{EF}. We generalize the method proposed in~\cite{KhP}
to our case to obtain the integral representation of the projections of product of currents.
Then, using a Hopf pairing we extract the integral kernel and check
that it  satisfies all necessary analytical properties
of the partition function of the SOS model with DWBC.
In section~\ref{D_secd5} we investigate trigonometric degeneration of the elliptic model
and the partition function with DWBC. We arrive to the 6-vertex model case by two steps.
The model obtained after the first step is a trigonometric SOS model. Then we show that
the degeneration of the
expression derived in the section~\ref{D_secd4} coincides with known expression for the
6-vertex model partition function with DWBC.
An appendix contains the necessary information on the properties of the elliptic polynomials.

\section{Partition function of the finite 6-vertex model}
\label{D_secd2}

Let us consider a statistical system on a  $n\times n$ square lattice, where the
columns and rows are enumerated from $1$ to $n$ from right to left and upward respectively.
This is a 6-vertex model where vertices on the lattice are associated with Boltzmann weights
which depend on the configuration of the arrows around a given vertex.
Six possible configurations are shown in the Fig.~\ref{D_figd1} and weights are functions of
two spectral parameters $z$, $w$ and anisotropy parameters $q$:
\begin{align} \label{D_Bw}
a(z,w)&=qz-q^{-1}w,  &  b(z,w)&=z-w, \\
c(z,w)&=(q-q^{-1})z, &  \bar c(z,w)&=(q-q^{-1})w.
\end{align}

\begin{figure}[h]
\begin{center}
\begin{picture}(180,130)
\put(05,63){$a(z,w)$}
\put(05,-15){$a(z,w)$}
\put(23,118){$+$}
\put(0,104){$+$}
\put(33,104){$+$}
\put(23,78){$+$}
\put(40,100){\vector(-1,0){40}}
\put(40,100){\vector(-1,0){3}}
\put(20,120){\vector(0,-1){40}}
\put(20,120){\vector(0,-1){3}}

\put(85,63){$b(z,w)$}
\put(85,-15){$b(z,w)$}
\put(103,118){$+$}
\put(113,104){$-$}
\put(80,104){$-$}
\put(103,78){$+$}
\put(80,100){\vector(1,0){40}}
\put(80,100){\vector(1,0){3}}
\put(100,120){\vector(0,-1){40}}
\put(100,120){\vector(0,-1){3}}

\put(165,63){$c(z,w)$}
\put(165,-15){$\bar c(z,w)$}
\put(183,118){$-$}
\put(193,104){$+$}
\put(160,104){$-$}
\put(183,78){$+$}
\put(160,100){\line(1,0){40}}
\put(200,100){\vector(-1,0){3}}
\put(160,100){\vector(1,0){3}}
\put(180,80){\vector(0,1){40}}
\put(180,80){\vector(0,-1){3}}

\put(23,38){$-$}
\put(0,24){$-$}
\put(33,24){$-$}
\put(23,00){$-$}
\put(0,20){\vector(1,0){40}}
\put(0,20){\vector(1,0){3}}
\put(20,00){\vector(0,1){40}}
\put(20,00){\vector(0,1){3}}

\put(103,38){$-$}
\put(113,24){$+$}
\put(80,24){$+$}
\put(103,00){$-$}
\put(120,20){\vector(-1,0){40}}
\put(120,20){\vector(-1,0){3}}
\put(100,00){\vector(0,1){40}}
\put(100,00){\vector(0,1){3}}

\put(183,38){$+$}
\put(193,24){$-$}
\put(160,24){$+$}
\put(183,00){$-$}
\put(200,20){\line(-1,0){40}}
\put(200,20){\vector(1,0){3}}
\put(160,20){\vector(-1,0){3}}
\put(180,00){\line(0,1){40}}
\put(180,00){\vector(0,1){3}}
\put(180,40){\vector(0,-1){3}}
\end{picture}
\medskip
\end{center}
\caption{\footnotesize  Graphical presentation of the Boltzmann weights.}
\label{D_figd1}
\end{figure}
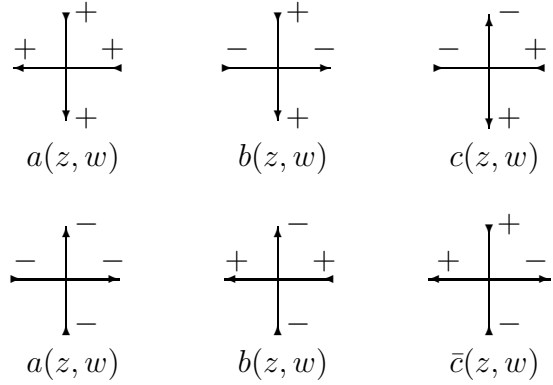

Let us associate the sign `$+$' to the upward  and left directed arrows,
while the  sign `$-$' is associated to the downward and right directed arrows as
shown in the Fig.~\ref{D_figd1}. The Boltzmann weights~\eqref{D_Bw} are gathered in the matrix
\begin{equation}\label{D_Bw1}
R(z,w)=\left(\begin{array}{cccc}a(z,w)&0&0&0\\ 0&b(z,w)& \bar c(z,w)&0\\
0&c(z,w)&b(z,w)&0\\0&0&0&a(z,w)
\end{array}\right)
\end{equation}
acting in the space $\mathbb C^2\otimes\mathbb C^2$ with the basis $e_{\alpha}\otimes e_{\beta}$,
$\alpha,\beta=\pm$. The entry $R(z,w)^{\alpha\beta}_{\gamma\delta}$,
$\alpha,\beta,\gamma,\delta=\pm$ coincides with the Boltzmann weight corresponding to
the Fig.~\ref{D_figd2}:

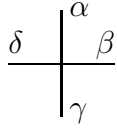
\begin{figure}[h]
\begin{center}
\begin{picture}(40,40)
\put(23,28){$\alpha$}
\put(0,14){$\delta$}
\put(33,14){$\beta$}
\put(23,-10){$\gamma$}
\put(40,10){\line(-1,0){40}}
\put(40,10){\line(-1,0){3}}
\put(20,-10){\line(0,1){40}}
\put(20,-10){\line(0,1){3}}
\end{picture}
\end{center}
\caption{\footnotesize  The Boltzmann weight $R(z,w)^{\alpha\beta}_{\gamma\delta}$.}
\label{D_figd2}
\end{figure}

Different repartitions of the arrows on the edges form different configurations $\{C\}$.
A Boltzmann weight of the lattice is a product of the Boltzmann weights in each vertex.
We define the partition function of the model summing the Boltzmann weights of the lattice over all possible configurations subjected
to some boundary conditions:
\begin{equation}\label{D_sts1}
Z(\{z\},\{w\})=\sum_{\{C\}}\prod_{i,j=1}^n R(z_i,w_j)^{\alpha_{ij}\beta_{ij}}_{\gamma_{ij}\delta_{ij}}.
\end{equation}
Here $\alpha_{ij}$, $\beta_{ij}$, $\gamma_{ij}$, $\delta_{ij}$ are corresponding signs
around $(i,j)$-th vertex. We consider an inhomogeneous model when the Boltzmann
weights depend on the column via the variable $z_i$ and on the row via the variable $w_j$
(see Fig.~\ref{D_figd3}).

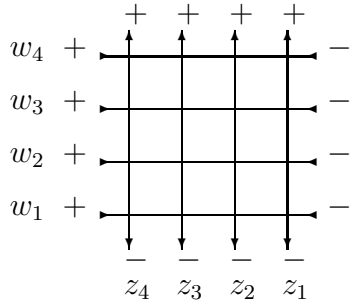
\begin{figure}[b]
\begin{center}
\begin{picture}(180,100)
\put(50,70){\line(1,0){80}}
\put(50,70){\vector(1,0){3}}
\put(130,70){\vector(-1,0){3}}
\put(15,70){$w_4$} \put(35,70){$+$} \put(135,70){$-$}
\put(50,50){\line(1,0){80}}
\put(50,50){\vector(1,0){3}}
\put(130,50){\vector(-1,0){3}}
\put(15,50){$w_3$} \put(35,50){$+$} \put(135,50){$-$}
\put(50,30){\line(1,0){80}}
\put(50,30){\vector(1,0){3}}
\put(130,30){\vector(-1,0){3}}
\put(15,30){$w_2$} \put(35,30){$+$} \put(135,30){$-$}
\put(50,10){\line(1,0){80}}
\put(50,10){\vector(1,0){3}}
\put(130,10){\vector(-1,0){3}}
\put(15,10){$w_1$} \put(35,10){$+$} \put(135,10){$-$}

\put(60,0){\vector(0,1){80}}
\put(60,0){\vector(0,-1){3}}
\put(58,-10){$-$} \put(58,-20){$z_4$} \put(58,83){$+$}
\put(80,0){\vector(0,1){80}}
\put(80,0){\vector(0,-1){3}}
\put(78,-10){$-$} \put(78,-20){$z_3$} \put(78,83){$+$}
\put(100,0){\vector(0,1){80}}
\put(100,0){\vector(0,-1){3}}
\put(98,-10){$-$} \put(98,-20){$z_2$} \put(98,83){$+$}
\put(120,0){\vector(0,1){80}}
\put(120,0){\vector(0,-1){3}}
\put(118,-10){$-$} \put(118,-20){$z_1$} \put(118,83){$+$}
\end{picture}
\medskip
\end{center}
\caption{\footnotesize Inhomogeneous lattice with domain wall boundary conditions.}
\label{D_figd3}
\end{figure}

We choose so-called domain wall boundary conditions (DWBC) that fix the boundary arrows (signs)
like it shown in the Fig.~\ref{D_figd3}. In other words, the arrows are entering on the left and
right boundaries and leaving on the lower and upper ones.

In \cite{I87}, A.\,G.\,Izergin has found a determinant presentation for the partition
function of the lattice with DWBC
\begin{multline}
 Z(\{z\},\{w\})= (q-q^{-1})^n \prod_{m=1}^n w_m \times  \\
\qquad \times\frac{\displaystyle \prod_{i,j=1}^n(z_i-w_j)(qz_i-q^{-1}w_j)}
{\displaystyle \prod_{n\ge i>j\ge1}(z_i-z_j)(w_j-w_i)}
\det\left|\left|\frac{1}{(z_i-w_j)(qz_i-q^{-1}w_j)}\right|\right|_{i,j=1,\ldots,n}\, .\label{D_stat-s}
\end{multline}
The Izergin's idea was to use a symmetry of the polynomial~\eqref{D_sts1} and Korepin's recurrent relations for the quantity $Z(\{z\},\{w\})$  to observe that these
recurrent relations allow to reconstruct $Z(\{z\},\{w\})$ in unique way and that the
same recurrent relations are valid for the determinant formula~\eqref{D_stat-s}.

On the other hand it was observed that the integral kernel of the projection of $n$ currents is a
polynomial of the same degree and it satisfies the same Korepin's recurrent relations~\cite{KhP}.
It means that this integral kernel coincides with the partition function for $n\times n$ lattice. Moreover,
the theory of projections gives another expression for the partition function:
\begin{multline}
 Z(\{z\},\{w\}) = (q-q^{-1})^n \prod_{m=1}^n w_m \prod_{n\ge i>j\ge1}\frac{q^{-1}w_i-qw_j}
 {w_i-w_j}\times \\
\qquad \times \sum_{\sigma\in S_n}\prod_{1\le i<j\le n \atop \sigma(i)>\sigma(j)}
  \frac{qw_{\sigma(i)}-q^{-1}w_{\sigma(j)}}{q^{-1}w_{\sigma(i)}-qw_{\sigma(j)}}
  \prod_{n\ge i>k\ge1}(qz_i-q^{-1}w_{\sigma(k})\prod_{1\le i<k\le n} (z_i-w_{\sigma(k)}),
  \label{D_stat-s_pr}
\end{multline}
where $S_n$ is a permutation group. Here the factor
$\frac{qw_{\sigma(i)}-q^{-1}w_{\sigma(j)}}{q^{-1}w_{\sigma(i)}-qw_{\sigma(j)}}$
appears in the product if the both conditions $i<j$ and $\sigma(i)>\sigma(j)$ are
satisfied simultaneously.

\section{Partition function for the SOS model}
\label{D_secd3}

\subsection{Description of the SOS model}
\label{D_secd31}

The SOS model is a face model. We introduce it as usually in terms of heights, but then represent
it in terms of $R$-matrix formalism as in~\cite{FS}. This language is more convenient to
generalize the results reviewed in
the section~\ref{D_secd2} and to prove the symmetry of the partition function.

\begin{figure}
\begin{center}
\begin{picture}(150,90)
\put(50,70){\line(1,0){80}}
\put(35,76){$n$}
\put(50,50){\line(1,0){80}}
\put(35,60){$\ldots$}
\put(50,30){\line(1,0){80}}
\put(5,38){$j=$}
\put(35,38){$2$}
\put(50,10){\line(1,0){80}}
\put(35,19){$1$}
\put(35,00){$0$}
\put(60,0){\line(0,1){80}}
\put(48,-15){$n$}
\put(80,0){\line(0,1){80}}
\put(68,-15){$\ldots$}
\put(100,0){\line(0,1){80}}
\put(88,-15){$2$}
\put(120,0){\line(0,1){80}}
\put(108,-15){$1$}
\put(123,-15){$0$}
\put(20,-15){$i=$}
\end{picture}
\end{center}
\caption{\footnotesize The numeration of faces.}
\label{D_figd5}
\end{figure}
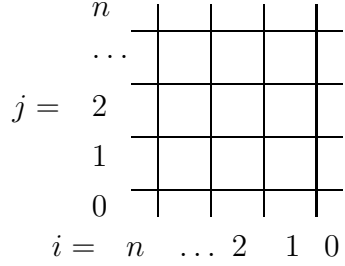

Consider a square $n\times n$ lattice with the vertices enumerated by index $i=1,\ldots,n$ like
in the previous case. It has $(n+1)\times(n+1)$ faces enumerated by pairs $(i,j)$, $i,j=0,\ldots,n$
(see Fig.~\ref{D_figd5}). To each face we put a complex number called height in such a way that the
differences of the heights corresponding to the neighbor faces are $\pm1$. Let us denote
by $d_{ij}$ the height corresponding to the face $(i,j)$ placed to the up-left from the
vertex $(i,j)$. Then the last condition can be written in the from $|d_{ij}-d_{i-1,j}|=1$,
for $i=1,\ldots,n$, $j=0,\ldots,n$, and $|d_{ij}-d_{ij-1}|=1$ for $i=0,\ldots,n$, $j=1,\ldots,n$.
Each distribution of heights $d_{ij}$ ($i,j=0,\ldots,n$) subjected to these conditions and also to
some boundary conditions defines a configuration of the model. It means that the partition
function of this model can be presented in the form
\begin{equation}
 Z=\sum_{C}\prod_{i,j=1}^n W_{ij}(d_{i,j-1},d_{i-1,j-1},d_{i-1,j},d_{ij}), \label{D_pf_def}
\end{equation}
where $W_{ij}(d_{i,j-1},d_{i-1,j-1},d_{i-1,j},d_{ij})$ are Boltzmann weights of $(i,j)$-th vertex
depending on the configuration via connected heights as follows~\cite{J}
\begin{align}\label{D_BWa}
 &W_{ij}(d+1,d+2,d+1,d)=a(u_i-v_j)=\theta(u_i-v_j+\hbar),  \\
 &W_{ij}(d-1,d-2,d-1,d)=a(u_i-v_j)=\theta(u_i-v_j+\hbar), \\
 &W_{ij}(d-1,d,d+1,d)=b(u_i-v_j;\hbar d)=\frac{\theta(u_i-v_j)\theta(\hbar d+\hbar)}{\theta(\hbar d)}, \\
 &W_{ij}(d+1,d,d-1,d)=\bar b(u_i-v_j;\hbar d)=\frac{\theta(u_i-v_j)\theta(\hbar d-\hbar)}{\theta(\hbar d)}, \\
 &W_{ij}(d-1,d,d-1,d)=c(u_i-v_j;\hbar d)=\frac{\theta(u_i-v_j+\hbar d)\theta(\hbar)}{\theta(\hbar d)}, \\
 &W_{ij}(d+1,d,d+1,d)=\bar c(u_i-v_j;\hbar d)=\frac{\theta(u_i-v_j-\hbar d)\theta(\hbar)}{\theta(-\hbar d)}.
\end{align}
As in the 6-vertex case the variables $u_i$, $v_j$ are attached to the $i$-th vertical and $j$-th horizontal
lines respectively, $\hbar$ is a nonzero anisotropy parameter~\footnote{In the elliptic case we
use additive variables $u_i$, $v_j$ and an additive anisotropy parameter $\hbar$ instead of the
multiplicative variables $z_i=e^{2\pi iu_i}$, $w_i=e^{2\pi iv_i}$ and the multiplicative parameter
$q=e^{\pi i\hbar}$.}. The weights are expressed via the ordinary odd theta-function defined by the conditions
\begin{align}
 \theta(u+1)&=-\theta(u), &  \theta(u+\tau)&=-e^{-2\pi iu-\pi i\tau}\theta(u), &   \theta'(0)&=1.
 \label{theta}
\end{align}

\begin{figure}
\begin{center}
 \begin{picture}(240,170)
\put(00,30){\line(1,0){60}}\put(30,00){\line(0,1){60}}
\put(-10,30){\bfseries $-$}\put(60,30){$-$}\put(30,-10){\bfseries $-$}\put(30,60){\bfseries $-$}
\put(0,10){\small $d-1$}\put(10,40){\small $d$}\put(35,40){\small $d-1$}\put(35,10){\small $d-2$}
\put(10,-20){\small  $a(u_i-v_j)$}
\put(00,130){\line(1,0){60}}\put(30,100){\line(0,1){60}}
\put(-10,130){\bfseries $+$}\put(60,130){\bfseries $+$}\put(30,90){\bfseries $+$}\put(30,160){\bfseries $+$}
\put(0,110){\small $d+1$}\put(10,140){\small $d$}\put(35,140){\small $d+1$}\put(35,110){\small $d+2$}
\put(10,77){\small  $a(u_i-v_j)$}
\put(100,30){\line(1,0){60}}\put(130,00){\line(0,1){60}}
\put(90,30){\bfseries $+$}\put(160,30){$+$}\put(130,-10){\bfseries $-$}\put(130,60){\bfseries $-$}
\put(100,10){\small $d+1$}\put(110,40){\small $d$}\put(135,40){\small $d-1$}\put(145,10){\small $d$}
\put(105,-20){\small  $\bar b(u_i-v_j;d)$}
\put(100,130){\line(1,0){60}}\put(130,100){\line(0,1){60}}
\put(90,130){\bfseries $-$}\put(160,130){\bfseries $-$}\put(130,90){\bfseries $+$}\put(130,160){\bfseries $+$}
\put(100,110){\small $d-1$}\put(110,140){\small $d$}\put(135,140){\small $d+1$}\put(145,110){\small $d$}
\put(105,77){\small  $b(u_i-v_j;d)$}
\put(200,30){\line(1,0){60}}\put(230,00){\line(0,1){60}}
\put(190,30){\bfseries $+$}\put(260,30){$-$}\put(230,-10){\bfseries $-$}\put(230,60){\bfseries $+$}
\put(200,10){\small $d+1$}\put(210,40){\small $d$}\put(235,40){\small $d+1$}\put(245,10){\small $d$}
\put(205,-20){\small  $\bar c(u_i-v_j;d)$}
\put(200,130){\line(1,0){60}}\put(230,100){\line(0,1){60}}
\put(190,130){\bfseries $-$}\put(260,130){\bfseries $+$}\put(230,90){\bfseries $+$}\put(230,160){\bfseries $-$}
\put(200,110){\small $d-1$}\put(210,140){\small $d$}\put(235,140){\small $d-1$}\put(245,110){\small $d$}
\put(205,77){\small  $c(u_i-v_j;d)$}
  \end{picture}
  \bigskip
 \end{center}
\caption{\footnotesize The Boltzmann weights for the SOS model.}
\label{D_figd4}
\end{figure}

Let us introduce the notations
\begin{equation}
 \alpha_{ij}=d_{i-1,j}-d_{ij},\  \beta_{ij}=d_{i-1,j-1}-d_{i-1,j},\
  \gamma_{ij}=d_{i-1,j-1}-d_{i,j-1},\  \delta_{ij}=d_{i,j-1}-d_{ij}. \label{D_albegade}
\end{equation}
The differences~\eqref{D_albegade} take the values $\pm1$ and we attach them to the corresponding edges
as in Fig.~\ref{D_figd2}: $\gamma_{i,j+1}=\alpha_{ij}$ is the sign attached to the vertical edge
connecting $(i,j)$-th vertex with the $(i,j+1)$-st one, $\beta_{i,j+1}=\delta_{ij}$ is the sign
attached to the horizontal edge connecting $(i,j)$-th vertex with the $(i+1,j)$-th one.
The configuration can be considered as a distribution of these signs on the internal edges
which are subjected to the conditions $\alpha_{ij}+\beta_{ij}=\gamma_{ij}+\delta_{ij}$, $i,j=1,\ldots,n$.
In terms of signs on external edges the DWBC are the same as shown in Fig.~\ref{D_figd3}.
Additionally we have to fix one of boundary heights, for example, $d_{nn}$.

The Boltzmann weights~\eqref{D_BWa} can be presented as entries of dynamical elliptic
$R$-matrix~\cite{FS}:
\begin{align}
 W_{ij}(d_{i,j-1},d_{i-1,j-1},d_{i-1,j},d_{ij})
   &=R(u_i-v_j;\hbar d_{ij})^{\alpha_{ij}\beta_{ij}}_{\gamma_{ij}\delta_{ij}}, \notag \\[7pt]
 R(u;\lambda)&=\left(\begin{array}{cccc}
   a(u) & 0              & 0              & 0              \\
  0              & b(u;\lambda) & \bar c(u;\lambda)      & 0              \\
  0              & c(u;\lambda) & \bar b(u;\lambda) & 0              \\
  0              & 0              & 0              & a(u)
 \end{array}\right). \label{D_Rz}
\end{align}

Let $\T^{\alpha_{in}}_{\gamma_{i1}}(u_i,\{v\},\lambda_i)$ be {\itshape column transfer matrices}.
It is a  matrix-valued
functions of $u_i$, all spectral parameters
$v_j$, $j=1,\ldots,n$ and the parameters $\lambda_i$ related to the heights:
\begin{align} \label{D_colTM} 
 &\T^{\alpha_{in}}_{\gamma_{i1}}(u_i,\{v\},\lambda_i)^{\beta_{in}\ldots\beta_{i1}}_{\delta_{in}\ldots
 \delta_{i1}} =\\
 = \Big(&R^{(n+1,n)}(u_i-v_n;\lambda_{in})R^{(n+1,n-1)}(u_i-v_{n-1};\lambda_{i,n-1})\cdots
    R^{(n+1,1)}(u_i-v_1;\lambda_{i1})\Big)^{\alpha_{in};\beta_{in}\ldots\beta_{i1}}_{\gamma_{i1};
    \delta_{in}\ldots\delta_{i1}} \notag \\
 =  \Big(&R^{(n+1,n)}(u_i-v_n;\Lambda_{in})R^{(n+1,n-1)}(u_i-v_{n-1};\Lambda_{i,n-1})\cdots
    R^{(n+1,1)}(u_i-v_1;\Lambda_{i1})\Big)^{\alpha_{in};\beta_{in}\ldots\beta_{i1}}_{\gamma_{i1};
    \delta_{in}\ldots\delta_{i1}}, \notag
\end{align}
where $\lambda_{ij}=\hbar d_{ij}=\lambda_i+\hbar\sum\limits_{l=j+1}^n\delta_{il}$,
$\lambda_i=\hbar d_{in}=\lambda+\hbar\sum\limits_{l=i+1}^n\alpha_{ln}$,
$\Lambda_{ij}=\lambda_i+\hbar\sum\limits_{l=j+1}^n H^{(l)}$.
The matrix $H^{(l)}$ acts in the $l$-th two-dimensional space  $V_{l}\cong\mathbb C^2$
as   ${\rm diag}(1,-1)$  and  $R$-matrix $R^{(a,b)}$ acts nontrivially in the
         tensor product of $V_a\otimes V_b$.
The superscript $n+1$ in the $R$-matrices  is regarded to an auxiliary space
         $V_{n+1}\cong\mathbb C^2$. The partition function~\eqref{D_pf_def} corresponding
          to DWBC ($\alpha_{in}=+1$,  $\beta_{1i}=-1$, $\gamma_{i1}=-1$, $\delta_{ni}=+1$,
$i=1,\ldots,n$) can be represented through the column transfer matrices:
\begin{equation}
 Z^{+-}_{-+}(\{u\},\{v\};\lambda)
=\Big(\T^+_-(u_1,\{v\},\lambda_1)\cdots\T^+_-(u_n,\{v\},\lambda_n)\Big)^{-\ldots-}_{+\ldots+}, \label{D_ZcT}
\end{equation}
where $\lambda_i=\lambda+\hbar(n-i)$. Similarly one can define the row transfer matrix.

\subsection{Analytical properties of the partition function}
\label{D_secd32}

Here we describe the analytical properties of the SOS model partition function analogical to those used by
A.G.Izergin to restore the partition function of  the 6-vertex model. These properties
uniquely define this partition function.

\begin{prop} \label{D_propZsym}
 The partition function with DWBC $Z^{+-}_{-+}(\{u\},\{v\};\lambda)$ is a symmetric function in
 both sets of the variables $u_i$ and $v_j$.
\end{prop}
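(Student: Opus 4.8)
The plan is to prove symmetry in each set of variables separately, exploiting the Yang–Baxter equation for the Felder $R$-matrix, exactly as one does for the six-vertex model. The two symmetries (in $\{u\}$ and in $\{v\}$) are established by the same mechanism applied to columns and rows respectively, so I would present the argument for the variables $u_i$ in detail and remark that the case of $v_j$ is completely analogous upon transposing the roles of rows and columns.

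First I would recall the dynamical Yang–Baxter equation~\eqref{DYBE} satisfied by the Felder $R$-matrix and rewrite it in a form adapted to the column transfer matrices~\eqref{D_colTM}. The key observation is that it suffices to prove invariance under the interchange of two adjacent spectral parameters $u_i$ and $u_{i+1}$, since adjacent transpositions generate the whole permutation group $S_n$. In the product~\eqref{D_ZcT} the two neighbouring factors $\T^+_-(u_i,\{v\},\lambda_i)\T^+_-(u_{i+1},\{v\},\lambda_{i+1})$ involve a string of $R$-matrices acting in the auxiliary spaces $V_{n+1}$ (one for $u_i$) and a second auxiliary space (for $u_{i+1}$), threaded through all the quantum spaces $V_1,\ldots,V_n$. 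The standard railroad argument is to insert the $R$-matrix $R(u_i-u_{i+1};\Lambda)$ between the two auxiliary spaces and drag it from one end of the lattice to the other, using the dynamical Yang–Baxter equation~\eqref{DYBE} at each vertex to move it past each $R^{(n+1,j)}$ factor. After passing through all $n$ columns, the two auxiliary strings have been interchanged, which amounts to swapping $u_i\leftrightarrow u_{i+1}$ in the product of transfer matrices, while the inserted $R(u_i-u_{i+1};\Lambda)$ reappears at the opposite end.

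The heart of the argument is then to check that this leftover $R$-matrix acts trivially on the DWBC boundary data. Because the two auxiliary edges on the relevant boundary carry fixed, equal signs dictated by the domain-wall conditions ($\alpha_{in}=\alpha_{i+1,n}=+$ on the top and $\gamma_{i1}=\gamma_{i+1,1}=-$ on the bottom), the boundary components of $R(u_i-u_{i+1};\Lambda)$ that survive are the diagonal ones $R^{++}_{++}=R^{--}_{--}=a(u_i-u_{i+1})$. I would verify that these factors appear identically on both sides of the interchange and therefore cancel in the resulting identity $Z^{+-}_{-+}(\ldots,u_i,u_{i+1},\ldots)=Z^{+-}_{-+}(\ldots,u_{i+1},u_i,\ldots)$. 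A subtlety specific to the SOS (dynamical) case, absent in the six-vertex model, is the bookkeeping of the dynamical parameter shifts: the arguments $\lambda_i=\lambda+\hbar(n-i)$ depend on the column index, and the shift operators $H^{(l)}$ in $\Lambda_{ij}$ must be tracked through the moves. I expect this dynamical shift bookkeeping to be the main obstacle, since one must confirm that the dynamical Yang–Baxter equation~\eqref{DYBE}, with its shifted spectral arguments $\lambda_k+\hbar H^{(2)}_k$ etc., is precisely the relation that guarantees the moves are consistent and that the shifts conspire to leave the boundary weight $a(u_i-u_{i+1})$ independent of $\lambda$ (which it indeed is, as $a(u)=\theta(u+\hbar)$ carries no dynamical dependence). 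Once this is settled, symmetry in $\{u\}$ follows, and the symmetry in $\{v\}$ is obtained by the identical argument applied to the row transfer matrices, completing the proof.
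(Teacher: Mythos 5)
Your proposal is correct and follows essentially the same route as the paper: the paper likewise runs the railroad argument with the dynamical Yang--Baxter equation, dragging an inserted $R$-matrix through the product of transfer matrices and using that on the domain-wall boundary data the leftover $R$-matrix (and permutation operator) act diagonally with the $\lambda$-independent weight $a$, which cancels from both sides, while the dynamical shifts are controlled by the ice rule $[H_1+H_2,R(u,\lambda)]=0$ and the relation $\Lambda_{ij}+\hbar\alpha_{in}=\Lambda_{i-1,j}$. The only cosmetic difference is the order: the paper details the exchange $v_j\leftrightarrow v_{j-1}$ (an $R$-matrix in the quantum spaces dragged horizontally through the column transfer matrices) and invokes the row-transfer-matrix analogue for the $u_i$, whereas you detail the $u$-exchange first.
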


\noindent
Proof is based on the Dynamical Yang-Baxter Equation (DYBE) for the $R$-matrix~\cite{FS}
\begin{multline}
 R^{(12)}(t_1-t_2;\lambda)R^{(13)}(t_1-t_3;\lambda+\hbar H^{(2)})R^{(23)}(t_2-t_3;\lambda)=\\
=R^{(23)}(t_2-t_3;\lambda+\hbar H^{(1)})R^{(13)}(t_1-t_3;\lambda)R^{(12)}(t_1-t_2;\lambda+\hbar H^{(3)}).\notag
\end{multline}
In order to prove the symmetry of partition function $Z^{+-}_{-+}(\{u\},\{v\};\lambda)$ under
permutation $v_j\leftrightarrow v_{j-1}$ we rewrite DYBE in the form
\begin{multline}
 R^{(n+1,j)}(u_i-v_j;\Lambda_{ij})R^{(n+1,j-1)}(u_i-v_{j-1};\Lambda_{ij}+\hbar H^{(j)})\times \\
 \quad \times R^{(j,j-1)}(v_j-v_{j-1};\Lambda_{ij})= R^{(j,j-1)}(v_j-v_{j-1};\Lambda_{ij}+\hbar H^{(n+1)}) \times \\
 \quad \times R^{(n+1,j-1)}(u_i-v_{j-1};\Lambda_{ij})
 R^{(n+1,j)}(u_i-v_j;\Lambda_{ij}+\hbar H^{(j-1)}). \label{D_DYBEuse} 
\end{multline}
 Multiplying $i$-th the
column matrix~\eqref{D_colTM} by $R^{(j,j-1)}(v_j-v_{j-1};\Lambda_{ij})$ from right and moving it
to the left using~\eqref{D_DYBEuse}, the relation $[H_1+H_2,R(u,\lambda)]=0$ and the equality
$\Lambda_{ij}+\hbar\alpha_{in}=\Lambda_{i-1,j}$ we obtain
\begin{multline}
 \T^{\alpha_{in}}_{\gamma_{i1}}(u_i,\{v\},\lambda_i)R^{(j,j-1)}(v_j-v_{j-1};\Lambda_{ij})
 =R^{(j,j-1)}(v_j-v_{j-1};\Lambda_{i-1,j})\times \\
 \times \sigmap^{(j,j-1)}
 \T^{\alpha_{in}}_{\gamma_{i1}}(u_i,\{v_j\leftrightarrow v_{j-1}\},\lambda_i)
 \cdots \T^{\alpha_{nn}}_{\gamma_{n1}}(u_n,\{v_j\leftrightarrow v_{j-1}\},\lambda_n)
 \sigmap^{(j,j-1)}, \label{D_TR_RT}
\end{multline}
where $\sigmap\in\End(\mathbb C^2\otimes\mathbb C^2)$ is a permutation matrix:
$\sigmap(e_1\otimes e_2)=e_2\otimes e_1$ for all $e_1,e_2\in\mathbb C^2$ and
notation $\{v_j\leftrightarrow v_{j-1}\}$ means the set of the parameters $\{v\}$ with
$v_{j-1}$ and $v_j$ are interchanged. Multiplying the product
of the column matrix by $R^{(j,j-1)}(v_j-v_{j-1};\Lambda_{nj})$ from right and moving it to the
left using~\eqref{D_TR_RT} one yields
\begin{multline}
 \T^{\alpha_{1n}}_{\gamma_{11}}(u_1,\{v\},\lambda_1)\cdots\T^{\alpha_{nn}}_{\gamma_{n1}}(u_n,\{v\},
 \lambda_n)R^{(j,j-1)}(v_j-v_{j-1};\Lambda_{nj}) =R^{(j,j-1)}(v_j-v_{j-1};\Lambda_{0,j})\times \\
 \quad\times
  \sigmap^{(j,j-1)}
 \T^{\alpha_{1n}}_{\gamma_{11}}(u_1,\{v_j\leftrightarrow v_{j-1}\},\lambda_1)\cdots
 \T^{\alpha_{nn}}_{\gamma_{n1}}(u_n,\{v_j\leftrightarrow v_{j-1}\},\lambda_n)\sigmap^{(j,j-1)},  \label{D_RTTT}
\end{multline}
where $\Lambda_{0,j}=\lambda+\hbar\sum\limits_{i=1}^n\alpha_{in}+\hbar
\sum\limits_{l=j+1}^n H^{(l)}$, $\Lambda_{nj}=\lambda_n=\lambda$. Eventually,
comparing the matrix element $\big(\cdot\big)^{-,\ldots,-}_{+,\ldots,+}$ of both hand
sides of~\eqref{D_RTTT}, taking into account the formula~\eqref{D_ZcT} and the identities
\begin{equation*}
  R(u,\lambda)^{--}_{\gamma\delta}=a(u)\delta^-_\gamma\delta^-_\delta,
\quad R(u,\lambda)^{\alpha\beta}_{++}=a(u)\delta^\alpha_+\delta^\beta_+,
 \quad  \sigmap^{--}_{\gamma\delta}=\delta^-_\gamma\delta^-_\delta,
  \quad \sigmap^{\alpha\beta}_{++}=\delta^\alpha_+\delta^\beta_+,
\end{equation*}
(where $\delta^\alpha_\gamma$ is a Kronecker's symbol) and substituting $\alpha_{in}=+1$,
$\gamma_{i1}=-1$ one derives
\begin{equation}
 Z^{+-}_{-+}(\{u\},\{v\};\lambda)=Z^{+-}_{-+}(\{u\},\{v_j\leftrightarrow v_{j-1}\};\lambda). \label{D_Z_perm_w}
\end{equation}
Similarly, using the row transfer matrix one can obtain the following equality from DYBE:
\begin{equation}
  Z^{+-}_{-+}(\{u\},\{v\};\lambda)=Z^{+-}_{-+}(\{u_j\leftrightarrow u_{j-1}\},\{v\};\lambda). \label{D_Z_perm_z}
\end{equation}
The partition function with DWBC satisfies the relations~\eqref{D_Z_perm_w}, \eqref{D_Z_perm_z}
for each $j=1,\ldots,n$, what is sufficient to establish the symmetry under an arbitrary permutation. \qed

\begin{prop} \label{D_propZep}
 The partition function with DWBC~\eqref{D_ZcT} is an elliptic polynomial
 \footnote{The notion of elliptic polynomials and their properties are
given in Appendix~\ref{D_Appendix_ep}.} of degree $n$
 with the character $\chi$ in each variable $u_i$, where
\begin{equation}
 \chi(1)=(-1)^n,\qquad
 \chi(\tau)=(-1)^n\exp\Big(2\pi i\big(\lambda+\sum_{j=1}^n v_j\big)\Big). \label{D_char_chi}
\end{equation}
\end{prop}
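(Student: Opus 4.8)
The plan is to prove the statement by reducing it to a one-variable quasi-periodicity computation and then controlling a dynamical phase factor that appears under the shift $u_i\mapsto u_i+\tau$. First I would invoke the symmetry in the variables $\{u\}$ established in Proposition~\ref{D_propZsym}: it suffices to prove that $Z^{(n)}(\{u\};\{v\};\lambda)$ is an elliptic polynomial of degree $n$ with the character~\eqref{D_char_chi} in a single variable, and by that same symmetry I am free to treat, say, the last variable $u_n$ (for which the bottom dynamical parameter $\lambda_n=\lambda+\hbar(n-n)$ coincides with the global $\lambda=\hbar d_{nn}$). In the representation~\eqref{D_ZcT} only the $n$-th column transfer matrix depends on $u_n$, so expanding $Z^{(n)}$ over the admissible height configurations writes it as a finite sum of products of exactly $n$ Boltzmann weights $R(u_n-v_j;\hbar d_{nj})^{\alpha_{nj}\beta_{nj}}_{\gamma_{nj}\delta_{nj}}$, $j=1,\dots,n$. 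By~\eqref{D_Rz} and~\eqref{D_BWa} each such weight is, as a function of $u_n$, a single theta-factor $\theta(u_n+\ast)$ times a factor independent of $u_n$ (the $\theta(\lambda)$-denominators carry no $u_n$-dependence). Hence $Z^{(n)}$ is holomorphic in $u_n$ for generic $\lambda$ and is a sum of products of $n$ degree-one theta functions, which is the right shape for membership in $\Theta_n(\chi)$.

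Next I would compute the two quasi-periods. Under $u_n\mapsto u_n+1$ each of the $n$ theta-factors changes sign, so every term, and therefore $Z^{(n)}$, is multiplied by $(-1)^n$, giving $\chi(1)=(-1)^n$. Under $u_n\mapsto u_n+\tau$ I would pull out of each weight the ``universal'' factor $-e^{-2\pi i(u_n-v_j)-\pi i\tau}$ coming from $\theta(w+\tau)=-e^{-2\pi iw-\pi i\tau}\theta(w)$; multiplying these over $j=1,\dots,n$ produces exactly $(-1)^n e^{2\pi i\sum_j v_j}\,e^{-2\pi i n u_n-\pi i n\tau}$, i.e. the prefactor $(-1)^n e^{2\pi i\sum_j v_j}$ together with the degree-$n$ automorphy factor $e^{-2\pi i n u_n-\pi i n\tau}$ demanded by the definition of $\Theta_n(\chi)$. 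What remains from each weight is a residual \emph{dynamical} phase $e^{-2\pi i s_{nj}}$, where a short inspection of~\eqref{D_BWa} gives $s_{nj}=\hbar$ for an $a$-vertex, $s_{nj}=0$ for a $b$- or $\bar b$-vertex, and $s_{nj}=+\hbar d_{nj}$ (resp. $-\hbar d_{nj}$) for a $c$- (resp. $\bar c$-) vertex.

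The \textbf{main obstacle} is to show that the total dynamical phase $P:=\prod_{j=1}^n e^{-2\pi i s_{nj}}$ is the same for every configuration and equals $e^{2\pi i\lambda}$; only then is the common character $\chi$ well defined and does the sum stay inside $\Theta_n(\chi)$. I would prove $\sum_j s_{nj}=-\lambda$ by a telescoping (summation-by-parts) argument along the column. The $c$- and $\bar c$-vertices are precisely the ones at which the auxiliary sign flips, and since the domain wall boundary conditions fix this sign to be $-$ below the column and $+$ above it, the flips alternate and pair up, with one more $\bar c$ than $c$. Using the height-difference constraints $d_{nj}-d_{n,j-1}=\pm1$ to track how $d_{nj}$ evolves between successive flips, the alternating sum $\hbar\big(\sum_{c}d_{nj}-\sum_{\bar c}d_{nj}\big)$ together with the $\hbar$-contributions of the $a$-vertices should collapse to a single boundary term fixed by the corner height $\hbar d_{nn}=\lambda$. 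I expect this bookkeeping, namely matching the $\pm\hbar d_{nj}$ flip contributions against the $a$-vertex contributions so that the internal shifts cancel and only the global $\lambda$ survives, to be the delicate part. Two checks keep it honest: consistency with the $\{u\}$-symmetry (the answer must be column-independent), and the base case $n=1$, where the single admissible vertex is a $\bar c$ and $Z^{(1)}=\bar c(u_1-v_1)=\theta(u_1-v_1-\lambda)\theta(\hbar)/\theta(-\lambda)$ indeed has $P=e^{2\pi i\lambda}$.

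Finally I would assemble the pieces: each configuration's contribution is an element of $\Theta_n(\chi)$ with the single character~\eqref{D_char_chi}, so their sum $Z^{(n)}$ lies in $\Theta_n(\chi)$ as well; the surviving automorphy factor $e^{-2\pi i n u_n-\pi i n\tau}$ certifies degree $n$, with genuine (non-vanishing) degree following from the recurrent base case. Transporting the conclusion back to an arbitrary $u_i$ via Proposition~\ref{D_propZsym} then completes the proof.
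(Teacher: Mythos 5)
Your overall framework is sound and follows the same route as the paper: reduction to the single variable $u_n$ via Proposition~\ref{D_propZsym}, expansion of~\eqref{D_ZcT} over configurations of the leftmost column, the count of $n$ theta-factors per term, the computation of $\chi(1)=(-1)^n$, and the identification of the residual dynamical phases $s_{nj}$ are all correct. The genuine gap is precisely the step you flag as the main obstacle: the claim that $\sum_j s_{nj}=-\lambda$ uniformly over admissible configurations. Your proposed telescoping argument uses only the vertical part of the DWBC ($\gamma_{n1}=-1$ at the bottom, $\alpha_{nn}=+1$ at the top) plus the height-difference constraints, and in that generality the claim is \emph{false}. Take $n=3$ and the flip pattern $\bar c$ at row $1$, $c$ at row $2$, $\bar c$ at row $3$: the vertex types force the left-edge signs $\delta_{n1}=+1$, $\delta_{n2}=-1$, $\delta_{n3}=+1$, hence heights $d_{n2}=d_{n1}+1$, $d_{n3}=d_{n1}$, and one gets $\sum_j s_{nj}=-\hbar d_{n1}+\hbar d_{n2}-\hbar d_{n3}=\hbar-\lambda$, not $-\lambda$. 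In general each $c$--$\bar c$ pair shifts the phase by $\hbar$-multiples rather than cancelling, so terms containing $c$-vertices would carry a genuinely different character, and no bookkeeping of alternating flips can collapse the phase to the single value $e^{2\pi i\lambda}$.

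What rescues the statement, and what the paper's proof actually uses, is the horizontal part of the DWBC that you never invoke: every left edge of the $n$-th (leftmost) column is fixed to $\delta_{nj}=+1$. Combined with the ice rule $\alpha+\beta=\gamma+\delta$, this eliminates the $c$, $b$ and all-minus $a$ vertices from that column entirely, so each admissible configuration has exactly one $\bar c$ (at some row $k$), $a$-vertices above it and $\bar b$-vertices below it; moreover the left-boundary heights are then configuration-independent, $\hbar d_{nj}=\lambda+(n-j)\hbar$. Consequently every term is explicitly $\prod_{j>k}a(u_n-v_j)\cdot\bar c\big(u_n-v_k;\lambda+(n-k)\hbar\big)\cdot\prod_{j<k}\bar b\big(u_n-v_j;\lambda+(n-j)\hbar\big)$ times a $u_n$-independent factor, and the total shift under $u_n\mapsto u_n+\tau$ is $(n-k)\hbar-\lambda-(n-k)\hbar=-\lambda$ for every $k$ — exactly your $n=1$ base case, now valid term by term for all $n$. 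Adding this structural observation closes the gap and turns your sketch into the paper's proof; without it, the uniformity of the character across configurations remains unproven, and the route you propose for it cannot succeed.
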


\noindent
Due to the symmetry with respect to the variables $\{u\}$ it is
sufficient to prove the proposition for variable $u_n$. To present explicitly a
dependence of $Z^{+-}_{-+}(\{u\},\{v\};\lambda)$ on $u_n$ we consider all possibilities for
states of edges attached to the vertices located in the $n$-th column. First consider the $(n,n)$-th vertex.
Due to the boundary conditions $\alpha_{nn}=\delta_{nn}=+1$ and to the condition
$\alpha_{nn}+\beta_{nn}=\gamma_{nn}+\delta_{nn}$ we have two possibilities: either
$\beta_{nn}=\gamma_{nn}=-1$ or $\beta_{nn}=\gamma_{nn}=+1$. In the first case one has a
unique possibility for the whole residual part of $n$-th column: $\gamma_{nj}=-1$,
$\beta_{nj}=+1$, $j=1,\ldots,n-1$; in the second case there are two possibilities for the $(n,n-1)$-st
vertex: either $\beta_{n,n-1}=\gamma_{n,n-1}=-1$ or $\beta_{n,n-1}=\gamma_{n,n-1}=+1$, etc.
Finally the partition function is represented in the from
\begin{eqnarray}
 Z^{+-}_{-+}(\{u\},\{v\};\lambda)=
  \sum_{k=1}^n \prod_{j=k+1}^n  a(u_n-v_j)\;\bar c(u_n-v_k;\lambda+(n-k)\hbar) \notag \\
\quad  \times\prod_{j=1}^{k-1} \bar b(u_n-v_j;\lambda+(n-j)\hbar) g_k(u_{n-1},\ldots,u_1,\{v\};\lambda), \notag
\end{eqnarray}
where $g_k(u_{n-1},\ldots,u_1,\{v\};\lambda)$ are functions not depending on $u_n$.
Each term of this sum is an elliptic polynomial of degree $n$ with the same
character~\eqref{D_char_chi} in the variable $u_n$. \qed
\medskip

\begin{Rem}\label{D_rem1} Similarly one can prove that the function
$Z^{+-}_{-+}(\{u\},\{v\};\lambda)$ is an elliptic polynomial of degree $n$ with the
character $\tilde\chi$ in each variable $v_i$, where $\tilde\chi(1)=(-1)^n$, \\
$\tilde\chi(\tau)=(-1)^n e^{2\pi i(-\lambda+\sum_{i=1}^n u_i)}$.
\end{Rem}

\begin{prop} \label{D_propZrec}
The $n$-th partition function with DWBC~\eqref{D_ZcT} restricted to the condition
$u_n=v_n-\hbar$ can be expressed through the $(n-1)$-st partition function:
\begin{align}
 &Z^{+-}_{-+}(u_n=v_n-\hbar,u_{n-1},\ldots,u_1;v_n,v_{n-1},\ldots,v_1;\lambda)= \label{D_Zrec} \\
 =\frac{\theta(\lambda+n\hbar)\theta(\hbar)}{\theta(\lambda+(n-1)\hbar)}
  &\prod_{m=1}^{n-1}\Big(\theta(v_n-v_m-\hbar)\theta(u_m-v_n)\Big)\,
 Z^{+-}_{-+}(u_{n-1},\ldots,u_1;v_{n-1},\ldots,v_1;\lambda). \notag
\end{align}
\end{prop}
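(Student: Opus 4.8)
The plan is to prove the recurrence relation~\eqref{D_Zrec} by the same column-transfer-matrix technique used for Proposition~\ref{D_propZsym}, specialising the spectral parameter to $u_n=v_n-\hbar$. The key observation is that at this point the Boltzmann weights entering the $(n,\cdot)$-th column degenerate: since $a(u_n-v_n)=\theta(u_n-v_n+\hbar)=\theta(0)=0$, all vertices in the top row of the $n$-th column whose weight is of type $a$ vanish. First I would use the symmetry established in Proposition~\ref{D_propZsym} to bring the variables $u_n$ and $v_n$ into the corner position, so that the specialisation $u_n=v_n-\hbar$ forces a unique ``frozen'' configuration along the $n$-th column and the $n$-th row near the $(n,n)$ corner. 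Concretely, I would revisit the explicit expansion of $Z^{+-}_{-+}$ in the states of the $n$-th column derived in the proof of Proposition~\ref{D_propZep},
\begin{align*}
 Z^{+-}_{-+}(\{u\},\{v\};\lambda)=
  \sum_{k=1}^n \prod_{j=k+1}^n  a(u_n-v_j)\,\bar c(u_n-v_k;\lambda+(n-k)\hbar)
  \prod_{j=1}^{k-1} \bar b(u_n-v_j;\lambda+(n-j)\hbar)\, g_k,
\end{align*}
and set $u_n=v_n-\hbar$. Since $a(u_n-v_n)=\theta(\hbar-\hbar)=0$, every term with $k<n$ carries a factor $a(u_n-v_n)$ and drops out, so only the $k=n$ term survives.

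Next I would evaluate the surviving $k=n$ term. It contributes $\bar c(u_n-v_n;\lambda+0\cdot\hbar)=\bar c(-\hbar;\lambda)=\frac{\theta(-\hbar-\lambda)\theta(\hbar)}{\theta(-\lambda)}$ from the corner vertex, together with the product $\prod_{j=1}^{n-1}\bar b(u_n-v_j;\lambda+(n-j)\hbar)$, where $\bar b(u;\mu)=\frac{\theta(u)\theta(\mu-\hbar)}{\theta(\mu)}$; substituting $u_n=v_n-\hbar$ gives $\theta(v_n-v_j-\hbar)$ in the numerator. The remaining factor $g_n$ is, by construction, a partition function of a lattice with the $n$-th column and $n$-th row partly removed. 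Here I would argue that freezing the corner propagates a unique chain of heights along the boundary, effectively detaching the last row as well: the vertices of the $n$-th row must then carry weights of type $b$ (or $a$), and tracing through~\eqref{D_BWa} their product yields the factor $\prod_{m=1}^{n-1}\theta(u_m-v_n)$ together with a dynamical shift of the remaining $(n-1)\times(n-1)$ block. The residual sum is precisely $Z^{+-}_{-+}(u_{n-1},\ldots,u_1;v_{n-1},\ldots,v_1;\lambda)$ with the dynamical parameter suitably matched, after collecting the theta-function prefactors into $\frac{\theta(\lambda+n\hbar)\theta(\hbar)}{\theta(\lambda+(n-1)\hbar)}$ using the quasi-periodicity~\eqref{theta} and the relation $\bar c(-\hbar;\lambda)=\frac{\theta(\lambda+\hbar)\theta(\hbar)}{\theta(\lambda)}$ rewritten via the reflection $\theta(-u)=-\theta(u)$.

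The main obstacle I anticipate is bookkeeping the dynamical shifts of $\lambda$ correctly when the corner is frozen: unlike the 6-vertex case, each weight in the SOS model depends on a height $\hbar d_{ij}$, and removing the last column and row changes the height boundary data of the residual $(n-1)\times(n-1)$ lattice. I would need to verify that after the unique frozen configuration is fixed, the dynamical parameter seen by the reduced lattice is exactly $\lambda$ (rather than $\lambda$ shifted by some multiple of $\hbar$), which requires carefully tracking the relations $\lambda_{ij}=\lambda_i+\hbar\sum_{l>j}\delta_{il}$ and $\lambda_i=\lambda+\hbar\sum_{l>i}\alpha_{ln}$ through the corner. This is the step where the elliptic (theta-function) identities and the shifted-argument structure of the $R$-matrix interact most delicately, so I would carry it out by explicitly writing the $(n,n)$ corner height, the propagation along the top row and rightmost column, and confirming the cancellation of all intermediate $\theta$-factors except those displayed in~\eqref{D_Zrec}. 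Once the recurrence, symmetry (Propositions~\ref{D_propZsym}, \ref{D_propZrec}), ellipticity (Proposition~\ref{D_propZep}), and the initial datum~\eqref{Z0c} are in hand, uniqueness via the interpolation Lemma for elliptic polynomials (from Appendix~\ref{D_Appendix_ep}) closes the argument.
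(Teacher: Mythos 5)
Your proposal is correct and is essentially the paper's own argument: the paper likewise uses $a(-\hbar)=\theta(0)=0$ to force a unique frozen configuration along the $n$-th column and $n$-th row (corner vertex of type $\bar c(-\hbar;\lambda)$, column vertices of type $\bar b$, row vertices of type $b$), with the corner relation $d_{n-1,n-1}=d_{nn}$ guaranteeing that the reduced $(n-1)\times(n-1)$ lattice keeps the same dynamical parameter $\lambda$, after which the dynamical theta-factors telescope to $\theta(\lambda+n\hbar)\theta(\hbar)/\theta(\lambda+(n-1)\hbar)$. The only cosmetic differences are that you enter through the column expansion from the proof of Proposition~\ref{D_propZep} instead of inspecting column and row simultaneously, and that your parenthetical ``(or $a$)'' is superfluous: the ice rule together with the boundary signs forces every frozen row vertex to be exactly of type $b$.
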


\noindent
Considering the $n$-th column and the $n$-th row
and taking into account that $a(u_n-v_n)\big|_{u_n=v_n-\hbar}=a(-\hbar)=0$ we
conclude that the unique possibility to have a non-trivial contribution is:
$\beta_{nn}=\gamma_{nn}=-1$, $\gamma_{nj}=-1$,
$\beta_{nj}=+1$, $j=1,\ldots,n-1$, $\beta_{in}=-1$, $\gamma_{in}=+1$, $i=1,\ldots,n-1$. The last formulae
impose the same DWBC for the $(n-1)\times(n-1)$ sublattice: $\delta_{n-1,j}=\beta_{nj}=+1$, $j=1,\ldots,n-1$,
$\alpha_{i,n-1}=\gamma_{in}=+1$, $i=1,\ldots,n-1$, $d_{n-1,n-1}=d_{nn}$. Thus the substitution $u_n=v_n-\hbar$
to the partition function for the whole lattice gives us
\begin{multline}
 Z^{+-}_{-+}(u_n=v_n-\hbar,u_{n-1},\ldots,u_1;v_n,v_{n-1},\ldots,v_1;\lambda)= \\
\quad =\bar c(-\hbar;\lambda)\prod_{j=1}^{n-1}\bar b(v_n-v_j-\hbar;\lambda+(n-j)\hbar)
 \prod_{i=1}^{n-1}b(u_i-v_n;\lambda+(n-i)\hbar) \\
\quad \times Z^{+-}_{-+}(u_{n-1},\ldots,u_1;v_{n-1},\ldots,v_1;\lambda). \label{D_proofZrec}
\end{multline}
Using the explicit expressions for the Boltzmann weights~\eqref{D_BWa} one can rewrite
 the last formula in the form~\eqref{D_Zrec}. \qed

\begin{Rem}\label{D_rem2}
From the formula~\eqref{D_proofZrec}
 we see that the following transformation of the $R$-matrix
\begin{equation} \label{D_remZrec}
 b(u,v;\lambda)\to\rho\,b(u,v;\lambda), \qquad \bar b(u,v;\lambda)\to\rho^{-1}\, \bar b(u,v;\lambda)
\end{equation}
does not change the recurrent relation~\eqref{D_Zrec}, where $\rho$ is non-zero constant not
depending on $u$, $v$ and $\lambda$.
\end{Rem}

\begin{lem} \label{D_theor_3prop}
If the set of functions $\{Z^{(n)}(u_n,\ldots,u_1;v_n,\ldots,v_1;\lambda)\}_{n\ge1}$ satisfies
the conditions of the propositions~\ref{D_propZsym}, \ref{D_propZep}, \ref{D_propZrec} and the initial condition
\begin{equation}
 Z^{(1)}(u_1;v_1;\lambda)=\bar c(u_1-v_1)=\frac{\theta(u_1-v_1-\lambda)
 \theta(\hbar)}{\theta(-\lambda)} \label{D_Z0c}
\end{equation}
then
\begin{equation}
 Z^{+-}_{-+}(u_n,\ldots,u_1;v_n,\ldots,v_1;\lambda)=Z^{(n)}(u_n,\ldots,u_1;v_n,\ldots,v_1;\lambda).
 \label{D_Z_Z}
\end{equation}
\end{lem}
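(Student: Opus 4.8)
The claim to prove is Lemma~\ref{D_theor_3prop}: the four conditions (Propositions~\ref{D_propZsym}, \ref{D_propZep}, \ref{D_propZrec} and the initial condition~\eqref{D_Z0c}) determine the partition function uniquely, so any family $\{Z^{(n)}\}$ satisfying them must coincide with $Z^{+-}_{-+}$. The plan is to argue by induction on $n$, using the fact that an elliptic polynomial of degree $n$ with a fixed character is determined by its values at $n$ generic points (this is Lemma~\ref{D_ep_coin} from Appendix~\ref{D_Appendix_ep}, which I may invoke). Since $Z^{+-}_{-+}$ itself satisfies all four conditions — by Propositions~\ref{D_propZsym}, \ref{D_propZep}, \ref{D_propZrec} and the direct evaluation~\eqref{D_Z0c} for $n=1$ — it suffices to show that the conditions leave no freedom.

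First I would settle the base case $n=1$: both $Z^{(1)}$ and $Z^{+-}_{-+}$ equal $\bar c(u_1-v_1)$ by the initial condition, so they agree. For the inductive step, assume $Z^{(n-1)}=Z^{+-}_{-+}$ in $n-1$ variables. I would then fix all variables except $u_n$ and regard both $Z^{(n)}(u_n,\ldots;v_n,\ldots;\lambda)$ and $Z^{+-}_{-+}(u_n,\ldots;v_n,\ldots;\lambda)$ as functions of $u_n$ alone. By Proposition~\ref{D_propZep} each is an elliptic polynomial of degree $n$ in $u_n$ with the same character $\chi$ given by~\eqref{D_char_chi} (the character depends only on $\lambda$ and the $v_j$, not on the $u_i$). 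The recurrent relation~\eqref{D_Zrec} of Proposition~\ref{D_propZrec}, combined with the symmetry in the $\{v\}$ variables from Proposition~\ref{D_propZsym}, fixes the value of $Z^{(n)}$ at the $n$ points $u_n=v_j-\hbar$, $j=1,\ldots,n$: at each such point the relation expresses $Z^{(n)}$ through $Z^{(n-1)}$, which by the induction hypothesis is already known and equals the corresponding restriction of $Z^{+-}_{-+}$. Hence the two elliptic polynomials of degree $n$ with character $\chi$ agree at $n$ distinct points $v_1-\hbar,\ldots,v_n-\hbar$ (generic for generic $v_j$), and by the interpolation statement for $\Theta_n(\chi)$ they coincide as functions of $u_n$.

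The main obstacle I anticipate is making the interpolation argument airtight: I must verify that the space $\Theta_n(\chi)$ really has dimension $n$ (so that $n$ values at generic points determine the element) and that the points $v_j-\hbar$ are genuinely in general position. This is exactly the content of the appendix Lemma~\ref{D_ep_coin}, so the work reduces to checking its hypotheses — in particular that the characters of $Z^{(n)}$ and $Z^{+-}_{-+}$ as functions of $u_n$ literally agree, which follows since~\eqref{D_char_chi} is imposed identically on both families. A secondary point requiring care is that the recurrence~\eqref{D_Zrec} as stated fixes only the value at $u_n=v_n-\hbar$; I would use Proposition~\ref{D_propZsym} (symmetry in $\{v\}$) to permute $v_n$ with each $v_j$ and thereby obtain the values at all $n$ points $u_n=v_j-\hbar$ from the same recurrence. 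Once these $n$ values are pinned down and shown to match those of $Z^{+-}_{-+}$, uniqueness of the interpolating elliptic polynomial closes the induction and yields~\eqref{D_Z_Z}.
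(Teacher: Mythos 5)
Your proposal is correct and follows essentially the same route as the paper's own proof: induction on $n$, with the inductive step comparing $Z^{(n)}$ and $Z^{+-}_{-+}$ as elliptic polynomials of degree $n$ in $u_n$ with the common character~\eqref{D_char_chi}, pinning down their values at the $n$ points $u_n=v_j-\hbar$ via the recurrence~\eqref{D_Zrec} together with the symmetry in $\{v\}$, and concluding by Lemma~\ref{D_ep_coin}. The genericity caveats you flag are exactly the hypotheses of that appendix lemma, which the paper likewise invokes without further elaboration.
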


\noindent  Due to the~\eqref{D_Z0c} this lemma can be proved by induction
over $n$. Let the equality~\eqref{D_Z_Z} be valid for $n-1$. Consider the functions $Z^{+-}_{-+}(u_n,
\ldots,u_1;v_n,\ldots,v_1;\lambda)$ and $Z^{(n)}(u_n,\ldots,u_1;v_n,\ldots,v_1;\lambda)$ as functions
of $u_n$. Both are elliptic polynomials of degree $n$ with the character~\eqref{D_char_chi}. They have the same
value in the point $u_n=v_n-\hbar$, and due to the symmetry of these functions with respect to the
parameters $\{v_j\}_{j=1}^n$ they coincide in all the points $u_n=v_j-\hbar$, $j=1,\ldots,n$. Then, due
to the Lemma~\ref{D_ep_coin} (see Appendix~\ref{D_Appendix_ep}) these functions are identical. \qed

\begin{Rem}\label{D_rem3}
As we see from the proof of the lemma~\ref{D_theor_3prop}
it is sufficient to establish the symmetry with respect to only the variables $v_j$.
\end{Rem}

\begin{Rem}\label{D_rem4}
The transformation~\eqref{D_remZrec} of the $R$-matrix does
not change the partition function with DWBC.
\end{Rem}

\section{Elliptic projections of currents}
\label{D_secd4}

 Let $\lfK_0=\mathbb C[u^{-1}][[u]]$
be a completed set of complex-valued meromorphic  functions defined in the neighborhood of origin which
have only simple poles at this point.  Let
 $\{\epsilon^i\}$ and $\{\epsilon_i\}$ be two dual bases in $\lfK_0$
 $\oint\frac{du}{2\pi i} \epsilon^i(u)\,\epsilon_j(u)=\delta^i_j$.

\subsection{Current description of the elliptic algebra}
\label{D_secd41}

Let $\A$ be a Hopf algebra generated by elements $\hat h[s]$, $\hat e[s]$, $\hat f[s]$,
$s\in\lfK_0$, which are subjected to linear relations
\begin{equation*}
\hat x[\alpha_1s_1+\alpha_2s_2]=\alpha_1\hat x[s_1]+\alpha_2\hat x[s_2], \qquad \alpha_1,\alpha_2\in\mathbb C,
\quad s_1,s_2\in\lfK_0,
\end{equation*}
where $x\in\{h,e,f\}$ and some commutation relations. These relations will be written in term of currents
\begin{align}
 h^+(u)&=\sum_{i\ge0}\hat h[\epsilon_{i;0}]\epsilon^{i;0}(u),&
 h^-(u)&=-\sum_{i\ge0}\hat h[\epsilon^{i;0}]\epsilon_{i;0}(u), \notag \\
 f(u)&=\sum_i\hat f[\epsilon_i]\epsilon^i(u),&  e(u)&=\sum_i\hat e[\epsilon_i]\epsilon^i(u). \label{D_ftdef}
\end{align}
The currents $e(u)$ and $f(u)$ are called total currents. They are defined by dual bases of
$\lfK_0$ and their definition does not depend on the choice
of these dual bases (see \cite{ER1,S21}).
 The currents $h^+(u)$ and $h^-(u)$ are called Cartan currents and they are defined by special basis
\begin{equation*}
 \epsilon^{k;0}(u)=\frac1{k!}\left(\frac{\theta'(u)}{\theta(u)}\right)^{(k)}, \quad k\ge0;
 \qquad 
 \epsilon_{k;0}(u)=(-u)^k,\quad k\ge0.   
\end{equation*}
The commutation relations read as follows~\cite{EF}:
\begin{equation}
[K^\pm(u),K^\pm(v)]=0, \qquad [K^+(u),K^-(v)]=0,
\end{equation}
\begin{equation}
K^\pm(u)e(v)K^\pm(u)^{-1}=\frac{\theta(u-v+\hbar)}{\theta(u-v-\hbar)}e(v),
\end{equation}
\begin{equation} \label{D_cr_EFR0Kf}
K^\pm(u)f(v)K^\pm(u)^{-1}=\frac{\theta(u-v-\hbar)}{\theta(u-v+\hbar)}f(v), 
\end{equation}
\begin{equation}
\theta(u-v-\hbar)e(u)e(v)=\theta(u-v+\hbar)e(v)e(u),\label{D_cr_EFR0ee}
\end{equation}
\begin{equation}
\theta(u-v+\hbar)f(u)f(v)=\theta(u-v-\hbar)f(v)f(u),\label{D_cr_EFR0ff}
\end{equation}
\begin{equation}
[e(u),f(v)]=\hbar^{-1}\delta(u,v)\Big(K^+(u)-K^-(v)\Big),
\end{equation}
where
$K^+(u)=\exp\Big(\frac{e^{\hbar\partial_u}-e^{-\hbar\partial_u}}{2\partial_u}
h^+(u)\Big)$, $K^-(u)=\exp\big(\hbar h^-(u)\big)$ and $\delta(u,v)=\sum\limits_{n\in\mathbb Z}
\frac{u^n}{v^{n+1}}$ is a delta-function~\footnote{One can find more details about distributions acting on
$\lfK_0$ and their significance in the theory of current algebras in our previous papers~\cite{S21,S22}.} for $\lfK_0$.
The algebra $\A$ is a non-central version of the algebra $A(\tau)$ introduced in~\cite{ER1}.
This algebra is equipped with the co-product and co-unit:
\begin{equation}
 \Delta K^\pm(u)=K^\pm(u)\otimes K^\pm(u),
 \end{equation}
\begin{equation}
\Delta e(u)=e(u)\otimes1+K^-(u)\otimes e(u), \label{D_Delta_e}
\end{equation}
\begin{equation}
 \Delta f(u)=f(u)\otimes K^+(u)+1\otimes f(u), \label{D_Delta_f}
 \end{equation}
\begin{equation}
\varepsilon(K^\pm(u))=1, \qquad \varepsilon(e(u))=0, \qquad \varepsilon(f(u))=0.
\end{equation}

Let $\A_F$ and $\A_E$ be subalgebras of $\A$ generated by the generators $\hat h[\epsilon_{i;0}]$, $\hat f[s]$,
and $\hat h[\epsilon^{i;0}]$, $\hat e[s]$, respectively, $s\in\lfK_0$. The subalgebra $\A_F$ is
described by currents $K^+(u)$, $f(u)$, and the subalgebra $\A_E$ by $K^-(u)$, $e(u)$. Also,
we introduce notations $H^+$ for the subalgebras of $\A$ generated by $\hat h[\epsilon_{i;0}]$.
As it was stated in~\cite{EF} the bialgebras $(\A_F,\Delta^{op})$ and $(\A_E,\Delta)$ are dual to
each other with respect to the Hopf pairing $\langle\cdot,\cdot\ra\colon\A_F\times\A_E\to\mathbb C$ defined
in terms of currents as follows: $\la f(u),K^-(v)\ra=\la K^+(u),e(v)\ra=0$ and
\begin{equation}
 \La f(u),e(v)\Ra=\hbar^{-1}\delta(u,v), \qquad
 \La K^+(u),K^-(v)\Ra=\frac{\theta(u-v-\hbar)}{\theta(u-v+\hbar)}. \label{D_HPdef}
\end{equation}
These formulae uniquely defines the Hopf pairing on  $\A_F\times\A_E$. In particular, one can
derive the following formula (see Appendix~\ref{D_Appendix_HP})
\begin{multline}
 \La f(t_n)\cdots f(t_1),e(v_n)\cdots e(v_1)\Ra
  =  \\
  =\hbar^{-n}\sum_{\sigma\in S_n}\prod_{{l<l' \atop \sigma(l)>\sigma(l')}}
   \frac{\theta(v_{\sigma(l)}-v_{\sigma(l')}+\hbar)}{\theta
   (v_{\sigma(l)}-v_{\sigma(l')}-\hbar)}\prod_{m=1}^n\delta(t_m,v_{\sigma(m)}).
    \label{D_la_ff_ee_ra}
\end{multline}

\subsection{Projections of currents}
\label{D_secd42}

We define the projections as linear maps acting in the subalgebra $\A_F$. In the subalgebra
$\A_E$ dual projections act, which we do not consider here. The main tool to work with the projections
are {\itshape half-currents} $f^+_\lambda(u)$ and $f^-_\lambda(u)$ (defined further).
By this reason we define the projections in terms of the half-currents.
The last ones are usually defined as parts of the sum~\eqref{D_ftdef} (with the corresponding sign)
such that $f(u)=f^+_\lambda(u)-f^-_\lambda(u)$. Here $\lambda$ is a parameter of the decomposition
of the total current into the difference of half-currents.
Elliptic half-currents are investigated in details on the classical level in~\cite{S21}.
We will  introduce the half-currents by their representations via  integral
 transforms of the total current $f(u)$:
\begin{equation}
 \f^+(u)=\oint\limits_{|v|<|u|}\frac{dv}{2\pi i}\frac{\theta(u-v-\lambda)}
 {\theta(u-v)\theta(-\lambda)}f(v), \quad
 \f^-(u)=\oint\limits_{|v|>|u|}\frac{dv}{2\pi i}\frac{\theta(u-v-\lambda)}
 {\theta(u-v)\theta(-\lambda)}f(v),\label{D_fpmht}
\end{equation}
where $\lambda\notin \Gamma=\mathbb Z+\mathbb Z\tau$.
The half-current $f^+_\lambda(u)$ is called positive
 and $f^-_\lambda(u)$ -- negative.

The corresponding positive and negative projections are also parameterized by $\lambda$
and they are defined on the half-currents as follows:
\begin{eqnarray}
 P^+_\lambda\big(f^+_\lambda(u)\big)=f^+_\lambda(u),\qquad P^-_\lambda\big(f^+_\lambda(u)\big)=0,
 \label{D_Pfpdef} \\
 P^+_\lambda\big(f^-_\lambda(u)\big)=0,\qquad  P^-_\lambda\big(f^-_\lambda(u)\big)=f^-_\lambda(u).
 \label{D_Pfmdef}
\end{eqnarray}

Let first define the projections in subalgebra $\A_f$ generated by currents $f(u)$.
As a linear space this subalgebra is spanned by the products $f(u_n)f(u_{n-1})\cdots f(u_1)$,
$n=0,1,2,\ldots$. It means that any element of $\A_f$ can be presented as a sum (maybe infinite)
of the integrals~\footnote{The integral $\oint$ without limits means a formal integral -- a continuous
extension of the integral over the unit circle.}
\begin{equation}
 \oint\frac{du_n\cdots du_1}{(2\pi i)^n}f(u_n)\cdots f(u_1)\,s_n(u_n)\cdots s_1(u_1), \qquad
 s_n,\ldots, s_1\in\lfK_0. \label{D_tutyaubralzvezdochki}
\end{equation}
It follows from PBW theorem proved in~\cite{EF} that any element of $\A_f$ can be also represented
as a sum of the integrals
\begin{equation*}
\oint\limits\frac{du_n\cdots du_1}{(2\pi i)^n}
 f^-_{\lambda+2(n-1)\hbar}(u_n)\cdots f^-_{\lambda+2m\hbar}(u_{m+1})
 f^+_{\lambda+2(m-1)\hbar}(u_m)\cdots f^+_{\lambda}(u_1)s_n(u_n)\cdots s_1(u_1),
\end{equation*}
$s_n\cdots s_1\in\lfK_0$, $0\le m\le n$, and, therefore, it is sufficient to define the projections
on these products of half-currents:
\begin{equation}
 P^+_\lambda(x^-x^+)=\varepsilon(x^-)x^+,\qquad P^-_\lambda(y^-y^+)=y^-\varepsilon(y^+), \label{D_Pxy_def}
\end{equation}
where
\begin{eqnarray}
 x^-=f^-_{\lambda+2(n-1)\hbar}(u_n)\cdots f^-_{\lambda+2m\hbar}(u_{m+1}),\qquad
y^-=f^-_{\lambda}(u_n)\cdots f^-_{\lambda-2(n-m-1)\hbar}(u_{m+1}),  \notag \\
 x^+=f^+_{\lambda+2(m-1)\hbar}(u_m)\cdots f^+_{\lambda}(u_1), \qquad
y^+=f^+_{\lambda-2(n-m)\hbar}(u_m)\cdots f^+_{\lambda-2(n-1)\hbar}(u_1). \notag
\end{eqnarray}
The product of zero number of currents is identified with $1$ and in this case: $\varepsilon(1)=1$.
The counit $\varepsilon$ of nonzero number of half-currents is always zero. So, this definition
generalizes the formulae~\eqref{D_Pfpdef}, \eqref{D_Pfmdef}. We complete the definition of the projections
on all the subalgebra $\A_F=\A_f\cdot H^+$ by formulae
\begin{equation*}
P^+_\lambda(at^+)=P^+_\lambda(a)t^+, \qquad P^-_\lambda(at^+)=P^-_\lambda(a)\varepsilon(t^+),
\end{equation*}
where $a\in\A_f$, $t^+\in H^+$.

\subsection{The projections and the universal elliptic weight function}
\label{D_secd43}

Consider the expressions of the form
\begin{equation}
 P^+_{\lambda-(n-1)\hbar}\big(f(u_n)f(u_{n-1})\cdots f(u_2)f(u_1)\big), \label{D_Pfn0}
\end{equation}
where the parameter $\lambda-(n-1)\hbar$ is chosen for some symmetry reasons. Let us begin with the
case $n=1$. The formula~\eqref{D_Pfpdef} implies that the projection in this case is equal to the positive
half-current, which can be represented as an integral transform of the total current:
\begin{equation*}
 P^+_{\lambda}\big(f(u_1)\big)=f^+_{\lambda}(u_1)=
 \oint\limits_{|u_1|>|v_1|}\frac{dv_1}{2\pi i}\frac{\theta(u_1-v_1{-\lambda})}{\theta(u_1-v_1)
 \theta(-\lambda)} f(v_1).
 \end{equation*}
The integral kernel of this transform gives the initial condition for the partition function with a factor:
\begin{equation}
 Z^{(1)}(u_1;v_1;{\lambda})=
  {\theta(\hbar)\theta(u_1-v_1)}\frac{\theta(u_1-v_1{-\lambda})}{\theta(u_1-v_1)\theta({-\lambda})}.
  \label{D_icZ0}
\end{equation}

The projections~\eqref{D_Pfn0} can be calculated by generalizing the method proposed  in~\cite{KhP}
for the
algebra $U_q(\hat\slt)$. The method use a recursion over $n$. Let us first present the last total
current in~\eqref{D_Pfn0} as the difference of half-currents:
\begin{align}
 P^+_{\lambda-(n-1)\hbar}\big(f(u_n)\cdots f(u_2)f(u_1)\big)&=   \label{D_Pfn0pm}\\
  =P^+_{\lambda-(n-3)\hbar}\big(f(u_n)\cdots f(u_2)\big)f^+_{\lambda-(n-1)\hbar}(u_1)
   &-P^+_{\lambda-(n-1)\hbar}\big(f(u_n)\cdots f(u_2)f^-_{\lambda-(n-1)\hbar}(u_1)\big) . \notag
\end{align}
In the first term we move out the positive half-current from the projection and, therefore,
calculation of this term reduced to the calculation of $(n-1)$-st projection.
In the second term in~\eqref{D_Pfn0pm} we move the negative half-current to the left step by step
using the following commutation relation~\cite{EF}
\begin{equation*}
  f(v)f^-_{\lambda}(u_1)
   =\frac{\theta(v-u_1-\hbar)}{\theta(v-u_1+\hbar)}f^-_{\lambda+2\hbar}(u_1)f(v)
     +\frac{\theta(v-u_1+\lambda+\hbar)}{\theta(v-u_1+\hbar)}F_{\lambda}(v), \\
\end{equation*}
where
\begin{equation*}
   F_\lambda(v)=\frac{\theta(\hbar)}{\theta(\lambda+\hbar)}
        \big(f^+_{\lambda+2\hbar}(v)f^+_{\lambda}(v)
                               -f^-_{\lambda+2\hbar}(v)f^-_{\lambda}(v)\big).
\end{equation*}

In each step we obtain an additional term containing $F_\lambda(u)$ and in the last step the negative
half-current is annihilated by the projection:
\begin{equation}
   P^+_{\lambda-(n-1)\hbar}\big(f(u_n)\cdots f(u_2)f^-_{\lambda-(n-1)\hbar}(u_1)\big)
       =\sum_{j=2}^n Q_j(u_1) X_j,  \label{D_Pfn0m}
\end{equation}
where
\begin{eqnarray*}
   Q_j(u)=\frac{\theta(u_j-u+\lambda-(n-2j+2)\hbar)}{\theta(u_j-u+\hbar)}
        \prod_{k=2}^{j-1}\frac{\theta(u_k-u-\hbar)}{\theta(u_k-u+\hbar)}, \\
   X_j=P^+_{\lambda-(n-1)\hbar}\big(f(u_n)\cdots f(u_{j+1})
            F_{\lambda-(n-2j+3)\hbar}(u_j)f(u_{j-1})\cdots f(u_2)\big).
\end{eqnarray*}
Putting $u_1=u_i$ in~\eqref{D_Pfn0m} we can substitute the negative half-current to the positive one due
to the commutation relation for the total currents $f(u)$ and the equality $f(u)f(u)=0$.
Moving out the positive half-current to left one obtains a linear system of equations for
$X_i$,  $i=2,\ldots,n$:
\begin{equation}
   P^+_{\lambda-(n-3)\hbar}\big(f(u_n)\cdots f(u_2)\big)f^+_{\lambda-(n-1)\hbar}(u_i)
                =\sum_{j=2}^n Q_j(u_i)X_j. \label{D_PfQX}
\end{equation}
Multiplying each equation~\eqref{D_PfQX} by
\begin{equation*}
 \frac{\theta(u_i-u+\lambda)}{\theta(\lambda)}
  \prod_{k=2}^n\frac{\theta(u_k-u_i+\hbar)}{\theta(u_k-u+\hbar)}
  \prod_{{k=2 \atop k\ne i}}^n\frac{\theta(u_k-u)}{\theta(u_k-u_i)},
\end{equation*}
summing it over $i=2,\ldots,n$ and using the interpolation formula (see Appendix~\ref{D_Appendix_ep})
\begin{equation}
 Q_j(u)=\sum_{i=2}^n Q_j(u_i)\frac{\theta(u_i-u+\lambda)}{\theta(\lambda)}
  \prod_{k=2}^n\frac{\theta(u_k-u_i+\hbar)}{\theta(u_k-u+\hbar)}
  \prod_{{k=2 \atop k\ne i}}^n\frac{\theta(u_k-u)}{\theta(u_k-u_i)} \label{D_interpol_Q_p}
\end{equation}
one yields
\begin{align}
  &P^+_{\lambda-(n-3)\hbar}\big(f(u_n)\cdots f(u_2)\big)\times \label{D_Pn1fQX} \\
	&\times \sum_{i=2}^n \frac{\theta(u_i-u+\lambda)}{\theta(\lambda)}
  \prod_{k=2}^n\frac{\theta(u_k-u_i+\hbar)}{\theta(u_k-u+\hbar)}
  \prod_{{k=2 \atop k\ne i}}^n\frac{\theta(u_k-u)}{\theta(u_k-u_i)} f^+_{\lambda-(n-1)\hbar}(u_i)=
    \sum_{j=2}^n Q_j(u)X_j. \notag
\end{align}
Comparing~\eqref{D_Pn1fQX} with~\eqref{D_Pfn0m} we conclude
\begin{multline}
   P^+_{\lambda-(n-1)\hbar}\big(f(u_n)\cdots f(u_2)f^-_{\lambda-(n-1)\hbar}(u_1)\big)
  =P^+_{\lambda-(n-3)\hbar}\big(f(u_n)\cdots f(u_2)\big)\times \\
\quad   \times\sum_{i=2}^n\frac{\theta(u_i-u_1+\lambda)}{\theta(\lambda)}
  \prod_{k=2}^n\frac{\theta(u_k-u_i+\hbar)}{\theta(u_k-u_1+\hbar)}
  \prod_{{k=2 \atop k\ne i}}^n\frac{\theta(u_k-u_1)}{\theta(u_k-u_i)}f^+_{\lambda-(n-1)\hbar}(u_i).
\end{multline}
Eventually, returning to the formula~\eqref{D_Pfn0pm} we derive the following expression for
the projection
\begin{equation}
   P^+_{\lambda-(n-1)\hbar}\big(f(u_n)\cdots f(u_2)f(u_1)\big)
  =P^+_{\lambda-(n-3)\hbar}\big(f(u_n)\cdots f(u_2)\big)f^+_{\lambda-(n-1)\hbar}(u_1;u_n,\ldots,u_2),
  \label{D_Pfn0_Pfn1fp}
\end{equation}
where we introduce the linear combination of the  currents:
\begin{multline}
 f^+_{\lambda-(n-2m+1)\hbar}(u_m;u_n,\ldots,u_{m+1})= f^+_{\lambda-(n-2m+1)\hbar}(u_m)
 -\sum_{i=m+1}^n\frac{\theta(u_i-u_m+\lambda+(m-1)\hbar)}{\theta(\lambda+(m-1)\hbar)}\times \\
 \times\prod_{k=m+1}^n \frac{\theta(u_k-u_i+\hbar)}{\theta(u_k-u_m+\hbar)}
  \prod\limits_{{k=m+1 \atop k\ne i}}^n\frac{\theta(u_k-u_m)}{\theta(u_k-u_i)}
   f^+_{\lambda-(n-2m+1)\hbar}(u_i). \label{D_fpm_nm}
\end{multline}
Continuing this calculation by the induction we obtain an expression for the projections in terms of the
half-currents~\eqref{D_fpm_nm}:
\begin{equation}
   P^+_{\lambda-(n-1)\hbar}\big(f(u_n)\cdots f(u_2)f(u_1)\big)
   =\prod_{n\ge m\ge1}^{\longleftarrow}f^+_{\lambda-(n-2m+1)\hbar}(u_m;u_n,\ldots,u_{m+1}). \label{D_Pfn0_prod_f}
\end{equation}

To represent the projections~\eqref{D_Pfn0_prod_f} in integral form we first rewrite the expression~\eqref{D_fpm_nm} in the form
\begin{align}
 &f^+_{\lambda-(n-2m+1)\hbar}(u_m;u_n,\ldots,u_{m+1})= \prod_{k=m+1}^n\frac{\theta(u_k-u_m)\theta(\hbar)}{\theta(u_k-u_m+\hbar)}\bigg(\prod_{k=m+1}^n G_{\hbar}(u_k-u_m)\times \label{D_fpm_nm_G} \\
 &\times f^+_{\lambda-(n-2m+1)\hbar}(u_m)
 +\sum_{i=m+1}^n G_{-\lambda-(m-1)\hbar}(u_m-u_i) \prod\limits_{{k=m+1 \atop k\ne i}}^n G_{\hbar}(u_k-u_i)
   f^+_{\lambda-(n-2m+1)\hbar}(u_i)\bigg). \notag
\end{align}
where $G_{\lambda}(u-v)=\frac{\theta(u-v+\lambda)}{\theta(u-v)\theta(\lambda)}$.
Substituting~\eqref{D_fpmht} to~\eqref{D_fpm_nm_G} and using the addition formula~\footnote{This formula can be proved using Lemma~\ref{D_ep_coin} of Appendix~\ref{D_Appendix_ep}.}
\begin{equation}
\sum_{i=1}^N\prod_{{j=1 \atop j\ne i}}^N
              G_{\lambda_j}(u_j-u_i)G_{\lambda_0}(u_i-v)=\prod_{i=1}^N G_{\lambda_i}(u_i-v), \label{D_GGG}
\end{equation}
where $\lambda_0=\sum\limits_{i=1}^N\lambda_i$, one can represent the half-currents~\eqref{D_fpm_nm}
as an integral transforms of the total current:
\begin{align}
 &f^+_{\lambda-(n-2m+1)\hbar}(u_m;u_n,\ldots,u_{m+1})= \label{D_fpm_nm_int} \\
 &=\prod_{k=m+1}^n\frac{\theta(u_k-u_m)}{\theta(u_k-u_m+\hbar)}
  \oint\limits_{|u_i|>|v|}\frac{dv}{2\pi i}
   \frac{\theta(u_m-v-\lambda-(m-1)\hbar)}
                      {\theta(u_m-v)\theta(-\lambda-(m-1)\hbar)}
     \prod_{k=m+1}^n\frac{\theta(u_k-v+\hbar)}{\theta(u_k-v)}
                                            f(v). \notag
\end{align}
Replacing each combination of the half-currents~\eqref{D_fpm_nm}  in~\eqref{D_Pfn0_prod_f}
by their integral form we obtain
\begin{eqnarray}
 P^+_{\lambda-(n-1)\hbar}\big(f(u_n)\cdots f(u_2)f(u_1)\big)
 =\prod_{n\ge k>m\ge1}\frac{\theta(u_k-u_m)}{\theta(u_k-u_m+\hbar)}
   \oint\limits_{|u_i|>|v_j|}\frac{dv_n\cdots dv_1}{(2\pi i)^n}\times \notag \\
    \times \prod_{n\ge k>m\ge1}\frac{\theta(u_k-v_m+\hbar)}{\theta(u_k-v_m)}
    \prod_{m=1}^n \frac{\theta(u_m-v_m-\lambda-(m-1)\hbar)}{\theta(u_m-v_m)\theta(-\lambda-(m-1)\hbar)}
           f(v_n)\cdots f(v_1). \label{D_Pfn0_prod_f_int}
\end{eqnarray}

The formulae \eqref{D_Pfn0_prod_f} and  \eqref{D_Pfn0_prod_f_int}
yield expressions for the universal elliptic weight
functions in terms of the current generators of the algebra $\mathcal{A}$.

\subsection{Universal weight function and SOS model partition function}

To extract the integral kernel from the expression \eqref{D_Pfn0_prod_f_int} and derive a formula for the partition function
we use the Hopf pairing~\eqref{D_HPdef}. Let us calculate the following expression generalizing~\eqref{D_icZ0}:
\begin{multline}
 Z^{(n)}(u_n,\ldots,u_1;v_n,\ldots,v_1;\lambda)=\prod_{i,j=1}^n\theta(u_i-v_j)
  \prod_{n\ge k>m\ge1}\frac{\theta(u_k-u_m+\hbar)\theta(v_k-v_m-\hbar)}{\theta(u_k-u_m)
  \theta(v_k-v_m)}\times \\
\quad \times\big(\hbar\theta(\hbar)\big)^n\La P^+_{\lambda-(n-1)\hbar}\big(f(u_n)\cdots f(u_1)\big),
e(v_n)\cdots e(v_1)\Ra. \label{D_Zn_def}
\end{multline}
Using the expression for the projection of the product of the total currents~\eqref{D_Pfn0_prod_f_int}
and the formula~\eqref{D_la_ff_ee_ra} we obtain
\begin{gather}
 Z^{(n)}(u_n,\ldots,u_1;v_n,\ldots,v_1;\lambda)= \notag \\
\quad   =\theta(\hbar)^n\prod_{i,j=1}^n\theta(u_i-v_j)
   \prod_{k>m}\frac{\theta(v_k-v_m-\hbar)}{\theta(v_k-v_m)}
  \sum_{\sigma\in S_n}\prod_{{l<l' \atop \sigma(l)>\sigma(l')}}
   \frac{\theta(v_{\sigma(l)}-v_{\sigma(l')}+\hbar)}{\theta(v_{\sigma(l)}-v_{\sigma(l')}-\hbar)}\times \notag \\
\qquad  \times\prod_{k>m}\frac{\theta(u_k-v_{\sigma(m)}+\hbar)}{\theta(u_k-v_{\sigma(m)})}
   \prod_{m=1}^n\frac{\theta(u_m-v_{\sigma(m)}-\lambda-(m-1)\hbar)}
                      {\theta(u_m-v_{\sigma(m)})\theta(-\lambda-(m-1)\hbar)}= \label{D_Zn_theta} \\
\quad =\prod_{k>m}\frac{\theta(v_k-v_m-\hbar)}{\theta(v_k-v_m)}
  \sum_{\sigma\in S_n}\prod_{{l<l' \atop \sigma(l)>\sigma(l')}}
   \frac{\theta(v_{\sigma(l)}-v_{\sigma(l')}+\hbar)}{\theta(v_{\sigma(l)}-v_{\sigma(l')}-\hbar)}\times \notag \\
\qquad  \times\prod_{k>m}\theta(u_k-v_{\sigma(m)}+\hbar)\prod_{k<m}\theta(u_k-v_{\sigma(m)})
   \prod_{m=1}^n\frac{\theta(u_m-v_{\sigma(m)}-\lambda-(m-1)\hbar)\theta(\hbar)}
                      {\theta(-\lambda-(m-1)\hbar)}. \notag
\end{gather}
We see from this formula that the expression~\eqref{D_Zn_theta} defines a holomorphic functions of
the variables $u_i$.

\begin{theor}
The set of functions $\big\{Z^{(n)}(u_n,\ldots,u_1;v_n,\ldots,v_1;\lambda)\big\}_{n\ge1}$
defined by the formula~\eqref{D_Zn_def} satisfies the conditions of the
propositions~\ref{D_propZsym}, \ref{D_propZep}, \ref{D_propZrec} and the initial condition~\eqref{D_Z0c}.
Therefore, by virtue of the theorem~\ref{D_theor_3prop} they coincide with the partition functions of
the SOS model with DWBC:
\begin{multline}
 Z^{+-}_{-+}(u_n,\ldots,u_1;v_n,\ldots,v_1;\lambda)= \\
\quad =\prod_{n\ge k>m\ge1}\frac{\theta(v_k-v_m-\hbar)}{\theta(v_k-v_m)}
  \sum_{\sigma\in S_n}\prod_{{l<l' \atop \sigma(l)>\sigma(l')}}
   \frac{\theta(v_{\sigma(l)}-v_{\sigma(l')}+\hbar)}{\theta(v_{\sigma(l)}-v_{\sigma(l')}-\hbar)}
     \prod_{1\le k<m\le n}\theta(u_k-v_{\sigma(m)})\times \\
\quad\times\prod_{n\ge k>m\ge1}\theta(u_k-v_{\sigma(m)}+\hbar)\prod_{m=1}^n
\frac{\theta(u_m-v_{\sigma(m)}-\lambda-(m-1)\hbar)\theta(\hbar)}
                      {\theta(-\lambda-(m-1)\hbar)}. \label{D_Z_Z_}
\end{multline}
\end{theor}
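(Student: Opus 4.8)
The plan is to start from the explicit theta-function expression \eqref{D_Zn_theta} for $Z^{(n)}$, which we have just obtained by inserting the integral form of the projection \eqref{D_Pfn0_prod_f_int} into the definition \eqref{D_Zn_def} and evaluating the pairing via \eqref{D_la_ff_ee_ra}, and then to verify one by one the four hypotheses of Lemma~\ref{D_theor_3prop}. The conclusion \eqref{D_Z_Z_} follows immediately, since \eqref{D_Zn_theta} is literally the right-hand side of \eqref{D_Z_Z_}. I would first dispose of the initial condition: for $n=1$ the prefactor and both products $\prod_{k>m}$, $\prod_{k<m}$ in \eqref{D_Zn_theta} are empty and the sum over $S_1$ is trivial, leaving exactly $\frac{\theta(u_1-v_1-\lambda)\theta(\hbar)}{\theta(-\lambda)}$, which is $\bar c(u_1-v_1)$ of \eqref{D_Z0c}.

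For the elliptic-polynomiality of Proposition~\ref{D_propZep}, I would fix a single permutation $\sigma$ and a single index $i$ and count the factors of \eqref{D_Zn_theta} containing $u_i$: one factor $\theta(u_i-v_{\sigma(m)}+\hbar)$ for each $m<i$, one factor $\theta(u_i-v_{\sigma(m)})$ for each $m>i$, and the single factor $\theta(u_i-v_{\sigma(i)}-\lambda-(i-1)\hbar)$, giving exactly $n$ first-degree theta-factors. The key point is that the sum of the quantities subtracted from $u_i$ inside these factors equals $\lambda+\sum_m v_m$ independently of $\sigma$ (the $\hbar$-shifts telescope away), so the quasi-periodicity multipliers under $u_i\to u_i+1$ and $u_i\to u_i+\tau$ are the same for every summand; hence the whole sum is an elliptic polynomial of degree $n$ in $u_i$ with the character \eqref{D_char_chi}.

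For the symmetry of Proposition~\ref{D_propZsym} I would invoke Remark~\ref{D_rem3} and check only invariance under the transpositions of the $v$'s. Here I would recognize the $\sigma$-sum in \eqref{D_Zn_theta} as the elliptic $(-\hbar)$-symmetrization, in the sense of Section~\ref{sec43}, of the function obtained by removing that sum, and use that such a symmetrization is invariant under the corresponding $(-\hbar)$-action because this action is a genuine $S_n$-action. It then remains to combine this with the transformation of the prefactor $\prod_{k>m}\frac{\theta(v_k-v_m-\hbar)}{\theta(v_k-v_m)}$ under an elementary transposition $v_j\leftrightarrow v_{j+1}$: using the oddness of $\theta$, only the $(k,m)=(j+1,j)$ factor changes and the prefactor picks up precisely the factor $\frac{\theta(v_{j+1}-v_j+\hbar)}{\theta(v_{j+1}-v_j-\hbar)}$ needed to absorb the $(-\hbar)$-action weight, so the product is symmetric in $\{v\}$.

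Finally, for the recurrence of Proposition~\ref{D_propZrec} I would set $u_n=v_n-\hbar$, which turns each factor $\theta(u_n-v_{\sigma(m)}+\hbar)$, present for all $m<n$, into $\theta(v_n-v_{\sigma(m)})$; this vanishes whenever $\sigma(m)=n$ for some $m<n$, so only permutations with $\sigma(n)=n$ survive, and these are in bijection with $S_{n-1}$. For such $\sigma$ I would factor out the $u_n$- and $v_n$-dependent pieces — namely $\prod_{m<n}\theta(v_n-v_m)$, the $m=n$ term $\frac{\theta(-\lambda-n\hbar)\theta(\hbar)}{\theta(-\lambda-(n-1)\hbar)}$, the contribution $\prod_{k<n}\theta(u_k-v_n)$ coming from the $m=n$ slot of $\prod_{k<m}$, and the prefactor contribution $\prod_{m<n}\frac{\theta(v_n-v_m-\hbar)}{\theta(v_n-v_m)}$ — leaving exactly $Z^{(n-1)}$; using $\prod_{m<n}\theta(v_n-v_m)\cdot\prod_{m<n}\frac{\theta(v_n-v_m-\hbar)}{\theta(v_n-v_m)}=\prod_{m<n}\theta(v_n-v_m-\hbar)$ and $\frac{\theta(-\lambda-n\hbar)}{\theta(-\lambda-(n-1)\hbar)}=\frac{\theta(\lambda+n\hbar)}{\theta(\lambda+(n-1)\hbar)}$ by oddness reproduces \eqref{D_Zrec}. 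I expect the symmetry step to be the main obstacle, as it is the only one that is not a direct substitution and genuinely relies on the group-action (cocycle) property of the elliptic $\hbar$-symmetrization from Section~\ref{sec43}; the polynomiality and recurrence are theta-function bookkeeping once the vanishing $\theta(0)=0$ is exploited. With all four hypotheses verified, Lemma~\ref{D_theor_3prop} yields $Z^{(n)}=Z^{+-}_{-+}$ and hence \eqref{D_Z_Z_}.
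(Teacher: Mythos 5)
Your proof is correct, and three of your four verifications coincide with the paper's own: the initial condition, the elliptic polynomiality (the paper merely asserts that \eqref{D_Zn_theta} ``implies'' Proposition~\ref{D_propZep}; your count of the $n$ theta-factors containing $u_i$ and the telescoping of the $\hbar$-shifts to $\lambda+\sum_m v_m$ is exactly the missing detail, and it is needed so that all summands carry the same character), and the recurrence via $\theta(0)=0$ forcing $\sigma(n)=n$. Where you genuinely diverge is the symmetry step. The paper never touches the explicit formula \eqref{D_Zn_theta} there: it argues upstream, at the level of the definition \eqref{D_Zn_def}, that $\prod_{i,j}\theta(u_i-v_j)$ is symmetric and that the non-symmetric prefactors are compensated by the quasi-symmetry of $f(u_n)\cdots f(u_1)$ and $e(v_n)\cdots e(v_1)$ inside the Hopf pairing, i.e.\ by the commutation relations \eqref{D_cr_EFR0ff} and \eqref{D_cr_EFR0ee}; this yields symmetry in both sets of variables at once, before the pairing is evaluated. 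You instead work downstream of \eqref{D_Zn_theta}, identify the $\sigma$-sum as the elliptic $(-\hbar)$-symmetrization of Section~\ref{sec43}, and use its group-action (cocycle) property to produce the transformation law that cancels the prefactor's factor $\theta(v_{j+1}-v_j+\hbar)/\theta(v_{j+1}-v_j-\hbar)$, invoking Remark~\ref{D_rem3} to restrict to $v$-symmetry. Both routes are sound: the paper's is structural and independent of the explicit evaluation of the pairing, while yours is self-contained at the level of theta functions and more economical in checking only the $v$-variables.
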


\noindent
The initial condition~\eqref{D_Z0c} is verified by checking the formula~\eqref{D_icZ0}.
The first factor in the right hand side of~\eqref{D_Zn_def} is symmetric with respect to both sets
of variables. Then the symmetry with respect to the variables $\{u\}$ and the variables $\{v\}$ follows
from the commutation relations~\eqref{D_cr_EFR0ff} and \eqref{D_cr_EFR0ee} respectively.
The formula~\eqref{D_Zn_theta} implies that~\eqref{D_Zn_def} are elliptic polynomials of degree $n$
with character~\eqref{D_char_chi} in variables $u_i$, particularly in $u_n$. Now let us substitute
$u_n=v_n-\hbar$ to~\eqref{D_Zn_theta}. The non-vanishing terms in the right hand side correspond
to the permutations $\sigma\in S_n$ satisfying $\sigma(n)=n$. Substituting $u_n=v_n-\hbar$ into these
 terms one obtains the recurrent relation~\eqref{D_Zrec}. \qed \\

\section{Trigonometric limit}
\label{D_secd5}

In this section we investigate the trigonometric degenerations of the formulae obtained
in the elliptic case. In particular, taking the corresponding trigonometric limit in the
expression for the SOS model partition function~\eqref{D_Z_Z_} we reproduce the expression for the
6-vertex partition function~\eqref{D_stat-s_pr}.

First we consider the degeneration of $R$-matrix --  the matrix of the Boltzmann weights,
which defines the model. To do it we need the formula of the trigonometric degeneration
($\tau\to i\infty$) of the odd theta function defined by the conditions~\eqref{D_theta}:
\begin{equation*}
 \lim_{\tau\to i\infty}\theta(u)=\frac{\sin\pi u}{\pi}.
\end{equation*}
In terms of the multiplicative variables $z=e^{2\pi iu}$, $w=e^{2\pi iv}$ this formula can be
rewritten as follows:
\begin{equation*}
2\pi i e^{\pi i(u+v)}\lim_{\tau\to i\infty}\theta(u-v)=z-w.
\end{equation*}
Multiplying the $R$-matrix~\eqref{D_Rz} by $2\pi i e^{\pi i(u+v)}$ and taking the limit we
obtain the following matrix depending rationally on the multiplicative variables $z$, $w$
and multiplicative parameters $q=e^{\pi i\hbar}$, $\mu=e^{2\pi i\lambda}$:
\begin{eqnarray}
 R(z,w;\mu)=2\pi i e^{\pi i(u+v)}\lim_{\tau\to i\infty}R(u-v;\lambda)= \notag \\
 =\left(\begin{array}{cccc}
  zq-wq^{-1}  & 0              & 0              & 0              \\
  0 & \frac{(z-w)(\mu q-q^{-1})}{(\mu-1)} &\frac{(z-w\mu)(q-q^{-1})}{(1-\mu)}  & 0              \\
  0 & \frac{(z\mu-w)(q-q^{-1})}{(\mu-1)} &\frac{(z-w)(\mu q^{-1}-q)}{(\mu-1)}  & 0   \\
  0              & 0              & 0              & (zq-wq^{-1})
 \end{array}\right).  \label{D_Rzmu}
\end{eqnarray}
The matrix~\eqref{D_Rzmu} inherits the property to satisfy the dynamical Yang-Baxter equation
and it defines statistical model which is called {\it trigonometric SOS model}.

To obtain the non-dynamical trigonometric case we need to implement the additional
limit $\lambda\to-i\infty$ implying $\mu\to\infty$ (or $\lambda\to i\infty$ implying $\mu\to0$):
\begin{equation}
 \tilde R(z,w)=\lim_{\mu\to\infty}R(z,w;\mu)=\left(\begin{array}{cccc}
  zq-wq^{-1}  & 0              & 0              & 0              \\
  0 & q(z-w) &(q-q^{-1})w  & 0              \\
  0 & (q-q^{-1})z & q^{-1}(z-w)  & 0   \\
  0              & 0              & 0              & zq-wq^{-1}
 \end{array}\right). \label{D_Rzmund}
\end{equation}
The matrix~\eqref{D_Rzmund} differs from the matrix of 6-vertex Boltzmann weights~\eqref{D_Bw1}
by the transformation~\eqref{D_remZrec}. Taking into account the Remark~\ref{D_rem4} (from Subsec.3.2)
we conclude that both matrices~\eqref{D_Bw1} and \eqref{D_Rzmund} define
the same partition function $Z(\{z\},\{w\})$ with DWBC~\footnote{The matrix~\eqref{D_Rzmund} is a
limit of a matrix which differs from~\eqref{D_Rz} by the transformation~\eqref{D_remZrec} with $\rho=q$.}.

To obtain the partition function with DWBC for the trigonometric SOS model
 one should multiply the partition function with DWBC for the
elliptic SOS model by certain factor  and take the trigonometric limit:
\begin{gather}
 Z^{+-}_{-+}(\{z\},\{w\};\mu)
 =\prod_{k,j=1}^n\big(2\pi i e^{\pi i(u_k+v_j)}\big)\lim_{\tau\to i\infty}
 Z^{+-}_{-+}(\{u\},\{v\};\lambda)= \notag \\
\quad  =\prod_{n\ge k>m\ge1}\frac{w_k q^{-1}-w_m q}{w_k-w_m}
  \sum_{\sigma\in S_n}\prod_{{l<l' \atop \sigma(l)>\sigma(l')}}
   \frac{w_{\sigma(l)}q-w_{\sigma(l')}q^{-1}}{w_{\sigma(l)}q^{-1}-w_{\sigma(l')}q}\times \label{D_Z_Z_t} \\
  \times\prod_{n\ge k>m\ge1}\big(z_k q-w_{\sigma(m)}q^{-1}\big)
\prod_{1\le k<m\le n}\big(z_k-w_{\sigma(m)}\big)
   \prod_{m=1}^n\frac{\big(z_m-w_{\sigma(m)}\mu q^{2(m-1)}\big)(q-q^{-1})}
                      {\big(1-\mu q^{2(m-1)}\big)}. \notag
\end{gather}
It easy to check that the formula~\eqref{D_stat-s_pr} is obtained from the formula~\eqref{D_Z_Z_t}
by taking the limit: $Z(\{z\},\{w\})=\lim\limits_{\mu\to\infty} Z^{+-}_{-+}(\{z\},\{w\};\mu)$.

\section*{Acknowledgements}
This paper is a part of PhD thesis
of A.S. which he is preparing in co-direction of S.P. and V.R. in
the Bogoliubov  Laboratory of Theoretical Physics, JINR, Dubna and in LAREMA,
D\'epartement de Math\'ematics, Universit\'e d'Angers. He is
grateful to the CNRS-Russia exchange program on mathematical physics  and  personally to J.-M. Maillet
for financial and general support of this thesis project. V.R. are
thankful to Ph.~Di~Franchesco and T.~Miwa for their stimulating lectures and their interest during
ENIGMA School "Quantum Integrability" held in Lalonde les Maures on October 14-19, 2007.
He thanks O.~Babelon and M.~Talon for the invitation to this school.
V.R. had used during the
project a partial financial support by ANR GIMP
and a support of INFN-RFBR "Einstein" grant (Italy-Russia).
S.P. was supported in part by RFBR grant 06-02-17383. Both V.R. and S.P. were supported in part
by the Grant for support of scientific schools NSh-8065.2006.2

\section*{Appendix}

\setcounter{section}{0} \setcounter{subsection}{0}
\renewcommand{\thesection}{\Alph{section}}

\section{Interpolation formula for the elliptic polynomials}
\label{D_Appendix_ep}

We will call by a character a group homomorphism $\chi\colon\Gamma\to\mathbb C^\times$,
where $\Gamma=\mathbb Z+\tau\mathbb Z$ and $\mathbb C^\times$ is a multiplicative group of nonzero
complex numbers. Each character $\chi$ and an integer number
$n$ define a space $\Theta_n(\chi)$ consisting of the holomorphic functions on $\mathbb C$
with the translation properties
\begin{equation*}
 \phi(u+1)=\chi(1)\phi(u), \qquad
 \phi(u+\tau)=\chi(\tau)e^{-2\pi inu-\pi in\tau}\phi(u).
\end{equation*}
If $n>0$ then $\dim\Theta_n(\chi)=n$ (and $\dim\Theta_n(\chi)=0$ if $n<0$). The elements
of the space $\Theta_n(\chi)$ are called elliptic polynomials (or theta-functions) of degree $n$
with the character $\chi$ (see~\cite{FS}).

\begin{prop}
Let $\{\phi_j\}_{j=1}^n$ be a basis of $\Theta_n(\chi)$, with the character
$\chi(1)=(-1)^n$, $\chi(\tau)\hm=(-1)^n e^{2\pi i \alpha}$, then the determinant of
the matrix $||\phi_j(u_i)||_{1\leq i,j \leq n}$ is equal to
\begin{equation}
  \det||\phi_j(u_i)||=C\cdot\theta(\sum_{k=1}^n u_k-\alpha)\prod_{i<j}\theta(u_i-u_j), \label{D_ell_Vand}
\end{equation}
where $C$ is nonzero constant.
\end{prop}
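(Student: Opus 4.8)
The plan is to prove the determinant formula \eqref{D_ell_Vand} by an argument that combines the characterization of the space $\Theta_n(\chi)$ with the translation properties of $\theta$. First I would fix all but one of the variables, say $u_1$, and regard the determinant $D(u_1,\ldots,u_n) = \det\|\phi_j(u_i)\|$ as a function of $u_1$ alone. Expanding along the first row, $D$ is a linear combination of the $\phi_j(u_1)$, hence $D(\cdot,u_2,\ldots,u_n)\in\Theta_n(\chi)$ as a function of $u_1$; in particular it is holomorphic on $\mathbb{C}$ and satisfies the two quasi-periodicity relations defining $\Theta_n(\chi)$. The same holds in each variable $u_i$ by symmetry of the construction.

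The key step is to locate the zeros. The determinant vanishes whenever two arguments coincide, $u_i=u_j$, since two rows of the matrix become equal. This forces the antisymmetric factor $\prod_{i<j}\theta(u_i-u_j)$ to divide $D$: indeed, as a function of $u_1$, $D$ vanishes at $u_1=u_2,\ldots,u_n$, which already accounts for $n-1$ of the $n$ zeros that an element of $\Theta_n(\chi)$ possesses in a fundamental domain (counting with the theta-divisor). The remaining single zero in $u_1$ must therefore be at the location dictated by the character. Here I would invoke the standard fact that a nonzero element of $\Theta_n(\chi')$, where $\chi'$ has $\chi'(1)=(-1)^n$ and $\chi'(\tau)=(-1)^ne^{2\pi i\beta}$, has its $n$ zeros summing to $\beta$ modulo $\Gamma$. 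Applying this to $D$ in the variable $u_1$: the product $\prod_{j\ge2}\theta(u_1-u_j)$ contributes zeros summing to $\sum_{j\ge2}u_j$, so the last zero sits at $u_1 = \alpha - \sum_{j\ge2}u_j$, i.e. at $\sum_{k}u_k=\alpha$. This identifies the factor $\theta\bigl(\sum_{k=1}^n u_k-\alpha\bigr)$.

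Putting these together, the quotient
\begin{equation*}
 \frac{D(u_1,\ldots,u_n)}{\theta\bigl(\sum_{k=1}^n u_k-\alpha\bigr)\prod_{i<j}\theta(u_i-u_j)}
\end{equation*}
is, in each variable, a holomorphic elliptic function with trivial character (the quasi-periodicity factors of numerator and denominator cancel, which I would verify using the relations \eqref{D_theta} and the computation that both carry the same character $\chi$ with $\chi(1)=(-1)^n$, $\chi(\tau)=(-1)^ne^{2\pi i\alpha}$), hence a doubly periodic holomorphic function, hence constant in each $u_i$ by Liouville. Therefore the quotient is a global constant $C$. It remains to see $C\neq 0$, which follows because $\{\phi_j\}$ is a basis: if $C$ were zero then $D$ would vanish identically, forcing a linear dependence among the $\phi_j$ at generic points, contradicting linear independence.

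The main obstacle I anticipate is the bookkeeping of characters: one must check carefully that the numerator and denominator transform identically under $u_i\mapsto u_i+1$ and $u_i\mapsto u_i+\tau$ so that the quotient is genuinely $\Gamma$-periodic with no residual exponential factor. The translation $u_i\mapsto u_i+\tau$ produces, from the theta-factors, an exponential prefactor involving $\sum_k u_k$ and the $u_j$, and one needs these to match exactly the factor $\chi(\tau)e^{-2\pi i n u_i-\pi i n\tau}$ coming from $D\in\Theta_n(\chi)$; verifying this cancellation is a routine but slightly delicate computation using \eqref{D_theta}, and is the place where the precise value $\chi(\tau)=(-1)^ne^{2\pi i\alpha}$ is pinned down. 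Once the character matching is in place, the Liouville and nonvanishing arguments are immediate.
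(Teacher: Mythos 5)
Your factorization argument is correct but runs along a genuinely different route than the paper's. The paper never locates the zeros one by one: it forms the ratio $\det\|\phi_j(u_i)\|\big/\bigl(\theta(\sum_k u_k-\alpha)\prod_{i<j}\theta(u_i-u_j)\bigr)$ at the outset, checks (the same character bookkeeping you postpone) that in each variable this ratio is an elliptic function, observes that the potential poles coming from the factors $\theta(u_i-u_j)$ are cancelled by the vanishing of the determinant at $u_i=u_j$, so that in each fundamental domain the ratio has at most the single simple pole coming from $\theta(\sum_k u_k-\alpha)$; an elliptic function with at most one simple pole per period parallelogram is constant, because the sum of its residues must vanish. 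Your argument instead invokes the Abel-type theorem that the $n$ zeros of a nonzero element of $\Theta_n(\chi)$ sum to $\alpha$ modulo $\Gamma$, which pins the last zero in $u_1$ at $\alpha-\sum_{j\ge2}u_j$ and so identifies the factor $\theta(\sum_k u_k-\alpha)$ directly before dividing. Both are valid; the paper's version needs less machinery (only the residue theorem and Liouville, no sum-of-zeros statement), while yours is more constructive. One point of hygiene on your side: the zero count in $u_1$ presupposes that $D(\cdot,u_2,\ldots,u_n)\not\equiv0$ and that $u_2,\ldots,u_n$ are pairwise distinct modulo $\Gamma$, so you should carry it out for generic $(u_2,\ldots,u_n)$ and extend the resulting identity by analytic continuation.

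The genuine gap is your last step, $C\neq0$. You argue: if $C=0$ then $D\equiv0$, \emph{forcing a linear dependence among the $\phi_j$ at generic points}, contradicting independence. This inference is not immediate. At each sample point the rows of a singular matrix are indeed dependent, but the dependence coefficients vary with $(u_1,\ldots,u_n)$; passing from ``$\det\|\phi_j(u_i)\|=0$ for all choices of points'' to ``the $\phi_j$ are linearly dependent as functions'' requires an argument, and it is exactly the half of the proof that the paper carries out in detail. Namely: expand $D$ along the first row; since the $\phi_k(u_1)$ are linearly independent and the cofactors do not depend on $u_1$, the identity $D\equiv0$ forces every $(n-1)\times(n-1)$ cofactor to vanish identically; iterating this descent one reaches an identically vanishing $1\times1$ minor $\phi_n(u_n)$, which is absurd. (Equivalently: if the vectors $\bigl(\phi_1(u),\ldots,\phi_n(u)\bigr)$, $u\in\mathbb{C}$, spanned $\mathbb{C}^n$, one could choose $u_1,\ldots,u_n$ making the determinant nonzero; hence $D\equiv0$ yields a fixed nonzero $c\in\mathbb{C}^n$ with $\sum_j c_j\phi_j\equiv0$.) Either of these closes the gap; as written, your proof asserts the conclusion rather than proving it.
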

\noindent
Consider the ratio
\begin{equation}
  \frac{\det||\phi_j(u_i)||}{\theta(\sum_{k=1}^n u_k-\alpha)\prod_{i<j}\theta(u_i-u_j)}. \label{D_ell_Vand_ratio}
\end{equation}
This is an elliptic function of each $u_i$ with only simple pole in any fundamental domain
(the points $u_i$ satisfying $\sum_{k=1}^n u_k-\alpha\in\Gamma$).
Therefore, it is a  constant function of each $u_i$.
Thus this ratio does not
depend on each $u_i$ and we have to prove that it does not vanish, that is to
 prove that the determinant $\det||\phi_j(u_i)||$ is not identically zero. Let us denote by
 $\Delta^{i_1,\ldots,i_k}_{j_1,\ldots,j_k}$ the minor of this determinant corresponding to
 the $i_1$-th, $\ldots$, $i_k$-th rows and the $j_1$-th, $\ldots$, $j_k$-th columns. Suppose
 that this determinant is identically zero and consider the following decomposition
\begin{equation}
 \det||\phi_j(u_i)||=\sum_{k=1}^n(-1)^{k+1}\phi_k(y_1)\Delta^{2,\ldots,n}_{1,\ldots,k-1,k+1,\ldots,n}.
 \label{D_ell_Vand_decomp}
\end{equation}
Since the functions $\phi_k(y_1)$ are linearly independent the minors
$\Delta^{2,\ldots,n}_{1,\ldots,k-1,k+1,\ldots,n}$ are identically zero. Decomposing
the minor $\Delta^{2,\ldots,n}_{2,\ldots,n}$ we conclude that the minors
 $\Delta^{3,\ldots,n}_{2,\ldots,k-1,k+1,\ldots,n}$ are identically zero, and so on.
 Finally, we obtain that $\Delta^{n}_{n}=\phi_n(y_n)$ is identically zero what can not be true. \qed

\begin{lem} \label{D_ep_coin}
 Let us consider two elliptic polynomials $P_1,P_2\in\Theta_n(\chi)$,
where $\chi(1)=(-1)^n$, $\chi(\tau)\hm=(-1)^n e^{\alpha}$, and $n$ points
$u_i$, $i=1,\ldots,n$, such that $u_i-u_j\not\in\Gamma$, $i\ne j$,
and $\sum_{k=1}^n u_k-\alpha\not\in\Gamma$. If the values of these polynomials
coincide at these points, $P_1(u_i)=P_2(u_i)$, then these polynomials coincide identically:
$P_1(u)=P_2(u)$.
\end{lem}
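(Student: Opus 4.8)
The plan is to reduce the statement to the non-vanishing of the elliptic Vandermonde determinant~\eqref{D_ell_Vand}, which was established in the proposition immediately preceding the lemma. First I would set $P=P_1-P_2$. Since $\Theta_n(\chi)$ is a vector space and $P_1,P_2\in\Theta_n(\chi)$, the difference $P$ again lies in $\Theta_n(\chi)$, and by hypothesis it vanishes at all $n$ points, $P(u_i)=0$ for $i=1,\ldots,n$. It therefore suffices to show that the only element of $\Theta_n(\chi)$ vanishing at these $n$ generic points is the zero function, for then $P\equiv 0$ and $P_1=P_2$.

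Next, since $n\ge1$ we have $\dim\Theta_n(\chi)=n$, so I would fix a basis $\{\phi_j\}_{j=1}^n$ of $\Theta_n(\chi)$ and write $P=\sum_{j=1}^n c_j\phi_j$ with $c_j\in\mathbb C$. The conditions $P(u_i)=0$ then become the homogeneous linear system $\sum_{j=1}^n \phi_j(u_i)\,c_j=0$, $i=1,\ldots,n$, whose coefficient matrix is $\|\phi_j(u_i)\|_{1\le i,j\le n}$. By the elliptic Vandermonde formula~\eqref{D_ell_Vand}, its determinant equals $C\cdot\theta\bigl(\sum_{k=1}^n u_k-\alpha\bigr)\prod_{i<j}\theta(u_i-u_j)$ with $C\ne0$. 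The decisive point is that the odd theta function $\theta$ defined by~\eqref{theta} vanishes precisely on the lattice $\Gamma$ and is non-zero off $\Gamma$. Under the genericity hypotheses $u_i-u_j\notin\Gamma$ for $i\ne j$ and $\sum_{k=1}^n u_k-\alpha\notin\Gamma$, each factor $\theta(u_i-u_j)$ and the factor $\theta\bigl(\sum_{k=1}^n u_k-\alpha\bigr)$ is non-zero, so the determinant is non-zero. Hence the system admits only the trivial solution $c_1=\cdots=c_n=0$, whence $P\equiv0$.

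The main obstacle is not computational but a matter of bookkeeping and of citing a standard fact. I must make sure the character of $P$ is exactly the character $\chi$ for which~\eqref{D_ell_Vand} applies (in particular that the parameter entering $\chi(\tau)$ and the shift $\alpha$ in the argument $\sum_k u_k-\alpha$ of $\theta$ are identified consistently), and I must invoke the description of the zero locus of $\theta$. The latter follows directly from~\eqref{theta}: oddness gives $\theta(0)=0$ with $\theta'(0)=1$ a simple zero, the quasi-periodicity relations propagate this zero over all of $\Gamma$, and a standard argument principle (degree) count on a fundamental domain shows $\theta$ has exactly one zero there, so there are no additional zeros. Granting this, the argument above is complete and elementary.
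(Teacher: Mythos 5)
Your proof is correct and is essentially the paper's own argument: the paper likewise expands in a basis $\{\phi_j\}$ of $\Theta_n(\chi)$, evaluates at the $n$ points to get a homogeneous linear system for the difference of coefficients, and invokes the preceding proposition's determinant formula~\eqref{D_ell_Vand} together with the genericity hypotheses to conclude the system has only the trivial solution. The only (immaterial) differences are that you form $P=P_1-P_2$ before expanding rather than subtracting coefficient vectors, and that you spell out explicitly the fact, left implicit in the paper, that $\theta$ vanishes exactly on $\Gamma$.
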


\noindent  Decomposing the considering polynomials as
$P_a(u)=\sum_{i=1}^n p_a^i\phi_i(u)$, $a=1,2$, we have the system of equations
\begin{equation*}
 \sum_{i=1}^n p_{12}^i\phi_i(u)=0,
\end{equation*}
with respect to the variables $p_{12}^i=p_{1}^i-p_2^i$. As we have proved,
the determinant of this system is equal to~\eqref{D_ell_Vand}
and therefore is not zero. Hence, this system has only trivial solution $p_{12}^i=0$,
but this implies $P_1(u)=P_2(u)$. \qed

Let $P\in\Theta_n(\chi)$ be an elliptic polynomial, where $\chi(1)=(-1)^n$,
$\chi(\tau)\hm=(-1)^n e^{2\pi i\alpha}$, and $u_i$, $i=1,\ldots,n$, be $n$ points
such that $u_i-u_j\not\in\Gamma$, $i\ne j$,
and $\sum_{k=1}^n u_k-\alpha\not\in\Gamma$. This polynomial can be restored by the values at these points:
\begin{equation}
\qquad P(u)=\sum_{i=1}^n P(u_i)
   \frac{\theta(u_i-u+\alpha-\sum_{m=1}^n u_m)}{\theta(\alpha-\sum_{m=1}^n u_m)}
  \prod_{{k=1 \atop k\ne i}}^n\frac{\theta(u_k-u)}{\theta(u_k-u_i)}. \label{D_interp_P}
\end{equation}
Indeed, the right hand side belongs to $\Theta_n(\chi)$, this equality holds at points $u=u_i$.
Using the lemma~\ref{D_ep_coin} we conclude that~\eqref{D_interp_P} holds at all $u\in\mathbb C$.

Consider the meromorphic functions
\begin{equation*}
 \qquad  Q_j(u)=\frac{\theta(u_j-u+\lambda-(n-2j+2)\hbar)}{\theta(u_j-u+\hbar)}
        \prod_{k=2}^{j-1}\frac{\theta(u_k-u-\hbar)}{\theta(u_k-u+\hbar)}.
\end{equation*}
It is easy to check that the functions
\begin{multline*}
 \quad  P_j(u)=\prod_{k=2}^n\theta(u_k-u+\hbar)Q_j(u)= \\
  \qquad       =\theta(u_j-u+\lambda-(n-2j+2)\hbar)\prod_{k=2}^{j-1}\theta(u_k-u-\hbar)
           \prod_{k=j+1}^n\theta(u_k-u+\hbar)
\end{multline*}
belong to $\Theta_{n-1}(\chi)$, where $\chi(1)=(-1)^{n-1}$, $\chi(\tau)\hm=(-1)^{n-1}
e^{2\pi i\alpha}$, $\alpha=\lambda+\sum_{k=2}^n u_k$. Since $\lambda\not\in\Gamma$,
the polynomials $P_j(u)$ can be restored by its values $P_j(u_i)$ via the interpolation
 formula~\eqref{D_interp_P}. Taking into account the relation between $Q_j(u)$ and
 $P_j(u)$ we obtain the formula~\eqref{D_interpol_Q_p}.~\footnote{We
 can suppose the condition $u_i-u_j\not\in\Gamma$,
 because $u_i$ in the formula~\eqref{D_interpol_Q_p} are formal variables.}

\section{Proof of the formula~\eqref{D_la_ff_ee_ra}}
\label{D_Appendix_HP}

Let $\A^n_f$ be a subspace spanned by $\hat f[s_n]\cdots\hat f[s_1]$ (that is by~\eqref{D_tutyaubralzvezdochki}), where $s_n,\ldots,s_1\in\lfK_0$. Notice first that due to the Hopfness of the pairing~\eqref{D_HPdef} the current $e(u)$ annihilates the spaces $\A^k_fH^+$ for $k\ge2$. Indeed, using the formulae $\la xy,z\ra=\la x\otimes y,\Delta(z)\ra$, $\la x,1\ra=\varepsilon(x)$ and~\eqref{D_Delta_e} we obtain $\la\A^k_fH^+,e(u)\ra\subset\la\A^1_f\otimes\A^{k-1}_fH^+,e(u)\otimes1+K^-(u)\otimes e(u)\ra=\{0\}$. It follows from the formula~\eqref{D_Delta_f} that the opposite coproduct on $\A^n_f$ has the form
\begin{multline}
 \Delta^{op}\big(f(t_n)\cdots f(t_1)\big)\in H^+\otimes\A^n_f+\sum_{k=2}^n\A^k_fH^+\otimes\A^{n-k}_f+ \\
 +\sum_{j=1}^n\prod_{i=j+1}^nK^+(t_i)f(t_j)\prod_{i=1}^{j-1}K^+(t_i)\otimes f(t_n)\cdots f(t_{j+1})f(t_{j-1})\cdots f(t_1).
\end{multline}
Now let us prove the formula~\eqref{D_la_ff_ee_ra} by induction. For $n=1$ it coincides with the definition of the pairing. Suppose the it holds for $n-1$. Then using the formulae $\la x,yz\ra=\la\Delta^{op}(x),y\otimes z\ra$, $\la1,x\ra=\varepsilon(x)$ and the commutation relation~\eqref{D_cr_EFR0Kf} we obtain
\begin{gather}
 \La f(t_n)\cdots f(t_1),e(v_n)\cdots e(v_1)\Ra =   
 \La \Delta^{op}\big(f(t_n)\cdots f(t_1)\big),e(v_n)\otimes e(v_{n-1})\cdots e(v_1)\Ra =   \notag \\
=\hbar^{-n}\sum_{j=1}^n\prod_{l'=j+1}^n\frac{\theta(v_n-v_{\sigma(l'-1)}+\hbar)}{\theta(v_n-v_{\sigma(l'-1)}-\hbar)}\delta(t_j,v_n)\times \label{D_la_ff_ee_ra_proof} \\
	 \times\sum_{\sigma\in S_{n-1}}\prod_{{1\le l<l'\le n-1 \atop \sigma(l)>\sigma(l')}}
   \frac{\theta(v_{\sigma(l)}-v_{\sigma(l')}+\hbar)}{\theta
   (v_{\sigma(l)}-v_{\sigma(l')}-\hbar)}\prod_{m=1}^{j-1}\delta(t_m,v_{\sigma(m)})\prod_{m=j+1}^{n}\delta(t_m,v_{\sigma(m-1)}). \notag
\end{gather}
Taking into account the identity $\sum\limits_{\sigma\in S_n}X(\sigma)=\sum\limits_{j=1}^n\sum\limits_{\phantom{xx} \sigma\in S_{n-1}}X\left(1\phantom{xx} \ldots \phantom{xx} j-1 \phantom{xx} j\phantom{x}  j+1\phantom{x} \ldots \phantom{xx} n\phantom{xx}
\atop  \sigma(1)\phantom{x} \ldots\phantom{x} \sigma(j-1)\phantom{x} n\phantom{x}  \sigma(j) \phantom{x}\ldots\phantom{x} \sigma(n-1)\right)$ one yields the formula~\eqref{D_la_ff_ee_ra}.

\renewcommand{\thesection}{\arabic{section}}

\pagestyle{empty}

\end{document}